\newtheorem{theorem}{Theorem}[section]
\newtheorem{prop}[theorem]{Proposition} 
\newtheorem{lemma}[theorem]{Lemma}
\newtheorem{cor}[theorem]{Corollary}
\newtheorem{defin}[theorem]{Definition}
\newtheorem{remark}[theorem]{Remark}
\DeclareMathOperator{\tr}{Tr}
\DeclareMathOperator{\sign}{sign}
\DeclareMathOperator{\erfc}{erfc}
\DeclareMathOperator{\erf}{erf}
\newcommand{\nn}{\nonumber}
\newcommand{\dd}{\mathrm{d}} 
\newcommand{\ve}[1]{\underline{#1}}
\newcommand{\ip}[2]{\left\langle #1,#2 \right\rangle} %inner Product
\newcommand{\lO}{\mathcal{O}}
\newcommand{\cc}[1]{\overline{#1}} %complex conjugation
\newcommand{\R}{\mathbb{R}}
\newcommand{\C}{\mathbb{C}}
\newcommand{\Cp}{\C^{+}}
\newcommand{\Cpp}{\C^{++}}
\newcommand{\N}{\mathbb{N}}
\newcommand{\dist}{\mathrm{dist}}
\renewcommand{\Re}{\mathrm{Re}}
\renewcommand{\Im}{\mathrm{Im}}
\newcommand{\e}{\mathrm{e}}
\newcommand{\n}{n} %matrix dimension
\newcommand{\op}[1]{p_{#1,\n}} % #1. orthonormal polynomial 
\newcommand{\dop}[1]{p^{\prime}_{#1,\n}} % Abl. on. Polyn.
\newcommand{\orthop}[1]{q_{#1,\n}} % #1. orthogonal polynomial with leading coefficient 1
\newcommand{\tildeop}[1]{\tilde{p}_{#1,\n}} % #1. orthonormal polynomial with tilde 
\newcommand{\hn}[1]{h_{#1,\n}}
\newcommand{\rn}[1]{r_{#1,\n}}
\newcommand{\Kn}{\tilde{K}_\n} %Unnormalized Kernel
\newcommand{\tKn}{K_\n} %Normalized Kernel
\newcommand{\nK}{K_\n} %Normalized Kernel / neue Bezeichnung
\newcommand{\F}{F} %Focus of Ellipse
\newcommand{\Fz}{\tilde{z}} % scaled z, tilde{z}=z/F
\newcommand{\Fx}{\tilde{x}} % scaled x
\newcommand{\Fy}{\tilde{y}} % scaled y
\newcommand{\Fu}{\tilde{u}} % scaled u
\newcommand{\Fv}{\tilde{v}} % scaled v
\newcommand{\Xdelta}[3]{#1_{#2,#3}}
\newcommand{\Adelta}{A_{\delta,\F}} 
\newcommand{\Bdelta}{B_{\delta,\F}}
\newcommand{\Cdelta}{C_{\delta,\F}}
\renewcommand{\t}{t} % Parameter of potential, coefficient of z^2
\newcommand{\f}{f}
\newcommand{\gp}{g_+}
\newcommand{\gm}{g_-}
\newcommand{\gpm}{g_\pm}
\newcommand{\gw}{g_w}
\newcommand{\gz}{g_z}
\newcommand{\hu}{h_1}
\newcommand{\huu}{h_2}
\newcommand{\huv}{h_{11}}
\newcommand{\fmax}{\f_{\max_{}}}
\newcommand{\fmaxc}{\fmax^{\textrm{cont}}}
\newcommand{\gmax}{g_{\max_{}}}
\newcommand{\zmax}{z_{\max_{}}}
\newcommand{\HM}{k_\n} %normalized modified kernel 
\newcommand{\KM}{\tilde{k}_\n} %unnormalized modified Kernel
\newcommand{\rhon}{\rho_\n}
\newcommand{\Rn}[1]{R_\n^{(#1)}} % #1-point correlation function
\newcommand{\Et}{E_\t} 
\newcommand{\pEt}{\partial \Et}
\newcommand{\Edelta}[1]{E_{\t,#1}} %Ellipse arround \pEt
\newcommand{\Omegadelta}[1]{\Omega_{\t,#1}}
\newcommand{\mas}{a_\t} %major semi-axis
\newcommand{\mis}{b_\t} %minor semi-axis
\newcommand{\nphase}{\theta_{\n,\t}} %with n oszillating phase in the kernel
\newcommand{\acorrdet}[2]{S_{\n,\t,z_0,#2}^{#1}} 
\newcommand{\corrdet}[1]{\acorrdet{#1}{\ve{a}}} %summand of determinant from correlation
\newcommand{\Rtz}{R_{\ve{\xi},\n}(\t,z)} % error term R
\newcommand{\Rsum}[1]{R_{\Sigma,\ve{\xi},\n} #1 } % error term R\sum with argument #1
\newcommand{\Rxi}[2]{R_{#1,\ve{\xi},\n} #2 } % error term R with argument #1
\newcommand{\Rzero}{R_{0,\n}(\t,z)} % error term R_0
\newcommand{\Rzeroz}[1]{R_{0,\n}(\t,#1)} % error term 0 with argument #1
\newcommand{\Rzerotz}[1]{R_{0,\n} #1} % error term 0
\newcommand{\Rone}{R_{1,\xi_1,\n}} % error term R_1
\newcommand{\Rtwo}{R_{2,\xi_2,\n}(\t,z)} % error term R_2
\newcommand{\Rtwoz}[1]{R_{2,\xi_2,\n}(\t,#1)} % error term R_2 with argument (t,#1)
\newcommand{\Rtwotz}[1]{R_{2,\xi_2,\n} #1} % error term R_2 with arguments #1
\newcommand{\Rthree}{R_{3,\xi_3,\n}(\t,z)} % error term R_3
\newcommand{\Rthreez}[1]{R_{3,\xi_3,\n}(\t,#1)} % error term R_3 with argument (t,#1)
\newcommand{\Rthreetz}[1]{R_{3,\xi_3,\n} #1} % error term R_3 with arguments #1
\newcommand{\Rfour}{R_{4,\xi_4}(\t,z_0,v)} % error term R_4
\newcommand{\Rfourz}[1]{R_{4,\xi_4}(\t,#1)} % error term R_4 with arguments (t,#1)
\newcommand{\Rfourtz}[1]{R_{4,\xi_4} #1 } % error term R_4 with arguments #1
\newcommand{\Rfive}{R_{5,\xi_5}(\t,z_0,v)} % error term R_5
\newcommand{\Rfivez}[1]{R_{5,\xi_5}(\t,#1)} % error term R_5 with arguments (t,#1)
\newcommand{\Rsix}{R_{6,\xi_2,\xi_3,\xi_6,\n}(\t,z_0,v)} % error term R_6 
\newcommand{\Rsixz}[1]{R_{6,\xi_2,\xi_3,\xi_6,\n}(\t,#1)} % error term R_6 with arguments (t,#1)
\newcommand{\ftwo}{f_{2,z}} % function f_2 for error term R_2
\newcommand{\fthree}[1]{f_{3 #1 ,z}} %function f_3 for error term R_3
\newcommand{\ffour}{f_{4,z_0}} % function f_4 for error term R_4
\newcommand{\ffourz}[1]{f_{4,#1}} % function f_4 for error term R_4 with argument #1
\newcommand{\ffive}{f_{5,z_0}} % function f_5 for error term R_5
\newcommand{\ffivez}[1]{f_{5,#1}} % function f_5 for error term R_5 with argument #1
\newcommand{\fsix}{f_{6,z_0,\xi_2,\xi_3,\n}} % function f_6 for error term R_6
\newcommand{\fsixz}[1]{f_{6,#1,\xi_2,\xi_3,\n}} % function f_6 for error term R_6 with argument #1
\newcommand{\U}[1]{U_{#1}} %U_F(z)
\newcommand{\W}[1]{W_{#1}} %W_F(z)
\newcommand{\T}[1]{T_{#1}} %T_F(z)
\newcommand{\chapt}{t}
\newcommand{\chapw}{w}
\newcommand{\chapz}{z}
\title{Universality in Gaussian Random Normal Matrices}
\author{Roman Riser}
\date{}
\begin{document}
\maketitle
%\thanks{This work is based on my PhD thesis under the supervision of Prof.~Giovanni Felder at the ETH Zurich}
\renewcommand{\footnoterule}{%
  \kern -3pt
  \hrule width 0.9in
  \kern 2.6pt
}
\let\thefootnote\relax\footnote{This work is based on my PhD thesis under the supervision of Prof.~Giovanni Felder at the ETH Zurich}
\abstract{We prove that for Gaussian random normal matrices the correlation function has universal behavior. 

Using the technique of orthogonal polynomials and identities similar to the Christoffel-Darboux formula, we find that in the limit, as the dimension of the matrix tends to infinity, the density of eigenvalues converges to a constant inside of an ellipse and to zero outside. The convergence holds locally uniformly. At the boundary, in scaled coordinates holding the distance between eigenvalues constant, we show that the density is falling off to zero like the complementary error function. The convergence is uniform on the boundary. Further we give an explicit expression for the the correlation function.}
%\begin{document}
%\maketitle
%\thispagestyle{empty}
\tableofcontents
\newpage
\renewcommand{\footnoterule}{%
  \kern -3pt
  \hrule width 2in
  \kern 2.6pt
}

%\title{Universality in Gaussian Random Normal Matrices}

%\maketitle
% auskommentieren wenn Title fertig
%\abstract{
%}

%\frontmatter
%\input{title}
%\input{acknowledgements}
%\input{abstract}
%\selectlanguage{german}
%\input{kurzfassung}
%\selectlanguage{english}
%\tableofcontents

%\mainmatter

%\tableofcontents

%\mathtoolsset{showonlyrefs}
\numberwithin{equation}{section}

%neue Struktur:
% Zusammenfassung / etc.
% Introduction
% Description of the Normal Matrix Model and Known Facts
%% Normal Matrix Model
%% etc.
% Density, Kernel and Correlation Functions
%% Ident for unnorm
%% Ident for norm
%% Simi. to Christ-Darb.
%% Asymptotics
%% Properties of f, g etc.
%% Density of Eigenvalues
%% Identities for Kernel
%% Asymptotics
%% Properties of...
%% Kernel
%% 'real' Kernel
%% Correlation
%% Universalities
%% ? t_0, t\in\C, etc.
% Verification
%%...
% Appendix

\section{Introduction}
% Random Matrix Models, Universality in Hermitian Matrix Models
% erfolgreicher Approach: Riemann-Hilbert Problem
% universality of sine kernel, correlation function (limit n->infty)
% Normal Matrix Models, Zabrodin, Wiegmann, etc.
% Rigorous Proof in Felder,Elbau, where they studied the distirbution of eigenvalues in the limit n->infty by equilibrium measure (for a class of polynomial potential)
% result, constant density (in a weak sense of limit) inside polynomial curve
% kind of universality of the density, but not for the correlation functions or kernel as we have them in Hermitian case. Further the equilibirum measures gives no information about the density at the boarder. 
% Approach analogous to the Riemann-Hilbert in dbar approach (Its?, ...), unfortunately there has not yet found a satisfying solution even on a formal level (formulierung im Paper suchen)
% We are interested to find universalities of the kernel and the correlation functions 
% We will give an explicit calculation for Gaussian potentials
% so we will refind the density as in the work of felder,elbau...Ellipse
% additionally: this time the limit does not only hold in a weak sence. We even find that it is not only point-wise but even uniform on \C exluding a neighbourhood of the ellipse. Further we will find the density on the ellipse itself. We will also find universalities for the absolute value of the kernel and the correlation functions inside of the ellipse, outside of the ellipse and on the ellipse.
% ? evtl. vorgehen beschreiben, kommt aber wohl alles schon in die Zusammenfassung

In 1999, P. Wiegmann and A. Zabrodin have found an interesting connection between normal random matrices and conformal mappings \cite{WZ1}, \cite{WZ2}. They considered the following probability distribution on all normal $\n\times \n$ matrices $M$
\begin{align}
P_\n(M)&=\tfrac{1}{Z_\n}\e^{-\n \tr V(M)}, \qquad Z_\n=\int \e^{-\n \tr V(M)} \dd M, \\
\intertext{with the potential}
V(M)&=\tfrac{1}{\t_0}\left(M^* M - \sum_{k\in \N} \left(\cc{t}_k M^{*k} + t_k M^k\right) \right). \label{eq_intro1}
\end{align} % - zeichen vor Summe überprüfen
At a level of formal manipulation, they showed that in the limit $\n\rightarrow\infty$ the eigenvalues fill a simply connected domain with constant density. %The domain is 

Since already the initial integral diverges, this model needs a regularization. P. Elbau and G. Felder have put it in a rigorous mathematical context by introducing a cut-off $D$ and restricting the probability distribution to normal matrices with spectrum in $D$ \cite{felder-elbau}. %Result genauer beschreiben und benutzte Techniken angeben %rigorous proof that the density converges % equilibrium measure, weak distribution sense
A generalization of this model for more general potential was considered by P. Etingof and X. Ma \cite{etingof-ma}.

Motivated by these works, we wanted to study the kernel and correlation functions of this model for the potential \eqref{eq_intro1} and look for universalities. 

Universality properties in Random Matrix Models are long known for Hermitian Matrices. Among them we have the famous Wigner Semi-Circle Law, the sine kernel and different universalities for the correlation functions in the bulk of the spectrum and at the edge. %\cite{} 
Many of these universality proofs are based on a Riemann-Hilbert approach (see for example \cite{deift}). 

In \cite{its-takhtajan} A. Its and L. Takhtajan showed some ideas of an analogous approach (which they called $\cc{\partial}$-problem) for the potential \eqref{eq_intro1} of Random Normal Matrices. Unfortunately this has not yet led to a satisfying result. 

We have considered a quite different approach where we approximate the potential by a Gaussian potential. In a future paper the aim is to show a new way to prove universalities for all potentials as in \eqref{eq_intro1}. As an important step we need universal correlation functions for all Gaussian potentials which we are going to find in this thesis. We include results which also hold at the boundary of the support of the density. 

P. Bleher and A. Kuijlaars have found a way to avoid the cut-off when they considered the cubic case of the potential \eqref{eq_intro1} in \cite{bleher-kuijlaars}. 
Further results on Random Normal Matrices have also been found by Y. Ameur, H. Hedenmalm and N. Makarov \cite{makarov}, where they consider quite general potentials which don't need cut-offs. 

The result in this thesis is based on the method of orthogonal polynomials. Our work is organized as following. In Chapter \ref{ch_definition} we will review some known facts about Normal Matrix Models and introduce basic definitions such as the orthonormal polynomials and the reproducing kernel. In Chapter \ref{sec_gaussian} we will consider the simplest potential ($t_2=0$), where we will compute the kernel and the correlation functions. This will illustrate the techniques we are going to use in the more complex case. 
In Chapter \ref{ch_on_poly} we will review the recursion relation and a relation for the derivative of the orthonormal polynomials. With the help of these, we will show how the orthonormal polynomial in our case are related to the Hermite polynomials. An important point of our proof are the identities for the reproducing kernel $\nK(z,w)$ which we will find in Chapter \ref{sec_ident}. They have a similar function as the Christoffel-Darboux formula in the Riemann-Hilbert approach, but while the latter reveals directly the kernel, we get some relation which involves derivatives of the kernel.

In Chapter \ref{ch_asymp} we will review the Plancherel-Rotach asymptotics, %\cite{}
which we need to approximate the Hermite polynomials. Indirectly our works so gets some connection to the Riemann-Hilbert approach as it could be used to find the Plancherel-Rotach asymptotics. %In the Hermitian Matrix Model there is also need for the asymptotics of orthogonal polynomials (for different potentials) which can be found with Riemann-Hilbert approach (see \cite{deift}). 
With the help of the Plancherel-Rotach asymptotics, we are going to study the asymptotics of the identities evaluated at $w=\cc{z}$. %Hier wie im Abstract f erwähnen, dominanter Term etc., evtl. auch g
%Then we investigate the properties of this asymptotics in Chapter \ref{sec_f_g}. %Hier wie im Abstract erwähnen, dass es 0 auf Rand ist, sonst kleiner etc. Rel. zu g

In Chapter \ref{sec_density} we find the asymptotics of the kernel at $\cc{w}=z$, i.e.\ of the density, by integration. Then we analyze the asymptotics of the identities evaluated at $\cc{z}=\cc{z}_0+u$ and $w=z_0+v$ where $u$ and $v$ are of order $\lO(\n^{-1/2})$ in Chapter \ref{ch_kernel_corr}. Again by integration we will get the asymptotics of the reproducing kernel evaluated at $z$ and $w$. We will find that the kernel is universal apart from a nonrelevant phase which finally disappears when we compute the correlation functions. %The limit $\n\rightarrow \infty$ of the kernel and the correlation functions is (locally) uniform in the parameters of the Gaussian potential and on the complex plane except in a neighborhood of the boundary of the support of the density. If $z_0$ is fixed to the boundary the limit is uniform on the boundary too.

%überarbeiten, kommt doppelt vor!!!
%In chapter 7 we compute the asymptotics of the identities and find that dominant term of derivations of the kernel's logarithm is $\n f(z_0)$ for a finite function $\f$ which we study in chapter 8. There we find that $\f$ is zero on a ellipse which will be the boundary of the support of the density and that $\f$ is smaller than zero else. By integration we find the density in chapter 9 and the asymptotics of the kernel in chapter 10. 
%From this we can compute the correlation functions in chapter 11. 
In Chapter \ref{sec_verification} we will review all approximation we have done in the previous chapters and give explicit estimations of them. Additionally we find where the universalities hold uniformly. In the result we get an additional term at the boundary of the support compared to the result inside of the ellipse, where the density is constant. But in both cases the result is universal and holds (locally) uniformly. This is our main result that we state in Theorem \ref{th_corr_uniform_ren} in Chapter \ref{sec_t}, where we additionally show that our initial assumption (that $t_2$ is real) can be removed and that our result holds for all Gaussian potentials with $t_0\in (0,\infty)$, $t_1,t_2\in \C$ with $2 \vert t_2\vert<1$.

% Resultat hier explizit erwähnen!

% TO BE DONE here erwähnen
%\begin{theorem}[Uniform Universality of Renormalized Kernel and Correlation Functions]\label{th_corr_uniform_ren} 
%Let $\delta>0$, $0<\delta_0<1$ and $A>0$. $\nK$ shall be the kernel for the potential $V(z)=\tfrac{1}{t_0}\left(\vert z\vert^2-2\Re\left(t_1 z + t_2 z^2\right)\right)$ and $\Rn{m}$ the corresponding $m$-point correlation function as in Definition \ref{def_correlation}. Then
%TO BE DONE auch noch Aussage über Kernel integrieren / dieser hat noch eine Phase, die evtl irgendwie mit t_2 dreht (oder evtl. kürzt sich das auch raus? überprüfen)

%Speziell Rand schon vorher erwähnen / interesting case 
% scaling erwähnen!!!
%Result

%vgl. auch noch Einträge in ../introduction.txt, erwähnte Papers
%This model we have took as motivation to look for universalities in the kernel and correlation functions. 

\section{Definitions of the Normal Matrix Model and Known Facts}\label{ch_definition}
\sectionmark{Definitions and Known Facts}

In this chapter we state all the definitions for the normal random matrix model and some known facts. They are analogous to the Hermitian random matrix model with some modification. They can be found in %[Mehta?, ...], TO BE DONE
\cite{deift}, \cite{bi} for the Hermitian case and in %[]? and 
\cite{felder-elbau} for the normal matrix case.

% Mehr von Matrix Modellen, zuerst mal Probability distrubution ... P(M)

\subsection{Potential and Orthogonal Polynomials}
We want to consider the following potential $V$ as in \cite{felder-elbau}. 
\begin{defin}\label{def_potential} 
Let $d\in \N$, $t_0>0$ and $t_1, \ldots, t_d\in \C$ with $t_d\neq 0$ if $d>2$. Then we define the potential $V$ as
\begin{equation}
V(z)=\frac{1}{t_0}\left(\vert z\vert^2 - 2\Re \sum_{k=1}^{d} t_k z^k\right), \qquad z\in \C.
\end{equation}
\end{defin}
We will concentrate on the case when $V$ is Gaussian, i.e.\ $d=2$ where we will find an explicit computation for the kernel and for the correlation functions.

\begin{defin}\label{def_inner_product} %evtl. neu schreiben, D kompakt ausser $V$ Gaussian.
Let $D$ be a closed subset of $\C$, $\n\in\N$, $V$ as in Definition \ref{def_potential} and $f,g\in C^0(D)$ (and if $D$ is not compact $\vert f(z)\vert ,\vert g(z)\vert$ shall increase not faster than polynomial when $\vert z\vert \rightarrow \infty)$. We define the inner product
\begin{equation}
\ip{f}{g}=\int_D \cc{f(z)}g(z)\e^{-\n V(z)}\dd^2 z,
\end{equation}
where $D$ has to be such that the integral for $\ip{1}{1}$ exists and is finite. We will call $D$ the cut-off of the integral.
\end{defin}
Note that the condition for $D$ is fulfilled if $D$ is compact. If $V$ is Gaussian and $\vert t_2\vert<\frac{1}{2}$ we can choose $D=\C$. 

\begin{remark}
We want to remark that the inner product depends on the parameter $\n\in\N$ which stands for the dimension of the matrices in the random matrix model.
\end{remark}

\begin{defin}\label{def_ortho_poly}
Let $\n\in\N$. The polynomials $(\orthop{k})_{k\in \N_0}$ of degree $k$ with leading coefficient one defined by 
\begin{equation}
\ip{\orthop{k}}{\orthop{l}}=\delta_{kl} \hn{k}, \qquad \forall k,l\in \N_0,
\end{equation}
we will call the orthogonal polynomials associated with the potential $V$. $\hn{k}>0$ are the normalization constants. %statt for the potential -> associated with the potential
\end{defin}
We note that the so defined polynomials $\orthop{k}$ are unique. Their existence and uniqueness can be seen by using the Gram-Schmidt process on the monic polynomials $(z^k)_{k\in\N_0}$.  

\begin{defin}\label{def_op}
Let $\n\in \N$ and $(\orthop{k})_{k\in \N_0}$ the orthogonal polynomials with leading coefficient one as in Definition \ref{def_ortho_poly}. Then we define the orthonormal polynomials $(\op{k})_{k\in \N_0}$ for the potential $V$ by
\begin{equation}
\op{k}(z)=\frac{\orthop{k}(z)}{\sqrt{\hn{k}}}, \qquad z\in \C,
\end{equation}
where $\hn{k}>0$ are the normalization constants from Definition \ref{def_ortho_poly}.
\end{defin}

\subsection{Kernel}
\begin{defin}\label{def_kernel}
Let $\n\in\N$. With the help of the orthonormal polynomials $(\op{k})_{k=0}^{\n-1}$ we can define the (unnormalized) reproducing kernel $\Kn$ for the potential $V$ and $D\subset \C$ as in Definition \ref{def_inner_product} by
\begin{equation}
\Kn(w,z)=\sum_{k=0}^{\n-1} \cc{\op{k}(w)}\op{k}(z), \qquad w,z\in D.
\end{equation}
\end{defin}

\begin{prop}
The reproducing kernel $\Kn$ for the potential $V$ with $D\subset \C$ as in Definition \ref{def_inner_product} fulfills
\begin{equation}
\int_D \Kn(z,z) \e^{-\n V(z)}\dd^2 z=\n.
\end{equation}
\end{prop}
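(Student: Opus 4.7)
The plan is to unfold both definitions and use linearity of the integral. First I would write out $\Kn(z,z) = \sum_{k=0}^{\n-1} |\op{k}(z)|^2$ directly from Definition \ref{def_kernel}, since $\cc{\op{k}(z)}\op{k}(z) = |\op{k}(z)|^2$ is real and nonnegative.

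Next, I would multiply by $\e^{-\n V(z)}$ and integrate over $D$. Because the sum is finite (only $\n$ terms), I can interchange sum and integral with no analytic subtleties, yielding
\begin{equation*}
\int_D \Kn(z,z)\,\e^{-\n V(z)}\,\dd^2 z = \sum_{k=0}^{\n-1} \int_D \cc{\op{k}(z)}\op{k}(z)\,\e^{-\n V(z)}\,\dd^2 z = \sum_{k=0}^{\n-1} \ip{\op{k}}{\op{k}},
\end{equation*}
by Definition \ref{def_inner_product}. Each inner product equals $1$ because, by Definition \ref{def_op}, $\op{k} = \orthop{k}/\sqrt{\hn{k}}$ and $\ip{\orthop{k}}{\orthop{k}} = \hn{k}$ from Definition \ref{def_ortho_poly}, so the sum is $\sum_{k=0}^{\n-1} 1 = \n$.

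There is essentially no obstacle here: the proposition is a direct bookkeeping consequence of the orthonormality of the $\op{k}$, and the only thing to verify is that the inner product integrals exist, which is guaranteed by the integrability condition on $D$ stated in Definition \ref{def_inner_product} (polynomials multiplied by $\e^{-\n V}$ are integrable whenever $\e^{-\n V}$ is, since $\ip{1}{1}$ is assumed finite and polynomial growth is absorbed either by compactness of $D$ or, in the Gaussian case with $|t_2|<1/2$, by the quadratic decay of $V$).
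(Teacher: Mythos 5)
Your proof is correct and is the standard, essentially unavoidable argument: expand the finite sum defining $\Kn(z,z)$, interchange the (finite) sum with the integral, and invoke the orthonormality $\ip{\op{k}}{\op{k}}=1$ to obtain $\n$. The paper states this proposition without proof, treating it as an immediate consequence of the definitions, which matches your observation that there is no analytic subtlety beyond the integrability guaranteed by the hypotheses on $D$.
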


\begin{defin}\label{def_norm_kernel}
Let $\n\in\N$. We define the normalized reproducing kernel $\tKn$ for the potential $V$ and $D\subset \C$ as in Definition \ref{def_inner_product} by
\begin{equation}
\tKn(w,z)=\e^{-\frac{\n}{2} (V(w)+V(z))} \Kn(w,z) =\e^{-\frac{\n}{2} (V(w)+V(z))} \sum_{k=0}^{\n-1} \cc{\op{k}(w)}\op{k}(z), \quad w,z\in D.
\end{equation}
\end{defin}
\begin{remark}
If each orthonormal polynomial $\op{k}(z)$ is either an even or odd, function then $\cc{\op{k}(w)}\op{k}(z)$ and therefore $\tKn(w,z)$ will not change when we map simultaneously the variables $w$ and $z$ to $-w$ and $-z$.
\end{remark}

\begin{cor}\label{cor_norm}
The reproducing kernel $\tKn$ for the potential $V$ with $D\subset \C$ as in Definition \ref{def_inner_product} fulfills
\begin{equation}
\int_D \tKn(z,z) \dd^2 z=\n.
\end{equation}
\end{cor}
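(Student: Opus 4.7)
The plan is to observe that this corollary is essentially an immediate restatement of the preceding (unlabelled) proposition once the exponential weight is absorbed into the diagonal of the kernel. Setting $w=z$ in Definition~\ref{def_norm_kernel} yields $K_\n(z,z) = \e^{-\frac{\n}{2}(V(z)+V(z))}\tilde{K}_\n(z,z) = \e^{-\n V(z)}\tilde{K}_\n(z,z)$, so
\begin{equation*}
\int_D K_\n(z,z)\,\dd^2 z \;=\; \int_D \tilde{K}_\n(z,z)\,\e^{-\n V(z)}\,\dd^2 z,
\end{equation*}
and the right-hand side is exactly $\n$ by the preceding proposition. That is the entire argument.

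For completeness, I would also indicate in passing why that preceding proposition itself holds, since its proof is not spelled out either: expanding the diagonal of the unnormalized kernel gives $\tilde{K}_\n(z,z) = \sum_{k=0}^{\n-1} |\op{k}(z)|^2$, and since the sum is finite one can interchange it with the integral to obtain $\sum_{k=0}^{\n-1}\ip{\op{k}}{\op{k}}$, which equals $\n$ by the orthonormality of the polynomials $\op{k}$ from Definition~\ref{def_op}.

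There is no genuine obstacle in this argument; the entire content is the identity $K_\n = \e^{-\n V}\tilde{K}_\n$ on the diagonal, combined with orthonormality. The only point that might cause momentary confusion is the author's notational convention, in which $\tilde{K}_\n$ denotes the \emph{unnormalized} kernel and $K_\n$ the \emph{normalized} one, so one has to be careful to move the weight $\e^{-\n V(z)}$ in the correct direction when passing between the two integral identities.
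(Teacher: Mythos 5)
Your argument is correct and is exactly the intended one: the corollary follows immediately by setting $w=z$ in Definition~\ref{def_norm_kernel}, pulling the resulting weight $\e^{-\n V(z)}$ under the integral, and invoking the preceding proposition; the paper leaves this unremarked for the same reason. Your parenthetical verification of the preceding proposition via $\ip{\op{k}}{\op{k}}=1$ is also correct, and your caution about the inverted notational convention ($\tilde{K}_\n$ unnormalized, $K_\n$ normalized) is warranted.
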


\begin{prop}\label{prop_kernel_const}
Let $V$ be a potential as in Definition \ref{def_potential} and $V^*(z)=V(z)+C$ for $z\in \C$ where $C$ is a real constant. $\tKn$ shall be the normalized reproducing kernel for the potential $V$ and $\tKn^*$ the analogous kernel for the potential $V^*$. Then 
\begin{equation}
\tKn(w,z)=\tKn^*(w,z), \qquad \forall w,z\in\C.
\end{equation}
\end{prop}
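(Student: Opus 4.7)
The plan is to trace how the shift $V \mapsto V + C$ propagates through the four objects that build the normalized kernel: the inner product, the monic orthogonal polynomials, the normalizations $h_{k,n}$, and finally the orthonormal polynomials. Since $C$ is a real constant, the weight simply picks up a global scalar $e^{-nC}$, and I expect all such scalars to cancel in the end.

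First I would observe that, writing $\ip{\cdot}{\cdot}^*$ for the inner product associated with $V^*$,
\begin{equation}
\ip{f}{g}^* = \int_D \cc{f(z)}\,g(z)\,\e^{-\n V^*(z)}\,\dd^2 z = \e^{-\n C}\,\ip{f}{g}.
\end{equation}
In particular, the orthogonality relations in Definition \ref{def_ortho_poly} for $V$ and $V^*$ differ only by this overall positive scalar. Hence the uniqueness of the monic orthogonal polynomials (noted after Definition \ref{def_ortho_poly}) forces $\orthop{k}^* = \orthop{k}$ for every $k$, while the normalizations transform as $\hn{k}^* = \e^{-\n C}\hn{k}$.

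Next I would feed this into Definition \ref{def_op} to get the orthonormal polynomials for $V^*$:
\begin{equation}
\op{k}^*(z) = \frac{\orthop{k}^*(z)}{\sqrt{\hn{k}^*}} = \e^{\n C/2}\,\op{k}(z).
\end{equation}
Plugging this into Definition \ref{def_kernel} gives $\Kn^*(w,z) = \e^{\n C}\,\Kn(w,z)$, and then the Gaussian prefactor in Definition \ref{def_norm_kernel} contributes $\e^{-\frac{\n}{2}(V^*(w)+V^*(z))} = \e^{-\n C}\,\e^{-\frac{\n}{2}(V(w)+V(z))}$. Multiplying the two factors yields
\begin{equation}
\tKn^*(w,z) = \e^{-\n C}\,\e^{-\frac{\n}{2}(V(w)+V(z))}\cdot \e^{\n C}\,\Kn(w,z) = \tKn(w,z),
\end{equation}
which is the claim. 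There is no real obstacle here: the only subtle point is invoking uniqueness of the monic orthogonal polynomials to conclude $\orthop{k}^* = \orthop{k}$, after which everything is bookkeeping of the factor $\e^{\n C/2}$ against the shift in the Gaussian normalization.
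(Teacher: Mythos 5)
Your proposal is correct and follows the same route as the paper: propagate the constant shift through the inner product, the normalizations $\hn{k}$, and the orthonormal polynomials, and observe that the exponential factors cancel against the Gaussian prefactor in $\tKn$. One small remark: your intermediate relation $\op{k}^{*}=\e^{\n C/2}\op{k}$ has the correct sign, whereas the paper's proof writes $\op{m}^{*}=\op{m}\e^{-\n C/2}$, which is a sign typo (it would give $\tKn^{*}=\e^{-2\n C}\tKn$); the conclusion of the proposition is unaffected, but your bookkeeping is the accurate one.
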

\begin{proof}
For $m\in\N_0$ we call $\op{m}$ the orthonormal polynomials for the potential $V$ and $\op{m}^{*}$ the analogous polynomials for $V^*$. Obviously from Definition \ref{def_op} and \ref{def_inner_product} we see that $\op{m}^{*}=\op{m}\e^{-\n C/2}$. It follows from the definition of the normalized reproducing kernel that $\tKn=\tKn^*$.
\end{proof}

%? \subsection{Normalization Condition}

\subsection{Density and Correlation Functions}
\begin{defin}\label{def_rho}
The density of eigenvalues at $z\in D$ is given by
\begin{equation}
\rho_\n(z)=\frac{\tKn(z,z)}{\n}.
\end{equation}
\end{defin}
\begin{remark}
By Corollary \ref{cor_norm} follows that the density is normalized to one.
\end{remark}

\begin{defin}\label{def_rho_infty}
We define the density of eigenvalues for $\n \rightarrow\infty$ at $z\in D$ as
\begin{equation}
\rho(z)=\lim_{\n\rightarrow\infty} \rho_\n(z).
\end{equation}
\end{defin}

\begin{defin}\label{def_correlation}
Let $1\le m\le \n$. We define the $m$-point correlation function $\Rn{m}$ for the potential $V$ and $D$ as in Definition \ref{def_inner_product} at $z_1,\ldots,z_m\in D\subset \C$ as
\begin{equation}
\Rn{m} \left(z_1,\ldots,z_m \right)=\det\left(\tKn(z_k,z_l)\right)_{k,l=1}^m,
\end{equation}
\end{defin}

%\begin{defin}\label{def_correlation_infty}
%Let $1\le m<\infty$. We define the $m$-point correlation function for $\n \rightarrow\infty$ at $z_1,\ldots,z_m\in D$ as
%\begin{equation}
%R^{(m)}\left(z_1,\ldots,z_m \right)=\lim_{\mathclap{\substack{\n\ge m \\ \n\rightarrow\infty }}}\Rn{m} \left(z_1,\ldots,z_m \right).
%{\n\ge m \atop \n\rightarrow\infty} \Rn{m} \left(z_1,\ldots,z_m \right).
%\end{equation}
%\end{defin}

%\subsection{Operator Identity and Recursion Relation}
%allgemein inkl. Herleitung (?)

\section{Pure Gaussian Potential}\label{sec_gaussian}
In this chapter we will look at the case when $d=2$ and $t_1=t_2=0$, i.e.\ $V(z)=\vert z\vert^2$ is pure Gaussian. This special case we will only consider here as in the following chapters we will set $\vert t_2\vert>0$. It is also an easy example which will illustrate the techniques we are going to use for the more general case. Here the orthonormal polynomials will be just monomials. Most of the ideas we will need later can be found in this chapter. But later the calculations will get much more complicate as the orthonormal polynomials are no longer monomials and we will need an approximation to find their asymptotics. 

%Circle $\partial E_0$, f_0, \gpm 0, ...
\subsection{Orthonormal Polynomials}

\begin{prop}\label{prop_gauss_onp}
Let $\n\in \N$. For the pure Gaussian potential $V(z)=\vert z \vert^2$ the orthonormal polynomials respect to the inner product as in Definition \ref{def_inner_product} with $D=\C$ are
$$\op{m}(z)=\sqrt{\frac{\n^{m+1}}{ \pi m!}}z^m, \qquad \forall m\in \N_0.$$
\end{prop}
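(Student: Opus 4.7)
The plan is to verify directly that the monomials $(z^m)_{m \in \mathbb{N}_0}$ themselves are orthogonal with respect to $\langle \cdot, \cdot \rangle$ when $V(z) = |z|^2$, and then read off the normalization constants.

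First I would pass to polar coordinates $z = r e^{i\theta}$, so that $d^2 z = r\, dr\, d\theta$, and compute
\begin{equation*}
\langle z^m, z^k \rangle = \int_0^\infty \int_0^{2\pi} r^{m+k} e^{i(k-m)\theta} e^{-n r^2} r \, d\theta \, dr.
\end{equation*}
The angular integral produces $2\pi \delta_{mk}$, which immediately gives orthogonality and confirms that $q_{m,n}(z) = z^m$ is already the monic orthogonal polynomial (uniqueness from Definition \ref{def_ortho_poly} then identifies it as the one).

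Next I would compute the normalization constant $h_{m,n}$ by evaluating the $m=k$ radial integral via the substitution $u = n r^2$:
\begin{equation*}
h_{m,n} = \langle z^m, z^m \rangle = 2\pi \int_0^\infty r^{2m+1} e^{-n r^2} \, dr = \frac{\pi}{n^{m+1}} \int_0^\infty u^m e^{-u} \, du = \frac{\pi\, m!}{n^{m+1}}.
\end{equation*}
Dividing $z^m$ by $\sqrt{h_{m,n}}$ as in Definition \ref{def_op} yields exactly the claimed formula.

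There is no real obstacle here: the rotational symmetry of the weight $e^{-n |z|^2}$ trivially decouples the angular variable and reduces everything to a gamma-function integral. The only point worth checking carefully is that the integrals converge for $D = \mathbb{C}$, which holds because the Gaussian weight dominates every polynomial factor $r^{2m+1}$.
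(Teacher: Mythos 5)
Your proposal is correct and follows essentially the same path as the paper: polar coordinates to establish orthogonality of the monomials (the paper cites rotational symmetry, you show the angular integral explicitly, which is the same observation), then computation of $h_{m,n}$ via the radial integral. The only difference is that you evaluate the radial integral by substituting to the gamma function while the paper does iterated integration by parts; these are interchangeable ways to compute $\int_0^\infty r^{2m+1}e^{-nr^2}\,dr$.
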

\begin{proof}
Obviously for any rotationally symmetric potential $V$ and rotationally symmetric domain $D$ the monomials $(z^m)_{m\in\N_0}$ are orthogonal. We only are left to calculate the normalization factors. In the pure Gaussian case we can easily get this by $m$-times partial integration.
\begin{equation}\begin{split}
\ip{z^m}{z^m}&=\int_\C \vert z\vert^{2m} \e^{-\n\vert z\vert^2}\dd^2z=2\pi\int_0^\infty r^{2m}r\e^{-\n r^2}\dd r\\
&=2\pi\left(r^{2m}\left(-\tfrac{1}{2\n}\right)\e^{-\n r^{2}}\Big\vert_{r=0}^\infty+\int_0^\infty 2m r^{2m-1}\tfrac{1}{2\n} \e^{-\n r^{2}}\dd r \right)\\
&=\ldots=\pi m! \left(\frac{1}{\n}\right)^{m+1}.
\end{split}\end{equation}

Thus the orthonormal polynomials are
$$p_m^\n(z)=\sqrt{\frac{\n^{m+1}}{ \pi m!}}z^m.$$
\end{proof}

\subsection{Identities for the Kernel}

For the normalized reproducing kernel $\nK$ we define a ''pre-kernel'' $\HM$.
\begin{defin}\label{def_hm_0}
Let $\n\in\N$. Then we define the ``pre-kernel'' $\HM$ as following.
\begin{equation}
\HM(w,z)=\frac{1}{\n}\e^{-\n w z}\sum_{m=0}^{\n-1} \op{m}(w)\op{m}(z)=\frac{1}{\pi}\e^{-\n w z}\sum_{m=0}^{\n-1} \frac{(\n w  z)^m}{m!} , \qquad \forall w,z\in\C.
\end{equation}
\end{defin}
$\HM$ is related to $\nK$ by
\begin{align}\label{eq_HM_tKn_0}
\HM(\cc{w},z)&=\frac{\nK(w,z)}{\n} e^{\n\left(\frac{\vert w \vert^2+\vert z \vert^2}{2}-\cc{w}z \right)} = \frac{\nK(w,z)}{\n}e^{\frac{\n}{2}\left(\vert w-z\vert^2 +w\cc{z}-\cc{w}z\right)}, 
\end{align}
and at $w=z$ we have $\n \HM(\cc{z},z)=\n\rhon(z)=\nK(z,z)$.

\begin{prop}\label{prop_id_0}
Let $\n\in\N$. Then $\HM$ fulfills the identities
\begin{align}\label{eq_id_0}
\frac{\partial \HM}{\partial w} (w,z)\pm \frac{\partial \HM}{\partial z} (w,z) =\mp \bigl(\op{\n}(w) \op{\n-1}(z) \pm \op{\n-1}(w) \op{\n}(z)\bigr)\e^{-\n w z}, \qquad \forall w,z\in\C.
\end{align}
\end{prop}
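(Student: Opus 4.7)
The plan is to verify the identity by direct computation, using the fact that in the pure Gaussian case the orthonormal polynomials are monomials.

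First I would differentiate the series representation from Definition~\ref{def_hm_0},
$$\HM(w,z) = \frac{1}{\pi}\,\e^{-\n wz}\sum_{m=0}^{\n-1}\frac{(\n wz)^m}{m!},$$
with respect to $w$. The key observation is that this is a telescoping calculation: the derivative of the exponential prefactor contributes $-\n z\,\HM$, while differentiating the truncated sum and re-indexing $m \mapsto m-1$ gives $\n z$ times a sum of the same shape, but now running only up to $m = \n-2$. The two partial sums differ only by the top term, so almost everything cancels and only the boundary contribution survives:
$$\partial_w \HM(w,z) = -\frac{1}{\pi}\,\frac{\n^{\n}\, w^{\n-1}\, z^{\n}}{(\n-1)!}\,\e^{-\n wz}.$$
By the manifest symmetry $\HM(w,z) = \HM(z,w)$, the formula for $\partial_z \HM$ is obtained by interchanging $w$ and $z$.

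Second, I would identify these right-hand sides as single products of orthonormal polynomials. Using Proposition~\ref{prop_gauss_onp} and the identity $\n! = \n(\n-1)!$, the prefactor $\n^{\n+1/2}/\sqrt{\n!\,(\n-1)!}$ that appears in the product collapses to $\n^{\n}/(\n-1)!$, giving
$$\op{\n-1}(w)\,\op{\n}(z) = \frac{\n^{\n}}{\pi(\n-1)!}\,w^{\n-1} z^{\n},$$
and the analogous formula by swapping $w$ and $z$. Combining with the first step I obtain $\partial_w \HM(w,z) = -\op{\n-1}(w)\,\op{\n}(z)\,\e^{-\n wz}$ and $\partial_z \HM(w,z) = -\op{\n}(w)\,\op{\n-1}(z)\,\e^{-\n wz}$. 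Adding and subtracting these two identities gives both sign choices in \eqref{eq_id_0} at once.

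The calculation is entirely elementary and I do not foresee any genuine obstacle; the only bookkeeping that requires attention is the collapse of the two $\sqrt{\n^{m+1}/(\pi m!)}$ normalization factors into the clean combination $\n^{\n}/(\pi(\n-1)!)$ when multiplying $\op{\n-1}(w)$ and $\op{\n}(z)$.
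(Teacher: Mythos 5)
Your proposal is correct and takes essentially the same route as the paper: differentiate the series representation, exploit the telescoping cancellation between the exponential-derivative term and the re-indexed sum, and identify the surviving boundary term with the product $\op{\n-1}(w)\op{\n}(z)\e^{-\n wz}$, using $\n! = \n(\n-1)!$ to simplify the normalization constants. Your version is in fact slightly more careful than the paper's writeup, which drops the factor $\e^{-\n wz}$ in the intermediate equations \eqref{eq_id_w_0}--\eqref{eq_id_z_0} even though that factor must be present (and reappears correctly in the statement of the proposition and in the subsequent asymptotic analysis).
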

\begin{proof}
This follows directly by differentiating
\begin{align}
\frac{\partial \HM}{\partial w} (w,z)&=-\tfrac{\n z}{\pi} \e^{-\n w z} \sum_{m=0}^{\n-1} \tfrac{(\n w  z)^m}{m!}+\tfrac{1}{\pi} \e^{-\n w z} \sum_{m=1}^{\n-1} \tfrac{(\n w)^m m z^{m-1}} {m!}= - \e^{-\n w z} \tfrac{\n^{\n+1} w^{\n-1}  z^{\n}}{\pi \n !} \nn \\
&= - \op{\n-1}(w) \op{\n}(z), \label{eq_id_w_0} \\
\intertext{and analogously}
\frac{\partial \HM}{\partial z} (w,z)&= - \op{\n-1}(z) \op{\n}(w). \label{eq_id_z_0}
\end{align}
\end{proof}

%\subsection{Logarithm of $\dd\rho$}
\subsection{Asymptotics of Identities}
Next we study the asymptotics of identities \eqref{eq_id_0} evaluated at $w=\cc{z}$. Using the Stirling approximation $\n ! = \sqrt{2 \pi \n}\n^\n \e^{-\n} \left(1+\lO(\n^{-1})\right)$, we see from \eqref{eq_id_w_0} and \eqref{eq_id_z_0} that the asymptotics of \eqref{eq_id_0} as $\n\rightarrow \infty$ is
\begin{align}
\frac{\partial \HM}{\partial w} (w,z)\pm \frac{\partial \HM}{\partial z} (w,z) &=\mp \sqrt{\tfrac{\n}{2\pi^3 }} \e^{\n\left( \log(wz) - wz + 1 \right)} \left(\tfrac{1}{w} \pm \tfrac{1}{z} \right)\left( 1+\lO(\n^{-1}) \right), \quad \forall w,z\in\C\setminus\{0\}, \label{eq_asym_id_0} \\
\intertext{and evaluated at $w=\cc{z}$}
\frac{\partial \HM}{\partial w} (\cc{z},z)\pm \frac{\partial \HM}{\partial z} (\cc{z},z) &=\mp \sqrt{\tfrac{\n}{2\pi^3 }} \e^{\n\left( \log\left(\vert z \vert^2\right) - \vert z\vert^2 + 1 \right)} \left(\tfrac{1}{\cc{z}} \pm \tfrac{1}{z} \right) \left( 1+\lO(\n^{-1}) \right) \nn \\
&= \sqrt{\tfrac{\n}{2\pi^3 }} g_{0,\pm}(z)\e^{\n \f_0(z)} \left( 1+\lO(\n^{-1}) \right), \qquad \forall z\in\C\setminus\{0\}, \label{eq_asym_id_eval_0}
% Vorzeichen g0- anpassen
\end{align}
where 
\begin{align}\label{eq_f_g_0}
\f_0(z)=\log\left(\vert z \vert^2\right) - \vert z\vert^2 + 1, \qquad \text{and} \qquad g_{0,\pm}(z)=\mp \left( \tfrac{1}{\cc{z}} \pm \tfrac{1}{z}  \right), \qquad \forall z\in\C\setminus\{0\}. 
\end{align}
$\f_0$ and $g_{0,\pm}$ are continuous functions on $\C\setminus\{0\}$. $\f_0(z)=0$ if and only if $\vert z \vert=1$ and $\f_0(z)<0$ if $\vert z \vert\neq 1$. 

\begin{remark}
Note that the $\lO(\n^{-1})$ term in \eqref{eq_asym_id_0} and \eqref{eq_asym_id_eval_0} holds uniformly for $z\in \C\setminus\{0\}$ since the only approximation we have used, the Stirling approximation, is independent of $z$.
\end{remark}

\subsection{Density}
\begin{prop}\label{prop_density_0}
The density of eigenvalues for the potential $V(z)=\vert z\vert^2$  is given by
\begin{equation}
\rho(z)=\lim_{\n\rightarrow\infty} \rho_\n(z)=\left\{\begin{array}{lll} \tfrac{1}{\pi} & \text{if $\vert z \vert<1$,}\\ \tfrac{1}{2 \pi} & \text{if $\vert z\vert=1$,}\\ 0 & \text{if $\vert z \vert>1$.}\end{array}\right. 
\end{equation}
\end{prop}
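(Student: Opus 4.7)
My plan is to exploit the fact that $\HM(w,z)$ depends on $w,z$ only through the product $wz$ (evident from Definition \ref{def_hm_0}), so that $\rhon(z)=\HM(\cc z,z)$ is a function of $u:=|z|^2$ alone. Writing $\HM(w,z)=H(wz)$, the problem reduces to analyzing a scalar function $H:(0,\infty)\to\R$, and the asymptotic identity \eqref{eq_asym_id_eval_0} will give its derivative directly.

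More precisely, the chain rule gives $\partial_w\HM(w,z)+\partial_z\HM(w,z)=(w+z)H'(wz)$, so evaluating the $+$ version of Proposition \ref{prop_id_0} at $w=\cc z$ and combining with \eqref{eq_asym_id_eval_0} yields $2\Re(z)H'(u)=-2\Re(z)|z|^{-2}\sqrt{\n/(2\pi^3)}\,\e^{\n\f_0(u)}(1+\lO(\n^{-1}))$, whence
\begin{equation*}
H'(u) = -\sqrt{\tfrac{\n}{2\pi^3}}\,\tfrac{1}{u}\,\e^{\n\f_0(u)}\bigl(1+\lO(\n^{-1})\bigr), \qquad u>0,
\end{equation*}
with $\f_0(u)=\log u - u + 1$ (the formula holds wherever $\Re z\neq 0$ and extends by continuity since it depends only on $u$). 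I will then recover $H$ by integrating from the boundary values $H(0)=\tfrac{1}{\pi}$ (only the $m=0$ term in Definition \ref{def_hm_0} survives) and $H(\infty)=0$ (the Poisson tail vanishes).

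Since $\f_0$ is strictly concave on $(0,\infty)$ with unique maximum $\f_0(1)=0$, the factor $\e^{\n\f_0(s)}$ decays exponentially away from $s=1$. For $u<1$ I will write $H(u)=\tfrac{1}{\pi}+\int_0^u H'(s)\,ds$ and use the monotonicity of $\f_0$ on $(0,1]$ to bound the integral by $C\sqrt{\n}\,\e^{\n\f_0(u)}$, which tends to $0$; hence $\rhon(z)\to\tfrac{1}{\pi}$. Symmetrically, for $u>1$ I will write $H(u)=-\int_u^\infty H'(s)\,ds$ and conclude $\rhon(z)\to 0$ from the analogous exponential bound (using that $\f_0$ is decreasing on $[1,\infty)$).

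The main obstacle is the boundary case $|z|=1$, where the integration interval touches the maximum of $\f_0$ at its endpoint and the integral is only $\lO(\n^{-1/2})$, not exponentially small. I will apply Laplace's method using $\f_0'(1)=0$ and $\f_0''(1)=-1$: the local Gaussian approximation $\f_0(s)\approx-(s-1)^2/2$ together with $1/s\to 1$ gives
\begin{equation*}
\int_0^1 \tfrac{1}{s}\,\e^{\n\f_0(s)}\,ds \sim \int_{-\infty}^0 \e^{-\n t^2/2}\,dt = \tfrac{1}{2}\sqrt{\tfrac{2\pi}{\n}},
\end{equation*}
where the factor $\tfrac{1}{2}$ reflects that only the left half of the Gaussian lies in the integration range. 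Multiplying by the prefactor $\sqrt{\n/(2\pi^3)}$ yields $\tfrac{1}{2\pi}$, so $H(1)\to\tfrac{1}{\pi}-\tfrac{1}{2\pi}=\tfrac{1}{2\pi}$, as claimed.
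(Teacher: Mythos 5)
Your reduction to the scalar function $H$ via the observation that $\HM(w,z)$ depends on $w,z$ only through $wz$ is a genuine and clean simplification that the paper does not exploit here. The paper's proof works in the two real variables $(x,y)$, integrates $\tfrac{\partial\rhon}{\partial x}$ and $\tfrac{\partial\rhon}{\partial y}$ along axis-parallel paths, and does a second-order Taylor expansion of $\f_0$ at the crossing point of the unit circle --- a method deliberately chosen to parallel the later proof for $\t\neq 0$ (Propositions \ref{prop_density} and \ref{prop_veri_density}), where rotational symmetry is absent. Your one-dimensional reduction $\rhon(z)=H(|z|^2)$ is shorter, avoids the path-choice discussion entirely, and makes the Laplace computation at $u=1$ particularly transparent; the trade-off is that it does not generalize beyond the rotationally symmetric case, which is presumably why the paper does not use it.

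One step needs tightening. In the case $u<1$, bounding $\int_0^u \tfrac{1}{s}\,\e^{\n\f_0(s)}\,ds$ by appealing only to the monotonicity of $\f_0$ on $(0,1]$ does not go through as stated: since $\tfrac{1}{s}$ is not integrable near $0$, the naive estimate $\e^{\n\f_0(s)}\le\e^{\n\f_0(u)}$ would leave a divergent factor $\int_0^u\tfrac{ds}{s}$. The correct observation is that the \emph{full} integrand $\tfrac{1}{s}\,\e^{\n\f_0(s)}=s^{\n-1}\e^{\n(1-s)}$ is itself increasing on $\bigl(0,\tfrac{\n-1}{\n}\bigr)$, because
\begin{equation*}
\frac{d}{ds}\log\bigl(s^{\n-1}\e^{\n(1-s)}\bigr)=\frac{\n-1}{s}-\n>0 \quad\text{for } s<\tfrac{\n-1}{\n};
\end{equation*}
hence for $\n$ large enough that $u<\tfrac{\n-1}{\n}$ the integrand is bounded by its right endpoint value $\tfrac{1}{u}\e^{\n\f_0(u)}$, giving $\int_0^u\tfrac{1}{s}\e^{\n\f_0(s)}\,ds\le\e^{\n\f_0(u)}$ and hence the claimed bound $C\sqrt{\n}\,\e^{\n\f_0(u)}\to 0$. (The case $u>1$ does not suffer this problem since $\tfrac{1}{s}\le\tfrac{1}{u}$ there.) With that fix your argument is complete and correct.
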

\begin{proof}
It is obvious that the density at zero is $\tfrac{1}{\pi}$ since
\begin{align}
\rhon(0)=\tfrac{1}{\n} \e^{-\n V(0)} \sum_{k=0}^{\n-1} \vert \op{k}(0) \vert^2=\tfrac{1}{\n} \vert \op{0} \vert^2= \tfrac{1}{\pi}, \qquad \forall \n\in\N.
\end{align}
For the density $\rhon(x,y)=\rhon(x+i y)$ we can find the derivatives by
\begin{align}
\frac{\partial \rhon}{\partial x}(x,y)&=\frac{\partial \HM}{\partial w}(x-i y,x+i y)+\frac{\partial \HM}{\partial z}(x-i y,x+i y), \\
\intertext{and}
\frac{\partial \rhon}{\partial y}(x,y)&=-i \left( \frac{\partial \HM}{\partial w}(x-i y,x+i y) - \frac{\partial \HM}{\partial z}(x-i y,x+i y) \right),
\end{align}
which asymptotics we know from \eqref{eq_asym_id_eval_0}. The density $\rhon$ in the limit $\n\rightarrow \infty$ we can get by integration of $\tfrac{\partial \rhon}{\partial x}$ along a path parallel to the real axis or $\tfrac{\partial \rhon}{\partial y}$ along a path parallel to the imaginary axis. Any path which does not cross (or touch) the unit circle does not contribute to the integral because \eqref{eq_asym_id_eval_0} holds uniformly and $\e^{\n \f_0(z)}$ converges uniformly to zero where the supremum of $\f_0$ on this path is strictly negative. Therefore the density inside and outside of the unit circle is constant.

To find the density on the circle and the jump of the density across the circle we consider a short path in the neighborhood of the unit circle either parallel to the real axis or parallel to the imaginary axis. We will approximate $\f_0(x,y)=\f_0(x+i y)$ by its Taylor expansion up to order two either as a function of $x$ or as a function of $y$. The neighborhood of the unit circle in which the path lies shall be so small that we can neglect the difference of $\f_0$ and its Taylor expansion and the variation of the continuous functions $g_{0,\pm}$ along the path. Let $z_1=x_1+i y_1$ be the starting point and $z_2=x_2+i y_2$ the ending point of the path. $z_2$ can lie inside, outside or on the unit circle. We will call $z_E=x_E+i y_E$ the point of intersection of the line through $z_1$ and $z_2$ and the circle. We know that the density at $z_1$ is $\tfrac{1}{\pi}$. For the density at $z_2$ we get 
\begin{align}\label{eq_proof_density_0}
\rho(z_2)&\approx \tfrac{1}{\pi}+\lim_{\n\rightarrow\infty}\sqrt{\tfrac{\n}{2\pi^3}} g_{0,+}(z_E) \int_{x_1}^{x_2} \e^{\n \frac{\partial^2\f_0}{\partial x^2}(x_E,y_E)\frac{(x-x_E)^2}{2}} \dd x\\ 
&= \tfrac{1}{\pi}+\lim_{\n\rightarrow\infty}\sqrt{\tfrac{\n}{2\pi^3}} (-2x_E) \int_{x_1-x_E}^{x_2-x_E} \e^{-4 x_E^2 \n x^2} \dd x\\ 
&= \tfrac{1}{\pi}-\tfrac{1}{\pi^{3/2}} \int_{-\infty}^{\infty (\vert x_2\vert-\vert x_E\vert)} \e^{-x^2} \dd x= \left\{ \begin{array}{lll} \frac{1}{\pi} &\text{if $\vert x_2\vert < \vert x_E \vert$,}\\ \frac{1}{2\pi} &\text{if $x_2=x_E$,}\\ 0 &\text{if $\vert x_2 \vert > \vert x_E \vert$,} \end{array}\right.
\end{align}
for a path parallel to the real axis. We have used that 
\begin{align}
\frac{\partial^2\f_0}{\partial x^2}(x,y)\Big\vert_{x^2+y^2=1}=-4 x^2, \qquad \text{and}\qquad  g_{0,+}(x,y)\Big\vert_{x^2+y^2=1}=-2x.
\end{align}

%TO BE DONE Evtl. noch Bemerkung, dass wir das negative Vorzeichen für x_1<0, x_2<0, x_E<0 Kompensiert haben. Auf der Rellen Achse anderer Path benutzen

An analogous computation can also be done with a path parallel to the imaginary axis. More details can be found in the proof for the case when $\t_2\neq 0$ in Proposition \ref{prop_density} and Proposition \ref{prop_veri_density}. 
\end{proof}

\subsection{Kernel}
As we have seen in the proof of Proposition \ref{prop_id_0}, $\HM$ also fulfills the identities
\begin{align}
\frac{\partial \HM}{\partial w} (w,z)&= - \op{\n-1}(w) \op{\n}(z), \label{eq_id_w2_0} \\
\intertext{and}
\frac{\partial \HM}{\partial z} (w,z)&= - \op{\n-1}(z) \op{\n}(w). \label{eq_id_z2_0}
\end{align}
Now we want to find the kernel at $w=\cc{z}_0+\tfrac{\cc{a}}{\sqrt{\n}}$ and $z=z_0+\tfrac{b}{\sqrt{\n}}$ for $z_0,a,b\in\C$. Evaluating the identities at $w=\cc{z}_0+u$ and $z=z_0+v$ with $u=\lO(\n^{-1/2})$ and $v=\lO(\n^{-1/2})$, we get their asymptotics similar as in \eqref{eq_asym_id_eval_0} 
\begin{align}
\frac{\partial \HM}{\partial w} (\cc{z}_0+u,z_0+v) &= - \sqrt{\tfrac{\n}{2\pi^3 }} \tfrac{1}{\cc{z}_0} \e^{\n \left(\f_0(z_0)+u\left(\frac{1}{\cc{z}_0}-z_0\right)-\frac{u^2}{2 \cc{z}_0^2} +v\left(\frac{1}{z_0}-\cc{z}_0\right)-\frac{v^2}{2 z_0^2} -uv \right)} \left( 1+\lO(\n^{-1/2}) \right),  \label{eq_asym_id_kernelw_0} \\
\intertext{and}
\frac{\partial \HM}{\partial z} (\cc{z}_0+u,z_0+v)&= - \sqrt{\tfrac{\n}{2\pi^3 }} \tfrac{1}{z_0} \e^{\n \left(\f_0(z_0)+u\left(\frac{1}{\cc{z}_0}-z_0\right)-\frac{u^2}{2 \cc{z}_0^2} +v\left(\frac{1}{z_0}-\cc{z}_0\right)-\frac{v^2}{2 z_0^2} -uv \right)} \left( 1+\lO(\n^{-1/2}) \right),  \label{eq_asym_id_kernelz_0}
\end{align}
where $\f_0(z_0)=\log\left(\vert z_0 \vert^2\right) - \vert z_0\vert^2 + 1$. We have used that
\begin{align}
\tfrac{1}{\cc{z}_0+u}&=\tfrac{1}{\cc{z}_0}+\lO(\n^{-1/2}), \qquad \tfrac{1}{z_0+v}=\tfrac{1}{z_0}+\lO(\n^{-1/2}), \\
\intertext{and} 
\log((\cc{z}_0+u)(z_0+v))&=\log(\vert z_0 \vert^2)+\tfrac{u}{\cc{z}_0}+\tfrac{v}{z_0}-\tfrac{u^2}{2\cc{z}_0^2}-\tfrac{v^2}{2 z_0^2}+\lO\left(\n^{-3/2}\right).
\end{align}

\begin{prop}\label{prop_kernel_0}
Let $a,b,z_0\in\C$. Then the asymptotics as $\n\rightarrow\infty$ of the kernel $\tKn$ for the potential $V(z)=\vert z\vert^2$ is given by
\begin{align}
\tfrac{\tKn}{\n}\left(z_0+\tfrac{a\e^{i\psi}}{\sqrt{\n}},z_0+\tfrac{b\e^{i\psi}}{\sqrt{\n}}\right) =\e^{i\nphase(z_0,a\e^{i\psi},b\e^{i\psi})} \e^{-\frac{\vert a-b\vert^2}{2}} \left\{\begin{array}{lll}\frac{1}{\pi} & \text{if $\vert z_0 \vert < 1$,}\\ \frac{1}{2\pi}\erfc\left(\frac{\cc{a}+b}{\sqrt{2}}\right) & \text{if $ \vert z_0 \vert=1$, }\\ 0 & \text{if $\vert z_0 \vert>1$,}\end{array}\right. 
\end{align}
with
\begin{equation}
\begin{split}
\nphase(z_0,a\e^{i\psi},b\e^{i\psi})=&\sqrt{\n}\Im\left(\e^{-i\psi} z_0 (\cc{a}-\cc{b}) \right) + \Im\left( \cc{a}b \right),
\end{split}
\end{equation}
where $\psi\in(-\pi,\pi]$ can be arbitrary if $\vert z_0 \vert\neq 1$ and $\psi=\arg z_0$ if $\vert z_0\vert=1$.
\end{prop}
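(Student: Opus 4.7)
The plan is to transfer the statement on $\tKn$ to the pre-kernel $\HM$ via \eqref{eq_HM_tKn_0}, and then to reconstruct $\HM(\cc{w},z)$ from the diagonal value $\HM(\cc{z}_0,z_0)=\rhon(z_0)$ by integrating the asymptotic identities \eqref{eq_asym_id_kernelw_0}--\eqref{eq_asym_id_kernelz_0} along the short segment joining $(\cc{z}_0,z_0)$ to $(\cc{w},z)$.

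First I would substitute $w=z_0+\tfrac{a\e^{i\psi}}{\sqrt{\n}}$ and $z=z_0+\tfrac{b\e^{i\psi}}{\sqrt{\n}}$ into the prefactor of \eqref{eq_HM_tKn_0} and use the elementary identity $\tfrac{|w|^2+|z|^2}{2}-\cc{w}z=\tfrac12|w-z|^2+i\Im(w\cc{z})$. A direct expansion yields $\tfrac{\n}{2}|w-z|^2=\tfrac{|a-b|^2}{2}$ and $-\n\Im(w\cc{z})=\nphase(z_0,a\e^{i\psi},b\e^{i\psi})$ (the $\n|z_0|^2$ contribution is real and drops). The proposition is thereby reduced to identifying $\lim_{\n\to\infty}\HM(\cc{w},z)$ in each of the three regimes.

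Setting $u_0=\tfrac{\cc{a}\e^{-i\psi}}{\sqrt{\n}}$, $v_0=\tfrac{b\e^{i\psi}}{\sqrt{\n}}$ and parametrizing the connecting segment by $t\mapsto(\cc{z}_0+tu_0,z_0+tv_0)$, $t\in[0,1]$, the fundamental theorem of calculus gives
\begin{equation*}
\HM(\cc{w},z)=\rhon(z_0)+\int_0^1\left[\tfrac{\partial\HM}{\partial w}(\cc{z}_0+tu_0,z_0+tv_0)\,u_0+\tfrac{\partial\HM}{\partial z}(\cc{z}_0+tu_0,z_0+tv_0)\,v_0\right]\dd t,
\end{equation*}
and the anchor $\rhon(z_0)$ has a known limit by Proposition \ref{prop_density_0}. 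Plugging $u=tu_0$, $v=tv_0$ into \eqref{eq_asym_id_kernelw_0}--\eqref{eq_asym_id_kernelz_0} and denoting the exponent there by $\n\Phi(u,v)$: if $|z_0|\neq 1$ then $\n\Phi(tu_0,tv_0)=\n\f_0(z_0)+\lO(\sqrt{\n})\to-\infty$ uniformly in $t$, so the integrand is exponentially small and $\HM(\cc{w},z)\to\rho(z_0)$, giving $\tfrac{1}{\pi}$ inside and $0$ outside the unit disk.

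The interesting case, and the principal obstacle, is $|z_0|=1$ with $\psi=\arg z_0$. Writing $z_0=\e^{i\psi}$, the linear coefficients $\tfrac{1}{\cc{z}_0}-z_0$ and $\tfrac{1}{z_0}-\cc{z}_0$ both vanish, as does $\f_0(z_0)$, so a short computation using $u_0^2\e^{2i\psi}=\cc{a}^2/\n$, $v_0^2\e^{-2i\psi}=b^2/\n$, and $u_0v_0=\cc{a}b/\n$ collapses the exponent to $\n\Phi(tu_0,tv_0)=-\tfrac{t^2(\cc{a}+b)^2}{2}$. Meanwhile $\tfrac{u_0}{\cc{z}_0}+\tfrac{v_0}{z_0}=\tfrac{\cc{a}+b}{\sqrt{\n}}$ exactly cancels the $\sqrt{\n}$ prefactor in \eqref{eq_asym_id_kernelw_0}--\eqref{eq_asym_id_kernelz_0}, producing an $\n$-independent integrand $-\pi^{-3/2}(\cc{a}+b)\e^{-t^2(\cc{a}+b)^2/2}$. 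The substitution $s=t(\cc{a}+b)/\sqrt{2}$ integrates this over $[0,1]$ to $-\tfrac{1}{2\pi}\erf\bigl(\tfrac{\cc{a}+b}{\sqrt{2}}\bigr)$, and combining with $\rhon(z_0)\to\tfrac{1}{2\pi}$ yields $\tfrac{1}{2\pi}\erfc\bigl(\tfrac{\cc{a}+b}{\sqrt{2}}\bigr)$ as claimed. The remaining routine check is that the $\lO(\n^{-1/2})$ error in the asymptotic identities survives the integration, which it does because $u_0,v_0=\lO(\n^{-1/2})$ absorb the $\sqrt{\n}$ prefactor.
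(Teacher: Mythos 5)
Your proposal is correct and follows the same overall strategy as the paper: reduce to the pre-kernel $\HM$ via \eqref{eq_HM_tKn_0}, pin down the prefactor $\e^{i\nphase}\e^{-|a-b|^2/2}$ by expanding $\frac{\n}{2}|w-z|^2+i\n\Im(w\cc{z})$, then reconstruct $\HM(\cc{w},z)$ from the diagonal value $\rhon(z_0)$ by integrating the asymptotic identities \eqref{eq_asym_id_kernelw_0}--\eqref{eq_asym_id_kernelz_0}. Where you differ from the paper is the choice of integration path: the paper uses an L-shaped (two-step) path, first integrating $\tfrac{\partial\HM}{\partial w}$ from $(\cc{z}_0,z_0)$ to $(\cc{w},z_0)$ and then $\tfrac{\partial\HM}{\partial z}$ from $(\cc{w},z_0)$ to $(\cc{w},z)$, working in unrotated coordinates and applying the rotation $\e^{i\psi}$ only at the very end; you instead parametrize a straight diagonal path $t\mapsto(\cc{z}_0+tu_0,z_0+tv_0)$ in the already-rotated coordinates. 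Both are legitimate by holomorphy of $\HM$ (path independence), but your diagonal choice buys a cleaner calculation on $|z_0|=1$: the exponent collapses immediately to the single perfect square $-\tfrac{t^2(\cc{a}+b)^2}{2}$, avoiding the paper's sequential Gaussian/shifted-Gaussian evaluation and the need to complete a square in the second step. One small arithmetic slip: after combining the $\sqrt{\n/(2\pi^3)}$ prefactor with $\tfrac{u_0}{\cc{z}_0}+\tfrac{v_0}{z_0}=\tfrac{\cc{a}+b}{\sqrt{\n}}$, the integrand should read $-\tfrac{\cc{a}+b}{\sqrt{2}\,\pi^{3/2}}\,\e^{-t^2(\cc{a}+b)^2/2}$, not $-\pi^{-3/2}(\cc{a}+b)\e^{-t^2(\cc{a}+b)^2/2}$; with the correct constant the substitution $s=t(\cc{a}+b)/\sqrt{2}$ produces exactly $-\tfrac{1}{2\pi}\erf\bigl(\tfrac{\cc{a}+b}{\sqrt{2}}\bigr)$ as you claim, so this is a typo rather than a gap.
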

\begin{proof}
First we compute $\HM$ at $w=\cc{z}_0+\tfrac{\cc{a}}{\sqrt{\n}}$ and $z=z_0+\tfrac{b}{\sqrt{\n}}$ by integration
\begin{align}
\HM\left(\cc{z}_0+\tfrac{\cc{a}}{\sqrt{\n}},z_0+\tfrac{b}{\sqrt{\n}} \right) =\HM(\cc{z}_0,z_0)&+\int_0^{\frac{\cc{a}}{\sqrt{\n}}} \frac{\partial \HM}{\partial w}\left(\cc{z}_0+u,z_0\right)\dd u \nn \\
&+ \int_0^{\frac{b}{\sqrt{\n}}} \frac{\partial \HM}{\partial z}\left(\cc{z}_0+\tfrac{\cc{a}}{\sqrt{\n}},z_0+v\right)\dd v.
\end{align}
If $\vert z_0\vert \neq 1$, we know that $\f_0$ is strictly negative and thus we see from \eqref{eq_asym_id_kernelw_0} and \eqref{eq_asym_id_kernelz_0} that the integrals become zero in the limit $\n\rightarrow \infty$. Therefore then 
\begin{align}\HM\left(\cc{z}_0+\tfrac{\cc{a}}{\sqrt{\n}},z_0+\tfrac{b}{\sqrt{\n}} \right) \approx \HM(\cc{z}_0,z_0)\approx\left\{\begin{array}{ll} \frac{1}{\pi}, & \text{if $\vert z_0 \vert<1$,} \\ 0, & \text{if $\vert z_0\vert>1$.} \end{array}\right. 
\end{align}

If $\vert z_0 \vert=1$, on the other hand, $\f_0$ is zero and we get as $\n\rightarrow\infty$ in a first step
\begin{align}
\HM\left(\cc{z}_0+\tfrac{\cc{a}}{\sqrt{\n}},z_0\right) &\approx \tfrac{1}{2\pi}-\sqrt{\tfrac{\n}{2 \pi^3}} \tfrac{1}{\cc{z}_0} \int_0^{\frac{\cc{a}}{\sqrt{\n}}} \e^{-\n \frac{u^2}{2 \cc{z}_0^2} } \dd u \nn \\
& = \tfrac{1}{2\pi}-\pi^{-3/2} \int_0^{\frac{\cc{a}}{\sqrt{2}\cc{z}_0 } } \e^{-u^2} \dd u = \tfrac{1}{2\pi} \erfc\left(\tfrac{\cc{a} z_0}{\sqrt{2}}\right),
\end{align}
where we have used that according to Proposition \ref{prop_density_0} $\HM(\cc{z}_0,z_0)\approx \tfrac{1}{2\pi}$ as $\n\rightarrow\infty$ and
\begin{align}
\f_0(z_0)+u\left(\tfrac{1}{\cc{z}_0}-z_0\right)-\tfrac{u^2}{2 \cc{z}_0^2} +v\left(\tfrac{1}{z_0}-\cc{z}_0\right)-\tfrac{v^2}{2 z_0^2} -uv = -\tfrac{u^2}{2 \cc{z}_0^2},
\end{align}
when we set $v=0$ and if $\vert z_0 \vert=1$. With the help of an analogous computation we find in a second step that
\begin{align}
\HM\left(\cc{z}_0+\tfrac{\cc{a}}{\sqrt{\n}},z_0+\tfrac{b}{\sqrt{\n}} \right) &\approx \tfrac{1}{2\pi} \erfc\left(\tfrac{\cc{a} z_0 + b \cc{z}_0}{\sqrt{2}}\right).
\end{align}
More details can be found in the proofs of the general case in Lemma \ref{lemma_hm_w_ellipse}, Proposition \ref{prop_hm_wz_ellipse} and Proposition \ref{prop_veri_kernel}.

When we rotate the coordinate system such that the real axis is normal on the unit circle, we get 
\begin{align}
\HM\left(\cc{z}_0\bigl(1+\tfrac{\cc{a}}{\sqrt{\n}}\bigr),z_0\bigl(1+\tfrac{b}{\sqrt{\n}}\bigr) \right) &\approx \tfrac{1}{2\pi} \erfc\left(\tfrac{\cc{a}+b}{\sqrt{2}}\right).
\end{align}

Using the relation \eqref{eq_HM_tKn_0} evaluated at $w=\cc{z}_0+\tfrac{\cc{a}}{\sqrt{\n}}$ and $z=z_0+\tfrac{b}{\sqrt{\n}}$, we get back to the reproducing kernel $\nK$. For the general case we will do this computation with more details in Theorem \ref{th_km}.
\end{proof}

\subsection{Correlation Functions}
\begin{prop}\label{prop_corr_0}
Let $\Rn{m}$ be the $m$-point correlation function and $a_1,\ldots,a_m\in\C$. Then
\begin{align}
\lim_{\n\rightarrow\infty} \frac{\Rn{m}}{\n^m} \left(z_0+\tfrac{a_1}{\sqrt{\n}},\ldots,z_0+\tfrac{a_m}{\sqrt{\n}} \right) &= \det \left( \tfrac{1}{\pi}\e^{-\frac{1}{2}\vert a_k-a_l \vert^2+i\Im \left(\cc{a}_k a_l\right) }  \right)_{k,l=1}^m, \quad \!\! \text{if $\vert z_0 \vert<1$}, \\ %TO BE DONE \!\! weglassen, wenn man es sonst verkürzen kann
\lim_{\n\rightarrow\infty} \frac{\Rn{m}}{\n^m} \left(z_0+\tfrac{a_1}{\sqrt{\n}},\ldots,z_0+\tfrac{a_m}{\sqrt{\n}} \right)&=0, \qquad \text{if $\vert z_0 \vert>1$}, \\
\lim_{\n\rightarrow\infty} \frac{\Rn{m}}{\n^m} \left(z_0\bigl(1+\tfrac{a_1}{\sqrt{\n}}\bigr),\ldots,z_0\bigl(1+\tfrac{a_m}{\sqrt{\n}}\bigr)\right)&=\det \left( \tfrac{\erfc\left(\frac{\cc{a}_k+a_l}{\sqrt{2}}\right)}{2\pi} \e^{-\frac{1}{2}\vert a_k-a_l \vert^2+ i\Im \left(\cc{a}_k a_l\right) }  \right)_{k,l=1}^m, \nn \\
& \qquad \text{if $\vert z_0 \vert=1$}.
\end{align}
\end{prop}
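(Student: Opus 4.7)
My plan is to start from the determinantal definition $R_n^{(m)}(z_1,\ldots,z_m)/n^m = \det\bigl(\tilde K_n(z_k,z_l)/n\bigr)_{k,l=1}^m$ from Definition \ref{def_correlation}, and substitute the kernel asymptotics from Proposition \ref{prop_kernel_0} into each matrix entry. In Case 1 ($|z_0|<1$) I apply Proposition \ref{prop_kernel_0} with $\psi=0$ (which is permitted off the unit circle). In Case 3 ($|z_0|=1$) I must take $\psi=\arg z_0$; the scaling $z_0(1+a_k/\sqrt{n})=z_0+z_0a_k/\sqrt{n}=z_0+(a_k e^{i\psi})/\sqrt{n}$ matches exactly the form in Proposition \ref{prop_kernel_0}. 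In Case 2 ($|z_0|>1$) each entry of the kernel matrix converges to $0$, and since the determinant is a polynomial in its entries it also converges to $0$.

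The key algebraic step in Cases 1 and 3 is to handle the $\sqrt{n}$-diverging phase. From Proposition \ref{prop_kernel_0}, the phase of the $(k,l)$ entry is
\begin{equation*}
\theta_{n,t}(z_0,a_k e^{i\psi},a_l e^{i\psi}) = \sqrt{n}\,\Im\bigl(e^{-i\psi}z_0(\cc{a}_k-\cc{a}_l)\bigr) + \Im(\cc{a}_k a_l).
\end{equation*}
Setting $\alpha_k := \sqrt{n}\,\Im(e^{-i\psi}z_0\cc{a}_k)$, this reads $\theta_{kl} = \alpha_k - \alpha_l + \Im(\cc{a}_k a_l)$. Hence the matrix $\bigl(e^{i\theta_{kl}} M_{kl}\bigr)$ factors as $\mathrm{diag}(e^{i\alpha_k})\cdot\bigl(e^{i\Im(\cc{a}_k a_l)}M_{kl}\bigr)\cdot\mathrm{diag}(e^{-i\alpha_l})$, and the two diagonal factors contribute $\prod_k e^{i\alpha_k}\cdot\prod_l e^{-i\alpha_l}=1$ to the determinant. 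So the oscillating $\sqrt{n}$-scale phase cancels exactly, and one is left with $\det\bigl(e^{i\Im(\cc{a}_k a_l)}M_{kl}\bigr)$, where $M_{kl}$ is the remaining factor from Proposition \ref{prop_kernel_0}: $\frac{1}{\pi}e^{-|a_k-a_l|^2/2}$ in Case 1 and $\frac{1}{2\pi}\erfc\bigl((\cc{a}_k+a_l)/\sqrt{2}\bigr)e^{-|a_k-a_l|^2/2}$ in Case 3. These match the formulas in the statement.

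The only subtle point I anticipate is that Proposition \ref{prop_kernel_0} is phrased as a pointwise asymptotic, while here one inserts it into a determinant of $m^2$ entries and passes to the limit. This is legitimate because the error estimate in Proposition \ref{prop_kernel_0} is uniform in $(a,b)$ on compact sets (it comes from Stirling together with the Taylor remainder of $\f_0$, both uniform), the determinant is a polynomial of fixed degree $m$ in its entries, and the prefactors $e^{-|a_k-a_l|^2/2}$ or $\erfc((\cc{a}_k+a_l)/\sqrt{2})$ are bounded on the compact choice $\{a_1,\ldots,a_m\}$. Thus taking $n\to\infty$ commutes with the determinant, and after the phase cancellation just described the three stated formulas follow immediately.
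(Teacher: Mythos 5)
Your argument is correct, and it is a cleaner algebraic presentation of the same underlying cancellation that the paper performs. Where the paper (and Theorem \ref{th_corr}, which it defers to) expands $\det\bigl(\tKn(z_k,z_l)/\n\bigr)$ as a sum over permutations $\sigma\in S_m$ and then observes that the $\sqrt{\n}$-scale phase contributes $\sum_{k}(\alpha_k-\alpha_{\sigma(k)})=0$ for every $\sigma$, you instead factor the phase matrix once and for all as $\mathrm{diag}(\e^{i\alpha_k})\cdot\bigl(\e^{i\Im(\cc{a}_k a_l)}M_{kl}\bigr)\cdot\mathrm{diag}(\e^{-i\alpha_l})$ and use $\det(D A D^{-1})=\det A$. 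The two are logically equivalent, but your version makes the cancellation visible at the matrix level before expanding anything, which is shorter and less error-prone. Your treatment of the three cases (choosing $\psi=0$ off the circle and $\psi=\arg z_0$ on it, identifying $z_0(1+a_k/\sqrt{n})=z_0+a_k\e^{i\psi}/\sqrt{n}$ when $|z_0|=1$, and noting that in Case 2 each entry tends to zero) matches the paper. One small remark on your uniformity discussion: since $a_1,\ldots,a_m$ are fixed, only $m^2$ kernel evaluations appear, each of which converges individually, so continuity of the determinant in its finitely many entries already suffices — the uniformity of the error in $(a,b)$ on compacts, while true, is not strictly needed at the level of this proposition (it does matter later for the uniform statements in Chapter~\ref{sec_verification}).
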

\begin{proof}
The proposition follows from a straightforward computation from Definition \ref{def_correlation} and Proposition \ref{prop_kernel_0}. The oscillating term $\sqrt{\n}\Im\left(\e^{-i\psi} z_0 (\cc{a}-\cc{b}) \right)$ in the phase $\nphase$ is canceled when we evaluate the product 
\begin{align}
\prod_{k=1}^m \frac{\tKn}{\n}\left(z_0+\tfrac{a_k}{\sqrt{\n}},z_0+\tfrac{a_{\sigma(k)}}{\sqrt{\n}}\right), \qquad a=(a_1,\ldots,a_m)
\end{align} 
in the determinant. We will show all details of this computation for the general case in Theorem \ref{th_corr} .
\end{proof}

%\subsection{Comparison to Earlier Calculation} %Alternative Calculation

\section{Orthonormal Polynomials for General Gaussian Potentials}\label{ch_on_poly}
\sectionmark{Orthonormal Polynomials}

Now we consider all potentials as in Definition \ref{def_potential} with $d=2$ and $\vert \t_2\vert <\tfrac{1}{2}$. The case when $t_2=0$ we have already discussed in Chapter \ref{sec_gaussian}. Therefore we can now always assume $\vert t_2\vert>0$. For simplicity we set $t_0=1$, $t_1=0$ and $2 t_2=\t$. We will discuss in Chapter \ref{sec_t} what happens if $t_0\neq 1$ and $t_1\neq 0$. The potential $V$ now reads
\begin{equation}
V(z)=\vert z\vert^2-\Re(\t z^2), \qquad z,t\in \C, \text{ with } 0<\vert \t \vert<1.
\end{equation}
In Definition \ref{def_inner_product} we set $D=\C$ because we don't need a cut-off.

\subsection{Recursion Relation}
%evtl. anpassen (falls Sektion: Recursion Relation erledigt)

%besser k statt m als Indexvariable benutzen, m wird für m-point Korrelation gebraucht. k,l sind sonst überall Indexvariabeln!

For the Gaussian potential $V$ with $\vert t\vert <1$ we have the following recursion relation for the orthonormal polynomials (see \cite{elbau}, Proposition 5.10, where the error term $o(\e^{-\n \epsilon})$ has disappeared when $D=\C$).
\begin{equation}\label{eq_rec}
z \op{m}(z)=\rn{m+1}\op{m+1}(z)+\cc{\t} \, \rn{m} \op{m-1}(z), \qquad \forall m\in \N_0, z\in\C,
\end{equation}
with the real coefficients
\begin{equation}\label{eq_rm}
\rn{m}=\sqrt{\frac{m }{ \n ( 1-\vert \t\vert^2 ) }}, \qquad m\in \N_0,
\end{equation}
and $\op{-1}=0$.

\subsection{Orthonormal Polynomials $\op{m}$}
We are interested in the orthonormal polynomials for the weight 
\begin{equation}
\e^{-\n V(z)}=\e^{\n\left(-|z|^2+\frac{\t z^2}{2}+\frac{\cc{\t}\cc{z}^2}{2}\right)}.
\end{equation}
We will find that the orthonormal polynomials happen to be rescaled Hermite polynomials. 

\begin{prop}\label{prop_op}
Let $\t\in \C$ with $0<\vert \t\vert<1$. The orthonormal polynomial of order $m$ for the potential $V(z)=\vert z\vert^2-\Re(\t z^2)$ is
\begin{equation}\label{eq_op_herm}
\op{m}(z)=H_m(sz) \frac{\left(\n(1-\vert\t\vert^2)\right)^{\frac{m}{2}} \op{0}}{2^m s^m \sqrt{m!}}=H_m(sz) \frac{\cc{t}^{\frac{m}{2}} \op{0}}{2^{\frac{m}{2}}\sqrt{m!} }, \qquad z\in\C,
\end{equation}
where $s=\sqrt{\frac{\n(1-\vert \t\vert^2)}{2\cc{\t}}}$ and the zeroth order orthonormal polynomial is 
\begin{equation}
\op{0}=\sqrt{\tfrac{\n}{\pi}\sqrt{1-\vert\t\vert^2}}.
\end{equation} 
$H_m$ is the Hermite polynomial of order $m$. 
\end{prop}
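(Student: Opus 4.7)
The plan is to identify $\op{m}$ with the rescaled Hermite polynomial by verifying that the right-hand side of \eqref{eq_op_herm} satisfies the three-term recursion \eqref{eq_rec} together with the correct initial normalization; since orthonormal polynomials with a prescribed positive leading coefficient are unique, this forces equality.

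I would first pin down $\op{0}$ from the normalization $\ip{\op{0}}{\op{0}} = 1$. Writing $V(z) = \vert z \vert^2 - \Re(\t z^2)$ as a real positive-definite quadratic form in $(\Re z,\Im z)$ shows that its defining symmetric matrix has determinant $1-\vert \t \vert^2$, and the standard two-dimensional real Gaussian integral gives $\int_\C \e^{-\n V(z)} \dd^2 z = \pi/(\n\sqrt{1-\vert \t\vert^2})$, from which the stated value of $\op{0}$ follows.

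Next, substitute the ansatz $\tildeop{m}(z):=H_m(sz)\, c_m$ with $c_m := \cc{\t}^{m/2}\op{0} / (2^{m/2}\sqrt{m!})$ into \eqref{eq_rec} and apply the classical Hermite identity $x H_m(x) = \tfrac{1}{2}H_{m+1}(x) + m H_{m-1}(x)$ at $x=sz$ to obtain $z\,\tildeop{m}(z) = \frac{c_m}{2s} H_{m+1}(sz) + \frac{m c_m}{s} H_{m-1}(sz)$. Matching the coefficient of $H_{m+1}(sz)$ against $\rn{m+1} \tildeop{m+1}$ requires $c_{m+1} = c_m/(2 s\,\rn{m+1})$; using $s=\sqrt{\n(1-\vert\t\vert^2)/(2\cc{\t})}$ together with \eqref{eq_rm} one computes $s\,\rn{m+1} = \sqrt{(m+1)/(2\cc{\t})}$, so this simplifies to $c_{m+1} = c_m\sqrt{\cc{\t}/(2(m+1))}$, which is exactly the recursion satisfied by the proposed $c_m$. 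The coefficient of $H_{m-1}(sz)$ must then equal $\cc{\t}\,\rn{m}\, c_{m-1}$; since $s\,\rn{m} = \sqrt{m/(2\cc{\t})}$, this reads $m c_m/s = \cc{\t}\sqrt{m/(2\cc{\t})}\, c_{m-1}$, which is the same recursion written one step back and so holds automatically. The two closed forms in \eqref{eq_op_herm} coincide via the identity $(\n(1-\vert\t\vert^2))^{m/2}/(2^m s^m) = \cc{\t}^{m/2}/2^{m/2}$, immediate from the definition of $s$.

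Induction on $m$ closes the argument: the base case $m=0$ is the normalization computed above, and $m=1$ follows because \eqref{eq_rec} at $m=0$ forces $\op{1}(z) = z\op{0}/\rn{1}$, which equals $H_1(sz)\, c_1 = 2sz\, c_1$ by the value of $c_1$; both sequences thereafter obey the same three-term recursion with the same coefficients, hence agree for all $m$. The only point requiring a bit of care, rather than any real obstacle, is a consistent choice of branch for $\sqrt{\cc{\t}}$ in the definitions of $s$ and of $c_m$: once the same branch is fixed throughout, the ratio $\cc{\t}^{m/2}/s^m$ is unambiguous and all identities above hold as stated.
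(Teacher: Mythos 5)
Your proof is correct and follows essentially the same route as the paper: compute $\op{0}$ from the Gaussian normalization, then substitute the Hermite ansatz into the three-term recursion \eqref{eq_rec}, invoke the classical identity $H_{m+1}-2xH_m+2mH_{m-1}=0$, and conclude by uniqueness of solutions to the recursion with prescribed initial data. The only presentational difference is that you match coefficients of $H_{m\pm1}$ separately while the paper carries out the telescoping algebra in one display, and you compute $\ip{1}{1}$ via the determinant of the quadratic form rather than completing the square explicitly; both are equivalent.
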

For the Hermite polynomial we use the definition as in \cite[p. 91]{courant-hilbert} or \cite[p. 431]{hille} where $H_m$ is given by the generating function
\begin{equation}
\e^{2z u-u^2}=\sum_{m=0}^\infty H_m(z) \frac{u^m}{m!}.
\end{equation}
For more information regarding Hermite polynomials see also \cite{sze}.

\begin{remark}
In the limit $\t\rightarrow 0$ the orthonormal polynomials $\op{m}$ converge to the monomials of Proposition \ref{prop_gauss_onp}.
\end{remark}

%For the Gaussian potential we don't need a cut-off as long as $\vert\t\vert<1$ which we will always assume. So we can choose $D=\C$ and have the orthonormal relation
%$$\int_\C \cc{\op{k}(z)}\op{l}(z) \e^{-\n V(z)} \dd^2 z=\delta_{kl}.$$

\subsection{Proof of Orthonormality}
%\subsubsection{via Recursion Relation}
\begin{proof}[Proof of Proposition \ref{prop_op}]
We begin with the normalization of $\op{0}$. For this we have to calculate $\hn{0}=\ip{1}{1}$. We write $z=x+i y$ and as $\Re\, t<1$, we get
\begin{equation}
\begin{split}
V(z)&=x^2 (1-\Re\, t)+y^2(1+\Re\, t)+2xy \Im\, t\\
&=(1-\Re\, \t)\left(x+\frac{y \Im\, \t}{1-\Re\, t}\right)^2+y^2 \frac{1-\vert t\vert^2}{1-\Re\, t}.
\end{split}
\end{equation}
So we find
\begin{equation}
\begin{split}
\ip{1}{1}&=\int_\C \e^{-\n V(z)}\dd^2 z=\int_{-\infty}^\infty \int_{-\infty}^\infty \e^{-\n (1-\Re\, \t)\left(x+\frac{y \Im\, \t}{1-\Re\, \t}\right)^2 } \dd x \e^{-\n y^2 \frac{1-\vert \t\vert^2}{1-\Re\, \t}}  \dd y \\
&=\frac{1}{\sqrt{\n (1-\Re\, \t)}}\int_{-\infty}^\infty \e^{-x^2 } \dd x \sqrt{\frac{1- \Re\, \t }{\n(1-\vert t\vert^2)}}\int_{-\infty}^\infty \e^{-y^2 } \dd y= \frac{\pi}{\n \sqrt{1-\vert t\vert^2}}.
\end{split}
\end{equation}
Thus we have $\op{0}=\sqrt{\tfrac{\n}{\pi}\sqrt{1-\vert\t\vert^2}}$.

Now we define the polynomials
\begin{equation}\label{eq_tilde_p}
\tilde{p}_m(z)=H_m(sz) \frac{\left(\n(1-\vert\t\vert^2)\right)^{\frac{m}{2}} \op{0}}{2^m s^m \sqrt{m!}}, \qquad m\in\N,z\in \C,
\end{equation}
where $s$ is defined as in the proposition. We set $\tilde{p}_{-1}=0$. $\tilde{p}_0$ equals $\op{0}$ since $H_0=1$. We are going to check that $\tilde{p}_m$ fulfill the same recursion relation \eqref{eq_rec} as the orthonormal polynomials $\op{m}$. Together with $\tilde{p}_0=\op{0}$, this will prove the proposition.

For the Hermite polynomials we have the recursion relation (see \cite[p. 92]{courant-hilbert})
\begin{equation}\label{eq_hermite_rec}
H_{m+1}(z)-2z H_m(z)+2m H_{m-1}(z)=0, \qquad\forall m\in \N_0,
\end{equation}
where we have set $H_{-1}=0$. We fix $m\in\N_0$ and use \eqref{eq_tilde_p} for $\tilde{p}_m$ to get
\begin{equation}
\begin{split}
z \tilde{p}_m(z)&=2s z H_m(s z) \frac{\left(\n(1-\vert\t\vert^2)\right)^{\frac{m}{2}} \op{0}}{2^{m+1} s^{m+1} \sqrt{m!}}\\
&=\bigl(H_{m+1}(sz)+2m H_{m-1}(sz) \bigr) \frac{\left(\n(1-\vert\t\vert^2)\right)^{\frac{m}{2}} \op{0}}{2^{m+1} s^{m+1} \sqrt{m!}} \\
&=\frac{\sqrt{m+1}}{\sqrt{\n(1-\vert\t\vert^2)}} H_{m+1}(sz) \frac{\left(\n(1-\vert\t\vert^2)\right)^{\frac{m+1}{2}} \op{0}}{2^{m+1} s^{m+1} \sqrt{(m+1)!}} \\
&\phantom{=}+\frac{\sqrt{m \n(1-\vert\t\vert^2)}}{2s^2} H_{m-1}(sz) \frac{\left(\n(1-\vert\t\vert^2)\right)^{\frac{m-1}{2}} \op{0}}{2^{m-1} s^{m-1} \sqrt{(m-1)!}}\\
&=\rn{m+1} \tilde{p}_{m+1}(z)+\cc{t} \rn{m} \tilde{p}_{m-1}(z),
\end{split}
\end{equation}
where we have used \eqref{eq_hermite_rec} in the second line. As $m$ was arbitrary, $\tilde{p}_m$ fulfills the recursion relation \eqref{eq_rec} for all $m\in \N_0$, which proves the proposition.
\end{proof}

%\subsubsection{Direct Proof}
% from integral and Hermite polynmoial relations?

A more direct proof, without using the recursion relation, for a similar problem can be found in \cite{eijndhoven}. 

\subsection{Relation for the Derivative of $\op{m}$}
\begin{prop}\label{prop_d_op} 
The orthonormal polynomials $\op{m}$ fulfills the following relation.
\begin{equation}\label{eq_op_der}
\frac{\dd \op{m}}{\dd z}(z)=\n \rn{m}(1-\vert \t\vert^2) \op{m-1}(z)=\sqrt{\n m(1-\vert t\vert^2)}\op{m-1}(z), \qquad \forall m\in \N_0.
\end{equation}
\end{prop}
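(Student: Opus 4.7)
The plan is to prove Proposition \ref{prop_d_op} by direct differentiation, using the explicit expression for $\op{m}$ in terms of Hermite polynomials obtained in Proposition \ref{prop_op}, together with the standard derivative identity for Hermite polynomials. Since the orthonormal polynomials are already shown to be rescaled Hermite polynomials, this reduces the differential relation we want to a bookkeeping calculation involving $s$ and the factorial/power normalizations.

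First I would recall the derivative identity for the physicist's Hermite polynomials, namely $H_m'(z)=2m H_{m-1}(z)$. This identity follows at once from the generating function $\e^{2zu-u^2}=\sum_{m\ge 0} H_m(z) u^m/m!$ used in the paper: differentiating in $z$ gives $2u\e^{2zu-u^2}=\sum_{m\ge 0} H_m'(z) u^m/m!$, and comparing coefficients of $u^m$ yields the claimed relation. This identity is the only nontrivial input needed.

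Next I would apply the chain rule to the formula
\begin{equation*}
\op{m}(z)=H_m(sz)\frac{\cc{t}^{m/2}\op{0}}{2^{m/2}\sqrt{m!}}
\end{equation*}
with $s=\sqrt{\n(1-\vert t\vert^2)/(2\cc{t})}$, obtaining
\begin{equation*}
\frac{\dd\op{m}}{\dd z}(z)=s\cdot 2m\, H_{m-1}(sz)\frac{\cc{t}^{m/2}\op{0}}{2^{m/2}\sqrt{m!}}.
\end{equation*}
Then I would re-express the right-hand side in terms of $\op{m-1}(z)=H_{m-1}(sz)\cc{t}^{(m-1)/2}\op{0}/(2^{(m-1)/2}\sqrt{(m-1)!})$. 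The ratio of prefactors reduces, via $m/\sqrt{m!}=\sqrt{m}/\sqrt{(m-1)!}$, to $s\sqrt{2m\cc{t}}$, and squaring one checks $s^2\cdot 2m\cc{t}=\n m(1-\vert t\vert^2)$, so the coefficient becomes $\sqrt{\n m(1-\vert t\vert^2)}$, which is exactly $\n\rn{m}(1-\vert t\vert^2)$ by the definition \eqref{eq_rm} of $\rn{m}$. This yields both forms stated in \eqref{eq_op_der}.

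There is no real obstacle here; the statement is essentially a rewriting of the Hermite derivative identity under the rescaling introduced in Proposition \ref{prop_op}. The only point requiring mild care is to track the $\cc{t}$-dependence in $s$ correctly so that the final coefficient is manifestly real and positive (which matches the fact that $\rn{m}$ is real), using the identity $s\sqrt{2\cc{t}}=\sqrt{\n(1-\vert t\vert^2)}$.
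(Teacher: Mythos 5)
Your argument is correct and follows essentially the same route as the paper: differentiate the explicit formula from Proposition~\ref{prop_op} and apply the Hermite identity $H_m'(z)=2m H_{m-1}(z)$. The one place your write-up differs is in the choice of which of the two equivalent formulas from \eqref{eq_op_herm} to differentiate. You use the second form $H_m(sz)\,\cc{t}^{m/2}\op{0}/(2^{m/2}\sqrt{m!})$, which produces an intermediate coefficient $s\sqrt{2m\cc{t}}$ where $s$ and $\sqrt{\cc{t}}$ are both genuinely complex; you then have to argue separately (as you correctly flag) that $s\sqrt{2\cc{t}}=\sqrt{\n(1-\vert t\vert^2)}$ under the standard branch. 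The paper instead differentiates the first form $H_m(sz)\,\bigl(\n(1-\vert\t\vert^2)\bigr)^{m/2}\op{0}/(2^m s^m\sqrt{m!})$: the chain-rule factor $s$ cancels one power from $s^m$, and the remaining ratio $\bigl(\n(1-\vert\t\vert^2)\bigr)^{m/2}/\bigl(\n(1-\vert\t\vert^2)\bigr)^{(m-1)/2}$ is the square root of a manifestly positive real number, so no branch-of-$\sqrt{\cdot}$ bookkeeping arises at all. Both versions prove the same thing; the paper's choice of form just makes the constant-chasing uniformly harmless, which is worth keeping in mind if you ever want the statement for complex $\t$ near the negative real axis.
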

\begin{proof}
Let $m\in\N_0$. For the Hermite polynomials we have the relation (see \cite[p. 431]{hille}) %or hilbert-courant, p.? 
\begin{equation}
H_m'(z)=2 m H_{m-1}(z).
\end{equation}
Using this relation for the $\op{m}$ from Proposition \ref{prop_op}, we get 
\begin{equation}
\frac{\dd \op{m}}{\dd z}(z)=H_{m-1}(sz) \frac{\sqrt{m}\left(\n(1-\vert\t\vert^2)\right)^{\frac{m}{2}} \op{0}}{2^{m-1} s^{m-1} \sqrt{(m-1)!}}=\sqrt{\n m(1-\vert t\vert^2)}\op{m-1}(z).
\end{equation}
\end{proof}
\begin{remark}
Another proof of such a relation can be found in \cite{elbau} for orthonormal polynomial for more general potentials $V$.
\end{remark}

\subsection{Zeros of the Orthonormal Polynomials}\label{sec_zeros}
\begin{prop}\label{prop_zeros}
Let $m\in \N$, $\F=\sqrt{\frac{4 \cc{\t}}{1-\vert \t\vert^2}}$ and $\F_m=\F \sqrt{\frac{m+\frac{1}{2}}{\n}}$. Then all of the $m$ zeros of $\op{m}$ are simple and lie in the interval $[-\F_m,\F_m]$ in the complex plane.
\end{prop}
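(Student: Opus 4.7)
The plan is to reduce everything to classical facts about Hermite polynomials via Proposition \ref{prop_op}.

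First, by Proposition \ref{prop_op} we can write
\[
\op{m}(z) = C_m\, H_m(sz), \qquad s=\sqrt{\tfrac{\n(1-|\t|^2)}{2\cc{\t}}},
\]
where $C_m$ is a nonzero constant (it depends on $m$, $\n$, $\t$ but not on $z$). Therefore the zeros of $\op{m}$ are exactly the points $z_k = x_k/s$, where $x_1,\dots,x_m$ denote the zeros of $H_m$. So the question reduces to two well-known facts about the Hermite polynomial $H_m$: (i) $H_m$ has $m$ simple zeros, all real; and (ii) those zeros are bounded in absolute value by $\sqrt{2m+1}$. Fact (i) is the standard property of orthogonal polynomials with respect to $e^{-x^2}\dd x$ on $\R$, and fact (ii) is a classical Sturm-comparison bound on the largest zero of $H_m$ (Szegő); I would just cite both from \cite{sze}.

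Once these are in hand, all that remains is a small algebraic check: the complex segment $[-\F_m,\F_m]$ is precisely the image of $[-\sqrt{2m+1},\sqrt{2m+1}]$ under $x\mapsto x/s$. Indeed,
\[
\frac{\sqrt{2m+1}}{s} = \sqrt{\tfrac{2(2m+1)\cc{\t}}{\n(1-|\t|^2)}} = \sqrt{\tfrac{4\cc{\t}}{1-|\t|^2}}\sqrt{\tfrac{m+\tfrac{1}{2}}{\n}} = \F\sqrt{\tfrac{m+\tfrac{1}{2}}{\n}} = \F_m
\]
up to the choice of branch of the square root; but the line segment $[-\F_m,\F_m]$ is symmetric under negation, so the branch is immaterial. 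Combining this with (i) and (ii) then gives that the $m$ zeros of $\op{m}$ are simple and all lie in $[-\F_m,\F_m]$.

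There is no real obstacle here: the proof is a one-line corollary of the Hermite realization in Proposition \ref{prop_op} together with classical facts. The only thing one must be a bit careful about is that $s$ and $\F$ are complex, so the ``interval'' $[-\F_m,\F_m]$ is a segment of the complex line $s^{-1}\R$, and one should verify (as above) that the parametrization matches rather than appeal to a real-variable picture.
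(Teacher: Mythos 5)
Your proof is correct and follows essentially the same route as the paper's: both reduce to the Hermite realization from Proposition \ref{prop_op}, invoke the classical facts that $H_m$ has $m$ simple real zeros bounded by $\sqrt{2m+1}$, and then compare $\sqrt{2m+1}/s$ with $\F_m$. Your explicit remark about the branch of the complex square root being immaterial because the segment is symmetric under negation is a small clarification the paper handles implicitly, but it is the same argument.
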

\begin{proof}
$\op{m}(z)$ can only be zero when $H_m(sz)$ is zero, where $s\neq 0$ is like in Proposition \ref{prop_op}. The zeros of the Hermite polynomials are all simple and lie on the real axis (see \cite{sze}). Let $x_1<x_2< \cdots <x_m\in\R$ be the $m$ zeros of $H_m$. Because $H_m$ is a even or odd function we have that $x_k=-x_{m-k}$ for $k=1,\ldots,m$. There is an upper bound for the largest positive zero $x_m<\sqrt{2m+1}$ (see \cite[p. 129]{sze}). 

It follows that the zeros of $\op{m}$ lie at
\begin{align}
z_k&=s^{-1} x_k=\sqrt{\tfrac{2\cc{\t}}{\n (1-\vert t\vert^2)} } x_k, \qquad k=1,\ldots,m, 
\intertext{so they all lie on a line in the complex plane turned against the real line by the phase of $\sqrt{\cc{\t}}$. And we find for }
\vert z_m\vert &=\vert s^{-1} x_m \vert < \left\vert \sqrt{\tfrac{2\cc{\t}}{\n (1-\vert t\vert^2)} } \right\vert \sqrt{2m+1}=\vert \F\vert \sqrt{\tfrac{m+\frac{1}{2}}{\n}}.
\end{align}
Therefore $z_1,\ldots,z_m\in [-\F_m,\F_m]$.
\end{proof}

\begin{remark}
Note that there are better bounds for the largest zeros of the Hermite polynomials.
\end{remark}

\begin{remark}
All zeros of the orthonormal polynomials $(\op{k})_{k=1}^{\n}$ lie in $[-\F_\n,\F_\n]$ and $\F_\n$ tends to $\F$ when $\n\rightarrow \infty$.
\end{remark}

\begin{remark}
If $\phi\in (-\pi,\pi]$ is the argument of the parameter $\t$ then the argument of the zeros $z_1,\ldots,z_k$ is $-\tfrac{\phi}{2}$.
\end{remark}

\begin{remark}
That the zeros of $\op{m}$ are simple and lie on a line, follows also directly from the recursion relation \eqref{eq_rec}. For $m\ge2$ it shows further that between two neighboring zeros $z_k$ and $z_{k+1}$ (where $1\le k\le m-1$) lies exactly one zero of $\op{m-1}$.
\end{remark}

\subsection{Ellipse}
\begin{defin}
For $\vert \t \vert<1$ we define
\begin{align}
\mas=\sqrt{\tfrac{1+\vert \t\vert}{1-\vert\t\vert}}, \qquad \mis=\sqrt{\tfrac{1-\vert \t\vert}{1+\vert\t\vert}}, \qquad \text{and} \qquad \F=2\sqrt{\tfrac{\cc{\t}}{1-\vert \t\vert^2}}. \nn
\end{align}
\end{defin}

% TO BE DONE
%ausführlicher Diskutieren auch für den komplexen Fall, Winkel rotiert mit Hälfte der phase von t
%evtl. noch Def. $\overline{E}_t, \overline{E}^C_t$
%Betonen, dass man sich von nun an nur noch auf den reellen Fall betrachtet und dabei auf t>0 konzentriert. Der Fall t<0 folgt ähnlich, wobei die Rollen der komlexen Achse und die rellen Achse vertauscht sind. Im allgemeinen Fall ist es einfach eine Drehung in der komplexen Ebene. Wie die Theoreme noch richtig sind, wird im Kap. "komlexes t" diskutiert.
%Limes t->0, Entartung, Kreis, Hermite Polynome -> Monome (vgl. Gaussian Case)

%Zuerst Ellipse definieren, mit Mathematica Kontrollieren, dass wie sie dreht und mit Peter's Ellipse vergleichen!
%Neu schreiben alles mit Definitionen!!! für \mis, \mas, pEt, etc.
We know from calculation of the equilibrium measure (see \cite{felder-elbau}) --- which can be done explicitly in the Gaussian case --- that in the limit $\n\rightarrow\infty$ the eigenvalues will fill an ellipse with constant density, at least in a weak sense of convergence. The major and minor semi-axes of the ellipse are $\mas=\sqrt{\tfrac{1+\vert \t\vert}{1-\vert\t\vert}}$ and $\mis=\sqrt{\tfrac{1-\vert \t\vert}{1+\vert\t\vert}}$ respectively. So the foci $\pm \F$ lie at $\pm 2\sqrt{\tfrac{\cc{\t}}{1-\vert \t\vert^2}}$. Note that this $\F$ corresponds to the $\F$ we have already introduced in Proposition \ref{prop_zeros}, which therefore claims that the zeros of the orthonormal polynomials lie on the interval between the foci. 

In the following we concentrate on the case $\t>0$ so that the foci lie on the real axis. When $\t<0$ and the foci lie on the imaginary axis, the computations in the following sections could be done analogously.   %TO BE DONE, t complex wird im Kaptel...behandelt!!!
\begin{defin}
For $0<t<1$ we will call the inside of the ellipse
\begin{align}
\Et&=\left\{z\in \C \left\vert \tfrac{\left(\Re\, z\right)^2}{\mas^2} +\tfrac{\left(\Im\, z \right)^2}{\mis^2}<1 \right. \right\},
\intertext{and the ellipse itself}
\pEt&=\left\{z\in \C \left\vert \tfrac{\left(\Re\, z\right)^2}{\mas^2} +\tfrac{\left(\Im\, z\right)^2}{\mis^2}=1 \right. \right\}.
\end{align}
\end{defin}
%Bemerkung über reelle t, Folgen für orthonormale Polynome, relle Koeffizienten, Konjugiert, Vorzeichen, Folgen für Kernel (evtl. alles auch später, am Anfang des nächsten Kapitels)

%wenn not other mentioned 0<t<1.
% Erwähnung dass Komplexes t in Chapter ... folgt!

%TO BE DONE
% später !!!
%As we know from [Elbau-Felder] that the density will be constant inside and outside of the ellipse we expect the derivatives to converge to zero as $\n\rightarrow\infty$ except on the ellipse itself where it should have a singularity. If we write
%$$-\frac{\partial \rhon}{\partial x}(x,y)=\e^{\n \f_\n(x+iy)}$$ we will find that $\f(z)=\lim_{\n\rightarrow\infty}\f_\n(z)$ is a finite function. In Proposition \ref{prop_asym} we are going to calculate $\f$.

%\section{Identities for the Kernel of Gaussian potentials}\label{sec_ident}
\section{Identities for the Kernel of General Gaussian potentials}\label{sec_ident}
\sectionmark{Identities for the Kernel}

For simplicity we will now always assume that $\t$ is a real parameter. Then it is obvious from the recursion relation \eqref{eq_rec} or from the explicit formula \eqref{eq_op_herm} that the orthogonal polynomials have all real coefficients. 

\subsection{Identities for Unnormalized Kernel}
For simplicity of the calculations in the following sections we define the new quantity $\KM$ instead of the kernel $\Kn$ from Definition \ref{def_kernel}. 
\begin{defin}
Let $\n\in\N$. Then
\begin{equation}
\KM(w,z)=\frac{1}{\n}\sum_{m=0}^{\n-1} \op{m}(w)\op{m}(z), \qquad \forall w,z\in \C.
\end{equation}
\end{defin}
In contrast to $\Kn$, the new quantity $\KM$ comes without the complex conjugation of $\op{k}(w)$ in the sum defining the kernel. % Evtl. Bemerkung über Division durch $\n, normalizierung auf 1.
% $\Kn$ from Definition \ref{def_kernel}. So we define a new quantity $\KM$.

\begin{lemma}\label{lemma_km}
Let $\n\in\N$ and $w,z\in \C$. Then $\KM$ fulfills the identities
\begin{align}
\MoveEqLeft\frac{\partial \KM}{\partial w} (w,z)\pm\frac{\partial \KM}{\partial z}(w,z)\mp\n (1\mp\t)(w\pm z) \KM(w,z) \nn \\
&=\mp\sqrt{\frac{1\mp\t}{1\pm\t}}\bigl(\op{\n}(w) \op{\n-1}(z)\pm\op{\n-1}(w) \op{\n}(z)\bigr). \label{eq_id_km1}
\end{align}
\end{lemma}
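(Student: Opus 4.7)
The plan is to prove the identity by direct computation: expand each term on the left-hand side in the basis of products $\op{m}(w)\op{m'}(z)$, then show that the bulk of the sum telescopes so that only the $m=\n$ boundary term survives.

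First I would compute the partial derivatives of $\KM$. Using the derivative relation from Proposition \ref{prop_d_op} (which gives $\op{m}{}'(z) = \sqrt{\n m (1-\t^2)}\,\op{m-1}(z)$) and the identity $\sqrt{m/\n} = \rn{m}\sqrt{1-\t^2}$, a short manipulation yields
\begin{equation}
\frac{\partial \KM}{\partial w}(w,z) \pm \frac{\partial \KM}{\partial z}(w,z) = (1-\t^2)\sum_{m=1}^{\n-1}\rn{m}\bigl(\op{m-1}(w)\op{m}(z) \pm \op{m}(w)\op{m-1}(z)\bigr).
\end{equation}
Denote the $\pm$ combination under the sum by $S_m^{\pm}(w,z)$.

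Second, I would use the recursion relation \eqref{eq_rec} (noting that $\cc{\t}=\t$ since $\t$ is real) to expand
\begin{equation}
(w\pm z)\,\op{m}(w)\op{m}(z) = \rn{m+1}\bigl(\op{m+1}(w)\op{m}(z) \pm \op{m}(w)\op{m+1}(z)\bigr) + \t\,\rn{m}\bigl(\op{m-1}(w)\op{m}(z) \pm \op{m}(w)\op{m-1}(z)\bigr),
\end{equation}
which can be written as $\pm\rn{m+1}S_{m+1}^{\pm} + \t\,\rn{m}S_m^{\pm}$ for the upper sign, and with the appropriate sign flip for the lower sign (since $\op{m+1}(w)\op{m}(z)-\op{m}(w)\op{m+1}(z)=-S_{m+1}^{-}$). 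Summing $m$ from $0$ to $\n-1$ and shifting indices then gives an expression for $\n(w\pm z)\KM(w,z)$ as a linear combination of the same $S_m^{\pm}$ plus a boundary term $\rn{\n}S_{\n}^{\pm}$.

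Third, I would form the combination $\frac{\partial \KM}{\partial w} \pm \frac{\partial \KM}{\partial z} \mp \n(1\mp\t)(w\pm z)\KM(w,z)$. The coefficient in front of each interior $\rn{m}S_m^{\pm}$ for $1\le m \le \n-1$ should collapse: I expect the $(1-\t^2)$ from the derivatives to combine with $(1\mp\t)(1\pm\t)=1-\t^2$ from the recursion contribution, leaving only the boundary term proportional to $\rn{\n}S_\n^{\pm}$. Finally, computing $(1\mp\t)\rn{\n} = (1\mp\t)/\sqrt{1-\t^2} = \sqrt{(1\mp\t)/(1\pm\t)}$ reproduces the coefficient on the right-hand side.

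The main obstacle is purely bookkeeping: one must track the two simultaneous sign choices carefully (especially in the antisymmetric lower-sign case, where $\op{m+1}(w)\op{m}(z)-\op{m}(w)\op{m+1}(z)$ is $-S_{m+1}^{-}$, not $+S_{m+1}^{-}$), and verify that the signs conspire to produce exactly the claimed $\mp$ on the right. No estimation or asymptotic input is needed; the statement is an algebraic consequence of the three-term recursion and the lowering operator for $\op{m}$.
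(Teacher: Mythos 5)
Your proposal is correct and relies on the same two algebraic inputs as the paper: the three-term recursion \eqref{eq_rec} and the derivative relation from Proposition \ref{prop_d_op}. The presentations differ in direction. The paper packages the telescoping step into a separate one-step reduction (Lemma \ref{lemma_op_rel}), which rewrites the boundary expression at level $k$ as the boundary expression at level $k-1$ plus a contribution to the sum defining $\KM$ and its derivatives; it then iterates from $k=\n$ down to $k=1$. You instead expand each of the three pieces on the left-hand side of \eqref{eq_id_km1} directly in the basis $S_m^{\pm}$ and observe that after forming the combination the interior sums over $m$ cancel because $(1-\t^2)$ from the derivative relation matches $(1\mp\t)(1\pm\t)$ from the recursion, leaving only the boundary term $\rn{\n}S_{\n}^{\pm}$; the closing step $(1\mp\t)\rn{\n}=\sqrt{(1\mp\t)/(1\pm\t)}$ produces the stated coefficient. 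Your bottom-up telescoping avoids the auxiliary lemma and is arguably more direct, but the underlying computation is the same; the sign bookkeeping you flag (in particular $\op{m+1}(w)\op{m}(z)-\op{m}(w)\op{m+1}(z)=-S_{m+1}^{-}$) does close correctly as you anticipate.
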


For real $\t$, analogous identities in terms of $\Kn$ are given as $\KM$ is related to $\Kn$ by
\begin{equation}
\KM(\cc{w},z)=\frac{\Kn(w,z)}{\n},
\end{equation}
since $\cc{\op{m}(z)}=\op{m}(\cc{z})$ if $\t\in\R$.

\subsection{Identities for Normalized Kernel}
For the normalized kernel $\nK$ too, we define a new quantity $\HM$. %Kann man vielleicht Pre-Kernel nennen?
\begin{defin}\label{def_hm}
Let $\n\in\N$. Then we define the ``pre-kernel'' $\HM$ as following.
\begin{equation}
\HM(w,z)=\e^{\n\left(-w z+\frac{\t w^2}{2}+\frac{\t z^2}{2}\right)} \KM =\frac{1}{\n}\e^{\n\left(-w z+\frac{\t w^2}{2}+\frac{\t z^2}{2}\right)}\sum_{m=0}^{\n-1} \op{m}(w)\op{m}(z), \qquad \forall w,z\in\C.
\end{equation}
\end{defin}
As $\t$ is real, $\HM$ is related to $\nK$ by
\begin{equation}\label{eq_HM_tKn}
\begin{split}
\HM(\cc{w},z)&=\frac{\nK(w,z)}{\n} e^{\n\left(\frac{\vert w \vert^2+\vert z \vert^2}{2}-\cc{w}z+\frac{\t\left(\cc{w}^2-w^2+z^2-\cc{z}^2\right)}{4}\right)}\\
&= \frac{\nK(w,z)}{\n}e^{\frac{\n}{2}\left(\vert w-z\vert^2 +w\cc{z}-\cc{w}z\right)+\frac{\n\t}{4}\left((z+w)(z-w)+(\cc{w}+\cc{z})(\cc{w}-\cc{z})\right)}. 
\end{split}
\end{equation}

\begin{prop}\label{prop_id}
Let $\n\in\N$ and $w,z\in\C$. Then $\HM$ fulfills the identities
\begin{align}
\frac{\partial \HM}{\partial w} (w,z)+\frac{\partial \HM}{\partial z} (w,z) &=-\sqrt{\frac{1-\t}{1+\t}}\bigl(\op{\n}(w) \op{\n-1}(z)+\op{\n-1}(w) \op{\n}(z)\bigr)\e^{\n\left(-w z+\frac{\t w^2}{2}+\frac{\t z^2}{2}\right)}, \label{eq_id1}
\intertext{and}
\frac{\partial \HM}{\partial w} (w,z)-\frac{\partial \HM}{\partial z} (w,z)&=\sqrt{\frac{1+\t}{1-\t}}\bigl(\op{\n}(w) \op{\n-1}(z)-\op{\n-1}(w) \op{\n}(z)\bigr)\e^{\n\left(-w z+\frac{\t w^2}{2}+\frac{\t z^2}{2}\right)}. \label{eq_id2}
\end{align}
\end{prop}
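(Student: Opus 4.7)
The plan is to derive Proposition \ref{prop_id} directly from Lemma \ref{lemma_km} by a product-rule computation. Write $\phi(w,z) = \n\bigl(-wz + \tfrac{\t}{2} w^2 + \tfrac{\t}{2} z^2 \bigr)$, so that by Definition \ref{def_hm} we have $\HM(w,z) = \e^{\phi(w,z)} \KM(w,z)$. The two identities in \eqref{eq_id1}--\eqref{eq_id2} correspond to the two sign choices in Lemma \ref{lemma_km}, so I would handle them in parallel using the $\pm$ notation.

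First I would differentiate the relation $\HM = \e^\phi \KM$ to obtain
\begin{equation*}
\frac{\partial \HM}{\partial w}\pm \frac{\partial \HM}{\partial z} = \left(\frac{\partial \phi}{\partial w}\pm \frac{\partial \phi}{\partial z}\right) \HM + \e^\phi \left(\frac{\partial \KM}{\partial w}\pm \frac{\partial \KM}{\partial z}\right).
\end{equation*}
A direct computation gives $\partial_w \phi = \n(-z+\t w)$ and $\partial_z \phi = \n(-w+\t z)$, hence the prefactor of $\HM$ equals $-\n(1-\t)(w+z)$ in the $+$ case and $\n(1+\t)(w-z)$ in the $-$ case. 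These are precisely $\mp\n(1\mp\t)(w\pm z)$, which matches the coefficient appearing in Lemma \ref{lemma_km}.

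Next I would substitute into the displayed equation the identity from Lemma \ref{lemma_km}, solved for $\partial_w \KM \pm \partial_z \KM$:
\begin{equation*}
\frac{\partial \KM}{\partial w}\pm \frac{\partial \KM}{\partial z} = \pm\n(1\mp\t)(w\pm z)\KM \mp \sqrt{\tfrac{1\mp\t}{1\pm\t}}\bigl(\op{\n}(w)\op{\n-1}(z)\pm \op{\n-1}(w)\op{\n}(z)\bigr).
\end{equation*}
After multiplying by $\e^\phi$ and using $\e^\phi \KM = \HM$, the term $\pm\n(1\mp\t)(w\pm z)\e^\phi\KM$ exactly cancels the contribution $\mp\n(1\mp\t)(w\pm z)\HM$ coming from $\partial_w\phi \pm \partial_z\phi$. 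What remains is precisely the right-hand side of \eqref{eq_id1} in the $+$ case and \eqref{eq_id2} in the $-$ case.

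There is essentially no obstacle here beyond bookkeeping of signs: the only nontrivial input is Lemma \ref{lemma_km}, and the whole content of the proposition is that conjugating by the exponential prefactor $\e^{\phi}$ absorbs the $\KM$-term on the left-hand side of that lemma, leaving only the boundary contributions involving $\op{\n}$ and $\op{\n-1}$. The mild care required is to keep the factor $\sqrt{(1\mp\t)/(1\pm\t)}$ straight for the two signs and to verify that the coefficient of $\HM$ produced by the chain rule matches, with the correct sign, the coefficient of $\KM$ in Lemma \ref{lemma_km}, which the computation above confirms.
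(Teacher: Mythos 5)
Your proof is correct and takes essentially the same approach as the paper: both derive Proposition \ref{prop_id} by differentiating $\HM = \e^\phi \KM$ via the product rule, observing that $\partial_w\phi \pm \partial_z\phi = \mp\n(1\mp\t)(w\pm z)$ exactly reproduces the $\KM$-coefficient in Lemma \ref{lemma_km}, and then invoking that lemma to absorb the surviving $\KM$-term. The sign bookkeeping you carry out checks out; the paper merely presents the same computation more compactly without spelling out the chain-rule step.
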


\begin{remark}
The identities in terms of $\nK$ can be found by the relation \eqref{eq_HM_tKn} between $\HM$ and $\nK$ for real $\t$. We abstain from stating them explicitly as it is simpler to do the following calculation for $\HM$ instead of $\nK$. Only at the end, in section \ref{subsec_asym_kernel}, having found the final result for $\HM$, we will turn back to $\nK$.
\end{remark}

\begin{remark}
$\HM(w,z)$ is a holomorphic function in both variables.
\end{remark}

\begin{remark}
Note that for real $\t$ the density is given by $\rho_\n(z)=\HM(\cc{z},z)$.
\end{remark}

\subsection{Similarity to Christoffel-Darboux Formula}
Here we want to state some facts about Hermite random matrix models and their orthonormal polynomials on the real line. Then we want to compare the Christoffel-Darboux formula to the identities we have found.

The polynomials $q_k$, $k\in\N_0$, of order $k$ with leading coefficient one which are orthogonal regarding the inner product
\begin{equation}\label{eq_inner_product_herm}
\ip{q_k}{q_l}_{\mathrm{Herm}}=\int_{-\infty}^\infty q_k(x) q_l(x) w(x)\dd x,
\end{equation}
we will call the orthogonal polynomial associated to the weight function $w$ which has to be so that the integral in \eqref{eq_inner_product_herm} converges. As in our case the weight function is typically of the form $w(x)=\e^{-n V(x)}$ for a real-valued potential $V$ (see for example \cite{bi}). Let
\begin{align}
h_k&=\ip{q_k}{q_k}_{\mathrm{Herm}}, \qquad k\in\N_0.
\shortintertext{Then we can define the orthonormal polynomial $p_k$ associated to the weight function $w$ by}
p_k(x)&=\frac{q_k(x)}{\sqrt{h_k}}.
\end{align}

Then there exist some coefficients $A_k, B_k, C_k\in \R$, $k\in\N$, so that the orthonormal polynomials fulfill the recurrence relation (see \cite{sze})
\begin{equation}\label{eq_rec_hermite}
p_k(x)=(A_k x+ B_k)p_{k-1}(x)-C_k p_{k-2}(x), \qquad \forall k\in \N, x\in\R,
\end{equation}
where we have set $p_{-1}=0$. $A_k$ and $C_k$ are given by
\begin{align}
A_k&=\sqrt{\frac{h_{k-1}}{h_k}}>0,\qquad \text{and}\qquad C_k=\frac{h_{k-1}}{\sqrt{h_k h_{k-2}}}, \qquad k\in \N, \nn
\end{align}
and $B_k=0$ for all $k\in\N$ if $w$ is an even function.

The Christoffel-Darboux formula follows from the recurrence relation (see \cite{sze})
\begin{equation}
\sum_{k=0}^{\n-1} p_k(x) p_k(y)=\sqrt{\frac{h_\n}{h_{\n-1}}} \frac{p_\n(x)p_{\n-1}(y)-p_{\n-1}(x)p_\n(y)}{x-y}, \qquad \forall \n\in\N,x,y\in\R,
\end{equation}
where the left-hand side, $\sum_{k=0}^{\n-1} p_k(x) p_k(y)$, is the unnormalized reproducing kernel for the Hermitian case analogous to Definition \ref{def_kernel}. With the help of this formula one can find the asymptotics of the kernel by the asymptotics of the orthonormal polynomials (see \cite{bi}).

Our identities \eqref{eq_id_km1}, \eqref{eq_id1} and \eqref{eq_id2} resemble the Christoffel-Darboux formula in the sense that on the right-hand side only orthonormal polynomials of order $\n$ and $\n-1$ are involved and the left-hand side has a relation to the reproducing kernel. While the Christoffel-Darboux formula in the Hermitian case gives directly the reproducing kernel, our identities only give a relation for the derivative of the reproducing kernel. 

%We will proceed as following. In Section \ref{sec_pr} we will look at the asymptotics of the Hermite polynomials which we need for our orthonormal polynomials. With the help of this we are going to study the asymptotics of the identities in Chapter \ref{sec_asymp}. Then we investigate the properties of this asymptotics in Section \ref{sec_f_g}. Next we reconstruct $\HM(w,z)$ for $\cc{w}=z$ in Chapter \ref{sec_density} and finally we will find the asymptotics of the reproducing kernel in Chapter \ref{ch_kernel_corr}.

\subsection{Limit $\t\rightarrow 0$}
As the orthonormal polynomials converge %pointwise
to the monomials of Proposition \ref{prop_gauss_onp} in the limit $\t\rightarrow 0$, we see that also identities \eqref{eq_id1} and \eqref{eq_id2} converge to the corresponding identities \eqref{eq_id_0} of the pure Gaussian case $\t=0$ as expected. 

\subsection{Proof of Identities}
\begin{lemma}\label{lemma_op_rel}
For all $k\in\N$ the orthonormal polynomials $\op{k}$ fulfill the relations
\begin{align}
\MoveEqLeft[3]\mp\sqrt{\frac{k}{\n}} \sqrt{\frac{1\mp\t}{1\pm\t}}\bigl(\op{k}(w) \op{k-1}(z)\pm\op{k-1}(w) \op{k}(z)\bigr)\nn\\
= {} &\mp\sqrt{\frac{k-1}{\n}} \sqrt{\frac{1\mp\t}{1\pm\t}}\bigl(\op{k-1}(w) \op{k-2}(z)\pm \op{k-2}(w) \op{k-1}(z)\bigr)\nn\\
& \mp(1\mp\t)(w\pm z) \op{k-1}(w)\op{k-1}(z)+\frac{1}{\n} \bigl( \dop{k-1}(w)\op{k-1}(z) \pm \op{k-1}(w)\dop{k-1}(z)\bigr), \label{eq_lemma_op_rel1}
\end{align}
where we have set $\op{-1}=0.$
\end{lemma}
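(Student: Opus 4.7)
The identity is essentially a rewriting of the three-term recurrence \eqref{eq_rec} after combining it with the derivative formula \eqref{eq_op_der}. The plan is to start from the right-hand side, eliminate the derivatives in favor of $\op{k-2}$, apply the recurrence to $w\op{k-1}(w)$ and $z\op{k-1}(z)$, and then collect terms in the two natural bilinears $\op{k}(w)\op{k-1}(z)\pm\op{k-1}(w)\op{k}(z)$ and $\op{k-1}(w)\op{k-2}(z)\pm\op{k-2}(w)\op{k-1}(z)$.

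Concretely, I would proceed in three short steps. First, use Proposition \ref{prop_d_op} in the form $\dop{k-1}(\zeta)=\n\rn{k-1}(1-\t^2)\op{k-2}(\zeta)$ to rewrite
\[
\tfrac{1}{\n}\bigl(\dop{k-1}(w)\op{k-1}(z)\pm\op{k-1}(w)\dop{k-1}(z)\bigr)=\rn{k-1}(1-\t^2)\bigl(\op{k-2}(w)\op{k-1}(z)\pm\op{k-1}(w)\op{k-2}(z)\bigr).
\]
Second, apply \eqref{eq_rec} to both variables to expand
\[
(w\pm z)\op{k-1}(w)\op{k-1}(z)=\rn{k}\bigl(\op{k}(w)\op{k-1}(z)\pm\op{k-1}(w)\op{k}(z)\bigr)\pm\t\rn{k-1}\bigl(\op{k-2}(w)\op{k-1}(z)\pm\op{k-1}(w)\op{k-2}(z)\bigr),
\]
using that swapping the order of the $\op{k-2}$ and $\op{k-1}$ factors multiplies the bilinear by $\pm1$ (the top sign is symmetric, the bottom sign is antisymmetric, which is exactly why the combination $\pm\t\rn{k-1}$ appears on the right here and not $\mp\t\rn{k-1}$ or the ``swapped'' bilinear).

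Third, substitute these two expansions into the right-hand side of \eqref{eq_lemma_op_rel1} and verify that the coefficients match. The coefficient of $\op{k}(w)\op{k-1}(z)\pm\op{k-1}(w)\op{k}(z)$ on the right becomes $\mp(1\mp\t)\rn{k}$, which equals $\mp\sqrt{k/\n}\sqrt{(1\mp\t)/(1\pm\t)}$ by the purely algebraic identity
\[
\frac{(1\mp\t)^2}{1-\t^2}=\frac{(1\mp\t)^2}{(1\mp\t)(1\pm\t)}=\frac{1\mp\t}{1\pm\t};
\]
this reproduces exactly the left-hand side of \eqref{eq_lemma_op_rel1}. The remaining coefficient of $\op{k-1}(w)\op{k-2}(z)\pm\op{k-2}(w)\op{k-1}(z)$ combines contributions $\mp\sqrt{(k-1)/\n}\sqrt{(1\mp\t)/(1\pm\t)}$ (the first RHS term), $\mp\t(1\mp\t)\rn{k-1}$ (from the recurrence expansion), and $\pm\rn{k-1}(1-\t^2)$ (from the derivative substitution, after the ``swap'' sign); an identical manipulation using $\frac{(1\mp\t)^2}{1-\t^2}=\frac{1\mp\t}{1\pm\t}$ shows these three coefficients sum to zero.

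There is no real analytic obstacle here: the whole statement is a linear identity in the three bilinears $\{\op{j}(w)\op{j-1}(z)\pm\op{j-1}(w)\op{j}(z)\}_{j=k-1,k}$ and $(w\pm z)\op{k-1}(w)\op{k-1}(z)$. The only delicate point is the sign bookkeeping in the ``swap'' step, where the top sign gives a symmetric combination and the bottom sign an antisymmetric one; getting this right is what ties together the $\pm$ on the left-hand side, the $\pm\t$ in the recurrence \eqref{eq_rec}, and the $\pm$ in front of $\dop{k-1}$.
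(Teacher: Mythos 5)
Your overall strategy is the paper's proof run in reverse: the paper starts from the left-hand side of \eqref{eq_lemma_op_rel1}, expands $\op{k}$ via the recursion \eqref{eq_rec}, and then replaces $\op{k-2}$ by $\dop{k-1}$ using Proposition \ref{prop_d_op}; you start from the right, trade $\dop{k-1}$ for $\op{k-2}$, expand $(w\pm z)\op{k-1}(w)\op{k-1}(z)$ via the recursion, and match coefficients. The key simplification $(1\mp\t)^2/(1-\t^2)=(1\mp\t)/(1\pm\t)$ is common to both, so in substance the arguments are the same.

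There is, however, a sign error in your displayed recursion expansion that propagates into the coefficient check. Applying \eqref{eq_rec} directly in both variables gives
\begin{align}
(w\pm z)\op{k-1}(w)\op{k-1}(z)&=\rn{k}\bigl(\op{k}(w)\op{k-1}(z)\pm\op{k-1}(w)\op{k}(z)\bigr)\nn\\
&\quad{}+\t\rn{k-1}\bigl(\op{k-2}(w)\op{k-1}(z)\pm\op{k-1}(w)\op{k-2}(z)\bigr),\nn
\end{align}
with coefficient $\t\rn{k-1}$ --- no $\pm$ --- on the \emph{unswapped} bilinear. The $\pm 1$ appears only when that bilinear is rewritten as $\pm\bigl(\op{k-1}(w)\op{k-2}(z)\pm\op{k-2}(w)\op{k-1}(z)\bigr)$; your display attaches the $\pm$ while keeping the unswapped order, which inserts a spurious minus in the lower-sign case. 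Consequently the contribution you record as ``$\mp\t(1\mp\t)\rn{k-1}$ (from the recurrence expansion)'' should read $\mp(1\mp\t)\cdot\t\rn{k-1}\cdot(\pm 1)=-\t(1\mp\t)\rn{k-1}$, since $\mp\cdot\pm$ is $-$ for both sign choices. With that correction the three contributions sum to $\rn{k-1}(1\mp\t)(\mp1-\t\pm1+\t)=0$ for both signs; as you wrote them they sum to $\rn{k-1}(1\mp\t)(\mp1\mp\t\pm1+\t)$, which vanishes for the upper sign but equals $2\t(1+\t)\rn{k-1}\neq 0$ for the lower one.
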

\begin{proof}
Using the recursion relation \eqref{eq_rec} on the left-hand side of \eqref{eq_lemma_op_rel1}, we find
\begin{align}
\MoveEqLeft[3]\mp\sqrt{\tfrac{k}{\n}} \sqrt{\tfrac{1\mp\t}{1\pm\t}}\bigl(\op{k}(w) \op{k-1}(z)\pm\op{k-1}(w) \op{k}(z)\bigr)\nn \\
= {} &\mp\sqrt{\tfrac{k}{\n}} \sqrt{\tfrac{1\mp\t}{1\pm\t}}\left( \tfrac{w\op{k-1}(w)-\t \rn{k-1} \op{k-2}(w)}{\rn{k}}\op{k-1}(z) \pm \op{k-1}(w) \tfrac{z\op{k-1}(z)-\t \rn{k-1} \op{k-2}(z)}{\rn{k}} \right) \nn\\
= {} &\mp(1\mp\t)(w\pm z) \op{k-1}(w)\op{k-1}(z) \nn \\ 
& +\underbrace{(\t\mp\t^2)}_{\mathclap{\pm(1-\t^2)\mp(1\mp\t)}} \rn{k-1}\bigl(\op{k-1}(w) \op{k-2}(z)\pm\op{k-2}(w) \op{k-1}(z)\bigr)\nn\\
= {} &\mp(1\mp\t)(w\pm z) \op{k-1}(w)\op{k-1}(z)+\tfrac{1}{\n} \bigl( \dop{k-1}(w)\op{k-1}(z)\pm\op{k-1}(w)\dop{k-1}(z)\bigr)\nn\\
& \mp\sqrt{\tfrac{k-1}{\n}} \sqrt{\tfrac{1\mp\t}{1\pm\t}}\bigl(\op{k-1}(w) \op{k-2}(z)\pm\op{k-2}(w) \op{k-1}(z)\bigr),
\end{align}
where we have used the relation from Proposition \ref{prop_d_op} to replace $\op{k-2}$ by $\dop{k-1}$ in the term $$\pm(1-\t^2)\rn{k-1}\bigl(\op{k-1}(w) \op{k-2}(z)\pm\op{k-2}(w) \op{k-1}(z)\bigr)$$
of the third line.
\end{proof}

\begin{proof}[Proof of Lemma \ref{lemma_km}]
Using Lemma \ref{lemma_op_rel} recursively for $k=\n,\n-1,\ldots,1$ on the right-hand side of \eqref{eq_id_km1} we get
\begin{align}
\MoveEqLeft[3]\mp\sqrt{\frac{1\mp\t}{1\pm\t}}\bigl(\op{\n}(w) \op{\n-1}(z)\pm\op{\n-1}(w) \op{\n}(z)\bigr)\nn\\
= {} & \mp(1\mp \t)(w\pm z)\sum_{k=1}^\n \op{k-1}(w)\op{k-1}(z)\nn\\
&+\frac{1}{\n}\sum_{k=1}^\n \left(\dop{k-1}(w)\op{k-1}(z)\pm\op{k-1}(w)\dop{k-1}(z)\right) \nn\\
= {} & \mp\n (1\mp\t)(w\pm z) \KM(w,z)+\frac{\partial \KM}{\partial w} (w,z)\pm\frac{\partial \KM}{\partial z}(w,z).
\end{align}
\end{proof}

\begin{remark}
The proof has some similarity with the proof of the Christoffel-Darboux formula. The same way we use recursively Lemma \ref{lemma_km}, the Christoffel-Darboux formula can be proved by recursively using the recurrence relation \eqref{eq_rec_hermite}.
\end{remark}

\begin{proof}[Proof of Proposition \ref{prop_id}]
\begin{align}
\MoveEqLeft\frac{\partial \HM}{\partial w}(w,z)\pm\frac{\partial \HM}{\partial z}(w,z)\nn\\
& =\left( \frac{\partial \KM}{\partial w}(w,z)\pm\frac{\partial \KM}{\partial z}(w,z) \mp\n(1\mp\t)(w\pm z) \KM(w,z) \right) \e^{\n\left(-w z+\frac{\t w^2}{2}+\frac{\t z^2}{2}\right)}\nn\\
& =\mp\sqrt{\frac{1\mp\t}{1\pm\t}}\bigl(\op{\n}(w) \op{\n-1}(z)\pm\op{\n-1}(w) \op{\n}(z)\bigr)\e^{\n\left(-w z+\frac{\t w^2}{2}+\frac{\t z^2}{2}\right)},
\end{align}
where we have used Lemma \ref{lemma_km} in the last step.
\end{proof}

\subsection{Symmetries of Identities}\label{subsec_symm_id}
\begin{defin}
In the following we will use for the first quadrant the notation 
\begin{align}
\Cpp&=\{z\in \C\vert \Re\, z\ge 0 \land \Im\, z\ge 0\},\\
\intertext{and for the right-hand side of the complex plane without the negative imaginary axis}
\Cp&=\{z\in \C\vert \Re\, z> 0 \lor (\Re\, z=0 \land \Im\, z\ge 0\}.
\end{align}
\end{defin}

%As $\tKn$ for real $\t$ we see that $\HM(w,z)$ is invariant under the map $(w,z)\mapsto (-w,-z)$ and gets complex conjugated under $(w,z)\mapsto (\cc{w},\cc{z})$. 

From the right-hand side of identities \eqref{eq_id1} and \eqref{eq_id2} we see that they change the sign under the map $(w,z)\mapsto (-w,-z)$ as exactly one of $\op{\n}$ and $\op{\n-1}$ is odd and that they get complex conjugated under $(w,z)\mapsto (\cc{w},\cc{z})$. 

Our main goal will be to find $\HM$ and $\tKn$, from which we can deduce the results we are looking for. Our interest is limited to the case when $w$ and $z$ are in a neighborhood of $z_0\in \C$. Because of the symmetries it is sufficient to consider only $z_0\in \Cpp$. %including a neighborhood of the axes. 
But we will see that most of the time it won't be a big deal to consider all $z_0\in \C$. 

\section{Asymptotics of Identities}\label{ch_asymp}

In this chapter we are going to analyze the asymptotics as $\n\rightarrow \infty$ of identities \eqref{eq_id1} and \eqref{eq_id2} evaluated at $w=\cc{z}$. As before $t \in\R$ and for simplicity, we further assume that $t>0$. If not stated otherwise, we will assume this in the following chapters too. 

If we introduce, additionally to Definition \ref{def_rho}, the density 
\begin{align}
\rhon(x,y)&=\rhon(x+ i y)=\frac{\nK(x+i y,x+i y)}{\n}=\HM(x-i y,x+i y)
\intertext{as a function of two real variables $x$ and $y$,  then}
\frac{\partial \rhon}{\partial x}(x,y)&=\frac{\partial \HM}{\partial w}(x-i y,x+i y)+\frac{\partial \HM}{\partial z}(x-i y,x+i y) \label{eq_drho_x}
\shortintertext{corresponds to the left-hand side of the identity \eqref{eq_id1} evaluated at $w=x-iy$ and $z=x+iy$ and}
\frac{\partial \rhon}{\partial y}(x,y)&=-i \frac{\partial \HM}{\partial w}(x-i y,x+i y)+ i \frac{\partial \HM}{\partial z}(x-i y,x+i y) \label{eq_drho_y}
\end{align}
to the left-hand side of identity \eqref{eq_id2} respectively, multiplied by $-i$. Therefore the asymptotics of the identities will reveal the asymptotics of the derivatives of $\rhon(x,y)$.

To approximate the orthonormal polynomials we will use the Plancherel-Rotach asymptotics for the Hermite polynomials. If we are on the interval $(-\sqrt{2},\sqrt{2})$ a different asymptotics is valid than outside of the interval. We will review the Plancherel-Rotach asymptotics in these two cases in Section \ref{sec_pr}. 

First we will define and discuss some functions which help to shorten the asymptotic expressions and their computations.

\subsection{Some Useful Functions}
\subsubsection{Definitions of $\U{\F}$, $\T{\F}$ and $\W{\F}$}

\begin{defin}\label{def_U} For $\F=2\sqrt{\tfrac{\t}{1-\t^2}}\in (0,\infty)$ and $z\in\C\setminus[-\F,\F]$ we define
\begin{equation}
\U{\F}(z)=\left\{ \begin{array}{ll}\frac{z-\sqrt{z^2-\F^2}}{\F} &\text{if } z\in \Cp,\\ \frac{z+\sqrt{z^2-\F^2}}{\F} &\text{if } z\notin \Cp. \end{array}\right. 
\end{equation}
\end{defin}

\begin{remark}We use the standard convention for the root, i.e.\ $\sqrt{\e^{i\phi}}=\e^{i\frac{\phi}{2}}$ for $\phi\in (-\pi,\pi]$. 
\end{remark}

\begin{remark}
Note that $\U{\F}(-z)=-\U{\F}(z)$ and $\U{\F}(\cc{z})=\cc{\U{\F}(z)}$ for all $z\in\C\setminus[-\F,\F]$. % TO BE DONE, letzteres sollte man noch Begründen / zeigen
\end{remark}

In Section \ref{subsection_pr_asym_outside} we will see that this function appears in the Plancherel-Rotach asymptotics which is valid outside of the interval $[-\F,\F]$. It will play a central role in the dominant term of the identities (see Section \ref{sec_f_g}).

%\subsection{Some Basic Functions}
%Here we define and discuss some basic functions appearing in $\pi_m(z)$ in \eqref{eq_pi} similarly as in Definition \ref{def_U}.

\begin{defin}\label{def_T} For $\F\in (0,\infty)$ and $z\in\C\setminus[-\F,\F]$ we define
\begin{equation}
\T{\F}(z)=\left\{ \begin{array}{ll}\sqrt{z^2-\F^2} &\text{if } z\in \Cp,\\ -\sqrt{z^2-\F^2} &\text{if } z\notin \Cp. \end{array}\right. 
\end{equation}
\end{defin}

\begin{remark}
Note that $\T{\F}(-z)=-\T{\F}(z)$ and $\T{\F}(\cc{z})=\cc{\T{\F}(z)}$ for all $z\in\C\setminus[-\F,\F]$. % TO BE DONE, letzteres sollte man noch Begründen / zeigen
\end{remark}

Additionally we will use $\U{\F}(x,y)=\U{\F}(x+ i y)$ and $\T{\F}(x,y)=\T{\F}(x+ i y)$ as functions of two real variables $x$ and $y$. We will see in Section \ref{subsec_holomorph} that $\U{\F}$ and $\T{\F}$ are holomorphic on $\C\setminus[\F,\F]$. The derivatives of $\T{\F}(x,y)=\pm \sqrt{x^2-y^2+2i xy -F^2}$ are
\begin{align}
\tfrac{\partial \T{\F}}{\partial x} (x,y)&=\tfrac{x+i y}{\pm \sqrt{x^2-y^2+2i xy -F^2}} =  \tfrac{x+i y}{\T{\F}(x,y)} , &  \tfrac{\partial \T{\F}}{\partial y} (x,y)&=\tfrac{i x- y}{\pm \sqrt{x^2-y^2+2i xy -F^2}} = i\tfrac{x+i y}{\T{\F}(x,y)}, \nn \\
\intertext{and of $\U{\F}(x,y)=\tfrac{x+i y - \T{\F}(x,y)}{\F}$}
\tfrac{\partial \U{\F}}{\partial x} (x,y)&=\tfrac{1}{\F}-\tfrac{x+i y}{\F \T{\F}(x,y)}=-\tfrac{\U{\F}(x,y)}{\T{\F}(x,y)} , &  \tfrac{\partial \U{\F}}{\partial y} (x,y)&=\tfrac{i}{\F}-\tfrac{i(x+i y)}{\F \T{\F}(x,y)}=\tfrac{\U{\F}(x,y)}{i \T{\F}(x,y)}. \nn
\end{align}
%Sind irgendwo auch noch höhere Ableitungen nützlich? dann hier berechnen

\begin{defin}\label{def_W} For $\F\in [0,\infty)$ and $z\in\C\setminus[-\F,\F]$ we define
\begin{equation}
\W{\F}(z)=\left( \frac{z+\F}{z-\F}\right)^{1/4}+\left( \frac{z-\F}{z+\F}\right)^{1/4} .
\end{equation}
\end{defin}

\begin{remark}Note that $\W{\F}(-z)=\W{\F}(z)$ and $\W{\F}(\cc{z})=\cc{\W{\F}(z)}$ for all $z\in\C\setminus[-\F,\F]$. The latter is true because $z^{1/4}$ has this symmetry under complex conjugation unless $z\in(-\infty,0)$. But $\tfrac{z\pm\F}{z\mp\F}\notin (-\infty,0]$ for all $z\in\C\setminus[-\F,\F]$ as we will see in the proof of Lemma \ref{lemma_holomorh_U_W}. 
\end{remark}

\subsubsection{Elliptic Coordinates}\label{sub_elliptic}
The simplest way to describe $\U{\F}$, $\T{\F}$ and $\W{\F}$ is to use elliptic coordinates on confocal ellipses with minor semi-axis $b>0$ and major semi-axis $a=\sqrt{b^2+\F^2}>b$. 

Then $\pm\F$ with $\F\in (0,\infty)$ are the common foci of all of these ellipses. For any $z\in \C\setminus[-\F,\F]$ we can find such an ellipse so that $z$ lies on it. For a chosen $z=x+i y$ the semi-axis $b$ is given by
\begin{align}
b^2=\frac{x^2+y^2-\F^2+\sqrt{4y^2 \F^2+(\F^2-x^2-y^2)^2}}{2}>0, \label{eq_coord_b}
\end{align}
since then
\begin{align}
\frac{x^2}{a^2}+\frac{y^2}{b^2}&=\frac{4x^2 b^2+4y^2 a^2}{4a^2 b^2} \nn \\
&=\tfrac{2x^2\left(x^2+y^2-\F^2+\sqrt{4y^2 \F^2+(\F^2-x^2-y^2)^2} \right)+2y^2\left(x^2+y^2+\F^2+\sqrt{4y^2 \F^2+(\F^2-x^2-y^2)^2} \right)}{\left(x^2+y^2+\sqrt{4y^2 \F^2+(\F^2-x^2-y^2)^2}\right)^2-\F^4} \nn \\
&=\tfrac{2(x^2+y^2)\left(x^2+y^2+\sqrt{4y^2 \F^2+(\F^2-x^2-y^2)^2}\right)+2(y^2-x^2)\F^2}{x^4+y^4+4y^2 \F^2+(\F^2-x^2-y^2)^2+2x^2 y^2+(2x^2+2y^2)\sqrt{4y^2 \F^2+(\F^2-x^2-y^2)^2}-\F^4} = 1 \label{eq_z_on_ellipse}
\end{align}
is fulfilled. Obviously for any given $z\in\C\setminus[-\F,\F]$ we find exactly one ellipse containing $z$. Thus we have found unique coordinates $(b,\phi)\in (0,\infty)\times(-\pi,\pi]$ so that $z=a \cos\phi+i b \sin\phi$.

\begin{prop}\label{prop_U}
Let $\F\in(0,\infty)$, $b>0$, $a=\sqrt{b^2+\F^2}>b$ and $z=a \cos\phi+i b \sin\phi $ with $\phi\in(\pi,\pi]$. Then 
\begin{align}
\U{\F}(z)&=r_F(b) (\cos \phi-i \sin \phi),
\intertext{where}
r_\F(b)&=\frac{\sqrt{b^2+\F^2}-b}{\F}=\frac{a-b}{\F}>0. 
\end{align}
Further
\begin{align}
\Re\, \U{\F}(z)&= r_F(b) \cos\phi, \quad \Im\, \U{\F}(z)= -r_F(b) \sin \phi, \quad\text{and}\quad \vert \U{\F}(z)\vert=r_F(b)>0.
\end{align}
\end{prop}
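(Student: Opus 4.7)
The plan is to parametrize $z$ via a Joukowski-type map that makes the square root trivial, then identify which branch the standard convention selects in each region.

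First, I would introduce $w = re^{i\phi}$ with $r = (a+b)/\F$. Since $a^2 - b^2 = \F^2$, one has $r^{-1} = (a-b)/\F$, and a direct computation shows
\begin{equation*}
\tfrac{\F}{2}\bigl(w + w^{-1}\bigr) = \tfrac{\F}{2}\bigl((r+r^{-1})\cos\phi + i(r-r^{-1})\sin\phi\bigr) = a\cos\phi + i b\sin\phi = z .
\end{equation*}
Next I would compute
\begin{equation*}
z^2 - \F^2 = \tfrac{\F^2}{4}\bigl(w+w^{-1}\bigr)^2 - \F^2 = \tfrac{\F^2}{4}\bigl(w - w^{-1}\bigr)^2 ,
\end{equation*}
so that $\sqrt{z^2-\F^2} = \pm\tfrac{\F}{2}(w-w^{-1})$, with the sign fixed by the standard convention for the square root.

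The second step is a case analysis to determine the correct sign. Since $\tfrac{\F}{2}(w-w^{-1}) = b\cos\phi + i a\sin\phi$, its real part equals $b\cos\phi$, which has the same sign as $\Re z = a\cos\phi$. Because the standard convention $\sqrt{\mathrm{e}^{i\psi}} = \mathrm{e}^{i\psi/2}$ with $\psi\in(-\pi,\pi]$ forces $\Re\sqrt{z^2-\F^2}\ge 0$, the sign $+$ is selected when $\Re z > 0$ and the sign $-$ when $\Re z < 0$. The two boundary cases $z=\pm ib$ (where $\cos\phi=0$) I would verify directly: there $z^2-\F^2 = -a^2$ and the standard convention gives $\sqrt{-a^2}=ia$, which matches $+\tfrac{\F}{2}(w-w^{-1})$ for $\phi=\pi/2$ (so $z=ib\in\Cp$) and $-\tfrac{\F}{2}(w-w^{-1})$ for $\phi=-\pi/2$ (so $z=-ib\notin\Cp$).

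Finally, I would substitute into Definition \ref{def_U}. For $z\in\Cp$ the chosen branch yields
\begin{equation*}
\U{\F}(z) = \tfrac{z - \tfrac{\F}{2}(w-w^{-1})}{\F} = \tfrac{1}{\F}\cdot\tfrac{\F}{2}\bigl((w+w^{-1})-(w-w^{-1})\bigr) = w^{-1} = r^{-1}\mathrm{e}^{-i\phi},
\end{equation*}
and for $z\notin\Cp$ the opposite sign in both places gives the same final expression $w^{-1}$. Since $r^{-1} = (a-b)/\F = r_\F(b)$, this is exactly the claimed formula $\U{\F}(z) = r_\F(b)(\cos\phi - i\sin\phi)$. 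The statements about $\Re\U{\F}(z)$, $\Im\U{\F}(z)$ and $\vert\U{\F}(z)\vert$ then follow immediately, the last using $r_\F(b)>0$ which is clear because $a>b$. The only place demanding care is the branch selection on the imaginary axis, but the two explicit calculations above handle it cleanly, so no real obstacle arises.
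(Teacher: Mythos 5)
Your proof is correct and in substance it is the same as the paper's: both arguments hinge on the identity $z^2-\F^2=(b\cos\phi+ia\sin\phi)^2$ (you obtain it via the Joukowski substitution $z=\tfrac{\F}{2}(w+w^{-1})$, $w=re^{i\phi}$, $r=(a+b)/\F$, while the paper computes it directly), after which the formula for $\U{\F}(z)$ drops out. Where you go beyond the paper is in the branch analysis: the paper simply remarks that it has "considered the change of sign in front of the square root," whereas you verify that the standard-convention square root has nonnegative real part, match its sign against $\Re\bigl(b\cos\phi+ia\sin\phi\bigr)=b\cos\phi$ (which shares the sign of $\Re z=a\cos\phi$), and check the imaginary-axis boundary cases $\phi=\pm\pi/2$ explicitly — a small but genuine tightening of the argument that removes any ambiguity about how the $\pm$ in Definition \ref{def_U} interacts with the root convention.
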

\begin{proof}
Using the parametrization $z=a \cos\phi+i b \sin\phi $, $\F^2=\F^2 (\sin^2\phi+\cos^2\phi)$ and $a^2=b^2+\F^2$ we get
\begin{align}
z^2-\F^2&=(a^2-\F^2)\cos^2\phi +2i ab \cos\phi\sin\phi-(b^2+\F^2)\sin^2\phi\nonumber\\
&=b^2 \cos^2\phi +2i ab \cos\phi \sin\phi-a^2 \sin^2\phi\nonumber\\
&=(b \cos\phi+i a \sin\phi)^2.
\end{align}
Inserting this in $\U{\F}$ we find
\begin{equation}
\U{\F}(z)=\frac{(a-b) \cos\phi+i (b-a) \sin\phi}{\F},
\end{equation}
where we have considered the change of sign in front of the square root in $\U{\F}$. The rest of the proposition is obvious from that.
\end{proof}

\begin{remark}
We see that $\Re\, \U{\F}(z)\ge 0$ if $\phi\in[-\tfrac{\pi}{2}, \tfrac{\pi}{2}]$ and therefore $-\tfrac{\pi}{2} \le \arg \U{\F}(z)\le \tfrac{\pi}{2}$. 
\end{remark}

\begin{cor}\label{cor_U1}
Let $\F\in(0,\infty)$, $b>0$ and $a=\sqrt{b^2+\F^2}>b$. Then $\U{F}$ maps the ellipse
\begin{align}
&\{z\in\C \vert z=a \cos\phi+i b \sin\phi, \phi\in (\pi,\pi] \}\\
\intertext{to the circle}
&\{z\in\C\vert z=r_F(b) (\cos \phi-i \sin \phi), \phi\in (\pi,\pi] \},
\end{align}
where the radius $r_\F(b)$ is given in Proposition \ref{prop_U}.
\end{cor}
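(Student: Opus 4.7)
The plan is to derive the corollary directly from Proposition \ref{prop_U}, since that proposition already computes $\U{\F}$ explicitly on any confocal ellipse, and the corollary is essentially the global version of that pointwise identity.

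First I would fix $b > 0$ and $a = \sqrt{b^2 + \F^2}$, and consider the parametrization $z(\phi) = a\cos\phi + i b\sin\phi$ as $\phi$ ranges over $(-\pi, \pi]$. Since Proposition \ref{prop_U} asserts that $\U{\F}(z(\phi)) = r_\F(b)(\cos\phi - i\sin\phi)$, the image of the ellipse under $\U{\F}$ is automatically contained in the set $\{r_\F(b)(\cos\phi - i\sin\phi) : \phi \in (-\pi,\pi]\}$, which is the circle of radius $r_\F(b)$ centered at the origin (it is just the unit circle scaled by $r_\F(b)$ and traversed in the opposite orientation).

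Next I would verify that the image is exactly the full circle, not merely a subset. This is immediate from the fact that $\phi \mapsto r_\F(b)(\cos\phi - i\sin\phi)$ is a surjection onto the circle of radius $r_\F(b)$ as $\phi$ traverses $(-\pi, \pi]$. For completeness I would also note injectivity of the map on the ellipse: distinct values of $\phi \in (-\pi,\pi]$ yield distinct points on the ellipse (since $a \neq b$, so the parametrization is injective) and also distinct image points on the circle, so $\U{\F}$ restricts to a bijection between the two curves.

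There is no real obstacle here, as the corollary is essentially a repackaging of Proposition \ref{prop_U}; the only bookkeeping point is to make sure the conventions used in Definition \ref{def_U} (the sign choice of the square root distinguishing $z \in \Cp$ from $z \notin \Cp$) match up so that the single formula $\U{\F}(z(\phi)) = r_\F(b)(\cos\phi - i\sin\phi)$ holds uniformly across all four quadrants, which was already handled in the proof of Proposition \ref{prop_U}.
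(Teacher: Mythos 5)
Your proposal is correct and matches the paper's own proof, which is just "This follows directly from Proposition \ref{prop_U}." The extra remarks on surjectivity and injectivity are fine but not needed for what the corollary actually asserts.
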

\begin{proof}
This follows directly from Proposition \ref{prop_U}.
\end{proof}
\begin{remark}
Note that the part of the ellipse in the first quadrant gets mapped to the part of the circle in the forth quadrant and vice versa. The positive imaginary axis gets mapped to $(-i,0)$.
\end{remark}

\begin{remark}
If $b=\mis$ it follows that $a=\mas$. Therefore the ellipse $\pEt$ gets mapped by $\U{\F}$ to a circle with radius $r_\F(\mis)=\sqrt{\t}$.
\end{remark}

%hier oder früher offene Kugel $B_r(z0)$ definieren, später (bei erster Verwendung) auch abgeschlossene Kugel definieren

\begin{cor}\label{cor_U2}
Let $\F\in(0,\infty$). Then $\U{F}$ maps $\C\setminus [-F,F]$ bijectively to $B_1(0)\setminus\{0\}$
%\begin{equation}
%\{z\in B_1(0)\vert \Re z>0 \lor (\Re z=0 \land \Im z<0) \}
%\end{equation}
and infinity gets mapped to zero. 
\end{cor}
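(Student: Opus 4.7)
The plan is to use the confocal-ellipse foliation of $\C \setminus [-\F,\F]$ (as set up in Section~\ref{sub_elliptic}) together with Corollary~\ref{cor_U1}, which already tells us that $\U{\F}$ maps each such ellipse bijectively onto a centered circle. The remaining task is just to verify that these image circles sweep out $B_1(0)\setminus\{0\}$ exactly once as the ellipse parameter varies.

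First I would recall that, by \eqref{eq_coord_b} and \eqref{eq_z_on_ellipse}, every $z \in \C \setminus [-\F,\F]$ lies on a \emph{unique} confocal ellipse with semi-minor axis $b \in (0,\infty)$ and semi-major axis $a = \sqrt{b^2 + \F^2}$. By Corollary~\ref{cor_U1}, $\U{\F}$ sends this ellipse bijectively onto the circle of radius
\begin{equation*}
r_\F(b) = \frac{\sqrt{b^2+\F^2} - b}{\F}
\end{equation*}
centered at the origin. Hence $\U{\F}$ is entirely determined by the one-parameter family $b \mapsto r_\F(b)$ once the angular correspondence from Proposition~\ref{prop_U} is known to be bijective on each fiber.

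Next I would study $r_\F$ as a function of $b \in (0,\infty)$. A direct differentiation gives
\begin{equation*}
r_\F'(b) = \frac{1}{\F}\left(\frac{b}{\sqrt{b^2+\F^2}} - 1\right) < 0,
\end{equation*}
so $r_\F$ is strictly decreasing. Its limits are $r_\F(0^+) = 1$ and, writing $\sqrt{b^2+\F^2} - b = \F^2/(\sqrt{b^2+\F^2}+b)$, $r_\F(b) \to 0$ as $b \to \infty$. Therefore $r_\F$ is a strictly decreasing bijection $(0,\infty) \to (0,1)$, and the family of circles $\{|w| = r_\F(b)\}_{b>0}$ partitions $B_1(0)\setminus\{0\}$.

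Putting the pieces together: injectivity of $\U{\F}$ follows because two points in $\C \setminus [-\F,\F]$ on different confocal ellipses give different $|\U{\F}|$ (by strict monotonicity of $r_\F$), while two points on the same ellipse give different images by Corollary~\ref{cor_U1}. Surjectivity onto $B_1(0)\setminus\{0\}$ follows because any $w \neq 0$ with $|w| < 1$ has $|w| = r_\F(b)$ for a unique $b$, and Corollary~\ref{cor_U1} supplies a preimage on the corresponding ellipse. Finally, $z \to \infty$ forces $b \to \infty$ and hence $|\U{\F}(z)| = r_\F(b) \to 0$. There is no real obstacle here; the content of the corollary is already essentially packaged into Proposition~\ref{prop_U} and Corollary~\ref{cor_U1}, and the only new input is the monotonicity and range of $r_\F$.
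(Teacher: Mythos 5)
Your proof is correct and follows the same strategy as the paper: parametrize $\C \setminus [-\F,\F]$ by confocal ellipses, reduce everything to the radial map $b \mapsto r_\F(b)$, and establish its strict monotonicity from $1$ to $0$ by differentiation and limit evaluation. The only cosmetic difference is that you invoke Corollary~\ref{cor_U1} where the paper cites Proposition~\ref{prop_U} directly, but the content is the same.
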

%\remark{So the corollary tells us that $\U{\F}$ maps the right-hand side of the complex plane to the right-hand side of the interior of the unit circle.}
\begin{proof}
% TO BE DONE  diesen Teil in einen Anfang einer Section über Konvocale Koordinaten schieben und überall sonst rausnehmen!!! (nicht als Satz, sondern einfach in der Einleitung erwähnen / alles in Kapitel PR Asymp. schieben. AUFBAU: \section{PR Asymp.} / \subsection{Useful Functions} (evtl. auch als subsection in section{PR auf Adelta} \subsection{Konfokale Koordinaten} dort folgt diese Einleitung und alles über die Werte der Funktionen / auch noch den Satz über b>\delta hier anfügen, der Rest der Error Abschätzungen im Anhang belassen. \subsection{Holomorphy...} evtl. auch schon vorher, falls es keine Konvokalen Koord. braucht.
Every $z\in \C\setminus [-\F,\F]$ lies on an ellipse with minor semi-axis $b\in(0,\infty)$ and major semi-axis $a=\sqrt{b^2-\F^2}$. %Because $a=a(b)$ is strictly monotonic increasing with $b$, $z$ lies on exactly one such ellipse. 
From Proposition \ref{prop_U} we see that $\U{\F}$ maps every point $z=a \cos\phi+i b\sin \phi$ of it bijectively to a point on the circle $\U{\F}(z)=r_\F(b)(\cos \phi+i\sin(-\phi))$ with $r_\F(b)=\frac{\sqrt{b^2+\F^2}-b}{\F}$ and $\phi\in (-\pi,\pi]$. We see that $r_\F(b)$ is continuous and that
\begin{align}
\lim_{b\rightarrow 0} r_\F(b)=1, \qquad\text{and}\qquad \lim_{b\rightarrow \infty} r_\F(b)=0.
\end{align}
Because
\begin{align}
\frac{\partial r_\F(b)}{\partial b}=\frac{b-\sqrt{b^2+\F^2}}{\F \sqrt{b^2+\F^2}}<0, \qquad \forall b\in(0,\infty),
\end{align}
$r_\F(b)$ is strictly monotonic decreasing from one to zero. Therefore $\U{\F}$ obviously is injective. We further see that it maps $\C\setminus [-\F,\F]$ to $B_1(0)\setminus\{0\}$. On the other hand, if $w \in B_1(0)\setminus\{0\}$, there exists exactly one $b$ with $r_\F(b)=\vert w\vert$. And then again, we find a point on the ellipse with semi-axis $b$ that gets mapped by $\U{\F}$ to $w$, thus $\U{\F}$ is surjective.
\end{proof}

Since $\U{\F}$ is bijective, we know that there exists an inverse function.
\begin{prop}\label{prop_U_inv}
Let $\F\in(0,\infty)$. Then the inverse function of $\U{\F}$ is given by
\begin{align}
\U{\F}^{-1}(u)=\frac{\F}{2}\left(u+\frac{1}{u}\right), \qquad \forall u\in B_1(0)\setminus\{0\}.
\end{align}
\end{prop}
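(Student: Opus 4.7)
The plan is to exploit the bijectivity already established in Corollary \ref{cor_U2}. Since $\U{\F}:\C\setminus[-\F,\F]\to B_1(0)\setminus\{0\}$ is a bijection, it has a unique inverse, so it suffices to verify that the proposed formula $\Phi(u):=\tfrac{\F}{2}\bigl(u+\tfrac{1}{u}\bigr)$ satisfies $\Phi\bigl(\U{\F}(z)\bigr)=z$ for every $z\in\C\setminus[-\F,\F]$.

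The verification is a direct algebraic computation based on the defining formula of $\U{\F}$ in Definition \ref{def_U}. I would fix $z\in\C\setminus[-\F,\F]$ and set $u=\U{\F}(z)=\frac{z\mp\sqrt{z^{2}-\F^{2}}}{\F}$, where the sign is determined by whether $z\in\Cp$ or not. Rationalizing gives
\begin{equation*}
\frac{1}{u}=\frac{\F}{z\mp\sqrt{z^{2}-\F^{2}}}\cdot\frac{z\pm\sqrt{z^{2}-\F^{2}}}{z\pm\sqrt{z^{2}-\F^{2}}}=\frac{\F\bigl(z\pm\sqrt{z^{2}-\F^{2}}\bigr)}{z^{2}-(z^{2}-\F^{2})}=\frac{z\pm\sqrt{z^{2}-\F^{2}}}{\F},
\end{equation*}
where, crucially, the two branches of $\sqrt{z^{2}-\F^{2}}$ cancel consistently regardless of which case of Definition \ref{def_U} applies. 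Adding the two expressions yields $u+\tfrac{1}{u}=\tfrac{2z}{\F}$, hence $\Phi(u)=z$, as required.

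As an alternative route that makes the geometry transparent, one can use the elliptic coordinates from Proposition \ref{prop_U}: writing $z=a\cos\phi+ib\sin\phi$ with $a=\sqrt{b^{2}+\F^{2}}$, one has $u=r_{\F}(b)(\cos\phi-i\sin\phi)$ with $r_{\F}(b)=(a-b)/\F$, so that $1/u=\tfrac{a+b}{\F}(\cos\phi+i\sin\phi)$ (using $a^{2}-b^{2}=\F^{2}$); the sum $u+1/u$ then collapses immediately to $2z/\F$. Either way, there is no real obstacle — the only subtle point is keeping track of the sign convention in Definition \ref{def_U}, but the rationalization is insensitive to that choice, so the single formula for $\Phi$ inverts $\U{\F}$ uniformly on all of $B_1(0)\setminus\{0\}$.
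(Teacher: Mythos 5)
Your proof is correct and takes essentially the same route as the paper: both reduce the claim to verifying $\tfrac{\F}{2}\bigl(\U{\F}(z)+\tfrac{1}{\U{\F}(z)}\bigr)=z$ and establish it by rationalizing $1/\U{\F}(z)$ with $z\pm\sqrt{z^2-\F^2}$, relying on the bijectivity from Corollary~\ref{cor_U2} to upgrade a one-sided inverse to the inverse. The alternative derivation via elliptic coordinates is a nice sanity check but not a distinct method.
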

\begin{proof}
We will show that
\begin{align}
\tfrac{\F}{2}\left(\U{\F}(z)+\tfrac{1}{\U{\F}(z)}\right)=z, \qquad \forall z\in \C\setminus[-\F,\F].
\end{align}
Expanding the fraction $\tfrac{1}{\U{\F}(z)}$ with $z\pm\sqrt{z^2-\F^2}$ we get
\begin{align*}
\F\left(\U{\F}(z)+\tfrac{1}{\U{\F}(z)}\right)=z\mp \sqrt{z^2-\F^2}+\tfrac{\F^2}{z\mp\sqrt{z^2-\F^2}}=z\mp \sqrt{z^2-\F^2}+\tfrac{\F^2(z\pm\sqrt{z^2-\F^2})}{z^2-z^2+\F^2}=2z,
\end{align*}
where the upper sign has been for $z\in\Cp$ and the lower sign for $z\notin\Cp$.
\end{proof}

% TO BE DONE Evtl. vor \U{F} (prop_U) stellen und den entsprechenden Beweis dort kürzen, indem man dieses Resultat verwendet.!
\begin{lemma}\label{lemma_T}
Let $z\in \C\setminus [-\F,\F]$ be parametrized by $z=a \cos\phi+i b \sin\phi $ with $b>0$, $a=\sqrt{b^2+\F^2}>b$ and $\phi\in(-\pi,\pi]$. Then 
\begin{align}
\T{\F}(z)&= b\cos \phi+ i a \sin \phi,
\intertext{and}
\vert \T{\F}(z)\vert^2&=b^2+\F^2\sin^2\phi >0.
\end{align}
\end{lemma}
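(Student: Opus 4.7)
The approach will closely mirror the computation carried out in the proof of Proposition \ref{prop_U}. First I would use the parametrization $z = a\cos\phi + ib\sin\phi$ together with the defining relation $a^2 = b^2 + \F^2$ to verify the identity $z^2 - \F^2 = (b\cos\phi + ia\sin\phi)^2$; this is exactly the same algebraic manipulation as in the earlier proof and so I would just quote it. Consequently the two candidate values of $\sqrt{z^2-\F^2}$ are $\pm(b\cos\phi + ia\sin\phi)$, and by the convention $\sqrt{re^{i\theta}} = \sqrt{r}\,e^{i\theta/2}$ for $\theta\in(-\pi,\pi]$ the principal branch is the one that lies in $\Cp$ (recall that $z^2-\F^2\neq 0$ since $z\notin[-\F,\F]$).

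Next I would split into two cases according to Definition \ref{def_T}. If $z\in\Cp$, then the parametrization forces $\phi\in(-\pi/2,\pi/2]$, so $\Re(b\cos\phi + ia\sin\phi) = b\cos\phi \geq 0$, with equality only at $\phi=\pi/2$ where the imaginary part equals $a>0$; hence $b\cos\phi + ia\sin\phi\in\Cp$ is the principal square root, and the definition gives $\T{\F}(z) = b\cos\phi + ia\sin\phi$. If instead $z\notin\Cp$, then $\phi\in(-\pi,-\pi/2]\cup(\pi/2,\pi]$, and an analogous sign check shows that in this range $-(b\cos\phi + ia\sin\phi)$ lies in $\Cp$, so now the principal square root is the opposite one; the extra sign in Definition \ref{def_T} then once again produces $\T{\F}(z) = b\cos\phi + ia\sin\phi$. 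In both cases the stated formula holds.

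For the modulus I would simply compute $|\T{\F}(z)|^2 = b^2\cos^2\phi + a^2\sin^2\phi$ and use $a^2 = b^2 + \F^2$ to rewrite this as $b^2 + \F^2\sin^2\phi$; since $b>0$, this is strictly positive. The only step requiring any attention is the branch-selection case analysis, particularly the boundary values $\phi = \pm\pi/2$ and $\phi=\pi$ where the principal square root sits on the imaginary axis or on the positive real axis; but these are routine to verify once the perfect-square identity for $z^2-\F^2$ is in hand. I do not expect any essential obstacle.
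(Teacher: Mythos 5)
Your proof is correct and follows the same route as the paper's: both establish the perfect-square identity $z^2-\F^2 = (b\cos\phi+ia\sin\phi)^2$ from $a^2 = b^2+\F^2$ and then track the sign convention built into $\T{\F}$. You are in fact more careful than the paper on the branch selection (the paper disposes of it with a brief remark), but the essential argument is identical.
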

\begin{proof}
Using the parametrization $z=a \cos\phi+i b \sin\phi $ and $a^2=b^2+\F^2$, we get
\begin{align}
z^2-\F^2&=(a^2-\F^2)\cos^2\phi +2i ab \cos\phi\sin\phi-(b^2+\F^2)\sin^2\phi\nonumber\\
&=b^2 \cos^2\phi +2i ab \cos\phi \sin\phi-a^2 \sin^2\phi\nonumber\\
&=(b \cos\phi+i a \sin\phi)^2.
\end{align}
Inserting this in $\T{\F}$, we find
\begin{equation}
\T{\F}(z)=b \cos\phi+i a \sin\phi,
\end{equation}
where we have considered the change of sign in front of the root in $\T{\F}$. The rest of the proposition is obvious from that.
\end{proof}

\begin{lemma}\label{lemma_W} % TO BE DONE Erster Teil (dass jedes z so parametrisiert werden kann) mal weiter vorne als Lemma Formulieren, dass dann immer gebraucht werden kann
%TO BE DONE anpassen, Werte ausrechnen
Let $z\in \C\setminus [-\F,\F]$ be parametrized by $z=a \cos\phi+i b \sin\phi $ with $b>0$, $a=\sqrt{b^2+\F^2}>b$ and $\phi\in(-\pi,\pi]$. Then 
\begin{align}
\W{\F}(z)&=\frac{\sqrt{2(a+b)} \e^{i\frac{\phi}{2}} }{\sqrt{b \cos \phi+ i a\sin \phi}}. 
\end{align}
Further
\begin{align}
\vert \W{\F}(z)\vert^2&= \frac{2(a+b)}{\sqrt{b^2+\F^2 \sin^2\phi}}>0, \qquad \text{and}\qquad \arg \W{\F}(z)\in (-\tfrac{\pi}{4},\tfrac{\pi}{4}]. \nn
\end{align}
\end{lemma}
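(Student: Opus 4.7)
My plan is to first derive a clean identity for $\W{\F}(z)^2$, and then take the square root, fixing the sign by continuity from the point $z=a$ (i.e.\ $\phi=0$), where everything is manifestly real and positive. The key ingredient already at hand is Lemma \ref{lemma_T}, which gives $\T{\F}(z) = b\cos\phi + i a\sin\phi$ in the same elliptic parametrization, together with $\T{\F}(z)^2 = z^2 - \F^2$.

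\textbf{Identity for the square.} Using the Remark after Definition \ref{def_W} that $(z\pm\F)/(z\mp\F) \notin (-\infty,0]$, the principal fourth roots are unambiguous and their pairwise product equals $1$, so
\begin{equation*}
\W{\F}(z)^2 = \Bigl(\tfrac{z+\F}{z-\F}\Bigr)^{1/2} + 2 + \Bigl(\tfrac{z-\F}{z+\F}\Bigr)^{1/2}.
\end{equation*}
On the ray $z>\F$ (so $\phi=0$) the elementary identity $\sqrt{X} + 1/\sqrt{X} = (X+1)/\sqrt{X}$ collapses the right-hand side to $\frac{2z + 2\sqrt{z^2-\F^2}}{\sqrt{z^2-\F^2}} = \frac{2(z+\T{\F}(z))}{\T{\F}(z)}$. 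Both sides are holomorphic on $\C\setminus[-\F,\F]$ (using that $\T{\F}$ is holomorphic and zero-free there, and $\W{\F}$ is holomorphic by the Remark), so analytic continuation extends the identity $\W{\F}(z)^2 = 2(z+\T{\F}(z))/\T{\F}(z)$ to all of $\C\setminus[-\F,\F]$. Substituting the parametrization gives $z+\T{\F}(z) = (a+b)\cos\phi + i(a+b)\sin\phi = (a+b)e^{i\phi}$, whence
\begin{equation*}
\W{\F}(z)^2 = \frac{2(a+b)\,e^{i\phi}}{b\cos\phi + ia\sin\phi}.
\end{equation*}

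\textbf{Taking the square root.} Principal square roots give the claimed formula up to an overall sign; evaluating at $\phi=0$ (so $z=a$, $b=\sqrt{a^2-\F^2}$) shows both sides equal the positive real $\sqrt{2(a+b)/b}$, pinning the sign to $+$. Global single-valuedness on $\C\setminus[-\F,\F]$ then follows from the observation that, as $z$ crosses the ray $(-\infty,-\F)$ (so $\phi$ jumps by $\mp 2\pi$), the principal-branch factors $e^{i\phi/2}$ and $\sqrt{b\cos\phi + ia\sin\phi}$ each pick up a factor of $-1$, which cancel in the ratio.

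\textbf{Modulus, argument, and main obstacle.} The modulus formula follows from $|b\cos\phi + ia\sin\phi|^2 = b^2\cos^2\phi + a^2\sin^2\phi = b^2 + (a^2-b^2)\sin^2\phi = b^2 + \F^2\sin^2\phi > 0$. For the argument bound I would write $\W{\F}(z) = y + 1/y$ with $y = ((z+\F)/(z-\F))^{1/4}$, where the Remark forces $\arg y \in (-\pi/4,\pi/4)$ strictly. Setting $y = r e^{i\alpha}$, one has $\Re(y + 1/y) = (r + r^{-1})\cos\alpha > 0$ and $\Im(y+1/y)/\Re(y+1/y) = \frac{r^2-1}{r^2+1}\tan\alpha$, whose absolute value is strictly less than $|\tan\alpha| < 1$, yielding $\arg \W{\F}(z) \in (-\pi/4,\pi/4]$ as claimed. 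The one genuinely delicate point throughout is the branch bookkeeping — verifying that the principal fourth and square roots combine consistently in Step 1 and that the sign in the final formula is uniformly $+$ on $\C\setminus[-\F,\F]$; the Remark keeping $(z\pm\F)/(z\mp\F)$ off $(-\infty,0]$ is what makes this bookkeeping clean.
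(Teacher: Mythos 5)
Your proof is correct but takes a genuinely different route from the paper's. The paper's proof is a direct algebraic factorization: it rewrites $z=\tfrac{a+b}{2}e^{i\phi}+\tfrac{a-b}{2}e^{-i\phi}$ and uses $a^2=b^2+\F^2$ to show that $z\pm\F=\bigl(\sqrt{\tfrac{a+b}{2}}e^{i\phi/2}\pm\sqrt{\tfrac{a-b}{2}}e^{-i\phi/2}\bigr)^2$ is literally a perfect square, so the two fourth roots in the definition of $\W{\F}$ can be resolved immediately and their sum collapses to the claimed formula with no branch discussion at all. You instead compute $\W{\F}(z)^2$ first — via the squared-sum identity and analytic continuation from the ray $z>\F$ — and then recover $\W{\F}$ itself by extracting a square root and pinning the sign by continuity and connectedness. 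Both routes are sound; the paper's is shorter and avoids the sign bookkeeping entirely because the perfect-square factorization lands directly on the right branch, whereas your route deliberately leaves the branch ambiguous and resolves it afterward. For the argument bound you also diverge: the paper's one-line observation is that each fourth root has argument in $(-\pi/4,\pi/4]$ and that sector is a convex cone closed under addition, while your $\Im/\Re$ computation for $y+1/y$ is more work but in fact yields the strict interval $(-\pi/4,\pi/4)$, slightly stronger than stated. One small nitpick: in your square-root step, the cleanest justification that the sign is globally $+$ is that $\C\setminus[-\F,\F]$ is connected, both $\W{\F}$ and your candidate formula are continuous and zero-free with equal squares, so their ratio is a continuous $\{\pm1\}$-valued function, hence constant — your per-ray continuity check is fine but this phrasing closes the argument more tightly.
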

\begin{proof}
Using the parametrization 
\begin{align}
z&=a \cos\phi+i b \sin\phi= \left(\tfrac{a+b}{2}+\tfrac{a-b}{2}\right)\cos \phi+i \left(\tfrac{a+b}{2}-\tfrac{a-b}{2}\right) \sin\phi\nn\\
&=\tfrac{a+b}{2} \e^{i\phi}+ \tfrac{a-b}{2} \e^{-i\phi}=\left( \sqrt{\tfrac{a+b}{2}} \e^{i\frac{\phi}{2}}\pm \sqrt{\tfrac{a-b}{2}} \e^{-i\frac{\phi}{2}}\right)^2\mp\sqrt{a^2-b^2},
\end{align}
and $a^2=b^2+\F^2$, we get
\begin{align}
z\pm \F &=\left( \sqrt{\tfrac{a+b}{2}} \e^{i\frac{\phi}{2}}\pm \sqrt{\tfrac{a-b}{2}} \e^{-i\frac{\phi}{2}}\right)^2.
\end{align}
Inserting this in $\W{\F}$, we find
\begin{align}
\W{\F}(z)&= \left( \tfrac{ \sqrt{\tfrac{a+b}{2}} \e^{i\frac{\phi}{2}}+ \sqrt{\tfrac{a-b}{2}} \e^{-i\frac{\phi}{2}} }{ \sqrt{\tfrac{a+b}{2}} \e^{i\frac{\phi}{2}}- \sqrt{\tfrac{a-b}{2}} \e^{-i\frac{\phi}{2}}} \right)^{1/2}+\left( \tfrac{ \sqrt{\tfrac{a+b}{2}} \e^{i\frac{\phi}{2}}- \sqrt{\tfrac{a-b}{2}} \e^{-i\frac{\phi}{2}} }{ \sqrt{\tfrac{a+b}{2}} \e^{i\frac{\phi}{2}}+ \sqrt{\tfrac{a-b}{2}} \e^{-i\frac{\phi}{2}}} \right)^{1/2}\nn \\
&=\frac{\sqrt{2(a+b)} \e^{i\frac{\phi}{2}} }{ \sqrt{ \tfrac{a+b}{2}\e^{i\phi} -\tfrac{a-b}{2} \e^{-i\phi} } } .
%\end{align}
\intertext{From this we can also see that}
%\begin{align}
\vert \W{\F}(z) \vert^2&=\frac{\vert \sqrt{2(a+b)} \e^{i\frac{\phi}{2}} \vert^2}{ \vert \sqrt{b\cos\phi+i a\sin \phi  } \vert^2 } = \frac{2(a+b)}{\sqrt{b^2+\F^2 \sin^2\phi}}>0.
\end{align}
The last claim is obvious as the argument of the root in the standard convention fulfills $-\pi/4 < \arg \left(z^{1/4}\right) \le \pi/4$ for all $z\in \C\setminus \{0\}$.
\end{proof}

\subsubsection{Domain of Holomorphy of $\U{\F}$, $\T{\F}$ and $\W{\F}$} \label{subsec_holomorph}
\begin{lemma}\label{lemma_holomorh_U_W}
Let $\F \in (0,\infty)$. Then $\U{\F}$, $\T{\F}$ and $\W{\F}$ are holomorphic on $\C\setminus [-F,F]$.
\end{lemma}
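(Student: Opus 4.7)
The plan is to handle $\T{\F}$ first, then read off $\U{\F}$ from it, and finally deal with $\W{\F}$ separately.

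For $\T{\F}$, the idea is to show that $\sqrt{z^{2}-\F^{2}}$ admits a single-valued holomorphic branch on the (non-simply-connected) domain $\C\setminus[-\F,\F]$, and that the piecewise formula in Definition \ref{def_T} selects one such branch. To see single-valuedness, I note that $z^{2}-\F^{2}=(z-\F)(z+\F)$ is a non-vanishing holomorphic function on $\C\setminus[-\F,\F]$, and any loop in this domain is homotopic to one encircling the whole segment $[-\F,\F]$; along such a loop both factors $(z\pm\F)$ wind once around $0$, so $z^{2}-\F^{2}$ winds twice, and $\exp\!\bigl(\tfrac{1}{2}\log(z^{2}-\F^{2})\bigr)$ is single-valued. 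Now on $\Cp\setminus[-\F,\F]$ the piecewise formula is the principal branch of $\sqrt{z^{2}-\F^{2}}$, which is holomorphic off the set where $z^{2}-\F^{2}\in(-\infty,0]$; this set turns out to be $[-\F,\F]\cup i\R$, so the principal branch is in fact holomorphic on the open right half-plane minus $[-\F,\F]$ and on the open left half-plane, and the piecewise definition adds a sign to handle the left half-plane.

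The one delicate step is to check that the two pieces match continuously across the imaginary axis (their shared boundary, minus the origin). A direct computation: for $z=\pm\epsilon+iy$ with $y>0$ and $\epsilon\downarrow 0$, one has $z^{2}-\F^{2}\approx-(y^{2}+\F^{2})\pm 2i\epsilon y$, so the principal branch gives $\pm i\sqrt{y^{2}+\F^{2}}$ on the two sides; after the sign flip on $\C\setminus\Cp$, both sides yield the common limit $i\sqrt{y^{2}+\F^{2}}$. The analogous computation works on the negative imaginary axis. Hence $\T{\F}$ is continuous on $\C\setminus[-\F,\F]$ and holomorphic off the imaginary axis, so by the standard removability lemma (continuous $+$ holomorphic off an analytic arc implies holomorphic) it is holomorphic on all of $\C\setminus[-\F,\F]$. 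Holomorphy of $\U{\F}$ is then immediate from the unified formula $\U{\F}(z)=(z-\T{\F}(z))/\F$, which one checks matches both branches of Definition \ref{def_U}.

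For $\W{\F}$, the strategy is to show that the arguments $(z+\F)/(z-\F)$ and $(z-\F)/(z+\F)$ of the two fourth roots never lie in the branch cut $(-\infty,0]$ of the standard $(\cdot)^{1/4}$ whenever $z\in\C\setminus[-\F,\F]$. Solving $(z+\F)/(z-\F)=r$ for $z$ gives $z=\F(r+1)/(r-1)$; as $r$ sweeps $(-\infty,0]$ one sees by a one-variable check that $z$ sweeps exactly $[-\F,\F]$, so no $z$ outside the segment produces $r\in(-\infty,0]$. The reciprocal argument is covered by the same computation (or by $z\mapsto-z$ symmetry). Once this is established, each summand is the composition of a holomorphic rational function with the principal branch of $z^{1/4}$, hence holomorphic on $\C\setminus[-\F,\F]$, and $\W{\F}$ is holomorphic as a sum of two such.

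The main obstacle is the gluing step for $\T{\F}$: the piecewise definition is not obviously holomorphic at the imaginary axis, since the two local formulas use opposite signs of the square root. Once the continuity across that axis is verified by the explicit limits above and upgraded to holomorphy via the line-removability lemma, the remaining two claims follow with only elementary bookkeeping.
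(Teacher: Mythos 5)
Your proof is correct, and for the key function $\T{\F}$ it takes a genuinely different route from the paper. The paper never has to glue the two pieces of the definition: it observes that $\pm\sqrt{z^{2}-\F^{2}}=(\pm\sqrt{z^{2}})\sqrt{1-\F^{2}/z^{2}}$ and that $\pm\sqrt{z^{2}}=z$ on the respective halves of the plane, which collapses the piecewise definition into the single formula $z\sqrt{1-\F^{2}/z^{2}}$; holomorphy then reduces to locating where $1-\F^{2}/z^{2}\in(-\infty,0]$, and the imaginary axis drops out (since there $1-\F^{2}/z^{2}>0$). You instead keep the piecewise formula, identify that each piece is holomorphic off $[-\F,\F]\cup i\R$, verify by explicit boundary limits that the two pieces agree on the imaginary axis, and then upgrade continuity to holomorphy across that axis via a Morera-type removability lemma. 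Both are valid. Your version makes the branch-matching mechanism fully explicit and shows why the sign flip is exactly what is needed, at the cost of invoking the removability result as external input; the paper's unified-formula trick is slicker but hides the matching inside a square-root factorization identity that itself deserves a moment's thought. For $\W{\F}$ you and the paper use the same core idea (show the M\"obius arguments of the two fourth roots avoid $(-\infty,0]$), only with a different computation: you invert the M\"obius map and track the image of $(-\infty,0]$, while the paper splits $\tfrac{z\pm\F}{z\mp\F}$ into real and imaginary parts and argues directly; these are equivalent. Holomorphy of $\U{\F}$ is handled the same way in both: it inherits holomorphy from $\T{\F}$ via $\U{\F}=(z-\T{\F}(z))/\F$.
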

\begin{proof}
We can see $\W{\F}$ as consisting of compositions of the functions 
\begin{align}
&\frac{z\pm \F}{z \mp \F}, \qquad \text{and}\qquad z^{1/4}. \nn
\end{align}
The first ones are holomorphic if $z\neq \pm \F$ and the second one if $z\notin (-\infty,0]$. But 
\begin{align}
\frac{z\pm \F}{z \mp \F}= \frac{(z\pm \F)(\cc{z}\mp \F ) }{(z \mp \F)(\cc{z}\mp \F )}=\frac{\vert z\vert^2\pm 2i\F\Im\, z-\F^2}{\vert z\vert^2 \pm 2F\Re\, z +\F^2}
\end{align}
is only real if $\Im\, z=0$. For $z\in\R$ we see that $\tfrac{z\pm \F}{z \mp \F}\in (-\infty,0)$ is equivalent to $z\in (-\F,\F)$, and $\tfrac{z\pm \F}{z \mp \F}=0$ only if $z=\mp \F$. Therefore $\W{\F}$ is holomorphic if $z\notin [-\F,\F]$.

\begin{align}
\T{\F}(z)=\pm \sqrt{z^2-\F^2}=\pm \sqrt{z^2}\sqrt{1-\tfrac{\F^2}{z^2}}= z \sqrt{1-\tfrac{\F^2}{z^2}}
\end{align}
is holomorphic if either $1-\tfrac{\F^2}{z^2}$ is not real or if $1-\tfrac{\F^2}{z^2}$ is bigger than zero. The first one is fulfilled if $z\notin \R\cup i\R$. If $1-\tfrac{\F^2}{z^2}\in \R$ the second condition is fulfilled if either $z^2<0$, i.e.\ $z\in i\R$, or $\F^2<z^2$, i.e.\ $z\in \R\setminus [-\F,\F]$. Therefore $\T{\F}$ is holomorphic on $\C$ everywhere except on $[-\F,\F]$.

% meine urspruengliche Version
%$\T{\F}$ is holomorphic on $\Cp$ or on $\C\setminus \Cp$ if $z^2-\F^2 \notin (-\infty,0]$ which is equivalent to $z\notin [-\F,\F]\cup i\R$. Therefore $\T{\F}$ is certainly holomorphic on $\C\setminus ([-\F,\F]\cup i\R)$. We are only left to check $\T{\F}$ across the imaginary axis. We define 
%\begin{align}
%u(x,y)&=\Re \T{\F}(x+i y), &\text{and} \qquad v(x,y)&=\Im \T{\F}(x+i y).
%\end{align}
%Then the partial derivatives 
%\begin{align}
%\frac{\partial u}{\partial x}&=\Re \frac{x+i y}{\T{\F}(x+i y)}= \frac{\partial v}{\partial y}, &\text{and}\qquad \frac{\partial u}{\partial y}&=-\Im \frac{x+i y}{\T{\F}(x+i y)}= -\frac{\partial v}{\partial y},
%\end{align}
%exist for all $z=x+i y\in \C\setminus [-\F,\F]$ and fulfill the Cauchy-Riemann equations. Further for $y_0\in \R\setminus \{0\}$
%\begin{align}
%\lim_{\substack{ (x,y)\rightarrow (0,y_0) \\ x>0}} \sqrt{x^2-y^2+2i xy-\F^2}&=\sqrt{-y_0^2-F^2}= -\lim_{\substack{ (x,y)%\rightarrow (0,y_0) \\ x<0}} \sqrt{x^2-y^2+2i xy-\F^2}, \qquad \forall y_0> 0,
%\intertext{and}
%\lim_{\substack{ (x,y)\rightarrow (0,y_0) \\ x>0}} \sqrt{x^2-y^2+2i xy-\F^2}&=-\sqrt{-y_0^2-F^2}= -\lim_{\substack{ (x,y)%\rightarrow (0,y_0) \\ x<0}} \sqrt{x^2-y^2+2i xy-\F^2}, \qquad \forall y_0< 0,
%\end{align}
%and thus $\T{\F}$ is continuous across the imaginary axis. Then by the Looman-Menchoff theorem we know that $\T{\F}$ is %holomorphic on $\C\setminus [-\F,\F]$.  

Obviously $\U{\F}$ is holomorphic where $\T{\F}$ is holomorphic.
\end{proof}

\subsubsection{Limit $\t\rightarrow 0$ of $\U{\F}$, $\F^{-1} \U{\F}$ and $\W{\F}$}
\begin{lemma}\label{lemma_limit_U_W} 
Let $z\in \C\setminus\{0\}$. Then
\begin{align}
\lim_{\F\rightarrow 0} \U{F}(z)&=0, &\lim_{\F\rightarrow 0} \tfrac{\U{F}(z)}{\F}&=\tfrac{1}{2 z}, &\lim_{\F\rightarrow 0}\W{F}(z)&=\W{0}(z)=2. \nn
\end{align}
\end{lemma}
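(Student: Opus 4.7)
The plan is to rationalize $\U{\F}(z)$ so the limit $\F \to 0$ becomes transparent, and to rely on continuity of the principal fourth root for $\W{\F}$.

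First I would combine the two cases in Definition \ref{def_U} using Definition \ref{def_T} to write the uniform identity
$$\U{\F}(z) = \frac{z - \T{\F}(z)}{\F}, \qquad z \in \C\setminus[-\F,\F],$$
which holds on both $\Cp$ and its complement because the sign flips in Definitions \ref{def_U} and \ref{def_T} are matched. Multiplying numerator and denominator by $z + \T{\F}(z)$ and using $\T{\F}(z)^2 = z^2 - \F^2$ yields the key rationalized form
$$\U{\F}(z) = \frac{\F}{z + \T{\F}(z)}.$$

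Next I would check that $\T{\F}(z) \to z$ as $\F \to 0$ for every $z \in \C\setminus\{0\}$. With the standard branch, $\sqrt{z^2} = z$ if $z \in \Cp$ and $\sqrt{z^2} = -z$ if $z \notin \Cp$; in both cases the sign prescribed in Definition \ref{def_T} produces $\T{\F}(z) \to z$. Substituting this into the rationalized identity gives $z + \T{\F}(z) \to 2z \neq 0$, whence immediately $\U{\F}(z) \to 0$ and $\U{\F}(z)/\F \to 1/(2z)$, establishing the first two limits.

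For the third limit, $(z \pm \F)/(z \mp \F) \to 1$ as $\F \to 0$, and the principal fourth root is continuous at $1$ with value $1$, so each summand in Definition \ref{def_W} tends to $1$ and therefore $\W{\F}(z) \to 2$. Evaluating Definition \ref{def_W} at $\F = 0$ formally also gives $\W{0}(z) = 1 + 1 = 2$.

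There is no substantive obstacle here; the only care required is in tracking the sign conventions of Definitions \ref{def_U} and \ref{def_T} across $\Cp$ and its complement to verify that the unified formula $\U{\F}(z) = \F/(z + \T{\F}(z))$ holds everywhere off the slit $[-\F,\F]$ and that $\T{\F}(z) \to z$ uniformly in the quadrant assignment, so that the denominator stays bounded away from zero in the limit.
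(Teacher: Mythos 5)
Your proof is correct. It differs from the paper's in one substantive respect: the paper expands the square root in a Taylor series,
\begin{equation*}
\sqrt{z^2-\F^2}=\sqrt{z^2}\left(1-\tfrac{\F^2}{2z^2}+\lO\bigl(\tfrac{\F^4}{z^4}\bigr)\right),
\end{equation*}
and reads the limits off the leading terms, whereas you first rationalize via $\U{\F}(z)=(z-\T{\F}(z))/\F$ and $\T{\F}(z)^2=z^2-\F^2$ to obtain $\U{\F}(z)=\F/(z+\T{\F}(z))$ --- which is precisely the identity underlying Proposition \ref{prop_U_inv} --- and then pass to the limit using $\T{\F}(z)\to z$. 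The rationalized route is a little cleaner: it replaces the $\lO$-bookkeeping with a single continuity statement, and both limits for $\U{\F}$ fall out of the same algebraic identity. Both proofs must confront the same sign issue, namely that $\sqrt{z^2}=z$ on $\Cp$ and $\sqrt{z^2}=-z$ off it, which cancels against the sign choice in Definitions \ref{def_U} and \ref{def_T}; you make this explicit, while the paper subsumes it into the phrase ``taking the change of sign in front of the root into account.'' One small point worth stating more explicitly, as the paper does: when $z\in\R\setminus\{0\}$, the functions $\U{\F}$, $\T{\F}$, $\W{\F}$ are only defined for $\F<|z|$, so the limit is taken over such $\F$; and when $z$ is purely imaginary, $z^2-\F^2$ sits on the branch cut $(-\infty,0)$ of $\sqrt{\,\cdot\,}$, so the claim $\sqrt{z^2-\F^2}\to\sqrt{z^2}$ relies on the fact that both values lie on the same side of the cut (both are positive imaginary multiples) and therefore the standard-branch square root is continuous along that particular real path. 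These are minor gaps in exposition rather than in substance.
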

\begin{remark}
Note that the limits do not hold uniformly in the neighborhood of $z=0$.
\end{remark}

\begin{proof}
Choose $z\in \C\setminus\{0\}$. If $z\in \R$ then $\U{F}$ and $\W{\F}$ are only defined if $\vert \Re\, z\vert>\F$. For the limit $\F\rightarrow 0$ we can consider only such $\F$ so that $\U{F}$ and $\W{\F}$ are defined at $z$. Then it is obvious that
\begin{align}
\lim_{\F\rightarrow 0}\W{F}(z)&= \left( \tfrac{z}{z}\right)^{1/4}+\left( \tfrac{z}{z}\right)^{1/4}=2=\W{0}(z).
\end{align}
For small $\F$ we can expand the square root in $\U{F}$ 
\begin{align}
\tfrac{\sqrt{z^2-\F^2}}{\F}&= \tfrac{\sqrt{z^2}\left(1-\frac{\F^2}{2z^2}+\lO\left(\frac{\F^4}{z^4} \right)\right)}{\F}.
%\end{align}
\intertext{Thus we get for $\U{\F}$, taking the change of sign in front of the root into account,}
%\begin{align}
\lim_{\F\rightarrow 0} \U{\F}(z)&=\lim_{\F\rightarrow 0}\tfrac{\frac{\F^2}{2z}+\lO\left(\frac{\F^4}{z^3}\right)}{\F}=0.\\
\intertext{The same way we find}
\lim_{\F\rightarrow 0} \tfrac{\U{\F}(z)}{\F}&=\lim_{\F\rightarrow 0}\tfrac{1}{2z}+\lO\bigl(\tfrac{\F^2}{z^3}\bigr)=\tfrac{1}{2z}.
\end{align}
\end{proof}

\subsection{Plancherel-Rotach Asymptotics}\label{sec_pr}

%In Kapitel Symmetrien verschieben. Und hier einfach kurze Bemerkung machen
%The Hermite polynomials of order $m$ are odd functions for odd $m$ and even functions for even $m$ respectively. Obviously the same is true for the orthogonal polynomials of order $m$ as long as $\t$ is real. This follows also directly from the recursion relation. In the following we always expect $z$ to have positive real part.

\begin{defin}\label{def_X_delta}
For all $\delta>0$ and $0<\F<\infty$ we define the regions
%Alternativ ovale Region definieren / vielleicht auch noch einen Faktor \sqrt{2} reinnehmen, wie in der Relation von H_n zu \op oder auch nicht, und dann bei Annehmen, dass für delta/*sqrt{2} die Asymp. glm. gültig ist
\begin{align}
\Adelta&=\{z \in \C \left\vert \vert \Re\, z\vert \ge \F+\delta  \lor \vert \Im\, z\vert \ge \delta\right. \},\\
\Bdelta&=\{z \in \C \left\vert -\F+\delta \le \Re\, z \le \F-\delta \land - \delta < \Im\, z < \delta\right. \},\\
\Cdelta&=\{z \in \C \left\vert \vert F-\vert \Re\, z\vert \vert < \delta  \land - \delta < \Im\, z < \delta \right.\}.
\end{align}
\end{defin}
%evtl. Figure

\begin{remark}
For all $\delta>0$ and $0<\F<\infty$ we see that $\Adelta\cup\Bdelta\cup\Cdelta$ covers $\C$.
\end{remark}

\begin{remark}
When we set $\F=1$, there is a similar definition of regions $A_\delta$, $B_\delta$ and $C_\delta$ in \cite{dkmvz2001}. The shape of ours is a bit different, but this is nonrelevant as long as $\delta$ can be arbitrary small.
\end{remark}

%\remark{We will sometimes use $\AFdelta$ and $\BFdelta$ when we need $\delta$ scaled with $\F$.}

% Referenz zu wie der highest coefficient of H_m aussieht.
\begin{defin}
We define the monic $\sqrt{m}$-rescaled Hermite polynomial of order $m$ as
\begin{equation}\label{eq_q}
Q_m(z)=2^{-m} m^{-m/2} H_m\left(\sqrt{m} z\right).
\end{equation}
\end{defin}
The rescaling of the argument is as we need it for our orthonormal polynomials \eqref{eq_op_herm}, which we can now write as
\begin{align}
\op{m}(z)&=Q_m\left(\tfrac{\sqrt{2}z}{\F} \sqrt{\tfrac{\n}{m}} \right) \tfrac{(2mt)^{m/2} \sqrt{\n}(1-\t^2)^{1/4}}{\sqrt{\pi m!}}, \qquad \forall m\in\N_0, \label{eq_op_m}\\
\intertext{and in particular}
\op{\n}(z)&=Q_\n\left(\tfrac{\sqrt{2}z}{\F} \right) \tfrac{(2t)^{\n/2} \n^{\frac{\n+1}{2}}(1-\t^2)^{1/4}}{\sqrt{\pi \n!}},\label{eq_op_n}\\
\intertext{and}
\op{\n-1}(z)&=Q_\n\left(\tfrac{\sqrt{2}z}{\F} \sqrt{\tfrac{\n}{\n-1}} \right) \tfrac{(2t)^{\frac{\n-1}{2}} \n^{\frac{\n+1}{2}} \left(\frac{\n-1}{\n}\right)^{\frac{\n-1}{2}} (1-\t^2)^{1/4}}{\sqrt{\pi \n!}}. \label{eq_op_n-1}
\end{align}
% TO BE DONE! alter Label {eq_op_herm} teilweise evtl. durch neue (eq_op_n,eq_op_n-1) ersetzen!

\subsubsection{\sloppy Plancherel-Rotach Asymptotics Outside of the Interval $[-\sqrt{2},\sqrt{2}]$}\label{subsection_pr_asym_outside} \fussy 
The zeros of $Q_m$ lie on the interval $[-\sqrt{2},\sqrt{2}]$ (see \cite{sze}). For $z\in \C\setminus [-\sqrt{2},\sqrt{2}]$ we have the following asymptotics for $Q_m(z)$ as $m\rightarrow \infty$. 
\begin{align}\label{eq_pi}
\pi_m(z)&=\tfrac{1}{2} \left(\left(\tfrac{z-\sqrt{2}}{z+\sqrt{2}}\right)^{1/4}+\left(\tfrac{z+\sqrt{2}}{z-\sqrt{2}}\right)^{1/4}\right) \frac{\e^{\frac{m}{4} \left(z\mp\sqrt{z^2-2}\right)^2}}{\left(z\mp\sqrt{z^2-2}\right)^{m}}\nn\\ 
&=\tfrac{1}{2}\W{\!\sqrt{2}}(z) \e^{\frac{m}{2}\left(\U{\!\sqrt{2}}(z)\right)^2} \left(\sqrt{2} \U{\!\sqrt{2}}(z) \right)^{-m},
\end{align}
%\shortintertext{with}
with
\begin{align}
\frac{Q_m(z)}{\pi_m(z)}&=1+\lO\left(m^{-1}\right).\label{eq_pi_o}
\end{align}
\begin{remark}
The upper sign in the first line in \eqref{eq_pi} is valid for $z\in\Cp$ and the lower sign for $z\notin\Cp$. This comes from the fact that the Hermite polynomial have the symmetry $H_m(-z)=(-1)^m H_m(z)$.
\end{remark}

The asymptotics \eqref{eq_pi} can be found in \cite[p.206, (7.93)]{deift}, which was done with the help of a Riemann-Hilbert problem. It was first proved for real $z$ by M.~Plancherel and W.~Rotach in \cite{plancherel-rotach}, using the method of steepest descent. Note that \cite{plancherel-rotach} uses a different convention for the Hermite polynomials, there $H_m(z)$ is used for $2^{m/2} H_m(\sqrt{2}z)$ in terms of our notation. 

The $\lO$-term in \eqref{eq_pi_o} holds uniformly for $z$ in any closed domain that is a subset of $\C\setminus [-\sqrt{2},\sqrt{2}]$, i.e.\ for all $z\in \Xdelta{A}{\delta}{\sqrt{2}}$ for any fixed $\delta>0$ (see \cite[p.201, Theorem 8.22.9, (8.22.13)]{sze} and for a more general class of orthogonal polynomials \cite[p.51, Theorem 1.4(i)]{dkmvz2001}). See also \cite[Section 2.3, (2.16)]{wong} and \cite[p.10, Theorem 2.18]{dkmvz1999}.

\subsubsection{Plancherel-Rotach Asymptotics in a Neighborhood of the Interval $(-\sqrt{2}+\delta,\sqrt{2}-\delta)$}\label{subsection_pr_asym_inside}
% In a neighbourhood of the Interval... (hier noch nicht, aber an allen Orten im Kapitel über die Abschätzung des Fehlerterms)
On the interval $(-\sqrt{2},\sqrt{2})$ the asymptotics of $Q_m(z)$ is a bit more complicated as it has to take the oscillations between the zeros into account. There the Plancherel-Rotach asymptotics is
\begin{align}
\pi^r_m(z)= \Bigg( & \left( \tfrac{\sqrt{2}-z}{\sqrt{2}+z} \right)^{1/4} \cos\left(
   \tfrac{m}{2} \left(z \sqrt{2-z^2} +2 \arcsin \bigl(\tfrac{z}{\sqrt{2}}\bigr)-\pi\right)-\tfrac{\pi}{4} \right)\nn\\
& +\left(\tfrac{\sqrt{2}+z}{\sqrt{2}-z}\right)^{1/4} \cos \left(\tfrac{m}{2} \left(z \sqrt{2-z^2} +2\arcsin\bigl(\tfrac{z}{\sqrt{2}}\bigr)-\pi \right)+\tfrac{\pi }{4}\right)\Bigg) \e^{\frac{m}{2} \left(z^2-1-\log 2\right)}, \label{eq_pir}
\end{align}
%\intertext{with}
with
\begin{align}
Q_m(z)=& \pi^r_m(z)+ \e^{\frac{m}{2} \left(z^2-1-\log 2\right)} \lO\left(m^{-1}\right). \label{eq_pir_o}
\end{align}
The asymptotics \eqref{eq_pir} can be found from \cite[p.233, (7.187)]{deift} considering the case $V(x)=x^2$. A proof that restricts to the Hermite polynomial can be found in \cite{plancherel-rotach}.  

There exists a $\delta_0>0$ small enough so that the $\lO$-term in \eqref{eq_pir_o} holds uniformly for $z\in \Xdelta{B}{\delta}{\sqrt{2}}$ if $\delta\le \delta_0$ (see \cite[p.51, Theorem 1.4(ii)\&(v)]{dkmvz2001} and \cite[p.201, Theorem 8.22.9, (8.22.12)]{sze}). See also \cite{wong} and \cite{dkmvz1999}.

\begin{remark}
In \eqref{eq_pi_o} we have a relative error term as it must hold uniformly at $\vert z\vert \rightarrow \infty$, where $\vert Q_m(z)\vert$ and $\vert \pi_m(z)\vert$ get big, while a region near the zeros of $Q_m(z)$ and $\pi_m(z)$, i.e.\ a neighborhood of $[-\F,\F]$, is excluded. In $\eqref{eq_pir_o}$ on the other hand we must have an absolute error term for the part describing the oscillations around zero.
\end{remark}

\begin{remark}
Note that at the points $z=\pm \sqrt{2}$ neither \eqref{eq_pi} nor \eqref{eq_pir} holds. But there exists a further approximation which is valid there or even in the neighborhood $\Xdelta{C}{\delta}{\sqrt{2}}$ if $\delta<\delta_0$, see \cite[Section 7.7]{deift}, \cite[Chapitre II]{plancherel-rotach}, \cite[Theorem 8.22.9, (8.22.14)]{sze}, \cite[Theorem 1.4(iii)\&(iv)]{dkmvz2001}, \cite[Section 5, (5.16)]{wong} and \cite[Theorem 2.18]{dkmvz1999}. But we are not going to use this approximation there as we will find a way without using it.
\end{remark}

\subsection{Asymptotics of Identities Outside of the Interval $[-\F,\F]$} \label{sec_asymp_ident_outside}
% TO BE DONE schauen, ob diese Schreibweise (\Fz, etc.) überhaupt noch benötigt wird, nachdem überall $\U{\F}$ eingeführt wurde. Falls \Fz dann nicht mehr häufig vorkommt, all diese Bemerkungen und die Makros löschen
The parameter $\F$ will be present very frequently in the following calculations, so we will often use the notion $\Fz=\tfrac{z}{\F}$ to shorten the computations. 

\begin{remark}
If $\t=\sqrt{5}-2$ then we have $F=1$ and therefore $z=\Fz$.
\end{remark}

\begin{prop}\label{prop_asym} Let $\delta>0$, $0<\t<1$ and $z\in \Adelta$. 
Then as $\n\rightarrow\infty$
\begin{align}
\left.\frac{\partial \HM}{\partial w}\right\vert_{w=\cc{z}}+\left.\frac{\partial \HM}{\partial z}\right\vert_{w=\cc{z}} &=\e^{\n \f(z)+\frac{1}{2}\log \n + \log (\gp(z))-\frac{1}{2}\log(2 \pi^3)+o(1)}\nn \\ % TO BE DONE überall o(1) durch \lO(\n^{-1}) ersetzen / nochmals mit Mathematica überprüfen, ob dies stimmt!
&=\sqrt{\tfrac{\n}{2 \pi^3}} \gp(z) \e^{\n \f(z)}(1+o(1)), \label{eq_lemma_asym1} \\
\shortintertext{and}
%\log\left(-i\left(\left.\frac{\partial \HM}{\partial w}\right\vert_{w=\cc{z}}-\left.\frac{\partial \HM}{\partial z}\right\vert_{w=\cc{z}} \right)\right)=\n \f(z)+\frac{1}{2}\log(\n) + \gm(z)+o(1),
%TO BE DONE! Vorzeichen evtl. ändern (linke Seite mit -1 multiplizieren) und gleichzeitig Vorzeichen in \gm ändern! überall anpassen (auch neue Plots von \gm)!
i \left.\frac{\partial \HM}{\partial w}\right\vert_{w=\cc{z}}-i \left.\frac{\partial \HM}{\partial z}\right\vert_{w=\cc{z}} &= \e^{\n \f(z)+\frac{1}{2}\log \n + \log (\gm(z))-\frac{1}{2}\log(2 \pi^3)+o(1)}\nn \\
&=\sqrt{\tfrac{\n}{2 \pi^3}} \gm(z) \e^{\n \f(z)}(1+o(1)) \label{eq_lemma_asym2},
\end{align}
where
\begin{align}
\f(z)%&=\t \Re{(z^2)}-\vert z\vert^2+\Re{\left(\left(\frac{z}{F}-\sqrt{\frac{z^2}{F^2}-1} \right)^2\right)}\nn\\
%&\phantom{=}-2 \log\left\vert\frac{z}{F}-\sqrt{\frac{z^2}{F^2}-1} \right\vert+\log\t+1\nn\\
&= \tfrac{4\t}{1-\t^2}\left(\t\Re{(\Fz^2)}-\vert \Fz\vert^2\right) + \Re{\left(\left(\Fz\mp\sqrt{\Fz^2-1} \right)^2\right)}\nn\\
& \phantom{=}\quad -2 \log\left\vert\Fz\mp\sqrt{\Fz^2-1} \right\vert+\log\t+1, \nn\\
&=\t \Re(z^2) -\vert z\vert^2+\Re\left( \left(\U{\F}(z)\right)^2 \right)-2\log\left\vert\U{\F}(z)\right\vert+\log t+1 \label{eq_f},\\
%\gp(z)&=\log\left(\Re\left(\frac{z}{F}-\sqrt{\frac{z^2}{F^2}-1}\right)\right) +2\log\left\vert\sqrt{\frac{z}{F}+1}+\sqrt{\frac{z}{F}-1}\right\vert\\
%&\phantom{=}-\frac{1}{2}\log\left\vert\frac{z^2}{F^2}-1 \right\vert+\log\left(\frac{1-\t}{\sqrt{\t}} \right)-\frac{3}{2}\log(2\pi)\\
%&=\log\left(\Re\left(\Fz-\sqrt{\Fz^2-1}\right)\right) +2\log\left\vert\sqrt{\Fz+1}+\sqrt{\Fz-1}\right\vert\\
%&\phantom{=}-\frac{1}{2}\log\left\vert \Fz^2-1 \right\vert+\log\left(\frac{1-\t}{\sqrt{\t}} \right)-\frac{3}{2}\log(2\pi),
\gp(z)%&=-\Re\left(\frac{z}{F}-\sqrt{\frac{z^2}{F^2}-1}\right) \left\vert\sqrt{\frac{z}{F}+1}+\sqrt{\frac{z}{F}-1}\right\vert^2\\
%&\phantom{=}\qquad \left\vert\frac{z^2}{F^2}-1 \right\vert^{-1/2} \frac{1-\t}{2\sqrt{\t}}\\
&=-\Re\left(\Fz\mp\sqrt{\Fz^2-1}\right) \left\vert\sqrt{\Fz+1}+\sqrt{\Fz-1}\right\vert^2 \left\vert \Fz^2-1 \right\vert^{-1/2} \tfrac{1-\t}{2\sqrt{\t}} \nn \\
&=-\tfrac{1-\t}{2\sqrt{\t}}\Re\left(\U{\F}(z)\right)\left\vert\W{\F}(z)\right\vert^2, \label{eq_gp}\\
\shortintertext{and}
%\gm(z)&=\log\left(\Im\left(\frac{z}{F}-\sqrt{\frac{z^2}{F^2}-1}\right)\right) +2\log\left\vert\sqrt{\frac{z}{F}+1}+\sqrt{\frac{z}{F}-1}\right\vert\\
%&\phantom{=}-\frac{1}{2}\log\left\vert\frac{z^2}{F^2}-1 \right\vert+\log\left(\frac{1+\t}{\sqrt{\t}} \right)-\frac{3}{2}\log(2\pi)\\
%&=\log\left(\Im\left(\Fz-\sqrt{\Fz^2-1}\right)\right) +2\log\left\vert\sqrt{\Fz+1}+\sqrt{\Fz-1}\right\vert\\
%&\phantom{=}-\frac{1}{2}\log\left\vert \Fz^2-1 \right\vert+\log\left(\frac{1+\t}{\sqrt{\t}} \right)-\frac{3}{2}\log(2\pi)\\
%&=\gp(z)+\log\left(\frac{\Im\left(\Fz-\sqrt{\Fz^2-1}\right)}{\Re\left(\Fz-\sqrt{\Fz^2-1}\right)} \right) +\log\left(\frac{1+\t}{1-\t} \right).
\gm(z)%&=-\Im\left(\frac{z}{F}-\sqrt{\frac{z^2}{F^2}-1}\right) \left\vert\sqrt{\frac{z}{F}+1}+\sqrt{\frac{z}{F}-1}\right\vert^2\\
%&\phantom{=}\qquad\left\vert\frac{z^2}{F^2}-1 \right\vert^{-1/2} \frac{1+\t}{2\sqrt{\t}}\\
&=-\Im\left(\Fz\mp\sqrt{\Fz^2-1}\right) \left\vert\sqrt{\Fz+1}+\sqrt{\Fz-1}\right\vert^2\left\vert \Fz^2-1 \right\vert^{-1/2} \tfrac{1+\t}{2\sqrt{\t}}\nn \\
&= -\tfrac{1+\t}{2\sqrt{\t}}\Im\left(\U{\F}(z)\right)\left\vert\W{\F}(z)\right\vert^2\nn \\
&=\gp(z) \tfrac{1+\t}{1-\t} \frac{\Im\,\U{\F}(z)}{\Re\,\U{\F}(z)}. \label{eq_gm}
\end{align}
\end{prop}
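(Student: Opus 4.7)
Since $z\in\Adelta$ is bounded away from $[-\F,\F]$, the rescaled arguments $\sqrt{2}z/\F$ and $\sqrt{2}z/\F\cdot\sqrt{\n/(\n-1)}$ appearing in \eqref{eq_op_n} and \eqref{eq_op_n-1} both lie in $\C\setminus[-\sqrt{2},\sqrt{2}]$ for $\n$ large enough, so the Plancherel-Rotach asymptotic \eqref{eq_pi}-\eqref{eq_pi_o} applies uniformly on closed subsets of $\Adelta$. First I would record the scaling identities $\U{\sqrt{2}}(\sqrt{2}z/\F)=\U{\F}(z)$ and $\W{\sqrt{2}}(\sqrt{2}z/\F)=\W{\F}(z)$, which follow directly from Definitions \ref{def_U} and \ref{def_W}, substitute $\pi_m$ for $Q_m$ in \eqref{eq_op_n} and \eqref{eq_op_n-1}, and apply Stirling's formula to $\sqrt{\n!}$. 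This yields an asymptotic of the form
\[
\op{\n}(z)=\frac{\n^{1/4}(1-\t^2)^{1/4}}{2\sqrt{\pi}(2\pi)^{1/4}}\,\W{\F}(z)\,\bigl(\sqrt{\t e}/\U{\F}(z)\bigr)^{\n}\e^{\n\U{\F}(z)^2/2}(1+\lO(\n^{-1})),
\]
together with an analogous formula for $\op{\n-1}$ obtained after a one-term Taylor expansion $\U{\sqrt{2}}(\sqrt{2}z/\F\cdot\sqrt{\n/(\n-1)})=\U{\F}(z)-z\U{\F}(z)/(2(\n-1)\T{\F}(z))+\lO(\n^{-2})$, whose contribution enters only as an $\lO(1)$ correction to the prefactor.

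The next step is to substitute into \eqref{eq_id1} and \eqref{eq_id2} at $w=\cc{z}$. Because $\t$ is real, the $\op{k}$ have real coefficients, so $\op{k}(\cc{z})=\cc{\op{k}(z)}$; hence the two terms in each bracket are complex conjugates, the $+$ sum equals $2\Re[\op{\n}(\cc{z})\op{\n-1}(z)]$ and the $-$ difference equals $2i\Im[\op{\n}(\cc{z})\op{\n-1}(z)]$. Combining with the explicit factor $\e^{\n(-\cc{z}z+\t\cc{z}^2/2+\t z^2/2)}=\e^{\n(-\vert z\vert^2+\t\Re(z^2))}$, and using $\tfrac{\n}{2}(\U{\F}(\cc{z})^2+\U{\F}(z)^2)=\n\Re(\U{\F}(z)^2)$ together with $-\n\log(\U{\F}(\cc{z})\U{\F}(z))=-2\n\log\vert\U{\F}(z)\vert$, the full exponent assembles into $\n\f(z)$ as stated in \eqref{eq_f} (the constants $\log\t+1$ come from the powers of $\t$ and the $e^{\n/2}$ in the Stirling prefactors, halved because the $(\n-1)$-case is paired with the $\n$-case).

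The $z$-dependent part of the prefactor contains $\W{\F}(\cc{z})\W{\F}(z)=\vert\W{\F}(z)\vert^2$ from the two copies of $\W{\F}$, and taking real (resp.\ imaginary) parts extracts $\Re\U{\F}(z)$ (resp.\ $\Im\U{\F}(z)$) out of the $\lO(1)$ shift corrections to $\U{\F}$. Combined with the factor $\mp\sqrt{(1\mp\t)/(1\pm\t)}$ in \eqref{eq_id1}-\eqref{eq_id2} and the overall scale $\sqrt{\n/(2\pi^3)}$, this yields the coefficient $-(1\mp\t)/(2\sqrt{\t})$ appearing in $g_\pm$, matching \eqref{eq_gp}-\eqref{eq_gm}.

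The main obstacle is bookkeeping the $\lO(1)$ prefactor corrections, which arise from (i) the Taylor shift of the argument of the $(\n-1)$th Hermite polynomial by $\sqrt{\n/(\n-1)}$, affecting both the exponent $(\n-1)\U{\sqrt{2}}(\cdot)^2/2$ and the power $\U{\sqrt{2}}(\cdot)^{-(\n-1)}$; (ii) the Stirling-remainder mismatch between $\sqrt{\n!}$ and $\sqrt{(\n-1)!}$; and (iii) the shifts in the powers of $\t$, $\n$ and $e$ between $\op{\n}$ and $\op{\n-1}$. One must verify that these contributions collapse into exactly the compact form involving $\Re/\Im\U{\F}(z)$, $\vert\W{\F}(z)\vert^2$, and $(1\mp\t)/(2\sqrt{\t})$. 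Uniformity of the final $\lO(\n^{-1})$ error on $\Adelta$ is inherited from the uniformity of \eqref{eq_pi_o} on $\Xdelta{A}{\delta}{\sqrt{2}}$ combined with the uniformity of Stirling.
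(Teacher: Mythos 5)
Your proposal is correct and takes essentially the same route as the paper: Plancherel--Rotach on $\Adelta$, the scaling identities $\U{\sqrt2}(\sqrt2 z/\F)=\U{\F}(z)$ and $\W{\sqrt2}(\sqrt2 z/\F)=\W{\F}(z)$, Stirling on $\sqrt{\n!}$, a Taylor expansion of the $\sqrt{\n/(\n-1)}$-shifted argument, and substitution into \eqref{eq_id1}--\eqref{eq_id2} at $w=\cc{z}$. Your observation that the bracket equals $2\Re[\op{\n}(\cc z)\op{\n-1}(z)]$ resp.\ $2i\Im[\op{\n}(\cc z)\op{\n-1}(z)]$ is exactly what the paper exploits when it reduces $\U{\F}(z)\pm\U{\F}(\cc z)$ to $2\Re\U{\F}(z)$ resp.\ $2i\Im\U{\F}(z)$ in \eqref{eq_asymp1}.
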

\begin{proof}
When $z\in \Adelta$, we can use the Plancherel-Rotach asymptotics \eqref{eq_pi} as $\n\rightarrow \infty.$ Using \eqref{eq_q} and \eqref{eq_pi} in \eqref{eq_op_herm}, we get for the orthonormal polynomial of order $\n$
\begin{align}
\op{\n}(z)&=\tfrac{Q_\n\left(\sqrt{2}\Fz\right)}{\pi_\n\left(\sqrt{2}\Fz\right)}\tfrac{\n^{\frac{\n+1}{2}}}{2\sqrt{\pi}\sqrt{\n !}}\e^{\frac{\n}{2}\left(\Fz\mp\sqrt{\Fz^2-1}\right)^2} \left( \Fz\mp\sqrt{\Fz^2-1}\right)^{-\n} \nn \\
& \phantom{=}\quad\cdot \t^{\frac{\n}{2}} \left(1-\t^2\right)^{1/4} \left(\Fz^2-1\right)^{-1/4} \left(\sqrt{\Fz+1}+\sqrt{\Fz-1}\right) \nn \\
&=\tfrac{Q_\n\left(\frac{\sqrt{2}z}{\F}\right)}{\pi_\n\left(\frac{\sqrt{2}z}{\F} \right)}\tfrac{\n^{\frac{\n+1}{2}}}{2\sqrt{\pi}\sqrt{\n !}}\e^{\frac{\n}{2}\left(\U{\F}(z)\right)^2} \left( \U{\F}(z) \right)^{-\n} \t^{\frac{\n}{2}} \left(1-\t^2\right)^{1/4} \W{\F}(z),\label{eq_op1}\\
\intertext{and for the orthonormal polynomial of order $\n-1$}
\op{\n-1}(z)&=\tfrac{Q_{\n-1}\left(\sqrt{\frac{\n}{\n-1}}\sqrt{2}\Fz\right)}{\pi_{\n-1}\left(\sqrt{\frac{\n}{\n-1}}\sqrt{2}\Fz\right)}\tfrac{\n^{\tfrac{\n+1}{2}}}{2\sqrt{\pi}\sqrt{\n !}}\left(\tfrac{\n-1}{\n}\right)^{\n-1}\e^{\frac{\n}{2}\left(\Fz \mp \sqrt{\Fz^2-1+\frac{1}{\n}}\right)^2} \nn \\
&\phantom{=}\quad\cdot \left( \Fz \mp \sqrt{\Fz^2-1+\tfrac{1}{\n}}\right)^{-(\n-1)} \t^{\frac{\n-1}{2}} \left(1-\t^2\right)^{1/4} \nn \\
&\phantom{=}\quad\cdot\left(\Fz^2-1+\tfrac{1}{\n}\right)^{-1/4} \left(\sqrt{\Fz+\sqrt{1-\tfrac{1}{\n}}}+\sqrt{\Fz-\sqrt{1-\tfrac{1}{\n}}}\right) \nn \\
&=\tfrac{Q_{\n-1}\left(\sqrt{\frac{\n}{\n-1}}\frac{\sqrt{2} z}{\F}\right)}{\pi_{\n-1}\left(\sqrt{\frac{\n}{\n-1}}\frac{\sqrt{2} z}{\F}\right)}\tfrac{\n^{\tfrac{\n+1}{2}}}{2\sqrt{\pi}\sqrt{\n !}}\left(\tfrac{\n-1}{\n}\right)^{\n-1}  \e^{\frac{\n}{2}\Bigl( \sqrt{1-\frac{1}{\n}} \U{\F}\left(  z \cramped[\scriptscriptstyle]{\left(1-\frac{1}{\n}\right)^{-1/2}} \right) \Bigr)^2}  \nn \\
&\phantom{=}\quad\cdot \left( \sqrt{1-\tfrac{1}{\n}}  \U{\F}\left(  z \left(1-\tfrac{1}{\n}\right)^{-1/2} \right) \right)^{-(\n-1)} \t^{\frac{\n-1}{2}} \left(1-\t^2\right)^{1/4} \W{\F}\left(  z \left(1-\tfrac{1}{\n}\right)^{-1/2} \right).\label{eq_op2}
\end{align}

\eqref{eq_op1} and \eqref{eq_op2} are still exact, even for finite $\n$. Now we are going to use the following expansions up to order $\lO(\n^{-1})$ as $\n\rightarrow \infty$.
\begin{align}
\MoveEqLeft \frac{\n}{2}\left( \sqrt{1-\tfrac{1}{\n}}  \U{\F}\left(  z \left(1-\tfrac{1}{\n}\right)^{-1/2} \right) \right)^2=\frac{\n}{2}\left(\Fz\mp\sqrt{\Fz^2-1}\sqrt{1 +\tfrac{1}{\n(\Fz^2-1)}}\right)^2\nn\\
&=\frac{\n}{2}\left(\Fz\mp\sqrt{\Fz^2-1}\left(1+\tfrac{1}{2\n(\Fz^2-1)}+\lO(\n^{-2})\right)\right)^2\nn\\
&=\frac{\n}{2}\left(\left(\Fz\mp\sqrt{\Fz^2-1}\right)^2-\tfrac{\Fz\mp\sqrt{\Fz^2-1}}{\pm\n\sqrt{\Fz^2-1}}+\lO(\n^{-2})\right)\nn\\
&=\frac{\n}{2}\left(\Fz\mp\sqrt{\Fz^2-1}\right)^2 -\tfrac{\Fz\mp\sqrt{\Fz^2-1}}{\pm2\sqrt{\Fz^2-1}}+\lO(\n^{-1})\nn\\
&=\frac{\n}{2}\left(\U{\F}(z)\right)^2-\tfrac{\F \U{\F}(z)}{2\T{\F}(z)}+\lO(\n^{-1}),\label{eq_expan1}\\
\MoveEqLeft \left( \sqrt{1-\tfrac{1}{\n}}  \U{\F}\left(  z \left(1-\tfrac{1}{\n}\right)^{-1/2} \right) \right)^{-(\n-1)}=\left(\Fz\mp\sqrt{\Fz^2-1+\tfrac{1}{\n}}\right)^{-(\n-1)}\nn\\
&=\left(\Fz\mp\sqrt{\Fz^2-1}-\tfrac{1}{\pm2 \n\sqrt{\Fz^2-1}}+\lO(\n^{-2})\right)^{-(\n-1)}\nn\\
&=\left(\Fz\mp\sqrt{\Fz^2-1}\right)^{-(\n-1)}\left(1-\tfrac{1}{\pm 2 \n \sqrt{\Fz^2-1}\left( \Fz\mp\sqrt{\Fz^2-1}\right)}+\lO(\n^{-2})\right)^{-(\n-1)}\nn\\
&=\left(\Fz\mp\sqrt{\Fz^2-1}\right)^{-(\n-1)}\left(\e^{\frac{1}{\pm 2 \sqrt{\Fz^2-1}\left( \Fz\mp\sqrt{\Fz^2-1}\right)}}+\lO(\n^{-1})\right)\nn\\
&=\left(\U{\F}(z)\right)^{-(\n-1)}\left(\e^{\frac{\F}{2 \T{\F}(z) \U{\F}(z)}}+\lO(\n^{-1})\right),\label{eq_expan2}\\
\MoveEqLeft \left(\Fz^2-1+\tfrac{1}{\n}\right)^{-1/4}=\left(\Fz^2-1\right)^{-1/4}\left(1+\lO(\n^{-1})\right),\label{eq_expan3}\\
\MoveEqLeft \sqrt{\Fz+\sqrt{1-\tfrac{1}{\n}}}+\sqrt{\Fz-\sqrt{1-\tfrac{1}{\n}}}=\sqrt{\Fz+1+\lO(\n^{-1})}+\sqrt{\Fz-1+\lO(\n^{-1})}\nn\\
&=\left(\sqrt{\Fz+1}+\sqrt{\Fz-1}\right)\left(1+\lO(\n^{-1})\right).\label{eq_expan4}\\
\intertext{Combining \eqref{eq_expan3} and \eqref{eq_expan4} we get}
\MoveEqLeft \W{\F}\left(  z \left(1-\tfrac{1}{\n}\right)^{-1/2} \right)=\W{\F}(z)\left(1+\lO(\n^{-1})\right).\label{eq_expan4b}
\end{align}
For our purpose, we only need from the Plancherel-Rotach asymptotics \eqref{eq_pi_o} that
\begin{equation}\label{eq_pr_asymp}
\tfrac{Q_\n\left(\sqrt{2}\Fz\right)}{\pi_\n\left(\sqrt{2}\Fz\right)}=1+o(1), \qquad\text{and}\qquad  \tfrac{Q_{\n-1}\left(\sqrt{\tfrac{\n}{\n-1}}\sqrt{2}\Fz\right)}{\pi_{\n-1}\left(\sqrt{\tfrac{\n}{\n-1}}\sqrt{2}\Fz\right)}=1+o(1).
\end{equation} 
According to Section \ref{subsection_pr_asym_outside} this is valid for all $\tilde{\delta}>0$ when $\sqrt{2}\tfrac{z}{\F}$ and $\sqrt{\tfrac{\n}{\n-1}}\sqrt{2}\tfrac{z}{\F}$ lie in $\Xdelta{A}{\tilde{\delta}}{\sqrt{2}}$. When we set $\tilde{\delta}=\sqrt{2}\tfrac{\delta}{\F}$, this is equivalent to $z\in \Adelta\cap \Xdelta{A}{\delta}{\sqrt{1-\frac{1}{\n}}\F}\subset\Adelta$ as in the condition of the proposition. 

Using \eqref{eq_expan1}, \eqref{eq_expan2}, \eqref{eq_expan4b} and \eqref{eq_pr_asymp} in \eqref{eq_op1} and \eqref{eq_op2}, we get
\begin{align}
\MoveEqLeft[3] \op{\n}(\cc{z})\op{\n-1}(z)\pm\op{\n-1}(\cc{z})\op{\n}(z)\nn\\
%= {} & \frac{\n^{\n+1}}{4\pi \n !} \t^{\n-\frac{1}{2}} \left(1-\t^2\right)^{1/2} \left(\tfrac{\n-1}{\n}\right)^{\n-1}\e^{\frac{\n}{2}\left(\left( \cc{\Fz}-\sqrt{\cc{\Fz}^2-1}\right)^2+ \left(\Fz-\sqrt{\Fz^2-1}\right)^2\right)} \nn\\
%& \cdot \left( \left(\cc{\Fz}-\sqrt{\cc{\Fz}^2-1}\right) \left(\Fz-\sqrt{\Fz^2-1}\right)\right)^{-\n} \left( \left(\cc{\Fz}^2-1\right) \left(\Fz^2-1\right)\right)^{-1/4} \nn\\
%& \cdot \left(\sqrt{\cc{\Fz}+1}+\sqrt{\cc{\Fz}-1}\right) \left(\sqrt{\Fz+1}+\sqrt{\Fz-1}\right) \left(1 + o(1)\right)\nn\\
%& \cdot \Bigg\lgroup \e^{-\frac{\Fz-\sqrt{\Fz^2-1}}{2\sqrt{\Fz^2-1}}+\lO(\n^{-1})} \left(\Fz-\sqrt{\Fz^2-1}\right)\left(\e^{\tfrac{1}{2 \sqrt{\Fz^2-1}\left( \Fz-\sqrt{\Fz^2-1}\right)}}+\lO(\n^{-1})\right) \nn\\
%& \pm  \e^{-\frac{\cc{\Fz}-\sqrt{\cc{\Fz}^2-1}}{2\sqrt{\cc{\Fz}^2-1}}+\lO(\n^{-1}) }  \left(\cc{\Fz}-\sqrt{\cc{\Fz}^2-1}\right) \left(\e^{\tfrac{1}{2 \sqrt{\cc{\Fz}^2-1}\left( \cc{\Fz}-\sqrt{\cc{\Fz}^2-1}\right)}}+\lO(\n^{-1})\right) \Bigg\rgroup \nn\\
= {} & \frac{\n^{\n+1}}{4\pi \n !} \t^{\n-\frac{1}{2}} \left(1-\t^2\right)^{1/2} \left(\tfrac{\n-1}{\n}\right)^{\n-1}\e^{\frac{\n}{2}\left(\left( \U{\F}(\cc{z}) \right)^2+ \left( \U{\F}(z) \right)^2\right)} \bigl( \U{\F}(\cc{z}) \U{\F}(z) \bigr)^{-\n} \nn\\
& \cdot  \W{\F}(\cc{z}) \W{\F}(z) \left(1 + o(1)\right)\nn\\
& \begin{aligned}[t] \cdot \Bigg\lgroup &\e^{-\frac{\F \U{\F}(z)}{2\T{\F}(z)}+\lO(\n^{-1})} \U{\F}(z) \left(\e^{\frac{\F}{2 \T{\F}(z) \U{\F}(z)}}+\lO(\n^{-1})\right) \nn\\
& \pm \e^{-\frac{\F \U{\F}(\cc{z})}{2\T{\F}(\cc{z})}+\lO(\n^{-1})} \U{\F}(\cc{z}) \left(\e^{\frac{\F}{2 \T{\F}(\cc{z}) \U{\F}(\cc{z})}}+\lO(\n^{-1})\right) \Bigg\rgroup \end{aligned} \nn\\
%
%= {} & \frac{\n^{\n+1}}{4\pi \n !} \t^{\n-\frac{1}{2}} \left(1-\t^2\right)^{1/2} \left(\tfrac{\n-1}{\n}\right)^{\n-1} \e^{\n\Re\left(\left( \Fz-\sqrt{\Fz^2-1}\right)^2 \right)} \nn\\
%& \cdot \left\vert \Fz-\sqrt{\Fz^2-1}\right\vert^{-2\n} \left\vert \Fz^2-1 \right\vert^{-1/2} \left\vert\sqrt{\Fz+1}+\sqrt{\Fz-1}\right\vert^2 \left(1+o(1)\right) \nn\\
%& \cdot \Bigg\lgroup \left(\Fz-\sqrt{\Fz^2-1}\right) \e^{-\frac{\Fz-\sqrt{\Fz^2-1}}{2\sqrt{\Fz^2-1}}} \e^{\tfrac{1}{2 \sqrt{\Fz^2-1}\left( \Fz-\sqrt{\Fz^2-1}\right)}}  \nn\\
%& \pm \left(\cc{\Fz}-\sqrt{\cc{\Fz}^2-1}\right) \e^{-\frac{\cc{\Fz}-\sqrt{\cc{\Fz}^2-1}}{2\sqrt{\cc{\Fz}^2-1}} } \e^{\tfrac{1}{2 \sqrt{\cc{\Fz}^2-1}\left( \cc{\Fz}-\sqrt{\cc{\Fz}^2-1}\right)}}+\lO(\n^{-1})\Bigg\rgroup \nn \\
= {} & \frac{\n^{\n+1}}{4\pi \n !} \t^{\n-\frac{1}{2}} \left(1-\t^2\right)^{1/2} \left(\tfrac{\n-1}{\n}\right)^{\n-1} \e^{\n\Re\left(\left( \U{\F}(z) \right)^2 \right)} \left\vert \U{\F}(z) \right\vert^{-2\n} \left\vert \W{\F}(z) \right\vert^2 \left(1+o(1)\right) \nn\\
& \cdot \left\lgroup \U{\F}(z) \e^{-\frac{\F \U{\F}(z) }{2 \T{\F}(z)}+\tfrac{\F}{2 \T{\F}(z) \U{\F}(z)}} \pm \U{\F}(\cc{z}) \e^{-\frac{\F \U{\F}(\cc{z}) }{2 \T{\F}(\cc{z})}+\tfrac{\F}{2 \T{\F}(\cc{z}) \U{\F}(\cc{z})}} +\lO(\n^{-1})\right\rgroup \nn \\
%
%= {} & \frac{\n^{\n+1}}{2\pi \n !} \t^{\n-\frac{1}{2}} \left(1-\t^2\right)^{1/2} \left(\tfrac{\n-1}{\n}\right)^{\n-1} \e^{\n\Re\left(\left( \Fz-\sqrt{\Fz^2-1}\right)^2 \right)} \left\vert \Fz-\sqrt{\Fz^2-1}\right\vert^{-2\n} \nn \\
%& \cdot \left\vert \Fz^2-1 \right\vert^{-1/2} \left\vert\sqrt{\Fz+1}+\sqrt{\Fz-1}\right\vert^2 \e^1 \left(1+o(1) \right)\left\{\begin{array}{l}\Re \left(\Fz-\sqrt{\Fz^2-1}\right)\\ i \Im \left(\Fz-\sqrt{\Fz^2-1}\right)\end{array}\right. , %\label{eq_asymp1}
= {} & \frac{\n^{\n+1}}{2\pi \n !} \t^{\n-\frac{1}{2}} \left(1-\t^2\right)^{1/2} \left(\tfrac{\n-1}{\n}\right)^{\n-1} \e^{\n\Re\left(\left( \U{F}(z) \right)^2 \right)} \left\vert \U{F}(z) \right\vert^{-2\n} \left\vert \W{F}(z) \right\vert^2  \nn \\
& \cdot \e^1 \left(1+o(1) \right)\left\{\begin{array}{l}\Re\, \U{\F}(z),\\ i \Im\, \U{\F}(z), \end{array}\right. \label{eq_asymp1}
\end{align}
where in the last step we have used that
\begin{equation}\label{eq_asymp1b}
\e^{-\frac{\F \U{\F}(z) }{2 \T{\F}(z)}+\frac{\F}{2 \T{\F}(z) \U{\F}(z)}}= \e^{-\frac{\Fz\mp\sqrt{\Fz^2-1}}{\pm 2\sqrt{\Fz^2-1}} + \frac{1}{\pm 2 \sqrt{\Fz^2-1}\left( \Fz\mp \sqrt{\Fz^2-1}\right)}}%=\e^{-\frac{\cc{\Fz}-\sqrt{\cc{\Fz}^2-1}}{2\sqrt{\cc{\Fz}^2-1}} } \e^{\tfrac{1}{2 \sqrt{\cc{\Fz}^2-1}\left( \cc{\Fz}-\sqrt{\cc{\Fz}^2-1}\right)}}
=\e^1.
\end{equation}
% TO BE DONE!  Noch genauer Anschauen, dass alles Funktioniert wenn Re z=0, da dann für z und \cc{z} U_F das Vorzeichen wechselt und somit das eine der beiden Polynome auch das Vorzeichen wechselt, während das andere es nicht wechselt. Dies auch entsprechend in einer "Note" diskutieren!!

When $z$ lies on the imaginary axis (or on the real axis) $\Re\, \U{\F}(z)$ (or $\Im\, \U{\F}(z)$ respectively) is zero and can't be factored out of the $(1+o(1))$ term. But \eqref{eq_asymp1} is still true as in this case the left-hand side is obviously identically zero for all $\n\in\N$.

Using the Stirling approximation
\begin{align}
\n !&= \sqrt{2 \pi \n}\n^\n \e^{-\n} \left(1+\lO(\n^{-1})\right), \label{eq_stirling}
\intertext{and the expansion}
%\left(\n-1\right)\log\left(1-\tfrac{1}{\n}\right)=-1+\lO(\n^{-1}),
\left(\frac{\n-1}{\n}\right)^{\n-1}&=\e^{-1}\left(1+\lO(\n^{-1})\right), \label{eq_expan5}
\end{align}
we get from \eqref{eq_asymp1}
\begin{align}
\MoveEqLeft[3] \op{\n}(\cc{z})\op{\n-1}(z)\pm\op{\n-1}(\cc{z})\op{\n}(z)\nn \\
= {} & \tfrac{\sqrt{\n} \e^\n }{(2\pi)^{3/2}} \t^\n  \sqrt{\tfrac{1-\t^2}{\t}} \e^{\n\Re\left(\left(\U{\F}(z) \right)^2 \right)} \left\vert \U{\F}(z) \right\vert^{-2\n} \left\vert \W{\F}(z) \right\vert^2 (1 +o(1)) \left\{\begin{array}{l} \Re\, \U{\F}(z),\\ i \Im\, \U{\F}(z). \end{array}\right. \label{eq_asymp2}
\end{align}

Using \eqref{eq_id1} and \eqref{eq_id2} at $w=\cc{z}$ and then \eqref{eq_asymp2}, we finally get
\begin{align}
\left.\frac{\partial \HM}{\partial w}\right\vert_{w=\cc{z}}+\left.\frac{\partial \HM}{\partial z}\right\vert_{w=\cc{z}} &=-\sqrt{\tfrac{1-\t}{1+\t}} \bigl(\op{\n}(\cc{z}) \op{\n-1}(z)+\op{\n-1}(\cc{z}) \op{\n}(z)\bigr) \e^{\n\left(\t \Re(z^2)-\vert z\vert^2\right)} \nn\\
&=-\tfrac{\sqrt{\n} }{(2 \pi)^{3/2}} \tfrac{1-\t}{\sqrt{\t}} \e^{\n \left(  1+\log(\t)+\Re\left(\left( \U{\F}(z) \right)^2 \right) -2 \log \vert \U{\F}(z) \vert +\t \Re{(z^2)} -\vert z\vert^2 \right)  } \nn \\
&\phantom{=} \quad \cdot \vert \W{\F}(z)\vert^2 \Re{\,\U{F}(z)}(1+o(1)),  \label{eq_asymp3}
\intertext{and}
i \left.\frac{\partial \HM}{\partial w}\right\vert_{w=\cc{z}}-i\left.\frac{\partial \HM}{\partial z}\right\vert_{w=\cc{z}} &=\sqrt{\tfrac{1+\t}{1-\t}} i \bigl( \op{\n}(\cc{z}) \op{\n-1}(z)- \op{\n-1}(\cc{z}) \op{\n}(z) \bigr)\e^{\n\left(\t \Re(z^2)-\vert z\vert^2\right) } \nn\\
&=-\tfrac{\sqrt{\n} }{(2 \pi)^{3/2}} \tfrac{1+\t}{\sqrt{\t}} \e^{\n \left(  1+\log(\t)+\Re\left(\left( \U{\F}(z) \right)^2 \right) -2 \log \vert \U{\F}(z) \vert +\t \Re{(z^2)} -\vert z\vert^2 \right)  } \nn \\
&\phantom{=} \quad \cdot \vert \W{\F}(z)\vert^2 \Im{\,\U{F}(z)}(1+o(1)), \label{eq_asymp4}
\end{align}
which proves the proposition.
\end{proof}

%neu schreiben / überflüssig
%We have assumed that $\Re z>0$. As $\op{\n}(\cc{z})\op{\n-1}(z)$ is an odd function the identities \eqref{eq_id1} and \eqref{eq_id2} itself are odd too when we evaluate them at $w=\cc{z}$. From this we can extend the asymptotics of Proposition \ref{prop_asym} for $\Re z<0$. For $\Re z=0$ and $w=\cc{z}=-z$ it is obvious that the right-hand side of identity \eqref{eq_id1} is identically zero even for finite $\n$ whereas identity \eqref{eq_id2} is zero on the real line $w=\cc{z}=z$. So far we have now discussed the asymptotics on the whole complex plane except the interval $[-F,F]$. 

\subsection{Asymptotics of Identities on the Interval $(-\F,\F)$} \label{sec_asymp_ident_inside}
Here we only look at real $z\in (-F,F)$, which is equivalent to $\Fz\in (-1,1)$. 

\begin{prop}\label{prop_asym_B} Let $0<\delta<1$, $0<\t<1$ and $z\in (-\F(1-\delta),\F(1-\delta))$. Then
\begin{align}
\left.\frac{\partial \HM}{\partial w}\right\vert_{w=z}+\left.\frac{\partial \HM}{\partial z}\right\vert_{w=z} &=o(1), \qquad (\n\rightarrow\infty),\\
\intertext{and}
i \left.\frac{\partial \HM}{\partial w}\right\vert_{w=z}-i \left.\frac{\partial \HM}{\partial z}\right\vert_{w=z} &=0,\qquad \forall \n\in\N.
\end{align}
\end{prop}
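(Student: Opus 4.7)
My plan is to reduce the proposition to a direct application of the identities of Proposition \ref{prop_id} at $w=z$, followed by the Plancherel--Rotach asymptotics inside the oscillatory interval.

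The second identity is essentially free. Setting $w=z$ in \eqref{eq_id2} gives
\begin{equation}
\left.\frac{\partial \HM}{\partial w}\right\vert_{w=z}-\left.\frac{\partial \HM}{\partial z}\right\vert_{w=z}
=\sqrt{\tfrac{1+\t}{1-\t}}\bigl(\op{\n}(z)\op{\n-1}(z)-\op{\n-1}(z)\op{\n}(z)\bigr)\,\e^{\n(\t-1)z^{2}}=0
\end{equation}
identically in $\n$, which delivers the second claim with no asymptotics needed.

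For the first claim I would apply \eqref{eq_id1} at $w=z$, getting
\begin{equation}
\left.\frac{\partial \HM}{\partial w}\right\vert_{w=z}+\left.\frac{\partial \HM}{\partial z}\right\vert_{w=z}
=-2\sqrt{\tfrac{1-\t}{1+\t}}\,\op{\n}(z)\,\op{\n-1}(z)\,\e^{\n(\t-1)z^{2}}.
\end{equation}
The task is then to bound $\op{\n}(z)\,\op{\n-1}(z)\,\e^{\n(\t-1)z^{2}}$ uniformly for $z\in(-\F(1-\delta),\F(1-\delta))$. Since $z\in(-\F(1-\delta),\F(1-\delta))$ is equivalent to $\sqrt{2}z/\F\in(-\sqrt{2}(1-\delta),\sqrt{2}(1-\delta))\subset \Xdelta{B}{\tilde\delta}{\sqrt{2}}$ for some $\tilde\delta>0$, the Plancherel--Rotach asymptotics \eqref{eq_pir}--\eqref{eq_pir_o} applies uniformly to both $Q_{\n}(\sqrt{2}z/\F)$ and $Q_{\n-1}(\sqrt{\n/(\n-1)}\,\sqrt{2}z/\F)$. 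I would substitute this into the representations \eqref{eq_op_n} and \eqref{eq_op_n-1} of $\op{\n}$ and $\op{\n-1}$, use Stirling's formula \eqref{eq_stirling} for $\n!$ and the expansion \eqref{eq_expan5} for $((\n-1)/\n)^{(\n-1)/2}$, and collect terms. Because the oscillatory trigonometric factors in $\pi_{\n}^{r}$ and $\pi_{\n-1}^{r}$ are uniformly bounded on $\Xdelta{B}{\tilde\delta}{\sqrt{2}}$, the whole product reduces, after routine bookkeeping, to a $\lO(\sqrt{\n})$ prefactor times a single exponential of the form
\begin{equation}
\exp\!\Bigl(\n\Bigl[\tfrac{(1-\t)^{2}z^{2}}{2\t}+\log\t\Bigr]\Bigr).
\end{equation}
(Here the contribution $\n(\t-1)z^{2}$ coming from the right-hand side of \eqref{eq_id1}, the two copies of $\frac{\n}{2}(y^{2}-1-\log 2)$ with $y=\sqrt{2}z/\F$ coming from the Plancherel--Rotach asymptotics, and the Stirling-simplified combinatorial prefactor all combine into this single exponent.)

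The remaining point is to verify that this exponent is strictly negative on the required interval. Setting $h(z,\t)=\frac{(1-\t)^{2}z^{2}}{2\t}+\log\t$, the bound $z^{2}\le\F^{2}(1-\delta)^{2}=\frac{4\t(1-\delta)^{2}}{1-\t^{2}}$ gives
\begin{equation}
h(z,\t)\le \frac{2(1-\t)(1-\delta)^{2}}{1+\t}+\log\t.
\end{equation}
A quick one-variable calculus check on $f(\t)=\frac{2(1-\t)}{1+\t}+\log\t$ shows $f'(\t)=\frac{(1-\t)^{2}}{\t(1+\t)^{2}}\ge 0$ with $f(1)=0$, so $f(\t)<0$ for all $\t\in(0,1)$; consequently $h(z,\t)\le f(\t)-\frac{2(1-\t)(2\delta-\delta^{2})}{1+\t}<0$ on the interval. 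Thus the product decays exponentially in $\n$, and multiplying by the $\lO(\sqrt{\n})$ prefactor and by the uniformly bounded oscillatory factors still yields $o(1)$, in fact with uniform exponential rate on $(-\F(1-\delta),\F(1-\delta))$.

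The only real obstacle is keeping the exponent bookkeeping honest: one has to verify that the Gaussian weight $\e^{\n(\t-1)z^{2}}$ on the right-hand side of \eqref{eq_id1}, the factors $\frac{n}{2}(\frac{2z^{2}}{\F^{2}}-1-\log 2)$ from two shifted oscillatory Plancherel--Rotach formulas, and the logarithm of the combinatorial prefactor $\frac{(2\t)^{\n-1/2}\n^{\n+1}}{\pi\n!}\,((\n{-}1)/\n)^{(\n-1)/2}(1-\t^{2})^{1/2}$ combine to exactly $\n[\frac{(1-\t)^{2}z^{2}}{2\t}+\log\t]$ with all the $\log 2$'s, $\log\n$'s, and $e^{\n}$'s cancelling. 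This is the same style of computation already performed in Proposition \ref{prop_asym}, so the risk is purely algebraic.
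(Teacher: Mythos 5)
Your proposal is correct and follows essentially the same route as the paper: the second identity vanishes trivially at $w=z$, and for the first you invoke \eqref{eq_id1}, apply the Plancherel--Rotach asymptotics \eqref{eq_pir}--\eqref{eq_pir_o} on $\Xdelta{B}{\tilde\delta}{\sqrt{2}}$, bound the oscillatory factors, and combine the Gaussian weight, the Plancherel--Rotach exponents, and the Stirling/combinatorial prefactor into the exponent $\frac{(1-\t)^2 z^2}{2\t}+\log\t$ (which equals the paper's $\f^r(z)=\frac{2(1-\t)}{1+\t}\Fz^2+\log\t$ after writing $\F^2=\frac{4\t}{1-\t^2}$), then use the monotonicity of $\t\mapsto\frac{2(1-\t)}{1+\t}+\log\t$ to conclude strict negativity uniformly on $(-\F(1-\delta),\F(1-\delta))$. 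This is precisely the paper's argument, down to the derivative computation $f'(\t)=\frac{(1-\t)^2}{\t(1+\t)^2}$.
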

\begin{proof}
When $z\in\R$ it is obvious for all $\n\in \N$ that \eqref{eq_id2} evaluated at $w=z$ is identically zero whereas \eqref{eq_id1} simplifies to 
\begin{equation}\label{eq_id1_real}
\left.\frac{\partial \HM}{\partial w}\right\vert_{w=z}+\left.\frac{\partial \HM}{\partial z}\right\vert_{w=z}=-2 \sqrt{\frac{1-\t}{1+\t}} \op{\n}(z)\op{\n-1}(z)\e^{-\n(1-\t)z^2}.
\end{equation}
Let $\n_0>\frac{2}{\delta} \gg 1$. Then it follows for all $\n>\n_0$ and $z\in (-\F(1-\delta),\F(1-\delta))$ that 
\begin{equation}
\sqrt{\frac{\n}{\n-1}}\sqrt{2}\Fz=\sqrt{2} \frac{z}{\F}\left(1+\frac{1}{2\n}\right)+\lO(\n^{-2})\in \left(-\sqrt{2}(1-\tfrac{\delta}{2}),\sqrt{2}(1-\tfrac{\delta}{2})\right).
\end{equation}
Therefore \eqref{eq_pir_o} is valid for $Q_{\n-1}\left(\sqrt{\tfrac{\n}{\n-1}}\sqrt{2}\Fz\right)$ and $ Q_\n\left(\sqrt{2}\Fz\right)$. On the real line we can estimate the cosine in \eqref{eq_pir} by one. Then the Plancherel-Rotach asymptotics in \eqref{eq_q} together with \eqref{eq_op_herm} gives an estimation for the orthonormal polynomial of order $\n$,
\begin{align}
\vert\op{\n}(z)\vert&\le \left( \tfrac{ \sqrt{1+\Fz} + \sqrt{1-\Fz} }{(1-\Fz^2)^{1/4}} + o(1) \right) \e^{\n \left(\Fz^2-\frac{1+\log 2}{2}\right)} \tfrac{(2\t\n)^{\frac{\n}{2}} }{\sqrt{\n!} } \tfrac{\sqrt{\n} (1-\t^2)^{1/4}}{\sqrt{\pi}},\\
\intertext{and for the orthonormal polynomial of order $\n-1$,}
\vert\op{\n-1}(z)\vert&\le \left( \tfrac{ \sqrt{\sqrt{1-\frac{1}{\n}}+\Fz} + \sqrt{\sqrt{1-\frac{1}{\n}}-\Fz} }{(1-\frac{1}{\n}-\Fz^2)^{1/4}} + o(1) \right)\e^{\n \left(\Fz^2-\frac{(\n-1)(1+\log 2)}{2\n}\right)}\tfrac{(2\t(\n-1))^{\frac{\n-1}{2}} }{\sqrt{\n!} } \tfrac{\n (1-\t^2)^{1/4}}{\sqrt{\pi}}.
\end{align}
From this we find for the asymptotics of identity \eqref{eq_id1_real} 
\begin{align}\label{eq_id1_real1}
\left.\left\vert\tfrac{\partial \HM}{\partial w}+\tfrac{\partial \HM}{\partial z}\right\vert \right\vert_{w=z}\le& 2 \sqrt{\tfrac{1-\t}{1+\t}} \left( \tfrac{ \sqrt{1+\Fz} + \sqrt{1-\Fz} }{(1-\Fz^2)^{1/4}}\tfrac{ \sqrt{\sqrt{1-\frac{1}{\n}}+\Fz} + \sqrt{\sqrt{1-\frac{1}{\n}}-\Fz} }{(1-\frac{1}{\n}-\Fz^2)^{1/4}} + o(1) \right) \nonumber\\
& \quad \cdot \e^{\n \left(2\Fz^2-1-\log 2\right)+\frac{1+\log 2}{2}} \tfrac{\n^{\n+1} \left(\frac{\n-1}{\n}\right)^{\frac{\n-1}{2}} (2\t)^{\n-\frac{1}{2}} \sqrt{1-\t^2} }{\n! \pi} \e^{-\n\F^2(1-\t)\Fz^2}\nonumber\\
=&\tfrac{\sqrt{\n}}{\sqrt{2\pi^3}} (g_{\n+}^r(z)+o(1)) \e^{\n \f^r(z)},\\
\intertext{where}
f^r(z)=&\frac{2 \Fz^2(1-\t)}{1+\t}+\log \t,\label{eq_fr}\\
\intertext{and}
g_{\n+}^r(z)=&\frac{2(1-\t)}{\sqrt{\t}} \left( \frac{ \sqrt{1+\Fz} + \sqrt{1-\Fz} }{(1-\Fz^2)^{1/4}}\tfrac{ \sqrt{\sqrt{1-\frac{1}{\n}}+\Fz} + \sqrt{\sqrt{1-\frac{1}{\n}}-\Fz} }{(1-\frac{1}{\n}-\Fz^2)^{1/4}} \right).
\end{align}
In the last step in \eqref{eq_id1_real} we have used the Stirling approximation \eqref{eq_stirling}. As $\sqrt{1-\tfrac{1}{\n}}=1-\tfrac{1}{2\n}+\lO(\n^{-2})>1-\delta$, it is obvious that $g_{\n+}^r$ is continuous on $(-\F(1-\delta),\F(1-\delta))$ and it converges to
\begin{equation}\label{eq_gpr}
\gp^r(z)=\lim_{\n\rightarrow\infty} g_{\n+}^r(z)=\tfrac{4(1-\t)\left(1+\sqrt{1-\Fz^2}\right) }{\sqrt{\t} \sqrt{1-\Fz^2}},
\end{equation}
which is bounded on $(-\F(1-\delta),\F(1-\delta))$. From \eqref{eq_fr} we see that \begin{equation}
\f^r(z)<f^r(\F)=\frac{2(1-\t)}{1+\t}+\log \t, \qquad \forall z\in (-\F(1-\delta),\F(1-\delta)), 0<\t<1.
\end{equation}
$\f^r(\F)$ goes to minus infinity when $\t$ goes to zero and it reaches zero when $\t$ goes to one. Because
\begin{equation}
\frac{\dd (\f^r(\F))}{\dd \t}=\frac{(1-\t)^2}{\t(1+\t)^2}>0, \qquad \forall 0<\t<1, \label{eq_fr_estim}
\end{equation}
$\f^r(\F)$ as a function of $\t$ is strictly monotonic increasing from minus infinity to zero for $\t\in (0,1)$. Therefore $\f^r(z)$ is strictly negative for all $z\in (-\F(1-\delta),\F(1-\delta))$ and all $\t\in (0,1)$. Because of this, $g_{\n+}^r \e^{\n \f^r(z)}$ must converge to zero when $\n\rightarrow \infty$, which proves the proposition. 
\end{proof}

\begin{remark}
We have found the asymptotics of the identities \eqref{eq_id1} and \eqref{eq_id2} when $z$ is outside of $[\F,\F]$ and when $z$ is inside of $(-\F,\F)$. But the points $\pm\F$ are not included as neither \eqref{eq_pi_o} nor \eqref{eq_pir_o} holds there. We are going to care for those later in Chapter \ref{sec_verification}.
\end{remark}

%The right-hand side of identity \eqref{eq_id2} is zero on the real line which is true for $w=z\in [-F,F]$ too. Therefore we are left to look at identity \eqref{eq_id1} for $w=z\in [-F,F]$ which simplifies to
%\begin{equation}\label{eq_id1_real}
%\left.\frac{\partial}{\partial w} \HM(w,z)\right\vert_{w=z}+\left.\frac{\partial}{\partial z} \HM(w,z)\right\vert_{w=z}=-2\sqrt{\tfrac{1-\t}{1+\t}}\op{\n}(z) \op{\n-1}(z)\e^{\n(\t z^2-z^2)}.
%\end{equation}
%From Section \ref{sec_zeros} we know that the orthonormal polynomial of order $m$ has $m$ distinct zeros on the real line and between any two neighboured zeros there is exactly one zero of the orthonormal polynomial of order $m-1$. The zeros of the rescaled Hermite polynomial $H_m(\sqrt{m} z)$ lie on the interval $[-\sqrt{2},\sqrt{2}]$ (see []). As $\op{\n}$ and $\op{\n-1}$ are proportional to $H_\n(\sqrt{\n} \sqrt{2} \tfrac{z}{F})$ and $H_{\n-1}(\sqrt{\n-1} \sqrt{2} \tfrac{z}{F})$ its zeros lie between $-F$ and $F$. Therefore $\op{\n}(z) \op{\n-1}(z)$ is a polynomial with $2\n-1$ distinct zeros on the interval $[-F,F]$. We expect that the right-hand side of \eqref{eq_id1_real} becomes highly oscillating for large $\n$. This is not a problem for our needs as the amplitude of the oscillation is suppressed by the exponential term. %ist wohl falsch

\subsection{Properties of $\protect\f$ and $\protect\gpm$}\label{sec_f_g}

%NEU SCHREIBEN / TO BE DONE
%auf Lemma U,W holom. hinweisen. Dann Re, \vert .. \vert ^2, etc. reell glatte Funktionen => folgt dass \f, \gpm glatt sind als reelle Funktion (zuerst f(x,y) und g(x,y) einführen. Aus Definition von f mit U und W folgt auch Symmetrie. Evtl. muss man so auch gleich überall z\in \C annehmen. 

Now we are going to discuss how the functions $\f$ and $\gpm$ look like. In Proposition \ref{prop_f_zero} we will show that the global maximum of $\f$ lies on the ellipse $\pEt$, where $\f$ takes the value zero. In Proposition \ref{prop_g} we will find a relation between $\gpm$ and the second derivatives of $\f$.

\subsubsection{Symmetries and Basic Properties of $\f$ and $\gpm$}
%Zuerst Symmetrien Diskutieren, dann muss man nachher nicht mehr \Re>0 erwähnen
According to \eqref{eq_f}, \eqref{eq_gp} and \eqref{eq_gm} we see that $\f$ and $\gpm$ are continuous real-valued functions on $\C\setminus [-\F,\F]$. The identities and its asymptotics must share the same symmetries (see Section \ref{subsec_symm_id} for the symmetries of identities \eqref{eq_id1} and \eqref{eq_id2}). So $\f$, which is present in the exponent, is an even function. The sign change under the map $z\mapsto -z$ comes from $\gpm$, which are odd functions. Note that there is an additional change of sign in $\gm$ under the map $z\mapsto \cc{z}$, which comes from the multiplication of \eqref{eq_id2} with $i$ on the left-hand side of \eqref{eq_lemma_asym2}.

When we introduce $\f(x,y)=\f(x+i y)$ as a function of two real variables, we can write
\begin{align}
\f(x,y)= & \tfrac{4\t}{1-\t^2} \left(t \Fx^2-\t \Fy^2 -\Fx^2-\Fy^2\right)\nn\\
& + \tfrac{1}{2} \left(\Fx+i \Fy \mp\sqrt{\Fx^2-\Fy^2+2 i \Fx \Fy-1}\right)^2+\tfrac{1}{2} \left(\Fx-i \Fy \mp\sqrt{\Fx^2-\Fy^2-2 i \Fx \Fy-1}\right)^2\nn\\
& -\log \left(\bigl(\Fx+i \Fy \mp\sqrt{\Fx^2-\Fy^2+2 i \Fx \Fy-1}\bigr) \bigl(\Fx-i \Fy \mp\sqrt{\Fx^2-\Fy^2-2 i \Fx \Fy-1}\bigr)\right)\nn\\
& +\log \t +1 \nn\\
= & t x^2-t y^2 -x^2 -y^2 +\tfrac{1}{2}\left(\U{\F}(x,y)\right)^2 +\tfrac{1}{2}\left(\U{\F}(x,-y)\right)^2 \nn\\
& -\log\left(\U{\F}(x,y) \U{\F}(x,-y) \right) +\log\t+1 .\label{eq_f_xy}
\end{align}
Again we have used the notation $\Fx=\tfrac{x}{\F}$ and $\Fy=\tfrac{y}{\F}$.

The same way we use the functions $\gpm(x,y)=\gpm(x+i y)$. It is easy to see that $\f(x,y)$ and $\gpm(x,y)$ are smooth functions on $\{(x,y)\in\R^2\vert x\notin [-\F,\F] \lor y \neq 0\}$.

\subsubsection{$\f=0$ on the Ellipse $\pEt$}
\begin{lemma}\label{lemma_f_zero}
On the ellipse $\pEt$ the value of $\f$ is zero.
\end{lemma}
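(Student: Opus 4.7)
The plan is to carry out a direct computation in elliptic coordinates, exploiting the fact that the ellipse $\pEt$ is precisely the ellipse with foci $\pm\F$ and minor semi-axis $\mis$, so every point on it admits the parametrization provided by the machinery in Section \ref{sub_elliptic}.

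First, I would write an arbitrary $z\in\pEt$ as $z=\mas\cos\phi+i\mis\sin\phi$ with $\phi\in(-\pi,\pi]$, noting the identities $\mas^2-\mis^2=\F^2$ (which follows from $\F^2=\frac{4\t}{1-\t^2}$) and $\mas^2+\mis^2=\frac{2(1+\t^2)}{1-\t^2}$. Standard half-angle formulas then give $\Re(z^2)=\frac{\mas^2-\mis^2}{2}+\frac{\mas^2+\mis^2}{2}\cos 2\phi$ and $\vert z\vert^2=\frac{\mas^2+\mis^2}{2}+\frac{\mas^2-\mis^2}{2}\cos 2\phi$, so that after substituting the values above, a short algebraic manipulation yields
\begin{equation*}
\t\Re(z^2)-\vert z\vert^2=-1-\t\cos 2\phi.
\end{equation*}

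Next, by Proposition \ref{prop_U} applied with $b=\mis$, $a=\mas$, one has $\U{\F}(z)=r_\F(\mis)(\cos\phi-i\sin\phi)$ with $r_\F(\mis)=(\mas-\mis)/\F$. A direct simplification using $\mas-\mis=\frac{2\sqrt{\t}}{\sqrt{1-\t^2}}$ and $\F=\frac{2\sqrt{\t}}{\sqrt{1-\t^2}}\cdot\sqrt{\t}^{-1}\cdot\sqrt{\t}$ (or just $r_\F(\mis)^2=\frac{(\mas-\mis)^2}{\F^2}$ computed directly) gives $r_\F(\mis)=\sqrt{\t}$. Therefore $\vert\U{\F}(z)\vert^2=\t$, hence $2\log\vert\U{\F}(z)\vert=\log\t$, while $\bigl(\U{\F}(z)\bigr)^2=\t(\cos 2\phi-i\sin 2\phi)$, so $\Re\bigl((\U{\F}(z))^2\bigr)=\t\cos 2\phi$.

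Plugging all four contributions into formula \eqref{eq_f} yields
\begin{equation*}
\f(z)=\bigl(-1-\t\cos 2\phi\bigr)+\t\cos 2\phi-\log\t+\log\t+1=0,
\end{equation*}
as required. The main point of the argument is the recognition that $\mis$ is exactly the semi-axis for which $r_\F$ attains the value $\sqrt{\t}$; beyond this there is no real obstacle, since once elliptic coordinates are adopted every term in $\f$ becomes elementary. The hardest part is purely bookkeeping: keeping track of the identities relating $\mas,\mis,\F,\t$ so that the two $\cos 2\phi$ contributions cancel and the residual constants $-1$ and $-\log\t$ cancel against $+1$ and $+\log\t$.
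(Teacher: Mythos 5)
Your proof is correct and takes essentially the same approach as the paper: parametrize $\pEt$ as $z=\mas\cos\phi+i\mis\sin\phi$, evaluate each term of $\f$, and show the $\cos 2\phi$-dependent parts and the constants cancel. The only difference is cosmetic: you invoke Proposition \ref{prop_U} with $b=\mis$ (giving $\U{\F}(z)=\sqrt{\t}\,\e^{-i\phi}$ directly via $r_\F(\mis)=\sqrt{\t}$), whereas the paper re-derives that value inline by manipulating $\sqrt{\Fz^2-1}$ with the $1=\sin^2\phi+\cos^2\phi$ trick; both land on identical expressions for $\Re\bigl((\U{\F}(z))^2\bigr)$, $\log\vert\U{\F}(z)\vert^2$, and $\t\Re(z^2)-\vert z\vert^2$.
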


% TO BE DONE! Alles mit U schreiben, auch die weiteren Eigenschaften. Einmal die Werte von U und deren Ableitungen auf der Ellipse Ausdrücken, am besten in einem Lemma. Dann können sie überall verwendet werden!

% Evtl. etwas über confocal coordinates schreiben, die wir benutzen (evtl. im Kapitel über U). Zeigen, dass jeder Punkt z\in\C in diesen Koordinaten geschrieben werden können.

\begin{proof}
We can parametrize the ellipse $\pEt$ as
\begin{align}
z&=\mas \cos \phi + i \mis \sin \phi = \sqrt{\frac{1+\t}{1-\t}} \cos \phi + i \sqrt{\frac{1-\t}{1+\t}} \sin \phi, \qquad \phi\in (-\pi,\pi]\\
\shortintertext{which is equivalent to}
\Fz&=\frac{1+\t}{2\sqrt{\t}} \cos \phi + i \frac{1-\t}{2\sqrt{\t}} \sin \phi.
\end{align}
%As usual we only look at $z$ with $\Re z>0$ so that the parameter $\phi$ takes values in the interval $(-\tfrac{\pi}{2},\tfrac{\pi}{2})$. 

We are going to evaluate the terms of $\f$ on the ellipse. The important part happens in
\begin{align}
\tfrac{\T{\F}(z)}{\F}&=\pm \sqrt{\Fz^2-1} =\tfrac{ \pm \sqrt{(1+\t)^2\cos^2 \phi+2i(1-\t^2)\cos \phi\sin \phi-(1-\t)^2\sin^2 \phi-4\t} }{2\sqrt{\t}} \nn \\
&=\tfrac{ \pm \sqrt{(1-2\t+\t^2)\cos^2 \phi+2i(1-\t^2)\cos \phi\sin \phi-(1+2\t+\t^2)\sin^2 \phi}}{2\sqrt{\t}} \nn \\
&=\pm \tfrac{1}{2\sqrt{\t}}\sqrt{\left((1-\t)\cos \phi+i(1+\t)\sin \phi \right)^2}\nn \\
&=\tfrac{1}{2\sqrt{\t}}\left((1-\t)\cos \phi+i(1+\t)\sin \phi \right), \label{eq_f_zero1}
\intertext{when we use $1=\sin^2 \phi+\cos^2 \phi$ on the second line. This way we have got rid of the $-1$ term under the square root and it swaps the factors $(1+\t)^2$ and $(1-\t)^2$ in front of $\cos^2 \phi$ and $\sin^2 \phi$. So we get further}
\U{\F}(z) &=\Fz\mp\sqrt{\Fz^2-1}=\sqrt{\t}\cos \phi-i \sqrt{\t}\sin \phi. \label{eq_f_zero2}
\end{align}
Using \eqref{eq_f_zero2} we find for the terms of $\f$
\begin{align}
\Re{\left(\left(\U{\F}(z)\right)^2\right)} &= \Re \left( \t \cos^2\phi -2i\t \cos \phi\sin \phi-\t \sin^2\phi\right) =\t \left( \cos^2\phi - \sin^2\phi\right), \label{eq_f_zero3}
\intertext{and}
\log\left(\left\vert \U{\F}(z) \right\vert^2\right)&=\log \left(\t\left(\cos \phi-i \sin \phi\right) \left(\cos \phi+i \sin \phi\right)\right) = \log\left(\t(\cos^2\phi+\sin^2\phi)\right)=\log \t. \label{eq_f_zero4}
\intertext{The remaining term of $\f$ is}
\tfrac{4\t(\t \Re(\Fz^2)-\vert \Fz\vert^2)}{1-\t^2} &=\tfrac{\t \left((1+\t)^2\cos^2\phi-(1-\t)^2 \sin^2 \phi\right)-\left((1+\t)^2 \cos^2\phi+(1-\t)^2\sin^2\phi \right)}{1-\t^2}\nn \\
&=\tfrac{\t(1+\t^2)(\cos^2\phi-\sin^2\phi)+2\t^2-(1+\t^2)-2\t(\cos^2\phi-\sin^2 \phi) }{1-\t^2}\nn \\
&=-\t(\cos^2\phi-\sin^2\phi)-1. \label{eq_f_zero5}
\end{align}
Putting \eqref{eq_f_zero3}, \eqref{eq_f_zero4} and \eqref{eq_f_zero5} in $\f$ from \eqref{eq_f} we get 
\begin{equation}
\f(z)=-\t(\cos^2\phi-\sin^2 \phi)-1+ \t \left( \cos^2\phi - \sin^2\phi\right)- \log \t+ \log \t+1=0.
\end{equation}
\end{proof}

\subsubsection{Maximum of $f$}
%
%
%Evtl. auch noch f auf der imaginären Achse diskutieren. In diesem Zusammenhang alles (auch \gpm, und spätere Funktionen) auf Re z=0 überprüfen und falls kein Problem alle Re z>0 auf Re z\ge 0 anpassen. Analog auch für alle Grenzen \phi\in (-Pi/2,Pi/2)
%
\paragraph*{On the Real Line}
On $\R\setminus [-\F,\F]$ $\f$ simplifies to
\begin{align}
\f(x,0)&=-\frac{4\t \Fx^2}{1+\t} + \left(\U{\F}(x,0)\right)^2-2\log\left\vert\U{\F}(x,0)\right\vert +\log \t +1.
\intertext{We have}
\frac{\partial\f}{\partial x}(x,0)&=\frac{1}{\F}\frac{\partial\f}{\partial \Fx}(\Fx,0) =\frac{1}{\F}\left(-\tfrac{8\t\Fx}{1+\t}+2\left(\Fx\mp\sqrt{\Fx^2-1}\right)\left(1\mp\tfrac{\Fx}{\sqrt{\Fx^2-1}}\right)-\tfrac{2\bigl(1\mp\tfrac{\Fx}{\sqrt{\Fx^2-1}} \bigr)}{\Fx\mp\sqrt{\Fx^2-1}} \right)\nn \\
&=\tfrac{4}{\F}\left(\Fx \tfrac{1-\t}{1+\t}\mp\sqrt{\Fx^2-1} \right),
\intertext{and}
\frac{\partial^2\f}{\partial x^2}(x,0)&=\frac{1}{\F^2}\frac{\partial^2\f}{\partial \Fx^2}(\Fx,0)=\tfrac{4}{\F^2}\left(\tfrac{1-\t}{1+\t}\mp\tfrac{\Fx}{\sqrt{\Fx^2-1}} \right).
\end{align}
As
\begin{equation}
\frac{\partial \f}{\partial x}(x,0)=0 \quad \Longleftrightarrow \quad \Fx \frac{1-\t}{1+\t}\mp\sqrt{\Fx^2-1}=0 \quad \Longleftrightarrow \quad \Fx=\pm \frac{1+\t}{2\sqrt{\t}}, 
\end{equation}
and
\begin{equation}
\frac{1}{\F^2}\frac{\partial^2\f}{\partial \Fx^2}\bigl( \pm \tfrac{1+\t}{2\sqrt{\t}},0\bigr)=-4<0,
\end{equation}
we find that the maximum of $\f$ is at $\Fx=\pm\tfrac{1+\t}{2\sqrt{\t}}$ which is equivalent to $x=\pm\sqrt{\tfrac{1+t}{1-t}}=\pm\mas$. As expected these are the points where the ellipse $\pEt$ cuts the real line and $\f$ takes the value zero. A plot of $\f$ on the real line can be found in Figure \ref{fig_f_real}.

\begin{figure}[!tp]
\begin{center}
\includegraphics[height=6cm,angle=0,scale=0.9]{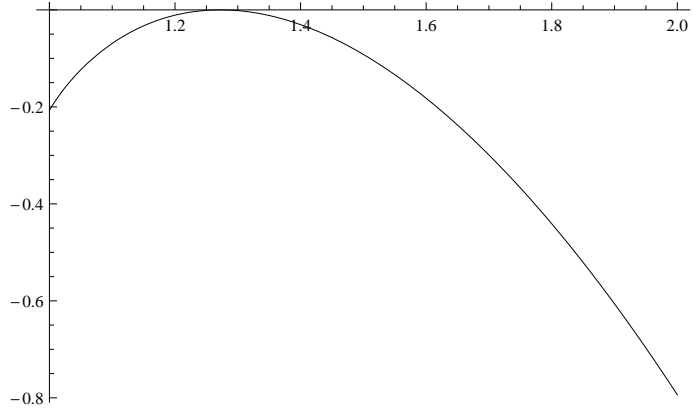}
\end{center}
\caption{\label{fig_f_real} Graph of $\f$ on the real line in the case $\t=\sqrt{5}-2$.}
\end{figure} 

\paragraph*{Variant 1}\label{subsubsec_f}
First we calculate the derivatives of \eqref{eq_f_xy}.
\begin{align}
\frac{\partial \f}{\partial x}(x,y)%&=\tfrac{1}{\F}\Bigg( \tfrac{8\t}{1-\t^2}(t\Fx-\Fx) \\
%&\phantom{=\tfrac{1}{\F}\Bigg(}+\left(\Fx+i\Fy-\sqrt{\Fx^2-\Fy^2+2i\Fx \Fy -1} \right)\left(1-\tfrac{\Fx+i \Fy}{\sqrt{\Fx^2-\Fy^2+2i\Fx \Fy -1}} \right)\\
%&\phantom{=\tfrac{1}{\F}\Bigg(}+\left(\Fx-i\Fy-\sqrt{\Fx^2-\Fy^2-2i\Fx \Fy -1} \right)\left(1-\tfrac{\Fx-i \Fy}{\sqrt{\Fx^2-\Fy^2-2i\Fx \Fy -1}} \right)\\
%&\phantom{=\tfrac{1}{\F}\Bigg(}-\tfrac{1}{\Fx+i\Fy-\sqrt{\Fx^2-\Fy^2+2i\Fx \Fy -1}}\left(1-\tfrac{\Fx+i \Fy}{\sqrt{\Fx^2-\Fy^2+2i\Fx \Fy -1}} \right)\\
%&\phantom{=\tfrac{1}{\F}\Bigg(}-\tfrac{1}{\Fx-i\Fy-\sqrt{\Fx^2-\Fy^2-2i\Fx \Fy -1}}\left(1-\tfrac{\Fx-i \Fy}{\sqrt{\Fx^2-\Fy^2-2i\Fx \Fy -1}} \right)\Bigg)\\
%&=\tfrac{1}{\F}\Bigg( -\tfrac{8\t \Fx}{1+\t}+4\Fx -\tfrac{(\Fx+i \Fy)^2}{\sqrt{\Fx^2-\Fy^2+2i\Fx \Fy -1}}-\tfrac{(\Fx-i \Fy)^2}{\sqrt{\Fx^2-\Fy^2-2i\Fx \Fy -1}}\\
%&\phantom{=\tfrac{1}{\F}\Bigg(}-\sqrt{\Fx^2-\Fy^2+2i\Fx \Fy -1} -\sqrt{\Fx^2-\Fy^2-2i\Fx \Fy -1}\\
%&\phantom{=\tfrac{1}{\F}\Bigg(} +\frac{1}{\sqrt{\Fx^2-\Fy^2+2i\Fx \Fy -1}} +\frac{1}{\sqrt{\Fx^2-\Fy^2-2i\Fx \Fy -1}}\Bigg)\\
%&=\tfrac{1}{\F}\left(4\Fx \tfrac{1-\t}{1+\t}-2\sqrt{\Fx^2-\Fy^2+2i\Fx \Fy -1} -2 \sqrt{\Fx^2-\Fy^2-2i\Fx \Fy -1}\right)\\
&=2x(\t-1)+\U{\F}(x,y) \tfrac{\partial \U{\F}(x,y)}{\partial x} +\U{\F}(x,-y) \tfrac{\partial \U{\F}(x,-y)}{\partial x} \nn \\
& \quad -\tfrac{1}{\U{\F}(x,y)} \tfrac{\partial \U{\F}(x,y)}{\partial x} -\tfrac{1}{\U{\F}(x,-y)} \tfrac{\partial \U{\F}(x,-y)}{\partial x} \nn\\
&=2x(\t-1)-\tfrac{\left(\U{\F}(x,y)\right)^2}{\T{\F}(x,y)}-\tfrac{\left(\U{\F}(x,-y)\right)^2}{\T{\F}(x,-y)} +\tfrac{1}{\T{\F}(x,y)} +\tfrac{1}{\T{\F}(x,-y)} \nn\\
%&=2x(\t-1)\mp \tfrac{\F^2-\left(x+iy\mp \sqrt{x^2-y^2+2ixy-\F^2 } \right)^2}{\F^2 \sqrt{x^2-y^2+2ixy-\F^2 }} \mp \tfrac{\F^2-\left(x-iy\mp \sqrt{x^2-y^2-2ixy-\F^2 } \right)^2}{\F^2 \sqrt{x^2-y^2-2ixy-\F^2 }} \nn\\
&=2x(\t-1)+\tfrac{2 \U{\F}(x,y)}{\F}+\tfrac{2 \U{\F}(x,-y)}{\F}=2x(\t-1)+\tfrac{4}{\F}\Re{\, \U{\F}(x,y)} \nn\\
&=\tfrac{4}{\F}\left(\Fx \tfrac{1-\t}{1+\t}\mp \Re \sqrt{\Fx^2-\Fy^2+2i\Fx \Fy -1} \right), \label{eq_df_x}
\intertext{and analogously}
\frac{\partial \f}{\partial y}(x,y)%&=\tfrac{1}{\F}\left(-4\Fy \tfrac{1+\t}{1-\t}-2i\sqrt{\Fx^2-\Fy^2+2i\Fx \Fy -1} +2i \sqrt{\Fx^2-\Fy^2-2i\Fx \Fy -1}\right)\nn\\
&=-2y(\t+1)-\tfrac{4}{\F}\Im{\, \U{\F}(x,y)} \nn\\
&=\tfrac{4}{\F}\left(-\Fy \tfrac{1+\t}{1-\t}\pm \Im \sqrt{\Fx^2-\Fy^2+2i\Fx \Fy -1} \right). \label{eq_df_y}
\end{align}

\begin{lemma}\label{lemma_df_zero}
On the ellipse $\pEt$ we have $\frac{\partial\f}{\partial x}(x,y)=0$ and $\frac{\partial\f}{\partial y}(x,y)=0$.
\end{lemma}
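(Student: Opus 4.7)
The plan is to verify the statement by direct substitution into the explicit expressions \eqref{eq_df_x} and \eqref{eq_df_y} for the partial derivatives, using the ellipse parametrization already employed in Lemma~\ref{lemma_f_zero}. Write a point $z \in \pEt$ as $z = \mas\cos\phi + i\mis\sin\phi$ with $\phi \in (-\pi,\pi]$, so that $\Fx = \tfrac{1+\t}{2\sqrt{\t}}\cos\phi$ and $\Fy = \tfrac{1-\t}{2\sqrt{\t}}\sin\phi$. The key input is the identity \eqref{eq_f_zero1} derived in the proof of Lemma~\ref{lemma_f_zero}: on the ellipse,
$$\tfrac{\T{\F}(z)}{\F} = \mp\sqrt{\Fz^2-1} = \tfrac{1}{2\sqrt{\t}}\bigl((1-\t)\cos\phi + i(1+\t)\sin\phi\bigr),$$
with the sign ambiguity already absorbed by the branch convention of Definition~\ref{def_U}. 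Reading off real and imaginary parts then gives $\mp\Re\sqrt{\Fz^2-1} = -\tfrac{1-\t}{2\sqrt{\t}}\cos\phi$ and $\pm\Im\sqrt{\Fz^2-1} = \tfrac{1+\t}{2\sqrt{\t}}\sin\phi$.

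Plugging these into the bracketed expression in \eqref{eq_df_x}, the first term equals $\Fx\cdot\tfrac{1-\t}{1+\t} = \tfrac{1-\t}{2\sqrt{\t}}\cos\phi$, which cancels $\mp\Re\sqrt{\Fz^2-1}$ exactly. Plugging into \eqref{eq_df_y}, the first term equals $-\Fy\cdot\tfrac{1+\t}{1-\t} = -\tfrac{1+\t}{2\sqrt{\t}}\sin\phi$, which cancels $\pm\Im\sqrt{\Fz^2-1}$ exactly. Hence $\frac{\partial\f}{\partial x}$ and $\frac{\partial\f}{\partial y}$ vanish identically on $\pEt$.

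There is essentially no obstacle here; the only point requiring any vigilance is keeping the sign conventions for $\sqrt{\Fz^2-1}$ consistent with Definitions~\ref{def_U} and~\ref{def_T} throughout, but this is precisely the bookkeeping already done in the proof of Lemma~\ref{lemma_f_zero}, so no new idea is needed. A more conceptual alternative would be to observe that by Lemma~\ref{lemma_f_zero} the tangential derivative of $\f$ along $\pEt$ is automatically zero, and to argue that the normal derivative vanishes as well by identifying $\pEt$ as the locus of the global maximum of $\f$; but justifying that maximum rigorously takes more work than the brute substitution above.
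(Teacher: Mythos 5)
Your proof is correct and is essentially the same argument as the paper's: both parametrize $\pEt$ as in Lemma~\ref{lemma_f_zero}, substitute into \eqref{eq_df_x} and \eqref{eq_df_y}, and exploit the completed-square identity \eqref{eq_f_zero1} to make the bracketed terms cancel. One small notational slip: you quote $\T{\F}(z)/\F = \mp\sqrt{\Fz^2-1}$, whereas Definition~\ref{def_T} and \eqref{eq_f_zero1} give $\T{\F}(z)/\F = \pm\sqrt{\Fz^2-1}$ (the $\mp$ is the convention for $\U{\F}$, not $\T{\F}$); this does not propagate into an error because the real- and imaginary-part formulas you then read off, $\mp\Re\sqrt{\Fz^2-1}=-\tfrac{1-\t}{2\sqrt{\t}}\cos\phi$ and $\pm\Im\sqrt{\Fz^2-1}=\tfrac{1+\t}{2\sqrt{\t}}\sin\phi$, are the correct ones and the cancellations go through exactly as you describe.
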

\begin{proof}
We parametrize the ellipse $\pEt$ as in the proof of Lemma \ref{lemma_f_zero},
\begin{equation}
\Fz=\Fx+i\Fy, \qquad \Fx=\frac{1+\t}{2\sqrt{\t}} \cos \phi, \qquad \Fy=\frac{1-\t}{2\sqrt{\t}} \sin \phi, \qquad \phi\in (-\pi,\pi].
\end{equation}
Putting this in \eqref{eq_df_x} and \eqref{eq_df_y}, we find similarly as in \eqref{eq_f_zero1}
\begin{align}
\frac{\partial \f}{\partial x}(x,y)&=\tfrac{2(1-\t)\cos \phi \mp2 \Re \sqrt{(1+\t)^2\cos^2 \phi-(1-\t)^2\sin^2\phi+2i(1-\t^2)\cos \phi\sin \phi -4\t}}{\F\sqrt{\t}}\nn\\
&=\tfrac{2(1-\t)\cos \phi \mp 2\Re \sqrt{(1-\t)^2\cos^2\phi -(1+\t)^2\sin^2\phi+2i(1-\t^2)\cos \phi\sin \phi}}{\F\sqrt{\t}}\nn \\
&=\tfrac{2}{\F\sqrt{\t}}\left((1-\t)\cos \phi-\Re\left((1-\t)\cos \phi+i(1+\t)\sin \phi \right) \right)=0, \label{eq_df_zero1} \\
\intertext{and}
\frac{\partial \f}{\partial y}(x,y)&=\tfrac{-2(1+\t)\sin \phi \pm 2 \Im \sqrt{(1+\t)^2\cos^2 \phi-(1-\t)^2\sin^2\phi+2i(1-\t^2)\cos \phi\sin \phi -4\t}}{\F\sqrt{\t}} \nn\\
&=\tfrac{2}{\F\sqrt{\t}}\left(-(1+\t)\sin \phi+\Im\left((1-\t)\cos \phi+i(1+\t)\sin \phi \right) \right)=0. \label{eq_df_zero2}
\end{align} 
\end{proof}

\begin{lemma}\label{lemma_df_values}
On the ellipse 
\begin{equation*}
\{z\in \C \vert z = a \cos \phi+ i b \sin \phi, \phi\in (-\pi,\pi] \},
\end{equation*}
with the minor semi-axis $b\in (0,\infty)$ and major semi-axis $a=\sqrt{b^2+\F^2}>b$ the derivatives of $\f$ take the values 
\begin{equation}
\frac{\partial \f}{\partial x}(a \cos \phi,b \sin \phi)=\tfrac{\cos \phi}{\t} \left(a(1-\t)^2 -b(1-\t^2) \right),
\end{equation}
and
\begin{equation}
\frac{\partial \f}{\partial y}(a \cos \phi,b \sin \phi)=\tfrac{\sin \phi}{\t} \left(a(1-\t^2) -b(1+\t)^2 \right).
\end{equation}
\end{lemma}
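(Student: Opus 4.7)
The plan is to reduce the computation to the explicit formulas for $\partial_x\f$ and $\partial_y\f$ in \eqref{eq_df_x} and \eqref{eq_df_y}, combined with the elliptic-coordinate description of $\U{\F}$ in Proposition \ref{prop_U}. First I would fix the parametrization $z=a\cos\phi+ib\sin\phi$ of the given confocal ellipse, so that $x=a\cos\phi$ and $y=b\sin\phi$. By Proposition \ref{prop_U}, on this ellipse
\begin{equation*}
\U{\F}(z)=r_\F(b)(\cos\phi-i\sin\phi), \qquad r_\F(b)=\frac{a-b}{\F},
\end{equation*}
so $\Re\,\U{\F}(x,y)=\frac{a-b}{\F}\cos\phi$ and $\Im\,\U{\F}(x,y)=-\frac{a-b}{\F}\sin\phi$.

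Next I would plug these into \eqref{eq_df_x} and \eqref{eq_df_y} to obtain
\begin{equation*}
\frac{\partial \f}{\partial x}(a\cos\phi,b\sin\phi)=2a(\t-1)\cos\phi+\frac{4(a-b)}{\F^2}\cos\phi,
\end{equation*}
and the analogous expression with $\sin\phi$ and $-2b(\t+1)$ for $\partial_y\f$. Since $\F^2=\frac{4\t}{1-\t^2}$, we have $\frac{4}{\F^2}=\frac{1-\t^2}{\t}$, so a single common factor $\frac{1}{\t}$ can be pulled out.

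The only remaining step is routine algebraic simplification: expanding
\begin{equation*}
\t\cdot 2a(\t-1)+(1-\t^2)(a-b)=a(1-\t)^2-b(1-\t^2),
\end{equation*}
and analogously
\begin{equation*}
-\t\cdot 2b(\t+1)+(1-\t^2)(a-b)=a(1-\t^2)-b(1+\t)^2.
\end{equation*}
These identities are immediate (collect powers of $\t$), and combining them with the extracted $\cos\phi/\t$ and $\sin\phi/\t$ yields the claimed formulas.

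There is no real obstacle; the content of the lemma is purely computational, and the substance of the work (holomorphic structure, branch choices for $\T{\F}$ and $\U{\F}$) has already been absorbed into Proposition \ref{prop_U} and the derivation of \eqref{eq_df_x}--\eqref{eq_df_y}. The only thing worth flagging is the bookkeeping step that $r_\F(b)/\F=(a-b)/\F^2=(1-\t^2)(a-b)/(4\t)$, which is what allows the answer to be expressed purely in terms of $a$, $b$ and $\t$ without any residual $\F$.
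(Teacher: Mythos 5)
Your proof is correct and follows essentially the same route as the paper's. The only organizational difference is that you take the form $\partial_x\f = 2x(\t-1)+\tfrac{4}{\F}\Re\,\U{\F}(x,y)$ from \eqref{eq_df_x} and cite Proposition \ref{prop_U} for $\Re\,\U{\F}$ and $\Im\,\U{\F}$ on the confocal ellipse, whereas the paper starts from the equivalent square-root form $\tfrac{4}{\F}\bigl(\Fx\tfrac{1-\t}{1+\t}\mp\Re\sqrt{\cdots}\bigr)$ and reperforms inline the simplification $\sqrt{(a^2-\F^2)\cos^2\phi-(b^2+\F^2)\sin^2\phi+2iab\cos\phi\sin\phi}=b\cos\phi+ia\sin\phi$ under $a^2=b^2+\F^2$ — exactly the computation already encapsulated in Proposition \ref{prop_U}; both then use $4/\F^2=(1-\t^2)/\t$ to reach the stated formulas.
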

\begin{proof}
We start to evaluate the derivatives of $\f$ on a general ellipse with independent semi-axes $a>0$ and $b>0$. From \eqref{eq_df_x} and \eqref{eq_df_y} we get
\begin{align}
\frac{\partial \f}{\partial x}(a \cos \phi,b \sin \phi)&=\tfrac{4a\cos \phi \frac{1-\t}{1+\t}\mp 4\Re\sqrt{a^2\cos^2 \phi-b^2\sin^2 \phi+2iab \cos \phi\sin \phi-\F^2}}{\F^2} \nn \\
&=\tfrac{4a\cos \phi \frac{1-\t}{1+\t}\mp 4\Re\sqrt{(a^2-F^2)\cos^2 \phi-(b^2+F^2)\sin^2 \phi+2iab \cos \phi\sin \phi}}{\F^2}, \label{eq_f_ellipse1}\\
\intertext{and analogously}
\frac{\partial \f}{\partial y}(a \cos \phi,b \sin \phi)&=\tfrac{-4b\sin \phi \frac{1+\t}{1-\t}\pm4\Im\sqrt{(a^2-F^2)\cos^2 \phi-(b^2+F^2)\sin^2 \phi+2iab \cos \phi\sin \phi}}{\F^2}. \label{eq_f_ellipse2}
\end{align}
Now, similarly as in \eqref{eq_f_zero1}, we want to use the $-\F^2$ to swap the factors in front of $\cos^2 \phi$ and $\sin^2 \phi$. This gives us the following dependence between the semi-axes $a$ and $b$
\begin{equation}\label{eq_f_ellipse_cond}
a^2=b^2+\F^2,%=b^2+\tfrac{4\t}{1-\t^2},
\end{equation}
as it is required by the lemma. Using this condition in \eqref{eq_f_ellipse1} and \eqref{eq_f_ellipse2} we get
\begin{align}
\frac{\partial \f}{\partial x}(a \cos \phi,b \sin \phi)&=\tfrac{4a\cos \phi \tfrac{1-\t}{1+\t}\mp4\Re\sqrt{(b\cos \phi+i a \sin \phi)^2}}{\F^2}\nn \\
&=\tfrac{4}{\F^2}\left(\tfrac{1-\t}{1+\t} a \cos \phi -b\cos \phi \right)=\tfrac{\cos \phi}{\t} \left(a(1-\t)^2 -b(1-\t^2) \right), \label{eq_f_ellipse3}
\intertext{and}
\frac{\partial \f}{\partial y}(a \cos \phi,b \sin \phi)&=\tfrac{4}{\F^2}\left(-\tfrac{1+\t}{1-\t} b \sin \phi +a\sin \phi \right) =\tfrac{\sin \phi}{\t} \left(a(1-\t^2) -b(1+\t)^2 \right). \label{eq_f_ellipse4}
\end{align}
\end{proof}

\begin{lemma}\label{lemma_df}
For $z=x+i y$ we have on the inside of the ellipse $\pEt$
\begin{align}
\sign(x) \frac{\partial \f}{\partial x}(x,y)&>0, \quad \forall z\in \Et \setminus (i\R\cup [-\F,\F]), \label{eq_lemma_df1a} \\
\shortintertext{and}
\sign(y) \frac{\partial \f}{\partial y}(x,y)&>0, \quad \forall z\in \Et \setminus \R, \label{eq_lemma_df1b} \\
\intertext{and on the outside of the ellipse $\pEt$}
\sign(x)\frac{\partial \f}{\partial x}(x,y)&<0, \quad \forall z\in\C\setminus(\Et\cup \pEt \cup i\R), \label{eq_lemma_df2a}\\
\shortintertext{and}
\sign(y)\frac{\partial \f}{\partial y}(x,y)&<0, \quad \forall z\in\C\setminus(\Et\cup \pEt \cup \R). \label{eq_lemma_df2b}
\end{align}
\end{lemma}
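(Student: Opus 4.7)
The plan is to reduce each of the four inequalities to Lemma \ref{lemma_df_values} by exploiting the elliptic coordinates of Section \ref{sub_elliptic}. Any $z = x + iy \in \C \setminus [-\F,\F]$ admits a unique representation $z = a\cos\phi + ib\sin\phi$ with $b > 0$, $a = \sqrt{b^2 + \F^2}$ and $\phi \in (-\pi,\pi]$. I would substitute this into Lemma \ref{lemma_df_values} and factor out $(1-\t)$ and $(1+\t)$ respectively to obtain
\begin{align}
\frac{\partial \f}{\partial x}(x,y) &= \tfrac{(1-\t)\cos\phi}{\t}\bigl(a(1-\t) - b(1+\t)\bigr), \nn \\
\frac{\partial \f}{\partial y}(x,y) &= \tfrac{(1+\t)\sin\phi}{\t}\bigl(a(1-\t) - b(1+\t)\bigr),
\end{align}
so that both partial derivatives share the common sign factor $h(a,b) := a(1-\t) - b(1+\t)$.

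The crux of the argument is the geometric interpretation of $\sign h(a,b)$. Since $a/b = \sqrt{1+\F^2/b^2}$ is strictly decreasing in $b$ and takes the value $(1+\t)/(1-\t) = \mas/\mis$ exactly at $b = \mis$, one has $h(a,b) > 0$ iff $b < \mis$ and $h(a,b) < 0$ iff $b > \mis$. Matching this with the parametrization by confocal ellipses, $0 < b < \mis$ corresponds bijectively to $z \in \Et \setminus [-\F,\F]$, while $b > \mis$ corresponds to $z \in \C \setminus (\Et \cup \pEt)$.

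The last step is cosmetic. Since $x = a\cos\phi$ and $y = b\sin\phi$ with $a,b > 0$, we have $\sign(\cos\phi) = \sign(x)$ whenever $x \neq 0$, and similarly $\sign(\sin\phi) = \sign(y)$ whenever $y \neq 0$. Multiplying the two displayed formulas by $\sign(x)$ or $\sign(y)$ removes the trigonometric factor and leaves $\sign h(a,b)$ multiplied by a strictly positive prefactor, yielding the four inequalities \eqref{eq_lemma_df1a}--\eqref{eq_lemma_df2b}. The excluded sets $i\R$ (resp.\ $\R$) in the statement correspond exactly to $\cos\phi = 0$ (resp.\ $\sin\phi = 0$), where the relevant derivative vanishes identically. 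No real obstacle arises: the whole argument is a clean factorization in Lemma \ref{lemma_df_values} plus monotonicity of $a/b$ along the confocal family.
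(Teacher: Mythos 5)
Your proof is correct and takes essentially the same approach as the paper: both use the elliptic coordinates of Section~\ref{sub_elliptic}, invoke Lemma~\ref{lemma_df_values}, and reduce the sign determination to the inequality $a(1-\t) \gtrless b(1+\t)$ compared with $b \lessgtr \mis$. Your presentation is marginally cleaner in explicitly factoring both derivatives through the common quantity $h(a,b)=a(1-\t)-b(1+\t)$ and arguing via monotonicity of $a/b$ along the confocal family, whereas the paper runs the equivalent algebra $a(1-\t)>b(1+\t) \Leftrightarrow b^2 < \tfrac{1-\t}{1+\t}$ separately for each derivative; these are the same calculation, reorganized.
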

\begin{proof}
We choose $z=x+i y\in\C$. Using the elliptic coordinates of Section \ref{sub_elliptic} we define the minor semi-axis as
\begin{equation}\label{eq_semi_axis_b}
b^2=\frac{x^2+y^2-\F^2+\sqrt{4y^2 \F^2+(\F^2-x^2-y^2)^2}}{2}>0,
\end{equation} 
and the major semi-axis as $a^2=b^2+\F^2$. This ellipse fulfills the condition of Lemma \ref{lemma_df_values} and from \eqref{eq_z_on_ellipse} we know that $z$ lies on the ellipse. 
If $b<\mis=\sqrt{\frac{1-\t}{1+\t}}$, which is equivalent to
\begin{equation}
b^2<\frac{1-\t}{1+\t}=\frac{1+\t}{1-\t}-\frac{4\t}{1-\t^2}=\mas^2-\F^2\qquad \Longleftrightarrow \qquad a^2<\mas^2,
\end{equation} 
then the major semi-axis $a$ is also smaller than $\mas$. Then the ellipse with semi-axes $a$ and $b$, including the point $z$, lies completely inside of the ellipse $\pEt$. On the other hand, if $b>\mis$ also $a>\mas$ and the ellipse, including the point $z$, lies completely outside of $\pEt$.

By Lemma \ref{lemma_df_values} we find $\frac{\partial \f}{\partial x}$ at $z=x+iy=a \cos \phi+ib\sin \phi$,
\begin{equation} 
\frac{\partial \f}{\partial x}(x,y)=\tfrac{\cos \phi}{\t} \left(a(1-\t)^2 -b(1-\t^2) \right).
\end{equation}
If $\phi\neq \pm \tfrac{\pi}{2}$, obviously $\sign(x) \cos\phi >0$. Therefore $\sign(x)\frac{\partial \f}{\partial x}(x,y)>0$ if
\begin{equation}\label{eq_lemma_df_equiv}
\begin{split}
&a(1-\t)^2 -b(1-\t^2)>0\qquad \Longleftrightarrow\qquad a(1-\t) > b(1+\t)\\
&\qquad\Longleftrightarrow\qquad  \left(b^2+\frac{4\t}{1-\t^2}\right)(1-\t)^2=a^2(1-\t)^2 >b^2(1+\t)^2\\
&\qquad\Longleftrightarrow\qquad b^2<\frac{1-\t}{1+\t}=\mis^2, 
\end{split}
\end{equation}
which is equivalent to that $z$ lies inside of $\pEt$. On the other hand, $\sign(x) \frac{\partial \f}{\partial x}(x,y)<0$ if $z$ lies outside of $\pEt$.

By Lemma \ref{lemma_df_values} we find analogously
\begin{equation} 
\frac{\partial \f}{\partial y}(x,y)= \tfrac{\sin \phi}{\t} \left(a(1-\t^2) -b(1+\t)^2 \right).
\end{equation}
Here $\sin \phi$ compensates the change of sign of $\sign(y)$ in \eqref{eq_lemma_df1b} and \eqref{eq_lemma_df2b}. % if $y<0$, i.e. $\phi\in (-\tfrac{\pi}{2},0)$.
  
\begin{equation}
a(1-\t^2) -b(1+\t)^2 >0 \qquad \Longleftrightarrow\qquad a(1-\t)>b(1+\t),
\end{equation}
which is equivalent to \eqref{eq_lemma_df_equiv}. Therefore the discussion of the sign of $\frac{\partial \f}{\partial y}$ follows as above.
\end{proof}

\begin{prop}\label{prop_f_zero}
$\f(z)=0$ if $z\in \pEt$ and $\f(z)<0$ if $z\notin (\pEt\cup [-\F,\F])$.
\end{prop}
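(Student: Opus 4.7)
The first claim, $\f=0$ on $\pEt$, is exactly Lemma \ref{lemma_f_zero}. For the second claim my plan is to parametrise $\C\setminus[-\F,\F]$ by the confocal elliptic coordinates introduced in Section \ref{sub_elliptic}, writing $z=a\cos\phi+i b\sin\phi$ with $b>0$, $a=\sqrt{b^2+\F^2}$, $\phi\in(-\pi,\pi]$, so that $\pEt$ corresponds precisely to $b=\mis$. For each fixed $\phi$ I would view $\f$ as a function of $b$ alone and compute its derivative using the chain rule together with $\partial x/\partial b=(b/a)\cos\phi$, $\partial y/\partial b=\sin\phi$ and the explicit formulas of Lemma \ref{lemma_df_values}. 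After factoring $(1-t)$ and $(1+t)$ respectively from the two brackets of that lemma, both terms reduce to a common factor $\Delta(b):=a(1-t)-b(1+t)$, giving
\[
\frac{\partial \f}{\partial b}=\frac{\Delta(b)}{t}\left(\frac{b(1-t)\cos^2\phi}{a}+(1+t)\sin^2\phi\right).
\]

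The parenthesised factor is strictly positive for every $b>0$ and every $\phi$, so $\partial\f/\partial b$ has the sign of $\Delta(b)$. A short computation using $a^2=b^2+\F^2$ and $\F^2=4t/(1-t^2)$ shows $\Delta(b)=0$ iff $b=\mis$, with $\Delta(b)>0$ for $b<\mis$ and $\Delta(b)<0$ for $b>\mis$ (indeed $\Delta'(b)=(b/a)(1-t)-(1+t)<-2t<0$, so $\Delta$ is strictly decreasing and crosses zero exactly at $b=\mis$). Combined with Lemma \ref{lemma_f_zero} this means that along each confocal family $\phi=\mathrm{const}$, $\f$ is strictly increasing on $(0,\mis)$, equals zero at $b=\mis$, and is strictly decreasing on $(\mis,\infty)$. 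Consequently $\f(z)<0$ for every $z$ whose elliptic parameter $b$ differs from $\mis$, that is, for every $z\notin\pEt\cup[-\F,\F]$. The main point to watch is the algebraic bookkeeping that isolates the common factor $\Delta(b)$ from the two expressions in Lemma \ref{lemma_df_values}; no analytic obstacle arises beyond what Lemmas \ref{lemma_f_zero} and \ref{lemma_df_values} already supply.
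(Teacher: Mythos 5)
Your argument is correct, and it takes a genuinely different route from the paper's primary proof (Variant~1).  The paper's proof establishes sign information for $\partial\f/\partial x$ and $\partial\f/\partial y$ separately (Lemma~\ref{lemma_df}) and then integrates along paths parallel to the coordinate axes, which forces a case analysis (whether $|\Im\, z_1|\ge\mis$, which quadrant one is in, whether to move right or up, etc.).  You instead fix the confocal angle $\phi$ and compute the single directional derivative $\partial\f/\partial b$ along the family of confocal ellipses, which collapses the whole question into a one-variable monotonicity statement: $\partial\f/\partial b$ shares the sign of $\Delta(b)=a(1-\t)-b(1+\t)$, $\Delta$ is strictly decreasing with unique zero at $b=\mis$, hence $\f$ has a strict global maximum (equal to zero by Lemma~\ref{lemma_f_zero}) at $b=\mis$ along each ray.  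The chain-rule computation and the factoring of $(1-\t)$ and $(1+\t)$ from the two brackets of Lemma~\ref{lemma_df_values} are both correct, and the positivity of $\tfrac{b(1-\t)\cos^2\phi}{a}+(1+\t)\sin^2\phi$ is clear since the two summands are nonnegative and cannot vanish simultaneously.  Your approach is in fact closer in spirit to the paper's Variant~2 (where $\f$ is pushed forward to $\f_U$ on the punctured disk via $\U{\F}$ and the radial derivative in $r=r_\F(b)$ is analysed), but you work directly in the original coordinates using the explicit derivative formulas of Lemma~\ref{lemma_df_values}, so you get a cleaner argument than Variant~1 without needing the change of variables of Variant~2.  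What the paper's Variant~1 buys is that Lemma~\ref{lemma_df} (the signs of $\partial\f/\partial x$ and $\partial\f/\partial y$) is reused later in the uniform-estimate machinery of Chapter~\ref{sec_verification} (in particular in Proposition~\ref{prop_estim_f}), so the axis-aligned derivative information is needed anyway; your radial argument is more economical if the only goal is Proposition~\ref{prop_f_zero} itself.
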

\begin{proof}
We choose a point $z_0\in \Et\setminus [-\F,\F]$. For simplicity we assume that $z_0\in \Cp$. By Lemma \ref{lemma_f_zero} we have that $\f=0$ on the ellipse $\pEt$. Lemma \ref{lemma_df} tells us that $\frac{\partial \f}{\partial x}$ is strictly positive on the right-hand side of $\Et$. So if we move from $z_0$ on a path parallel to the real axis to the right, the value of $\f$ must rise until $\f$ gets zero when we hit $\pEt$. Therefore $\f(z_0)$ must have been negative. Moving from there further to the right, the value of $\f$ must decrease as the lemma tells us that $\frac{\partial \f}{\partial x}$ is strictly negative outside of the ellipse. So it is obvious that for every point $z_1$ outside of $\pEt$ with $-\mis<\Im\, z_1<\mis$ we have that $\f(z_1)<0$. We are left to show that $\f$ is negative at a point $z_1$ outside of $\pEt$ when $\vert \Im\, z_1\vert\ge \mis$. Each such point can be reached by a path parallel to the imaginary axis starting either at a point of $\pEt$ (if $\Re\, z_1\le \mas$) or at a point of the real axis (if $\Re\, z_1> \mas$). In the first case we start with $f=0$, in the second case we have already shown that at the starting point on the real axis $f<0$. If $z_1$ lies in the first quadrant we have to move upwards and, by Lemma \ref{lemma_df}, $\frac{\partial \f}{\partial y}<0$ along this path. In any case, $f$ decreases until we reach $z_1$ where $f$ must be negative. Analogously we find the same result if $z_1$ lies in the forth quadrant where we have to move downwards and $\frac{\partial \f}{\partial y}>0$ along the path.
\end{proof}

A plot of $\f$ can be found in Figure \ref{fig_f_3d}. %and \ref{fig_f_contour}.

\begin{figure}[!th]
\begin{center}
\includegraphics[height=10cm,angle=0,scale=0.9]{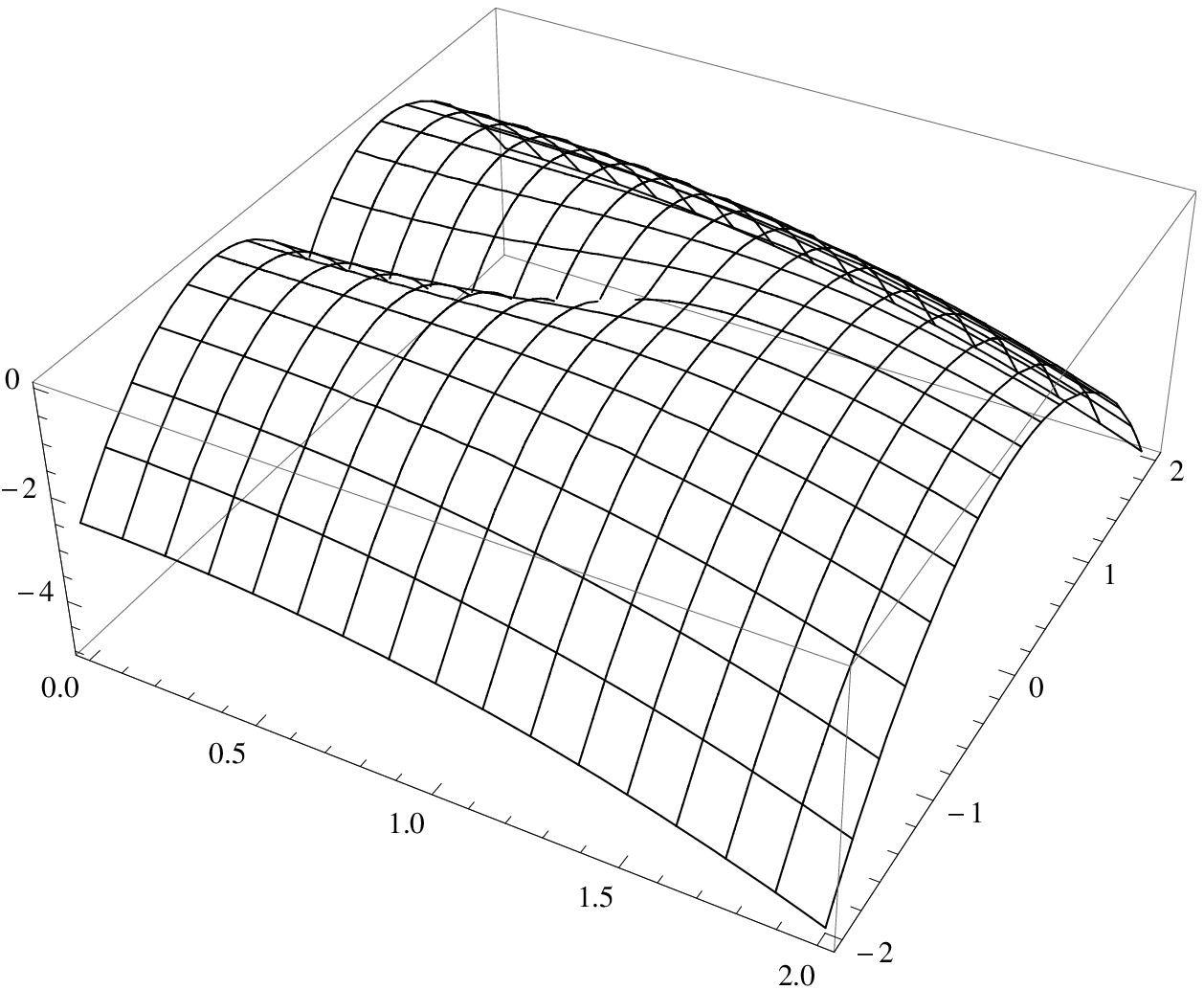}
\end{center}
\caption{\label{fig_f_3d} Graph of $\f$ in the case $\t=\sqrt{5}-2$ on $\{z\in\C\vert 0<\Re{\, z}<2, -2<\Im{\, z}<2 \}$.}
\end{figure} 

%\begin{figure}[!tbph]
%\begin{center}
%\includegraphics[height=6cm,angle=0,scale=0.9]{../../fig/fig_f_contour.eps}
%\end{center}
%\caption{\label{fig_f_contour} Contours of $\f$ in the case $\t=\sqrt{5}-2$. The red line is the ellipse $\pEt$.}
%\end{figure}

\paragraph*{Variant 2}

\begin{defin}
Let $0<\t<1$. For $u\in B_1(0)\setminus\{0\}$ we define the real-valued function $\f_U$ as
\begin{align}
\f_U(u)=\tfrac{\t^2}{1-\t^2} \Re\left(\left(u+u^{-1}\right)^2\right)-\tfrac{\t}{1-\t^2}\left\vert u+u^{-1} \right\vert^2 +\Re\left(u^2\right)-2 \log\vert u \vert+\log \t+1.
\end{align}
\end{defin}

With the help of $\f_U$ and Proposition \ref{prop_U_inv} we see that we can express $\f$ by the following composition
\begin{align}
\f(z)=\f_U\left(\U{\F}(z)\right), \qquad \forall z\in \C\setminus [-\F,\F]. \label{eq_f_fU}
\end{align}

\begin{prop} \label{prop_fU_max}
Let $\t\in (0,1)$. Then on $B_1(0)\setminus\{0\}$ $\f_U$ reaches its global maximum on the circle
\begin{align}
\{ z\in \C \vert \vert z \vert= \sqrt{\t} \}.
\end{align}
\end{prop}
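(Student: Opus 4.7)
My plan is to deduce this proposition as a direct corollary of Proposition \ref{prop_f_zero} by transferring that result through the bijection $\U{\F}$.

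First I would recall from \eqref{eq_f_fU} that $\f(z) = \f_U(\U{\F}(z))$ on $\C \setminus [-\F,\F]$, and from Corollary \ref{cor_U2} that $\U{\F}$ is a bijection between $\C \setminus [-\F,\F]$ and $B_1(0) \setminus \{0\}$. Therefore studying the level sets of $\f_U$ on $B_1(0) \setminus \{0\}$ is entirely equivalent to studying those of $\f$ on $\C \setminus [-\F,\F]$. In particular, the global maximum of $\f_U$ on $B_1(0) \setminus \{0\}$ is attained on the $\U{\F}$-image of the global maximum set of $\f$.

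Next I would verify that the ellipse $\pEt$ is contained in the domain $\C \setminus [-\F,\F]$, so that applying $\U{\F}$ to it makes sense. Since $\mas^2 - \F^2 = \tfrac{1+t}{1-t} - \tfrac{4t}{1-t^2} = \tfrac{1-t}{1+t} = \mis^2 > 0$, the foci $\pm\F$ lie strictly inside the ellipse and the closed segment $[-\F,\F]$ does not meet $\pEt$. By the remark following Corollary \ref{cor_U1}, $\U{\F}$ maps $\pEt$ bijectively onto the circle $\{u \in \C : |u| = r_\F(\mis)\} = \{u \in \C : |u| = \sqrt{t}\}$.

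Finally I would invoke Proposition \ref{prop_f_zero}: $\f(z) = 0$ for $z \in \pEt$ and $\f(z) < 0$ for $z \in \C \setminus (\pEt \cup [-\F,\F])$. Translating via $\U{\F}$, this says
\begin{equation*}
\f_U(u) = 0 \text{ if } |u| = \sqrt{t}, \qquad \f_U(u) < 0 \text{ if } u \in B_1(0) \setminus \{0\} \text{ and } |u| \neq \sqrt{t},
\end{equation*}
so $\f_U$ attains its global maximum on $B_1(0) \setminus \{0\}$ precisely on the circle $|u| = \sqrt{t}$, as claimed. There is essentially no obstacle here beyond carefully confirming that $\pEt$ lies in the domain of $\U{\F}$; all of the analytical work has already been done in establishing Proposition \ref{prop_f_zero} and Corollary \ref{cor_U1}. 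The only thing worth flagging is that this argument gives the \emph{location} of the maximum via transport of structure rather than by differentiating $\f_U$ directly, which would be feasible but considerably more tedious.
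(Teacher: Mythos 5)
Your argument is correct, but it runs the logical chain in the opposite direction from the paper. The paper's proof is direct: it writes $\f_U$ in polar coordinates $u=r\e^{i\phi}$, computes $\tfrac{\partial \f_U}{\partial r}$ and $\tfrac{\partial \f_U}{\partial \phi}$, finds that the only critical points in $B_1(0)\setminus\{0\}$ lie on the circle $r=\sqrt{\t}$, and then checks $\tfrac{\partial^2 \f_U}{\partial r^2}(\sqrt{\t},\phi)<0$ to conclude these are maxima. Combined with the remark that $\f_U=0$ on that circle, the paper then observes that this \emph{re-proves} Proposition \ref{prop_f_zero} through the bijection $\U{\F}$, which is precisely why this subsection carries the label ``Variant 2.'' You run the implication the other way: you invoke Proposition \ref{prop_f_zero} (whose Variant~1 proof, via Lemma \ref{lemma_f_zero} and Lemma \ref{lemma_df}, makes no reference to $\f_U$) and transport it forward through $\U{\F}$. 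Your argument is not circular and is shorter, needing no fresh computation, and your auxiliary checks --- that $[-\F,\F]$ lies strictly inside $\pEt$ so that $\pEt$ sits in the domain of $\U{\F}$, and that $\U{\F}(\pEt)$ is the circle of radius $\sqrt{\t}$ --- are correct and are in fact already recorded in the remarks following Corollary \ref{cor_U1}. The trade-off is that you forfeit what the paper's direct polar computation was bought for: an independent cross-check on Proposition \ref{prop_f_zero}.
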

\begin{proof}
We express $\f_U$ in polar coordinates $u=r\e^{i\phi}$
\begin{align}
\f_U(r,\phi)=& \tfrac{\t^2}{1-\t^2} \left(2 + \cos\left( 2\phi \right)\left(r^2+r^{-2}\right)\right) - \tfrac{\t}{1-\t^2} \left( 2\cos\left( 2\phi \right) +r^2+r^{-2} \right) + r^2 \cos\left( 2\phi \right) \nn \\
 & -2\log r +\log \t+1, \qquad \forall 0<r<1, \phi\in(-\pi,\pi]. \label{eq_fU_polar}
\end{align}
We have to find the critical points where $\left(\tfrac{\partial \f_U}{\partial r},\tfrac{\partial \f_U}{\partial \phi} \right)=(0,0)$.
For the partial derivatives we find
\begin{align}
\frac{\partial \f_U}{\partial r}(r,\phi)&=\frac{2 \left(\t - r^2 \right) \left(1 + r^2 \t - \left(\t + r^2\right) \cos(2 \phi) \right)}{r^3 \left(1 - \t^2\right)}, \\
\intertext{and}
\frac{\partial \f_U}{\partial \phi}(r,\phi)&=-\frac{2 \left(\t - r^2\right)^2 \sin( 2 \phi)}{ r^2 \left(1 - \t^2 \right) }.
\end{align}
So we see that when $r=\sqrt{\t}$, both partial derivatives are identically zero for all $\phi\in (-\pi,\pi]$ and that for $\vert r\vert < 1$ there are no further critical points. Thus all critical points lie on the circle around the origin with radius $\sqrt{\t}$. Since the derivative in tangential direction is identically zero for $r=\sqrt{\t}$, it is clear that $\f_U$ is constant on this circle. For the second derivative regarding $r$ evaluated at $\sqrt{\t}$ we get
\begin{align}
\frac{\partial^2 \f_U}{\partial r^2}(\sqrt{\t},\phi)&=-\frac{4 \left(1 + \t^2 - 2 \t \cos(2 \phi) \right)}{t \left(1 - t^2\right)}<0, \qquad \forall \phi\in (-\pi,\pi],
\end{align}
therefore the critical points are maxima.
\end{proof}

Additionally, with a straightforward computation, we get from \eqref{eq_fU_polar} that $\f_U$ is zero on the circle $\{ z\in \C \vert \vert z \vert= \sqrt{\t} \}$. We also know that $\U{\F}$ maps $\pEt$ to this circle and that $\f$ is given by \eqref{eq_f_fU}. Together with this, Proposition \ref{prop_fU_max} is equivalent to Proposition \ref{prop_f_zero}.

\subsubsection{Relation Between $g_\pm$ and Second Derivatives of $\f$ on the Ellipse}
From \eqref{eq_df_x} and \eqref{eq_df_y} we calculate the second derivatives of $\f$
\begin{align}
\frac{\partial^2 \f}{\partial x^2}(x,y)&=\frac{4}{\F^2}\left(\frac{1-\t}{1+\t} \mp \Re \frac{\Fx+i \Fy}{\sqrt{\Fx^2-\Fy^2+2i\Fx \Fy -1}} \right)= 2(t-1)-\frac{4}{\F} \Re{\left(\frac{\U{\F}(x,y)}{\T{\F}(x,y)} \right)} ,\label{eq_df_xx}
\intertext{and}
\frac{\partial^2 \f}{\partial y^2}(x,y)&=\frac{4}{\F^2}\left(-\frac{1+\t}{1-\t} \pm \Re \frac{\Fx+i \Fy}{\sqrt{\Fx^2-\Fy^2+2i\Fx \Fy -1}} \right) = -2(t+1)+\frac{4}{\F} \Re{\left(\frac{\U{\F}(x,y)}{\T{\F}(x,y)} \right)}. \label{eq_df_yy}
\end{align}

%We further see form \eqref{eq_gp} and \eqref{eq_gm} that $\gpm$ is real-valued and continuous on $$\{z\in \C\vert z\notin [-\F,\F], \Re z>0\}.$$

\begin{prop}\label{prop_g}
On the ellipse $\pEt$ parametrized by 
\begin{equation}
z=x+i y, \qquad x=\mas \cos \phi, \qquad y=\mis \sin \phi, \qquad \phi\in (-\pi,\pi],
\end{equation}
we have
\begin{align}
-\sign(x) \gp(x,y)&=\sqrt{-\frac{\partial^2\f}{\partial x^2}(x,y)}=\frac{2(1-\t)\vert \cos \phi\vert }{\sqrt{1+\t^2-2\t\cos(2\phi)}},
\intertext{and}
\sign(y) \gm(x,y)&=\sqrt{-\frac{\partial^2\f}{\partial y^2}(x,y)}=\frac{2(1+\t)\vert \sin \phi\vert }{\sqrt{1+\t^2-2\t\cos(2\phi)}}.
\end{align}
\end{prop}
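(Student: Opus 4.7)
The plan is to reduce everything to explicit trigonometric expressions using the elliptic parametrization $z = \mas\cos\phi + i\mis\sin\phi$ already fixed in the statement, and then verify both equalities by direct computation. The three ingredients I already have at my disposal are: from Proposition \ref{prop_U} specialized to $b=\mis$, $\U{\F}(z) = \sqrt{\t}(\cos\phi - i\sin\phi)$ on $\pEt$; from Lemma \ref{lemma_T} specialized to $b=\mis$, $\T{\F}(z) = \mis\cos\phi + i\mas\sin\phi$; and from Lemma \ref{lemma_W}, $|\W{\F}(z)|^2 = \frac{2(\mas+\mis)}{\sqrt{\mis^2+\F^2\sin^2\phi}}$.

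The first step is to simplify $|\W{\F}(z)|^2$ on $\pEt$. Using $\mas+\mis = \tfrac{2}{\sqrt{1-\t^2}}$ together with $\mis^2+\F^2\sin^2\phi = \tfrac{(1-\t)^2 + 4\t\sin^2\phi}{1-\t^2} = \tfrac{1+\t^2-2\t\cos(2\phi)}{1-\t^2}$, one obtains the clean formula $|\W{\F}(z)|^2 = \tfrac{4}{\sqrt{1+\t^2-2\t\cos(2\phi)}}$. Plugging this and the expressions for $\Re\,\U{\F}(z) = \sqrt{\t}\cos\phi$ and $\Im\,\U{\F}(z) = -\sqrt{\t}\sin\phi$ into the definitions \eqref{eq_gp} and \eqref{eq_gm} yields
\begin{align*}
\gp(x,y) &= -\tfrac{2(1-\t)\cos\phi}{\sqrt{1+\t^2-2\t\cos(2\phi)}}, &
\gm(x,y) &= \tfrac{2(1+\t)\sin\phi}{\sqrt{1+\t^2-2\t\cos(2\phi)}},
\end{align*}
which already gives the right-hand side of the asserted identities after multiplying by $-\sign(x)$ and $\sign(y)$ respectively, since $\sign(x)=\sign(\cos\phi)$ and $\sign(y)=\sign(\sin\phi)$.

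The second step is to evaluate the square roots of $-\partial^2\f/\partial x^2$ and $-\partial^2\f/\partial y^2$ using \eqref{eq_df_xx} and \eqref{eq_df_yy}. For this I need $\Re\bigl(\U{\F}(z)/\T{\F}(z)\bigr)$ on $\pEt$. Multiplying numerator and denominator by $\overline{\T{\F}(z)}$ and using $|\T{\F}(z)|^2 = \tfrac{1+\t^2-2\t\cos(2\phi)}{1-\t^2}$ together with the clean identity $\mis\cos^2\phi - \mas\sin^2\phi = \tfrac{\cos(2\phi)-\t}{\sqrt{1-\t^2}}$, one finds
\[
\tfrac{4}{\F}\Re\!\left(\tfrac{\U{\F}(z)}{\T{\F}(z)}\right) = \tfrac{2(1-\t^2)(\cos(2\phi)-\t)}{1+\t^2-2\t\cos(2\phi)}.
\]

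The third step is the algebraic verification. Substituting into \eqref{eq_df_xx} and combining over the common denominator $1+\t^2-2\t\cos(2\phi)$, the numerator of $\partial^2\f/\partial x^2$ collapses to $-2(1-\t)^2(1+\cos(2\phi)) = -4(1-\t)^2\cos^2\phi$, giving $\partial^2\f/\partial x^2 = -\tfrac{4(1-\t)^2\cos^2\phi}{1+\t^2-2\t\cos(2\phi)}$; the analogous computation for $\partial^2\f/\partial y^2$ produces the numerator $2(1+\t)^2(\cos(2\phi)-1) = -4(1+\t)^2\sin^2\phi$. Taking square roots yields precisely the right-hand sides of the proposition. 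The only real obstacle is the bookkeeping of these polynomial identities in $\t$ and $\cos(2\phi)$; the key simplifications $(1-\t)\cos^2\phi - (1+\t)\sin^2\phi = \cos(2\phi)-\t$ and $1-\cos(2\phi) = 2\sin^2\phi$, $1+\cos(2\phi)=2\cos^2\phi$ make everything line up, and the signs match because $-\gp$ has the sign of $\cos\phi$ and $\gm$ the sign of $\sin\phi$.
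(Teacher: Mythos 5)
Your proposal is correct and takes essentially the same route as the paper: evaluate the second derivatives of $\f$ from \eqref{eq_df_xx}--\eqref{eq_df_yy} on the ellipse, evaluate $\gp,\gm$ from \eqref{eq_gp}--\eqref{eq_gm}, and match the two via the sign factors. The only cosmetic difference is that you feed in the closed forms from Proposition \ref{prop_U}, Lemma \ref{lemma_T} and Lemma \ref{lemma_W} (notably $\vert\W{\F}(z)\vert^2 = 2(a+b)/\sqrt{b^2+\F^2\sin^2\phi}$ specialized to $b=\mis$) and use the $\Re\bigl(\U{\F}/\T{\F}\bigr)$ version of \eqref{eq_df_xx}, whereas the paper re-derives the component pieces of $\gp,\gm$ directly from \eqref{eq_f_zero1}--\eqref{eq_f_zero2} and substitutes into the $\Fx,\Fy$ form of \eqref{eq_df_xx}; the intermediate trigonometric identities you need (e.g.\ $(1-\t)\cos^2\phi-(1+\t)\sin^2\phi=\cos(2\phi)-\t$, $\mis^2+\F^2\sin^2\phi=\tfrac{1+\t^2-2\t\cos(2\phi)}{1-\t^2}$) all check out, as does the final collapse to $-4(1\mp\t)^2\cos^2\phi$ resp.\ $-4(1\pm\t)^2\sin^2\phi$.
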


\begin{proof}
We proceed similarly as in the proof of Lemma \ref{lemma_df_zero}. 
Putting the parametrization in \eqref{eq_df_xx} and \eqref{eq_df_yy}, we find as in \eqref{eq_df_zero1}
\begin{align}
\frac{\partial^2 \f}{\partial x^2}(x,y)&=\frac{4}{\F^2}\left( \frac{1-\t}{1+\t}- \Re \frac{(1+\t)\cos \phi+i(1-\t)\sin \phi}{(1-\t)\cos \phi+i(1+\t)\sin \phi} \right)  \nn\\
&=\frac{4}{\F^2}\left( \frac{1-\t}{1+\t}-\frac{1-\t^2}{1+\t^2-2\t \cos (2\phi)} \right) =-\frac{4 (1-\t)^2 \cos^2 \phi}{1+\t^2-2\t \cos(2\phi)}, \label{eq_lemma_g1}
\intertext{and analogously }
\frac{\partial^2 \f}{\partial y^2}(x,y)&=\frac{4}{\F^2}\left(-\frac{1+\t}{1-\t}+\frac{1-\t^2}{1+\t^2-2\t \cos (2\phi)} \right) =-\frac{4 (1+\t)^2 \sin^2 \phi}{1+\t^2-2\t \cos(2\phi)}. \label{eq_lemma_g2}
\end{align}

Now we evaluate the terms of $g_\pm$ from Proposition \ref{prop_asym} on the ellipse $\pEt$. From \eqref{eq_f_zero2} we find
\begin{align}
\Re \, \U{\F}(z)&=\sqrt{t}\cos \phi, \label{eq_lemma_g3}
\shortintertext{and}
\Im \, \U{\F}(z)&=-\sqrt{t}\sin \phi. \label{eq_lemma_g4}
\end{align}
From \eqref{eq_f_zero1} we see that
\begin{align}
\left\vert \Fz^2-1 \right\vert^{-1/2}&=\frac{1}{\vert \sqrt{\Fz^2-1}\vert}=\frac{2\sqrt{\t}}{\vert (1-\t) \cos \phi+i (1+\t) \sin \phi \vert}\nn\\
&=\frac{2\sqrt{\t}}{\sqrt{1+\t^2 + 2\t (\sin^2 \phi-\cos^2 \phi)}}=\frac{2\sqrt{\t}}{\sqrt{1+\t^2 - 2\t \cos(2\phi)}}.\label{eq_lemma_g5}
\intertext{We are left to evaluate}
\sqrt{\Fz\pm 1}&=\frac{1}{(4\t)^{1/4}}\sqrt{(1+\t)\cos \phi+i(1-\t)\sin \phi\pm 2\sqrt{t}}\nn \\
&=\frac{1}{(4\t)^{1/4}}\sqrt{\e^{i\phi}+\t \e^{-i\phi} \pm 2\sqrt{t}}=\frac{1}{(4\t)^{1/4}}\sqrt{(\e^{i\phi/2}\pm \sqrt{\t} \e^{-i\phi/2})^2 }\nn \\
&=\frac{1}{(4\t)^{1/4}}(\e^{i\phi/2}\pm \sqrt{\t} \e^{-i\phi/2}), \label{eq_lemma_g6}
\intertext{and get for the remaining term}
\left\vert\sqrt{\Fz+1}+\sqrt{\Fz-1}\right\vert^2&=\frac{1}{2\sqrt{\t}}\left\vert \e^{i\phi/2}+ \sqrt{\t} \e^{-i\phi/2}+\e^{i\phi/2}- \sqrt{\t} \e^{-i\phi/2} \right\vert^2\nn\\
&=\frac{1}{2\sqrt{\t}}\left\vert 2\e^{i\phi/2} \right\vert^2=\frac{2}{\sqrt{\t}}.\label{eq_lemma_g7}
\end{align}

Putting \eqref{eq_lemma_g3}, \eqref{eq_lemma_g4}, \eqref{eq_lemma_g5} and \eqref{eq_lemma_g7} in $\gp$ and $\gm$ from \eqref{eq_gp} and \eqref{eq_gm} respectively, we finally find
\begin{align}
\gp(\mas \cos \phi+i \mis \sin \phi)&=-\sqrt{t}\cos \phi \frac{2}{\sqrt{\t}} \frac{2\sqrt{\t}}{\sqrt{1+\t^2 - 2\t \cos(2\phi)}} \frac{1-\t}{2\sqrt{\t}}\nn \\
&=-\frac{2(1-\t)\cos \phi}{\sqrt{1+\t^2 - 2\t \cos(2\phi)}},
\intertext{and}
\gm(\mas \cos \phi+i \mis \sin \phi)&=\sqrt{t}\sin \phi \frac{2}{\sqrt{\t}} \frac{2\sqrt{\t}}{\sqrt{1+\t^2 - 2\t \cos(2\phi)}} \frac{1+\t}{2\sqrt{\t}} \nn\\
&= \frac{2(1+\t)\sin \phi}{\sqrt{1+\t^2 - 2\t \cos(2\phi)}}.
\end{align}
\end{proof}

A plot of $\gp$ and $\gm$ can be found in Figure \ref{fig_gpm}. 
%!!!!!!!!!!!!!!!!!!!!!!!!!!!!!!!!!!!!!!!!!!!!!!!!!!!!!!
% TO BE DONE!
% Alle Grafiken überprüfen auf neue Definitionen, insbesondere g+- und später g_w, g_z, h.. 
%
%
%!!!!!!!!!!!!!!!!!!!!!!!!!!!!!!!!!!!!!!!!!!!!!!!!!!!!!!!!
\begin{figure}[!tbph]
\begin{center}
\includegraphics[height=8cm,angle=0,scale=0.9]{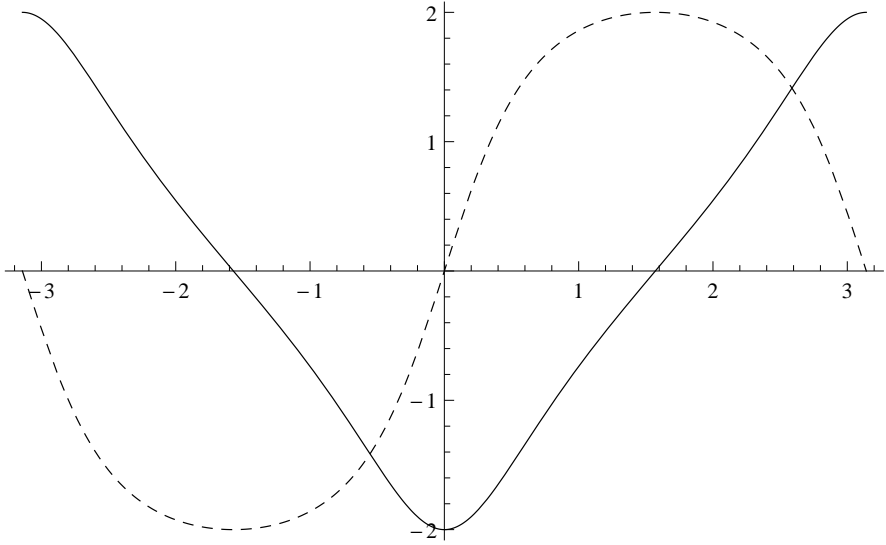}
\end{center}
\caption{\label{fig_gpm} Graph of $\gp$ (solid line) and $\gm$ (dashed line) in the case $\t=\sqrt{5}-2$ on the ellipse $\pEt$ parametrized by $\phi\in (-\pi,\pi]$ .}
\end{figure} 

%\begin{figure}[!tbph]
%\begin{center}
%\includegraphics[height=6cm,angle=0,scale=0.9]{../../fig/fig_gm.eps}
%\includegraphics[height=6cm,angle=0,scale=0.9]{fig_gm.eps}
%\end{center}
%\caption{\label{fig_gm} Graph of $\gm$ in the case $\t=\sqrt{5}-2$ on the ellipse.}
%\end{figure} 
%gemeinsamer Plot von \gpm!!!

%evlt. auch weglassen oder am Schluss des Kapitels verschieben!
We will see that in all of the following calculation we will nearly only need from $\f$ and $g_\pm$ the properties of $\f$ from Proposition \ref{prop_f_zero} and the relations and values of $g_\pm$ from Proposition \ref{prop_g}.

\subsubsection{Limit $\t\rightarrow 0$}
\begin{lemma}\label{lemma_lim_f_g}
Let $z\in \C\setminus \{0\}$. Then 
\begin{align}
\lim_{\t\rightarrow 0} \f(z)&=-\vert z\vert^2+ 2\log \vert z\vert +1, \\
\lim _{\t\rightarrow 0} \gp(z)&= -2\Re\left( \tfrac{1}{z} \right) \qquad \text{and} \qquad \lim _{\t\rightarrow 0} \gm(z)= -2\Im\left( \tfrac{1}{z} \right). \nn
\end{align}
\end{lemma}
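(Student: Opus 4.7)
The plan is to reduce each of the three limits to the statements already proved in Lemma \ref{lemma_limit_U_W} about $\U{\F}$, $\U{\F}/\F$, and $\W{\F}$, after rewriting the prefactors $\tfrac{1\mp\t}{2\sqrt{\t}}$ in a way that exposes a factor $\F^{-1}$. Since $\F = 2\sqrt{\t/(1-\t^2)}\to 0$ as $\t\to 0$, any ``$\lim_{\F\to 0}$'' statement from Lemma \ref{lemma_limit_U_W} is the same as a ``$\lim_{\t\to 0}$'' statement. Throughout, I would fix $z\in\C\setminus\{0\}$ and consider only $\t$ so small that $z\notin [-\F,\F]$, so that $\U{\F}(z)$ and $\W{\F}(z)$ are defined.

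For $\f$, I would start from \eqref{eq_f} in the form
\begin{equation*}
\f(z) = \t\Re(z^2) - \vert z\vert^2 + \Re\bigl((\U{\F}(z))^2\bigr) - 2\log\vert\U{\F}(z)\vert + \log \t + 1,
\end{equation*}
and handle the three $\t$-dependent pieces separately. The term $\t\Re(z^2)$ tends to $0$ and $\Re((\U{\F}(z))^2) \to 0$ directly by Lemma \ref{lemma_limit_U_W}. The only delicate piece is $-2\log\vert\U{\F}(z)\vert + \log\t$, which on its own diverges; I would regroup it as $-2\log\vert\U{\F}(z)/\F\vert + \log(\t/\F^2)$. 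The first summand converges to $-2\log(1/(2\vert z\vert)) = 2\log(2\vert z\vert)$ by Lemma \ref{lemma_limit_U_W}. For the second, I insert $\F^2 = 4\t/(1-\t^2)$, giving $\log(\t/\F^2) = \log((1-\t^2)/4) \to -2\log 2$. Adding these and combining with $-\vert z\vert^2 + 1$ gives exactly $-\vert z\vert^2 + 2\log\vert z\vert + 1$.

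For $\gp$ and $\gm$, the key observation is the algebraic identity
\begin{equation*}
\tfrac{1\mp\t}{2\sqrt{\t}} \;=\; \tfrac{1}{\F}\sqrt{\tfrac{1\mp\t}{1\pm\t}},
\end{equation*}
which follows from $\F = 2\sqrt{\t/(1-\t^2)}$. Substituting this into \eqref{eq_gp} and \eqref{eq_gm} rewrites
\begin{equation*}
\gp(z) = -\sqrt{\tfrac{1-\t}{1+\t}}\,\tfrac{\Re\,\U{\F}(z)}{\F}\,\vert\W{\F}(z)\vert^2, \qquad \gm(z) = -\sqrt{\tfrac{1+\t}{1-\t}}\,\tfrac{\Im\,\U{\F}(z)}{\F}\,\vert\W{\F}(z)\vert^2.
\end{equation*}
Now all three factors in each product have explicit limits by Lemma \ref{lemma_limit_U_W}: the square roots tend to $1$, $\U{\F}(z)/\F \to 1/(2z)$ (so its real and imaginary parts tend to $\Re(1/(2z))$ and $\Im(1/(2z))$), and $\vert\W{\F}(z)\vert^2\to 4$. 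Multiplying gives the claimed values $-2\Re(1/z)$ and $-2\Im(1/z)$.

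There is no real obstacle here; the only point requiring a little care is the cancellation in the $\f$ computation between the logarithmically divergent $-2\log\vert\U{\F}(z)\vert$ and $\log \t$, which is precisely why one must factor out $\F$ before taking the limit. All three limits are pointwise on $\C\setminus\{0\}$ and, as the remark after Lemma \ref{lemma_limit_U_W} already warns, are not uniform near $z=0$ because $\U{\F}(z)/\F$ and $\W{\F}(z)$ blow up there.
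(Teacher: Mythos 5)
Your proof is correct and follows essentially the same route as the paper: both reduce to Lemma \ref{lemma_limit_U_W} and both handle the $\log$-divergence in $\f$ by pairing $-2\log\vert\U{\F}(z)\vert$ with $\log\t$ before taking the limit. The only cosmetic difference is that the paper groups these as $-2\log\bigl(\vert\U{\F}(z)\vert/\sqrt{\t}\bigr)$ while you group as $-2\log\vert\U{\F}(z)/\F\vert + \log(\t/\F^2)$; since $\F\sim 2\sqrt{\t}$ as $\t\to 0$ these are equivalent, and the $\gpm$ computations via the identity $\tfrac{1\mp\t}{2\sqrt{\t}}=\tfrac{1}{\F}\sqrt{\tfrac{1\mp\t}{1\pm\t}}$ match the paper's rearrangement exactly.
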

\begin{proof}
For $z\neq 0$ according to Lemma \ref{lemma_limit_U_W}
\begin{align}
\lim_{\t\rightarrow 0} \U{\F}(z)&=\lim_{\F \rightarrow 0} \U{\F}(z)=0, \qquad \lim_{\t \rightarrow 0} \tfrac{ \U{\F}(z) }{\sqrt{\t}}=\lim_{\F\rightarrow 0} \tfrac{2 \U{\F}(z) }{\F}=\tfrac{1}{z}, \nn \\
\intertext{and}
\lim_{\t\rightarrow 0} \W{\F}(z)&=\lim_{\F \rightarrow 0} \W{\F}(z)=2.
\intertext{Therefore we find}
\lim_{\t\rightarrow 0} \f(z)&=\lim_{\t\rightarrow 0} \left(\t \Re(z^2)-\vert z\vert^2+\Re\left((\U{\F}(z))^2 \right)-2 \log \tfrac{\vert \U{\F}(z) \vert}{\sqrt{\t}} +1\right)\nn \\
&=-\vert z\vert^2+ 2\log \vert z\vert +1, \\
\lim_{\t\rightarrow 0} \gp(z)&=\lim_{\t\rightarrow 0} \left( -\Re\bigl(\tfrac{\U{\F}(z)}{\F} \bigr) \left\vert \W{\F}(z) \right\vert^2 \sqrt{\tfrac{1-\t}{1+\t}} \right)=-2\Re\left( \tfrac{1}{z} \right),
\intertext{and}
\lim_{\t\rightarrow 0} \gm(z)&=\lim_{\t\rightarrow 0} \left( -\Im\bigl(\tfrac{\U{\F}(z)}{\F} \bigr) \left\vert \W{\F}(z) \right\vert^2 \sqrt{\tfrac{1+\t}{1-\t}} \right)=-2\Im\left( \tfrac{1}{z} \right).
\end{align}
\end{proof}
\begin{remark}
Comparing the result of Lemma \ref{lemma_lim_f_g} with $\f_0$ and $g_{0,\pm}$ in \eqref{eq_f_g_0}, we see that in the limit $\t\rightarrow 0$ the functions $\f$ and $\gpm$ converge to the pure Gaussian case $t=0$.
\end{remark}

\subsubsection{Asymptotics of $\f$ for Large $\vert z\vert$}
\begin{prop}
For $z\gg \F$ the asymptotics of $\f$ is
\begin{align}
\f(z)=-\vert z \vert^2 +\t \Re(z^2) +2 \log \vert z\vert + \log (1-\t^2)+1+\lO \left(\tfrac{1}{z^2}\right).
\end{align}
\end{prop}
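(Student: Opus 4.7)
The plan is simply to substitute the large\nbd$|z|$ expansion of $\U{\F}(z)$ into the explicit formula \eqref{eq_f} for $\f$:
\begin{equation*}
\f(z)=\t\Re(z^2)-\vert z\vert^2+\Re\!\left(\left(\U{\F}(z)\right)^2\right)-2\log\left\vert\U{\F}(z)\right\vert+\log\t+1,
\end{equation*}
and then absorb everything of size $\lO(1/z^2)$ into the error. The first two terms already appear in the claim, so all the work is in handling the two middle terms.

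First, from Corollary \ref{cor_U2} we know $\U{\F}(z)\to 0$ as $\vert z\vert\to\infty$, which pins down the correct branch: for $\vert z\vert\gg\F$ the asymptotic expansion
\begin{equation*}
\sqrt{z^2-\F^2}=z\left(1-\tfrac{\F^2}{2z^2}+\lO(z^{-4})\right)
\end{equation*}
holds (with the root chosen so the right-hand side tends to $z$), giving
\begin{equation*}
\U{\F}(z)=\tfrac{z-\sqrt{z^2-\F^2}}{\F}=\tfrac{\F}{2z}+\lO(z^{-3}).
\end{equation*}
Squaring yields $\left(\U{\F}(z)\right)^2=\lO(z^{-2})$, so the contribution $\Re((\U{\F}(z))^2)$ is absorbed into the $\lO(1/z^2)$ error.

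Second, taking the modulus and the logarithm gives
\begin{equation*}
\left\vert\U{\F}(z)\right\vert=\tfrac{\F}{2\vert z\vert}\bigl(1+\lO(z^{-2})\bigr),\qquad -2\log\left\vert\U{\F}(z)\right\vert=2\log\vert z\vert-2\log\tfrac{\F}{2}+\lO(z^{-2}).
\end{equation*}
Using $\F^2=\tfrac{4\t}{1-\t^2}$ we have $-2\log\tfrac{\F}{2}=-\log\t+\log(1-\t^2)$, which precisely cancels the $\log\t$ in the formula for $\f$ and produces the claimed constant $\log(1-\t^2)+1$. Collecting the four terms yields the asymptotic expansion in the statement.

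The only delicate point is the branch choice, but this is settled once and for all by Definition \ref{def_U} together with Corollary \ref{cor_U2} (which guarantees $\vert\U{\F}(z)\vert<1$, hence the minus sign in front of the root is the correct one for large $\vert z\vert$ in either half of the plane $\Cp$ or its complement, after accounting for the corresponding sign convention of $\sqrt{z^2}$). There is no serious obstacle; the computation is purely algebraic expansion in powers of $\F/z$.
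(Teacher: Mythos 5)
Your proof is correct and follows essentially the same route as the paper: expand $\sqrt{z^2-\F^2}$ for large $\vert z\vert$ to get $\U{\F}(z)=\tfrac{\F}{2z}+\lO(z^{-3})$, substitute into \eqref{eq_f}, and simplify the logarithm using $\F^2=\tfrac{4\t}{1-\t^2}$. The paper's final constant is written as $\log\bigl(\tfrac{4\t}{\F^2}\bigr)$, which you correctly identify with $\log(1-\t^2)$.
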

\begin{proof}
By expanding the square root in $\U{\F}$, we find for $z\gg \F$
\begin{align}
\U{\F}(z)=\tfrac{\F}{2z}+ \lO\left( \tfrac{1}{z^3} \right).
\end{align}
So we get for $\f$
\begin{align}
\f(z)&=-\vert z \vert^2 +\t \Re(z^2) +\Re\left((\U{\F}(z))^2\right) -2 \log \left(\tfrac{\F}{2\vert z\vert}\left(1+\lO\left(\tfrac{1}{z^2}\right) \right)\right)+\log \t+1 \nn \\
&=-\vert z \vert^2 +\t \Re(z^2) +2 \log \vert z\vert + \log\left(\tfrac{4\t}{\F^2}\right)+1+\lO \left(\tfrac{1}{z^2}\right).
\end{align}
\end{proof}

\section{Density for the Gaussian Potential}\label{sec_density}
Here we are going to calculate the density and in the next chapter the reproducing kernel. First we are just getting the result and we won't worry if our approximations are good enough and if exchange of limit and integration is allowed. We will verify in Chapter \ref{sec_verification} that the result is correct. 
%2. Bemerkung evtl. erst später beim Integrieren schreiben / und dann nochmals beim Kernel

\subsection{Density at Zero}
\begin{prop}\label{prop_density_zero}
Let $0<\delta<1$. Then for all $\t\in \C$ with $\vert \t\vert<1$ the density of eigenvalues at zero is a constant with the value
\begin{equation}
\rho(0)=\lim_{\n\rightarrow\infty} \rhon(0) = \frac{1}{\pi},
\end{equation}
and the limit converges uniformly for all $\vert \t\vert <1-\delta$.
\end{prop}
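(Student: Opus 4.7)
The plan is to compute $\rho_\n(0)$ exactly in terms of a power series in $|\t|^2$ and then identify the limit via a classical binomial generating function. Since $V(0)=0$, Definition \ref{def_rho} together with Definition \ref{def_norm_kernel} gives
\begin{equation}
\rhon(0) = \tfrac{1}{\n}\sum_{k=0}^{\n-1}|\op{k}(0)|^2.
\end{equation}
By the explicit formula \eqref{eq_op_herm} we have $\op{k}(0)=H_k(0)\,\cc{\t}^{k/2}\op{0}/(2^{k/2}\sqrt{k!})$. Using the standard values $H_{2m+1}(0)=0$ and $H_{2m}(0)=(-1)^m(2m)!/m!$, only the even-indexed terms survive, and I would find
\begin{equation}
|\op{2m,\n}(0)|^2 = \tfrac{(2m)!}{4^m(m!)^2}\,|\t|^{2m}\,|\op{0}|^2 = \binom{2m}{m}\tfrac{|\t|^{2m}}{4^m}\cdot\tfrac{\n}{\pi}\sqrt{1-|\t|^2}.
\end{equation}

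Substituting this into the kernel sum, the factors of $\n$ cancel and
\begin{equation}
\rhon(0) = \tfrac{\sqrt{1-|\t|^2}}{\pi}\sum_{m=0}^{\lfloor (\n-1)/2\rfloor} \binom{2m}{m}\tfrac{|\t|^{2m}}{4^m}.
\end{equation}
The key observation is then the binomial identity $\sum_{m=0}^{\infty}\binom{2m}{m}x^m/4^m = (1-x)^{-1/2}$ for $|x|<1$. Applied with $x=|\t|^2<1$, the factor $\sqrt{1-|\t|^2}$ exactly cancels the series in the limit $\n\rightarrow\infty$, yielding $\rho(0)=1/\pi$.

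For the uniform convergence on $|\t|<1-\delta$, I would estimate the tail: since $\binom{2m}{m}/4^m\le 1$, the remainder from index $M=\lfloor(\n-1)/2\rfloor+1$ satisfies
\begin{equation}
0 \le \sum_{m=M}^{\infty}\binom{2m}{m}\tfrac{|\t|^{2m}}{4^m} \le \sum_{m=M}^{\infty}|\t|^{2m} \le \tfrac{(1-\delta)^{2M}}{1-(1-\delta)^2},
\end{equation}
which tends to zero uniformly in $\t$ on the disk $|\t|<1-\delta$. Multiplying by the bounded factor $\sqrt{1-|\t|^2}/\pi$ finishes the argument.

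There is no real obstacle here: the orthonormal polynomials at $z=0$ are explicit and the resulting series is a textbook one. The only delicate point is making the tail estimate uniform in $\t$, which is why the condition $|\t|<1-\delta$ (rather than merely $|\t|<1$) is needed; the bound above handles this with room to spare.
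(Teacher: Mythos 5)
Your proof is correct and follows essentially the same route as the paper's: both compute $|\op{2m,\n}(0)|^2$ from the explicit Hermite formula at the origin, recognize the resulting series as the binomial expansion of $(1-|\t|^2)^{-1/2}$, and observe the prefactor $\sqrt{1-|\t|^2}$ cancels it. The paper cites uniform convergence of the Taylor series on $|x|\le\frac{(1-\delta)^2}{4}$ directly, whereas you spell out an explicit geometric tail bound via $\binom{2m}{m}4^{-m}\le 1$; this is the same fact, made slightly more self-contained.
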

\begin{proof}
The density at zero can be calculated directly using that (see \cite[p. 432]{hille})
\begin{align}
H_m(0)&=\left\{\begin{array}{ll} 0 & \text{if $m$ is odd,}\\ \frac{(-1)^{m/2} m!}{(\frac{1}{2}m)!} & \text{if $m$ is even.} \end{array}\right.
\intertext{So we find for the orthonormal polynomials \eqref{eq_op_herm}}
\left\vert \op{m}(0)\right\vert^2&=\left\{\begin{array}{ll} 0 & \text{if $m$ is odd,}\\ \sqrt{1-\vert \t\vert^2}  \frac{ \n m! \vert \t\vert^m}{\pi \left(\left(\frac{1}{2}m\right)!\right)^2 2^m} & \text{if $m$ is even.} \end{array}\right. \label{eq_op_zero}
\end{align}

The function $(1-4x)^{-1/2}$ is analytic on $B_{\!\frac{1}{4}}(0)$. Using that
\begin{equation}
\begin{split}
\left. \frac{\dd^l}{\dd y^l} (1-4y)^{-1/2}\right\vert_{y=0}& =4^l \frac{1}{2}\cdot\frac{3}{2}\cdots\frac{2l-1}{2} (1-4y)^{-1/2-l}\bigg\vert_{y=0}\\
&=2^l (2l-1)!! = \frac{(2l)!}{l!},
\end{split}
\end{equation}
we get its Taylor series
\begin{equation}\label{eq_density_zero_taylor}
\frac{1}{\sqrt{1-4x}}=\sum_{l=0}^{\infty} \left. \frac{\dd^l}{\dd y^l} \frac{1}{\sqrt{1-4y}}\right\vert_{y=0} \frac{x^l}{l!}=\sum_{l=0}^{\infty} \frac{(2l)!}{(l!)^2}x^l,
\end{equation}
which converges uniformly for all $\vert x\vert \in [0,\tfrac{1-\delta}{4}]$.

Setting $m=2l$ in \eqref{eq_op_zero}, we find for the density of eigenvalues (Definition \ref{def_rho}) at zero
\begin{equation}
\begin{split}
\rho(0)&=\lim_{\n\rightarrow\infty}\rho_\n(0)=\lim_{\n\rightarrow\infty}\frac{1}{\n}\e^{-\n V(0)}\sum_{m=0}^{\n-1}\left\vert\op{m}(0)\right\vert^2 \\
&=\lim_{\n\rightarrow\infty} \frac{1}{\n} \sum_{l=0}^{\lfloor\frac{\n-1}{2}\rfloor}\sqrt{1-\vert \t\vert^2}  \frac{ \n (2l)! \vert \t\vert^{2l}}{\pi (l!)^2 2^{2l}} \\
&=\frac{1}{\pi}\sqrt{1-\vert\t\vert^2} \sum_{l=0}^\infty \frac{(2l)!}{(l!)^2} \left(\frac{\vert\t\vert^2}{4}\right)^{l}=\frac{1}{\pi}.
\end{split}
\end{equation}
In the last step we have used \eqref{eq_density_zero_taylor} with $x=\tfrac{\vert \t\vert^2}{4}$. Because $\vert x\vert \le\frac{(1-\delta)^2}{4}<\frac{1}{4}$, the sum converges uniformly for all $\vert \t\vert\le 1-\delta$.
\end{proof}
\begin{remark}
Contrary to other sections, in this proof we nowhere have used that $\t$ must be real.
\end{remark}

%\subsection{In the Bulk and Outside of the Ellipse}
\subsection{Density at fixed $\protect\chapz$}\label{sec_density_fixed}

We can find the density $\rho_\n$ at a point $z_0\in \C$ by integration,
\begin{align}
\rho_\n(z_0)=\rho_\n(0)&+\int_0^{\Re\, z_0} \frac{\partial \rho_\n}{\partial x}(x,0) \dd x+\int_0^{\Im\, z_0} \frac{\partial \rho_\n}{\partial y}(\Re\, z_0,y) \dd y.
%\end{align}
\intertext{According to \eqref{eq_drho_x} and \eqref{eq_drho_y} $\frac{\partial \rho_\n}{\partial x}$ and $\frac{\partial \rho_\n}{\partial y}$ can be expressed with the help of identities \eqref{eq_id1} and \eqref{eq_id2}, for which we know the asymptotics from Proposition \ref{prop_asym}. Therefore we can approximate the density as $\n \rightarrow \infty$ by}
%\begin{align}
\rho_\n(z_0)=\rho_\n(0)&+\int_0^{\Re\, z_0}  \sqrt{\tfrac{\n}{2\pi^3}} \gp(x,0) \e^{\n \f(x,0)}(1+o(1)) \dd x\nn\\
&+\int_0^{\Im\, z_0} (-1) \sqrt{\tfrac{\n}{2\pi^3}} \gm(\Re\, z_0,y) \e^{\n \f(\Re\, z_0,y)}(1+o(1)) \dd y. \label{eq_density_z0}
\end{align}
We will show that we only get a contribution from the integral if the integration path crosses the ellipse $\pEt$. Therefore the density will be constant inside of the ellipse, with the value $\tfrac{1}{\pi}$ according to Proposition \ref{prop_density_zero}. Across the ellipse the density has a jump and will be constant again outside. It is obvious (for example from Definition \ref{def_rho} together with Corollary \ref{cor_norm}) that the density is normalized to have integral $1$. Therefore the constant density on the non-compact outside of the ellipse has to be zero. We will show this by an explicit calculation which additionally will reveal that the density on $\pEt$ itself is $\tfrac{1}{2\pi}$.

%\subsubsection{On the Interval $(-\F,\F)$}

% Erklärung dass es Null ergeben muss, weil 
% Remark, dass das Intervall vom Rest getrennt ist -> Chapter \ref{sec_verification}

%\subsubsection{Outsite of the Interval $[-\F,\F]$}\label{subsection_density} %TO BE DONE Prop anpassen, damit auch Prop mit (-F,F) vorkommt und kurz erwähnen, dass die Dichte dort konstant sein muss. Nochmals kurz überfliegen...
\begin{prop}\label{prop_density}
The density of eigenvalues for the potential $V(z)=\vert z\vert^2-\Re(\t z^2)$  is given by
\begin{equation}
\rho(z)=\lim_{\n\rightarrow\infty} \rho_\n(z)=\left\{\begin{array}{lll} \tfrac{1}{\pi} & \text{if $z$ lies inside of $\pEt$,}\\ \tfrac{1}{2 \pi} & \text{if $z\in \pEt$,}\\ 0 & \text{if $z$ lies outside of $\pEt$.}\end{array}\right. 
\end{equation}
\end{prop}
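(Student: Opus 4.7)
The strategy is to mimic the pure Gaussian case (Proposition \ref{prop_density_0}): anchor at the known value $\rhon(0)=\tfrac{1}{\pi}$ established in Proposition \ref{prop_density_zero} and propagate it to an arbitrary $z_0\in\C$ via the integration formula \eqref{eq_density_z0}. I would integrate along the two-leg path $0\to\Re\,z_0\to z_0$, using \eqref{eq_drho_x}\till\eqref{eq_drho_y} to express each integrand in terms of $\tfrac{\partial\HM}{\partial w}$ and $\tfrac{\partial\HM}{\partial z}$, and then inserting Proposition \ref{prop_asym} (outside $[-\F,\F]$) together with Proposition \ref{prop_asym_B} (on the sub-interval of $(-\F,\F)$ traversed by the horizontal leg) to substitute the asymptotics.

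By Proposition \ref{prop_f_zero}, $\f<0$ strictly on $\C\setminus(\pEt\cup[-\F,\F])$, so on every sub-path at positive distance from $\pEt\cup[-\F,\F]$ the factor $\sqrt{\n}\,\e^{\n\f}$ decays exponentially and uniformly to zero; the contribution of such sub-paths therefore vanishes in the limit. Combined with the $o(1)$ bound of Proposition \ref{prop_asym_B} this shows that $\rho(z_0)$ is piecewise constant on $\C\setminus\pEt$, and together with the base value $\rho(0)=\tfrac{1}{\pi}$ all that remains is to determine the size and sign of the jump of $\rho$ across $\pEt$ at a transverse crossing point $z_E$.

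For this jump I would carry out a local Laplace analysis near $z_E$. By Lemmas \ref{lemma_f_zero} and \ref{lemma_df_zero}, $\f(z_E)=\tfrac{\partial\f}{\partial x}(z_E)=\tfrac{\partial\f}{\partial y}(z_E)=0$, so along a horizontal path
\begin{equation*}
\f(x,y_E)\approx\tfrac{1}{2}\tfrac{\partial^{2}\f}{\partial x^{2}}(z_E)(x-x_E)^{2},
\end{equation*}
and analogously along a vertical one. I would freeze the continuous factor $\gp(z_E)$ and invoke Proposition \ref{prop_g} in the form $\gp(z_E)=-\sign(x_E)\sqrt{-\tfrac{\partial^{2}\f}{\partial x^{2}}(z_E)}$. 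The change of variable $u=\sqrt{-\tfrac{\n}{2}\tfrac{\partial^{2}\f}{\partial x^{2}}(z_E)}\,(x-x_E)$ then converts the leading term of \eqref{eq_density_z0} into
\begin{equation*}
-\tfrac{\sign(x_E)}{\pi^{3/2}}\int_{u_{1}}^{u_{2}}\e^{-u^{2}}\dd u,
\end{equation*}
with $u_1\to-\infty$ and $u_2$ tending to $-\infty$, $0$, or $+\infty$ according as $z_0$ remains inside, lands on, or crosses to the outside of $\pEt$. The corresponding increments $0$, $-\tfrac{\sign(x_E)}{2\pi}$, or $-\tfrac{\sign(x_E)}{\pi}$ added to $\tfrac{1}{\pi}$ yield the three claimed values. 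When the crossing occurs on the vertical leg instead, the analogous computation with $\gm$ and $\tfrac{\partial^{2}\f}{\partial y^{2}}$ gives the same conclusion.

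The main obstacle is the rigorous justification of swapping integration with the $o(1)$ errors in Propositions \ref{prop_asym} and \ref{prop_asym_B}, together with the handling of degenerate configurations: the branch points $z_E=\pm\F$ (where neither \eqref{eq_pi_o} nor \eqref{eq_pir_o} applies), tangential contacts at $z_E=\pm\mas$ or $z_E=\pm i\mis$, and endpoints $z_0$ for which one of the two legs never crosses $\pEt$. These are treated by choosing paths transverse to $\pEt$, perturbing $z_0$ slightly and invoking continuity of the eventual limit $\rho$ when necessary, and deferring the explicit uniform error estimates to Chapter \ref{sec_verification}, in line with the author's own remark at the beginning of this section.
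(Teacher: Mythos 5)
Your proposal follows essentially the same route as the paper's proof: anchor at $\rho(0)=\tfrac{1}{\pi}$, propagate via the integral representation \eqref{eq_density_z0} along axis-parallel legs, use strict negativity of $\f$ away from $\pEt$ (Proposition~\ref{prop_f_zero}) to show the density is locally constant, and compute the jump by a local Laplace (quadratic Taylor) analysis at $z_E$ with Proposition~\ref{prop_g} converting $\gp(z_E)$ into $\pm\sqrt{-\partial_x^2\f(z_E)}$, all at a heuristic level with rigor deferred to Chapter~\ref{sec_verification}. The one small refinement in your version is that you explicitly invoke Proposition~\ref{prop_asym_B} for the portion of the horizontal leg lying in $(-\F,\F)$, which the paper's own sketch glosses over.
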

\begin{proof}
Proposition \ref{prop_density_zero} has shown that the density at zero is $\frac{1}{\pi}$. First we will prove that the density is constant inside and outside of $\pEt$. And then we will calculate the jump across $\pEt$. 

We can find the density by \eqref{eq_density_z0}. We will show that parts of these integrals will contribute nothing as long as we don't cross the ellipse $\pEt$. We look at two point $z_1$ and $z_2$ which lie either on a line parallel to the real axis or on a line parallel to the imaginary axis, i.e.\ $\Im\, z_1=\Im\, z_2$ or $\Re\, z_1=\Re\, z_2$ respectively. We further assume that $z_1$ and $z_2$ lie both inside or outside of $\pEt$, i.e.\ the line segment $[z_1,z_2]$ does not cross the ellipse. Then Proposition \ref{prop_f_zero} tells us that $\f$ is strictly negative on $[z_1,z_2]$, i.e.\ there exists a $\epsilon>0$ so that $\f(z)<-\epsilon$ for all $z\in [z_1,z_2]$. $g_\pm$ are continuous functions and therefore are bounded on a compact set, i.e.\ there exists a constant $C<\infty$ so that $\vert \gp(z)\vert<C$ and $\vert\gm(z)\vert<C$ for all $z\in [z_1,z_2]$. Then we can estimate
\begin{align}
\vert\rho(z_2)-\rho(z_1)\vert&=\lim_{\n\rightarrow\infty} \left\vert \int_{z_1}^{z_2} \sqrt{\frac{\n}{2\pi^3}} g_\pm(\xi) \e^{\n \f(\xi)}\dd \xi \right \vert \le\lim_{\n\rightarrow\infty} \sqrt{\frac{\n}{2\pi^3}} C \e^{-\n \epsilon} \left\vert \int_{z_1}^{z_2} \dd\xi \right \vert=0, 
\end{align}
where $\xi=(x,\Im\, z_1)$ (if we integrate along a path parallel to the real axis) or $\xi=(\Re\, z_1,y)$ (if we integrate along a path parallel to the imaginary axis) and $\dd \xi=\dd x$ or $ \dd \xi=\dd y$ respectively. We have showed that the density is constant inside and outside of $\pEt$.

Now we consider a point $z_1=x_1+i y_1$ inside but near the ellipse $\pEt$ and a point $z_2=x_2+i y_2$ near $z_1$ ($z_2$ is free to lie inside, outside or on $\pEt$). Further we assume that either $y_1=y_2$ or $x_1=x_2$. As $z_1$ lies inside the ellipse we know that $\rho(z_1)=\frac{1}{\pi}$. Again we will consider a line segment starting at $z_1$ either parallel to the real axis or parallel to the imaginary axis. The point where the line crosses $\pEt$ will be called $z_E=x_E+i y_E$. As $g_\pm$ are continuous they won't change much in the near of $z_E$. We assume that $z_1$ and $z_2$ lie so near to $z_E$ that the change of $g_\pm$ along the line segment $[z_1,z_2]$ can be neglected, i.e.\ $g_\pm(z)\approx g_\pm(z_E)$ for all $z\in [z_1,z_2]$.
Now we make a Taylor expansion up to order two of $\f$ at $z_E$ as a function of either $x$ or $y$ respectively. So we get
\begin{align}
\f(x,y_E)&=\f(x_E,y_E)+\frac{\partial \f}{\partial x}(x_E,y_E)(x-x_E)+\frac{\partial^2 \f}{\partial x^2}(x_E,y_E)\tfrac{(x-x_E)^2}{2}+\mathcal{O}\left((x-x_E)^3\right),
\intertext{or}
\f(x_E,y)&=\f(x_E,y_E)+\frac{\partial \f}{\partial y}(x_E,y_E)(y-y_E)+\frac{\partial^2 \f}{\partial y^2}(x_E,y_E)\tfrac{(y-y_E)^2}{2}+\mathcal{O}\left((y-y_E)^3\right).
\end{align}
By Proposition \ref{prop_f_zero} and Lemma \ref{lemma_df_zero} we know that on the ellipse $\pEt$ we have that $\f=0$ as well as the first derivatives $\frac{\partial \f}{\partial x}=0$ and $\frac{\partial \f}{\partial y}=0$. Again we assume that $z_1$ and $z_2$ lie so near to $z_E$ that we can neglect the corrections $\mathcal{O}\left((x-x_E)^3\right)$ or $\mathcal{O}\left((y-y_E)^3\right)$, i.e.\ we have $f(x,y_E)\approx \frac{\partial^2 \f}{\partial x^2}(x_E,y_E)\tfrac{(x-x_E)^2}{2}$ or $f(x_E,y)\approx \frac{\partial^2 \f}{\partial y^2}(x_E,y_E)\tfrac{(y-y_E)^2}{2}$ on $[z_1,z_2]$. Then we find for the density at $z_2$
\begin{equation}\label{eq_proof_density}
\begin{split}
\rho(z_2)&=\rho(z_1)+\lim_{\n\rightarrow\infty} \int_{x_1}^{x_2} \sqrt{\frac{\n}{2\pi^3}} \gp(x,y_E) \e^{\n \f(x,y_E)}\dd x \\
&\approx \frac{1}{\pi}+\lim_{\n\rightarrow\infty}\sqrt{\frac{\n}{2\pi^3}} \gp(x_E,y_E) \int_{x_1}^{x_2} \e^{-\eta_+^2 (x-x_E)^2}\dd x\\ 
&= \frac{1}{\pi}+\lim_{\n\rightarrow\infty}\sqrt{\frac{\n}{2\pi^3}} \frac{\gp(x_E,y_E)}{\eta_+} \int_{\eta_+ (x_1-x_E)}^{\eta_+ (x_2-x_E)} \e^{-x^2} \dd x\\ 
&= \frac{1}{\pi}-\sqrt{\frac{1}{\pi^3}} \sign(x_E) \int_{-\infty \sign(x_E)}^{\infty (x_2-x_E)} \e^{-x^2} \dd x\\ 
&= \frac{1}{\pi}-\left\{ \begin{array}{lll} \pi^{-3/2}\int_{-\infty}^{-\infty} \e^{-x^2} \dd x \\ \pi^{-3/2}\int_{-\infty}^{0}\e^{-x^2} \dd x\\ \pi^{-3/2}\int_{-\infty}^{\infty}\e^{-x^2} \dd x \end{array}\right. = \left\{ \begin{array}{lll} \frac{1}{\pi} &\text{if $\vert x_2\vert<\vert x_E\vert$,}\\ \frac{1}{2\pi} &\text{if $x_2=x_E$,}\\ 0 &\text{if $\vert x_2\vert>\vert x_E\vert$.} \end{array}\right.
\end{split}
\end{equation}
In the step from line two to line three we have substituted $\eta_+ (x-x_E)$ by $x$ where $\eta_+=\sqrt{-\cramped{\frac{\partial^2 \f}{\partial x^2}}(x_E,y_E) \cramped{\frac{\n}{2}}>0}$ and $\eta_+\rightarrow \infty$ when $\n\rightarrow \infty$. Next we have used that by Proposition~\ref{prop_g} $\sqrt{-\cramped{\frac{\partial^2 \f}{\partial x^2}}(x_E,y_E)}=-\sign(x_E) \gp(x_E,y_E)$.

When we integrate on a path parallel to the imaginary axis we get analogously
\begin{equation}
\begin{split}
\rho(z_2)&=\rho(z_1)+\lim_{\n\rightarrow\infty} \int_{y_1}^{y_2} (-1) \sqrt{\frac{\n}{2\pi^3}} \gm(x_E,y) \e^{\n \f(x_E,y)}\dd y \\
&\approx \frac{1}{\pi}-\lim_{\n\rightarrow\infty}\sqrt{\frac{\n}{2\pi^3}} \gm(x_E,y_E) \int_{y_1}^{y_2} \e^{-\eta_-^2 (y-y_E)^2}\dd y\\
&= \frac{1}{\pi}-\lim_{\n\rightarrow\infty}\sqrt{\frac{\n}{2\pi^3}} \frac{\gm(x_E,y_E)}{\eta_-} \int_{\eta_- (y_1-y_E)}^{\eta_- (y_2-y_E)} \e^{-y^2} \dd y\\ 
&= \frac{1}{\pi}-\sqrt{\frac{1}{\pi^3}} \sign(y_E) \int_{-\infty\sign(y_E)}^{\infty (y_2-y_E)} \e^{-y^2} \dd y\\ 
&= \left\{ \begin{array}{lll} \frac{1}{\pi} &\text{if $\vert y_2\vert<\vert y_E\vert$,}\\ \frac{1}{2\pi} &\text{if $y_2=y_E$,}\\ 0 &\text{if $\vert y_2\vert>\vert y_E\vert$,} \end{array}\right. 
\end{split}
\end{equation}
where $\eta_-=\sqrt{-\cramped{\frac{\partial^2 \f}{\partial y^2}}(x_E,y_E) \cramped{\frac{\n}{2}}}>0$. Again from Proposition \ref{prop_g}, we have used that $\sqrt{-\cramped{\frac{\partial^2 \f}{\partial y^2}}(x_E,y_E)}=\sign(y_E)\gm(x_E,y_E)$. % TO BE DONE Overfull
We have proven that the jump across $\pEt$ is $\frac{1}{\pi}$ and that the density on the ellipse itself is half of the density inside.
\end{proof}
\begin{remark}
The area of the ellipse is $\pi \mas \mis=\pi$. So the density is normalized to 1 as it is required.
\end{remark}

\subsection{Density on the Ellipse in a Scaled Limit}
Now we want to study how the density drops to zero across the ellipse. For this we have to scale the coordinates appropriate with the numbers of eigenvalues $\n$. In the Hermitian case, % TO BE DONE (see \cite{}) 
the suitable scaling was to hold the distance between eigenvalues constant. Therefore the coordinates have been scaled with $\n$. As in our case the eigenvalues fill a domain in the complex plain, we have to scale the coordinates with $\sqrt{\n}$ for that the mean distance between eigenvalues stays constant. In the following we are going to calculate the density at $z_0+\frac{a}{\sqrt{\n}}$ when $z_0\in\pEt$ and $a\in \C$. We will find that the density drops to zero as the complementary error function 
\begin{equation}
\erfc(z)=1-\erf(z)=\frac{2}{\sqrt{\pi}}\int_z^\infty \e^{-w^2} \dd w.
\end{equation}

\begin{theorem}\label{th_density_scaled_limit}
The density of eigenvalues at a point that scales to $z_0$ with $\n^{-1/2}$ is given by 
\begin{align}
\lim_{\n\rightarrow\infty}\rho_\n\left(z_0+\frac{a}{\sqrt{\n}}\right)&=\frac{1}{\pi}\left\{\begin{array}{lll}1 & \text{if $z_0$ lies inside of $\pEt$,}\\ \frac{1}{2} \erfc\left(\zeta(a) \right) & \text{if $z_0\in \pEt$,}\\ 0 & \text{if $z_0$ lies outside of $\pEt$,}\end{array}\right.
\intertext{with}
\zeta(a)&=\frac{\sqrt{2}\bigl((1-\t)\cos\phi \Re\, a+(1+\t)\sin\phi\Im\, a \bigr)}{\sqrt{(1+\t)^2-4\t \cos^2\phi}}, \label{eq_zeta}
\end{align}
where $\phi$ is the parameter so that $z_0=\mas \cos \phi+i \mis \sin \phi$.
\end{theorem}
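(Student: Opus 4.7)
The proof splits into three cases according to whether $z_0$ lies strictly inside $\pEt$, strictly outside $\pEt$, or on $\pEt$. The first two cases follow from Proposition \ref{prop_density} upgraded to locally uniform convergence: since the estimates of Proposition \ref{prop_asym} hold uniformly on compact subsets of $\C\setminus\pEt$ (where $\f$ is bounded above by a strictly negative constant by Proposition \ref{prop_f_zero}), the pointwise limits $\rhon\to 1/\pi$ in the interior and $\rhon\to 0$ in the exterior are locally uniform. Since $z_0+a/\sqrt{\n}$ lies in a fixed compact neighborhood of $z_0$ for large $\n$, these two cases are immediate.

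For the main case $z_0\in\pEt$, parametrize $z_0=\mas\cos\phi+i\mis\sin\phi$ and adapt the integration argument used to prove Proposition \ref{prop_density}. Fix an interior reference point $z_1$ with $\Im z_1=\Im z_0$ and integrate $\nabla\rhon$ along the L-shaped path that goes horizontally from $z_1$ to $(\Re z_0+(\Re a)/\sqrt{\n},\,\Im z_0)$ and then vertically up to $z_0+a/\sqrt{\n}$. (When $\cos\phi=0$ the horizontal second derivative $f_{xx}(z_0)$ degenerates; the symmetries discussed in Section \ref{subsec_symm_id} then allow us to rotate the L by $90^\circ$, so it suffices to treat a generic orientation.) By \eqref{eq_drho_x}, \eqref{eq_drho_y} and Proposition \ref{prop_asym}, each segment contributes an integral of the form $\sqrt{\n/(2\pi^3)}\,g_{\pm}(\xi)\,\e^{\n\f(\xi)}$ along the respective direction.

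By Lemmas \ref{lemma_f_zero} and \ref{lemma_df_zero}, $\f(z_0)=0$ and $\nabla\f(z_0)=0$, so $\f$ is governed near $z_0$ by its Hessian. Proposition \ref{prop_g} gives $f_{xx}(z_0)$ and $f_{yy}(z_0)$; since $\f$ vanishes identically on $\pEt$, the tangential second derivative also vanishes at $z_0$, which forces the Hessian into the rank-one form $-4\,\hat\nu\hat\nu^{\mathrm T}$, where $\hat\nu=\bigl((1-\t)\cos\phi,\,(1+\t)\sin\phi\bigr)/\sqrt{(1+\t)^2-4\t\cos^2\phi}$ is the outward unit normal to $\pEt$ at $z_0$. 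Plugging this into the two segment integrals and rescaling — $u=\sqrt{2\n}\,\hat\nu_x(x-\Re z_0)$ on the horizontal part and, after $v=\sqrt{\n}(y-\Im z_0)$, then $w=\sqrt{2}(\hat\nu_x\Re a+\hat\nu_y v)$ on the vertical part — the two ranges concatenate exactly to $(-\infty,\sqrt{2}\,\hat\nu\cdot a]$. Using $g_{+}(z_0)=-2\hat\nu_x$ and $g_{-}(z_0)=2\hat\nu_y$ (also read off from Proposition \ref{prop_g}), the sum of the segment contributions converges to
\[
-\pi^{-3/2}\int_{-\infty}^{\sqrt{2}\,\hat\nu\cdot a}\e^{-u^2}\,du
\;=\;-\tfrac{1}{\pi}+\tfrac{1}{2\pi}\erfc(\sqrt{2}\,\hat\nu\cdot a).
\]
Combined with $\rhon(z_1)\to 1/\pi$ from the interior case, this gives $\tfrac{1}{2\pi}\erfc(\zeta(a))$ with $\zeta(a)=\sqrt{2}\,\hat\nu\cdot a$, which matches the formula in the theorem after inserting the explicit expression for $\hat\nu$.

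The main obstacle will be making these rescalings rigorous uniformly in $\n$: one must show that the portion of each integral lying outside a shrinking $\n^{-1/2+\epsilon}$-neighborhood of $z_0$ contributes negligibly (which follows from $\f<0$ strictly off $\pEt$ by Proposition \ref{prop_f_zero}), and that replacing $g_{\pm}$ by their values at $z_0$ and $\f$ by its quadratic Taylor polynomial is legitimate on the $\n^{-1/2}$-scale on which the integrand is concentrated. The paper explicitly defers these verifications to Chapter \ref{sec_verification}.
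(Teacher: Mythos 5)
Your proof is correct and leads to the stated formula, but the route you take through the boundary case is genuinely different from the one in the paper. The paper integrates along a \emph{single} horizontal segment whose height is the $\n$-dependent $y_E=y_0+\tfrac{\Im a}{\sqrt{\n}}$, and extracts the $\Im a$-dependence entirely from the $\n$-dependent shift of the ellipse crossing $x_E(y_E)$; only $\tfrac{\partial^2 f}{\partial x^2}$ (from Proposition~\ref{prop_g}) is ever needed, and the mixed second derivative $f_{xy}$ is never touched. You instead keep a fixed L-shaped path (horizontal to $(\Re z_0+\tfrac{\Re a}{\sqrt{\n}},\Im z_0)$, then vertical), and make the structural observation that $f\equiv 0$ and $\nabla f=0$ on $\pEt$ force the tangential second derivative to vanish, so that together with the diagonals from Proposition~\ref{prop_g} the Hessian at $z_0$ is the rank-one matrix $-4\hat\nu\hat\nu^{\mathrm T}$; the two Gaussian leg-integrals then concatenate exactly to $\int_{-\infty}^{\sqrt{2}\hat\nu\cdot a}$. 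Both arguments are valid; yours is conceptually cleaner (the outward-normal structure is made explicit, and the final answer $\zeta(a)=\sqrt{2}\,\hat\nu\cdot a$ is obvious from the Hessian), whereas the paper's avoids having to determine $f_{xy}$ at all and keeps the integral one-dimensional.

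One small imprecision: when $\cos\phi=0$ you say the symmetries of Section~\ref{subsec_symm_id} allow you to ``rotate the L by $90^\circ$.'' Those symmetries are the reflections $z\mapsto -z$ and $z\mapsto\cc{z}$, which do not interchange the $x$- and $y$-axes; they would not repair the degeneracy. What you actually mean is simply that one may instead pick the other L-path (vertical first, then horizontal) — an independent, elementary choice, not a consequence of the symmetries. The paper handles the same degeneracy by switching to integration along the imaginary axis, which is the same fix. State it that way and the argument is fine.
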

\begin{proof}
%KOENNTE MAN WESSENTLICH VEREINFACHEN, indem man nur von x_0 vis x_0+a/\sqrt[n] integriert, also das Integral von 0 bis a/sqrt[n] ausrechnet!
For simplicity we assume that $z_0\in \Cp$. If $z_0$ lies inside (or outside) of $\pEt$, for all $\n$ large enough $z_0+\tfrac{a}{\sqrt{\n}}$ will lie inside (or outside) of the ellipse too and the density there will be $\tfrac{1}{\pi}$ (or zero respectively). That the density does not change is obvious from \eqref{eq_proof_density} as nothing gets changed since the integration boundaries still go from $-\infty$ to $\infty$.

So we are left to prove the more interesting case if $z_0$ lies on the ellipse. Let $z_0=x_0+i y_0\in\pEt$ with $z_0=\mas \cos\phi+i\mis \sin\phi$. As in Section \ref{sec_density_fixed} we integrate along a line parallel to the real axis. We choose a point $z_1=x_1+i y_1$ %inside of the ellipse and
 near enough $z_0$ with $y_1=y_0$ so that $z_1+\tfrac{i \Im\, a}{\sqrt{n}}$ lies inside the ellipse for all sufficient large $\n$. This is possible if $z_0$ is not on the imaginary axis, in which case we alternatively have to integrate along a line parallel to the imaginary axis.

The path of integration will be the line $[z_1+\tfrac{i \Im{\, a}}{\sqrt{\n}},z_0+\tfrac{a}{\sqrt{\n}}]$. The path will slightly get moved parallel to the real axis when we increase $\n$. Again we will call $z_E=x_E+i y_E$ to be the point of intersection between the ellipse $\pEt$ and the line parallel to the real axis through $z_0+\tfrac{a}{\sqrt{\n}}$. So $z_E$ depends on $\n$ with $y_E=y_0+\tfrac{\Im{\, a}}{\sqrt{\n}}$. We assume that $\n$ is large enough and $z_1$ near enough $z_0$ so that the error of $\gp(z)\approx \gp(z_E)$ and $\f(x,y_E)\approx \tfrac{\partial^2\f}{\partial x^2}(x_E,y_E)\frac{(x-x_E)^2}{2}$ on $[z_1+\tfrac{i \Im{\, a}}{\sqrt{\n}},z_0+\tfrac{a}{\sqrt{\n}}]$ can be neglected. Then we will get the density as in \eqref{eq_proof_density}
\begin{equation}\label{eq_th_scaled_density1}
\begin{split}
\rho_\n\left(z_0+\tfrac{a}{\sqrt{\n}}\right)&\approx\rho_\n\left(z_1+\tfrac{\Im{\, a}}{\sqrt{\n}}\right)+\sqrt{\tfrac{\n}{2\pi^3}} \int_{x_1}^{x_0+\frac{\Re\, a}{\sqrt{\n}}} \gp(x,y_E) \e^{\n \f(x,y_E)}\dd x \\
&\approx \rho_\n\left(z_1+\tfrac{\Im{\, a}}{\sqrt{\n}}\right)+\sqrt{\tfrac{\n}{2\pi^3}} \gp(x_E,y_E) \int_{x_1}^{x_0+\frac{\Re\, a}{\sqrt{\n}}} \e^{-\eta_+^2 (x-x_E)^2}\dd x\\ 
&=\rho_\n\left(z_1+\tfrac{\Im{\, a}}{\sqrt{\n}}\right)-\pi^{-3/2} \int_{\eta_+ (x_1-x_E)}^{\eta_+ (x_0+\frac{\Re\, a}{\sqrt{\n}}-x_E)} \e^{-x^2} \dd x,
\end{split}
\end{equation}
where $\eta_+=\sqrt{-\frac{\partial^2 \f}{\partial x^2}(x_E,y_E) \frac{\n}{2}}=-\gp(x_E,y_E)\sqrt{\tfrac{\n}{2}}>0$ as in the proof of Proposition \ref{prop_density}. The lower boundary of integration, $\eta_+ (x_1-x_E)$, still goes to $-\infty$ when $\n\rightarrow\infty$, but the upper boundary needs a more careful consideration. For $x_E$ we find
\begin{align}
x_E&=\frac{\mas}{\mis}\sqrt{\mis^2-y_E^2}=\frac{\mas}{\mis}\sqrt{\mis^2-y_0^2-\frac{2 y_0 \Im\, a}{\sqrt{\n}}-\frac{(\Im\, a)^2}{\n}}\nn\\
&=\frac{\mas}{\mis}\sqrt{\mis^2-y_0^2}\sqrt{1-\frac{2 y_0 \Im\, a}{\sqrt{\n}(\mis^2-y_0^2)}-\frac{(\Im\, a)^2}{\n (\mis^2-y_0^2)}}\nn\\
&=x_0 \left(1-\frac{y_0 \Im\, a}{\sqrt{\n}(\mis^2-y_0^2)} \right)+\mathcal{O}\left(\n^{-1}\right).
\intertext{So the upper boundary of integration gets}
\eta_+\left(x_0+\frac{\Re\, a}{\sqrt{\n}}-x_E\right)&=\eta_+\left(\frac{\Re\, a}{\sqrt{\n}}+\frac{x_0 y_0 \Im\, a}{\sqrt{\n}(\mis^2-y_0^2)}+\mathcal{O}\left(\n^{-1}\right) \right)\nn\\
&=\sqrt{-\frac{\partial^2 \f}{\partial x^2}(x_E,y_E) \frac{1}{2}}\left(\Re\, a+\frac{x_0 y_0 \Im\, a}{\mis^2-y_0^2}\right)+\mathcal{O}\left(\n^{-1/2}\right).
\end{align}
As $\frac{\partial^2 \f}{\partial x^2}$ is continuous and $z_E$ goes to $z_0$ when $\n\rightarrow\infty$, we will find
\begin{equation}
\begin{split}
\zeta(a)&=\lim_{\n\rightarrow\infty} \eta_+\left(x_0+\frac{\Re\, a}{\sqrt{\n}}-x_E\right)=\sqrt{-\frac{\partial^2 \f}{\partial x^2}(x_0,y_0) \frac{1}{2}}\left(\Re\, a+\frac{x_0 y_0 \Im\, a}{\mis^2-y_0^2}\right)\\
&=\frac{\sqrt{2}(1-\t)\cos\phi}{\sqrt{1+\t^2-2\t \cos(2\phi)}} \left(\Re\, a+\frac{\mas \cos\phi \mis \sin\phi \Im\, a}{\mis^2-\mis^2\sin^2\phi}\right)\\
&=\frac{\sqrt{2}(1-\t)\cos\phi\Re\, a}{\sqrt{(1+\t)^2-4\t \cos^2\phi}}+\frac{\sqrt{2}(1-\t)\cos\phi}{\sqrt{(1+\t)^2-4\t \cos^2\phi}}\frac{\mas \sin\phi \Im\, a}{\mis \cos\phi}\\
&=\frac{\sqrt{2}\bigl((1-\t)\cos\phi\Re\, a+(1+\t)\sin\phi\Im\, a\bigr)}{\sqrt{(1+\t)^2-4\t \cos^2\phi}},
\end{split}
\end{equation}
where we have used the value of $\sqrt{-\frac{\partial^2 \f}{\partial x^2}(x_0,y_0)}$ from Proposition \ref{prop_g}.

We already have shown that $\rho_\n\bigl(z_1+\tfrac{\Im{\, a}}{\sqrt{\n}}\bigr)$ goes to $\rho(z_1)=\frac{1}{\pi}$ when $\n\rightarrow\infty$. So we finally find for \eqref{eq_th_scaled_density1}, in the limit $\n\rightarrow \infty$,
\begin{equation}
\begin{split}
\lim_{\n\rightarrow\infty} \rho_\n\left(z_0+\tfrac{a}{\sqrt{\n}}\right)&=\frac{1}{\pi}-\pi^{-3/2} \int_{-\infty}^{\zeta(a)} \e^{-x^2} \dd x =\pi^{-3/2}\left(\int_{-\infty}^{\infty} \e^{-x^2} \dd x- \int_{-\infty}^{\zeta(a)} \e^{-x^2} \dd x\right)\\
&=\pi^{-3/2}\int_{\zeta(a)}^{\infty} \e^{-x^2} \dd x=\frac{1}{2\pi}\erfc(\zeta(a)).
\end{split}
\end{equation}
\end{proof}
\begin{remark}
The same result we can get with an analogous calculation by integrating along a line parallel to the imaginary axis.
\end{remark}

From \eqref{eq_zeta} it is obvious that the density depends on the position on the ellipse and on the parameter $\t$. This is not astonishing as the system of coordinates for $a$ is parallel to the real axis. Now we want to look what happens when we turn the system of coordinates for $a$ so that the real axis is normal to the tangent at the ellipse in $z_0$ and points outward. 

\begin{cor}\label{cor_density_scaled_limit}
Let $z_0=\mas \cos \phi+i\mis \sin\phi$ be a point on the ellipse $\pEt$ with $\phi\in (-\pi,\pi]$  and $\psi$ the angle between the real axis and the outward normal to the tangent at $\pEt$ in $z_0$, i.e.\ %$\psi=\arctan\left(\frac{1+\t}{1-\t}\tan \phi \right)$ for $\phi\in (-\tfrac{\pi}{2},\tfrac{\pi}{2})$, $\psi=\arctan\left(\frac{1+\t}{1-\t}\tan \phi \right)+\sign(\phi)\pi $ for $\phi\notin [-\tfrac{\pi}{2},\tfrac{\pi}{2}]$ and $\psi=\phi$ for $\phi=\pm \tfrac{\pi}{2}$.
\begin{align}
\psi&=\left\{\begin{array}{ll} \arctan\left(\frac{1+\t}{1-\t}\tan \phi \right) & \text{if } \phi\in (-\tfrac{\pi}{2},\tfrac{\pi}{2}), \\ \arctan\left(\frac{1+\t}{1-\t}\tan \phi \right)+\sign(\phi)\pi & \text{if } \phi\notin [-\tfrac{\pi}{2},\tfrac{\pi}{2}],  \\  \phi & \text{if }\phi=\pm \tfrac{\pi}{2}. \end{array}\right.
\end{align}
Then the density of eigenvalues is given by
\begin{equation}
\lim_{\n\rightarrow\infty}\rho_\n\left(z_0+ \frac{a \e^{i\psi}}{\sqrt{\n}}\right)=\frac{1}{2\pi}\erfc(\sqrt{2}\Re\, a).
\end{equation}
\end{cor}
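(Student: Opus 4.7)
The plan is to derive the corollary as a direct specialization of Theorem \ref{th_density_scaled_limit}. With the substitution $a\mapsto a\e^{i\psi}$, the theorem immediately yields $\lim_{\n\to\infty}\rho_\n(z_0 + a\e^{i\psi}/\sqrt{\n}) = \tfrac{1}{2\pi}\erfc\bigl(\zeta(a\e^{i\psi})\bigr)$, so everything reduces to the trigonometric identity $\zeta(a\e^{i\psi}) = \sqrt{2}\,\Re\, a$ for the angle $\psi$ prescribed in the statement.

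First I would confirm that $\psi$ really is the argument of the outward unit normal. Since $\pEt$ is the level set $x^2/\mas^2 + y^2/\mis^2 = 1$, the outward normal at $z_0 = \mas\cos\phi + i\mis\sin\phi$ is parallel to the gradient $(\cos\phi/\mas,\,\sin\phi/\mis)$, equivalently to the complex number $\mis\cos\phi + i\mas\sin\phi$. Its argument satisfies $\tan\psi = (\mas/\mis)\tan\phi = \tfrac{1+\t}{1-\t}\tan\phi$, matching the stated formula; the branch corrections (adding $\sign(\phi)\pi$ for $\phi\notin[-\tfrac{\pi}{2},\tfrac{\pi}{2}]$ and defining $\psi=\phi$ at $\phi=\pm\tfrac{\pi}{2}$) select the correct quadrant so that the vector indeed points out of $\Et$ rather than into it.

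Next I would expand $\Re(a\e^{i\psi}) = \Re a\cos\psi - \Im a\sin\psi$ and $\Im(a\e^{i\psi}) = \Re a\sin\psi + \Im a\cos\psi$ in the formula \eqref{eq_zeta} and regroup by $\Re\, a$ and $\Im\, a$. The $\Im\, a$-coefficient in the numerator equals $(1+\t)\sin\phi\cos\psi - (1-\t)\cos\phi\sin\psi$, which vanishes exactly because $\sin\psi/\cos\psi = \tfrac{1+\t}{1-\t}\tan\phi$. The $\Re\, a$-coefficient equals $(1-\t)\cos\phi\cos\psi + (1+\t)\sin\phi\sin\psi$, and writing $\cos\psi = (1-\t)\cos\phi/N$, $\sin\psi = (1+\t)\sin\phi/N$ with $N = \sqrt{(1-\t)^2\cos^2\phi + (1+\t)^2\sin^2\phi}$ (the signs being handled by the branch step above), this coefficient collapses to $N$ itself.

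Finally, the algebraic identity $(1-\t)^2\cos^2\phi + (1+\t)^2\sin^2\phi = (1+\t)^2 - 4\t\cos^2\phi$, which is just $(1-\t)^2 - (1+\t)^2 = -4\t$ combined with $\sin^2\phi = 1-\cos^2\phi$, shows that $N$ coincides with the denominator of \eqref{eq_zeta}. The two cancel, leaving $\zeta(a\e^{i\psi}) = \sqrt{2}\,\Re\, a$ and hence the desired formula with $\erfc(\sqrt{2}\,\Re\, a)$. There is no real analytical obstacle here; the corollary is essentially a rigid rotation of Theorem \ref{th_density_scaled_limit} into the canonical frame aligned with the outward normal, and the only delicate point is the bookkeeping of $\arctan$ branches to guarantee that $\cos\psi$ and $\sin\psi$ carry the correct signs throughout all four quadrants of the ellipse.
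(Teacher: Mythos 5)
Your proposal is correct and follows essentially the same route as the paper's proof: substitute $a\mapsto a\e^{i\psi}$ into Theorem \ref{th_density_scaled_limit}, derive $\tan\psi=\tfrac{1+\t}{1-\t}\tan\phi$ from the normal direction, express $\cos\psi$ and $\sin\psi$ over the common denominator $\sqrt{(1+\t)^2-4\t\cos^2\phi}$, and verify algebraically that $\zeta(a\e^{i\psi})=\sqrt{2}\,\Re\,a$. The only cosmetic difference is that you regroup the numerator by $\Re\,a$ and $\Im\,a$ before inserting the explicit $\cos\psi$, $\sin\psi$ formulas, whereas the paper plugs them in at the outset; the underlying computation is identical.
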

\begin{proof}
For simplicity we assume that $\phi\in (-\tfrac{\pi}{2},\tfrac{\pi}{2})$ --- the proof for the other cases works analogously. The slope of the outward normal to the tangent at $\pEt$ in $z_0=x_0+i y_0$ is
\begin{equation}
\begin{split}
\tan \psi&=-\left(\frac{\partial}{\partial x_0}\Bigl(\pm\tfrac{\mis}{\mas}\sqrt{\mas^2-x_0^2}\Bigr) \right)^{-1}=\pm \frac{\mas}{\mis}\frac{\sqrt{\mas^2-x_0^2}}{x_0}\\
&=\pm \frac{\mas}{\mis} \frac{\sqrt{\mas^2 \sin^2\phi}}{\mas \cos \phi}=\frac{1+\t}{1-\t}\tan\phi.
\end{split}
\end{equation}
Substituting $a$ by $a \e^{i\psi}$ in Theorem \ref{th_density_scaled_limit} and using that
\begin{align}
\Re \left(a \e^{i\psi} \right)&=\cos \psi \Re\, a-\sin \psi \Im\, a,\qquad \Im\left(a \e^{i\psi} \right)=\sin \psi \Re\, a+\cos \psi \Im\, a, \nn \\
\cos \psi&=\frac{1}{\sqrt{1+\tan^2\psi}}=\frac{1}{\sqrt{1+\left(\frac{1+\t}{1-\t}\tan\phi \right)^2}}=\frac{(1-\t)\cos\phi}{\sqrt{(1+\t)^2 -4\t \cos^2 \phi}}, \label{eq_cos_psi}
\shortintertext{and}
\sin \psi&=\frac{\tan\psi}{\sqrt{1+\tan^2\psi}}=\frac{\frac{1+\t}{1-\t}\tan\phi}{\sqrt{1+\left(\frac{1+\t}{1-\t}\tan\phi \right)^2}}=\frac{(1+\t)\sin\phi}{\sqrt{(1+\t)^2 -4\t \cos^2 \phi}}, \label{eq_sin_psi}
\end{align}
we get 
\begin{align}
\zeta(a \e^{i\psi})=&\frac{\sqrt{2}\bigl((1-\t)\cos\phi \Re\! \left(a\e^{i\psi}\right) +(1+\t)\sin\phi\Im\! \left(a\e^{i\psi}\right) \bigr)}{\sqrt{(1+\t)^2-4\t \cos^2\phi}}\nn\\
=&\frac{\sqrt{2}\bigl( (1-\t)^2\cos^2\phi \Re\, a- (1-\t^2)\sin\phi\cos\phi \Im\, a\bigr)}{(1+\t)^2-4\t \cos^2\phi}\nn\\
& +\frac{\sqrt{2}\left((1+\t)^2\sin^2\phi \Re\, a +(1-\t^2)\sin\phi\cos\phi \Im\, a \right)}{(1+\t)^2-4\t \cos^2\phi} = \sqrt{2}\Re\, a.
\end{align}
Together with the result of Theorem \ref{th_density_scaled_limit} on $\pEt$, this proves the corollary.
\end{proof}
So we got a density that is independent of the position $z_0$ on the ellipse, of $\t$ and of $\Im\, a$. A plot of it can be found in Figure \ref{fig_erfc}.
\begin{figure}[!tbp]
\begin{center}
\includegraphics[height=8cm,angle=0,scale=0.9]{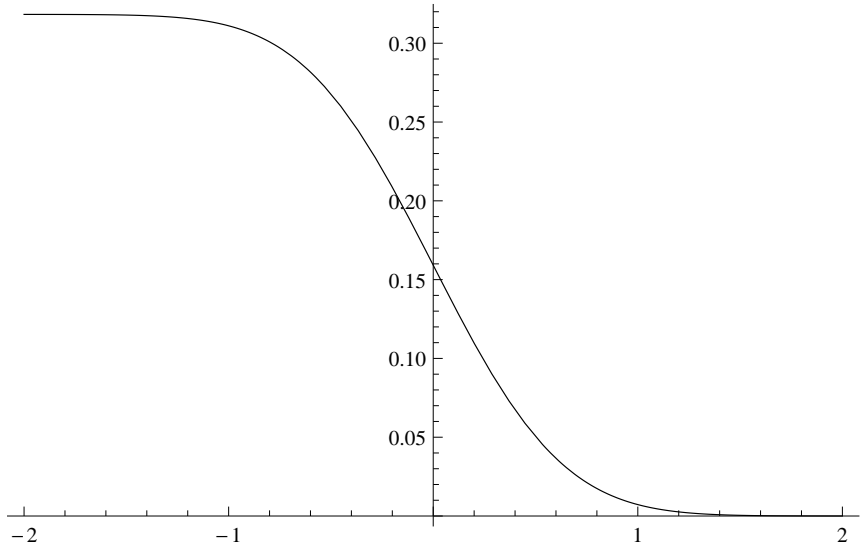}
\end{center}
\caption{\label{fig_erfc} Graph of the density $\lim_{\n\rightarrow\infty} \rhon(\mas+\tfrac{a}{\sqrt{n}})$ for $a\in[-2,2]$.}
\end{figure} 

\begin{remark}
This result is consistent with the result of Proposition \ref{prop_density} as $$\lim_{a\rightarrow-\infty}\erfc(a)=2,\qquad\lim_{a\rightarrow\infty}\erfc(a)=0\qquad \text{and}\qquad \erfc(0)=1.$$ 
\end{remark}

%\subsection{Comparison to Known Results from Equilibrium Measure}
%TO BE DONE, kein eigene Subsection, aber evtl. ein Kommentar!

\section{Asymptotics of the Kernel and Correlation Functions for Gaussian Potentials}\label{ch_kernel_corr} %\label{sec_kernel}
\sectionmark{Asymptotics of the Kernel}
So far we only have calculated the density, i.e.\ we have evaluated $\HM(w,z)$ for $w=\cc{z}$. Now we are interested in $\HM(w,z)$ if $w\neq z$. But for our purpose it is sufficient to calculate $\HM(w,z)$ when the difference between $\cc{w}$ and $z$ is small, i.e.\ if $\cc{w}$ and $z$ converge to a common point $z_0\in \C$ as $\n\rightarrow \infty$. %TO BE DONE(else it is 0 anyway?). 
As when we have been interested in the density in scaled coordinates, we will use a scaling such that the mean distance between eigenvalues stays fixed. %(see [BI]?, ?)
Therefore, in the complex plane, we have to use the scaling $w=\cc{z}_0+\frac{\cc{a}}{\sqrt{\n}}$, $z=z_0+\frac{b}{\sqrt{\n}}$ with $a,b\in\C$.

We can get $\HM(w,z)$ by integration
\begin{equation}\label{eq_hm_wz}
\begin{split}
\HM\left(\cc{z}_0+\frac{\cc{a}}{\sqrt{\n}},z_0+\frac{b}{\sqrt{\n}}\right) = & \HM(\cc{z}_0,z_0)+\int_0^{\frac{\cc{a}}{\sqrt{\n}}} \frac{\partial \HM}{\partial w}\left(\cc{z}_0+u,z_0\right)\dd u\\
&+\int_0^{\frac{b}{\sqrt{\n}}} \frac{\partial \HM}{\partial z}\left(\cc{z}_0+\frac{\cc{a}}{\sqrt{\n}},z_0+v\right)\dd v,
\end{split}
\end{equation}
where $\HM(\cc{z}_0,z_0)=\rho_\n(z_0)$ is known in the limit $\n\rightarrow \infty$ from Proposition \ref{prop_density}. Since $\tfrac{\partial \HM}{\partial w}\left(\cc{z}_0+u,z_0\right)$ is holomorphic in $u$ and $\frac{\partial \HM}{\partial z}\left(w,z_0+v\right)$ in $v$ the integrals are independent of of the integration paths and for simplicity we will choose them to be the straight lines $[0,\frac{\cc{a}}{\sqrt{\n}}]$ and $[0,\frac{b}{\sqrt{\n}}]$ in the complex plane.

\subsection{Identities for $\frac{\partial \protect\HM}{\partial \protect\chapw}$ and $\frac{\partial \protect\HM}{\partial \protect\chapz}$}
\begin{prop}\label{prop_ident_wz}
$H_\n(w,z)$ fulfills the following identities
\begin{align}
\frac{\partial \HM}{\partial w}(w,z)&=\bigl( \t \op{\n}(w) \op{\n-1}(z)-\op{\n-1}(w) \op{\n}(z)\bigr)\frac{\e^{\n\left(-w z+\frac{\t w^2}{2}+\frac{\t z^2}{2}\right)}}{\sqrt{1-\t^2}}, \label{eq_id3}
\intertext{and}
\frac{\partial \HM}{\partial z}(w,z)&=\bigl( -\op{\n}(w) \op{\n-1}(z)+\t \op{\n-1}(w) \op{\n}(z)\bigr)\frac{\e^{\n\left(-w z+\frac{\t w^2}{2}+\frac{\t z^2}{2}\right)}}{\sqrt{1-\t^2}}. \label{eq_id4}
\end{align}
\end{prop}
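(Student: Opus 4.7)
The plan is to obtain the two identities simply by solving the linear system provided by Proposition \ref{prop_id}. The sum identity \eqref{eq_id1} and the difference identity \eqref{eq_id2} form a $2\times 2$ linear system in the unknowns $\frac{\partial \HM}{\partial w}$ and $\frac{\partial \HM}{\partial z}$, with a nonzero coefficient determinant, so inverting it immediately yields individual formulas for each partial derivative.

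Concretely, I would first add \eqref{eq_id1} and \eqref{eq_id2} to isolate $2\,\frac{\partial \HM}{\partial w}$. The right-hand side groups as
\begin{equation*}
\left(\sqrt{\tfrac{1+\t}{1-\t}}-\sqrt{\tfrac{1-\t}{1+\t}}\right)\op{\n}(w)\op{\n-1}(z) \;-\; \left(\sqrt{\tfrac{1+\t}{1-\t}}+\sqrt{\tfrac{1-\t}{1+\t}}\right)\op{\n-1}(w)\op{\n}(z),
\end{equation*}
multiplied by the common exponential factor. Using the elementary identities
\begin{equation*}
\sqrt{\tfrac{1+\t}{1-\t}}-\sqrt{\tfrac{1-\t}{1+\t}}=\tfrac{2\t}{\sqrt{1-\t^2}}, \qquad \sqrt{\tfrac{1+\t}{1-\t}}+\sqrt{\tfrac{1-\t}{1+\t}}=\tfrac{2}{\sqrt{1-\t^2}},
\end{equation*}
and dividing by $2$ reproduces \eqref{eq_id3}. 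Subtracting \eqref{eq_id2} from \eqref{eq_id1} instead isolates $2\,\frac{\partial \HM}{\partial z}$ and, after applying the same two algebraic simplifications, yields \eqref{eq_id4}.

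There is no real obstacle here; the whole statement is a direct algebraic reformulation of Proposition \ref{prop_id}. The only thing worth being careful about is bookkeeping of the signs and of the factor $1/\sqrt{1-\t^2}$ that arises when the two square roots are combined, but since $0<\t<1$ both expressions are well defined and no case analysis is needed.
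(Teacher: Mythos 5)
Your proposal is correct and uses exactly the paper's argument: adding and subtracting identities \eqref{eq_id1} and \eqref{eq_id2}, then simplifying the resulting coefficients $\sqrt{\tfrac{1+\t}{1-\t}}\pm\sqrt{\tfrac{1-\t}{1+\t}}$ to $\tfrac{2}{\sqrt{1-\t^2}}$ and $\tfrac{2\t}{\sqrt{1-\t^2}}$. Phrasing it as inverting a $2\times 2$ linear system is a mild cosmetic difference; the computation is identical to the paper's proof.
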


\begin{proof}
These identities follow directly by addition or subtraction of the identities \eqref{eq_id1} and \eqref{eq_id2}.
\begin{align}
\frac{\partial \HM}{\partial w}(w,z)&\begin{aligned}[t] = \tfrac{1}{2}\Big( &-\sqrt{\tfrac{1-\t}{1+\t}}\bigl(\op{\n}(w) \op{\n-1}(z)+\op{\n-1}(w) \op{\n}(z)\bigr)\nn \\
& + \sqrt{\tfrac{1+\t}{1-\t}}\bigl(\op{\n}(w) \op{\n-1}(z)-\op{\n-1}(w) \op{\n}(z)\bigr) \Big)\e^{\n\left(-w z+\frac{\t w^2}{2}+\frac{\t z^2}{2}\right)}\end{aligned} \nn \\
&=\bigl( \t \op{\n}(w) \op{\n-1}(z)-\op{\n-1}(w) \op{\n}(z)\bigr)\frac{\e^{\n \left(-w z+\frac{\t w^2}{2}+\frac{\t z^2}{2}\right)}}{\sqrt{1-\t^2}},
\shortintertext{and}
\frac{\partial \HM}{\partial z}(w,z)&\begin{aligned}[t]=\tfrac{1}{2}\Big( &-\sqrt{\tfrac{1-\t}{1+\t}}\bigl(\op{\n}(w) \op{\n-1}(z)+\op{\n-1}(w) \op{\n}(z)\bigr)\nn \\
& - \sqrt{\tfrac{1+\t}{1-\t}}\bigl(\op{\n}(w) \op{\n-1}(z)-\op{\n-1}(w) \op{\n}(z)\bigr) \Big)\e^{\n\left(-w z+\frac{\t w^2}{2}+\frac{\t z^2}{2}\right)} \end{aligned} \nn \\
&=\bigl( -\op{\n}(w) \op{\n-1}(z)+\t \op{\n-1}(w) \op{\n}(z)\bigr)\frac{\e^{\n\bigl(-w z+\frac{\t w^2}{2}+\frac{\t z^2}{2}\bigr)}}{\sqrt{1-\t^2}}.
\end{align}
\end{proof}

As in Chapter \ref{ch_asymp} we are going to find the asymptotics of these identities. If we look at the integrals in \eqref{eq_hm_wz} we see that we need this identities evaluated at $w=\cc{z}_0+u$ and $z=z_0$ or $w=\cc{z}_0+\frac{\cc{a}}{\sqrt{\n}}$ and $z=z_0+v$ respectively, instead of $w=\cc{z}$ as in \eqref{eq_lemma_asym1} and \eqref{eq_lemma_asym2}. From the boundaries of integration we see that $u=\mathcal{O}(\n^{-1/2})$ and $v=\mathcal{O}(\n^{-1/2})$. When we have calculated the asymptotics of the identities \eqref{eq_lemma_asym1} and \eqref{eq_lemma_asym2} we have expanded its terms up to order $\mathcal{O}(\n^{-1})$. Now we also have to consider the integration parameters $u$ and $v$ in this expansion up to order two to get the same order of error. This we are going to do in Proposition \ref{prop_asym_wz}.

\subsection{Asymptotics of $\frac{\partial \protect\HM}{\partial \protect\chapw}$ and $\frac{\partial \protect\HM}{\partial \protect\chapz}$}
\subsubsection{Outside of the Interval $[-\F,\F]$}
As in previous sections we are using the notation $\Fz_0=\frac{z_0}{\F}$, $\Fu=\frac{u}{\F}$ and $\Fv=\frac{v}{\F}$.

\begin{prop}\label{prop_asym_wz}
Let $z_0\in \C\setminus[-\F,\F]$, $A>0$ and $u,v\in B_{\!\frac{A}{\sqrt{\n}}}(0)$. As $\n\rightarrow\infty$ we have
\begin{align}
\left.\frac{\partial \HM}{\partial w}\right\vert_{\substack{\phantom{}_{w=\cc{z}_0+u} \\ \phantom{}_{z=z_0+v} \\ \phantom{}}}&=\sqrt{\tfrac{\n}{2 \pi^3}} \gw(z_0) \e^{\n \left( \f(z_0) +\hu(\cc{z}_0)u+\huu(\cc{z}_0)u^2 +\hu(z_0)v+\huu(z_0)v^2+\huv uv \right)}(1+o(1)), \label{eq_lemma_asym_wz1}
\intertext{and}
\left.\frac{\partial \HM}{\partial z}\right\vert_{\substack{\phantom{}_{w=\cc{z}_0+u} \\ \phantom{}_{z=z_0+v} \\ \phantom{}}}&=\sqrt{\tfrac{\n}{2 \pi^3}} \gz(z_0) \e^{\n \left( \f(z_0) +\hu(\cc{z}_0)u+\huu(\cc{z}_0)u^2 +\hu(z_0)v+\huu(z_0)v^2+\huv uv \right)}(1+o(1)), \label{eq_lemma_asym_wz2}
\intertext{where}
\f(z_0)&=\t \Re{\left(z_0^2\right)}-\left\vert z_0\right\vert^2+\Re\left(\left(\U{\F}(z_0)\right)^2 \right)-2\log\left\vert \U{\F}(z_0) \right\vert +\log\t+1 \label{eq_f_wz},\\
\gw(z_0)&=\tfrac{1}{4} \left( \t^{1/2}\Bigl(\Fz_0\mp\sqrt{\Fz_0^2-1}\Bigr)-\t^{-1/2}\Bigl(\cc{\Fz}_0\mp\sqrt{\cc{\Fz}_0^2-1}\Bigr) \right)  \tfrac{\left\vert\sqrt{\Fz_0+1}+\sqrt{\Fz_0-1}\right\vert^2 }{ \sqrt{\left\vert \Fz_0^2-1 \right\vert}}\nn\\
&=\tfrac{1}{4}\vert \W{\F}(z_0) \vert^2 \left(\t^{1/2} \U{\F}(z_0)-\t^{-1/2} \U{\F}(\cc{z}_0)\right) =\frac{\gp(z_0)-i \gm(z_0)}{2}, \label{eq_gw}\\
\gz(z_0)&=\gw(\cc{z}_0)=\cc{\gw(z_0)}=\frac{\gp(z_0)+i \gm(z_0)}{2},\label{eq_gz} \\
\hu(z_0)&=\tfrac{1-\left(\Fz_0\mp\sqrt{\Fz_0^2-1}\right)^2}{\pm \F \sqrt{\Fz_0^2-1}}+F(\t \Fz_0-\cc{\Fz}_0)=\frac{ 2 \U{\F}(z_0) }{\F}+\t z_0-\cc{z}_0 \label{eq_hu},  \\ 
\huu(z_0)&=\tfrac{\left(\Fz_0\mp\sqrt{\Fz_0^2-1}\right)^2-1}{2\F^2 \left(\Fz_0^2-1\right)}+\frac{\t}{2}= \frac{\left(\U{\F}(z_0) \right)^2-1 }{2 \left(\T{\F}(z_0) \right)^2} =\frac{\t}{2}-\frac{\U{\F}(z_0)}{\F \T{\F}(z_0)}, \label{eq_huu} \\ 
\huv&=-1 , \label{eq_huv}
\end{align}
and $\gpm$ are the same functions as in Proposition \ref{prop_asym}. 
\end{prop}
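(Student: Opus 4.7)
The strategy is to plug Plancherel--Rotach asymptotics into the two identities of Proposition~\ref{prop_ident_wz} at the slightly perturbed points $w=\cc{z}_0+u$, $z=z_0+v$, and then Taylor expand the resulting exponent to second order in $u,v$. Since $u,v=\lO(\n^{-1/2})$, terms of order $\n u^3=\lO(\n^{-1/2})$ in the exponent are $o(1)$ and terms of that order in the prefactor are absorbed into $(1+o(1))$; thus I only need an expansion up to second order.

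First, I would revisit the derivation in Proposition~\ref{prop_asym}. The same calculation based on \eqref{eq_op1}, \eqref{eq_op2}, \eqref{eq_expan1}--\eqref{eq_expan4b}, Stirling and the cancellation \eqref{eq_asymp1b} gives
\begin{equation*}
\frac{\op{\n-1}(z)}{\op{\n}(z)}=\frac{\U{\F}(z)}{\sqrt{\t}}\bigl(1+\lO(\n^{-1})\bigr),
\end{equation*}
valid uniformly for $z$ in a closed subset of $\C\setminus[-\F,\F]$. Inserted into \eqref{eq_id3} and \eqref{eq_id4}, the ``polynomial'' factors become
\begin{align*}
\t\,\op{\n}(w)\op{\n-1}(z)-\op{\n-1}(w)\op{\n}(z)&\sim\op{\n}(w)\op{\n}(z)\Bigl(\sqrt{\t}\,\U{\F}(z)-\tfrac{\U{\F}(w)}{\sqrt{\t}}\Bigr),\\
-\op{\n}(w)\op{\n-1}(z)+\t\,\op{\n-1}(w)\op{\n}(z)&\sim\op{\n}(w)\op{\n}(z)\Bigl(-\tfrac{\U{\F}(z)}{\sqrt{\t}}+\sqrt{\t}\,\U{\F}(w)\Bigr),
\end{align*}
and these match $4\gw(z_0)$ and $4\gz(z_0)$ at $(w,z)=(\cc{z}_0,z_0)$ via \eqref{eq_gw}, \eqref{eq_gz}. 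Absorbing the Stirling factors and the prefactor $(1-\t^2)^{-1/2}$ as in the computation that produces \eqref{eq_asymp3}, the claimed overall constant $\sqrt{\n/(2\pi^3)}$ drops out.

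The main step is the Taylor expansion of the exponent. Collecting all $n$-proportional pieces, the full exponent of $\tfrac{\partial \HM}{\partial w}(w,z)$ after the substitutions above is
\begin{equation*}
E(w,z)=\tfrac{1}{2}\U{\F}(w)^2+\tfrac{1}{2}\U{\F}(z)^2-\log\bigl(\U{\F}(w)\U{\F}(z)\bigr)-wz+\tfrac{\t}{2}(w^2+z^2)+\log \t+1,
\end{equation*}
and $E(\cc{z}_0,z_0)=\f(z_0)$ by \eqref{eq_f_wz}. Using $\U{\F}'=-\U{\F}/\T{\F}$ (from Section~\ref{sub_elliptic}) and the algebraic identity $\U{\F}-\U{\F}^{-1}=-2\T{\F}/\F$ that follows from Proposition~\ref{prop_U_inv}, I compute
\begin{equation*}
\partial_wE=\U{\F}'(w)\bigl(\U{\F}(w)-\U{\F}(w)^{-1}\bigr)-z+\t w=\tfrac{2\U{\F}(w)}{\F}-z+\t w,
\end{equation*}
which at $(w,z)=(\cc{z}_0,z_0)$ equals $\hu(\cc{z}_0)$; by the same token $\partial_zE\bigl|_{(\cc{z}_0,z_0)}=\hu(z_0)$. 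For the second derivatives, $\partial_w^2E=2\U{\F}'(w)/\F+\t=\t-2\U{\F}(w)/(\F\T{\F}(w))=2\huu(\cc{z}_0)$ at $w=\cc{z}_0$ and analogously for $\partial_z^2E$, while $\partial_w\partial_zE=-1=\huv$. The Taylor expansion of $E$ to second order then reproduces exactly the exponent in \eqref{eq_lemma_asym_wz1} and \eqref{eq_lemma_asym_wz2}.

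The main nuisance — rather than a deep obstacle — is controlling the remainders. The cubic Taylor remainder for $E$ is $\lO((|u|+|v|)^3)$ and multiplied by $\n$ it is $\lO(\n^{-1/2})$, which only worsens the exponential prefactor by a factor $1+o(1)$. The $\lO(\n^{-1})$ errors from the Plancherel--Rotach expansions \eqref{eq_pr_asymp}, \eqref{eq_expan1}--\eqref{eq_expan4b} and from the ratio $\op{\n-1}/\op{\n}$ remain $\lO(\n^{-1})$ since $z_0\notin[-\F,\F]$ keeps $\cc{z}_0+u$ and $z_0+v$ bounded away from $[-\F,\F]$ for large $\n$, so the error terms are uniform in $u,v\in B_{A/\sqrt{\n}}(0)$. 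Finally the continuous factors $\W{\F}(\cc{z}_0+u)\W{\F}(z_0+v)$ and $\sqrt{\t}\U{\F}(z_0+v)-\U{\F}(\cc{z}_0+u)/\sqrt{\t}$ in the prefactor differ from their values at $u=v=0$ by $\lO(\n^{-1/2})$, again absorbed into $(1+o(1))$. This yields both asymptotic formulas.
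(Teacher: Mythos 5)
Your proposal is correct, but it organizes the argument differently from the paper, and the difference is worth noting. The paper expands each ingredient separately: it works out Taylor expansions \eqref{eq_expan6}--\eqref{eq_expan9} of the four individual pieces $\tfrac{\n}{2}\U{\F}(z_0+v)^2$, $\U{\F}(z_0+v)^{-\n}$, $\W{\F}(z_0+v)$ and the $\sqrt{1-1/\n}$-shifted versions, carries each error along, and only at the end reassembles the exponent. You instead package all $\n$-proportional contributions into a single ``master exponent''
$E(w,z)=\tfrac12\U{\F}(w)^2+\tfrac12\U{\F}(z)^2-\log(\U{\F}(w)\U{\F}(z))-wz+\tfrac{\t}{2}(w^2+z^2)+\log\t+1$
and Taylor expand $E$ at $(\cc{z}_0,z_0)$, identifying the coefficients via the two calculus-of-$\U{\F}$ identities $\U{\F}'=-\U{\F}/\T{\F}$ (which the paper states right after Definition~\ref{def_T}) and $\U{\F}-\U{\F}^{-1}=-2\T{\F}/\F$ (an algebraic consequence of Proposition~\ref{prop_U_inv}). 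That second identity is the key move: it collapses $\U{\F}'(w)(\U{\F}(w)-\U{\F}(w)^{-1})$ to $2\U{\F}(w)/\F$ and makes $\partial_wE$ land directly on $\hu$. This is shorter and more conceptual, and it explains \emph{why} the exponent coefficients take the form they do rather than producing them term by term. The price you pay is that the paper's raw decomposition is more amenable to explicit error tracking, which it needs in Chapter~\ref{sec_verification} (Proposition~\ref{prop_asym_wz_R} gives a Lagrange-remainder version of exactly this result); with your route one would need to redo the remainder bookkeeping for $E$. Both routes are sound; your identifications ($E(\cc{z}_0,z_0)=\f(z_0)$, $\partial_wE|_{(\cc{z}_0,z_0)}=\hu(\cc{z}_0)$, $\tfrac12\partial_w^2E|_{(\cc{z}_0,z_0)}=\huu(\cc{z}_0)$, $\partial_w\partial_zE=\huv$) all check out.
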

\begin{proof}
We look at the proof of Proposition \ref{prop_asym}, how the terms of \eqref{eq_op1} and \eqref{eq_op2} depending on $z$ or $w$ have been going into $\f$. Only these terms will be important when we have to calculate the corrections in $u$ and $v$. In the remaining terms the corrections in $u$ and $v$ will only give errors of order $\mathcal{O}(\n^{-1/2})$ and can be neglected. Therefore we need the following additional expansions when we set $\Fz=\Fz_0+\Fv$
\begin{align}
\MoveEqLeft[3] \tfrac{\n}{2}\bigl( \U{\F}(z_0+v) \bigr)^2=\tfrac{\n}{2}\left(\Fz_0+\Fv\mp\sqrt{(\Fz_0+\Fv)^2-1}\right)^2 =\tfrac{\n}{2}\Big(\Fz_0+\Fv\mp\sqrt{\Fz_0^2-1}\sqrt{1+\tfrac{2\Fz_0\Fv+\Fv^2}{\Fz_0^2-1}}\Big)^2 \nn\\
= {} & \tfrac{\n}{2}\left(\Fz_0+\Fv\mp\sqrt{\Fz_0^2-1}\left(1+\tfrac{2\Fz_0\Fv+\Fv^2}{2(\Fz_0^2-1)}-\tfrac{\Fz_0^2\Fv^2}{2(\Fz_0^2-1)^2}+\lO(\n^{-3/2}) \right) \right)^2 \nn\\
= {} & \tfrac{\n}{2}\Big(\Fz_0\mp\sqrt{\Fz_0^2-1}\Big)^2-\tfrac{\n \Fv \left(\Fz_0\mp\sqrt{\Fz_0^2-1}\right)^2}{\pm\sqrt{\Fz_0^2-1}}+ \tfrac{\n \Fv^2 \left(\Fz_0\mp\sqrt{\Fz_0^2-1}\right)}{\pm 2 \left(\Fz_0^2-1 \right)^{3/2}} + \tfrac{\n \Fv^2 \left(\Fz_0\mp\sqrt{\Fz_0^2-1}\right)^2}{2 \left(\Fz_0^2-1 \right)}+\mathcal{O}(\n^{-1/2}) \nn\\
= {} & \tfrac{\n}{2}\left( \U{\F}(z_0) \right)^2 -\tfrac{\n v \left( \U{\F}(z_0) \right)^2}{\T{\F}(z_0) } + \tfrac{\n v^2 \F \U{\F}(z_0)}{ 2 \left( \T{\F}(z_0) \right)^3} + \tfrac{\n v^2 \left( \U{\F}(z_0) \right)^2}{2 \left(\T{\F}(z_0) \right)^2}+\mathcal{O}(\n^{-1/2}), \label{eq_expan6}
\intertext{and}
\MoveEqLeft[3] \bigl( \U{\F}(z_0+v) \bigr)^{-\n}=\e^{-\n \log\left(\Fz_0+\Fv \mp \sqrt{(\Fz_0+\Fv)^2-1}\right)}=\e^{-\n \log\left(\Fz_0+\Fv\mp\sqrt{\Fz_0^2-1}\sqrt{1+\frac{2\Fz_0\Fv+\Fv^2}{\Fz_0^2-1}}\right)}\nn\\
= {} & \e^{-\n \log\left(\Fz_0+\Fv\mp\sqrt{\Fz_0^2-1}\left(1+\frac{2\Fz_0\Fv+\Fv^2}{2(\Fz_0^2-1)}-\frac{\Fz_0^2\Fv^2}{2(\Fz_0^2-1)^2}+\lO(\n^{-3/2}) \right) \right)}\nn\\
= {} & \Big(\Fz_0\mp\sqrt{\Fz_0^2-1}\Big)^{-\n}\e^{-\n \log\left(1-\frac{\Fv}{\pm\sqrt{\Fz_0^2-1}}+\frac{\Fv^2}{\pm 2(\Fz_0^2-1)^{3/2}\left(\Fz_0\mp\sqrt{\Fz_0^2-1}\right)} +\lO(\n^{-3/2}) \right)}\nn\\
= {} & \Big(\Fz_0\mp\sqrt{\Fz_0^2-1}\Big)^{-\n}\e^{-\n \left(-\frac{\Fv}{\pm\sqrt{\Fz_0^2-1}}+\frac{\Fv^2}{\pm 2\left(\Fz_0^2-1\right)^{3/2}\left(\Fz_0\mp\sqrt{\Fz_0^2-1}\right)}-\frac{\Fv^2}{2 \left(\Fz_0^2-1\right)} +\lO(\n^{-3/2}) \right)}\nn\\
= {} & \Big(\Fz_0\mp\sqrt{\Fz_0^2-1}\Big)^{-\n}\e^{\frac{\n \Fv}{\pm\sqrt{\Fz_0^2-1}}- \frac{\n \Fv^2 \Fz_0}{\pm2 \left(\Fz_0^2-1\right)^{3/2}} } \left(1+\lO(\n^{-1/2})\right) \nn\\
= {} & \left(\U{\F}(z_0) \right)^{-\n}\e^{\frac{\n v}{\T{\F}(z_0)}- \frac{\n v^2 z_0}{2 \left(\T{\F}(z_0)\right)^3} } \left(1+\lO(\n^{-1/2})\right). \label{eq_expan7}
\intertext{So \eqref{eq_expan1} becomes }
\MoveEqLeft[3]\tfrac{\n}{2} \Big( \sqrt{1-\tfrac{1}{\n}} \U{\F}\left( (z_0+v) (1-\tfrac{1}{\n})^{-1/2} \right) \Big)^2 = \tfrac{\n}{2}\Big(\Fz_0\mp\sqrt{\Fz_0^2-1}\Big)^2-\tfrac{\n \Fv \left(\Fz_0\mp\sqrt{\Fz_0^2-1}\right)^2}{\pm\sqrt{\Fz_0^2-1}} \nn\\
& + \tfrac{\n \Fv^2 \left(\Fz_0\mp\sqrt{\Fz_0^2-1}\right)}{\pm 2 \left(\Fz_0^2-1 \right)^{3/2}}+\tfrac{\n \Fv^2 \left(\Fz_0\mp\sqrt{\Fz_0^2-1}\right)^2}{2 \left(\Fz_0^2-1 \right)}-\tfrac{\Fz_0\mp\sqrt{\Fz_0^2-1}}{\pm 2\sqrt{\Fz_0^2-1}}+\lO(\n^{-1/2})\nn \\
= {} & \tfrac{\n}{2}\left( \U{\F}(z_0) \right)^2-\tfrac{\n v \left( \U{\F}(z_0)\right)^2}{\T{\F}(z_0)} + \tfrac{\n v^2 \F \U{\F}(z_0) }{ 2 \left(\T{\F}(z_0) \right)^3}+\tfrac{\n v^2 \left( \U{\F}(z_0) \right)^2}{2 \left( \T{\F}(z_0) \right)^2}-\tfrac{\F\U{\F}(z_0)}{ 2 \T{\F}(z_0)}+\lO(\n^{-1/2}), \label{eq_expan8}
\intertext{and \eqref{eq_expan2}}
\MoveEqLeft[3] \Big( \sqrt{1-\tfrac{1}{\n}} \U{\F}\left( (z_0+v) (1-\tfrac{1}{\n})^{-1/2} \right) \Big)^{-(\n-1)} \nn\\
= {} & \Big(\Fz_0\mp\sqrt{\Fz_0^2-1}\Big)^{-(\n-1)}\e^{\frac{\n \Fv}{\pm\sqrt{\Fz_0^2-1}}- \frac{\n \Fv^2 \Fz_0}{\pm 2 \left(\Fz_0^2-1\right)^{3/2}} + \frac{1}{\pm 2 \sqrt{\Fz_0^2-1}\left( \Fz_0\mp\sqrt{\Fz_0^2-1}\right)}} \left(1+\lO(\n^{-1/2})\right)\nn\\
= {} & \left( \U{\F}(z_0) \right)^{-(\n-1)}\e^{\frac{\n v}{\T{\F}(z_0)}- \frac{\n v^2 z_0}{ 2 \left(\T{\F}(z_0)\right)^3} +\frac{\F}{ 2 \T{\F}(z_0) \U{\F}(z_0) }} \left(1+\lO(\n^{-1/2})\right) . \label{eq_expan9}
\end{align}

The expansion \eqref{eq_expan4b} gets no further corrections, it stays valid if we replace on the left-hand side $\Fz$ by $\Fz_0+\Fv$ and on the right-hand side $\Fz$ by $\Fz_0$ and $\lO(\n^{-1})$ by $\lO(\n^{-1/2})$.

Setting $z=z_0+v$ in \eqref{eq_op1} and using \eqref{eq_pr_asymp} (which is valid at $z_0+v$ too), the expansions \eqref{eq_expan6}, \eqref{eq_expan7} and also
\begin{equation}\label{eq_asym_wz1}
\e^{\tfrac{\n v^2 \F \U{\F}(z_0) - \n v^2 z_0  }{ 2 \left( \T{\F}(z_0) \right)^3}}= \e^{\tfrac{\n \Fv^2 \left(\Fz_0\mp\sqrt{\Fz_0^2-1}\right)}{\pm 2 \left(\Fz_0^2-1 \right)^{3/2}}}\e^{-\frac{\n \Fv^2 \Fz_0}{\pm 2 \left(\Fz_0^2-1\right)^{3/2}}}=\e^{-\frac{\n \Fv^2}{2 \left(\Fz_0^2-1\right)}}=\e^{-\frac{\n v^2}{2 \left(\T{\F}(z_0)\right)^2}},
\end{equation}
we get
\begin{align}
\op{\n}(z_0+v)&=\frac{\n^{\frac{\n+1}{2}}}{2\sqrt{\pi}\sqrt{\n !}}\t^{\frac{\n}{2}}\left(1-\t^2\right)^{1/4} \e^{\frac{\n}{2}\left( \U{\F}(z_0) \right)^2} \left( \U{\F}(z_0) \right)^{-\n} \e^{\n v \tfrac{1- \left( \U{\F}(z_0) \right)^2}{\T{\F}(z_0) } } \e^{\n v^2 \frac{\left(\U{\F}(z_0)\right)^2-1}{2 \left( \T{\F}(z_0) \right)^2}}\nn \\
&\phantom{=}\quad \cdot \W{\F}(z_0) (1+o(1)). \label{eq_op_wz1}
\intertext{Analogously we are setting $w=z_0+v$ in \eqref{eq_op2} and using \eqref{eq_pr_asymp}, the expansions \eqref{eq_expan4b}, \eqref{eq_expan8}, \eqref{eq_expan9} and also \eqref{eq_asymp1b} and \eqref{eq_asym_wz1} to get}
\op{\n-1}(z_0+v)&=\frac{\n^{\frac{\n+1}{2}}}{2\sqrt{\pi}\sqrt{\n !}}\t^{\frac{\n}{2}}\left(1-\t^2\right)^{1/4} \t^{-1/2} \left(\tfrac{\n-1}{\n}\right)^{\n-1} \e^1  \e^{\frac{\n}{2}\left( \U{\F}(z_0) \right)^2} \left( \U{\F}(z_0) \right)^{-\n} \U{\F}(z_0) \nn \\
&\phantom{=}\quad \cdot \e^{\n v \tfrac{1- \left( \U{\F}(z_0) \right)^2}{\T{\F}(z_0)} } \e^{\n v^2 \frac{\left(\U{\F}(z_0)\right)^2-1}{2 \left(\T{\F}(z_0)\right)^2}} \W{\F}(z_0)(1+o(1)). \label{eq_op_wz2}
\end{align}

All factors that appear in both of \eqref{eq_op_wz1} and \eqref{eq_op_wz2} and all factors that don't depend on $z_0$ can get excluded from 
\begin{align}
\MoveEqLeft[3]\pm\t^{1/2}\left(\t^{\pm 1/2}\op{\n}(\cc{z}_0+u) \op{\n-1}(z_0+v)-\t^{\mp 1/2}\op{\n-1}(\cc{z}_0+u) \op{\n}(z_0+v) \right)\nn\\
= {} & \frac{\n^{\n+1}}{4\pi\n !}\t^{\n}\left(1-\t^2\right)^{1/2} \e^{\n\Re\left(\left(\U{\F}(z_0)\right)^2\right)} \left\vert \U{\F}(z_0) \right\vert^{-2\n} \left\vert\W{\F}(z_0)\right\vert^2 \nn\\
&\cdot \e^{\n u \tfrac{1- \left(\U{\F}(\cc{z}_0)\right)^2}{\T{\F}(\cc{z}_0)}}\e^{\n v \tfrac{1- \left(\U{\F}(z_0)\right)^2}{\T{\F}(z_0)} } \e^{\n u^2 \frac{\left(\U{\F}(\cc{z}_0)\right)^2-1}{2 \left(\T{\F}(\cc{z}_0)\right)^2}} \e^{ \n v^2 \frac{\left(\U{\F}(z_0)\right)^2-1}{2 \left(\T{\F}(z_0)\right)^2}} \nn\\
&\cdot  \left\lgroup \pm \t^{\pm 1/2} \U{\F}(z_0) \mp \t^{\mp 1/2} \U{\F}(\cc{z}_0) \right\rgroup(1+o(1)). \label{eq_asym_wz2}
\end{align}
The only factors that we could not exclude are those in the last row. We further have used \eqref{eq_expan5} which has canceled the factor $\e^1$ in \eqref{eq_op_wz2}.

We are left to evaluate the last factor which is part of the identities \eqref{eq_id3} and \eqref{eq_id4} at $w=\cc{z}_0+u$ and $z_0+v$
\begin{equation}\label{eq_asym_wz3}
\e^{\n(-w z+\frac{\t w^2}{2}+\frac{\t z^2}{2})}=\e^{\n\left(-\vert z_0\vert^2+\t\Re(z_0^2)+u(\t \cc{z}_0-z_0)+v(\t z_0-\cc{z}_0)+\frac{u^2\t}{2}+ \frac{v^2\t}{2}-uv\right)}.
\end{equation}

Using the Stirling approximation \eqref{eq_stirling} in \eqref{eq_asym_wz2} we finally find with the help of identities \eqref{eq_id3} and \eqref{eq_id4} (Proposition \ref{prop_ident_wz}) evaluated at $w=\cc{z}_0+u$ and $z_0+v$
\begin{align}
%\left.\frac{\partial \HM}{\partial w}\right\vert_{\substack{\phantom{}_{w=\cc{z}_0+u} \\ \phantom{}_{z=z_0+v} \\ \phantom{}}}&= \e^{\n\left(\F^2\left(\t\Re(\Fz_0^2)-\vert \Fz_0\vert^2\right) +1+\log\t +\Re\left(\left(\Fz_0-\sqrt{\Fz_0^2-1}\right)^2\right) -2 \log \left\vert\Fz_0-\sqrt{\Fz_0^2-1}\right\vert \right)} \nn \\
%&\phantom{=}\e^{\n u\F  (\t \cc{\Fz}_0-\Fz_0) +\frac{\n u}{F} \tfrac{1- \left(\cc{\Fz}_0-\sqrt{\cc{\Fz}_0^2-1}\right)^2}{\sqrt{\cc{\Fz}_0^2-1}} } \e^{\n v\F (\t \Fz_0-\cc{\Fz}_0) +\frac{\n v}{\F} \tfrac{1- \left(\Fz_0-\sqrt{\Fz_0^2-1}\right)^2}{\sqrt{\Fz_0^2-1}} }\nn \\ 
%&\phantom{=}\e^{\frac{\n u^2 \t}{2}+\frac{\n u^2}{\F^2} \frac{\left(\cc{\Fz}_0-\sqrt{\cc{\Fz}_0^2-1}\right)^2-1}{2 \left(\cc{\Fz}_0^2-1\right)}} \e^{\frac{\n v^2 \t}{2} + \frac{\n v^2}{\F^2} \frac{\left(\Fz_0-\sqrt{\Fz_0^2-1}\right)^2-1}{2 \left(\Fz_0^2-1\right)}} \e^{-\n u v} \nn \\
%&\phantom{=} \frac{\sqrt{\n}}{\sqrt{2\pi^3}}\frac{1}{4} \left\vert\Fz_0^2-1\right\vert^{-1/2} \left\vert\sqrt{\Fz_0+1}+\sqrt{\Fz_0-1}\right\vert^2\nn \\
%&\phantom{=}\left(\t^{1/2} \left(\Fz_0-\sqrt{\Fz_0^2-1}\right)- \t^{-1/2} \left(\Fz_0-\sqrt{\Fz_0^2-1}\right)\right) (1+o(1)), \label{eq_asym_wz4}
\left.\frac{\partial \HM}{\partial w}\right\vert_{\substack{\phantom{}_{w=\cc{z}_0+u} \\ \phantom{}_{z=z_0+v} \\ \phantom{}}}&= \e^{\n\left(  \t\Re(z_0^2)-\vert z_0\vert^2 +1+\log\t +\Re\left(\left(\U{\F}(z_0)\right)^2\right) -2 \log \left\vert\U{\F}(z_0)\right\vert \right)} \nn \\
&\phantom{=}\quad\cdot \e^{\n u  \left(   \t \cc{z}_0-z_0 + \tfrac{1- \left(\U{\F}(\cc{z}_0)\right)^2}{\T{\F}(\cc{z}_0)} \right) } \e^{\n v \left(  \t z_0-\cc{z}_0 +  \tfrac{1- \left(\U{\F}(z_0)\right)^2}{\T{\F}(z_0)} \right)}\nn \\ 
&\phantom{=}\quad\cdot \e^{ \n u^2 \left(  \frac{\t}{2}+ \frac{\left(\U{\F}(\cc{z}_0)\right)^2-1}{2 \left(\T{\F}(\cc{z}_0)\right)^2}  \right)}   \e^{\n v^2 \left(  \frac{\t}{2} + \frac{\left(\U{\F}(z_0)\right)^2-1}{2 \left(\T{\F}(z_0)\right)^2}  \right)} \e^{-\n u v} \nn \\
&\phantom{=}\quad\cdot \tfrac{\sqrt{\n}}{\sqrt{2\pi^3}}\tfrac{1}{4} \left\vert \W{\F}(z_0) \right\vert^2 \left(\t^{1/2} \U{\F}(z_0) - \t^{-1/2} \U{\F}(\cc{z}_0)\right) (1+o(1)), \label{eq_asym_wz4}
\intertext{and}
\left.\frac{\partial \HM}{\partial z}\right\vert_{\substack{\phantom{}_{w=\cc{z}_0+u} \\ \phantom{}_{z=z_0+v} \\ \phantom{}}}&= \e^{\n\left(  \t\Re(z_0^2)-\vert z_0\vert^2 +1+\log\t +\Re\left(\left(\U{\F}(z_0)\right)^2\right) -2 \log \left\vert\U{\F}(z_0)\right\vert \right)} \nn \\
&\phantom{=}\quad\cdot \e^{\n u  \left(   \t \cc{z}_0-z_0 + \tfrac{1- \left(\U{\F}(\cc{z}_0)\right)^2}{\T{\F}(\cc{z}_0)} \right) } \e^{\n v \left(  \t z_0-\cc{z}_0 +  \tfrac{1- \left(\U{\F}(z_0)\right)^2}{\T{\F}(z_0)} \right)}\nn \\ 
&\phantom{=}\quad\cdot \e^{ \n u^2 \left(  \frac{\t}{2}+ \frac{\left(\U{\F}(\cc{z}_0)\right)^2-1}{2 \left(\T{\F}(\cc{z}_0)\right)^2}  \right)}   \e^{\n v^2 \left(  \frac{\t}{2} + \frac{\left(\U{\F}(z_0)\right)^2-1}{2 \left(\T{\F}(z_0)\right)^2}  \right)} \e^{-\n u v} \nn \\
&\phantom{=}\quad\cdot \tfrac{\sqrt{\n}}{\sqrt{2\pi^3}}\tfrac{1}{4} \left\vert \W{\F}(z_0) \right\vert^2 \left(-\t^{-1/2} \U{\F}(z_0) + \t^{1/2} \U{\F}(\cc{z}_0)\right) (1+o(1)). \label{eq_lemma_wz5}
\end{align}
This proves the proposition.
\end{proof}

\begin{remark}
We note that in \eqref{eq_lemma_asym_wz1} and \eqref{eq_lemma_asym_wz2} $\n \left(\hu(\cc{z}_0)u+\hu(z_0)v\right)$ is of order $\mathcal{O}(\n^{1/2})$ and $\n \left(\huu(\cc{z}_0)u^2+\huu(z_0)v^2+\huv uv \right)$ is of order $\mathcal{O}(1)$.
\end{remark}

\begin{remark}
The term of highest order in the exponent which is of order $\lO(\n)$ has not changed, i.e.\ $\f$ here is the same as in Proposition \ref{prop_asym}. Neither the linear combination of the identities nor the corrections $u$ and $v$ of order $\mathcal{O}(\n^{-1/2})$ have any influence on this term.
\end{remark}

\begin{remark}
The linear combinations of identities \eqref{eq_id1} and \eqref{eq_id2} are described by $\gw$ and $\gz$ alone. This is obvious when we set $u=v=0$ in Proposition \ref{prop_asym_wz}. Comparing this with Proposition \ref{prop_asym} we see that
\begin{equation}
\left.\frac{\partial \HM}{\partial w}\right\vert_{\cc{w}=z=z_0}=\frac{\sqrt{\n}}{\sqrt{2\pi^3}} \e^{\n \f(z_0)} \underbrace{\left(\frac{\gp(z_0)-i \gm(z_0)}{2} \right)}_{\gw(z_0)} \left(1+o(1) \right),
\end{equation}
and
\begin{equation}
\left.\frac{\partial \HM}{\partial z}\right\vert_{\cc{w}=z=z_0}=\frac{\sqrt{\n}}{\sqrt{2\pi^3}} \e^{\n \f(z_0)} \underbrace{\left(\frac{\gp(z_0)+i \gm(z_0)}{2} \right)}_{\gz(z_0)} \left(1+o(1) \right).
\end{equation}
\end{remark}

\subsubsection{On the Interval $[-\F,\F]$}
\begin{prop}\label{prop_H_B}
Let $z_0\in (-F,F)$, $A>0$ and $u,v\in B_{\!\frac{A}{\sqrt{\n}}}(0)$. As $\n\rightarrow\infty$ we have
\begin{align}
\left\vert  \left.\frac{\partial \HM}{\partial w}\right\vert_{\substack{\phantom{}_{w=\cc{z}_0+u} \\ \phantom{}_{z=z_0+v} \\ \phantom{}}}   \right\vert& = \lO(\n^{-1/2}), \qquad \text{and} \quad \left\vert  \left.\frac{\partial \HM}{\partial z}\right\vert_{\substack{\phantom{}_{w=\cc{z}_0+u} \\ \phantom{}_{z=z_0+v} \\ \phantom{}}}   \right\vert&= \lO(\n^{-1/2}). \nn
\end{align}
\end{prop}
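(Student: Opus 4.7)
My strategy is to mimic the proof of Proposition \ref{prop_asym_B}, treating the identities \eqref{eq_id3}--\eqref{eq_id4} of Proposition \ref{prop_ident_wz} at $w=z_0+u$ and $z=z_0+v$ (using $\cc{z}_0=z_0$ since $z_0\in\R$) and simply bounding each factor in absolute value. The exponential prefactor $\e^{\n(-wz+\t w^2/2+\t z^2/2)}$ has real part equal to $-\n(1-\t)z_0^2$ at $u=v=0$, and the linear-plus-quadratic corrections in $u,v$ contribute at most $\lO(\sqrt{\n})$ to the exponent since $u,v\in B_{A/\sqrt{\n}}(0)$. The orthonormal polynomial factors $\op{\n}(z_0+u)$ and $\op{\n-1}(z_0+v)$ (and their symmetric pair) will be estimated using formulas \eqref{eq_op_n}--\eqref{eq_op_n-1} together with the interior Plancherel-Rotach asymptotics \eqref{eq_pir}--\eqref{eq_pir_o}: for $\delta>0$ so small that $\sqrt{2}z_0/\F\in (-\sqrt{2}+2\delta,\sqrt{2}-2\delta)$ and for $\n$ large, both rescaled arguments $\sqrt{2}(z_0+u)/\F$ and $\sqrt{\n/(\n-1)}\sqrt{2}(z_0+v)/\F$ lie in a thin complex neighborhood of $\Xdelta{B}{\delta}{\sqrt{2}}$ where \eqref{eq_pir_o} extends.

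The arguments of the cosines in $\pi_\n^r$ are real on the interval but, at complex arguments displaced by $\lO(\n^{-1/2})$, acquire an imaginary part of order $\frac{\n}{2}\cdot\lO(\n^{-1/2})=\lO(\sqrt{\n})$, so $|\cos|\le\cosh(\lO(\sqrt{\n}))=\e^{\lO(\sqrt{\n})}$. Combining this with the bounded rational prefactors of $\pi_\n^r$, the exponential factor $\e^{\n(\Re z^2-1-\log 2)/2}$, the Stirling formula \eqref{eq_stirling}, and the explicit prefactors in \eqref{eq_op_n}--\eqref{eq_op_n-1}, a computation parallel to the one inside the proof of Proposition \ref{prop_asym_B} yields
$$
\bigl|\op{\n}(z_0+u)\op{\n-1}(z_0+v)\bigr|\cdot\bigl|\e^{\n(-wz+\t w^2/2+\t z^2/2)}\bigr|\le C\sqrt{\n}\,\e^{\n f^r(z_0)+\lO(\sqrt{\n})},
$$
with the same function $f^r$ as in \eqref{eq_fr} and the same bound for the symmetric product. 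Since $f^r$ is strictly negative on $(-\F,\F)$ (and uniformly bounded above by some $-\epsilon<0$ on any compact subset) by \eqref{eq_fr_estim}, the exponential $\e^{\n f^r(z_0)}$ dominates $\e^{\lO(\sqrt{\n})}$, and so both $|\partial\HM/\partial w|$ and $|\partial\HM/\partial z|$ decay exponentially in $\n$, which is far stronger than the claimed $\lO(\n^{-1/2})$.

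\textbf{Main obstacle.} The only delicate step is extending the uniform Plancherel-Rotach bound \eqref{eq_pir_o} from real arguments in $\Xdelta{B}{\delta}{\sqrt{2}}$ to a complex neighborhood of width $\lO(\n^{-1/2})$ around it. The Riemann-Hilbert analysis of \cite{dkmvz2001} already provides the asymptotic uniformly in such a neighborhood, so this amounts to invoking the appropriate version of the known result; alternatively one can apply Cauchy's integral formula to the holomorphic function $Q_\n$ along a contour of fixed (small) complex width contained in $\Xdelta{B}{\delta/2}{\sqrt{2}}$, which recovers the same pointwise bound at the cost of a multiplicative constant that is absorbed into the $\e^{\lO(\sqrt{\n})}$ factor. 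Once this extension is granted, the remainder of the argument is routine.
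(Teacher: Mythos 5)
Your proof is correct and reaches a stronger conclusion (exponential decay) than the stated $\lO(\n^{-1/2})$, but it follows a more direct route than the paper. The paper does not prove Proposition~\ref{prop_H_B} in place; it derives it from the uniform Proposition~\ref{prop_estim_B_C}, which covers the full strip $\Bdelta\cup\Cdelta$ (including a neighborhood of $\pm\F$) uniformly in $z_0$ and in $\t\in(0,1-\delta_0)$. That uniform statement requires two tools: the interior Plancherel--Rotach asymptotics on $\Xdelta{B}{\delta}{\F,\lambda}$ (where the argument is essentially yours), and a maximum-principle estimate (Propositions~\ref{prop_using_maximum_principle} and~\ref{prop_estim_max_princ}) on the remaining region $\Xdelta{C}{\delta}{\F,\lambda}$ near $\pm\F$, where the interior Plancherel--Rotach asymptotics is not available; the paper further splits into $\t$ small and $\t$ large to obtain $\t$-uniformity. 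Because your target is the pointwise statement with $z_0\in(-\F,\F)$ fixed (so $|z_0|\le\F-\lambda$ for some $\lambda>0$) and $\t$ implicitly fixed, you can dispense with the maximum principle entirely; this is a genuine simplification. Your computation reproducing $f^r$ from \eqref{eq_fr} through the product $\op{\n}\op{\n-1}$ times the Gaussian weight is correct, and using \eqref{eq_fr_estim} (plus the monotonicity of $f^r$ in $|z|$) to conclude $f^r(z_0)<0$ is the same fact the paper relies on. One small correction to your ``main obstacle'': it is not really an obstacle. The region $\Xdelta{B}{\delta}{\sqrt{2}}$ in Section~\ref{subsection_pr_asym_inside} is already a complex strip of fixed width $2\delta$ (its definition has $-\delta<\Im z<\delta$), not a real interval, and the citations to \cite{dkmvz2001} and \cite{sze} there assert uniform validity of \eqref{eq_pir_o} on that strip. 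So the extension you worry about is already in the text as stated; the Cauchy-integral fallback is unnecessary, though it would work.
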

We postpone the proof of this proposition to Chapter \ref{sec_verification}, where we will prove a more precise version of it in Proposition \ref{prop_estim_B_C}.

\subsection{Properties of $\protect\hu$, $\protect\huu$, $\protect\gw$ and $\protect\gz$}
We can see in \eqref{eq_gw}%, \eqref{eq_gz}, \eqref{eq_hu} and 
-\eqref{eq_huu} that the functions $\gw$, $\gz$, $\hu$ and $\huu$ are complex-valued and continuous on $$\{z\in \C\vert z\notin [-\F,\F]\}.$$

\begin{lemma}\label{lemma_hu_zero}
$\hu(z_0)=\hu(\cc{z}_0)=0$ for all $z_0\in\pEt$.
\end{lemma}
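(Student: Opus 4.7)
The plan is to substitute the explicit parametrization of the ellipse into the formula for $\hu$ and show that the two contributions cancel.

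First, I would parametrize $z_0 \in \pEt$ exactly as in the proof of Lemma \ref{lemma_f_zero}, writing $z_0 = \mas \cos\phi + i\mis \sin\phi$ with $\phi \in (-\pi,\pi]$. The key input from that earlier proof is formula \eqref{eq_f_zero2}, which already tells us that
\begin{equation*}
\U{\F}(z_0) = \sqrt{\t}\,\cos\phi - i\sqrt{\t}\,\sin\phi.
\end{equation*}
This is precisely where the algebraic miracle happens (the $-1$ under the square root swaps the coefficients of $\cos^2\phi$ and $\sin^2\phi$), so I would not redo this computation but simply cite it.

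Next, recalling $\F = 2\sqrt{\t/(1-\t^2)}$, I would compute $\tfrac{2}{\F} = \sqrt{(1-\t^2)/\t}$, so that
\begin{equation*}
\tfrac{2\U{\F}(z_0)}{\F} = \sqrt{1-\t^2}\,(\cos\phi - i\sin\phi).
\end{equation*}
On the other hand, using $\mas = \sqrt{(1+\t)/(1-\t)}$ and $\mis = \sqrt{(1-\t)/(1+\t)}$, one checks $(\t-1)\mas = -\sqrt{1-\t^2}$ and $(\t+1)\mis = \sqrt{1-\t^2}$, so
\begin{equation*}
\t z_0 - \cc{z}_0 = (\t-1)\mas\cos\phi + i(\t+1)\mis\sin\phi = -\sqrt{1-\t^2}\,(\cos\phi - i\sin\phi).
\end{equation*}
Adding the two pieces gives $\hu(z_0) = 0$ from the expression \eqref{eq_hu}.

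For the statement about $\hu(\cc{z}_0)$, rather than repeating the calculation I would invoke the symmetries recorded after Definition \ref{def_U}: $\U{\F}(\cc{z}) = \cc{\U{\F}(z)}$, together with the standing assumption that $\t$ (and hence $\F$) is real. Then $\hu(\cc{z}_0) = \tfrac{2\U{\F}(\cc{z}_0)}{\F} + \t\cc{z}_0 - z_0 = \cc{\hu(z_0)} = 0$. (Equivalently, $\pEt$ is invariant under complex conjugation, so $\cc{z}_0 \in \pEt$ and the first part applies with $\phi$ replaced by $-\phi$.) There is no real obstacle here; the only nontrivial step is the cancellation in the main parametric computation, which is already prepared by Lemma \ref{lemma_f_zero}'s evaluation of $\U{\F}$ on $\pEt$.
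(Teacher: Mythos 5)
Your proof is correct and follows essentially the same strategy as the paper: parametrize $\pEt$, invoke \eqref{eq_f_zero2} for $\U{\F}(z_0)$, substitute, and simplify. The one small difference is that you work with the second expression in \eqref{eq_hu}, namely $\frac{2\U{\F}(z_0)}{\F}+\t z_0-\cc{z}_0$, and show the two pieces are negatives of each other, whereas the paper plugs into the first expression $\frac{1-(\U{\F}(z_0))^2}{\T{\F}(z_0)}+\t z_0-\cc{z}_0$ and so also needs \eqref{eq_f_zero1} for $\T{\F}(z_0)$; your route is a bit cleaner arithmetically but is otherwise the same argument, and your handling of $\hu(\cc{z}_0)$ via $\U{\F}(\cc{z})=\cc{\U{\F}(z)}$ and real $\t$ matches the paper's use of $\hu(z_0)=\cc{\hu(\cc{z}_0)}$.
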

\begin{proof}
We set $z_0= \mas \cos \phi+ i \mis \sin \phi$ with $\phi\in(-\pi,\pi]$. From \eqref{eq_f_zero1} and \eqref{eq_f_zero2} we find
\begin{align}
\hu(z_0)&=\cc{\hu(\cc{z}_0)}=\frac{ 2 \U{\F}(z_0) }{\F}+\t z_0-\cc{z}_0\nn \\
%&=\frac{1-(\Fz_0-\sqrt{\Fz_0^2-1})^2}{\F \sqrt{\Fz^2-1}}+\F(\t \Fz_0-\cc{\Fz}_0)\nn\\
&=\frac{1-(\sqrt{\t}\cos \phi-i \sqrt{\t}\sin \phi)^2}{\frac{1}{\sqrt{1-t^2}}\left( (1-\t)\cos \phi+i(1+\t)\sin \phi \right)} + \tfrac{1}{\sqrt{1-t^2}} \left( \t^2 \cos\phi-i\t^2 \sin\phi-\cos\phi+i\sin\phi \right)\nn\\
&=\sqrt{1-t^2}\left(\frac{1-\t \e^{-2i\phi}}{\e^{i\phi}-\t\e^{-i\phi} }+\frac{(t^2-1)\e^{-i\phi}}{1-\t^2}\right)=0. %\qedhere
\end{align}
\end{proof}

\begin{lemma}\label{lemma_gwz} 
For $z_0=\mas \cos \phi+i\mis \sin \phi$ with $\phi\in(-\pi,\pi]$ we have the following properties for $\huu$, $\gw$ and $\gz$.
\begin{align}
\vert 2 \huu(z_0)\vert&=\vert\gw(z_0)\vert=\vert \gz(z_0)\vert=1, \\
\pm\sqrt{-2\huu(\cc{z}_0)}&=-\gw(z_0)=\frac{\e^{i\phi}-\t \e^{-i\phi}}{\sqrt{1+\t^2-\t\e^{2i\phi}-\t\e^{-2i\phi}}},
\intertext{where the upper sign is valid if $z_0\in \Cp$ and the lower sign else, and}
\pm\sqrt{-2\huu(z_0)}&=-\gz(z_0)=\frac{\e^{-i\phi}-\t \e^{i\phi}}{\sqrt{1+\t^2-\t\e^{2i\phi}-\t\e^{-2i\phi}}},
\end{align}
where the upper sign is valid if $\cc{z}_0\in \Cp$ and the lower sign else.
\end{lemma}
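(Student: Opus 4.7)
The approach is direct evaluation: substitute the ellipse parametrization $z_0=\mas\cos\phi+i\mis\sin\phi$ into the explicit formulas \eqref{eq_gw}, \eqref{eq_gz}, \eqref{eq_huu} for $\gw$, $\gz$ and $\huu$, using the simplifications of $\U{\F}$, $\T{\F}$ and $\vert\W{\F}\vert$ on $\pEt$ that were already derived in the proofs of Lemma \ref{lemma_f_zero} and Proposition \ref{prop_g}. Specifically I will take for granted the identities
\begin{align*}
\U{\F}(z_0)&=\sqrt{\t}\,\e^{-i\phi}, \qquad \U{\F}(\cc{z}_0)=\sqrt{\t}\,\e^{i\phi}, \\
\T{\F}(z_0)&=\tfrac{F}{2\sqrt{\t}}\bigl(\e^{i\phi}-\t\e^{-i\phi}\bigr),\qquad \vert\W{\F}(z_0)\vert^2=\tfrac{4}{\sqrt{1+\t^2-2\t\cos(2\phi)}},
\end{align*}
where the last expression also equals $4/\sqrt{1+\t^2-\t\e^{2i\phi}-\t\e^{-2i\phi}}$ because the radicand is real.

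First I would compute $\gw(z_0)$. From \eqref{eq_gw},
\[
\t^{1/2}\U{\F}(z_0)-\t^{-1/2}\U{\F}(\cc{z}_0)=\t\,\e^{-i\phi}-\e^{i\phi}=-\bigl(\e^{i\phi}-\t\e^{-i\phi}\bigr),
\]
and combining with $\tfrac14\vert\W{\F}(z_0)\vert^2$ gives
\[
\gw(z_0)=-\frac{\e^{i\phi}-\t\e^{-i\phi}}{\sqrt{1+\t^2-\t\e^{2i\phi}-\t\e^{-2i\phi}}}.
\]
Taking moduli immediately yields $\vert\gw(z_0)\vert=1$ since $\vert\e^{i\phi}-\t\e^{-i\phi}\vert^2=1+\t^2-2\t\cos(2\phi)$. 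The statement for $\gz(z_0)=\cc{\gw(z_0)}$ follows by complex conjugation, and $\vert\gz(z_0)\vert=1$ is then automatic.

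For $\huu$ I would combine the two expressions and the computation above: using $\huu(z_0)=\tfrac{\t}{2}-\tfrac{\U{\F}(z_0)}{\F\T{\F}(z_0)}$ together with $\F\T{\F}(z_0)=\tfrac{2\sqrt{\t}}{1-\t^2}(\e^{i\phi}-\t\e^{-i\phi})$, a short algebraic manipulation (common denominator, cancellation of $\t^2\e^{-i\phi}$) gives
\[
-2\huu(z_0)=\frac{\e^{-i\phi}-\t\e^{i\phi}}{\e^{i\phi}-\t\e^{-i\phi}}.
\]
Multiplying numerator and denominator by the conjugate-style factor $\e^{-i\phi}-\t\e^{i\phi}$ produces a perfect square over the positive real quantity $1+\t^2-\t\e^{2i\phi}-\t\e^{-2i\phi}$, whence $\vert 2\huu(z_0)\vert=1$ and
\[
\sqrt{-2\huu(z_0)}=\pm\,\frac{\e^{-i\phi}-\t\e^{i\phi}}{\sqrt{1+\t^2-\t\e^{2i\phi}-\t\e^{-2i\phi}}}=\mp\,\gz(z_0).
\]
The formula for $\huu(\cc{z}_0)$ is obtained by replacing $\phi$ with $-\phi$.

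The principal obstacle is pinning down the correct branch of the square root to recover the stated sign convention. The denominator on the right is positive and real, so the ambiguity reduces to choosing the sign of the complex number $\e^{-i\phi}-\t\e^{i\phi}$, whose real part is $(1-\t)\cos\phi$. I would argue as follows: the standard principal square root takes values in the right half-plane $\{\Re\ge 0\}$, and $\Re(\e^{-i\phi}-\t\e^{i\phi})\ge 0$ precisely when $\cos\phi\ge 0$, i.e.\ when $\Re z_0\ge 0$. A short case analysis matches $z_0\in\Cp$ (respectively $\cc{z}_0\in\Cp$) with the upper sign in the expression for $\sqrt{-2\huu(\cc{z}_0)}$ (respectively $\sqrt{-2\huu(z_0)}$), handling the boundary case on the imaginary axis by continuity from Proposition \ref{prop_g}, where the analogous sign choice for $\sqrt{-\partial_x^2\f}$ and $\sqrt{-\partial_y^2\f}$ was already fixed.
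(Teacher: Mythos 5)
Your proposal is correct and follows essentially the same route as the paper: substitute the ellipse parametrization into the explicit expressions, using the already-established identities for $\U{\F}$, $\T{\F}$ and $\vert\W{\F}\vert$ on $\pEt$. The only cosmetic difference is that you evaluate $\gw$ from the form $\tfrac14\vert\W{\F}\vert^2\bigl(\t^{1/2}\U{\F}(z_0)-\t^{-1/2}\U{\F}(\cc{z}_0)\bigr)$ and $\huu$ from $\tfrac{\t}{2}-\tfrac{\U{\F}(z_0)}{\F\T{\F}(z_0)}$, while the paper evaluates $\gw$ via $\tfrac{\gp-i\gm}{2}$ together with Proposition \ref{prop_g}, and $\huu$ via $\tfrac{(\U{\F})^2-1}{2\T{\F}^2}$; both pairs of expressions are stated in \eqref{eq_gw}--\eqref{eq_huu}, so this is a matter of bookkeeping, not substance. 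One small caveat: your justification of the sign convention on the imaginary axis (where $\Re(\e^{-i\phi}-\t\e^{i\phi})=0$) as ``continuity from Proposition \ref{prop_g}'' is not quite the right appeal --- what actually resolves that boundary case is the convention $\sqrt{\e^{i\pi}}=i$ for the principal root together with the fact that then $-\gz(z_0)=\pm i$ with sign $\sign(\sin\phi)$, which agrees with the $\cc{z}_0\in\Cp$ criterion; the paper's own remark ``we see which sign we have to use'' is equally terse, so this is a matter of degree rather than a gap.
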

\begin{proof}
The simplest way to evaluate $\gw$ and $\gz$ on $\pEt$ is to use Proposition \ref{prop_g} for $\gpm$. We get
\begin{align}
\gw(z_0)&=\frac{\gp(z_0)-i \gm(z_0)}{2}=\frac{-2(\t-1)\cos\phi-2i(\t+1)\sin\phi}{2\sqrt{1+\t^2-2\t\cos(2\phi)}}\nn\\
&=-\frac{\e^{i\phi}-\t \e^{-i\phi}}{\sqrt{1+\t^2-\t\e^{2i\phi}-\t\e^{-2i\phi}}}, \label{eq_lemma_gwz1}
\shortintertext{and}
\gz(z_0)&=\frac{\gp(z_0)+i \gm(z_0)}{2}=-\frac{\e^{-i\phi}-\t \e^{i\phi}}{\sqrt{1+\t^2-\t\e^{2i\phi}-\t\e^{-2i\phi}}}.\label{eq_lemma_gwz2}
\intertext{From this we can easily verify}
\vert\gw(z_0)\vert^2&=\vert\gz(z_0)\vert^2=\frac{\left(\e^{i\phi}-\t \e^{- i\phi}\right)\left( \e^{- i\phi}-\t \e^{ i\phi}\right) }{1+\t^2-\t\e^{2i\phi}-\t\e^{-2i\phi}} =1. \label{eq_lemma_gwz_one}
%\end{align}
%\begin{align}
\intertext{For $\huu$ from \eqref{eq_huu} we can use \eqref{eq_f_zero2} and \eqref{eq_f_zero1}. So we find}
-2\huu(z_0)&=\frac{1-(\U{\F}(z_0))^2}{\left( \T{\F}(z_0)\right)^2}-\t=\frac{\left(1-\t(\cos\phi-i\sin\phi)^2\right)(1-\t^2)}{\left((1-\t)\cos \phi+i(1+\t) \sin\phi \right)^2}-\t\nn\\
&=\frac{1+\t^2-\t\e^{-2i\phi}-\t \e^{2i\phi}}{\e^{2i\phi}-2\t+\t^2\e^{-2i\phi}}=\frac{\left(1-\t\e^{2i\phi}\right)\left(1-\t\e^{-2i\phi}\right) }{\left(\e^{2i\phi}-t \right)\left(1-\t\e^{-2i\phi}\right)}\nn\\
&=\frac{1-\t\e^{2i\phi}}{\e^{2i\phi}-t}=\frac{\e^{-i\phi}-\t\e^{i\phi}}{\e^{i\phi}-t\e^{-i\phi}}=\frac{\left(\e^{-i\phi}-\t\e^{i\phi}\right)^2}{1+t^2-\t\e^{2i\phi}-t\e^{-2i\phi}}.
%\end{align}
%\begin{align}
\intertext{As $\Re(\e^{- i\phi}-\t\e^{i\phi})=(1-\t)\cos \phi$, we see which sign we have to use when we are taking the square root.}
\pm \sqrt{-2\huu(\cc{z}_0)}&=\frac{\e^{i\phi}-\t\e^{-i\phi}}{\sqrt{1+t^2-\t\e^{2i\phi}-t\e^{-2i\phi}}}=-\gw(z_0).
\intertext{ where we have used the plus sign if $z_0\in \Cp$. Analogously we get}
\pm \sqrt{-2\huu(z_0)}&=\frac{\e^{-i\phi}-\t\e^{i\phi}}{\sqrt{1+t^2-\t\e^{2i\phi}-t\e^{-2i\phi}}}=-\gz(z_0),.
\end{align}
where we have used the plus sign if $\cc{z}_0\in \Cp$.
\end{proof}

We have seen that the absolute value of $\gw$ on $\pEt$ is one. The phase of $-\gw$ on $\pEt$ can be found in Figure \ref{fig_gw_phase}. $\gz$ is complex conjugated to $\gw$.

\begin{figure}[!tbp]
\begin{center}
\includegraphics[height=8cm,angle=0,scale=0.9]{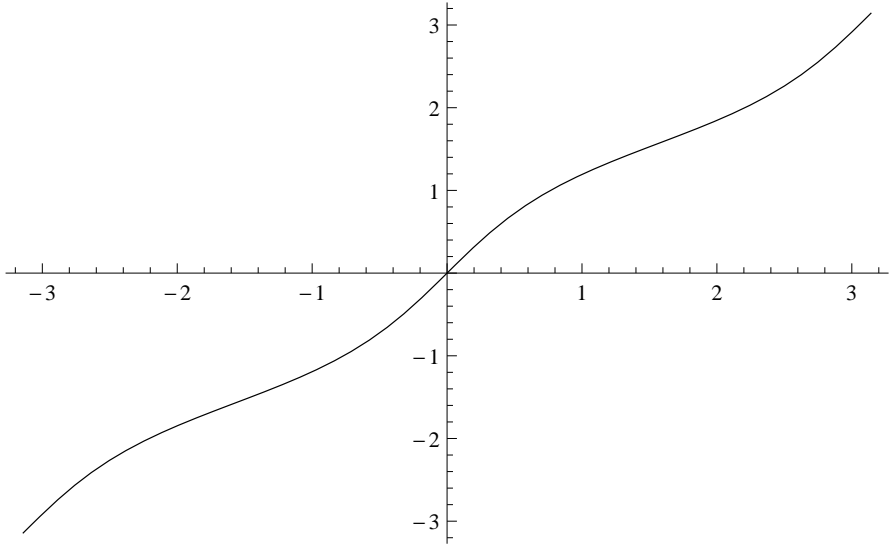}
\end{center}
\caption{\label{fig_gw_phase} Plot of the phase of $-\gw$ in the case $\t=\sqrt{5}-2$ on the ellipse $\pEt$ parametrized by $\phi\in (-\pi,\pi]$. }
\end{figure} 
% evtl. zweiter Plot auch noch in die gleiche Grafik einführen

\subsection{Asymptotics of $\protect\HM$ in the Bulk and Outside of the Ellipse $\protect\pEt$}
%oder Bulk and Outside of the Border
%\subsubsection{Variant 1}
%\subsubsection{Variant 2}
Here we consider the case when $z_0\notin \pEt$, i.e.\ $z_0$ is in the domain where the density is constant. We will assume that $\n$ is large enough that $\cc{B}_{\!\frac{\vert a\vert}{\sqrt{\n}}}(z_0)$ and $\cc{B}_{\!\frac{\vert b\vert}{\sqrt{\n}}}(z_0)$ have no intersection with $\pEt$. This is always possible as $\C\setminus \pEt$ is open.

For $a,b\in\C$ we can find $\HM(w,z)$ at $w=\cc{z}_0+\tfrac{\cc{a}}{\sqrt{\n}}$ and $z=z_0+\tfrac{b}{\sqrt{\n}}$ by \eqref{eq_hm_wz}, where we know the asymptotics as $\n\rightarrow \infty$ of 
$$\frac{\partial \HM}{\partial w}\left(\cc{z}_0+u,z_0 \right)\qquad \text{and}\qquad \frac{\partial \HM}{\partial z}\left(\cc{z}_0+\tfrac{\cc{a}}{\sqrt{\n}},z_0+v\right)$$ 
by Proposition \ref{prop_asym_wz} when $u\in [0,\tfrac{\cc{a}}{\sqrt{\n}}]$ and $v\in [0,\tfrac{b}{\sqrt{\n}}]$.

\begin{prop}\label{prop_hm_wz}
Let $z_0\notin \pEt$ and $a,b\in\C$. Then
\begin{equation}
\lim_{\n\rightarrow\infty}\HM\left(\cc{z}_0+\frac{\cc{a}}{\sqrt{\n}},z_0+\frac{b}{\sqrt{\n}}\right)=\lim_{\n\rightarrow\infty}\HM(\cc{z}_0,z_0)=\rho(z_0).
\end{equation}
\end{prop}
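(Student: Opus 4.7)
The plan is to start from the integral representation \eqref{eq_hm_wz}, which writes $\HM\!\left(\cc{z}_0+\tfrac{\cc{a}}{\sqrt{\n}},z_0+\tfrac{b}{\sqrt{\n}}\right)$ as $\HM(\cc{z}_0,z_0)$ plus two integrals of $\tfrac{\partial \HM}{\partial w}$ and $\tfrac{\partial \HM}{\partial z}$ along the straight line segments $[0,\tfrac{\cc{a}}{\sqrt{\n}}]$ and $[0,\tfrac{b}{\sqrt{\n}}]$. Since $\HM(\cc{z}_0,z_0)=\rhon(z_0)\to\rho(z_0)$ by Proposition \ref{prop_density}, it suffices to show that both integrals tend to zero as $\n\to\infty$.

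I would split the argument into two cases according to the location of $z_0$. First suppose $z_0 \notin \pEt\cup[-\F,\F]$. Because $z_0$ is not on the ellipse, there is a fixed neighborhood of $z_0$ that lies entirely inside $\Et$ or entirely outside $\Et\cup\pEt$, and is disjoint from $[-\F,\F]$. For $\n$ large enough, every point $\cc{z}_0+u$ and $z_0+v$ on the integration paths (with $\vert u\vert,\vert v\vert=\lO(\n^{-1/2})$) lies in this neighborhood. Then Proposition \ref{prop_asym_wz} applies uniformly, and the leading exponent is $\n\f(z_0)+\lO(\sqrt{\n})$ with $\f(z_0)<0$ by Proposition \ref{prop_f_zero}. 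The $\lO(\sqrt{\n})$ correction coming from the linear terms $\n\hu(\cc{z}_0)u$, $\n\hu(z_0)v$ cannot overcome the strictly negative $\n\f(z_0)$, so the integrand is bounded by $C\sqrt{\n}\,\e^{\n\f(z_0)/2}$ times a path of length $\lO(\n^{-1/2})$, giving an integral of order $\e^{\n\f(z_0)/2}\to 0$.

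In the second case, $z_0\in (-\F,\F)$. Here Proposition \ref{prop_asym_wz} is not available, but for $\n$ large enough the integration paths stay inside the strip $\{\vert\Im z\vert<\delta\}\cap\{\vert\Re z\vert<\F-\delta\}$ for a suitable small $\delta$. Then Proposition \ref{prop_H_B} bounds both derivatives uniformly by $\lO(\n^{-1/2})$ on these paths, and since the path lengths are $\lO(\n^{-1/2})$ each integral is at worst $\lO(\n^{-1})$, which again vanishes. Combining both cases yields $\HM(\cc{z}_0+\tfrac{\cc{a}}{\sqrt{\n}},z_0+\tfrac{b}{\sqrt{\n}})-\HM(\cc{z}_0,z_0)\to 0$, proving the proposition.

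The only delicate point is the uniformity of the $\lO$-terms of Proposition \ref{prop_asym_wz} along the integration paths, since those paths depend on $\n$; however this uniformity follows directly from the uniform validity of the Plancherel--Rotach asymptotics on the compact neighborhood of $z_0$ that contains all the paths, and this uniform statement is precisely what Chapter \ref{sec_verification} is set up to establish (in particular Proposition \ref{prop_H_B}). Apart from this technicality, which the paper explicitly defers, the argument is a straightforward exponential-decay estimate against the polynomial prefactor $\sqrt{\n}$.
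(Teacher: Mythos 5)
Your proof follows essentially the same route as the paper: write $\HM\!\left(\cc{z}_0+\tfrac{\cc{a}}{\sqrt{\n}},z_0+\tfrac{b}{\sqrt{\n}}\right)$ as $\HM(\cc{z}_0,z_0)$ plus two line integrals of $\tfrac{\partial\HM}{\partial w}$ and $\tfrac{\partial\HM}{\partial z}$, invoke Proposition~\ref{prop_asym_wz} for the integrand asymptotics, and use $\f(z_0)<0$ from Proposition~\ref{prop_f_zero} to kill both integrals against the $\sqrt{\n}$ prefactor and the path length $\lO(\n^{-1/2})$. The exponential-decay estimate you make --- that the $\lO(\sqrt{\n})$ linear terms $\n\hu(\cc{z}_0)u$, $\n\hu(z_0)v$ cannot overcome $\n\f(z_0)$ --- is exactly the content of the paper's observation that $\Phi_\n^w$, $\Phi_\n^z$ become uniformly strictly negative.

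Where you go slightly beyond the paper's direct proof is in the explicit case split for $z_0\in(-\F,\F)$. The paper's proof of Proposition~\ref{prop_hm_wz} only invokes Proposition~\ref{prop_asym_wz}, which is stated for $z_0\in\C\setminus[-\F,\F]$, so as written it does not cover $z_0\in[-\F,\F]$; the paper handles that implicitly by deferring Proposition~\ref{prop_H_B} (and its sharper form, Proposition~\ref{prop_estim_B_C}) to the verification chapter. Your second case, bounding both derivatives by $\lO(\n^{-1/2})$ over a path of length $\lO(\n^{-1/2})$, is the right way to fill that in. One small remaining gap, shared with the paper's informal treatment: your two cases are $z_0\notin[-\F,\F]$ and $z_0\in(-\F,\F)$, which leaves the endpoints $z_0=\pm\F$ untouched. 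Neither Proposition~\ref{prop_asym_wz} nor Proposition~\ref{prop_H_B} is stated there, and a full argument needs the maximum-principle estimate on $\Cdelta$ from Proposition~\ref{prop_estim_B_C} (or \ref{prop_estim_max_princ}). Since you already acknowledge that the uniform bounds are established only in Chapter~\ref{sec_verification}, this is a matter of making the deferral explicit for $\pm\F$ rather than a wrong step.
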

\begin{proof}
We only have to show that both integrals in \eqref{eq_hm_wz} are zero. According to Proposition \ref{prop_asym_wz} the asymptotics of their integrands is
\begin{align}
\frac{\partial \HM}{\partial w}\left(\cc{z}_0+u,z_0 \right)&=\sqrt{\tfrac{\n}{2 \pi^3}} \gw(z_0) \e^{\n \left( \f(z_0) +\hu(\cc{z}_0)u+\huu(\cc{z}_0)u^2\right)},
\intertext{and}
\frac{\partial \HM}{\partial z}\left(\cc{z}_0+\tfrac{\cc{a}}{\sqrt{\n}},z_0+v\right)&=\sqrt{\tfrac{\n}{2 \pi^3}} \gz(z_0) \e^{\n \left( \f(z_0) +\hu(z_0)v+\huu(z_0)v^2 \right)  +  {\sqrt{\n}\cc{a} \left(\hu(\cc{z}_0)+\huv v\right)  + \cc{a}^2 \huu(\cc{z}_0) } }.
%\end{align}
\intertext{Substituting $\sqrt{\n} u$ by $u$ and $\sqrt{\n} v$ by $v$ %with?
\eqref{eq_hm_wz} becomes}
%\begin{align}
\HM\left(\cc{z}_0+\tfrac{\cc{a}}{\sqrt{\n}},z_0+\tfrac{b}{\sqrt{\n}}\right)&=\HM(\cc{z}_0,z_0)+ \tfrac{\gw(z_0)}{\sqrt{2 \pi^3}} \int_0^{\cc{a}} \e^{\n \Phi_\n^w(z_0,u)} \dd u + \tfrac{\gz(z_0)}{\sqrt{2 \pi^3}} \int_0^{b} \e^{\n \Phi_\n^z(z_0,v)} \dd v,
\intertext{where}
\Phi_\n^w(z_0,u)&= \f(z_0) + \frac{u \hu(\cc{z}_0)}{\sqrt{\n}}+\frac{u^2 \huu(\cc{z}_0)}{\n},
\shortintertext{and}
\Phi_\n^z(z_0,v)&=\f(z_0) + \frac{v\hu(z_0)+ \cc{a} \hu(\cc{z}_0)}{\sqrt{\n}}+\frac{v^2\huu(z_0)+v\cc{a}\huv +\cc{a}^2\huu(\cc{z}_0)}{\n}. 
\end{align}
By Proposition \ref{prop_f_zero} we know that $f(z_0)$ is strictly negative. The functions $\gw$, $\gz$, $\hu$ and $\huu$ are continuous and can't have a singularity at $z_0$ or $\cc{z}_0$. $u$ and $v$ are bounded by the boundary of integration. So for $\n$ large enough $\Phi_\n^w$ and $\Phi_\n^z$ are strictly negative, uniformly for $u\in [0,\cc{a}]$ or $v\in [0,b]$ respectively. Therefore the integrals become zero as $\n\rightarrow\infty$. 
\end{proof}

\subsection{Asymptotics of $\protect\HM$ on the Ellipse $\protect\pEt$}

%\subsubsection{Variant 1}
\paragraph*{Variant 1}
We start to evaluate $\HM(w,z)$ at $w=\cc{z}_0+\tfrac{\cc{a}}{\sqrt{\n}}$ and $z=z_0$.
\begin{lemma}\label{lemma_hm_w_ellipse}
Let $z_0\in\pEt$ and $a\in\C$. Then 
\begin{equation}
\lim_{\n\rightarrow\infty}\HM\left(\cc{z}_0+\tfrac{\cc{a}}{\sqrt{\n}},z_0\right)=\frac{1}{2\pi}\erfc\left(\zeta(\cc{a},\phi) \right), 
\end{equation}
with
\begin{equation}\label{eq_zeta2}
\zeta(\cc{a},\phi)=\frac{\cc{a}}{\sqrt{2}}\frac{(1-\t)\cos\phi +(1+\t)i\sin\phi }{\sqrt{(1+\t)^2-4\t \cos^2\phi}},
\end{equation}
where $\phi$ is the parameter so that $z_0=\mas \cos \phi+i \mis \sin \phi$.
\end{lemma}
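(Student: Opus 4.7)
The plan is to apply the integral representation \eqref{eq_hm_wz} with $b=0$,
\[
\HM\!\left(\cc{z}_0+\tfrac{\cc{a}}{\sqrt{\n}},z_0\right)=\HM(\cc{z}_0,z_0)+\int_0^{\cc{a}/\sqrt{\n}}\frac{\partial \HM}{\partial w}(\cc{z}_0+u,z_0)\,\dd u,
\]
and to substitute the asymptotic expansion from Proposition \ref{prop_asym_wz} (with $v=0$) under the integral sign. The first term tends to $\rho(z_0)=\tfrac{1}{2\pi}$ on $\pEt$ by Proposition \ref{prop_density}. For the integrand, Proposition \ref{prop_asym_wz} yields
\[
\frac{\partial \HM}{\partial w}(\cc{z}_0+u,z_0)=\sqrt{\tfrac{\n}{2\pi^3}}\,\gw(z_0)\,\e^{\n\left(\f(z_0)+\hu(\cc{z}_0)u+\huu(\cc{z}_0)u^2\right)}(1+o(1)).
\]

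The point is now to exploit two vanishing properties on the ellipse: Lemma \ref{lemma_f_zero} gives $\f(z_0)=0$ and Lemma \ref{lemma_hu_zero} gives $\hu(\cc{z}_0)=0$, so the exponent collapses to the pure quadratic $\n\huu(\cc{z}_0)u^2$. The rescaling $u=s/\sqrt{\n}$ then turns the $\n$-dependent integral into a fixed Gaussian over $[0,\cc{a}]$:
\[
\int_0^{\cc{a}/\sqrt{\n}}\sqrt{\tfrac{\n}{2\pi^3}}\,\gw(z_0)\,\e^{\n\huu(\cc{z}_0)u^2}\,\dd u = \tfrac{\gw(z_0)}{\sqrt{2\pi^3}}\int_0^{\cc{a}}\e^{\huu(\cc{z}_0)s^2}\,\dd s.
\]

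Set $\alpha=-\gw(z_0)$; by Lemma \ref{lemma_gwz} one has $\alpha^2=-2\huu(\cc{z}_0)$, so $\huu(\cc{z}_0)=-\alpha^2/2$, and comparison of $\alpha$ with \eqref{eq_zeta2} identifies $\zeta(\cc{a},\phi)=\cc{a}\alpha/\sqrt{2}$. A direct Gaussian evaluation gives
\[
\int_0^{\cc{a}}\e^{-\alpha^2 s^2/2}\,\dd s=\tfrac{\sqrt{\pi/2}}{\alpha}\,\erf\!\left(\tfrac{\cc{a}\alpha}{\sqrt{2}}\right),
\]
and the prefactor simplifies as $\tfrac{\gw(z_0)}{\sqrt{2\pi^3}}\cdot\tfrac{\sqrt{\pi/2}}{\alpha}=\tfrac{\gw(z_0)}{2\pi\alpha}=-\tfrac{1}{2\pi}$. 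Combined with the $\HM(\cc{z}_0,z_0)\to\tfrac{1}{2\pi}$ contribution, this produces $\tfrac{1}{2\pi}(1-\erf(\zeta(\cc{a},\phi)))=\tfrac{1}{2\pi}\erfc(\zeta(\cc{a},\phi))$.

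The main obstacle is the interchange of limit and integration: Proposition \ref{prop_asym_wz} is a pointwise statement with a multiplicative $(1+o(1))$ error, and the argument above tacitly promotes this to uniformity on the integration path $[0,\cc{a}/\sqrt{\n}]$. This is exactly the kind of quantitative control the paper defers to Chapter \ref{sec_verification}; morally it holds because $\f$, $\hu$, $\huu$, and $\gw$ are continuous in a fixed neighborhood of $z_0$ and $|u|=\lO(\n^{-1/2})$, so the real part of the quadratic exponent $\n\huu(\cc{z}_0)u^2$ stays bounded and the pointwise error estimate in Proposition \ref{prop_asym_wz} transfers uniformly to the rescaled Gaussian integral.
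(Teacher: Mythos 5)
Your proposal is correct and follows essentially the same route as the paper's proof: the same integral representation, the same invocation of Propositions \ref{prop_density}, \ref{prop_f_zero}, \ref{prop_asym_wz}, Lemma \ref{lemma_hu_zero}, and Lemma \ref{lemma_gwz}, and the same Gaussian reduction (you rescale $u=s/\sqrt{\n}$ before completing the Gaussian, where the paper substitutes $\eta_w u\mapsto u$ directly, but this is only a cosmetic difference). You also correctly flag the limit–integral interchange and that the paper defers its justification to Chapter \ref{sec_verification} (specifically Proposition \ref{prop_veri_kernel}).
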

\begin{proof}
As before we get $\HM(w,z)$ at $w=\cc{z}_0+\tfrac{\cc{a}}{\sqrt{\n}}$ and $z=z_0$ by integration
\begin{equation}
\HM\left(\cc{z}_0+\tfrac{\cc{a}}{\sqrt{\n}},z_0\right)=\HM(\cc{z}_0,z_0)+\int_0^{\frac{\cc{a}}{\sqrt{\n}}} \frac{\partial \HM}{\partial w}\left(\cc{z}_0+u,z_0\right)\dd u,
\end{equation}
where we have the asymptotics of $\frac{\partial \HM}{\partial w}\left(\cc{z}_0+u,z_0\right)$ from Proposition \ref{prop_asym_wz} and we know that $\HM(\cc{z}_0,z_0)=\tfrac{1}{2\pi}$ in the limit $\n\rightarrow\infty$ from Proposition \ref{prop_density}. Now as $z_0\in\pEt$, we know by Proposition \ref{prop_f_zero} that $f(z_0)$ in \eqref{eq_lemma_asym_wz1} is not negative but zero and by Lemma \ref{lemma_hu_zero} that also $\hu(\cc{z}_0)=0$. Therefore as $\n\rightarrow \infty$
\begin{equation}\label{eq_integral_erfc1}
\begin{split}
\HM\left(\cc{z}_0+\tfrac{\cc{a}}{\sqrt{\n}},z_0\right)&\approx \frac{1}{2\pi}+\sqrt{\tfrac{\n}{2 \pi^3}} \gw(z_0) \int_0^{\frac{\cc{a}}{\sqrt{\n}}} \e^{\n \huu(\cc{z}_0)u^2} \dd u\\
&= \frac{1}{2\pi}+\sqrt{\tfrac{\n}{2 \pi^3}} \gw(z_0) \int_0^{\frac{\cc{a}}{\sqrt{\n}}} \e^{-\eta_w^2 u^2} \dd u\\
&=\frac{1}{2\pi} - \pi^{-3/2} \int_0^{\eta_w \frac{\cc{a}}{\sqrt{\n}}} \e^{- u^2} \dd u\\
&=\pi^{-3/2}\left(\int_0^\infty \e^{-u^2}\dd u  - \int_0^{\zeta(\cc{a},\phi)} \e^{- u^2} \dd u \right)\\
&=\pi^{-3/2} \int_{\zeta(\cc{a},\phi)}^\infty \e^{- u^2} \dd u=\frac{1}{2\pi }\erfc(\zeta(\cc{a},\phi)),
\end{split}
\end{equation}
where we have substituted $\eta_w u$ by $u$ with $\eta_w=\pm \sqrt{-\huu(\cc{z}_0)\n}=-\gw(z_0)\sqrt{\tfrac{\n}{2}}$ according to Lemma \ref{lemma_gwz}. For $\zeta(\cc{a},\phi)=\eta_w \frac{\cc{a}}{\sqrt{\n}}$ we find again by Lemma \ref{lemma_gwz}
\begin{equation}
\zeta(\cc{a},\phi)=\frac{\cc{a} (-\gw(z_0))}{\sqrt{2}}=\frac{\cc{a}}{\sqrt{2}} \frac{\e^{i\phi}-\t \e^{-i\phi}}{\sqrt{1+\t^2-\t\e^{2i\phi}-\t\e^{-2i\phi}}},
\end{equation}
which is equivalent to the definition of $\zeta$ in the lemma.
\end{proof}

We finally want to find $\HM(w,z)$ at $w=\cc{z}_0+\tfrac{\cc{a}}{\sqrt{\n}}$ and $z=z_0+\tfrac{b}{\sqrt{\n}}$.
\begin{prop}\label{prop_hm_wz_ellipse}
Let $z_0\in\pEt$ and $a,b\in\C$. Then 
\begin{equation}
\lim_{\n\rightarrow\infty}\HM\left(\cc{z}_0+\tfrac{\cc{a}}{\sqrt{\n}},z_0+\tfrac{b}{\sqrt{\n}}\right)=\frac{1}{2\pi}\erfc\left(\zeta(\cc{a},\phi)+ \zeta(b,-\phi) \right), 
\end{equation}
with
\begin{equation}\label{eq_zeta3}
\zeta(\cc{a},\phi)=\frac{\cc{a}}{\sqrt{2}}\frac{(1-\t)\cos\phi +(1+\t)i\sin\phi }{\sqrt{(1+\t)^2-4\t \cos^2\phi}},
\end{equation}
where $\phi$ is the parameter so that $z_0=\mas \cos \phi+i \mis \sin \phi$.
\end{prop}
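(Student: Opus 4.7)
The plan is to run the same argument as in Lemma \ref{lemma_hm_w_ellipse} but in two stages, using the decomposition in \eqref{eq_hm_wz}. Lemma \ref{lemma_hm_w_ellipse} already handles the first two terms:
\begin{equation}
\HM(\cc{z}_0,z_0)+\int_0^{\frac{\cc{a}}{\sqrt{\n}}}\frac{\partial \HM}{\partial w}(\cc{z}_0+u,z_0)\,\dd u \;\xrightarrow{\n\to\infty}\; \tfrac{1}{2\pi}\erfc\bigl(\zeta(\cc{a},\phi)\bigr).
\end{equation}
So the only thing left is to evaluate the remaining integral
$$\int_0^{b/\sqrt{\n}}\frac{\partial \HM}{\partial z}\Bigl(\cc{z}_0+\tfrac{\cc{a}}{\sqrt{\n}},\,z_0+v\Bigr)\,\dd v$$
by means of the asymptotics in Proposition \ref{prop_asym_wz}.

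First I would specialise that proposition to $z_0\in\pEt$, where Proposition \ref{prop_f_zero} gives $\f(z_0)=0$ and Lemma \ref{lemma_hu_zero} gives $\hu(z_0)=\hu(\cc{z}_0)=0$. Substituting $u=\cc{a}/\sqrt{\n}$ (fixed) and scaling $v\mapsto s/\sqrt{\n}$ in the integral, the entire exponent collapses to the $\n$-independent quadratic
\begin{equation}
\huu(\cc{z}_0)\,\cc{a}^2+\huu(z_0)\,s^2+\huv\,\cc{a}\,s,
\end{equation}
with $\huv=-1$, and the prefactor $\sqrt{\n/(2\pi^3)}\,\gz(z_0)$ loses one power of $\sqrt{\n}$ from the Jacobian. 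Completing the square in $s$ gives
\begin{equation}
\huu(z_0)\Bigl(s-\tfrac{\cc{a}}{2\huu(z_0)}\Bigr)^{\!2}+\Bigl(\huu(\cc{z}_0)-\tfrac{1}{4\huu(z_0)}\Bigr)\cc{a}^2.
\end{equation}

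The key algebraic cancellation is that the second bracket is zero. This is immediate from Lemma \ref{lemma_gwz}: on $\pEt$ we have $\huu(z_0)=-\gz(z_0)^2/2$, $\huu(\cc{z}_0)=-\gw(z_0)^2/2$, and $\gz(z_0)=\cc{\gw(z_0)}$ with $|\gw(z_0)|=1$, so $\gz(z_0)\gw(z_0)=1$ and hence $1/\gz(z_0)^2=\gw(z_0)^2$; therefore
\begin{equation}
-\tfrac{1}{4\huu(z_0)}=\tfrac{1}{2\gz(z_0)^2}=\tfrac{\gw(z_0)^2}{2}=-\huu(\cc{z}_0).
\end{equation}
The integral thus reduces to a pure Gaussian in $s$, which after substituting $\tilde{s}=\sqrt{-\huu(z_0)}(s-\cc{a}/(2\huu(z_0)))$ produces $-\pi^{-3/2}\int_{I_\n} e^{-\tilde{s}^2}\,\dd\tilde{s}$. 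A direct computation using $-\gz(z_0)=\pm\sqrt{-2\huu(z_0)}$ from Lemma \ref{lemma_gwz} shows that the lower limit of $I_\n$ converges to $\zeta(\cc{a},\phi)$ and the upper limit to $\zeta(\cc{a},\phi)+\zeta(b,-\phi)$ (using $-\gz(z_0)\,b/\sqrt{2}=\zeta(b,-\phi)$).

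Adding the two contributions and using $\frac{1}{2\pi}\erfc(X)=\pi^{-3/2}\int_X^\infty e^{-\tilde{s}^2}\,\dd\tilde{s}$, the integral from $\zeta(\cc{a},\phi)$ to $\zeta(\cc{a},\phi)+\zeta(b,-\phi)$ splits the $\erfc$ and yields exactly $\frac{1}{2\pi}\erfc(\zeta(\cc{a},\phi)+\zeta(b,-\phi))$, as claimed. The main obstacle is the algebraic verification that the quadratic completion leaves no residual constant term — essentially a consequence of the unit-modulus relation $\gw\gz=1$ on $\pEt$ — together with keeping careful track of signs in identifying $-\gz(z_0)\,b/\sqrt{2}$ with $\zeta(b,-\phi)$, which depends on which half-plane $\cc{z}_0$ lies in; the verification of the uniformity in the Plancherel–Rotach error and the validity of interchanging limit and integration is deferred to Chapter \ref{sec_verification}, as signalled in the text.
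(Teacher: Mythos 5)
Your proposal is correct and follows essentially the same route as the paper's proof: the same decomposition via Lemma \ref{lemma_hm_w_ellipse} for the first integral, the same use of Proposition \ref{prop_asym_wz} together with $\f(z_0)=0$ and $\hu(z_0)=\hu(\cc{z}_0)=0$, the same completion of the square with the cross-term $\huv\,\cc{a}\,s$, and the same identification of the endpoints via $-\gz(z_0)\,b/\sqrt{2}=\zeta(b,-\phi)$. The one place where your write-up is slightly more explicit than the paper is the vanishing of the residual constant $\huu(\cc{z}_0)-\tfrac{1}{4\huu(z_0)}$: the paper states it follows from $\vert 2\huu(z_0)\vert=1$ alone (implicitly also using $\huu(\cc{z}_0)=\cc{\huu(z_0)}$), whereas you spell out the chain $\gw\gz=\vert\gw\vert^2=1\Rightarrow 1/\gz^2=\gw^2$, which is the cleaner way to see it.
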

\begin{proof}
First we set $\cc{w}_1=\cc{z}_0+\tfrac{\cc{a}}{\sqrt{\n}}$ and $z_1=z_0+\tfrac{\cc{b}}{\sqrt{\n}}$. We will proceed similarly as in the proof of Lemma \ref{lemma_hm_w_ellipse}. We get $\HM(\cc{w}_1,z_1)$ by integration
\begin{equation}
\HM\left(\cc{w}_1,z_1\right)=\HM\left(\cc{w}_1,z_0\right)+\int_0^{\frac{b}{\sqrt{\n}}} \frac{\partial \HM}{\partial z}\left(\cc{w}_1,z_0+v\right)\dd v,
\end{equation}
where we know the asymptotics of $\frac{\partial \HM}{\partial z}\left(\cc{w}_1,z_0+v\right)$ from Proposition \ref{prop_asym_wz} and that in the limit $\n\rightarrow\infty$, $\HM(\cc{w}_1,z_0)=\tfrac{\erfc(\zeta(\cc{a},\phi))}{2\pi}$ according to Lemma \ref{lemma_hm_w_ellipse}. Note that the definition for $\zeta$ in this proposition here is the same as in Lemma \ref{lemma_hm_w_ellipse}. $f(z_0)=0$ by Proposition \ref{prop_f_zero} and also $\hu(\cc{z}_0)=\hu(z_0)=0$ by Lemma \ref{lemma_hu_zero}, which simplifies \eqref{eq_lemma_asym_wz2}. Therefore as $\n\rightarrow \infty$
\begin{equation}\label{eq_lemma_hm1}
\begin{split}
\HM\left(\cc{w}_1,z_1\right) &\approx \tfrac{\erfc(\zeta(\cc{a},\phi))}{2\pi}+\sqrt{\tfrac{\n}{2 \pi^3}} \gz(z_0) \e^{\huu(\cc{z}_0)\cc{a}^2} \int_0^{\frac{b}{\sqrt{\n}}} \e^{\n \left(\huu(z_0)v^2-\frac{\cc{a}}{\sqrt{\n}}v\right)} \dd v\\
&=  \tfrac{\erfc(\zeta(\cc{a},\phi))}{2\pi}+\sqrt{\tfrac{\n}{2 \pi^3}} \gz(z_0) \e^{\huu(\cc{z}_0)\cc{a}^2+\eta_z^2\frac{\xi^2}{\n}}  \int_0^{\frac{b}{\sqrt{\n}}} \e^{-\eta_z^2 \left( v+ \frac{\xi}{\sqrt{\n}}\right)^2 } \dd v\\
&=  \tfrac{\erfc(\zeta(\cc{a},\phi))}{2\pi}+\sqrt{\tfrac{\n}{2 \pi^3}} \tfrac{\gz(z_0)}{\eta_z} \e^{\frac{\cc{a}^2}{2}\left( 2\huu(\cc{z}_0)-\frac{1}{2\huu(z_0)} \right)} \int_{\frac{\eta_z}{\sqrt{\n}} \xi}^{\frac{\eta_z}{\sqrt{\n}} (b+\xi )} \e^{-v^2 } \dd v,
\end{split}
\end{equation}
where $\xi=-\frac{\cc{a}}{2\huu(z_0)}$ and where we have substituted $\eta_z (v+\frac{\xi}{\sqrt{\n}}) $ by $v$ and have used that $\eta_z=\pm\sqrt{-\huu(z_0)\n}=-\gz(z_0)\sqrt{\tfrac{\n}{2}}$ according to Lemma \ref{lemma_gwz}. We further know that $2\huu(\cc{z}_0)-\frac{1}{2\huu(z_0)}=0$ as $\vert 2\huu(z_0)\vert=1$. For the upper boundary of integration we get 
\begin{align}
\tfrac{\eta_z}{\sqrt{\n}}(b+\xi)&%=\pm\sqrt{-\huu(z_0)} \left( b+\tfrac{\cc{a}}{2\sqrt{-\huu(z_0)}^2} \right)
=-\gz(z_0) \tfrac{b}{\sqrt{2}}-\gw(z_0)\tfrac{\cc{a}}{\sqrt{2}}=\zeta(b,-\phi)+\zeta(\cc{a},\phi).
\end{align}
Finally \eqref{eq_lemma_hm1} becomes
\begin{equation}\label{eq_integral_erfc2}
\begin{split}
\HM\left(\cc{w}_1,z_1\right) &\approx  \pi^{-3/2} \int_{\zeta(\cc{a},\phi)}^\infty \e^{- v^2} \dd v -\pi^{-3/2} \int_{\zeta(\cc{a},\phi)}^{\zeta(\cc{a},\phi)+\zeta(b,-\phi)} \e^{-v^2 } \dd v\\
&=\pi^{-3/2} \int_{\zeta(\cc{a},\phi)+\zeta(b,-\phi)}^\infty \e^{-v^2 } \dd v=\erfc(\zeta(\cc{a},\phi)+\zeta(b,-\phi)).
\end{split}
\end{equation}
\end{proof}

\begin{cor}\label{cor_hm_wz_ellipse}
Let $a,b\in\C$ and $z_0=\mas \cos \phi+i\mis \sin\phi$ be a point on the ellipse $\pEt$ with $\phi\in (-\pi,\pi]$  and $\psi$ the angle between the real axis and the outward normal to the tangent at $\pEt$ in $z_0$, i.e.\ $\psi$ is defined as in Corollary \ref{cor_density_scaled_limit}. Then
\begin{equation}
\lim_{\n\rightarrow\infty} \HM\left(\cc{z}_0+\tfrac{\cc{a}\e^{-i\psi}}{\sqrt{\n}},z_0+\tfrac{b\e^{i\psi}}{\sqrt{\n}}\right)=\frac{1}{2\pi}\erfc\left(\frac{\cc{a}+b}{\sqrt{2}} \right). 
\end{equation}
\end{cor}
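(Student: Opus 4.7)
The plan is to derive this corollary as a direct substitution into Proposition \ref{prop_hm_wz_ellipse}, which already gives the limit in terms of $\erfc(\zeta(\cc{a},\phi)+\zeta(b,-\phi))$ for arbitrary $a,b\in\C$. Since $a$ and $b$ are arbitrary complex parameters, I can replace $a$ by $a\e^{i\psi}$ (which makes $\cc{a}$ become $\cc{a}\e^{-i\psi}$) and $b$ by $b\e^{i\psi}$ without loss of generality. The whole content of the corollary is then the algebraic verification that
\begin{equation*}
\zeta(\cc{a}\e^{-i\psi},\phi)+\zeta(b\e^{i\psi},-\phi)=\frac{\cc{a}+b}{\sqrt{2}}.
\end{equation*}

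To carry this out, I would use the explicit formula \eqref{eq_zeta3}
\begin{equation*}
\zeta(\cc{a},\phi)=\frac{\cc{a}}{\sqrt{2}}\cdot\frac{(1-\t)\cos\phi+i(1+\t)\sin\phi}{\sqrt{(1+\t)^2-4\t\cos^2\phi}},
\end{equation*}
together with the values of $\cos\psi$ and $\sin\psi$ already computed in \eqref{eq_cos_psi} and \eqref{eq_sin_psi}. The key observation is that these two expressions combine to give
\begin{equation*}
(1-\t)\cos\phi+i(1+\t)\sin\phi=\sqrt{(1+\t)^2-4\t\cos^2\phi}\,\e^{i\psi},
\end{equation*}
and the complex conjugate identity (replacing $\phi$ by $-\phi$) yields the same modulus times $\e^{-i\psi}$. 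Plugging these in produces
\begin{equation*}
\zeta(\cc{a}\e^{-i\psi},\phi)=\frac{\cc{a}\e^{-i\psi}\e^{i\psi}}{\sqrt{2}}=\frac{\cc{a}}{\sqrt{2}},\qquad \zeta(b\e^{i\psi},-\phi)=\frac{b\e^{i\psi}\e^{-i\psi}}{\sqrt{2}}=\frac{b}{\sqrt{2}},
\end{equation*}
whose sum is exactly $(\cc{a}+b)/\sqrt{2}$.

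There is really no obstacle here beyond this trigonometric bookkeeping; the analytic work has already been done in Proposition \ref{prop_hm_wz_ellipse}. The one small point worth checking is the case $\phi=\pm\pi/2$, where $\psi=\phi$ is defined by the limiting convention rather than by $\arctan$; here $\cos\phi=0$ so $\sqrt{(1+\t)^2-4\t\cos^2\phi}=1+\t$, and both factors reduce to $\pm i\e^{\mp i\psi}=\pm i\e^{\mp i\phi}=1$, so the identity persists. Thus the corollary is essentially just a change of coordinates that aligns the real axis with the outward normal at $z_0$, exhibiting the universality of the boundary behaviour of $\HM$.
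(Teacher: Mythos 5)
Your proposal is correct and follows essentially the same route as the paper: both reduce the corollary to substituting $a\mapsto a\e^{i\psi}$, $b\mapsto b\e^{i\psi}$ into Proposition \ref{prop_hm_wz_ellipse} and then observing, via \eqref{eq_cos_psi} and \eqref{eq_sin_psi}, that the fraction appearing in $\zeta$ is exactly $\e^{\pm i\psi}$, so it cancels against the rotation of $a$ and $b$. The only cosmetic difference is that the paper verifies the cancellation by multiplying out the numerators and matching the modulus squared, while you phrase it as $\e^{-i\psi}\e^{i\psi}=1$; your additional check of the $\phi=\pm\pi/2$ case is a harmless extra.
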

\begin{proof}
Using \eqref{eq_cos_psi} and \eqref{eq_sin_psi} from the proof of Corollary \ref{cor_density_scaled_limit}, we get
\begin{equation}\label{eq_psi_e}
\e^{\pm i \psi}=\frac{(1-\t)\cos\phi\pm(1+\t) i \sin\phi}{\sqrt{(1+\t)^2 -4\t \cos^2 \phi}}.
\end{equation}
Substituting in Proposition \ref{prop_hm_wz_ellipse} $a$ by $a \e^{i \psi}$ and $b$ by $b \e^{i \psi}$, we find 
\begin{equation}
\lim_{\n\rightarrow\infty} \HM\left(\cc{z}_0+\tfrac{\cc{a}\e^{-i\psi}}{\sqrt{\n}},z_0+\tfrac{b\e^{i\psi}}{\sqrt{\n}}\right)=\frac{1}{2\pi}\erfc\left(\zeta(\cc{a} \e^{-i\psi},\phi )+ \zeta(b\e^{i\psi},-\phi) \right), 
\end{equation}
Using \eqref{eq_psi_e} we get 
\begin{align}
\zeta(\cc{a} \e^{-i\psi},\phi)&=\frac{\cc{a}}{\sqrt{2}} \frac{\e^{-i\psi}\left((1-\t)\cos\phi +(1+\t)i\sin\phi\right)}{\sqrt{(1+\t)^2-4\t \cos^2\phi}}\nn \\
&=\frac{\cc{a}}{\sqrt{2}} \frac{(1-\t)^2\cos^2\phi +(1+\t)^2 \sin^2\phi}{(1+\t)^2-4\t \cos^2\phi}=\frac{\cc{a}}{\sqrt{2}}, 
\intertext{and analogously}
\zeta(b\e^{i\psi},-\phi)&=\frac{b}{\sqrt{2}}.
\end{align}
\end{proof}

%\subsubsection{Variant 2}
\paragraph*{Variant 2}
We see that Corollary \ref{cor_density_scaled_limit} is consistent with Corollary \ref{cor_hm_wz_ellipse} by setting $b=\cc{a}$. Another way to proof the latter one, would have been to use that
\begin{equation}
\begin{split}
\HM\left(\cc{z}_0+\tfrac{\cc{a}\e^{-i\psi}}{\sqrt{\n}},z_0+\tfrac{b\e^{i\psi}}{\sqrt{\n}}\right)=&\HM\left(\cc{z}_0+\tfrac{\cc{b}\e^{-i\psi}}{\sqrt{n}},z_0+\tfrac{b\e^{i\psi}}{\sqrt{n}} \right)\\
&+\int_{\frac{\cc{b}\e^{-i\psi}}{\sqrt{\n}}}^{\frac{\cc{a}\e^{-i\psi}}{\sqrt{\n}}} \frac{\partial \HM}{\partial w}\left(\cc{z}_0+u,z_0+\tfrac{b\e^{i\psi}}{\sqrt{\n}} \right)\dd u. 
\end{split}
\end{equation}
This way we would have been left to only one integration.

\subsection{Asymptotics of the Kernel $\tKn$}\label{subsec_asym_kernel}
Now we want to summarize our calculations in the following theorem for the kernel $\tKn$.
\begin{theorem}\label{th_km}
Let $a,b,z_0\in\C$. Then the asymptotics as $\n\rightarrow\infty$ of the kernel $\tKn$ for the potential $V(z)=\vert z\vert^2-\Re(t z^2)$ is given by
\begin{equation}
\begin{split}
&\frac{\tKn}{\n}\left(z_0+\tfrac{a\e^{i\psi}}{\sqrt{\n}},z_0+\tfrac{b\e^{i\psi}}{\sqrt{\n}}\right)\\
&\qquad=\e^{i\nphase(z_0,a\e^{i\psi},b\e^{i\psi})}\e^{-\frac{\vert a-b\vert^2}{2}} \left\{\begin{array}{lll}\frac{1}{\pi} & \text{if $z_0$ lies inside of $\pEt$,}\\ \frac{1}{2\pi}\erfc\left(\frac{\cc{a}+b}{\sqrt{2}}\right) & \text{if $z_0\in \pEt$,}\\ 0 & \text{if $z_0$ lies outside of $\pEt$,}\end{array}\right. 
\end{split}
\end{equation}
with
\begin{align}
\nphase(z_0,a\e^{i\psi},b\e^{i\psi})&=\sqrt{\n}\Im\left(\e^{-i\psi}(\cc{a}-\cc{b})(z_0-\t \cc{z}_0) \right)+\Im\left( \cc{a}b+\tfrac{\t}{2}\e^{2i\psi}(a^2-b^2) \right),
\end{align}
where $\psi\in(-\pi,\pi]$ can be arbitrary if $z_0\notin\pEt$ and is defined as in Corollary \ref{cor_density_scaled_limit} if $z_0=\mas \cos \phi+i\mis \sin\phi\in\pEt$.
\end{theorem}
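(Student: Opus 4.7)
The plan is to reduce everything to the asymptotics of $\HM$ that have already been established in Proposition \ref{prop_hm_wz} (bulk / outside) and Proposition \ref{prop_hm_wz_ellipse} together with Corollary \ref{cor_hm_wz_ellipse} (on the ellipse). The bridge is the explicit relation \eqref{eq_HM_tKn} between the normalized reproducing kernel and the ``pre-kernel''. Solving that relation for $\tKn/\n$ gives
\begin{equation*}
\frac{\tKn(w,z)}{\n}=\HM(\cc{w},z)\,\exp\!\left\{-\tfrac{\n}{2}\bigl(\vert w-z\vert^2+w\cc{z}-\cc{w}z\bigr)-\tfrac{\n\t}{4}\bigl(z^2-w^2+\cc{w}^2-\cc{z}^2\bigr)\right\}.
\end{equation*}
So the whole task is (i) to evaluate the exponential prefactor at $w=z_0+ae^{i\psi}/\sqrt{\n}$, $z=z_0+be^{i\psi}/\sqrt{\n}$, and (ii) to plug in the already known limit of $\HM$.

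For step (i), I would expand each piece in powers of $\n^{-1/2}$. The term $-\frac{\n}{2}\vert w-z\vert^2$ collapses to the real Gaussian factor $-\vert a-b\vert^2/2$. The piece $-\frac{\n}{2}(w\cc{z}-\cc{w}z)$ is purely imaginary because $w\cc{z}-\cc{w}z=2i\Im(w\cc{z})$; direct substitution yields $-i\sqrt{\n}\,\Im\bigl(\cc{z}_0(a-b)e^{i\psi}\bigr)-i\Im(a\cc{b})$. Finally, using $\cc{w}^2-\cc{z}^2=\overline{w^2-z^2}$, the $\t$-term equals $\frac{i\n\t}{2}\Im(z^2-w^2)$, which expands as $it\sqrt{\n}\,\Im\bigl(z_0(a-b)e^{i\psi}\bigr)+\frac{it}{2}\Im\bigl((a^2-b^2)e^{2i\psi}\bigr)$ after a sign change. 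Combining the two $\sqrt{\n}$ contributions and applying the identity $\Im(X)=-\Im(\cc{X})$ to recast the sum as $\sqrt{\n}\,\Im\bigl(e^{-i\psi}(\cc{a}-\cc{b})(z_0-\t\cc{z}_0)\bigr)$, together with $-\Im(a\cc{b})=\Im(\cc{a}b)$ in the $O(1)$ part, reproduces exactly the phase $\nphase(z_0,ae^{i\psi},be^{i\psi})$ given in the statement. This bookkeeping, and in particular the rearrangement of the $\sqrt{\n}$ term into the stated compact form, is the only real work in the argument.

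For step (ii), if $z_0\notin\pEt$ then Proposition \ref{prop_hm_wz} applies with arbitrary $\psi$ (absorbed into the $a,b$) and gives $\HM\to\rho(z_0)$, which by Proposition \ref{prop_density} is $1/\pi$ inside and $0$ outside. If $z_0\in\pEt$ and $\psi$ is the outward normal angle of Corollary \ref{cor_density_scaled_limit}, then the identification $a'=ae^{i\psi}$, $b'=be^{i\psi}$ in Proposition \ref{prop_hm_wz_ellipse} together with Corollary \ref{cor_hm_wz_ellipse} gives $\HM\to\frac{1}{2\pi}\erfc((\cc{a}+b)/\sqrt{2})$. Multiplying by the prefactor computed in step (i) yields the three cases of Theorem \ref{th_km}.

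The main obstacle is purely algebraic, namely showing that the combination
\[
t\sqrt{\n}\,\Im\!\bigl(z_0(a-b)e^{i\psi}\bigr)-\sqrt{\n}\,\Im\!\bigl(\cc{z}_0(a-b)e^{i\psi}\bigr)
\]
equals $\sqrt{\n}\,\Im\!\bigl(e^{-i\psi}(\cc{a}-\cc{b})(z_0-\t\cc{z}_0)\bigr)$, and verifying similarly that the $O(1)$ phase matches. No new analytic input is required beyond the results of Sections \ref{sec_density} and \ref{ch_kernel_corr}.
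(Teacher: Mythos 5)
Your proposal is correct and follows the paper's own argument: both start from the relation \eqref{eq_HM_tKn}, evaluate the exponential prefactor at $w=z_0+a\e^{i\psi}/\sqrt{\n}$, $z=z_0+b\e^{i\psi}/\sqrt{\n}$, reorganize the $O(\sqrt{\n})$ and $O(1)$ imaginary parts into the stated phase $\nphase$, and then invoke Proposition \ref{prop_hm_wz} with Proposition \ref{prop_density} (off $\pEt$) and Corollary \ref{cor_hm_wz_ellipse} (on $\pEt$) for the limit of $\HM$. The only difference is cosmetic (you carry the rotation $\e^{i\psi}$ through the expansion, whereas the paper first computes with unrotated $a,b$ and substitutes $a\mapsto a\e^{i\psi}$, $b\mapsto b\e^{i\psi}$ at the end), and your algebraic bookkeeping of the imaginary parts checks out.
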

\begin{remark}
We have only introduced the rotation $e^{i\psi}$ in the coordinate system of $a$ and $b$ for $z_0\in\pEt$ so that the real axis is normal to the tangent at the ellipse in $z_0$. If $z_0\notin\pEt$, this rotation does neither help nor harm and $\psi$ can be set to zero. It will only give an additional rotation in the phase which is oscillating with $\n$. 
\end{remark}

\begin{proof}
Comparing Definition \ref{def_norm_kernel} with Definition \ref{def_hm}, we see that for $\t\in\R$ $\tKn$ is related to $\HM$ as
\begin{align}
\frac{\tKn}{\n}(w,z)&=\HM(\cc{w},z) \e^{\n \left(\cc{w}z-\frac{\vert z\vert^2+\vert w\vert^2 }{2}+\frac{\t}{2}\left( \Re(z^2)+\Re(w^2)-\cc{w}^2-z^2 \right) \right)}.
%\end{align}
\intertext{Setting $w=z_0+\tfrac{a}{\sqrt{\n}}$ and $z=z_0+\tfrac{b}{\sqrt{\n}}$, we get}
%\begin{align}
\frac{\frac{\tKn}{\n}\left(z_0+\tfrac{a}{\sqrt{\n}},z_0+\tfrac{b}{\sqrt{\n}}\right)}{\HM \left(\cc{z}_0+\tfrac{\cc{a}}{\sqrt{\n}},z_0+\tfrac{b}{\sqrt{\n}}\right)}&=\e^{\frac{\t}{2}\left(\sqrt{\n}2i\Im(a z_0+\cc{b}\cc{z}_0)+i\Im(a^2+b^2)  \right) } \e^{\sqrt{\n}i \Im(\cc{a}z_0+b\cc{z}_0)+i\Im(\cc{a}b)-\frac{\vert a-b\vert^2}{2}} \nn \\
&=\e^{\sqrt{\n}i\Im\left((\cc{a}-\cc{b})(z_0-\t \cc{z}_0) \right)+i\Im\left(\cc{a}b+\frac{\t}{2}(a^2-b^2) \right) } \e^{-\frac{\vert a-b\vert^2}{2}}.
\end{align}
Substituting $a$ by $a\e^{i\psi}$ and $b$ by $a\e^{i\psi}$ and using for 
$$\HM\left(\cc{z}_0+\tfrac{\cc{a}\e^{-i\psi}}{\sqrt{\n}},z_0+\tfrac{b\e^{i\psi}}{\sqrt{\n}}\right)$$
Corollary \ref{cor_hm_wz_ellipse} if $z_0\in\pEt$ or Proposition \ref{prop_hm_wz} together with Proposition \ref{prop_density} if $z_0\notin\pEt$, proves the theorem.
\end{proof}

\subsection{Universality of $\vert \tKn\vert$}
We have seen that if $z_0$ lies either inside, outside or on the ellipse $\pEt$, then in each of these domains the asymptotics as $\n\rightarrow\infty$ of $\HM$ will be independent of $z_0$ and $\t$ and depends only on $a$ and $b$ if $z_0\in\pEt$ (in a coordinate system where the real axis is normal to the tangent at the ellipse in $z_0$). In contrast to this, $\tKn$ has the phase $\nphase$ that not only depends on $z_0$ and $\t$ (additionally to $a$ and $b$) but is also highly oscillating with $\n$ so that $\tKn$ has not a well defined limit when $\n\rightarrow\infty.$

However, we can formulate the following universality theorem for the absolute value of $\tKn$.

\begin{theorem}[Universality]\label{th_univ_kernel}
Let $a,b\in\C$. Then for every potential $V(z)=\vert z\vert^2-\Re(t z^2)$ the absolute value of the kernel is
\begin{align}
\lim_{\n\rightarrow\infty}\left\vert\frac{\tKn}{\n}\left(z_0+\tfrac{a}{\sqrt{\n}},z_0+\tfrac{b}{\sqrt{\n}}\right)\right\vert&=\frac{1}{\pi}\e^{-\frac{\vert a-b\vert^2}{2}}, \label{eq_univ_kernel1}
%\end{align}
\intertext{if $z_0$ lies inside of the ellipse $\pEt$ and}
%\begin{align}
\lim_{\n\rightarrow\infty}\left\vert\frac{\tKn}{\n}\left(z_0+\tfrac{a\e^{i\psi}}{\sqrt{\n}},z_0+\tfrac{b\e^{i\psi}}{\sqrt{\n}}\right)\right\vert&=\frac{1}{2\pi} \e^{-\frac{\vert a-b\vert^2}{2}} \left\vert\erfc\left(\frac{\cc{a}+b}{\sqrt{2}}\right)\right\vert,  
\end{align}
if $z_0=\mas \cos \phi+i\mis \sin\phi\in\pEt$ where $\psi$ is defined as in Corollary \ref{cor_density_scaled_limit}.
\end{theorem}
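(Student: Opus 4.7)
The proof is essentially immediate from Theorem~\ref{th_km}, so the plan is to reduce the universality statement to taking absolute values in that asymptotic formula. First I would observe that the phase function
\[
\nphase(z_0, a\e^{i\psi}, b\e^{i\psi}) = \sqrt{\n}\,\Im\!\left(\e^{-i\psi}(\cc{a}-\cc{b})(z_0-\t \cc{z}_0) \right) + \Im\!\left( \cc{a}b + \tfrac{\t}{2}\e^{2i\psi}(a^2-b^2) \right)
\]
is real-valued by inspection, being a sum of imaginary parts of complex numbers. Consequently $\vert \e^{i\nphase(z_0,a\e^{i\psi},b\e^{i\psi})}\vert = 1$, and this is the only factor in Theorem~\ref{th_km} whose $\n$-dependence prevents the kernel itself from having a limit.

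For the case $z_0 \in \Et$ (interior of the ellipse), I would apply Theorem~\ref{th_km} with $\psi = 0$ (allowed since $\psi$ is arbitrary in this case) and obtain
\[
\frac{\tKn}{\n}\!\left(z_0+\tfrac{a}{\sqrt{\n}},z_0+\tfrac{b}{\sqrt{\n}}\right) = \e^{i\nphase(z_0,a,b)}\,\e^{-\vert a-b\vert^2/2}\,\tfrac{1}{\pi}(1+o(1)).
\]
Taking absolute values and using $\vert \e^{i\nphase}\vert = 1$ and the fact that $\e^{-\vert a-b\vert^2/2}$ is a positive real yields the claimed limit $\tfrac{1}{\pi}\e^{-\vert a-b\vert^2/2}$ in \eqref{eq_univ_kernel1}. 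For the boundary case $z_0 \in \pEt$, the same procedure (now with $\psi$ fixed as the outward-normal angle of Corollary~\ref{cor_density_scaled_limit}) gives
\[
\frac{\tKn}{\n}\!\left(z_0+\tfrac{a\e^{i\psi}}{\sqrt{\n}},z_0+\tfrac{b\e^{i\psi}}{\sqrt{\n}}\right) = \e^{i\nphase(z_0,a\e^{i\psi},b\e^{i\psi})}\,\e^{-\vert a-b\vert^2/2}\,\tfrac{1}{2\pi}\erfc\!\left(\tfrac{\cc{a}+b}{\sqrt{2}}\right)(1+o(1)),
\]
so that taking the modulus and pulling the absolute value inside (since $\erfc$ of a complex argument can itself be complex) produces the stated formula.

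There is no real obstacle here beyond the bookkeeping: the entire content of the universality claim is already packaged inside Theorem~\ref{th_km}. The only subtlety worth remarking on in the writeup is that the explicit dependence on $z_0$, $\t$, and $\n$ that remains in the kernel lives entirely in the unimodular phase $\e^{i\nphase}$ and so disappears under $\vert\cdot\vert$; the surviving profile depends only on $a$ and $b$ (and, at the boundary, on the local scaled variables perpendicular and tangential to $\pEt$). This is precisely what makes the statement a universality result.
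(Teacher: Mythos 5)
Your proof is correct and follows exactly the same route as the paper: Theorem \ref{th_km} is quoted and the absolute value is taken, using that $\nphase$ is real so $\vert\e^{i\nphase}\vert=1$. The paper's proof is a one-liner saying just that; your elaboration (noting the phase is the only $\n$-dependent factor, and that $\erfc$ of a complex argument must be kept inside the modulus) is a welcome but equivalent unpacking.
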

\begin{proof}
The proof follows directly by taking the absolute value of $\tKn$ in Theorem \ref{th_km}.
\end{proof}

\begin{remark}
Comparing \eqref{eq_univ_kernel1} with the universality of the well-known sine-kernel for Hermitian matrix models, % TO BE DONE evtl statt vgl. mit \eqref{eq_univ_kernel1} zuerst noch Bemerkung einfügen, wo man a,b durch die Dichte dividiert (mit Referenz zu Herm. Matrixmodels) und diese dann vergleichen
%auch erwähnen, dass $K^{\mathrm{GUE}}_\n$ proportional zu $\n$ ist und a/\n die richtige Skalierung ist
%TO BE DONE Referenz einfügen
\begin{equation}
\lim_{\n \rightarrow \infty} \frac{K^{\mathrm{GUE}}_\n\left(x+\frac{a}{K_\n(x,x)}, x+\frac{b}{K^{\mathrm{GUE}}_\n(x,x)} \right) }{K_\n(x,x)}=\frac{\sin \pi(a-b)}{\pi(a-b)},
\end{equation}
where $x,a,b\in \R$ and $K^{\mathrm{GUE}}_\n$ is the analogous kernel for the Hermitian case, we see that the normal matrix model has not the same kernel though they share some similarities.
\end{remark}
 
%\section{Correlation for the Gaussian Potential}\label{sec_corr}
\subsection{Universality of Correlation Functions}\label{sec_univ_corr}
Now as we know the kernel $\tKn$, we can find the correlation functions by Definition \ref{def_correlation}. 
\begin{theorem}[Universality of Correlation Functions]\label{th_corr}
Let $\Rn{m}$ be the $m$-point correlation function as in Definition \ref{def_correlation} and $a_1,\ldots,a_m\in\C$. Then
\begin{align}
\lim_{\n\rightarrow\infty} \frac{\Rn{m}}{\n^m} \left(z_0+\tfrac{a_1}{\sqrt{\n}},\ldots,z_0+\tfrac{a_m}{\sqrt{\n}} \right)&= \det \left( \tfrac{1}{\pi}\e^{-\frac{1}{2}\vert a_k-a_l \vert^2+i\Im \left(\cc{a}_k a_l\right) }  \right)_{k,l=1}^m,
%\end{align}
\intertext{if $z_0$ lies inside of the ellipse $\pEt$,}
%\begin{align}
\lim_{\n\rightarrow\infty} \frac{\Rn{m}}{\n^m} \left(z_0+\tfrac{a_1}{\sqrt{\n}},\ldots,z_0+\tfrac{a_m}{\sqrt{\n}} \right)&=0,
%\end{align}
\intertext{if $z_0$ lies outside of $\pEt$ and}
%\begin{align}
\lim_{\n\rightarrow\infty} \frac{\Rn{m}}{\n^m} \left(z_0+\tfrac{a_1\e^{i\psi}}{\sqrt{\n}},\ldots,z_0+\tfrac{a_m\e^{i\psi}}{\sqrt{\n}}\right)&=\det \left( \tfrac{\erfc\left(\frac{\cc{a}_k+a_l}{\sqrt{2}}\right)}{2\pi}\e^{-\frac{1}{2}\vert a_k-a_l \vert^2+ i\Im \left(\cc{a}_k a_l\right) }  \right)_{k,l=1}^m, 
\end{align}
if $z_0=\mas \cos \phi+i\mis \sin\phi\in\pEt$ where $\psi$ is defined as in Corollary \ref{cor_density_scaled_limit}.
\end{theorem}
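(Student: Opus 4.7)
The plan is to reduce the statement to Theorem \ref{th_km} via the Leibniz expansion of the determinant in Definition \ref{def_correlation}, and to verify that, in each term of the expansion, the highly oscillating $\sqrt{\n}$-phase picked up from $\nphase$ collapses. Since $m$ is fixed and we are dealing with a finite sum over $\sigma\in S_m$ of finite products, limit and sum/product commute without any further justification; the only real work is algebraic manipulation of the phase $\nphase$.

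First I would apply Theorem \ref{th_km} entrywise to
\[
\frac{\Rn{m}}{\n^m}(\zeta_1,\ldots,\zeta_m)=\det\Bigl(\tfrac{\tKn}{\n}(\zeta_k,\zeta_l)\Bigr)_{k,l=1}^m,
\]
with $\zeta_k=z_0+a_k\e^{i\psi}/\sqrt{\n}$. Writing the universal limiting factor as
\[
K(\cc{a},b)=\begin{cases}1/\pi, & z_0\in\Et,\\[1pt] \erfc((\cc{a}+b)/\sqrt{2})/(2\pi), & z_0\in\pEt,\\[1pt] 0, & z_0\notin\Et\cup\pEt,\end{cases}
\]
Theorem \ref{th_km} gives, for each pair $(k,l)$ and as $\n\to\infty$,
\[
\tfrac{\tKn}{\n}(\zeta_k,\zeta_l)=\e^{i\nphase(z_0,a_k\e^{i\psi},a_l\e^{i\psi})}\,\e^{-|a_k-a_l|^2/2}\,K(\cc{a}_k,a_l)+o(1).
\]
Expanding the determinant by Leibniz,
\[
\det\Bigl(\tfrac{\tKn}{\n}(\zeta_k,\zeta_l)\Bigr)=\sum_{\sigma\in S_m}\sign(\sigma)\prod_{k=1}^m\tfrac{\tKn}{\n}(\zeta_k,\zeta_{\sigma(k)}),
\]
so it suffices to understand, for each $\sigma$, the $\n\to\infty$ limit of
\[
\prod_{k=1}^m\e^{i\nphase(z_0,a_k\e^{i\psi},a_{\sigma(k)}\e^{i\psi})}\cdot\prod_{k=1}^m\e^{-|a_k-a_{\sigma(k)}|^2/2}K(\cc{a}_k,a_{\sigma(k)}).
\]

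The key step is the cancellation of the phase. With the explicit form of $\nphase$ from Theorem \ref{th_km}, summing over $k$ gives
\begin{align*}
\sum_{k=1}^m\nphase(z_0,a_k\e^{i\psi},a_{\sigma(k)}\e^{i\psi}) ={}& \sqrt{\n}\,\Im\!\Bigl(\e^{-i\psi}(z_0-\t\cc{z}_0)\sum_{k=1}^m(\cc{a}_k-\cc{a}_{\sigma(k)})\Bigr)\\
&+\Im\!\Bigl(\sum_{k=1}^m\cc{a}_k a_{\sigma(k)}+\tfrac{\t}{2}\e^{2i\psi}\sum_{k=1}^m(a_k^2-a_{\sigma(k)}^2)\Bigr).
\end{align*}
Since $\sigma$ is a permutation, $\sum_k \cc{a}_{\sigma(k)}=\sum_k\cc{a}_k$ and $\sum_k a_{\sigma(k)}^2=\sum_k a_k^2$, so the $\sqrt{\n}$-oscillating term vanishes identically and the $\t$-dependent contribution also vanishes. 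The surviving phase is therefore $\sum_{k=1}^m\Im(\cc{a}_k a_{\sigma(k)})$, which can be reabsorbed factor by factor into the product. Consequently
\[
\prod_{k=1}^m\tfrac{\tKn}{\n}(\zeta_k,\zeta_{\sigma(k)})\;\xrightarrow[\n\to\infty]{}\;\prod_{k=1}^m\Bigl(\e^{i\Im(\cc{a}_k a_{\sigma(k)})}\e^{-|a_k-a_{\sigma(k)}|^2/2}K(\cc{a}_k,a_{\sigma(k)})\Bigr).
\]

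Reassembling the Leibniz sum, this is precisely the Leibniz expansion of
\[
\det\Bigl(\e^{i\Im(\cc{a}_k a_l)}\e^{-|a_k-a_l|^2/2}K(\cc{a}_k,a_l)\Bigr)_{k,l=1}^m,
\]
which is the three cases in the statement. The only subtle point, and the main (but minor) obstacle, is confirming that the $\sqrt{\n}$-oscillation in $\nphase$ telescopes through every permutation and that the $\t$-dependent piece is permutation-invariant; both follow from the fact that these contributions are linear or additive in quantities of the form $F(a_k)-F(a_{\sigma(k)})$. Everything else is the bookkeeping of Theorem \ref{th_km} inserted into a fixed determinant of size $m$.
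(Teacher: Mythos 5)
Your proposal is correct and follows essentially the same route as the paper: expand the determinant via Leibniz, apply Theorem \ref{th_km} entrywise, observe that both the $\sqrt{\n}$-oscillating term and the $\t$-dependent quadratic term in $\nphase$ telescope to zero when summed over $k$ for any fixed permutation $\sigma$ (by permutation invariance of $\sum_k \cc{a}_k$ and $\sum_k a_k^2$), and reassemble the surviving $\Im(\cc{a}_k a_{\sigma(k)})$ phase into the determinant. The only presentational difference is that the paper spells out that each $\corrdet{m}(\sigma)$ has a well-defined limit after the cancellation and that the finite sum then permits exchanging limit and sum, whereas you fold this observation into a single remark; the substance is identical.
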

\begin{proof}
From Definition \ref{def_correlation} we get
\begin{equation}
\begin{split}\label{eq_corr1}
\frac{\Rn{m}}{\n^m} \left(z_0+\tfrac{a_1}{\sqrt{\n}},\ldots,z_0+\tfrac{a_m}{\sqrt{\n}} \right)&=\det\left(\frac{\tKn}{\n}\left(z_0+\tfrac{a_k}{\sqrt{\n}},z_0+\tfrac{a_l}{\sqrt{\n}}\right)\right)_{k,l=1}^m\\
&=\sum_{\sigma\in S_m} \sign(\sigma) \corrdet{m}(\sigma),
\end{split}
\end{equation}
where
\begin{equation}\label{eq_corr2}
\corrdet{m}(\sigma)=\prod_{k=1}^m \frac{\tKn}{\n}\left(z_0+\tfrac{a_k}{\sqrt{\n}},z_0+\tfrac{a_{\sigma(k)}}{\sqrt{\n}}\right), \qquad \underline{a}=(a_1,\ldots,a_m).
\end{equation}
From Theorem \ref{th_univ_kernel} (or Theorem \ref{th_km} if $z_0$ lies outside of $\pEt$) we see that $\vert\corrdet{m}\vert$ is well-defined and finite when $\n\rightarrow\infty$. So we can move the limit in front of $\Rn{m}$ in the theorem into the sum over the permutation $\sigma$, provided the limit $\n\rightarrow\infty$ of $\corrdet{m}$ is well-defined, i.e.\ has no longer an oscillating phase. 

If $z_0$ lies outside of $\pEt$, it is obvious from Theorem \ref{th_km} that then \eqref{eq_corr1} goes to zero when $\n\rightarrow\infty$.

Next we want to analyze $\corrdet{m}$ in the case if $z_0$ lies inside of $\pEt$. We will see that not only the phase oscillating with $\sqrt{\n}$ will disappear, but also that the remaining part of the phase will not depend on $\t$ anymore. Using Theorem \ref{th_km} for $\tfrac{\tKn}{\n}\left(z_0+\tfrac{a_k}{\sqrt{\n}},z_0+\tfrac{a_{\sigma(k)}}{\sqrt{\n}}\right)$ we find
\begin{align}
\MoveEqLeft\lim_{\n\rightarrow\infty}\corrdet{m}(\sigma) \nn \\
& = \lim_{\n\rightarrow\infty}  \tfrac{1}{\pi^m} \e^{ i\sqrt{\n} \Im \sum_{k=1}^m (\cc{a}_k-\cc{a}_{\sigma(k)})(z_0-\t \cc{z}_0)+i\Im \sum_{k=1}^m \cc{a}_k a_{\sigma(k)} + i \Im \sum_{k=1}^m  \frac{\t}{2}(a_k^2-a_{\sigma(k)}^2) -\sum_{k=1}^m \frac{\vert a_k-a_{\sigma(k)}\vert^2}{2}}. \label{eq_corr3}
\end{align}
Using that a summation of all $\cc{a}_k$ is invariant under permutations, we see that the sum in the oscillating phase cancels
\begin{align}
\sum_{k=1}^m (\cc{a}_k-\cc{a}_{\sigma(k)})(z_0-\t \cc{z}_0)&= (z_0-\t \cc{z}_0)\left( \sum_{k=1}^m \cc{a}_k - \sum_{k=1}^m  \cc{a}_{\sigma(k)}\right)=0, \label{eq_corr4}
\intertext{as the third sum}
\sum_{k=1}^m  \tfrac{\t}{2}(a_k^2-a_{\sigma(k)}^2)&=\frac{\t}{2}\left(\sum_{k=1}^m  a_k^2-\sum_{k=1}^m  a_{\sigma(k)}^2 \right)=0. \label{eq_corr5}
\end{align}
So the only term left in the phase is $\Im \sum_{k=1}^m \cc{a}_k a_{\sigma(k)}$, which is independent of $\n$. Therefore the limit in \eqref{eq_corr3} is well-defined and becomes
\begin{equation}\label{eq_corr6}
\lim_{\n\rightarrow\infty}\corrdet{m}(\sigma)=\tfrac{1}{\pi^m}\e^{\sum_{k=1}^m \left( i \Im(\cc{a}_k a_{\sigma(k)})-\frac{1}{2}\left\vert a_k-a_{\sigma(k)}\right\vert^2 \right)}.
%\tfrac{1}{\pi^m}\e^{ i\Im \sum_{k=1}^m \cc{a}_k a_{\sigma(k)}-\sum_{k=1}^m \frac{\vert a_k-a_{\sigma(k)}\vert^2}{2}}.
\end{equation}
Putting this in \eqref{eq_corr1}, taking the limit $\n\rightarrow\infty$ and rewriting the result as a determinant, we find
\begin{equation}\label{eq_corr7}
\lim_{\n\rightarrow\infty}\frac{\Rn{m}}{\n^m} \left(z_0+\tfrac{a_1}{\sqrt{\n}},\ldots,z_0+\tfrac{a_m}{\sqrt{\n}} \right)= \det \left( \tfrac{1}{\pi}\e^{-\frac{1}{2}\vert a_k-a_l \vert^2+ i\Im \left(\cc{a}_k a_l\right) }  \right)_{k,l=1}^m.
\end{equation}

We are left to the case if $z_0\in\pEt$. Using again Theorem \ref{th_km} and noting that the right-hand side of \eqref{eq_corr6} is invariant under the simultaneous substitution of $a_k$ by $a_k\e^{i\psi}$ and $a_{\sigma(k)}$ by $a_{\sigma(k)}\e^{i\psi}$, we get the same way as before
\begin{equation}\label{eq_corr8}
\lim_{\n\rightarrow\infty}\acorrdet{m}{\ve{a}\e^{i\psi}}(\sigma)=\tfrac{1}{(2\pi)^m}\prod_{k=1}^m\e^{i \Im(\cc{a}_k a_{\sigma(k)})- \frac{1}{2}\left\vert a_k-a_{\sigma(k)}\right\vert^2} \erfc\left(\tfrac{\cc{a}_k+a_{\sigma(k)}}{\sqrt{2}}\right).
\end{equation}
We can write the result for the correlation function again as a determinant, 
\begin{equation}\label{eq_corr9}
\lim_{\n\rightarrow\infty}\frac{\Rn{m}}{\n^m} \left(z_0+\tfrac{a_1\e^{i\psi}}{\sqrt{\n}},\ldots,z_0+\tfrac{a_m\e^{i\psi}}{\sqrt{\n}} \right)= \det \left( \tfrac{\erfc\left(\frac{\cc{a}_k+a_l}{\sqrt{2}}\right)}{2\pi}\e^{-\frac{1}{2}\vert a_k-a_l \vert^2+ i\Im \left(\cc{a}_k a_l\right) }  \right)_{k,l=1}^m.
\end{equation}
\end{proof}

\begin{remark}
In each of its domains (inside, outside or on the ellipse $\pEt$) the correlation function is independent of $z_0$. Further it doesn't depend on $\t$ at all. So we have found a law of universality for Gaussian random normal matrices. 
\end{remark} 
% TO BE DONE! 
%Compare this to Hermitian matrix models where ... inside of the interval where $\rho\neq0$ (see ...) and ... at the edge (see...). Again we see that our result for the normal matrix model is different. 
%TO BE DONE Difference in scaling is the same inside and on the boundary countrary to the Hermite case where inside with n^-1 and outside with n^-2/3? oder ählnlich? Grund Dichte die am Rand wie n^-1/2 gegen Null geht (erwähnt in einer Arbeit / Folie einer Presentation oder ausgedruckte Arbeit? von Makarov)

% TO BE DONE später wenn t_0\neq 1. \remark{In the Hermitian case we often see the following limit (see...) % R_\n^(m)/K }

\begin{remark}
The imaginary part of the sum $\sum_{k=1}^m \cc{a}_k a_{\sigma(k)}$ in \eqref{eq_corr6} can only be different from zero if $\sigma$ contains a cycle with length larger than two. So for $m<3$ the correlation function simplifies to
\begin{equation}
\lim_{\n\rightarrow\infty}\frac{\Rn{m}}{\n^m} \left(z_0+\tfrac{a_1}{\sqrt{\n}},\ldots,z_0+\tfrac{a_m}{\sqrt{\n}} \right)= \det \left( \tfrac{1}{\pi}\e^{-\frac{1}{2}\vert a_k-a_l \vert^2}\right)_{k,l=1}^m,
\end{equation}
if $z_0$ lies inside of the ellipse. If $\sigma$ has a cycle of length of at least three then $\sum_{k=1}^m \cc{a}_k a_{\sigma(k)}$ in generally does not vanish and gives a complex phase for the term \eqref{eq_corr6}. Then we find for $\sigma^{-1}$ that $\lim_{\n\rightarrow\infty}\corrdet{m}(\sigma^{-1})=\lim_{\n\rightarrow\infty}\cc{\corrdet{m}(\sigma)}$. Further $\sign(\sigma^{-1})=\sign(\sigma)$. So we can combine these two terms to 
\begin{equation} 
\lim_{\n\rightarrow\infty}\left(\corrdet{m}(\sigma)+\corrdet{m}(\sigma^{-1})\right) = \frac{1}{\pi^m} \cos\left(\sum_{k=1}^m \Im\left(\cc{a}_k a_{\sigma(k)}\right) \right) \e^{-\sum_{k=1}^m \frac{1}{2}\left\vert a_k-a_{\sigma(k)}\right\vert^2 }, 
\end{equation} 
if $z_0$ lies inside the ellipse. % TO BE DONE: For $z_0\in\pEt$ we find ...%
From this it is obvious that the correlation function is real as it must be. The term $\sum_{k=1}^m \cc{a}_k a_{\sigma(k)}$ contains information how the $a_k$ are arranged. For example if $a_k=x_k \e^{i\phi}+c$ with $x_k\in \R$ for all $k=1,\ldots,m$ and $\phi\in[0,2\pi)$, $c\in\C$ we find that 
\begin{equation}
\Im \sum_{k=1}^m \cc{a}_k a_{\sigma(k)}= \Im \sum_{k=1}^m \left( x_k x_{\sigma(k)}+x_k \e^{-i \phi} c+x_{\sigma(k)}\e^{i\phi} \cc{c}+\cc{c}c\right)=\Im \left(x \e^{-i \phi} c+x \e^{i\phi} \cc{c}\right)=0,
\end{equation}
where $x=\sum_{k=1}^m x_k=\sum_{k=1}^m x_{\sigma(k)}.$ So if all $a_k$ lie on a line in the complex plane, this term vanishes.
\end{remark}

\begin{remark}
Because $\Rn{m}\left(z_0+\tfrac{a_1}{\sqrt{\n}},\ldots,z_0+\tfrac{a_m}{\sqrt{\n}} \right)$ on $\pEt$ does neither depend on the position $z_0\in\pEt$ nor on the parameter of the ellipse, $\t$, we see that the correlation function can't  ``feel'' the curvature of the ellipse at the point $z_0$.
\end{remark}

%universality of correlation function, from sine kernel law, Hermite case
%$$\lim_{\md \rightarrow \infty} \frac{1}{\tilde{K}_\md(x,x)^m} R_m^n\left(x+\frac{a_1}{\tilde{K}_\md(x,x)},x+\frac{a_2}{\tilde{K}_\md(x,x)},\ldots,x+\frac{a_m}{\tilde{K}_\md(x,x)}\right)=\det\left( \frac{\sin \pi(a_i-a_j)}{\pi(a_i-a_j)} \right)_{i,j=1}^m$$

\subsection{Comparison to the Pure Gaussian Case}
In the limit $\t\rightarrow 0$ we find
\begin{align}
\lim_{\t\rightarrow 0} \gw(z_0)&=\tfrac{1}{\cc{z}_0}, &\lim_{\t\rightarrow 0} \gz(z_0)&=\tfrac{1}{z_0}, \nn \\
\intertext{and}
\lim_{\t\rightarrow 0} \hu(z_0)&=\tfrac{1}{z_0}-\cc{z}_0, & \lim_{\t\rightarrow 0} \huu(z_0)&=-\tfrac{1}{2 z_0^2}. \nn
\end{align}
Thus it is obvious that the asymptotics of identities \eqref{eq_id3} and \eqref{eq_id4}, which we have computed in Proposition \ref{prop_asym_wz}, converge to \eqref{eq_asym_id_kernelw_0} and \eqref{eq_asym_id_kernelz_0} of the pure Gaussian case $t=0$.

In Theorem \ref{th_km} we have found that the reproducing kernel $\nK$ is independent of $\t$ except for a nonrelevant phase. Taking the limit $\t\rightarrow 0$ of this phase and comparing the kernel with the result for the pure Gaussian case in Proposition \ref{prop_kernel_0}, we see that they are identical.

Further we have found that the correlation functions are independent of $\t$ and look exactly as for the pure Gaussian potential (see Proposition \ref{prop_corr_0}). 

As $\t$ goes to zero, $\pEt$ becomes the unit circle and therefore we can get all results for the special case $\t=0$ by the limit. This is not very surprising as the circle is not a degenerated case in view of the random matrix model. Thus we can remove the condition $\t>0$ in our results for the kernel and correlation functions and include the case $\t=0$.
%regarding distribution of eigenvalues.

%But without the calculations in chapter \ref{} for this case where the orthogonal polynomials degenerate to monic polynomials we would have to verify that we are allowed to take this limit. 

\section{Verification of the Approximations}\label{sec_verification}
\subsection{Error Term in the Asymptotics of Identities}
In this chapter we are going to validate the approximation we have made in the previous chapters. We will give estimations of all error terms.

%TO BE DONE Text welchen Beweis wir verifizieren. Verweis auf folgende Props...
\subsubsection{Plancherel-Rotach Asymptotics on $\Adelta$} 
We start to review Proposition \ref{prop_asym}.
\begin{prop}\label{prop_asymp_R}
Let $\delta>0$, $0<\delta_0<1$ and $\n_0\in\N$. Then there exists functions 
\begin{align}
\begin{array}{rcl}
\xi_k: \N\setminus\{1,\ldots,\n_0\} & \longrightarrow & \left[0,\tfrac{1}{\n_0+1}\right]\\
\n & \longmapsto & \xi_k(\n)\le\frac{1}{\n}
\end{array}, \qquad k=1,2,3,
\end{align}
so that for all $0<\t<1-\delta_0$, $z\in \Adelta$ and $\n>\n_0$, 
\begin{align}
\left.\left(\tfrac{\partial \HM}{\partial w}+\tfrac{\partial \HM}{\partial z}\right)\right\vert_{w=\cc{z}} &=\sqrt{\tfrac{\n}{2 \pi^3}}\frac{\gp(z) \e^{\n \f(z)}}{ \Re{\, \U{\F}(z)} } \Re\left( \U{\F}(z) \bigl(1+\Rtz\bigr)\right), \label{eq_rho_dx_R}\\
\intertext{and}
i \left.\left(\tfrac{\partial \HM}{\partial w}-\tfrac{\partial \HM}{\partial z}\right)\right\vert_{w=\cc{z}} &=\sqrt{\tfrac{\n}{2 \pi^3}} \frac{ \gm(z) \e^{\n \f(z)}}{\, \Im{\U{\F}(z)} } \Im\left( \U{\F}(z) \bigl(1+\Rtz\bigr)\right), \label{eq_rho_dy_R}
%evtl. andere Fehlerfunktion oder so modifizieren, dass Re/Im nicht aus \gpm ausgeklammert wird!
%\end{align}
\intertext{where $\f$ and $\gpm$ are as in Proposition \ref{prop_asym} and}
%\begin{align}
%\Rtz&=(1+\Rnull)(1+\Rone)\prod_{k=2}^3 (1+\Rtwo),\\
1+\Rtz&=\e^{\Rzero+\Rone+\Rtwo+\Rthree},
%\end{align}
\intertext{with} %$\begin{aligned}[t]
\Rzero&=\log \left( \tfrac{ Q_\n\left( \sqrt{2}\frac{\cc{z}}{\F} \right)}{\pi_\n \left( \sqrt{2}\frac{\cc{z}}{\F} \right) } \tfrac{ Q_{\n-1}\left( \sqrt{\frac{\n}{\n-1}} \sqrt{2}\frac{z}{\F} \right)}{\pi_{\n-1} \left(\sqrt{\frac{\n}{\n-1}} \sqrt{2}\frac{z}{\F} \right) } \right) ,\\ %PR Term
\Rone&=-\tfrac{1}{12} \xi_1(\n) ,\\  %Stirling Term
\Rtwo&=-\tfrac{\Fz\mp\sqrt{\Fz^2-1+\xi_2(\n)}}{\pm4 \n (1-\xi_2(\n))\sqrt{\Fz^2-1+\xi_2(\n)}} = -\tfrac{\F\U{\F}\left(\frac{z}{\sqrt{1-\xi_2(\n) } }  \right)  }{4\n\left(1-\xi_2(\n) \right) \T{\F}\left(\frac{z}{\sqrt{1-\xi_2(\n) } }  \right) },\\
\Rthree&= -\tfrac{\Fz(\Fz\mp\sqrt{\Fz^2-1+\xi_3(\n) }) }{4\n (1-\xi_3(\n)) (\Fz^2-1+\xi_3(\n))} =-\tfrac{z\F\U{\F}\left(\frac{z}{\sqrt{1-\xi_3(\n) } } \right)  }{4\n\left(1-\xi_3(\n) \right)^{3/2} \left(\T{\F}\left(\frac{z}{\sqrt{1-\xi_3(\n) } }  \right)\right)^2  }  .
% Evtl. alle \Fz -> durch F/z ersetzen und kürzen
%\end{aligned}$
\end{align}
\end{prop}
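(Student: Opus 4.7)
The plan is to revisit the computation that led to Proposition \ref{prop_asym} and keep every approximation in exact remainder form rather than absorbing errors into $o(1)$. All the ingredients are already present in the proof of Proposition \ref{prop_asym}; the work is to replace each Landau symbol by an explicit Lagrange-remainder expression of the prescribed shape.

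First I would write the identities \eqref{eq_id1} and \eqref{eq_id2} at $w=\cc{z}$ in the form
\begin{align*}
\left.\bigl(\tfrac{\partial \HM}{\partial w}\pm\tfrac{\partial \HM}{\partial z}\bigr)\right\vert_{w=\cc z}
=\mp\sqrt{\tfrac{1\mp t}{1\pm t}}\bigl(\op{\n}(\cc z)\op{\n-1}(z)\pm\op{\n-1}(\cc z)\op{\n}(z)\bigr)\e^{\n(-|z|^2+t\Re(z^2))},
\end{align*}
insert the exact representations \eqref{eq_op1} and \eqref{eq_op2} of $\op{\n}$ and $\op{\n-1}$, and separate the $z$-independent prefactor from the rest. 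The ratios $Q_\n/\pi_\n$ and $Q_{\n-1}/\pi_{\n-1}$ remain as a single exact factor whose logarithm is precisely the term $\Rzero$ that the statement isolates.

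Next, replace the Stirling approximation \eqref{eq_stirling} by its exact version $\n!=\sqrt{2\pi\n}\,\n^\n\e^{-\n}\e^{\theta_\n/(12\n)}$ with $\theta_\n\in(0,1)$; setting $\xi_1(\n)=\theta_\n/\n\in[0,\tfrac{1}{\n_0+1}]$ turns the Stirling contribution into $\Rone=-\tfrac{1}{12}\xi_1(\n)$ (the minus sign because $\sqrt{\n!}$ appears in the denominator of the orthonormal polynomials). Similarly, the expansion \eqref{eq_expan5} of $((\n-1)/\n)^{\n-1}=\e^{(\n-1)\log(1-1/\n)}$ is already exact up to a term that combines with $\Rone$, and needs no new symbol.

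The core of the argument is to rewrite \eqref{eq_expan1} and \eqref{eq_expan2} with exact Lagrange remainders. Treating $\tfrac{\n}{2}\bigl(\sqrt{1-1/\n}\,\U{\F}(z(1-1/\n)^{-1/2})\bigr)^2$ as a function of the parameter $s=1/\n$ around $s=0$, Taylor's theorem gives an exact remainder at an intermediate value $\xi_2(\n)\in(0,1/\n)$; a direct calculation shows that this remainder equals $\Rtwo$ of the statement, expressed through $\U{\F}$ and $\T{\F}$ evaluated at the shifted point $z(1-\xi_2(\n))^{-1/2}$. The same procedure applied to the exponent coming from $\bigl(\sqrt{1-1/\n}\,\U{\F}(z(1-1/\n)^{-1/2})\bigr)^{-(\n-1)}$ (using $\log(1-s)$ with remainder) produces $\Rthree$ with the point $\xi_3(\n)\in(0,1/\n)$. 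The expansions \eqref{eq_expan3} and \eqref{eq_expan4b} contribute only lower-order corrections that can be absorbed into the multiplicative constant, but keeping them as exact remainders along the same lines shows they are already captured by $\Rtwo$ and $\Rthree$ after one combines the two Hermite asymptotic expressions.

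Finally, collecting the four exact remainders into a single exponential and using Lemma \ref{lemma_gwz}-type computations to identify the prefactor with $\tfrac{1}{\Re\U{\F}(z)}\gp(z)\Re\U{\F}(z)(1+\Rtz)$ and $\tfrac{1}{\Im\U{\F}(z)}\gm(z)\Im\U{\F}(z)(1+\Rtz)$ respectively yields \eqref{eq_rho_dx_R} and \eqref{eq_rho_dy_R}; the sign manipulations follow the same pattern that led to \eqref{eq_asymp3} and \eqref{eq_asymp4}. The main obstacle will be the bookkeeping on $A_{\delta,F}$: I must verify that the intermediate points $\xi_2(\n),\xi_3(\n)$ can be chosen so that $z(1-\xi_k(\n))^{-1/2}$ stays in a domain where $\U{\F}$ and $\T{\F}$ are holomorphic and the Plancherel-Rotach asymptotics remain valid, which is why the condition $\n>\n_0$ and the hypothesis $\t<1-\delta_0$ enter. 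Once this is ensured the functions $\xi_k(\n)$ take values in $[0,\tfrac{1}{\n_0+1}]$ as required.
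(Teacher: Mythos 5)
Your overall strategy—revisit the proof of Proposition \ref{prop_asym} and replace each asymptotic approximation by an exact remainder—is exactly the right idea, and your handling of $\Rzero$ (the Plancherel--Rotach ratio kept as an exact logarithm) and $\Rone$ (Stirling in exact form with $\xi_1(\n)=\theta_\n/\n$) is correct. However, there is a genuine gap in the bookkeeping that produces $\Rtwo$ and $\Rthree$, and one of your stated reductions cannot work.

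You claim that the factor $\bigl(\tfrac{\n-1}{\n}\bigr)^{\n-1}=\e^{(\n-1)\log(1-1/\n)}$ "is already exact up to a term that combines with $\Rone$, and needs no new symbol." This is false. Expanding, $(\n-1)\log(1-1/\n)=-1+\tfrac{1}{2\n}+\lO(\n^{-2})$, so the leftover beyond $\e^{-1}$ is $\e^{\frac{1}{2\n}+\lO(\n^{-2})}$, a strictly \emph{positive} exponent of order $\tfrac{1}{2\n}$. But $\Rone=-\tfrac{1}{12}\xi_1(\n)$ with $\xi_1(\n)\in[0,\tfrac1\n]$ is constrained to lie in $[-\tfrac{1}{12\n},0]$: it is nonpositive and smaller in magnitude. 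There is no room in $\Rone$ for a positive contribution of size $\approx\tfrac{1}{2\n}$, so the two cannot be merged. The same structural problem undermines your claim that the Lagrange remainder of $\tfrac{\n}{2}\bigl(\sqrt{1-1/\n}\,\U{\F}(z(1-1/\n)^{-1/2})\bigr)^{2}$ alone equals $\Rtwo$: the stated $\Rtwo=-\tfrac{\F\U{\F}(\cdot)}{4\n(1-\xi_2)\T{\F}(\cdot)}$ arises as the second-order Lagrange remainder of the \emph{combined} function $\ftwo(\alpha)=(1-\alpha)\log(1-\alpha)+\tfrac12\bigl(\sqrt{1-\alpha}\,\U{\F}(\tfrac{z}{\sqrt{1-\alpha}})\bigr)^{2}-(1-\alpha)\log\bigl(\sqrt{1-\alpha}\,\U{\F}(\tfrac{z}{\sqrt{1-\alpha}})\bigr)$, whose first derivative is $-\log(1-\alpha)+\log\bigl(\sqrt{1-\alpha}\,\U{\F}(\cdot)\bigr)$ (the logarithms are essential) and whose second derivative is $-\tfrac{\F\U{\F}(\cdot)}{2(1-\alpha)\T{\F}(\cdot)}$. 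If you Taylor-expand only the $\tfrac12(\cdot)^2$ piece, the derivatives are different and the resulting remainder will not reduce to the expression demanded by the proposition. The paper packages all three of \eqref{eq_expan1}, \eqref{eq_expan2}, \eqref{eq_expan5} into the single function $\ftwo(\alpha)$ (and the $\W{\F}$ expansion into a single $\fthree{}(\alpha)$) precisely so that a single Lagrange intermediate point $\xi_2(\n)$ (resp.\ $\xi_3(\n)$) produces the required remainder form. You need to do the same: there is no valid alternative decomposition that lands in the prescribed $\Rone,\Rtwo,\Rthree$.

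A smaller point: \eqref{eq_expan3}--\eqref{eq_expan4b} are not ``absorbed into a multiplicative constant''---they go into $\Rthree$ (again as a single Lagrange remainder of the combined $\fthree{}(\alpha)=\log\W{\F}(z(1-\alpha)^{-1/2})$), and the reference to Lemma \ref{lemma_gwz} for the final prefactor assembly is off; that step follows the lines of \eqref{eq_asymp1}--\eqref{eq_asymp2} in the proof of Proposition \ref{prop_asym}, not a statement about $\pEt$.
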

\begin{remark}
Note that $$\frac{\gp(z)}{\Re{\,\U{\F}(z) }  } \qquad \text{and} \qquad \frac{\gm(z)}{\Im{\,\U{\F(z)}}  }$$ are well-defined even where $\Re{\,\U{\F}(z)}=0$ or $\Im{\,\U{\F}(z)}=0$ since $\gpm(z)$ contain the factor $\Re{\,\U{\F}(z)}$ or $\Im{\,\U{\F}(z)}$ respectively.
\end{remark}

\begin{remark}
Like in Proposition \ref{prop_asym}, we can write \eqref{eq_rho_dx_R} with a relative error 
\begin{align}
\left.\left(\tfrac{\partial \HM}{\partial w}+\tfrac{\partial \HM}{\partial z}\right)\right\vert_{w=\cc{z}} &=\sqrt{\tfrac{\n}{2 \pi^3}} \gp(z) \e^{\n \f(z)}\left(1+ \tfrac{ \Re\left( \U{\F}(z) \Rtz\right)}{\Re{\,\U{\F}(z)}}\right), \\
\intertext{if $\Re{\,\U{\F}(z)} \neq 0$ and analogously for \eqref{eq_rho_dy_R} }
i\left.\left(\tfrac{\partial \HM}{\partial w}+\tfrac{\partial \HM}{\partial z}\right)\right\vert_{w=\cc{z}} &=\sqrt{\tfrac{\n}{2 \pi^3}} \gm(z) \e^{\n \f(z)}\left(1+ \tfrac{ \Im\left(\,\U{\F}(z)\Rtz\right)}{\Im{\, \U{\F}(z)}}\right),
\end{align} 
if $\Im{\,\U{\F}(z)} \neq 0$.
\end{remark}

\begin{proof}
We are going back to the proof of Proposition \ref{prop_asym} and we will reexamine all the approximations and expansions we did in that proof. %Here we only consider $z\in\Cp$. %TO BE DONE!  Könnte man auch noch Weglassen, falls wir alles mit U_F und W_F schreiben und auch lemma_R2_R3 entsprechend angepasst haben, so dass wir es für alle z\in \C formuliert haben.
We start with the error terms coming from the Plancherel-Rotach asymptotics \eqref{eq_pi_o}. We define 
\begin{equation}\label{eq_proof_R0}
\Rzero=\log \left( \tfrac{ Q_\n\left( \sqrt{2}\frac{\cc{z}}{\F} \right)}{\pi_\n \left( \sqrt{2}\frac{\cc{z}}{\F} \right) } \tfrac{ Q_{\n-1}\left( \sqrt{\frac{\n}{\n-1}} \sqrt{2}\frac{z}{\F} \right)}{\pi_{\n-1} \left(\sqrt{\frac{\n}{\n-1}} \sqrt{2}\frac{z}{\F} \right) } \right).
\end{equation}
The logarithm is well defined because the zeros of $Q_\n$ and $Q_{\n-1}$ are on the interval $[-\sqrt{2},\sqrt{2}]$ for all $\n\in\N$. Note that under complex conjugation it has the symmetry $\Rzeroz{\cc{z}}=\cc{\Rzero}$ because $Q_\n$, $Q_{\n-1}$, $\pi_\n$ and $\pi_{\n-1}$ have the same symmetry. 

We have used the Plancherel-Rotach asymptotics in \eqref{eq_asymp1} when we applied \eqref{eq_pr_asymp}. Now we will use $\e^{\Rzero}$ instead of the $1+o(1)$ terms in \eqref{eq_pr_asymp}. 

The Plancherel-Rotach asymptotics can be applied for all $\tilde{\delta}>0$ when both $\sqrt{2}\tfrac{z}{\F}$ and $\sqrt{\tfrac{\n}{\n-1}} \sqrt{2}\tfrac{z}{\F}$ lie in $\Xdelta{A}{\tilde{\delta}}{\sqrt{2}}$. This is fulfilled by the condition $z\in \Adelta$ of the proposition when we set $\tilde{\delta}=\sqrt{2}\tfrac{\delta}{\F_0}$, with $\F_0=\sqrt{\tfrac{4(1-\delta_0)}{2\delta_0-\delta_0^2}}$ the largest value of $\F$, since
\begin{align}
z\in \Adelta \iff \sqrt{2}\tfrac{z}{\F}\in \Xdelta{A}{\frac{\sqrt{2}\delta}{\F}}{\sqrt{2}}\subset \Xdelta{A}{\frac{\sqrt{2}\delta}{\F_0}}{\sqrt{2}}, 
\end{align}
and also $\sqrt{\tfrac{\n}{\n-1}} z\in \Adelta$. This is enough for our purpose and we don't discuss this error term here any further. More on it can be found in the references we mentioned in Section \ref{subsection_pr_asym_outside}.

%\item[Stirling Approximation]
%\paragraph*{Error from Stirling Approximation}
%evtl. auch in eigenes Lemma ausgliedern / oder auch nicht, da sehr kurz
Next we have a look at \eqref{eq_stirling}. For the Stirling approximation we know that 
\begin{align}
\frac{\sqrt{2\pi \n} \n^\n \e^{-\n} }{\n!}&=\e^{-\frac{1}{12} \xi_1(\n)}=\e^{\Rone}, \label{eq_proof_R1}
\intertext{where $\xi_1(\n)\in (\tfrac{12}{12\n+1},\tfrac{1}{\n})\subset (0,\tfrac{1}{\n})$. We have defined the error term}
\Rone&=-\tfrac{1}{12} \xi_1(\n) < 0. 
\end{align}

%\item[Expansions]
At last we care for the error terms in the expansions \eqref{eq_expan1}--\eqref{eq_expan4b} and \eqref{eq_expan5}. We know from Lemma \ref{lemma_R2_R3_new} that for all $z\in \C \setminus [-\F,\F]$
\begin{align}
\e^{\n\ftwo\left(\tfrac{1}{\n}\right)}&=\left(\tfrac{\n-1}{\n}\right)^{\n-1} \e^{\frac{\n}{2} \Bigl( \sqrt{1-\frac{1}{\n}} \U{\F}\left(  z \cramped[\scriptscriptstyle]{\left(1-\frac{1}{\n}\right)^{-1/2}} \right) \Bigr)^2 }  \left( \sqrt{1-\tfrac{1}{\n}}  \U{\F}\left(  z \left(1-\tfrac{1}{\n}\right)^{-1/2} \right) \right)^{-(\n-1)} \nn\\
&=\e^{\frac{\n}{2}\left(\U{\F}(z)\right)^2} \left(\U{\F}(z) \right)^{-(\n-1)} \e^{\Rtwo}, \label{eq_proof_R2}
%\end{align}
\shortintertext{and}
%\begin{align}
\e^{\fthree{}\left(\tfrac{1}{\n} \right) } %&=\tfrac{ \sqrt{\Fz+\sqrt{1-\frac{1}{\n}}} +\sqrt{\Fz-\sqrt{1-\frac{1}{\n}}} }{(\Fz^2-1+\frac{1}{\n})^{1/4}}\nn\\
&=\W{\F}\left(  z \left(1-\tfrac{1}{\n}\right)^{-1/2} \right) =\W{\F}(z) \e^{\Rthree}, \label{eq_proof_R3} %\\
\end{align}
where
\begin{align}
\Rtwo %&= -\tfrac{\Fz-\sqrt{\Fz^2-1+\xi_2(\n)}}{4 \n (1-\xi_2(\n))\sqrt{\Fz^2-1+\xi_2(\n)}} ,\\
&= -\tfrac{\F\U{\F}\left(\frac{z}{\sqrt{1-\xi_2(\n) } }  \right)  }{4\n\left(1-\xi_2(\n) \right) \T{\F}\left(\frac{z}{\sqrt{1-\xi_2(\n) } }  \right) },\\
\shortintertext{and}
\Rthree %&= -\tfrac{\Fz(\Fz-\sqrt{\Fz^2-1+\xi_3(\n) }) }{4\n (1-\xi_3(\n)) (\Fz^2-1+\xi_3(\n))},
&=-\tfrac{z\F\U{\F}\left(\frac{z}{\sqrt{1-\xi_3(\n) } } \right)  }{4\n\left(1-\xi_3(\n) \right)^{3/2} \left(\T{\F}\left(\frac{z}{\sqrt{1-\xi_3(\n) } }  \right)\right)^2  } .
\end{align}
with $\xi_2(\n),\xi_3(\n)\in \left[0,\tfrac{1}{\n}\right].$

We see that $\e^{\n\ftwo \left(\frac{1}{\n}\right)}$ represents the terms for which we have used expansions \eqref{eq_expan1}, \eqref{eq_expan2} and \eqref{eq_expan5} and $\e^{\fthree{}\left(\frac{1}{\n}\right)}$ those of \eqref{eq_expan4b}. 

Note that like $\Rzero$ (and $\Rone$ trivially) all the error terms have the symmetry 
\begin{equation}
\Rtwoz{\cc{z}}=\cc{\Rtwo} \qquad \text{and}\qquad \Rthreez{\cc{z}}=\cc{\Rthree}.  \nn
\end{equation}

With \eqref{eq_proof_R2} and \eqref{eq_proof_R3} we can write $\op{\n-1}$ in \eqref{eq_op2} as
\begin{align} 
\op{\n-1}(z)&=\tfrac{Q_{\n-1}\left(\sqrt{\frac{\n}{\n-1}}\sqrt{2}\Fz\right)}{\pi_{\n-1}\left(\sqrt{\frac{\n}{\n-1}}\sqrt{2}\Fz\right)} \tfrac{\n^{\frac{\n+1}{2}} \t^{\frac{\n-1}{2}} \left(1-\t^2\right)^{1/4} \e^{\n \ftwo\left(\frac{1}{\n}\right)+\fthree{}\left(\frac{1}{\n}\right) }  }{2\sqrt{\pi}\sqrt{\n !}}\nn\\
&=\tfrac{Q_{\n-1}\left(\sqrt{\frac{\n}{\n-1}}\sqrt{2}\Fz\right)}{\pi_{\n-1}\left(\sqrt{\frac{\n}{\n-1}}\sqrt{2}\Fz\right)} \tfrac{\n^{\frac{\n+1}{2}}}{2\sqrt{\pi}\sqrt{\n !}}  \e^{\frac{\n}{2}\left(\U{\F}(z)\right)^2} \left(\U{\F}(z) \right)^{-(\n-1)}\nn\\
& \phantom{=} \quad \cdot \t^{\frac{\n-1}{2}} \left(1-\t^2\right)^{1/4} \W{\F}(z) \e^{\Rtwo+\Rthree}.\label{eq_op2_error}
\end{align}

Now we can put \eqref{eq_op1} and \eqref{eq_op2_error} in the right-hand side of identities \eqref{eq_id1} and \eqref{eq_id2} evaluated at $w=\cc{z}$ and use \eqref{eq_proof_R0} and \eqref{eq_proof_R1}. So we get as in \eqref{eq_asymp3} and \eqref{eq_asymp4}, but this time with an explicit error term, 
\begin{align}
\left.\left(\frac{\partial \HM}{\partial w}\pm\frac{\partial \HM}{\partial z}\right)\right\vert_{w=\cc{z}}&= \frac{\sqrt{\n} \e^{\n \f(z)}}{\sqrt{2\pi^3}}\left\{ \begin{array}{c}\frac{\gp(z) \Re\left(\U{\F}(z)\left(1+\Rtz \right)\right)}{\Re{\, \U{\F}(z)}} \\ \frac{-i \gm(z) \Im\left(\U{\F}(z)\left(1+\Rtz\right)\right)}{\Im{\,\U{\F}(z)}} \end{array}\right. ,
\end{align}
where 
\begin{equation*}
1+\Rtz=\e^{\Rzero+\Rone+\Rtwo+\Rthree}.
\end{equation*}
\end{proof}

\begin{defin}
For $\delta>0$ we define
\begin{align}
\C_{x,\delta}&=\{z\in\C \vert \vert \Re\, z\vert > \delta\},\qquad \text{and}\qquad \C_{y,\delta}=\{z\in\C \vert \vert \Im\, z\vert > \delta\}.
\end{align}
\end{defin}

\begin{prop}\label{prop_estim_R}
Let $\epsilon>0$, $\delta>0$, $0<\delta_0<1$, $\delta_1\ge\delta$ and $R>0$. Then there exists $\n_0=\n_0(\epsilon,\delta,\delta_0,R)\in\N$ and $C=C(\delta,\delta_0)>0$    %$\in\R$? % evtl. nur von delta_0 abhängig?
 so that 
%for all $\n\ge \n_0$
% Aussagen evtl. mit \F oder \F^2 (bzw. \t) multiplizieren
% Aussage evtl. für z\in \Omega kompakt einschränken

\begin{center}
\begin{minipage}{130mm}
\begin{enumerate}[\normalfont (i)]
\item $\displaystyle \sup_{\substack{0<\t<1-\delta_0 \\\n>\n_0 }} \sup_{\substack{z\in \Adelta \\ \xi_1,\xi_2,\xi_3\in[0,\frac{1}{\n}] } } \left\vert \Rtz \right\vert <\epsilon $, 
\item $\displaystyle \sup_{\substack{0<\t<1-\delta_0 \\\n>\n_0 }} \sup_{\substack{z\in \Adelta \cap B_R(0)\\ \vert \Re\, z\vert \ge\delta_1  \\ \xi_1,\xi_2,\xi_3\in[0,\frac{1}{\n}] }} \left\vert \tfrac{\Re\left( \U{\F}(z) \Rtz \right)}{\Re{\, \U{\F}(z)}} \right\vert <\epsilon$, 
\item $\displaystyle \sup_{\substack{0<\t<1-\delta_0 \\\n>\n_0 }} \sup_{ \substack{z\in B_R(0)\\ \vert \Im\, z\vert \ge\delta_1  \\ \xi_1,\xi_2,\xi_3\in[0,\frac{1}{\n}]} } \left\vert \tfrac{\Im\left(\U{\F}(z) \Rtz \right)}{\Im{\, \U{\F}(z)}} \right\vert <\epsilon,$
\item $\displaystyle \sup_{\substack{0<\t<1-\delta_0 \\\n>\n_0 }} \sup_{\substack{z\in \Adelta \\ \xi_1,\xi_2,\xi_3\in[0,\frac{1}{\n}] } } \left\vert \tfrac{\vert \W{\F}(z)\vert^2 }{\F} \sqrt{\tfrac{1\mp\t}{1\pm\t}} \Re\left(\U{\F}(z) \bigl(1+\Rtz\bigr)\right) \right\vert < C,$
\end{enumerate}
\end{minipage}
\end{center}
where $\Rtz$ is as in Proposition \ref{prop_asymp_R}. %Rtz ohne Argumente einführen
%Glm. Abschätzung des Errorterms / Evtl. in einzelne Lemmas Unterteilen
% evtl. mit Faktor F^m, m=-2,-1,0,1,2?

%2 Varianten: a) mit einem Streifen von $\delta$ oder $\F\delta$ um reelle bzw. imaginäre Achse entfernt b) auf ganz $\Adelta$ ohne Ausklammern des Re/Im Terms.
\end{prop}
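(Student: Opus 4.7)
The plan is to treat the four pieces of the exponent $R_0,R_1,R_2,R_3$ separately and then combine them. The piece $R_1=-\xi_1(\n)/12$ is trivially $O(1/\n)$. The piece $R_0$ is controlled by the uniform Plancherel--Rotach asymptotics recalled in Section \ref{subsection_pr_asym_outside}: since the condition $0<\t<1-\delta_0$ keeps $\F$ in a compact subinterval $[\F(\delta_0),\F_0]$ of $(0,\infty)$, the set $\Adelta$ maps into a fixed region $\Xdelta{A}{\tilde\delta}{\sqrt 2}$ with $\tilde\delta$ depending only on $\delta,\delta_0$; for $\n$ large enough, $\sqrt{\tfrac{\n}{\n-1}}\sqrt 2 z/\F$ also lies in such a region, and so $Q_m(\cdot)/\pi_m(\cdot)\to 1$ uniformly, giving $|R_0|=o(1)$. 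Finally, $R_2$ and $R_3$ carry an explicit $1/\n$ factor multiplied by $\U{\F}/\T{\F}$ or $z\U{\F}/\T{\F}^2$ evaluated at the shifted point $z/\sqrt{1-\xi}$; for $\xi\in[0,1/\n_0]$ this point lies in, say, $A_{\delta/2,\F}$, on which these ratios are uniformly bounded (near $\pm\F$ the $\delta$-separation keeps $|\T{\F}|$ bounded below while $|\U{\F}|\le 1$; at infinity, $\U{\F}(z)\sim \F/(2z)$ and $\T{\F}(z)\sim z$, so the ratios decay). Summing yields $|R_0+R_1+R_2+R_3|<\epsilon'$ and then (i) via $|e^x-1|\le 2|x|$ for small~$x$.

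For (ii) and (iii) I would use the decomposition
\begin{equation*}
\Re(\U{\F}(z) R_{\xi,\n}) = \Re\,\U{\F}(z)\,\Re R_{\xi,\n} - \Im\,\U{\F}(z)\,\Im R_{\xi,\n},
\end{equation*}
so that the quotient in (ii) equals $\Re R_{\xi,\n}-(\Im\,\U{\F}/\Re\,\U{\F})\,\Im R_{\xi,\n}$, whose modulus is at most $|R_{\xi,\n}|(1+|\Im\,\U{\F}/\Re\,\U{\F}|)$. By (i) the first factor is small, so it suffices to bound the second. Using the elliptic-coordinate description of Proposition \ref{prop_U}, write $z=a\cos\phi+ib\sin\phi$ and $\U{\F}(z)=r_\F(b)(\cos\phi-i\sin\phi)$; then $|\Im\,\U{\F}/\Re\,\U{\F}|=|\tan\phi|$. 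Since $|\Re z|=a|\cos\phi|\ge\delta_1$ and $a\le C(R)$ on $B_R(0)$, $|\cos\phi|$ is bounded below by $\delta_1/C(R)$, so $|\tan\phi|$ is bounded by a constant depending only on $\delta,\delta_0,R,\delta_1$. Part (iii) is obtained symmetrically, with the roles of $\cos\phi$ and $\sin\phi$ swapped via the analogous identity for $\Im(\U{\F}R_{\xi,\n})$ and using $|\Im z|=b|\sin\phi|\ge\delta_1$.

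For (iv), expand $\Re(\U{\F}(1+R_{\xi,\n}))=\Re\,\U{\F}+\Re(\U{\F}R_{\xi,\n})$. The first resulting term, $(|\W{\F}|^2/\F)\sqrt{(1\mp\t)/(1\pm\t)}\,\Re\,\U{\F}(z)$, is (up to constants) exactly $g_\pm(z)$ as defined in \eqref{eq_gp}--\eqref{eq_gm}; this is continuous on $\C\setminus[-\F,\F]$, is bounded near $\pm\F$ thanks to the $\delta$-separation inherent in $\Adelta$, decays as $|z|\to\infty$, and depends continuously on $\t\in(0,1-\delta_0]$, hence is globally bounded by some $C(\delta,\delta_0)$. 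The second term is bounded by $(|\W{\F}|^2|\U{\F}|/\F)\sqrt{(1\mp\t)/(1\pm\t)}\,|R_{\xi,\n}|$, and the same kind of global analysis (with $\W{\F}\to 2$, $\U{\F}=O(1/z)$ at infinity and the $\delta$-separation near $\pm\F$) shows $|\W{\F}|^2|\U{\F}|/\F$ is uniformly bounded on $\Adelta$ for $\t\in(0,1-\delta_0)$, while $|R_{\xi,\n}|<\epsilon$ by~(i). The main obstacle is to make the Plancherel--Rotach error uniform simultaneously in $\t$, in $z\in\Adelta$ (all the way to $|z|=\infty$), and in the shifted arguments $\sqrt{\n/(\n-1)}\sqrt 2 z/\F$; this is handled by the observation that $\F$ lives in a fixed compact interval, so a single $\tilde\delta>0$ works and the standard uniform statement of Plancherel--Rotach applies.
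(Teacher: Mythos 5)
Your argument is correct and follows essentially the same path as the paper: (i) split $\Rtz$ into the four exponential pieces, control $\Rzerotz{}$ via uniform Plancherel--Rotach, control $\Rone$ trivially and $\Rtwotz{}$, $\Rthreetz{}$ via the explicit $1/\n$ factor together with uniform bounds on the elliptic-coordinate ratios (the paper invokes Lemmas \ref{lemma_estim_U_T}, \ref{lemma_estim_T2}, \ref{lemma_estim_b} where you argue in words); (ii)--(iii) write $z$ in elliptic coordinates and exploit the lower bound on $|\cos\phi|$ (resp.\ $|\sin\phi|$) coming from $|\Re\,z|\ge\delta_1$ (resp.\ $|\Im\,z|\ge\delta_1$), together with $|z|<R$; (iv) bound $|\W{\F}|^2$, $|\U{\F}|/\F$ uniformly on $\Adelta$ and use $|\Rtz|<1$. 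Two small imprecisions worth fixing: first, $\F$ does \emph{not} range over a compact subinterval of $(0,\infty)$ as you claim (it tends to $0$ with $\t$); however, only the \emph{upper} bound $\F_0$ is used to fix $\tilde\delta=\sqrt{2}\delta/\F_0$, since $\Xdelta{A}{\sqrt{2}\delta/\F}{\sqrt{2}}\subset\Xdelta{A}{\sqrt{2}\delta/\F_0}{\sqrt{2}}$ for $\F\le\F_0$, so the conclusion stands even though the stated reason is wrong. Second, your identification of $(|\W{\F}|^2/\F)\sqrt{(1\mp\t)/(1\pm\t)}\,\Re\,\U{\F}$ with $g_\pm$ is only right for the upper sign: $g_-$ involves $\Im\,\U{\F}$, not $\Re\,\U{\F}$. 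This does not affect the boundedness argument (which only needs $|\W{\F}|^2|\U{\F}|/\F\cdot\sqrt{(1\mp\t)/(1\pm\t)}$ to be bounded on $\Adelta$, exactly as the paper does), but the claimed identity should be dropped or corrected. Finally, as with the paper, the continuity-at-$\t\to 0$ part of the global-boundedness argument in (iv) should be made explicit (e.g.\ via the $\t\to 0$ limits of Lemma \ref{lemma_lim_f_g}) rather than asserted, since $(0,1-\delta_0]$ is not compact.
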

%\remark{$\Adelta$ hängt zwar von $\F$ und somit von $\t$ ab, aber das kürzt sich bei den meisten Termen raus (alle g, R hängen jeweils von z/F ab) }
\begin{proof}
Without loss of generality $\epsilon<1$. The largest value of $\F$ is $\F_0=\sqrt{\tfrac{4(1-\delta_0)}{2\delta_0-\delta_0^2}}$. 

We start with the first claim for 
\begin{equation}
\Rtz=\e^{\Rzero+\Rone+\Rtwo+\Rthree}-1.
\end{equation}
Let $\epsilon_1=\tfrac{\epsilon}{8}$. When $z\in \Adelta$ it follows that $\tfrac{\sqrt{2}\cc{z}}{\F}, \sqrt{\tfrac{\n}{\n-1}}\tfrac{\sqrt{2}z}{\F} \in \Xdelta{A}{\frac{\sqrt{2}\delta}{\F}}{\sqrt{2}}\subset \Xdelta{A}{\frac{\sqrt{2}\delta}{\F_0}}{\sqrt{2}}=\Xdelta{A}{\tilde{\delta}}{\sqrt{2}}$ where we have defined $\tilde{\delta}=\tfrac{\sqrt{2}\delta}{\F_0}>0$. From Section \ref{subsection_pr_asym_outside} we know that for any $\tilde{\delta}>0$ the Plancherel-Rotach asymptotics \eqref{eq_pi_o} holds, i.e.\ $\tfrac{Q_{\n}(z)}{\pi_{\n}(z)}$ converges to one as $\n\rightarrow\infty$ uniformly for all $z\in \Xdelta{A}{\tilde{\delta}}{\sqrt{2}}$. Therefore
\begin{align}
\lim_{\n \rightarrow\infty} \tfrac{ Q_\n\left( \sqrt{2}\frac{\cc{z}}{\F} \right)}{\pi_\n \left( \sqrt{2}\frac{\cc{z}}{\F} \right) }=1, \qquad \text{and} \qquad \lim_{\n \rightarrow\infty} \tfrac{ Q_{\n-1}\left( \sqrt{\frac{\n}{\n-1}} \sqrt{2}\frac{z}{\F} \right)}{\pi_{\n-1} \left(\sqrt{\frac{\n}{\n-1}} \sqrt{2}\frac{z}{\F} \right) }=1, \nn
\end{align} 
uniformly for all $0<\F<\F_0$ and $z\in \Adelta$. Since the logarithm is continuous at $1$, it follows that for all $\n$ big enough
\begin{align}\label{eq_prop_estim_R_1}
\vert \Rzero \vert =\left \vert \log \left( \tfrac{ Q_\n\left( \sqrt{2}\frac{\cc{z}}{\F} \right)}{\pi_\n \left( \sqrt{2}\frac{\cc{z}}{\F} \right) } \tfrac{ Q_{\n-1}\left( \sqrt{\frac{\n}{\n-1}} \sqrt{2}\frac{z}{\F} \right)}{\pi_{\n-1} \left(\sqrt{\frac{\n}{\n-1}} \sqrt{2}\frac{z}{\F} \right) } \right) \right \vert <\epsilon_1, \qquad \forall 0<\t<1-\delta_0, z\in \Adelta.
\end{align}

$ \vert \Rone \vert =\tfrac{1}{12} \xi_1(\n) < \epsilon_1$ if $\xi_1(\n)<12\epsilon_1$, which is fulfilled if $\n>\tfrac{1}{12\epsilon_1}$ since $\xi_1(\n) \in [0,\tfrac{1}{\n}]$.

When $\epsilon_1>0$ and $\xi_2,\xi_3\in[0,\tfrac{1}{2}]$, we know from Lemma \ref{lemma_estim_R2_R3} that for $\n$ big enough
\begin{align}\label{eq_prop_estim_R_3}
\sup_{0<\t<1-\delta_0} \sup_{\substack{z\in \Adelta \\ \xi_2(\n)\in [0,\frac{1}{2}] } } \vert \Rtwo \vert <\epsilon_1, \quad \text{and}\quad \sup_{0<\t<1-\delta_0} \sup_{\substack{z\in \Adelta \\ \xi_3(\n)\in [0,\frac{1}{2}] } } \vert \Rthree \vert <\epsilon_1.
\end{align}

So we can choose $\n_0$ big enough so that for all $\n \ge\n_0$
\begin{align}
\vert \Rzero+\Rone+\Rtwo+\Rthree \vert &<4\epsilon_1=\tfrac{\epsilon}{2},\\
\intertext{and therefore}
\left\vert \e^{\Rzero+\Rone+\Rtwo+\Rthree}-1\right\vert &<\epsilon,
\end{align}
uniformly for all $0<\t<1-\delta_0$, $z\in \Adelta$ and $\xi_1(\n),\xi_2(\n),\xi_3(\n)\in [0,\frac{1}{\n}]$. 

To proof the second and third claim we use the parametrization $z=a \cos\phi+i b \sin\phi$ with $a^2=b^2+F^2$. Since $\vert z\vert<R$ it follows that $a<R$ and $b<R$. Let $\epsilon_2=\tfrac{\epsilon \delta_1}{R}>0$. From Proposition \ref{prop_U} we know that $\vert \U{\F}(z)\vert= \tfrac{a-b}{\F}$ and from Lemma \ref{lemma_estim_U2} that $\vert \Re\, \U{\F}(z)\vert > \tfrac{\delta_1}{\F}\left(1-\tfrac{b}{a}\right)$ if $\vert\Re\, z\vert>\delta_1$ and $\vert \Im\, \U{\F}(z)\vert > \tfrac{\delta_1}{\F}(\tfrac{a}{b}-1)$ if $\vert\Im\, z\vert>\delta_1$. Using the first claim, there exists $\n_0$ so that $\left\vert \Rtz \right\vert <\epsilon_2$ for all $0<\t<1-\delta_0$, $z\in \Adelta$ and $\xi_1(\n),\xi_2(\n),\xi_3(\n)\in [0,\frac{1}{\n}]$. Thus we find that %for all $\F\in(0,\infty)$ and $z\in B_R(0) \cap \C\setminus [-\F,\F]$
\begin{align}
\left\vert \tfrac{\Re\left(\U{\F}(z) \Rtz \right)}{\Re\, \U{\F}(z)} \right\vert & \le  \tfrac{\left\vert \U{\F}(z) \Rtz \right\vert }{\left\vert\Re\, \U{\F}(z)\right\vert}\le \tfrac{\epsilon_2 \frac{a-b}{\F} }{\frac{\delta_1}{\F}\left(1-\tfrac{b}{a}\right)}=\tfrac{\epsilon_2 a}{\delta_1}< \tfrac{\epsilon_2 R}{\delta_1} =\epsilon,
\intertext{and}
\left\vert \tfrac{\Im\left(\U{\F}(z) \Rtz \right)}{\Im\, \U{\F}(z) } \right\vert & \le  \tfrac{\left\vert \U{\F}(z) \Rtz \right\vert }{\left\vert\Im\, \U{\F}(z)\right\vert}\le \tfrac{\epsilon_2 \frac{a-b}{\F} }{ \tfrac{\delta_1}{\F}(\tfrac{a}{b}-1) }=\tfrac{\epsilon_2 b}{\delta_1}< \tfrac{\epsilon_2 R}{\delta_1} =\epsilon,
\end{align}
uniformly for all $0<\t<1-\delta_0$, $\xi_1(\n),\xi_2(\n),\xi_3(\n)\in [0,\frac{1}{\n}]$ and $z \in B_R(0) \cap \Adelta\cap \C_{x,\delta_1}$ or $z \in B_R(0) \cap \Adelta \cap \C_{y,\delta_1}$ respectively.

The last claim follows if we can find $\n_0\in\N$ and $C\in\R$ so that for all $\n>\n_0$
\begin{align}
\sup_{\substack{0<\t<1-\delta_0 }} \sup_{\substack{z\in \Adelta  } } \left( \vert \W{\F}(z)\vert^2  \sqrt{\tfrac{1\mp\t}{1\pm\t}} \tfrac{\vert \U{\F}(z) \vert}{\F} \sup_{ \xi_1,\xi_2,\xi_3\in[0,\frac{1}{\n}]}  \left\vert1+\Rtz\right\vert\right) < C.
\end{align}
This is fulfilled for $C=\tfrac{4}{\delta \delta_0}\left(1+\sqrt{1+\tfrac{\F_0^2}{\delta^2} } \right)$ since we know from the first claim that there exists $\n_0$ so that
\begin{align}
\sup_{\substack{0<\t<1-\delta_0 \\\n>\n_0 }} \sup_{\substack{z\in \Adelta \\ \xi_1,\xi_2,\xi_3\in[0,\frac{1}{\n}] } } \left\vert \Rtz \right\vert <1,
\end{align}
and from Lemma \ref{lemma_estim_W}, Lemma \ref{lemma_estim_U_T} and Lemma \ref{lemma_estim_b} that for all $0<\t<1-\delta_0$
\begin{align}
\sup_{\substack{z\in \Adelta  } }  \vert \W{\F}(z)\vert^2 < 2\left(1+\sqrt{1+\tfrac{\F_0^2}{\delta^2}} \right), \quad \text{and} \quad  \sup_{\substack{z\in \Adelta}} \tfrac{\vert \U{\F}(z) \vert}{\F}< \tfrac{1}{2\delta}. \nn
\end{align}
\end{proof}
\begin{remark}
From Proposition \ref{prop_asymp_R} together with Proposition \ref{prop_estim_R} we can see that the $o(1)$ term in Proposition \ref{prop_asym} holds uniformly for $z\in \Adelta$. 
\end{remark}

\subsubsection{Maximum Principle for Asymptotics on $\Cdelta$}
Up to now we have neglected the asymptotics in a neighborhood of $\F$ and $-\F$. In this section we are going to take care of this neighborhood. Instead of another Plancherel-Rotach asymptotics which holds around $\pm\F$, we will use the maximum principle for holomorphic functions to estimate the orthonormal polynomials. 

\begin{defin}
For $\delta>0$ we define the complement of $\Adelta$
\begin{align}
\Adelta^C&=\{z\in \C \vert z\notin \Adelta\},\\
\intertext{the boundary}
\partial\Adelta&=\{z\in \C \vert (\vert \Im\, z\vert =\delta \land \vert \Re\, z\vert \le \F)\lor (\vert \Re\, z\vert =\F+\delta \land \vert \Im\, z \vert \le \delta\},\\
\intertext{and for $0\le \lambda< \F $ the following subset of $\Adelta^C$ }
\Xdelta{C}{\delta}{\F,\lambda}&=\{z\in \Adelta^C \vert  \vert \Re\, z\vert \ge \F-\lambda\}.
\end{align}
\end{defin}

The maximum principle tells us that $\vert \op{m}(z) \vert$ has its maximum on the boundary, i.e.\ 
\begin{align}
\max_{z\in \Adelta^C\cup \partial \Adelta} \vert \op{m}(z) \vert = \max_{z\in \partial \Adelta} \vert \op{m}(z) \vert.
\end{align}
Thus we can use the Plancherel-Rotach asymptotics on $\Adelta$ to estimate the maximum of $\vert \op{\n}(z) \vert$ and $\vert \op{\n-1}(z) \vert$ on $\partial \Adelta$ and so we have an estimation on $\Adelta^C$. For $t<1$ we will even see that this is not only useful in a neighborhood of $\F$ and $-\F$ but also on a larger part of $\Adelta^C$, for small $\t$ even all on $\Adelta^C$.

Using \eqref{eq_proof_R1} we can see from \eqref{eq_op1} that for $\n>1$ we can estimate $\op{\n}$ as
\begin{align}
\vert \op{\n}(z)\vert &=\left \vert \tfrac{Q_\n\left (\sqrt{2}\frac{z}{\F} \right)}{\pi_\n\left(\sqrt{2}\frac{z}{\F} \right)} \right\vert \tfrac{\n^{\frac{\n+1}{2}}}{2\sqrt{\pi}\sqrt{\n !}}\e^{\frac{\n}{2}\Re\left(\left( \U{\F}(z) \right)^2\right)} \left\vert \U{\F}(z)\right\vert^{-\n} \t^{\frac{\n}{2}} \left(1-\t^2\right)^{1/4} \vert \W{\F}(z)\vert \nn \\
&= \left \vert \tfrac{Q_\n\left (\sqrt{2}\frac{z}{\F} \right)}{\pi_\n\left(\sqrt{2}\frac{z}{\F} \right)} \right\vert \e^{\frac{\Rone}{2}}\n^{1/4} \gmax(z) \e^{\frac{\n}{2} \fmax(z)}, \label{eq_op1_max}\\
\intertext{and from \eqref{eq_op2_error} that we have for $\op{\n-1}$}
\vert\op{\n-1}(z)\vert&=\left\vert \tfrac{ Q_{\n-1}\left(\sqrt{\frac{\n}{\n-1}}\sqrt{2}\frac{z}{\F} \right)}{\pi_{\n-1}\left(\sqrt{\frac{\n}{\n-1}}\sqrt{2}\frac{z}{\F}\right)} \right\vert \tfrac{\n^{\frac{\n+1}{2}}}{2\sqrt{\pi}\sqrt{\n !}}  \e^{\frac{\n}{2} \Re\left( \left(\U{\F}(z)\right)^2 \right) } \left\vert\U{\F}(z) \right\vert^{-(\n-1)} \t^{\frac{\n-1}{2}} \nn\\ 
& \qquad \cdot \left(1-\t^2\right)^{1/4} \vert\W{\F}(z)\vert \e^{\Re\left(\Rtwoz{z}+\Rthreez{z} \right)} \nn\\
&=\left\vert \tfrac{ Q_{\n-1}\left(\sqrt{\frac{\n}{\n-1}}\sqrt{2}\frac{z}{\F} \right)}{\pi_{\n-1}\left(\sqrt{\frac{\n}{\n-1}}\sqrt{2}\frac{z}{\F}\right)} \right\vert \e^{\frac{\Rone}{2}+\Re\left(\Rtwoz{z}+\Rthreez{z} \right)}\tfrac{\n^{1/4} \vert \U{\F}(z)\vert \gmax(z) }{\sqrt{\t}}  \e^{\frac{\n}{2}\fmax(z)}, \label{eq_op2_max}\\
\intertext{where $\Rone$, $\Rtwotz{}$, $\Rthreetz{}$ are as in Proposition \ref{prop_asymp_R} with suitable $\xi_1, \xi_2, \xi_3\in [0,\tfrac{1}{\n}]$, }
\fmax(z)&=1+\Re\left( \left(\U{\F}(z)\right)^2 \right)-2 \log \left\vert \U{\F}(z) \right\vert+\log \t,
\intertext{and}
\gmax(z)&=\tfrac{ (1-\t^2)^{1/4} \vert \W{\F}(z) \vert } {2 (2\pi^3)^{1/4} }.
\end{align}

\begin{lemma}\label{lemma_fmax}
Let $\delta>0$ and $\F\in (0,\infty)$. Then the maxima of 
\begin{align}
\fmax(z)&=1+\Re\left( \left(\U{\F}(z)\right)^2 \right)-2 \log \left\vert \U{\F}(z) \right\vert+\log \t
\end{align}
on $\partial \Adelta$ are at $\F+\delta\pm i \delta$ and $-\F-\delta\pm i \delta$.
\end{lemma}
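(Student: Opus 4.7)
\textit{Proof proposal.} The plan is to derive a strikingly simple formula for the gradient of $\fmax$ in elliptic coordinates and read off the extrema directly. Exploiting the symmetries $\fmax(-z) = \fmax(z)$ and $\fmax(\cc{z}) = \fmax(z)$ (both visible from $\U{\F}(-z) = -\U{\F}(z)$ and $\U{\F}(\cc{z}) = \cc{\U{\F}(z)}$), I first reduce to the first-quadrant portion of the rectangle boundary $\partial \Adelta$, the $L$-shape
\[
\{x + i\delta : 0 \le x \le \F + \delta\} \cup \{\F + \delta + iy : 0 \le y \le \delta\},
\]
and will show that on each of its two segments, $\fmax$ is non-decreasing toward the common corner $\F + \delta + i \delta$.

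The technical heart is the following computation. Using Section \ref{sub_elliptic}, any $z \in \C \setminus [-\F, \F]$ has unique elliptic coordinates $z = a \cos \phi + i b \sin \phi$ with $b > 0$, $a = \sqrt{b^2 + \F^2}$; Proposition \ref{prop_U} and Lemma \ref{lemma_T} give $\U{\F}(z) = r (\cos \phi - i \sin \phi)$ with $r = (a - b)/\F \in (0, 1)$ and $\vert \T{\F}(z) \vert^2 = b^2 + \F^2 \sin^2 \phi$. Substituting into the definition of $\fmax$ yields
\[
\fmax(z) = 1 + r^2 \cos (2 \phi) - 2 \log r + \log \t.
\]
I then apply the chain rule via the Jacobian of $(x, y) \leftrightarrow (b, \phi)$ and simplify using the identities $r^2 (a + b) = a - b$, $1 + r^2 = 2a/(a + b)$, $1 - r^2 = 2b/(a + b)$, and $a^2 - b^2 = \F^2$. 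After cancellation of all apparent poles and of the common denominator $b^2 + \F^2 \sin^2 \phi$, this should produce the remarkably clean identities
\[
\frac{\partial \fmax}{\partial x}(z) = \frac{4 \cos \phi}{a + b}, \qquad \frac{\partial \fmax}{\partial y}(z) = \frac{4 \sin \phi}{a + b}.
\]
I expect this algebraic simplification---keeping careful track of the several competing terms---to be the only nontrivial step of the proof.

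From these formulas the lemma is immediate. On the top segment $y = \delta$, $x \in [0, \F + \delta]$, the elliptic angle satisfies $\phi \in [0, \pi/2]$ (with $\phi = \pi/2$ at $x = 0$), so $\cos \phi \ge 0$ with equality only at $x = 0$; hence $\partial \fmax / \partial x \ge 0$, and $\fmax$ is strictly increasing in $x$ on $(0, \F + \delta]$, attaining its maximum on this segment at $\F + \delta + i \delta$. On the right segment $x = \F + \delta$, $y \in [0, \delta]$, we have $\phi \in [0, \pi/2]$ (with $\phi = 0$ at $y = 0$), so $\sin \phi \ge 0$ with equality only at $y = 0$; hence $\partial \fmax / \partial y \ge 0$, and $\fmax$ is strictly increasing in $y$ on $(0, \delta]$, attaining its maximum again at $\F + \delta + i \delta$. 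Combining with the two reflection symmetries identifies $\pm(\F + \delta) \pm i \delta$ as the four maxima of $\fmax$ on $\partial \Adelta$.
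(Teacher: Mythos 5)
Your proof is correct and follows essentially the same route as the paper's. Both proofs hinge on the same fact: $\partial \fmax/\partial x = \frac{4}{\F}\Re\,\U{\F}(z) = \frac{4\cos\phi}{a+b}$ and $\partial \fmax/\partial y = -\frac{4}{\F}\Im\,\U{\F}(z) = \frac{4\sin\phi}{a+b}$, so the derivative along each boundary segment has constant sign on each half-segment (with the unique zero where the segment crosses an axis); the paper phrases this as ``the critical point is a minimum'' using the second derivative, while you phrase it directly as monotonicity toward the corners, but the substance is identical.
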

\begin{proof}
We define $\fmax(x,y)=\fmax(x+iy)$ %, $\U{\F}(x,y)=\U{\F}(x+iy)$ and $\T{\F}(x,y)=\T{\F}(x+iy)$ as functions of
as a function of two real variables $x$ and $y$. The derivatives of $\fmax$ are
\begin{align}
\frac{\partial \fmax}{\partial x}(x,y)&=\frac{4}{\F}\Re\,\U{\F}(x,y), &\text{and}\quad \frac{\partial \fmax}{\partial y}(x,y)&=-\frac{4}{\F}\Im\,\U{\F}(x,y), \nn
\intertext{and the second derivatives}
\frac{\partial^2 \fmax}{\partial x^2}(x,y)&=\frac{4}{\F}\Re\left(\frac{\U{\F}(x,y)}{\T{\F}(x,y)}\right), &\text{and}\quad \frac{\partial \fmax}{\partial y}(x,y)&=-\frac{4}{\F}\Re\left(\frac{\U{\F}(x,y)}{\T{\F}(x,y)}\right). \nn
\end{align}
With the help of Proposition \ref{prop_U} we see that the critical points on the lines $\Im\, z=\pm \delta$ are at $z=\pm i\delta$ and that $\tfrac{\partial^2 \fmax}{\partial x^2}(0,\pm\delta)>0$. So the critical points are minima. Analogously the critical points on the lines $\Re\, z=\pm (F+\delta)$ are at $z=\pm (F+\delta)$ and these are also minima. Therefore the global maximum of $\fmax$ on $\partial\Adelta$ are at $\F+\delta(1\pm i)$ and $-\F-\delta(1\pm i)$. Because of symmetries all maxima are of same value. 
\end{proof}

\begin{prop}\label{prop_using_maximum_principle}
Let $\epsilon>0$, $0<\delta_0<\e^{-2}$ and $0<\delta_1<1-\delta_0$. Then there exists $\delta=\delta(\epsilon,\delta_0,\delta_1)>0$ so that for all $\n>1$
\begin{align}
\sup_{z\in \Adelta^C} \vert \op{\n}(z)\vert&< g_1(\n,\t,\zmax) \e^{\frac{\n}{2} \left(2+\log \t+\epsilon \right)},  &\forall \t &\in(\delta_1,1-\delta_0), \label{eq_estim_max_princ1} \\
\sup_{z\in \Adelta^C} \vert \op{\n-1}(z)\vert&< g_2(\n,\t,\zmax) \e^{\frac{\n}{2} \left(2+\log \t+\epsilon \right)},  &\forall \t &\in(\delta_1,1-\delta_0),  \label{eq_estim_max_princ2} \\
\sup_{z\in \Adelta^C} \vert \op{\n}(z)\vert&< g_1(\n,\t,\zmax) \e^{-\frac{\n \delta_0}{2}}, & \forall \t & \in(0,\e^{-2}-\delta_0), \label{eq_estim_max_princ3} \\
\intertext{and} 
\sup_{z\in \Adelta^C} \vert \op{\n-1}(z)\vert&< g_2(\n,\t,\zmax) \e^{-\frac{\n \delta_0}{2}}, & \forall \t & \in(0,\e^{-2}-\delta_0), \label{eq_estim_max_princ4}
\end{align}
where 
\begin{align}
\zmax&=\F+(1+i)\delta,\\
g_1(\n,\t,z)&= \left \vert \tfrac{Q_\n\left (\sqrt{2}\frac{z}{\F} \right)}{\pi_\n\left(\sqrt{2}\frac{z}{\F} \right)} \right\vert \n^{1/4} \gmax(z) \sup_{\xi_1 \in [0,\frac{1}{\n}] } \e^{\frac{\Rone}{2}} ,
%\intertext{and}
\end{align}
\begin{align}
g_2(\n,\t,z)&= \left\vert \tfrac{ Q_{\n-1}\left(\sqrt{\frac{\n}{\n-1}}\sqrt{2}\frac{z}{\F} \right)}{\pi_{\n-1}\left(\sqrt{\frac{\n}{\n-1}}\sqrt{2}\frac{ z }{\F}\right)} \right\vert \tfrac{\n^{1/4} \vert \U{\F}( z )\vert \gmax( z ) }{\sqrt{\t}} \,\, \sup_{\mathclap{\xi_1,\xi_2,\xi_3 \in [0,\frac{1}{\n}] }} \,\, \e^{\frac{\Rone}{2}+\Re\left(\Rtwoz{ z }+\Rthreez{ z } \right)} ,\\
\intertext{with $\Rone$, $\Rtwotz{}$, $\Rthreetz{}$ as in Proposition \ref{prop_asymp_R} and}
\gmax(z)&=\tfrac{ (1-\t^2)^{1/4} \vert \W{\F}(z) \vert } {2 (2\pi^3)^{1/4} } .
\end{align}
\end{prop}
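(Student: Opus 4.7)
The plan is to apply the maximum principle to reduce the supremum on $\Adelta^C$ to the supremum on the boundary $\partial \Adelta$, where the Plancherel--Rotach asymptotics (and hence the exact representations \eqref{eq_op1_max} and \eqref{eq_op2_max}) remain valid. More precisely, $\op{\n}$ and $\op{\n-1}$ are entire; the set $\Adelta^C$ is bounded and its closure has boundary $\partial\Adelta\subset\Adelta$, so the maximum modulus principle gives
\[
\sup_{z\in \Adelta^C}|\op{\n}(z)|=\sup_{z\in \partial\Adelta}|\op{\n}(z)|,
\]
and analogously for $\op{\n-1}$. Substituting \eqref{eq_op1_max} into the right-hand side, all the $z$-dependence factors as a bounded prefactor times $\e^{\frac{\n}{2}\fmax(z)}$, and Lemma \ref{lemma_fmax} locates the maxima of $\fmax$ on $\partial\Adelta$ at the four corners $\pm\F\pm\delta(1\pm i)$, which are permuted by the symmetries $\fmax(-z)=\fmax(\cc z)=\fmax(z)$; hence it suffices to evaluate at $\zmax=\F+(1+i)\delta$. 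The remaining prefactors (Plancherel--Rotach ratios, Stirling error $\Rone$, and the expansion errors $\Rtwo,\Rthree$) are all finite at $\zmax$ because $\T{\F}(\zmax)\neq 0$ for $\delta>0$, and taking the supremum over the auxiliary parameters $\xi_1,\xi_2,\xi_3\in[0,\tfrac1{\n}]$ yields precisely the quantities $g_1(\n,\t,\zmax)$ and $g_2(\n,\t,\zmax)$.

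What remains is an elementary estimate for $\fmax(\zmax)$. Since $\U{\F}$ extends continuously to $\F$ with $\U{\F}(\F)=1$, one has $\fmax(\F)=2+\log\t$, and by continuity
\[
\fmax(\zmax)=2+\log\t+\rho(\delta,\t),\qquad \rho(\delta,\t)\to 0\text{ as }\delta\to 0,
\]
the convergence being uniform in $\t$ on any compact subinterval of $(0,1)$ (and also as $\t\to 0$, using the explicit expressions in Propositions \ref{prop_U} and \ref{lemma_W} for $\U{\F}$ and $\W{\F}$ at $\zmax$, which depend smoothly on $\F$ up to $\F=0$). For \eqref{eq_estim_max_princ1} and \eqref{eq_estim_max_princ2}, given $\epsilon,\delta_0,\delta_1$ I choose $\delta$ so small that $|\rho(\delta,\t)|<\epsilon$ for all $\t\in(\delta_1,1-\delta_0)$; substituting back gives the claim. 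For \eqref{eq_estim_max_princ3} and \eqref{eq_estim_max_princ4} the hypothesis $\t<\e^{-2}-\delta_0$ yields $2+\log\t<\log(1-\e^2\delta_0/(1-\e^2\delta_0))$, which is bounded above by $-\delta_0'$ for some $\delta_0'>\delta_0$ (using $\log(1-x)<-x$ and $\e^2>1$); a sufficiently small $\delta$ absorbs $\rho(\delta,\t)$ into the gap $\delta_0'-\delta_0$, giving the required exponent $-\tfrac{\n\delta_0}{2}$.

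The genuine obstacle is the uniformity of the choice of $\delta$: the same $\delta$ must work for all admissible $\t$, and in the second regime it must work uniformly as $\t\to 0^+$, where $\F\to 0$ and the geometry of $\partial\Adelta$ relative to $[-\F,\F]$ degenerates. One must therefore verify that $\rho(\delta,\t)$ is indeed uniformly small across this degeneration; this amounts to showing that $\Re((\U{\F}(\zmax))^2)\to 1$ and $\log|\U{\F}(\zmax)|\to 0$ at a rate depending only on $\delta$, not on $\F$, which is clear from the explicit formula $\U{\F}(\zmax)=(\zmax-\sqrt{\zmax^2-\F^2})/\F$ upon expanding in $\delta/\F$ for large $\delta/\F$ and in $\F/\delta$ for small $\F/\delta$ and checking that both regimes yield an $o(1)$ contribution as $\delta\to 0$ uniformly in $\F\in(0,\F_0]$.
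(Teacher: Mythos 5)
Your proof follows the paper's strategy closely through the maximum-principle reduction, the localization of the maximum of $\fmax$ on $\partial\Adelta$ at the four corners via Lemma~\ref{lemma_fmax}, and the specialization to $\zmax$ by the symmetries. You also correctly flag the genuine difficulty as the uniformity of $\delta$ across $\t\to 0^+$. But your proposed resolution of that difficulty rests on a false computation, and the paper in fact resolves it by a different argument.

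Your decomposition $\fmax(\zmax)=2+\log\t+\rho(\delta,\t)$ with $\rho(\delta,\t)\to 0$ as $\delta\to 0$ \emph{uniformly in $\t$ down to $\t=0$} does not hold. You claim that $\Re\bigl((\U{\F}(\zmax))^2\bigr)\to 1$ and $\log|\U{\F}(\zmax)|\to 0$ as $\delta\to 0$ uniformly in $\F\in(0,\F_0]$. That is wrong in the regime $\F\ll\delta$: expanding $\U{\F}(\F+(1+i)\delta)$ for $\F\ll\delta$ gives $\U{\F}(\zmax)\sim\frac{(1-i)\F}{4\delta}$ (equivalently, Lemma~\ref{lemma_limit_U_W} gives $\U{\F}(\zmax)\to 0$ and $|\U{\F}(\zmax)|/\sqrt{\t}\to 1/(\sqrt{2}\delta)$ as $\t\to 0$). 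Thus $\Re\bigl((\U{\F}(\zmax))^2\bigr)\to 0$, not $1$, and $\log|\U{\F}(\zmax)|\to-\infty$, not $0$. Consequently $\rho(\delta,\t)=\fmax(\zmax)-(2+\log\t)\to+\infty$ as $\t\to 0$ for any fixed $\delta$; no choice of $\delta$ makes $\rho$ uniformly small, and the ``absorb $\rho$ into the gap'' step in your proof of \eqref{eq_estim_max_princ3}--\eqref{eq_estim_max_princ4} is therefore invalid near $\t=0$.

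The paper handles the degenerate regime differently: rather than comparing $\fmax(\zmax)$ to $2+\log\t$, it computes the actual limit $\lim_{\t\to 0}\fmax(\F+(1+i)\delta)=1+\log(2\delta^2)$ (the $-2\log|\U{\F}|$ term exactly cancels the $\log\t$ term, leaving a $\t$-independent quantity depending only on $\delta$), which is $<-1<-\delta_0$ once $\delta<1/(\sqrt{2}\e)$. Combined with a continuity/compactness argument on $[\delta_1',\,\e^{-2}-\delta_0]$ for any $\delta_1'>0$, this yields the required single $\delta$. Your argument for \eqref{eq_estim_max_princ1}--\eqref{eq_estim_max_princ2}, where $\t$ is confined to the compact interval $[\delta_1,1-\delta_0]$ bounded away from $0$, is fine; the gap is confined to the second pair of estimates near $\t=0$.
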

\begin{proof}
Without loss of generality we can assume that $\epsilon<\delta_0$. Since we have taken the supremum over $\xi_1,\xi_2,\xi_3 \in [0,\frac{1}{\n}]$ in the definition of $g_1$ and $g_2$ it is obvious from \eqref{eq_op1_max}, \eqref{eq_op2_max} and Lemma \ref{lemma_fmax} that for all $z\in \partial \Adelta$ and $\n\ge 2$ 
\begin{align}
\vert \op{\n}(z)\vert &\le g_1(\n,\t, z) \e^{\frac{\n}{2} \fmax(z)}\le g_1(\n,\t,\zmax) \e^{\frac{\n}{2} \fmax(\zmax)}, \\
\intertext{and} 
\vert \op{\n-1}(z)\vert&\le g_2(\n,\t,z) \e^{\frac{\n}{2} \fmax(z)}\le g_2(\n,\t,\zmax) \e^{\frac{\n}{2} \fmax(\zmax)},
\end{align}
where $\fmax$ is as in Lemma \ref{lemma_fmax}.
Because $\op{\n}$ and $\op{\n-1}$ are polynomials, we can use the maximum principle which tells us that 
\begin{align}
\sup_{z\in \Adelta^C} \vert \op{\n}(z)\vert&= \max_{z\in \partial \Adelta} \vert \op{\n}(z)\vert, \\
\intertext{and}
\sup_{z\in \Adelta^C} \vert \op{\n-1}(z)\vert&= \max_{z\in \partial \Adelta} \vert \op{\n-1}(z)\vert. 
\end{align}
Therefore we are only left to prove that there exists $\delta>0$ so that $\fmax(\zmax)<2+\log \t+\epsilon$ for all $\t \in [\delta_1,1-\delta_0]$ and $\fmax(\zmax)<-\delta_0$ for all $\t \in (0,\e^{-2}-\delta_0]$. 

%Anstelle des folgenden Stetigkeits Argument könnte man auch direkte Abschätzung für \delta Herleiten, d.h. \frac{8\delta}{\F}<\epsilon, siehe Rückseite von Notizen über Maximumprinzip Blatt (7) vom 10.11.11.
We have defined $\U{\F}$ only on $\C\setminus [-\F,\F]$. But there is a continuous continuation of $\fmax$ on all of $\C$, which we will call $\fmaxc$. We see that $\fmaxc( \F)=2+\log \t$. Then there is a neighborhood of $\F$ so that $\vert \fmaxc(z)-\fmaxc(\F) \vert<\epsilon$ and therefore there exists $\delta=\delta(\t)>0$ so that $\fmaxc(z)<2+\log \t+\epsilon$ for all $z\in B_{2\delta}(\F)$. $\delta$ can still be depending on $\t$, but since $\fmaxc$ is also continuous in $\t$ it follows that $\delta$ is depending continuously on $\t$. Therefore
\begin{align}
\min_{\delta_1\le t \le 1-\delta_0} \delta(\t)
\end{align}
is strictly positive. With this minimum we have found a fixed $\delta$ so that $\fmax(\F+(1\pm i)\delta)=\fmax(-\F-(1\pm i)\delta)<2+\log \t+\epsilon$. 

From this follows further for $\delta_1<\t<\e^{-2}-\delta_0$ that $\fmax(\F+(1+i)\delta)<2+\log(\e^{-2}-\delta_0)+\epsilon\le -\delta_0 \e^2+\epsilon < -\delta_0$ if $\epsilon<\delta_0$. So the only thing left to prove is that in the limit $\t$ going to zero there is still $\delta>0$ so that $\fmax(\F+(1\pm i)\delta)<-\delta_0$. According to Lemma \ref{lemma_limit_U_W} 
\begin{align}
\lim_{\t\rightarrow 0} \U{\F}(\F+(1+i)\delta)&=\lim_{\F\rightarrow 0} \U{\F}(\F+(1+i)\delta)=0,\\
\intertext{and}
\lim_{\t\rightarrow 0} \left\vert\frac{\U{\F}(\F+(1+i)\delta)}{\sqrt{\t}}\right\vert&=\lim_{\F\rightarrow 0} \left\vert\frac{2\U{\F}((1+i)\delta)}{\F}\right\vert=\frac{1}{\vert 1+i\vert \delta}.
%\end{align}
\intertext{Thus}
%\begin{align}
\lim_{\F\rightarrow 0} \fmax (\F+(1+i)\delta)&=1+\log (2\delta^2)<-1<-\delta_0\quad \text{if } \delta<\tfrac{1}{\sqrt{2}\e}.
\end{align}
\end{proof}

\begin{prop}\label{prop_estim_max_princ}
Let $0<\delta_0<\e^{-2}$. Then there exists $\n_0=\n_0(\delta_0)$, $\delta=\delta(\delta_0)>0$, $\epsilon=\epsilon(\delta_0)>0$, $C=C(\delta_0)>0$ and $\lambda=\lambda(\delta_0)>0$ so that for all $\n > \n_0$
\begin{align}
\sup_{\t \in(0,1-\delta_0)}\sup_{x+i y\in \Xdelta{C}{\delta}{\F,\lambda}} \left\vert \frac{\dd \rhon}{\dd x}(x,y) \right\vert&< C\sqrt{\n} \e^{-\n \epsilon},\\
\intertext{and}
\sup_{\t \in(0,1-\delta_0)}\sup_{x+i y\in \Xdelta{C}{\delta}{\F,\lambda}} \left\vert \frac{\dd \rhon}{\dd y}(x,y) \right\vert&< C\sqrt{\n} \e^{-\n \epsilon}. 
\end{align}
Further we can choose $\lambda=\F$ if $\t<\e^{-2}-\delta_0$.
\end{prop}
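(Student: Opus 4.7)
The plan is to bound $\partial \rhon/\partial x$ and $\partial \rhon/\partial y$ directly from the identities \eqref{eq_id1} and \eqref{eq_id2} evaluated at $w=\bar z$, using the estimates for $|\op{\n}|$ and $|\op{\n-1}|$ on $\Adelta^C$ that were already produced by the maximum principle in Proposition \ref{prop_using_maximum_principle}. Since $-wz+\tfrac{\t w^2}{2}+\tfrac{\t z^2}{2}$ reduces at $w=\bar z$ to $-V(z)=-(1-\t)x^2-(1+\t)y^2$, identities \eqref{eq_id1}--\eqref{eq_id2} together with \eqref{eq_drho_x}--\eqref{eq_drho_y} give, for $z=x+iy\in \Xdelta{C}{\delta}{\F,\lambda}\subset \Adelta^C$,
\begin{equation*}
\left|\tfrac{\partial \rhon}{\partial x}(x,y)\right|+\left|\tfrac{\partial \rhon}{\partial y}(x,y)\right|\le 2\Bigl(\sqrt{\tfrac{1-\t}{1+\t}}+\sqrt{\tfrac{1+\t}{1-\t}}\Bigr)\bigl|\op{\n}(\bar z)\bigr|\bigl|\op{\n-1}(z)\bigr|\e^{-\n V(z)}.
\end{equation*}
The prefactor is uniformly bounded on $(0,1-\delta_0)$, so everything reduces to estimating the product $|\op{\n}(\bar z)\op{\n-1}(z)|\e^{-\n V(z)}$.

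I would split the range of $\t$ into two overlapping pieces and pick a $\delta_1\in(0,\e^{-2}-\delta_0)$. On $\t\in(0,\e^{-2}-\delta_0)$ I apply \eqref{eq_estim_max_princ3}--\eqref{eq_estim_max_princ4}: the product of orthonormal polynomials is bounded by $g_1(\n,\t,\zmax)g_2(\n,\t,\zmax)\e^{-\n\delta_0}$, and since $V(z)\ge 0$ on $\R^2$ we may use $\e^{-\n V(z)}\le 1$ and hence allow $\lambda=\F$ (i.e.\ all of $\Adelta^C\cap \C_{x,0}$). On $\t\in(\delta_1,1-\delta_0)$ I apply \eqref{eq_estim_max_princ1}--\eqref{eq_estim_max_princ2} to bound the product by $g_1 g_2 \e^{\n(2+\log\t+\epsilon)}$, so that the total exponent becomes $\n(2+\log\t+\epsilon-V(z))$. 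The task is then to show this exponent is bounded above by $-\n\epsilon'$ for some $\epsilon'>0$, uniformly in $\t$ and $z$ in the relevant region.

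The key analytic point is that the function $\t\mapsto 2+\log\t-V(\pm\F)=2+\log\t-\tfrac{4\t}{1+\t}$ is strictly negative on $(0,1)$, vanishes only at $\t=1$ (one checks that the derivative vanishes only at $\t=1$ and that the value there is $0$), so on the compact interval $[\delta_1,1-\delta_0]$ it attains a negative maximum $-4\epsilon_0<0$. By continuity of $V$ in $z$, uniformly in $\t\in[\delta_1,1-\delta_0]$, I can choose $\delta,\lambda>0$ so small that $|V(z)-V(\sign(\Re z)\F)|<\epsilon_0$ for every $z\in\Xdelta{C}{\delta}{\F,\lambda}$. Taking $\epsilon=\epsilon_0$ in the invocation of Proposition \ref{prop_using_maximum_principle} yields $2+\log\t+\epsilon-V(z)<-2\epsilon_0$. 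The factors $g_1,g_2$ are $\lO(\n^{1/4})$ because $Q_{\n}/\pi_{\n}$ and $Q_{\n-1}/\pi_{\n-1}$ converge to one uniformly on $\partial\Adelta$ by the Plancherel-Rotach asymptotics, $\gmax(\zmax)$ is a fixed continuous quantity, and the factors $\e^{\Rone/2+\Re(\Rtwotz{\zmax}+\Rthreetz{\zmax})}$ are controlled by Proposition \ref{prop_estim_R}; moreover $|\U{\F}(\zmax)|/\sqrt{\t}$ stays bounded as $\t\to 0$ thanks to Lemma \ref{lemma_limit_U_W}. Thus $g_1 g_2=\lO(\sqrt{\n})$, giving the desired bound $C\sqrt{\n}\e^{-\n\epsilon}$ with $\epsilon=\epsilon_0$.

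The main obstacle is the coordination of the small parameters: $\delta$ has to be chosen small enough to trigger Proposition \ref{prop_using_maximum_principle} (which itself requires a careful continuity argument near the foci), and simultaneously $\delta$ and $\lambda$ must be small enough that the continuity of $V$ beats the gap $\epsilon_0$ uniformly over the compact $\t$-range $[\delta_1,1-\delta_0]$. For $\t$ close to $1$ the gap $\epsilon_0$ shrinks rapidly, so the choices of $\delta$ and $\lambda$ genuinely depend on $\delta_0$; this is exactly why the statement only claims $\lambda=\F$ for $\t<\e^{-2}-\delta_0$, where the bound of $|\op{\n}|,|\op{\n-1}|$ by $\e^{-\n\delta_0/2}$ no longer requires the gap argument at all.
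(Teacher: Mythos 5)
Your proposal is correct and follows the same overall structure as the paper's proof: split the $t$-range at $\e^{-2}-\delta_0$, bound the derivative of the density by the identities and the maximum-principle estimates of Proposition~\ref{prop_using_maximum_principle}, control $g_1 g_2=\lO(\sqrt{\n})$ via Proposition~\ref{prop_estim_R} together with bounds on $\U{\F}$, $\W{\F}$ near $\zmax$, and finally show the remaining exponent is uniformly negative in $t$ and $z$ near the foci.

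The one place you diverge from the paper, and where you flag an ``obstacle'' that is worth clearing up, is the choice of $\delta$ and $\lambda$. You invoke a two-sided continuity bound $|V(z)-V(\pm\F)|<\epsilon_0$ on $\Xdelta{C}{\delta}{\F,\lambda}$ and therefore worry that $\delta$ (which is prescribed by Proposition~\ref{prop_using_maximum_principle}) must be shrunk again to win the continuity estimate. That coordination problem is actually spurious. Since $V(x+iy)=(1-t)x^2+(1+t)y^2$, the supremum of $-V$ over $\Xdelta{C}{\delta}{\F,\lambda}$ is attained at $y=0$, $|x|=\F-\lambda$; points with $|x|\ge\F$ or $|y|>0$ only make $V$ larger. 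So only a one-sided estimate $V(z)>V(\F)-\epsilon_0$ is needed, and that supremum does not depend on $\delta$ at all. This is exactly what the paper exploits: it first fixes $\epsilon$ (there $\epsilon=\delta_0^3/12$), lets Proposition~\ref{prop_using_maximum_principle} hand back a $\delta$, and then solves the explicit one-variable inequality $(\F-\lambda)^2(1-t)>2+2\epsilon+\log t$ for $\lambda$, showing via the elementary bound $-(2-\delta_0)\log(1-\delta_0)>2\delta_0+\delta_0^3/6$ that a uniform positive $\lambda$ works on $[\delta_1,1-\delta_0]$. Your route (showing $2+\log t-V(\pm\F)$ is strictly increasing in $t\in(0,1)$ with value $0$ at $t=1$, hence $\le h(1-\delta_0)<0$ on the compact piece, then appealing to continuity for $\lambda$) reaches the same conclusion and is a bit slicker on the monotonicity, but you should replace the two-sided continuity statement by the one-sided one so that $\lambda$ can be chosen after, and independently of, $\delta$. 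With that adjustment your proof is complete and matches the paper in substance.
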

\begin{remark}The proposition gives an uniform estimation for $\left\vert \tfrac{\dd \rhon}{\dd x}(x,y) \right\vert$ and $\left\vert \tfrac{\dd \rhon}{\dd y}(x,y) \right\vert$ on a finite neighborhood of $-\F$ and $\F$ and even on all of $\Adelta^C$ if $\t<\e^{-2}$.
\end{remark}
\begin{proof}
Let $z=x+i y$, $\n>1$, $\epsilon= \tfrac{\delta_0^3}{12}$ and $\delta_1=\e^{-2}-\delta_0>0$. %Then Proposition \ref{prop_using_maximum_principle} gives us a $\delta>0$ and estimations \eqref{eq_estim_max_princ1}-\eqref{eq_estim_max_princ4} where $\delta$ is depending on $\delta_0$. 

From identities \eqref{eq_id1} and \eqref{eq_id2} evaluated at $w=\cc{z}$ we see that for all $\delta>0$ and $ z\in \Adelta^C $
\begin{align}
\left\vert \tfrac{\dd \rhon}{\dd x}(z) \right\vert &\le 2\sqrt{\tfrac{1-\t}{1+\t}} \e^{\n\left(-\vert z\vert^2 +\t \Re(z^2) \right)} \sup_{z,w\in \Adelta^C} \left\vert \op{\n}(w)\op{\n-1}(z)\right\vert, \label{eq_estim_op1}
\intertext{and}
\left\vert \tfrac{\dd \rhon}{\dd y}(z) \right\vert &\le 2\sqrt{\tfrac{1+\t}{1-\t}} \e^{\n\left(-\vert z\vert^2 +\t \Re(z^2) \right)} \sup_{z,w\in \Adelta^C} \left\vert \op{\n}(w)\op{\n-1}(z)\right\vert. \label{eq_estim_op2}
\end{align}
Using Proposition \ref{prop_using_maximum_principle} we get $\delta>0$ so that for all $\t\in (\delta_1,1-\delta_0)$ and $z\in \Adelta^C$
\begin{align}
\left\vert \tfrac{\dd \rhon}{\dd x}(z) \right\vert &< 2\sqrt{\tfrac{1-\t}{1+\t}} g_1(\n,\t,\zmax) g_2(\n,\t,\zmax) \e^{\n\left(-\vert z\vert^2 +\t \Re(z^2) + 2 +\log \t+\epsilon\right)}, \label{eq_proof_estim_mp1} \\
%\intertext{and}
\left\vert \tfrac{\dd \rhon}{\dd y}(z) \right\vert &< 2\sqrt{\tfrac{1+\t}{1-\t}} g_1(\n,\t,\zmax) g_2(\n,\t,\zmax) \e^{\n\left(-\vert z\vert^2 +\t \Re(z^2) + 2 +\log \t+\epsilon\right)}, \label{eq_proof_estim_mp2} \\
\intertext{and that for all $\t\in (0,\e^{-2}-\delta_0]$ and $z\in \Adelta^C$}
\left\vert \tfrac{\dd \rhon}{\dd x}(z) \right\vert &< 2\sqrt{\tfrac{1-\t}{1+\t}} g_1(\n,\t,\zmax) g_2(\n,\t,\zmax) \e^{\n\left(-\vert z\vert^2 +\t \Re(z^2) - \delta_0 \right)}, \label{eq_proof_estim_mp3} \\
\intertext{and}
\left\vert \tfrac{\dd \rhon}{\dd y}(z) \right\vert &< 2\sqrt{\tfrac{1+\t}{1-\t}} g_1(\n,\t,\zmax) g_2(\n,\t,\zmax) \e^{\n\left(-\vert z\vert^2 +\t \Re(z^2) - \delta_0 \right)}, \label{eq_proof_estim_mp4} 
\end{align}
where $\zmax$, $g_1$ and $g_2$ are as in Proposition \ref{prop_using_maximum_principle}. Since $\zmax\in\partial\Adelta\subset\Adelta$ we can find an estimation for
\begin{align}
2\sqrt{\tfrac{1\mp\t}{1\pm\t}} g_1(\n,\t,\zmax) g_2(\n,\t,\zmax)&=\sqrt{\tfrac{1\mp\t}{1\pm\t}} \tfrac{\sqrt{\n}\left\vert\U{\F}(\zmax)\right\vert \left\vert \W{\F}(\zmax)\right\vert^2  }{\sqrt{2\pi^3}\F} \left\vert R_{\ve{\xi},\n}{(\t,\zmax)}-1 \right\vert \nn \\
&\le \sqrt{\tfrac{\n}{ \pi^3 \delta_0}} \sup_{z\in\Adelta} \left( \tfrac{\vert\U{\F}(z)\vert}{\F} \vert \W{\F}(z)\vert^2 \left\vert\Rtz-1\right\vert\right) \nn \\
&< \tfrac{2\sqrt{\n} \left( 1+\sqrt{1+\frac{4}{\delta_0 \delta^2}} \right) }{\sqrt{\pi^3 \delta_0}\delta}=\sqrt{\n} C, \label{eq_proof_estim_mp4b}
\end{align}
for all $\n> \n_0$ and $\n_0\in\N$ big enough and with $R_{\ve{\xi},\n}$ as in Proposition \ref{prop_asymp_R}. In the last step we have used that according to Proposition \ref{prop_estim_R} there exists $\n_0\in \N$ so that 
\begin{align}
\sup_{\substack{0<\t<1-\delta_0 \\\n>\n_0 }} \sup_{\substack{z\in \Adelta \\ \xi_1,\xi_2,\xi_3\in[0,\frac{1}{\n}] } } \left\vert \Rtz \right\vert <1. 
\end{align}
We have further used that by Lemma \ref{lemma_estim_U_T}, \ref{lemma_estim_W} and \ref{lemma_estim_b} for all $\t\in (0,1-\delta_0)$
\begin{align}
\sup_{z\in \Adelta}\frac{\vert \U{\F}(z) \vert}{\F}<\frac{1}{2\delta}, \qquad \text{and}\qquad \sup_{z\in \Adelta} \vert \W{\F}(z) \vert^2< 2\left(1+\sqrt{1+\frac{\F_0^2}{\delta^2}} \right), \nn
\end{align}
where
\begin{align}
\F_0=\sqrt{\frac{4(1-\delta_0)}{1-(1-\delta_0)^2}}< \frac{2}{\sqrt{\delta_0}}.
\end{align}
So we have found a suitable constant $C$ and $\n_0\in\N$, only depending on $\delta_0$. 

Since $-\vert z\vert^2 +\t \Re(z^2) \le 0$ for all $z\in \C$ and $\t \in (0,1)$ it follows directly from \eqref{eq_proof_estim_mp3} and \eqref{eq_proof_estim_mp4} that for all $\n> \n_0$
\begin{align}
\sup_{\t \in(0,\e^{-2}-\delta_0]}\sup_{z\in \Adelta^C} \left\vert \frac{\dd \rhon}{\dd x}(x,y) \right\vert& \le C\sqrt{\n} \e^{-\n \delta_0} < C\sqrt{\n} \e^{-\n \epsilon} ,\\
\intertext{and}
\sup_{\t \in(0,\e^{-2}-\delta_0]}\sup_{z\in \Adelta^C} \left\vert \frac{\dd \rhon}{\dd y}(x,y) \right\vert&< C\sqrt{\n} \e^{-\n \epsilon}.
\end{align}
This proves the claim for $\t \in(0,\e^{-2}-\delta_0]$.

We are only left to consider $\t>\delta_1$ and have to show that there exists $\lambda>0$ so that $-\vert z\vert^2 +\t \Re(z^2) + 2 +\log \t+\epsilon<-\epsilon$ for all $z\in \Xdelta{C}{\delta}{\F,\lambda}$. Since $-\vert z\vert^2+\t\Re(z^2)$ reaches its maximum on $\Xdelta{C}{\delta}{\F,\lambda}$ at $z=\F-\lambda$, we find that for all $\t \in (\delta_1,1-\delta_0)$
\begin{align}
\sup_{z\in \Xdelta{C}{\delta}{\F,\lambda}} \left(-\vert z\vert^2 +\t \Re(z^2) + 2 +\log \t+\epsilon \right)=-(\F-\lambda)^2(1-\t) + 2 +\log \t+\epsilon, \label{eq_proof_estim_mp5}
\end{align}
and
\begin{align}
-(\F-\lambda)^2(1-\t) + 2 +\log \t+\epsilon<-\epsilon \iff (\F-\lambda)^2>\tfrac{2+2\epsilon+\log \t}{1-\t}, \label{eq_proof_estim_mp5b}
\end{align}
which is fulfilled trivially if $2+2\epsilon+\log \t\le 0$. Thus we can assume that $2+2\epsilon+\log \t > 0$. Then \eqref{eq_proof_estim_mp5b} is equivalent to 
\begin{align}
\lambda<\tfrac{ 2\sqrt{\t}-\sqrt{(1+\t)(2+2\epsilon+\log\t)} }{\sqrt{1-\t^2}}. \label{eq_proof_estim_mp6}
\end{align}
So we have to show that the infimum over $\t\in (\delta_1,1-\delta_0)$ of the right-hand side of \eqref{eq_proof_estim_mp6} is strictly positive, which is equivalent to 
\begin{align}
\inf_{\t\in (\delta_1,1-\delta_0)} \tilde{f}(\t)>0,
\end{align}
where $\tilde{f}(\t)=4\t-(1+\t)(2+2\epsilon+\log\t)$. But 
\begin{align}
\tfrac{\partial \tilde{f}(\t)}{\partial \t}=1-2\epsilon-\tfrac{1}{\t}-\log \t<0,
\end{align} 
since the maximum of $1-\tfrac{1}{\t}-\log \t$ at $\t=1$ is zero. Therefore 
\begin{align}
\inf_{\t\in (\delta_1,1-\delta_0)} \tilde{f}(\t)=\tilde{\f}(1-\delta_0)=-2\delta_0-2\epsilon-(2-\delta_0)\log(1-\delta_0).
\end{align}
Since 
\begin{align}
-(2-\delta_0)\log(1-\delta_0)=(2-\delta_0)\sum_{k=1}^\infty \tfrac{\delta_0^k}{k}=2\delta_0+\sum_{k=2}^\infty \tfrac{k-2}{(k-1)k}\delta_0^k> 2\delta_0+\tfrac{\delta_0^3}{6}+\tfrac{\delta_0^4}{6},
\end{align}
we see that 
\begin{align}
\inf_{\t\in (\delta_1,1-\delta_0)} \tilde{f}(\t) > -2\epsilon+\tfrac{\delta_0^3}{6}+\tfrac{\delta_0^4}{6}\ge \tfrac{\delta_0^4}{6}>0 \quad\text{if }\epsilon\le\tfrac{\delta_0^3}{12}.
\end{align}
Therefore we have found $\lambda>0$ so that \eqref{eq_proof_estim_mp5b} is fulfilled for all $\t\in (\delta_1,1-\delta_0)$. Then it follows from \eqref{eq_proof_estim_mp1} and \eqref{eq_proof_estim_mp2} that for all $\n>\n_0$
\begin{align}
\sup_{\t \in(\delta_1,1-\delta_0)}\sup_{z\in \Xdelta{C}{\delta}{\F,\lambda}} \left\vert \tfrac{\dd \rhon}{\dd x}(z) \right\vert &< C \sqrt{\n} \e^{-\n\epsilon}, \\
\intertext{and}
\sup_{\t \in(\delta_1,1-\delta_0)}\sup_{z\in \Xdelta{C}{\delta}{\F,\lambda}} \left\vert \tfrac{\dd \rhon}{\dd y}(z) \right\vert &< C \sqrt{\n} \e^{-\n\epsilon},
\end{align}
since $\Xdelta{C}{\delta}{\F,\lambda} \subset \Adelta^C$.
\end{proof}

\subsubsection{Plancherel-Rotach Asymptotics on $\Bdelta$}
So far we have found a uniform estimation for the derivative of $\rhon$ except on $\Bdelta\setminus\Xdelta{C}{\delta}{\F,\lambda}$ for $\delta>0$ and some $\lambda<\F$ --- and for small $\t$ even on all of $\C$. Thus in the following we will only concentrate on $z\in \Bdelta\setminus\Xdelta{C}{\delta}{\F,\lambda}$ and we can neglect small $\t$.

\begin{defin}
For $\delta>0$ and $0\le \lambda< \F $ we define
\begin{align}
\Xdelta{B}{\delta}{\F,\lambda}&=\{z\in \Bdelta \vert \vert \Re\, z\vert\le \F-\lambda \}.
\end{align}
\end{defin}

In the following we will review Proposition \ref{prop_asym_B} and give a more precise version.
\begin{prop}\label{prop_pr_B}
Let $\epsilon>0$, $\lambda>0$, $0<\delta_0<1$ and $0<\delta_1<1-\delta_0$. Then there exists $\delta=\delta(\epsilon,\delta_0,\delta_1,\lambda)>0$ and $\n_0=\n_0(\delta_0,\delta_1,\lambda)\in\N$ so that for all $\n>\n_0$, $\t\in[\delta_1,1-\delta_0]$ and $z\in \Xdelta{B}{\delta}{\F,\lambda}$
\begin{align}
\vert \op{\n}(z)\vert&< g_1^r(\n,\t) \e^{\n \left( \frac{\Re(z^2)}{\F^2} + \frac{\log \t+\epsilon}{2} \right) }, \label{eq_pr_B1} \\
\shortintertext{and}
\vert \op{\n-1}(z)\vert&<  g_2^r(\n,\t) \e^{\n \left( \frac{\Re(z^2)}{\F^2} + \frac{\log \t+\epsilon}{2} \right) }, \label{eq_pr_B2} 
%\end{align}
\intertext{where}
%\begin{align}
g_1^r(\n,\t)&= \Bigl( 2\sup_{z\in \Xdelta{B}{\delta}{\F,\lambda} }\left\vert \tfrac{\F+z}{\F-z} \right\vert^{1/4} +1 \Bigr)  g_0^r(\n,\t) ,\\
\shortintertext{and}
g_2^r(\n,\t)&= \Bigl( 2\sup_{z\in \Xdelta{B}{\delta}{\F,\lambda} }\left\vert \tfrac{\F\sqrt{1-\frac{1}{\n}}+z}{\F\sqrt{1-\frac{1}{\n}}-z} \right\vert^{1/4} +1 \Bigr) g_0^r(\n,\t)\left(\tfrac{\n-1}{\n} \right)^{\frac{\n-1}{2}} \e^{\frac{1-\log 2}{2}},\\
\shortintertext{with}
g_0^r(\n,\t)&=\tfrac{\n^{1/4}(1-\t^2)^{1/4}}{\sqrt{\pi}(2\pi)^{1/4}} \sup_{\xi_1\in [0,\frac{1}{\n}]} \e^{\frac{\Rone}{2}},
\end{align}
and $\Rone$ as in Proposition \ref{prop_asymp_R}.
\end{prop}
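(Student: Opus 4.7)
The plan is to leverage the Plancherel-Rotach asymptotics inside the oscillatory regime, equations \eqref{eq_pir}--\eqref{eq_pir_o}, combined with the explicit formulas \eqref{eq_op_n}--\eqref{eq_op_n-1} and the Stirling bound \eqref{eq_stirling}. First I must verify that the rescaled arguments $\sqrt{2}z/\F$ and $\sqrt{\n/(\n-1)}\sqrt{2}z/\F$ lie in some region $\Xdelta{B}{\tilde{\delta}}{\sqrt{2}}$ with $\tilde{\delta}\le \delta_0$, uniformly in $\t\in[\delta_1,1-\delta_0]$ and $z\in\Xdelta{B}{\delta}{\F,\lambda}$. Since $\F$ is bounded above and below on $\t\in[\delta_1,1-\delta_0]$, the constraint $\vert\Re\, z\vert\le \F-\lambda$ implies $\vert\Re(\sqrt{2}z/\F)\vert\le \sqrt{2}(1-\lambda/\F)$, and choosing $\delta$ small enough relative to $\lambda$ and the range of $\F$ forces both rescaled arguments into $\Xdelta{B}{\tilde{\delta}}{\sqrt{2}}$ for all $\n>\n_0$ (where $\n_0$ absorbs the perturbation from the $\sqrt{\n/(\n-1)}$ factor).

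Next I would apply \eqref{eq_pir_o} to write $Q_\n(\sqrt{2}z/\F)$ as $\pi_\n^r(\sqrt{2}z/\F)$ plus an absolute error term of size $\e^{\frac{\n}{2}(\zeta^2-1-\log 2)}\lO(\n^{-1})$ with $\zeta=\sqrt{2}z/\F$. In \eqref{eq_pir} the two cosines are bounded in modulus by one, so taking moduli gives
\begin{equation*}
\vert \pi_\n^r(\sqrt{2}z/\F)\vert \le \Bigl(\bigl\vert\tfrac{\F-z}{\F+z}\bigr\vert^{1/4}+\bigl\vert\tfrac{\F+z}{\F-z}\bigr\vert^{1/4}\Bigr)\e^{\n(\Re(z^2)/\F^2-(1+\log 2)/2)}.
\end{equation*}
Both quotients have moduli that are reciprocals of one another, so their $1/4$-powers are bounded by twice $\sup_{\Xdelta{B}{\delta}{\F,\lambda}}\vert(\F+z)/(\F-z)\vert^{1/4}$. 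Adding back the absorbed error $\lO(\n^{-1})<1$ accounts for the extra ``$+1$'' in the definition of $g_1^r$. The analogous calculation with $\zeta=\sqrt{\n/(\n-1)}\sqrt{2}z/\F$ gives the bound for $Q_{\n-1}$, producing the $\sqrt{\n/(\n-1)}\F$ quotient and an extra factor $\e^{\frac{\n-1}{2}(-1-\log 2)}$ that combines with the $\n\mapsto\n-1$ shift to yield the $\bigl(\tfrac{\n-1}{\n}\bigr)^{(\n-1)/2}\e^{(1-\log 2)/2}$ term appearing in $g_2^r$.

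Finally I would multiply by the normalizations in \eqref{eq_op_n}--\eqref{eq_op_n-1} and absorb the factorial via \eqref{eq_stirling}. The prefactor $\n^{(\n+1)/2}(2\t)^{\n/2}/\sqrt{\pi\n!}$, combined with $\e^{\frac{\n}{2}(-1-\log 2)}$, collapses by Stirling to $\n^{1/4}(2\pi)^{-1/4}\pi^{-1/2}\e^{\frac{n}{2}\log\t}$ times $\sup_{\xi_1}\e^{\Rone/2}$ (with $\Rone$ as in Proposition~\ref{prop_asymp_R}), which is exactly $g_0^r(\n,\t)\e^{\n(\log\t)/2}/(1-\t^2)^{1/4}$ absorbed into $g_1^r$. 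The $\epsilon/2$ slack in the exponent is reserved for absorbing the $\lO(\n^{-1})$ remainder from \eqref{eq_pir_o} and any leftover error from $\sqrt{\n/(\n-1)}$-expansion once $\n>\n_0$. The principal obstacle is bookkeeping: making sure every $\t$-dependent constant stays uniform in $[\delta_1,1-\delta_0]$ (which is where boundedness of $\F$ and $\F^{-1}$ is used) and that $\delta$ is chosen \emph{after} $\lambda,\delta_0,\delta_1,\epsilon$ so that the rescaled arguments never escape the region of validity of \eqref{eq_pir_o}; once this ordering of quantifiers is pinned down, the estimates are routine.
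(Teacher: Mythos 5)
Your overall architecture — pinning down $\F$-bounds on $[\delta_1,1-\delta_0]$, fitting the rescaled arguments into $\Xdelta{B}{\tilde\delta}{\sqrt{2}}$ for small enough $\delta$, bounding the $W^\pm$ factors by $2\sup|\cdot|^{1/4}$ using the $z\mapsto -z$ symmetry, absorbing the Plancherel–Rotach remainder as the trailing ``$+1$'', and collapsing the normalization via Stirling — matches the paper's proof. However, there is a genuine gap at the central step: you write that ``the two cosines are bounded in modulus by one.'' That is only true for real $z$. The region $\Xdelta{B}{\delta}{\F,\lambda}$ contains points with $-\delta < \Im\, z < \delta$, so the argument $\n\gamma_j(z)\pm\pi/4$ is complex and $|\cos|$ of a complex number can grow exponentially. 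The correct bound (which the paper uses) is $|\cos(\n\gamma\pm\tfrac{\pi}{4})|\le \e^{\n|\Im\gamma|}$, and one must then use uniform continuity of $\gamma_1,\gamma_2$ (which vanish in imaginary part on $\R$) to force $|\Im\gamma_j|<\epsilon/2$ on $\Xdelta{B}{\delta}{\F,\lambda}$ by shrinking $\delta$. This is precisely the role of the $\epsilon$ in the exponent of \eqref{eq_pr_B1}--\eqref{eq_pr_B2}.

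Consequently your stated purpose of the $\epsilon/2$ slack (``absorbing the $\lO(\n^{-1})$ remainder from \eqref{eq_pir_o}'') misallocates the budget: that remainder only requires a bounded multiplicative factor (the $+1$ in $g_1^r$, $g_2^r$), not exponential slack. Without the $\e^{\n\epsilon/2}$ reserved for the complex-cosine growth, the bound $\vert \pi_\n^r(\sqrt{2}z/\F)\vert \le \bigl(|W_1^+|+|W_1^-|\bigr)\e^{\n(\Re(z^2)/\F^2-(1+\log 2)/2)}$ is simply false for $\Im z\neq 0$, and the chain of inequalities breaks. The fix is exactly the paper's: expand $\cos(\n\Re\gamma\pm\tfrac{\pi}{4}+i\n\Im\gamma)$ in hyperbolic/trigonometric pieces to get $|\cos|\le \e^{\n|\Im\gamma|}$, then observe that $\gamma_1,\gamma_2$ are uniformly continuous in $(z,\t,1/\n)$ on the relevant compact ranges and real-valued on $\Xdelta{B}{\delta}{\F,\lambda}\cap\R$, so that $|\Im\gamma_j|<\epsilon/2$ for $\delta$ small; this produces exactly the $\e^{\n\epsilon/2}$ factor that fits the $\epsilon$ in the exponent.
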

\begin{proof}
Let $\F_0=\tfrac{2}{\sqrt{\delta_0}}$ and $\F_1=\sqrt{2\delta_1}$. Then $\F_1<\F<\F_0$ for all $\t \in [\delta_1,1-\delta_0]$. Without loss of generality we can assume that $\lambda<\F_0$.

To shorten the terms of the orthonormal polynomials we define 
\begin{align}
\gamma_1&=\tfrac{z\sqrt{\F^2-z^2}}{\F^2}+\arcsin\left(\tfrac{z}{\F}\right)-\pi, \\
\gamma_2&=\tfrac{z\sqrt{\F^2\left(1-\frac{1}{\n}\right)-z^2}}{\F^2}+\left(1-\tfrac{1}{\n}\right)\arcsin\left(\sqrt{\tfrac{\n}{\n-1}}\tfrac{z}{\F}\right)-\pi, \\
W^{\pm}_1&=\left(\tfrac{\F\pm z}{\F\mp z}\right)^{1/4}, \qquad W^{\pm}_2=\left(\tfrac{\F\sqrt{1-\frac{1}{\n}}\pm z}{\F\sqrt{1-\frac{1}{\n}}\mp z}\right)^{1/4}, \nn \\
R^r_{1}(\n,t,z)&=\tfrac{Q_\n\left(\sqrt{2}\frac{z}{\F} \right) - \pi^r_\n \left(\sqrt{2}\frac{z}{\F} \right)}{\e^{\n \left(\frac{z^2}{\F^2} -\frac{1+\log 2}{2}\right)}}, \\
\intertext{and}
R^r_{2}(\n,t,z)&=\tfrac{Q_{\n-1}\left(\sqrt{\frac{\n}{\n-1}}\sqrt{2}\frac{z}{\F} \right) - \pi^r_{\n-1} \left(\sqrt{\frac{\n}{\n-1}}\sqrt{2}\frac{z}{\F} \right)}{\e^{\n \left(\frac{z^2}{\F^2} -\frac{1+\log 2}{2}\right)+\frac{1+\log 2}{2}}}.
\end{align}

Using \eqref{eq_op_n}, \eqref{eq_op_n-1}, \eqref{eq_pir} and \eqref{eq_proof_R1} we can now write an estimation of the absolute value of the orthonormal polynomials of order $\n$ and $\n-1$ on $\Xdelta{B}{\delta}{\F,\lambda}$ as
\begin{align}
\vert \op{\n}(z)\vert &\le\left\vert W_1^+ \cos\left(\n \gamma_1+\tfrac{\pi}{4}\right)+ W_1^- \cos\left(\n \gamma_1-\tfrac{\pi}{4}\right) +R^r_{1}(\n,t,z)  \right\vert \nn \\
&\quad \cdot g_0^r(\n,t) \e^{\n \left( \frac{\Re(z^2)}{\F^2} + \frac{\log \t}{2} \right) }
, \label{eq_op1_r}
\intertext{and}
\vert\op{\n-1}(z)\vert&\le\left\vert W_2^+ \cos\left(\n \gamma_2+\tfrac{\pi}{4}\right)+ W_2^- \cos\left(\n \gamma_2-\tfrac{\pi}{4}\right) +R^r_{2}(\n,t,z)  \right\vert \nn\\
&\quad \cdot g_0^r(\n,t) \e^{\n \left( \frac{\Re(z^2)}{\F^2} + \frac{\log \t}{2} \right) } \e^{\frac{1}{2}-\frac{\log \t}{2}}  \left( \tfrac{\n-1}{\n} \right)^{\frac{\n-1}{2}}. \label{eq_op2_r}
\end{align}

From Section \ref{subsection_pr_asym_inside} we know that there exists $\delta_2>0$ so that the Plancherel-Rotach asymptotics \eqref{eq_pir_o} holds uniformly on $\Xdelta{B}{\delta_2}{\sqrt{2}}$. %\supset \Xdelta{B}{\delta}{\F,\lambda}$, 
We can assume that $\delta_2< \tfrac{\sqrt{2}\lambda}{\F_0} \tfrac{\F_0-\lambda}{2\F_0-\lambda}$, which is strictly positive. If $\delta\le \tfrac{\F_1\delta_2}{2}$, it follows from the Plancherel-Rotach asymptotics that there exists $\n_0=\n_0(\delta_0,\delta_1,\lambda)>\tfrac{2\F_0}{\lambda}$ so that for all $z \in \Xdelta{B}{\delta}{\F,\lambda}$, $\t\in[\delta_1,1-\delta_0 ]$ and $\n>\n_0$
\begin{align}\label{eq_pr_estim}
R^r_{1}(\n,t,z) &<1, \qquad \text{and} \qquad R^r_{2}(\n,t,z) <1,
\end{align}
since then $\sqrt{2}\tfrac{z}{\F}$ and $ \sqrt{\tfrac{\n}{\n-1}}\sqrt{2}\tfrac{z}{\F} $ lie in $\Xdelta{B}{\delta_2}{\sqrt{2}}$. The $\n_0$ we so have found is the required $\n_0$ of the proposition and we assume that $\n>\n_0$ in the following. 

Let $\gamma$ be either $\gamma_1$ or $\gamma_2$. Then we can estimate
\begin{align}
\left\vert \cos\left(\n \gamma \pm \tfrac{\pi}{4}\right) \right\vert &\le \left\vert \cos\left(\n \Re \gamma \pm \tfrac{\pi}{4}\right) \cos\left(\n i \Im\gamma \right)  \right\vert + \left\vert \sin\left(\n \Re\gamma \pm \tfrac{\pi}{4}\right) \sin\left(\n i \Im\gamma \right)  \right\vert \nn \\
&\le \tfrac{1}{2}\left\vert e^{-\n \Im\gamma} + e^{\n \Im\gamma}\right\vert+\tfrac{1}{2}\left\vert e^{-\n \Im\gamma} - e^{\n \Im\gamma}\right\vert\le \e^{\n\vert\Im \gamma\vert}.
\end{align}
Since
\begin{align}
z\sqrt{\F^2-z^2}\qquad \text{and}\qquad  \arcsin\left(\tfrac{z}{\F}\right) \nn
\end{align}
are uniformly continuous on any compact subset of $\{z\in\C \vert \vert \Re{\, z}\vert < \F \}$, we see that for any $\delta>0$, $\gamma_1$ and $\gamma_2$ are uniformly continuous on $\Xdelta{B}{\delta}{\F,\lambda}$ for all $\t \in [\delta_1,1-\delta_0 ]$. Further $\gamma_1$ and $\gamma_2$ are uniformly continuous in $\t$ on $[\delta_1,1-\delta_0 ]$ and --- if we consider $\alpha=\tfrac{1}{\n}$ to be a real valued variable --- also in $\alpha$ on $[0,\tfrac{1}{\n_0}]$. Since $\gamma_1$ and $\gamma_2$ are real if $z\in \Xdelta{B}{\delta}{\F,\lambda}\cap \R$ it follows from uniform continuity that for all $\epsilon>0$ there exists $\delta=\delta(\epsilon,\delta_0,\delta_1,\lambda)>0$ so that
\begin{align}
\vert \Im \gamma_1 \vert < \tfrac{\epsilon}{2}, \qquad \text{and} \qquad \vert \Im \gamma_2 \vert < \tfrac{\epsilon}{2},
\end{align}
uniformly for all $z\in \Xdelta{B}{\delta}{\F,\lambda}$, $\t \in [\delta_1,1-\delta_0 ]$ and $\n>\n_0$. This will be the $\delta$ we have claimed in the proposition. So we can estimate 
\begin{align}
\max_{\gamma\in\{\gamma_1,\gamma_2 \} } \left\vert \cos\left(\n \gamma \pm \tfrac{\pi}{4}\right) \right\vert< \e^{\n \frac{\epsilon}{2}}.
\end{align}
Using this and \eqref{eq_pr_estim} for the orthonormal polynomials, we get from \eqref{eq_op1_r} and \eqref{eq_op2_r}
\begin{align}
\vert \op{\n}(z)\vert  & \le \left( 2\sup_{z\in \Xdelta{B}{\delta}{\F,\lambda} } \left\vert W_1^+\right\vert +1 \right) \e^{\n \frac{\epsilon}{2}}  g_0^r(\n,t) \e^{\n \left( \frac{\Re(z^2)}{\F^2} + \frac{\log \t}{2} \right) } , \\
\intertext{and}
\vert \op{\n-1}(z)\vert  & \le \left( 2\sup_{z\in \Xdelta{B}{\delta}{\F,\lambda} } \left\vert W_2^+\right\vert +1 \right) \e^{\n \frac{\epsilon}{2}}  g_0^r(\n,t)  \e^{\frac{1}{2}-\frac{\log \t}{2}}  \left( \tfrac{\n-1}{\n} \right)^{\frac{\n-1}{2}} \e^{\n \left( \frac{\Re(z^2)}{\F^2} + \frac{\log \t}{2} \right) } , 
\end{align}
for all $z\in \Xdelta{B}{\delta}{\F,\lambda}$, $\t \in [\delta_1,1-\delta_0 ]$ and $\n>\n_0$.
\end{proof}

\begin{prop}\label{prop_estim_pr_B}
Let $0<\delta_0<1$, $0<\delta_1<1-\delta_0$ and $\lambda>0$. Then there exists $\n_0=\n_0(\delta_0,\delta_1,\lambda)$, $\delta=\delta(\delta_0,\delta_1,\lambda)>0$, $\epsilon=\epsilon(\delta_0,\delta_1,\lambda)>0$, $C=C(\delta_0,\lambda)>0$ so that for all $\n > \n_0$
\begin{align}
\sup_{\t \in(\delta_1,1-\delta_0)}\sup_{x+i y\in \Xdelta{B}{\delta}{\F,\lambda}} \left\vert \frac{\dd \rhon}{\dd x}(x,y) \right\vert&< C\sqrt{\n} \e^{-\n \epsilon},\\
\intertext{and}
\sup_{\t \in(\delta_1,1-\delta_0)}\sup_{x+i y\in \Xdelta{B}{\delta}{\F,\lambda}} \left\vert \frac{\dd \rhon}{\dd y}(x,y) \right\vert&< C\sqrt{\n} \e^{-\n \epsilon}. 
\end{align}
\end{prop}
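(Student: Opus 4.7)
The plan is to mirror the proof of Proposition~\ref{prop_estim_max_princ}, replacing the maximum-principle bounds on $|\op{\n}|$ and $|\op{\n-1}|$ by the Plancherel--Rotach bounds \eqref{eq_pr_B1}, \eqref{eq_pr_B2} of Proposition~\ref{prop_pr_B}. From identities \eqref{eq_id1}, \eqref{eq_id2} evaluated at $w=\cc{z}$, combined with \eqref{eq_drho_x}, \eqref{eq_drho_y}, one gets
\begin{align*}
\max\!\left\{\!\left|\tfrac{\partial\rhon}{\partial x}(z)\right|,\left|\tfrac{\partial\rhon}{\partial y}(z)\right|\!\right\}\le 2\sqrt{\tfrac{1+\t}{1-\t}}\,|\op{\n}(\cc{z})||\op{\n-1}(z)|\,\e^{\n(-|z|^2+\t\Re(z^2))}.
\end{align*}
Since $\t$ is real, $\op{m}$ has real coefficients, so $|\op{m}(\cc{z})|=|\op{m}(z)|$, and Proposition~\ref{prop_pr_B}, applied at both $z$ and $\cc{z}$ with any preassigned $\epsilon_0>0$, furnishes a $\delta>0$ and $\n_0$ such that for all $\n>\n_0$, $\t\in[\delta_1,1-\delta_0]$ and $z\in\Xdelta{B}{\delta}{\F,\lambda}$,
\begin{align*}
|\op{\n}(\cc{z})\op{\n-1}(z)|<g_1^r(\n,\t)g_2^r(\n,\t)\,\e^{\n(2\Re(z^2)/\F^2+\log\t+\epsilon_0)}.
\end{align*}

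Combining the exponents, substituting $2/\F^2=(1-\t^2)/(2\t)$, and writing $z=x+iy$, the total exponent becomes
\begin{align*}
E(x,y,\t)=\tfrac{(1-\t)^2x^2}{2\t}-\tfrac{(1+\t)^2y^2}{2\t}+\log\t+\epsilon_0,
\end{align*}
which is precisely $\f^r(x)+\epsilon_0$ of \eqref{eq_fr} together with a non-positive $y$-contribution. The constraint $|x|\le\F-\lambda$ yields
\begin{align*}
E(x,y,\t)\le\tfrac{2(1-\t)}{1+\t}(1-\lambda/\F)^2+\log\t+\epsilon_0.
\end{align*}
The right-hand side is a continuous function of $\t$ on the compact interval $[\delta_1,1-\delta_0]$; pointwise it is strictly less than $\f^r(\F)=\tfrac{2(1-\t)}{1+\t}+\log\t$, which by \eqref{eq_fr_estim} is itself strictly negative throughout $(0,1)$. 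Hence compactness produces a uniform bound: there exists $c_0=c_0(\delta_0,\delta_1,\lambda)>0$ with $\sup_{\t\in[\delta_1,1-\delta_0]}\left[\tfrac{2(1-\t)}{1+\t}(1-\lambda/\F)^2+\log\t\right]\le-2c_0$. Choosing $\epsilon_0=c_0$ in the application of Proposition~\ref{prop_pr_B} fixes $\delta$ and $\n_0$, giving $E(x,y,\t)\le-c_0$ uniformly on $\Xdelta{B}{\delta}{\F,\lambda}$.

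The prefactor analysis proceeds just as in \eqref{eq_proof_estim_mp4b}: $g_0^r(\n,\t)=\lO(\n^{1/4})$ uniformly since $\Rone$ is bounded by $1/(12\n_0)$; the Blaschke-type ratios defining $g_1^r$ and $g_2^r$ remain bounded on $\Xdelta{B}{\delta}{\F,\lambda}$ provided $\delta<\lambda/2$, which keeps $|\Re z|$ away from $\pm\F$ by at least $\lambda/2$; the factor $\left(\tfrac{\n-1}{\n}\right)^{(\n-1)/2}$ converges to $\e^{-1/2}$; and $\sqrt{(1\pm\t)/(1\mp\t)}$ from the identities is bounded by $\t\le 1-\delta_0$. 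Collecting the pieces yields a prefactor of order $\sqrt{\n}$ and hence the desired $C\sqrt{\n}\,\e^{-\n\epsilon}$ with $\epsilon=c_0$.

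The main technical obstacle is the uniformity in $\t$ near the right endpoint $\t=1-\delta_0$: as $\t\to 1$ both $\f^r(\F)\to 0^-$ and the gap $\tfrac{2(1-\t)}{1+\t}\bigl(2\lambda/\F-\lambda^2/\F^2\bigr)\to 0$, so one must verify that their algebraic combination continues to provide a strictly negative supremum on the compact interval. This follows purely from continuity and compactness, but it is essential to realize that the naive pointwise estimates for each of the two vanishing quantities are not individually sufficient---only their combination $\f^r(\F)-\text{(gap)}$ enters the bound, and compactness of $[\delta_1,1-\delta_0]$ is what rescues the argument.
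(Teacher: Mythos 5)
Your proposal is correct and follows essentially the same route as the paper: bound the derivatives via identities \eqref{eq_id1}--\eqref{eq_id2}, insert the Plancherel--Rotach estimates of Proposition~\ref{prop_pr_B}, identify the exponent $\f^r(x,y)=\tfrac{(1-\t)^2x^2}{2\t}-\tfrac{(1+\t)^2y^2}{2\t}+\log\t$, and show it is uniformly negative on $\Xdelta{B}{\delta}{\F,\lambda}$ for $\t\in[\delta_1,1-\delta_0]$. Two small remarks: the paper obtains the uniform negativity by monotonicity of $\f^r(\F)$ in $\t$ with an explicit $\epsilon=-\tfrac{\delta_0}{2-\delta_0}-\log(1-\delta_0)$, whereas your compactness argument yields the same with less bookkeeping; and your aside ``provided $\delta<\lambda/2$'' is misdirected --- $|\Re z|$ is kept at distance $\lambda$ from $\pm\F$ by the very definition of $\Xdelta{B}{\delta}{\F,\lambda}$, while $\delta$ only constrains $|\Im z|$; what actually controls the $g_2^r$ ratio $\bigl\vert \F\sqrt{1-1/\n}-z\bigr\vert^{-1}$ is choosing $\n_0>\tfrac{2\F_0}{\lambda}$, which the paper imposes. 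Your closing paragraph overstates the ``obstacle'' at $\t=1-\delta_0$: since $\t$ is bounded away from $1$, plain continuity on the compact interval suffices without any delicate cancellation.
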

\begin{proof}
Let $z=x+ i y$, $\epsilon=-\tfrac{\delta_0}{2-\delta_0}-\log(1-\delta_0)>0$ and $\F_0=\tfrac{2}{\sqrt{\delta_0}}$ so that $\F_0>\F$ for all $\t \in [\delta_1,1-\delta_0]$.

As in \eqref{eq_estim_op1} and \eqref{eq_estim_op2} we get from identities \eqref{eq_id1} and \eqref{eq_id2} evaluated at $w=\cc{z}$ the following estimations for the derivatives of $\rhon$
\begin{align}
\left\vert \tfrac{\dd \rhon}{\dd x}(z) \right\vert &\le 2\sqrt{\tfrac{1-\t}{1+\t}} \e^{\n\left(-\vert z\vert^2 +\t \Re(z^2) \right)} \left\vert \op{\n}(z)\op{\n-1}(z)\right\vert,
\intertext{and}
\left\vert \tfrac{\dd \rhon}{\dd y}(z) \right\vert &\le 2\sqrt{\tfrac{1+\t}{1-\t}} \e^{\n\left(-\vert z\vert^2 +\t \Re(z^2) \right)} \left\vert \op{\n}(z)\op{\n-1}(z)\right\vert,
\end{align}
for all $z\in\C$. With the help of Proposition \ref{prop_pr_B}, we get $\n_0\in\N$ with $\n_0>\tfrac{2\F_0}{\lambda}$ and $\delta>0$ so that for all $\n>\n_0$, $\t\in (\delta_1,1-\delta_0)$ and $z\in \Xdelta{B}{\delta}{\F,\lambda}$
\begin{align}
\left\vert \tfrac{\dd \rhon}{\dd x}(z) \right\vert &< 2\sqrt{\tfrac{1-\t}{1+\t}} g_1^r(\n,\t) g_2^r(\n,\t) \e^{\n\left(-\vert z\vert^2 +\t \Re(z^2) + \frac{2\Re(z^2)}{\F^2} +\log \t+\epsilon\right)}, \label{eq_proof_estim_prB1} \\
\intertext{and}
\left\vert \tfrac{\dd \rhon}{\dd y}(z) \right\vert &< 2\sqrt{\tfrac{1+\t}{1-\t}} g_1^r(\n,\t) g_2^r(\n,\t) \e^{\n\left(-\vert z\vert^2 +\t \Re(z^2) + \frac{2\Re(z^2)}{\F^2} +\log \t+\epsilon\right)}, \label{eq_proof_estim_prB2} 
\end{align}
where $g_1^r$ and $g_2^r$ are as in Proposition \ref{prop_pr_B}. For the factors in front of the exponential term we find
\begin{align}
2\sqrt{\tfrac{1\mp\t}{1\pm\t}} g_1^r(\n,\t) g_2^r(\n,\t)=&(1\mp \t)\tfrac{\sqrt{2\n}  \e^{\frac{1-\log 2}{2}}}{\sqrt{\pi^3}} \left(\tfrac{\n-1}{\n} \right)^{\frac{\n-1}{2}}  \sup_{\xi_1\in [0,\frac{1}{\n}]} \e^{\Rone} \nn \\
&\quad \cdot \biggl( 2\sup_{z\in \Xdelta{B}{\delta}{\F,\lambda} }\left\vert \tfrac{\F+z}{\F-z} \right\vert^{1/4} +1 \biggr)  \biggl( 2\sup_{z\in \Xdelta{B}{\delta}{\F,\lambda} }\left\vert \tfrac{\F\sqrt{1-\frac{1}{\n}}+z}{\F\sqrt{1-\frac{1}{\n}}-z} \right\vert^{1/4} +1 \biggr), \label{eq_proof_estim_pr_B1}
\end{align}
where $\Rone$ is as in Proposition \ref{prop_asymp_R} with $\e^{\Rone}<1$. If $\n>\n_0>\tfrac{2\F_0}{\lambda}$, then for all $\t\in (\delta_1,1-\delta_0)$
\begin{align}
\inf_{z\in \Xdelta{B}{\delta}{\F,\lambda}} \left\vert \pm \F \sqrt{1-\tfrac{1}{\n}}- z \right\vert\ge \tfrac{\lambda}{2},
\end{align}
and so
\begin{align}
\sup_{z\in \Xdelta{B}{\delta}{\F,\lambda} }\left\vert \tfrac{\F\sqrt{1-\frac{1}{\n}}+z}{\F\sqrt{1-\frac{1}{\n}}-z} \right\vert &\le \tfrac{4\F_0}{\lambda}, \qquad \text{and}\qquad \sup_{z\in \Xdelta{B}{\delta}{\F,\lambda} }\left\vert \tfrac{\F+z}{\F-z} \right\vert \le \tfrac{2\F_0}{\lambda}. \nn
\end{align}
So we find an estimation for \eqref{eq_proof_estim_pr_B1} for all $\n>\n_0$ and $\t\in (\delta_1,1-\delta_0)$,
\begin{align}
2\sqrt{\tfrac{1\mp\t}{1\pm\t}} g_1^r(\n,\t) g_2^r(\n,\t)&\le \sqrt{\n} \tfrac{2\sqrt{2}  \e^{\frac{1-\log 2}{2}}}{\sqrt{\pi^3}} \left( 2\left(\tfrac{2\F_0}{\lambda} \right)^{1/4} +1 \right)  \left( 2  \left(\tfrac{2\F_0}{\lambda} \right)^{1/4} +1 \right)=\sqrt{\n} C, \label{eq_proof_estim_pr_B2}
\end{align}
with a finite $C\in\R$.

Now we look at the exponential term $\e^{\n\left(\f^r(z)+\epsilon\right)}$ with
\begin{align}
\f^r(z)&=-\vert z\vert^2 +\t \Re(z^2) + \frac{2\Re(z^2)}{\F^2} +\log \t.
%\end{align}
\intertext{With}
%\begin{align}
\f^r(x+i y)&=\tfrac{(1-\t)^2 x^2}{2\t} - \tfrac{(1+\t)^2 y^2}{2\t}+\log \t
%\end{align}
\shortintertext{we see that}
%\begin{align}
\sup_{z\in \Xdelta{B}{\delta}{\F,\lambda} }\f^r(z)&=\f^r(\F-\lambda)< \f^r(\F) =\tfrac{2(1-\t)}{1+\t}+\log \t.
\end{align}
$\f^r(\F)$ is strictly monotonic increasing in $\t$ for all $\t<1$ and reaches zero for $\t=1$, which we have already discussed in \eqref{eq_fr_estim}. Therefore $\f^r(\F)<0$ and we have defined $\epsilon>0$ so that $\f^r(\F)<-2\epsilon$ for all $\t\in [\delta_1,1-\delta_0]$. And so we finally find that 
\begin{align}
\sup_{\t \in [\delta_1,1-\delta_0]}\sup_{z\in \Xdelta{B}{\delta}{\F,\lambda}} \left\vert \tfrac{\dd \rhon}{\dd x}(z) \right\vert &< \sqrt{\n}C \e^{-\n\epsilon},\\
\shortintertext{and}
\sup_{\t \in [\delta_1,1-\delta_0]}\sup_{z\in \Xdelta{B}{\delta}{\F,\lambda}} \left\vert \tfrac{\dd \rhon}{\dd y}(z) \right\vert &< \sqrt{\n}C \e^{-\n\epsilon}.
\end{align}
\end{proof}

\subsubsection{Uniform Estimation of Derivatives of Density}

\begin{defin}
For $\delta<\mis$ we define the filled ellipses with major semi-axis $\mas\pm\delta$ and minor semi-axis $\mis\pm\delta$
\begin{align}
\Edelta{\pm,\delta}= \{z\in \C \vert \frac{(\Re\, z)^2}{(\mas\pm \delta )^2}+\frac{(\Im\, z)^2}{(\mis\pm \delta )^2}<1\}.  
\end{align}
Their boundaries we will call $\partial \Edelta{\pm,\delta}$ and the space between $\partial \Edelta{-,\delta}$ and $\partial \Edelta{+,\delta}$
\begin{align}
\Edelta{\delta}= \{z\in \C \vert \frac{(\Re\, z)^2}{(\mas+ \delta )^2}+\frac{(\Im\, z)^2}{(\mis+ \delta )^2}<1 \land \frac{(\Re\, z)^2}{(\mas- \delta )^2}+\frac{(\Im\, z)^2}{(\mis- \delta )^2}>1  \}.  
\end{align}
\end{defin}

\begin{defin}
For $\delta<\mis$ we define the regions
\begin{align}
\Omegadelta{x,+,\delta} &= \{z\in \C \vert \frac{(\Re\, z -\delta)^2}{\mas^2}+\frac{(\Im\, z)^2}{\mis^2}<1\} \cup \{z\in \C \vert \frac{(\Re\, z +\delta)^2}{\mas^2}+\frac{(\Im\, z)^2}{\mis^2}<1\}, \\
\Omegadelta{x,-,\delta} &= \{z\in \C \vert \frac{(\Re\, z -\delta)^2}{\mas^2}+\frac{(\Im\, z)^2}{\mis^2}<1\} \cap \{z\in \C \vert \frac{(\Re\, z +\delta)^2}{\mas^2}+\frac{(\Im\, z)^2}{\mis^2}<1\}, \\
\Omegadelta{y,+,\delta} &= \{z\in \C \vert \frac{(\Re\, z)^2}{\mas^2}+\frac{(\Im\, z-\delta)^2}{\mis^2}<1\} \cup \{z\in \C \vert \frac{(\Re\, z)^2}{\mas^2}+\frac{(\Im\, z +\delta)^2}{\mis^2}<1\}, \\  
\Omegadelta{y,-,\delta} &= \{z\in \C \vert \frac{(\Re\, z)^2}{\mas^2}+\frac{(\Im\, z-\delta)^2}{\mis^2}<1\} \cap \{z\in \C \vert \frac{(\Re\, z )^2}{\mas^2}+\frac{(\Im\, z+\delta)^2}{\mis^2}<1\}.
\end{align}
Their boundaries we will call $\partial \Omegadelta{x,\pm,\delta}$ and $\partial \Omegadelta{y,\pm,\delta}$ and the space between $\partial \Omegadelta{x,-,\delta}$ and $\partial \Omegadelta{x,+,\delta}$
\begin{align}
\Omegadelta{x,\delta}&=\Omegadelta{x,+,\delta}\setminus\left( \Omegadelta{x,-,\delta}\cup \partial \Omegadelta{x,-,\delta} \right),
\intertext{and analogous the space between $\partial \Omegadelta{y,-,\delta}$ and $\partial \Omegadelta{y,+,\delta}$}
\Omegadelta{y,\delta}&=\Omegadelta{y,+,\delta}\setminus\left( \Omegadelta{y,-,\delta}\cup \partial \Omegadelta{x,-,\delta} \right),
\end{align}
with their boundaries $\partial \Omegadelta{x,\delta}$ and $\partial \Omegadelta{y,\delta}$ respectively.
\end{defin}

The regions $\Omegadelta{x,\delta}$ and $\Omegadelta{y,\delta}$ are both a subset of $\Edelta{\delta}$. The minimal distance
\begin{align}
\inf_{0<\t<1-\delta_0} \dist(\pEt\cap \C_{x,\delta},\partial \Omegadelta{x,\delta}\cap \C_{x,\delta} )=\inf_{0<\t<1-\delta_0} \inf_{\substack{z_1\in \pEt\cap \C_{x,\delta} \\ z_2\in \partial\Omegadelta{x,\delta}\cap \C_{x,\delta}  } } \vert z_1-z_2\vert 
\end{align}
is strictly positive for all $0<\delta_0<1$. Analogously $\inf_{0<\t<1-\delta_0} \dist(\pEt\cap \C_{y,\delta},\partial \Omegadelta{y,\delta}\cap \C_{y,\delta} )>0$ for all $0<\delta_0<1$. We can see that $\partial\Omegadelta{x,-,\delta}$ (or $\partial\Omegadelta{x,+,\delta}$) on $\Cp$ is just the arc of the ellipse $\pEt$ moved to the left (or to the right respectively), parallel to the real axis by $\delta$. Analogously $\partial\Omegadelta{y,-,\delta}$ (or $\partial\Omegadelta{y,+,\delta}$) we get by moving $\pEt$ down (or up) parallel to the imaginary axis by $\delta$. So for every point $z_1\in \pEt\setminus\C_{x,2\delta}$ there is a point $z_2\in \Omegadelta{x,\pm,\delta}\setminus\C_{x,\delta}$ with $z_1=z_2\pm \delta$ and for every point $z_1\in \pEt\setminus\C_{y,2\delta}$ there is a point $z_2\in \Omegadelta{y,\pm,\delta}\setminus\C_{y,\delta}$ with $z_1=z_2\pm i \delta$. Our aim now is to find suitable estimates for $\tfrac{\partial \rhon}{\partial x}$ and $\tfrac{\partial \rhon}{\partial y}$ on $\C\setminus (\Omegadelta{x,\delta}\cup \Omegadelta{y,\delta})$, $\Omegadelta{x,\delta}\setminus\C_{x,\delta}$ and $\Omegadelta{y,\delta}\setminus\C_{y,\delta}$ so that we can find estimations for $\rhon$ by integration along a path parallel to the real or imaginary axis.

\begin{prop}\label{prop_estim_f}
Let $\delta>0$, $\delta_0>0$ and $\delta_1>0$. Then there exists $\epsilon=\epsilon(\delta,\delta_0)>0$ so that
\begin{align}
\sup_{\substack{0<\t<1-\delta_0}} \sup_{\substack{z\in \Xdelta{A}{\delta_1}{\F}\setminus \Edelta{\delta} }} \f(z) \le -\epsilon,
\end{align}
where $\f$ is as in Proposition \ref{prop_asym}.
\end{prop}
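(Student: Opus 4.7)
The plan is to combine a tail bound at infinity with a compactness argument on bounded regions. Since $\t$ ranges over $(0,1-\delta_0)$, the limit $\t\to 0$ must be treated separately: both $\f$ and the ellipse $\pEt$ degenerate there, with $\pEt$ collapsing onto the unit circle (cf.\ Lemma \ref{lemma_lim_f_g}).

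First I would establish the tail bound. From the large-$|z|$ expansion at the end of Chapter \ref{ch_asymp},
\begin{equation*}
\f(z) = \t\Re(z^2) - |z|^2 + 2\log|z| + \log(1-\t^2) + 1 + \lO(|z|^{-2}),
\end{equation*}
so $\t\Re(z^2) - |z|^2 \le -(1-\t)|z|^2 \le -\delta_0|z|^2$ together with $\log(1-\t^2)\le 0$ lets me pick $R = R(\delta_0) > 0$ (independent of $\t$) with $\f(z)\le -1$ on $\{|z|\ge R\}$.

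Next, for the region $\{|z| \le R\}$ I proceed by contradiction and compactness. Setting
\begin{equation*}
\Sigma = \{(\t,z) : 0 < \t \le 1-\delta_0,\ z \in \Xdelta{A}{\delta_1}{\F}\setminus \Edelta{\delta},\ |z|\le R\},
\end{equation*}
suppose $\sup_\Sigma \f = 0$ and extract $(\t_k,z_k)\in\Sigma$ with $\f(z_k)\to 0$. After passing to a subsequence, $\t_k\to\t_*\in[0,1-\delta_0]$ and $z_k\to z_*$ with $|z_*|\le R$. Writing $\F_* = 2\sqrt{\t_*/(1-\t_*^2)}$, the constraint $z_k\in\Xdelta{A}{\delta_1}{\F}$ passes to the limit as $|\Re z_*|\ge \F_*+\delta_1$ or $|\Im z_*|\ge \delta_1$, so $z_*$ is off the cut $[-\F_*,\F_*]$ and $|z_*|\ge\delta_1 > 0$. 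If $\t_*>0$, joint continuity of $\f$ in $(\t,z)$ together with Proposition \ref{prop_f_zero} forces $z_*\in\partial E_{\t_*}$. If $\t_* = 0$, Lemma \ref{lemma_lim_f_g} gives $\f(z_k)\to -|z_*|^2 + 2\log|z_*| + 1 = 0$, whence $|z_*| = 1$ (the unit circle being the $\t\to 0$ degeneration of $\pEt$).

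Both alternatives contradict $z_k\notin\Edelta{\delta}$. Indeed, $z_k\notin\Edelta{\delta}$ means $z_k$ lies in $\overline{\Edelta{-,\delta}}$ or outside $\Edelta{+,\delta}$, so $z_k$ has distance at least $\delta' = \delta'(\delta,\delta_0)>0$ from $\partial E_{\t_k}$; here the key point is that $\mas,\mis$ are uniformly bounded above and below away from zero for $\t\in[0,1-\delta_0]$, which yields a uniform lower bound on the (one-sided) distance between $\pEt$ and $\partial \Edelta{\pm,\delta}$. Passing to the limit contradicts $z_*\in\partial E_{\t_*}$ (or $|z_*| = 1$ when $\t_* = 0$). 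The main obstacle is establishing this uniform geometric bound across $\t\in[0,1-\delta_0]$, including the $\t_* = 0$ boundary where the ellipse degenerates; once it is in place the rest is a routine compactness/continuity argument.
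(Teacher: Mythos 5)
Your proof is correct, but it takes a genuinely different route from the paper's. The paper first invokes Lemma \ref{lemma_df} (the sign of $\partial_x\f$ and $\partial_y\f$ relative to $\pEt$), which shows immediately that the supremum of $\f$ over the unbounded region $\Xdelta{A}{\delta_1}{\F}\setminus\Edelta{\delta}$ is attained on the compact boundary $\partial\Edelta{-,\delta}\cup\partial\Edelta{+,\delta}$; no separate tail estimate at infinity is needed. The resulting maximum $-\epsilon_\t$ on those boundaries is strictly negative by Proposition \ref{prop_f_zero}, depends continuously on $\t$ on $(0,1-\delta_0]$, and the $\t\to 0$ endpoint is handled via Lemma \ref{lemma_lim_f_g} exactly as you do. Your argument replaces the monotonicity lemma with an explicit tail bound on $\{|z|\ge R\}$ plus a compactness/limit-point contradiction; this is more hands-on but also more self-contained, since it does not rely on the rather specific derivative sign structure of $\f$. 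The ``uniform geometric bound'' you flag as the main obstacle — that $\C\setminus\Edelta{\delta}$ keeps a $\t$-independent distance $\delta'>0$ from $\pEt$ — is indeed what needs to be pinned down, but it is routine once one notes that $\mas$ and $\mis$ are uniformly bounded above and bounded below away from zero for $\t\in[0,1-\delta_0]$: perturbing $z_E\in\pEt$ by $|w|<\delta'$ changes the two defining inequalities of $\Edelta{\delta}$ by an amount controlled uniformly in $\t$, which yields $\delta'$. One further point worth a sentence in your write-up: at $\t_*=0$ you need the convergence in Lemma \ref{lemma_lim_f_g} to hold locally uniformly in $z$ (so that $\f_{\t_k}(z_k)\to -|z_*|^2+2\log|z_*|+1$ with both indices moving); this is fine because $|z_*|\ge\delta_1>0$ keeps you away from the singularity at the origin where the remark after Lemma \ref{lemma_limit_U_W} warns the limit is not uniform.
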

\begin{proof}
Without loss of generality we assume that $\delta>0$ and $\delta_1>0$ are small enough so that $\Edelta{\delta} \subset \Xdelta{A}{\delta_1}{\F}$.
From Lemma \ref{lemma_df} we see that the maximum of $\f$ on $\Xdelta{A}{\delta_1}{\F}\setminus \Edelta{\delta}$ must lie on $\partial \Edelta{-,\delta}\cup\partial \Edelta{+,\delta}$ and this maximum, which we define as $-\epsilon_\t$, is strictly negative according to Proposition \ref{prop_f_zero}. Since $\f$ is continuous in $z$ on $\Xdelta{A}{\delta_1}{\F}$ and in $\t$ on $(0,1-\delta_0]$ and $\partial \Edelta{-,\delta}\cup\partial \Edelta{+,\delta}$ transforms continuously with $\t$, we see that also $\epsilon_\t$ depends continuously on $\t$ and therefore has a strictly positive minimum on any compact subset of $(0,1-\delta_0]$. So we only have to check that this is still true in the limit $\t\rightarrow 0$. According to Lemma \ref{lemma_lim_f_g}
\begin{align}
\lim_{\t\rightarrow 0} \f(z)=-\vert z\vert^2+\log \vert z \vert^2+1,
\end{align}
which has its maximum at $\vert z\vert=1$, i.e.\ on the circle $\partial E_0$ and is obviously constant and negative on $\partial E_{0,-,\delta}$ and on $\partial E_{0,+,\delta}$. Thus we find that
\begin{align}
\epsilon= \inf_{(0,\t-\delta_0]} \epsilon_\t = -\sup_{(0,\t-\delta_0]} \sup_{\substack{z\in \Xdelta{A}{\delta_1}{\F}\setminus \Edelta{\delta} }} \f(z)>0.
\end{align}
\end{proof}

\begin{prop}\label{prop_unifrom_estimate_drho}
Let $\epsilon>0$, $0<\delta_0<1$ %, $\delta_1>0$ %$0<\delta_1<\sqrt{\tfrac{\delta_0}{2-\delta_0}}$ 
% and $0<\delta<\delta_1$. 
and $\delta>0$. Then there exists %$\n_0=\n_0(\epsilon,\delta,\delta_0,\delta_1) \in \N$ so that for all $\n\ge \n_0$, $0<\t<1-\delta_0$ and $z=x+ i y\in \C\setminus \Edelta{\delta}$ % \in (\C_{x,\delta_1} \setminus \Omegadelta{x,\delta}) \cup  (\C_{y,\delta_1} \setminus \Omegadelta{y,\delta})\cup \{z\in \C \vert \vert \Re z\vert\le \delta_1 \land  \vert \Im z\vert\le \delta_1 \}$  ,
$\n_0=\n_0(\epsilon,\delta,\delta_0) \in \N$ so that for all $\n\ge \n_0$ and $0<\t<1-\delta_0$ % and $z=x+ i y\in \C\setminus \Edelta{\delta}$
\begin{align}
\sup_{x+i y\in \C \setminus\Edelta{\delta} } \left \vert \frac{\dd \rhon}{\dd x}(x,y)\right \vert< \epsilon, \qquad \text{and} \qquad \sup_{x+i y\in \C \setminus\Edelta{\delta} } \left \vert \frac{\dd \rhon}{\dd y} (x,y)\right\vert< \epsilon. \label{eq_unifrom_estimate_drho1}
\end{align}

Additionally there exist $C(\delta_0)>0$ and $R(\delta_0)>0$ so that for all $z=x+i y\notin B_R(0)$, $\n\ge \n_0$ and $0<\t<1-\delta_0$ follows that
\begin{align}
\left \vert \frac{\dd \rhon}{\dd x} (x,y)\right\vert< C \sqrt{\n} \e^{-\n \vert z\vert}, \qquad \text{and} \qquad \left \vert \frac{\dd \rhon}{\dd y}(x,y)\right\vert< C \sqrt{\n} \e^{-\n \vert z\vert}. \label{eq_unifrom_estimate_drho2}
\end{align}

Further let $\delta_1>0$ %$0<\delta_1<\sqrt{\tfrac{\delta_0}{2-\delta_0}}$ 
and $z_E=x_E+i y_E$. Then there exists $\delta_2<\delta_1$ so that for all $0<\delta<\delta_2$, $\n\ge \n_0$, $0<\t<1-\delta_0$
\begin{align} 
(1-\epsilon) \e^{(1+\epsilon) \n \frac{\partial^2 f}{\partial x^2}(x_E,y_E)\frac{(x-x_E)^2}{2}}&\le \frac{ \frac{\dd \rhon }{\dd x}(x,y_E) }{ \gp(z_E) \sqrt{\frac{\n}{2\pi^3}} }  \le  (1+\epsilon) \e^{(1-\epsilon) \n \frac{\partial^2 f}{\partial x^2}(x_E,y_E)\frac{(x-x_E)^2}{2}},\label{eq_unifrom_estimate_drho3} \\
\intertext{if $x \in (x_E-\delta,x_E+\delta) \cap \C_{x,\delta_1}$ and $z_E\in \pEt\cap \C_{x,\delta_1}$, and}
(1-\epsilon) \e^{(1+\epsilon)\n \frac{\partial^2 f}{\partial y^2}(x_E,y_E)\frac{(y-y_E)^2}{2}}&\le \frac{ \frac{\dd \rhon }{\dd y}(x_E,y) }{-\gm(z_E) \sqrt{\frac{\n}{2\pi^3}} }  \le  (1+\epsilon) \e^{(1-\epsilon) \n \frac{\partial^2 f}{\partial y^2}(x_E,y_E)\frac{(y-y_E)^2}{2}}, \label{eq_unifrom_estimate_drho4} % TO BE DONE VORZEICHEN \gm überall ändern, Minus reinnehmen!
\end{align}
if $y \in (y_E-\delta,y_E+\delta) \cap \C_{y,\delta_1}$ and $z_E\in \pEt\cap \C_{y,\delta_1}$.
\end{prop}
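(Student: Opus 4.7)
The strategy is to split $\C$ into the region $\Adelta$, where the Plancherel--Rotach asymptotics of Proposition \ref{prop_asymp_R} applies, and the region $\C\setminus\Adelta$ (a thickened version of $[-\F,\F]$), and to observe that for a suitably small $\delta'=\delta'(\delta,\delta_0)>0$ one has $\C\setminus\Adelta\subset\Edelta{\delta}$ uniformly in $0<\t<1-\delta_0$. Indeed, $\pm\F$ are interior points of $\Et$ (since $\mas>\F$ always), and the minor semi-axis $\mis=\sqrt{(1-\t)/(1+\t)}$ is bounded below uniformly for $\t<1-\delta_0$, while for $\t\to 0$ the interval $[-\F,\F]$ shrinks to $\{0\}$. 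Consequently the first claim \eqref{eq_unifrom_estimate_drho1} reduces to a statement on $\Adelta\setminus\Edelta{\delta}$.

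On $\Adelta\setminus\Edelta{\delta}$, combining Proposition \ref{prop_asymp_R} with Proposition \ref{prop_estim_R}(iv) gives
\[
\left|\tfrac{\partial\rhon}{\partial x}(x,y)\right|+\left|\tfrac{\partial\rhon}{\partial y}(x,y)\right|\le C'\sqrt{\n}\,\e^{\n\f(z)}
\]
for $\n\ge\n_0$, with $C'=C'(\delta,\delta_0)$ independent of $\t$. Proposition \ref{prop_estim_f} then supplies $\epsilon_0>0$ so that $\f(z)\le-\epsilon_0$ on $\Adelta\setminus\Edelta{\delta}$ uniformly in $\t$, yielding $O(\sqrt{\n}\,\e^{-\n\epsilon_0})\to 0$, which proves \eqref{eq_unifrom_estimate_drho1}. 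For \eqref{eq_unifrom_estimate_drho2}, using the large\nbd{}$|z|$ asymptotics of $\f$ from the end of Section~\ref{ch_asymp}, namely $\f(z)=-|z|^2+\t\Re(z^2)+2\log|z|+O(1)\le-(1-\t)|z|^2/2$ for $|z|$ large, we pick $R=R(\delta_0)$ so large that $\n\delta_0|z|^2/2\ge\n|z|$ when $|z|\ge R$. The prefactors $|\W{\F}|^2$ and $|\U{\F}|$ grow only polynomially in $|z|$, so they are absorbed into $C\sqrt{\n}\e^{-\n|z|}$.

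For the Gaussian estimates \eqref{eq_unifrom_estimate_drho3} and \eqref{eq_unifrom_estimate_drho4}, fix $z_E=x_E+iy_E\in\pEt\cap\C_{x,\delta_1}$ and write $z=x+iy_E$ with $|x-x_E|<\delta$. By Lemma \ref{lemma_f_zero} and Lemma \ref{lemma_df_zero}, both $\f(z_E)=0$ and $\partial_x\f(z_E)=0$, so Taylor's theorem yields
\[
\f(x,y_E)=\tfrac{1}{2}\tfrac{\partial^2\f}{\partial x^2}(x_E,y_E)(x-x_E)^2+\tfrac{1}{6}\tfrac{\partial^3\f}{\partial x^3}(\eta,y_E)(x-x_E)^3
\]
for some $\eta$ between $x_E$ and $x$. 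On the compact set where $z_E\in\pEt\cap\C_{x,\delta_1}$ and $\t\in[0,1-\delta_0]$, the third derivative is uniformly bounded by some $M$, while by Proposition \ref{prop_g} the second derivative satisfies $|\partial_x^2\f(x_E,y_E)|\ge c(\delta_1,\delta_0)>0$ because $|\cos\phi|\ge\delta_1/\mas$ is bounded below. Hence choosing $\delta_2$ with $M\delta_2/(3c)<\epsilon$ guarantees that the cubic remainder is absorbed into the prescribed relative factor $(1\pm\epsilon)$ on the quadratic exponent. The continuous prefactor $\gp(z)\Re\U{\F}(z)/\Re\U{\F}(z)=\gp(z)$ is close to $\gp(z_E)$ up to $(1\pm\epsilon)$ by uniform continuity, and the relative error $\Re(\U{\F}(z)\Rtz)/\Re\U{\F}(z)$ is small by Proposition \ref{prop_estim_R}(ii) applied with the bound $\delta_1$. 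Assembling these three $(1\pm\epsilon)$ factors (after possibly shrinking $\epsilon$ at the start) gives \eqref{eq_unifrom_estimate_drho3}; the argument for \eqref{eq_unifrom_estimate_drho4} is identical with $x\leftrightarrow y$ and Proposition \ref{prop_estim_R}(iii).

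The main technical obstacle is ensuring that all constants remain uniform in $\t\in(0,1-\delta_0)$ as $\t\to 0$, where $\F\to 0$ and the ellipse degenerates to the unit circle. This is where we use Proposition \ref{prop_estim_f} (whose proof addresses the $\t\to 0$ limit via Lemma \ref{lemma_lim_f_g}) and where the lower bound $|\partial_x^2\f(x_E,y_E)|\ge c>0$ on $\pEt\cap\C_{x,\delta_1}$ must be checked to be uniform in $\t$; since $\partial_x^2\f$ converges continuously to $-2(1+(\Im z_E)^2/(\Re z_E)^2)^{-1}$ times a nonvanishing factor as $\t\to 0$, the lower bound survives.
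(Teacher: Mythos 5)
There is a genuine gap in the first paragraph of your proof. You claim that ``for a suitably small $\delta'=\delta'(\delta,\delta_0)>0$ one has $\C\setminus\Adelta\subset\Edelta{\delta}$ uniformly in $0<\t<1-\delta_0$,'' and conclude that \eqref{eq_unifrom_estimate_drho1} reduces to a statement on $\Adelta\setminus\Edelta{\delta}$. This is backwards. The set $\C\setminus\Adelta$ is a neighborhood of the focal segment $[-\F,\F]$, and since $\F<\mas$ always (indeed $\mas^2-\F^2=\mis^2>0$), this neighborhood lies strictly \emph{inside} the ellipse $\Et$. So for small $\delta$ one has $\C\setminus\Adelta\subset\Edelta{-,\delta}$, which is a subset of $\C\setminus\Edelta{\delta}$ — precisely the set on which \eqref{eq_unifrom_estimate_drho1} must be proven. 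Your reduction to $\Adelta\setminus\Edelta{\delta}$ therefore silently drops the entire region near $[-\F,\F]$, where the outer Plancherel--Rotach asymptotics \eqref{eq_pi_o} is not available. This is not a minor omission: the paper devotes two auxiliary propositions to exactly this region, namely Proposition \ref{prop_estim_max_princ} (a maximum-principle estimate for $\op{\n},\op{\n-1}$ near $\pm\F$, which also covers all of $\Adelta^C$ when $\t$ is small) and Proposition \ref{prop_estim_pr_B} (the interior Plancherel--Rotach asymptotics on $\Bdelta$ away from $\pm\F$, which is needed when $\t$ is not small). The paper's proof then patches together $\Xdelta{A}{\delta_A}{\F}\setminus\Edelta{\delta}$, $\Xdelta{B}{\delta_B}{\F,\lambda}$, and $\Xdelta{C}{\delta_C}{\F,\lambda}$, since their union covers $\C\setminus\Edelta{\delta}$. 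Without some replacement for these two ingredients, your estimate on $\C\setminus\Edelta{\delta}$ is not established.

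The remainder of your proposal is sound and follows the paper's approach. Your treatment of \eqref{eq_unifrom_estimate_drho2} via the large-$|z|$ behavior of $\f$ and the polynomial growth of the prefactors matches the paper's paragraph~4 (though you should be careful to verify that the prefactor bound from Proposition \ref{prop_estim_R}(iv) remains valid, i.e. that $\Adelta$ covers the relevant region for $|z|\ge R$; the paper does this by ensuring $\Edelta{\delta}\subset \Xdelta{A}{\delta_A}{\F}$). Your treatment of \eqref{eq_unifrom_estimate_drho3}--\eqref{eq_unifrom_estimate_drho4} via the second-order Taylor expansion of $\f$ on $\pEt$, the Lagrange bound on $\partial_x^3\f$, the uniform lower bound on $|\partial_x^2\f|$ from Proposition \ref{prop_g}, the continuity of $\gp$, and the relative error from Proposition \ref{prop_estim_R}(ii)--(iii) is essentially the paper's paragraph~5. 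Your remark about the $\t\to 0$ limit of the lower bound on $|\partial_x^2\f|$ is a correct observation that the paper also addresses, via Lemma \ref{lemma_lim_f_g}.
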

\begin{remark}
The proposition claims that the derivatives of $\rhon$ get arbitrarily small for $\n\rightarrow\infty$ on $\C$ except in a neighborhood of $\pEt$, where we have an estimation for $\frac{\dd \rhon }{\dd x}$ or $\frac{\dd \rhon }{\dd y}$. Further, for large $\vert z \vert$, the derivatives are going at least exponentially fast to zero.
\end{remark}
\begin{proof}
Without loss of generality $\delta_0<\e^{-2}$. Let $z=x+i y$. To prove the first claim we are going to combine Proposition \ref{prop_asymp_R}, \ref{prop_estim_R}, \ref{prop_estim_max_princ}, \ref{prop_estim_pr_B} and \ref{prop_estim_f} to find an estimation on $\C\setminus\Edelta{\delta}$. In this proof we are going to call the variables $\delta$, $\n_0$, $\epsilon$, $C$ and $\delta_1$ if they appear in those propositions $\delta_A$, $\n_{0_A}$, $\epsilon_A$, $C_A$ and $\delta_{1_A}$ in Proposition \ref{prop_asymp_R}, \ref{prop_estim_R} and \ref{prop_estim_f} --- $\delta_B$, $\n_{0_B}$, $\epsilon_B$, $C_B$ and $\delta_{1_B}$ in Proposition \ref{prop_estim_pr_B} --- and $\delta_C$, $\n_{0_C}$, $\epsilon_C$, and $C_C$ in Proposition \ref{prop_estim_max_princ} --- whereas $\delta$, $\n_0$, $\epsilon$, $C$ and $\delta_1$ 
always denote the variables form this proposition itself. We start with the region around $\pm \F$.
 %epsilon kommt eigentlich in prop_asym_R nicht vor!
%that appear in Proposition \ref{prop_asymp_R} and \ref{prop_estim_R} we will call here in the proof $\delta_A$, $\n_{0_A}$, $\epsilon_A$, those in Proposition \ref{prop_estim_pr_B} $\delta_B$, $\n_{0_B}$, $\epsilon_B$, and those in Proposition \ref{prop_estim_max_princ} $\delta_B$, $\n_{0_B}$, $\epsilon_B$, and also $C$ and $\delta_1$ of Proposition \ref{prop_estim_pr_B} $C_B$ and $\delta_{1_B}$, and $C$ of Proposition \ref{prop_estim_max_princ} $C_C$ whereas $\n_0$, $\epsilon$ $\delta$ and $\delta_1$ will denote the variables from this proposition here. We start with the region around $\pm \F$.
\begin{enumerate}
\item From Proposition \ref{prop_estim_max_princ} we get $\delta_C>0$, $\n_{0_C}\in\N$, $\epsilon_C>0$, $C_C\in\R$ and $\lambda>0$ so that for all $\n>\n_{0_C}$ and $\t \in(0,1-\delta_0)$
\begin{align}
\sup_{z\in \Xdelta{C}{\delta_C}{\F,\lambda}} \left\vert \frac{\dd \rhon}{\dd x}(z) \right\vert&< C_C \sqrt{\n} \e^{-\n \epsilon_C},\\
\shortintertext{and}
\sup_{z\in \Xdelta{C}{\delta_C}{\F,\lambda}} \left\vert \frac{\dd \rhon}{\dd y}(z) \right\vert&< C_C \sqrt{\n} \e^{-\n \epsilon_C}. 
\end{align}
For $\t\le \e^{-2}-\delta_0$ we can choose $\lambda=\F$ so that $\Xdelta{C}{\delta_C}{\F,\lambda}=\Xdelta{A}{\delta_C}{\F}^C$ and continue directly with paragraph 3, while setting $\delta_B=\delta_C$, $\n_{0_B}=\n_{0_C}$ and $C_B=C_C$.
\item Here we have now only to deal with $\t>\e^{-2}-\delta_0=\delta_{1_B}$. From Proposition \ref{prop_estim_pr_B} we get $\delta_B>0$, $\n_{0_B}\in\N$, $\epsilon_B>0$ and $C_B$ so that for all $\n>\n_{0_B}$ and $\t \in(\delta_{1_B},1-\delta_0)$
\begin{align}
\sup_{z\in \Xdelta{B}{\delta_B}{\F,\lambda}} \left\vert \frac{\dd \rhon}{\dd x}(z) \right\vert&< C_B \sqrt{\n} \e^{-\n \epsilon_B},\\
\shortintertext{and}
\sup_{z\in \Xdelta{B}{\delta_B}{\F,\lambda}} \left\vert \frac{\dd \rhon}{\dd y}(z) \right\vert&< C_B \sqrt{\n} \e^{-\n \epsilon_B}. 
\end{align}
\item Let $\delta_A=\min \{\delta_B,\delta_C \}$. Then $\Xdelta{A}{\delta_A}{\F} \cup \Xdelta{B}{\delta_B}{\F,\lambda} \cup \Xdelta{C}{\delta_C}{\F,\lambda}=\C$. From Proposition \ref{prop_asymp_R} we know that for all $\n>1$, $0<t<1-\delta_0$ and $z\in \Xdelta{A}{\delta_A}{\F}$
\begin{align}
\frac{\dd \rhon}{\dd x}(z) &= -\sqrt{\tfrac{\n}{2 \pi^3}}\tfrac{\vert \W{\F}(z)\vert^2 \e^{\n \f(z)}}{\F} \sqrt{\tfrac{1-\t}{1+\t}} \Re\left(\U{\F}(z) \bigl(1+\Rtz\bigr)\right), \\
\intertext{and}
\frac{\dd \rhon}{\dd y}(z) &=\sqrt{\tfrac{\n}{2 \pi^3}} \tfrac{\vert \W{\F}(z)\vert^2 \e^{\n \f(z)}}{\F} \sqrt{\tfrac{1+\t}{1-\t}} \Im\left( \U{\F}(z) \bigl(1+\Rtz\bigr)\right), 
\end{align}
with $\f$ as in Proposition \ref{prop_asym} and $R_{\ve{\xi},\n}$ as in Proposition \ref{prop_asymp_R}. With the help of Proposition \ref{prop_estim_R}, we get $\n_{0_A}\in\N$ and $C_A\in\R$, and from Proposition \ref{prop_estim_f} $\epsilon_A>0$ so that for all $\n>\n_{0_A}$ and $\t \in(0,1-\delta_0)$
\begin{align}
\sup_{z\in \Xdelta{A}{\delta_A}{\F} \setminus\Edelta{\delta} } \left\vert \frac{\dd \rhon}{\dd x}(z) \right\vert&< \tfrac{C_A}{\sqrt{2\pi^3}} \sqrt{\n} \e^{-\n \epsilon_A},\\
\shortintertext{and}
\sup_{z\in \Xdelta{A}{\delta_A}{\F}\setminus\Edelta{\delta} } \left\vert \frac{\dd \rhon}{\dd y}(z) \right\vert&< \tfrac{C_A}{\sqrt{2\pi^3}} \sqrt{\n} \e^{-\n \epsilon_A}. 
\end{align}

We define $C=\max \left\{\tfrac{C_A}{\sqrt{2\pi^3}},C_B,C_C\right\}$ and $\epsilon=\min \{ \epsilon_A,\epsilon_B,\epsilon_C \}>0$. Now the first claim follows since it is easy to see that there exists $\n_0>\max \left\{\n_{0_A},\n_{0_B},\n_{0_C} \right\}$, big enough so that for all $\n>\n_0$
\begin{align}
C \sqrt{\n} \e^{-\n \epsilon }<\epsilon.
\end{align}
Thus for all $\n>\n_0$ and $\t \in(0,1-\delta_0)$
\begin{align}
\sup_{z\in \C \setminus\Edelta{\delta} } \left\vert \frac{\dd \rhon}{\dd x}(z) \right\vert&< \epsilon, \qquad \text{and} \qquad \sup_{z\in \C \setminus\Edelta{\delta} } \left\vert \frac{\dd \rhon}{\dd y}(z) \right\vert< \epsilon. \nn 
\end{align}

\item In the second claim we choose $\delta_A>0$ small enough and assume without loss of generality that $\delta>0$ is small enough so that $\Edelta{\delta}\subset \Xdelta{A}{\delta_A}{\F}$. We only have to consider large $\vert z\vert$, thus $z\in \Adelta$. Again we use the constant $C_A$ and $\n_0\in\N$ from Proposition \ref{prop_estim_R} we have found in paragraph 3 and set $C=\tfrac{C_A}{\sqrt{2\pi^3}}$.
% . Additionally we get $R>0$ from Proposition \ref{} so that
%
% and therefore we already know that for all $\n>\n_{0_A}$ and $\t \in(0,1-\delta_0)$
%\begin{align}
%\sup_{z\in \Adelta\setminus\Edelta{\delta} } \left\vert \frac{\dd \rhon}{\dd x}(z) \right\vert&< \tfrac{C_A}{\sqrt{2\pi^3}} \sqrt{\n} \e^{-\n \epsilon_A},\\
%\intertext{and}
%\sup_{z\in \Adelta\setminus\Edelta{\delta} } \left\vert \frac{\dd \rhon}{\dd y}(z) \right\vert&< \tfrac{C_A}{\sqrt{2\pi^3}} \sqrt{\n} \e^{-\n \epsilon_A}. 
%\end{align}
% besser schreiben, f sollte in obigen Abschätzungen noch vorkommen (wo haben wir so eine? darauf verweisen!) Der folgende Teil kann wohl auch besser geschrieben werden, über die Asymptotic von f für z\rightarrow \infty haben wir ja bereits einen Satz auf den man verweisen könnte
%
%We can set $C=\tfrac{C_A}{\sqrt{2\pi^3}}$ and 
We are only left to show that there exists $R>0$ so that for all $\vert z \vert\ge R$ we have that $\f(z) \le - \vert z \vert$. For $\vert z \vert \gg \F_0$ we find that
\begin{align}
\U{\F}(z)= \tfrac{\F}{2z}+\tfrac{\F^3}{8z^3}+\lO\left(\tfrac{\F^5}{z^5} \right).
\end{align}
So we get an estimation if $\vert z \vert \gg \F_0$
\begin{align}
\f(z) &\le -\vert z\vert^2 (1-\t)+ \tfrac{F^2}{\vert z \vert^2}-2\log\left\vert \tfrac{\F}{2z}\right\vert+\log \t+1, \\
\intertext{and therefore}
 \tfrac{\f(z)}{\vert z\vert} &\le - \delta_0 \vert z\vert + \tfrac{F_0^2}{\vert z \vert^3}+\tfrac{\log(\vert z\vert^2)}{\vert z \vert}+\tfrac{1}{\vert z \vert} \rightarrow -\infty \quad (\vert z \vert \rightarrow\infty). \label{eq_f_large_z}
\end{align}

\item For the remaining claim we have to find an approximation for $\f$ and $\gpm$ in a neighborhood of $\pEt$. Already in the proof of Proposition \ref{prop_density} we have used the Taylor approximation of $\f$ as a function of $x$ at fixed $y=y_E$ or as a function of $y$ at fixed $x=x_E$. 

Without loss of generality $\epsilon<1$. Let $\delta_A>0$ be small enough so that for all $\t\in (0,1-\delta_0)$ and $\delta$ small enough $\Edelta{\delta}\subset \Xdelta{A}{\delta_A}{\F}$. Then for all $z \in \C_{x,\delta_1} \cap \Xdelta{A}{\delta_A}{\F} \supset \C_{x,\delta_1} \cap \Omegadelta{x,\delta}$ we see from Proposition \ref{prop_asymp_R} that
\begin{align}
\tfrac{\partial \rhon}{\partial x}(z) &=\sqrt{\tfrac{\n}{2 \pi^3}} \gp(z) \e^{\n \f(z)}\left(1+ \tfrac{ \Re\left( \U{\F}(z) \Rtz\right)}{\Re\left(\U{\F}(z)\right)}\right), \label{eq_proof_uniform_estim1} \\
\intertext{and for $z \in \C_{y,\delta_1} \cap \Xdelta{A}{\delta_A}{\F}$ }
\tfrac{\partial \rhon}{\partial y}(z) &=-\sqrt{\tfrac{\n}{2 \pi^3}} \gm(z) \e^{\n \f(z)}\left(1+ \tfrac{ \Im\left( \U{\F}(z) \Rtz\right)}{\Im\left(\U{\F}(z)\right)}\right). \label{eq_proof_uniform_estim2}
\end{align}

There exists $R>0$ so that $\Omegadelta{x,\delta}\cup \Omegadelta{y,\delta} \subset B_R(0)$ for all $\t\in (0,1-\delta_0)$. We set $\epsilon_A=\tfrac{\epsilon}{3}$. Then according to Proposition \ref{prop_estim_R} there exists $\n_{0_A}$ so that 
\begin{align}
\sup_{\substack{0<\t<1-\delta_0 \\\n>\n_{0_A} }} \sup_{\substack{z\in \Omegadelta{x,\delta}\cap \C_{x,\delta_1} \\ \xi_1,\xi_2,\xi_3\in[0,\frac{1}{\n}] }} \left\vert \tfrac{\Re\left(\U{\F}(z) \Rtz \right)}{\Re\, \U{\F}(z)} \right\vert &<\epsilon_A,\\
\shortintertext{and}
\sup_{\substack{0<\t<1-\delta_0 \\\n>\n_{0_A} }} \sup_{\substack{z\in \Omegadelta{y,\delta}\cap \C_{y,\delta_1} \\ \xi_1,\xi_2,\xi_3\in[0,\frac{1}{\n}] }} \left\vert \tfrac{\Im\left(\U{\F}(z) \Rtz \right)}{\Im\, \U{\F}(z) } \right\vert &<\epsilon_A.
\end{align}

From Proposition \ref{prop_g} it is obvious for all $\t\in (0,1-\delta_0)$ that both $\gp$ and $\tfrac{\partial^2 \f}{\partial x^2}$ on $\pEt$ only get zero where $\pEt$ crosses the imaginary line and analogously $\gm$ and $\tfrac{\partial^2 \f}{\partial y^2}$ where $\pEt$ intersects with the real line. Thus
\begin{align}
g_0&=\inf_{0<\t<1-\delta_0} \left(\min\left\{ \inf_{z_E\in \pEt \cap \C_{x,\delta_1} } \left\vert \gp(z_E) \right\vert,\inf_{z_E\in \pEt \cap \C_{y,\delta_1} } \left\vert \gm(z_E) \right\vert \right\}\right) \nn \\
&= \inf_{0<\t<1-\delta_0} \left( \min \biggl\{ \inf_{z_E \in \pEt \cap \C_{x,\delta_1} } \sqrt{\left\vert \tfrac{\partial^2 \f}{\partial x^2}(z_E) \right\vert}, \inf_{z_E \in \pEt \cap \C_{y,\delta_1} } \sqrt{\left\vert \tfrac{\partial^2 \f}{\partial y^2}(z_E) \right\vert} \biggr\} \right)
\end{align}
is strictly positive. Since $\gp$ and $\gm$ are uniformly continuous in $z$ on $\Xdelta{A}{\delta_A}{\F}$ % \supset \Omegadelta{x,\delta}$ 
and in $\t$ on any compact subset of $(0,1-\delta_0]$, it follows that there exists $\delta_2>0$ so that
\begin{align}
\vert \gp(x,y_E)-\gp(x_E,y_E)\vert < \tfrac{\epsilon}{3 g_0}, \qquad \forall z_E\in \pEt \cap \C_{x,\delta_1}, x\in [x_E-\delta_2,x_E+\delta_2], \\
\shortintertext{and}
\vert \gm(x_E,y)-\gm(x_E,y_E)\vert < \tfrac{\epsilon}{3 g_0}, \qquad \forall z_E\in \pEt \cap \C_{y,\delta_1}, y\in [y_E-\delta_2,y_E+\delta_2].
\end{align}
This is still true in the limit $\t\rightarrow 0$ since according to Lemma \ref{lemma_lim_f_g} 
\begin{align}
\lim_{\t\rightarrow 0} \gp(z)= -2\Re\left( \tfrac{1}{z} \right) \qquad \text{and} \qquad \lim_{\t\rightarrow 0} \gm(z)= -2\Im\left( \tfrac{1}{z} \right),
\end{align}
which are uniformly continuous for $z\in \Xdelta{A}{\delta_A}{0}$. 

We assume that $\delta_2<\tfrac{3}{4} \epsilon g_0^2 \delta_A^3$ and small enough that $\Edelta{\delta}\subset \Xdelta{A}{\delta_A}{\F}$. Then for all $\delta<\delta_2$
\begin{align}
(1-\tfrac{\epsilon}{3}) &< \tfrac{\gp(x,y_E)}{\gp(x_E,y_E)} <(1+\tfrac{\epsilon}{3}), 
\intertext{for all $z_E\in\pEt \cap \C_{x,\delta_1}$ and $x\in [x_E-\delta,x_E+\delta]$ and}
(1-\tfrac{\epsilon}{3}) &< \tfrac{\gm(x_E,y)}{\gm(x_E,y_E)} <(1+\tfrac{\epsilon}{3}), 
\end{align}
for all $z_E\in\pEt \cap \C_{y,\delta_1}$ and $y\in [y_E-\delta,y_E+\delta]$. Note that $x\in (x_E-\delta,x_E+\delta)$ and $z_E \in \pEt$ is equivalent to $(x,y_E)\in \Omegadelta{x,\delta}$ and analogously $y\in (y_E-\delta,y_E+\delta)$ and $z_E \in \pEt$ is equivalent to $(x_E,y)\in \Omegadelta{y,\delta}$.

According to Lemma \ref{lemma_T} and \ref{lemma_estim_b} $\vert \T{\F}(z) \vert^2> \delta_A^2$. With $\tfrac{\partial^3 \f}{\partial x^3}=4 \Re \left( (\T{\F}(z))^{-3} \right)$ and $\tfrac{\partial^3 \f}{\partial y^3}=4 \Im \left( (\T{\F}(z))^{-3} \right) $ we get
\begin{align}
\left\vert \tfrac{\partial^3 \f}{\partial x^3}(z) \right\vert \le \tfrac{4}{\vert \T{\F}(z)\vert^3} < \tfrac{4}{\delta_A^3},\qquad \text{and}\qquad \left\vert \tfrac{\partial^3 \f}{\partial y^3}(z) \right\vert < \tfrac{4}{\delta_A^3},
\end{align}
for all $\t\in (0,1-\delta_0)$ and $z\in \Xdelta{A}{\delta_A}{\F}$.

Since $\f$, $\tfrac{\partial \f}{\partial x}$ and $\tfrac{\partial \f}{\partial y}$ are zero on $\pEt$, the Taylor expansion at $x_E$ or $y_E$ up to order two, including a remainder in Lagrange form, is
\begin{align}
\f(x,y_E)&=\tfrac{\partial^2 \f}{\partial x^2}(x_E,y_E) \tfrac{(x-x_E)^2}{2}+\tfrac{\partial^3 \f}{\partial x^3}(\xi,y_E) \tfrac{(x-x_E)^3}{3!} \nn \\
%&= \left( \tfrac{(1-\t)^2}{\t} - \tfrac{4}{\F^2} \Re \left(\tfrac{z_E}{\T{\F}(z_E)} \right) \right) \tfrac{(x-x_E)^2}{2} + \Re \left( \tfrac{4}{(\T{\F}(\xi+i y_E))^{3}}\right) \tfrac{(x-x_E)^3}{6} \nn \\
&= \tfrac{\partial^2 \f}{\partial x^2}(x_E,y_E) \tfrac{(x-x_E)^2}{2} \left(1+ \left(\tfrac{\partial^2 \f}{\partial x^2}(x_E,y_E)\right)^{-1} \tfrac{\partial^3 \f}{\partial x^3}(\xi,y_E) \tfrac{x-x_E}{3} \right),
\intertext{with $\xi \in B_{\vert x-x_E \vert}(x_E)\cap \R$ and analogously}
\f(x_E,y)&= \tfrac{\partial^2 \f}{\partial y^2}(x_E,y_E) \tfrac{(y-y_E)^2}{2} \left(1+ \left(\tfrac{\partial^2 \f}{\partial y^2}(x_E,y_E)\right)^{-1} \tfrac{\partial^3 \f}{\partial y^3}(x_E,\xi) \tfrac{y-y_E}{3} \right),
\end{align}
with $\xi \in B_{\vert y-y_E \vert}(y_E)\cap \R$. For $\vert x-x_E\vert<\delta_2$ we get
\begin{align}
\left\vert \left(\tfrac{\partial^2 \f}{\partial x^2}(x_E,y_E)\right)^{-1} \tfrac{\partial^3 \f}{\partial x^3}(\xi,y_E) \tfrac{x-x_E}{3} \right\vert \le \tfrac{4 \delta_2}{3 g_0^2 \delta_A^3}<\epsilon,
\intertext{and for $\vert y-y_E\vert<\delta_2$}
\left\vert \left(\tfrac{\partial^2 \f}{\partial y^2}(x_E,y_E)\right)^{-1} \tfrac{\partial^3 \f}{\partial y^3}(x_E,\xi) \tfrac{y-y_E}{3} \right\vert \le \tfrac{4 \delta_2}{3 g_0^2 \delta_A^3}<\epsilon.
\end{align}
Noting that $\tfrac{\partial^2 \f}{\partial x^2}$ is negative, so for all $\t\in (0,1-\delta_0)$, $z_E\in\pEt \cap \C_{x,\delta_1}$ and $x\in [x_E-\delta,x_E+\delta]$ 
\begin{align}
\tfrac{\partial^2 \f}{\partial x^2}(x_E,y_E) \tfrac{(x-x_E)^2}{2} \left(1+\epsilon\right) < \f(x,y_E)< \tfrac{\partial^2 \f}{\partial x^2}(x_E,y_E) \tfrac{(x-x_E)^2}{2} \left(1-\epsilon\right),
\end{align}
and finally if additional $\n>\n_0=\n_{0_A}$, we get from \eqref{eq_proof_uniform_estim1}
\begin{align}
\left(1- \tfrac{\epsilon}{3} \right)^2  \e^{\n \frac{\partial^2 \f}{\partial x^2}(z_E) \frac{(x-x_E)^2}{2} \left(1+\epsilon\right) } &\le \tfrac{\frac{\partial \rhon}{\partial x}(x,y_E)}{ \sqrt{\frac{\n}{2 \pi^3}} \gp(z_E) } \le \left(1+ \tfrac{\epsilon}{3} \right)^2  \e^{\n \frac{\partial^2 \f}{\partial x^2}(z_E) \frac{(x-x_E)^2}{2} \left(1-\epsilon\right) }. \label{eq_pos1}
\end{align}
And analogously for all $\t\in (0,1-\delta_0)$, $z_E\in\pEt \cap \C_{y,\delta_1}$, $y\in [y_E-\delta,y_E+\delta]$ and $\n>\n_0$
\begin{align}
\left(1- \tfrac{\epsilon}{3} \right)^2  \e^{\n \frac{\partial^2 \f}{\partial y^2}(z_E) \frac{(y-y_E)^2}{2} \left(1+\epsilon\right) }& \le \tfrac{\frac{\partial \rhon}{\partial y}(x_E,y)}{-\sqrt{\frac{\n}{2 \pi^3}} \gm(z_E) } \le \left(1+ \tfrac{\epsilon}{3} \right)^2  \e^{\n \frac{\partial^2 \f}{\partial y^2}(z_E) \frac{(y-y_E)^2}{2} \left(1-\epsilon\right) }. \label{eq_pos2}
\end{align}
\end{enumerate}
\end{proof}

\begin{remark}
Note that in \eqref{eq_pos1} and \eqref{eq_pos2}
\begin{align}
\frac{\frac{\partial \rhon}{\partial x}(x,y_E)}{ \sqrt{\frac{\n}{2 \pi^3}} \gp(z_E) } \qquad \text{and}\qquad \frac{\frac{\partial \rhon}{\partial y}(x_E,y)}{-\sqrt{\frac{\n}{2 \pi^3}} \gm(z_E) } \nn
\end{align}
are positive. %on ...? TO BE DONE
\end{remark}

\subsection{Verification of Estimation of Density}

Here we are going to verify the approximation we have done in Proposition \ref{prop_density}.
\begin{prop}\label{prop_veri_density}
Let $\delta>0$ and $0<\delta_0<1$. Then
\begin{align}
\lim_{\n\rightarrow\infty} \sup_{0<\t<1-\delta_0} \sup_{z\in \Edelta{-,\delta}} \left\vert \rhon(z)-\tfrac{1}{\pi}\right\vert &= 0, \\
\lim_{\n\rightarrow\infty} \sup_{0<\t<1-\delta_0} \sup_{z\in \C\setminus \Edelta{+,\delta}} \left\vert \rhon(z)\right\vert &= 0, \\
\lim_{\n\rightarrow\infty} \sup_{0<\t<1-\delta_0} \sup_{z\in \pEt} \left\vert \rhon(z)-\tfrac{1}{2\pi}\right\vert &= 0.
\end{align}
%For every $z\in \C\setminus [-\F,\F]$ and $\epsilon>0$ there exists a $\n_0 \in\N$ big enough so that for all $\n\ge\n_0$ follows that
%\begin{equation}
%(1-\epsilon) \rho(z)< \rho_\n(z) < (1+\epsilon) \rho(z),
%\end{equation}
%if $\rho(z)>0$ and 
%\begin{equation}
%\rho_\n(z) <\epsilon,
%\end{equation}
%if $\rho(z)=0$. 
\end{prop}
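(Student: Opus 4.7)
The plan is to establish all three uniform limits by integrating the derivatives $\partial\rhon/\partial x$ and $\partial\rhon/\partial y$ along suitable paths joining $z_0$ to a point where $\rhon$ is already controlled, and then invoking the uniform estimates of Proposition~\ref{prop_unifrom_estimate_drho}. The three starting points differ according to the region: the origin for the interior limit (where $\rhon(0)\to\tfrac1\pi$ uniformly by Proposition~\ref{prop_density_zero}); a point of large modulus for the exterior limit (where the Gaussian factor $\e^{-\n V(z)}$ beats the polynomial growth of $|\op{k}(z)|^2$); and an interior point close to $z_0$ for the boundary limit, to which the interior result already applies.

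For $z_0\in \Edelta{-,\delta}$, join $0$ to $z_0$ by the L-shaped path first going horizontally to $\Re\, z_0$ and then vertically to $z_0$. Since $\Et$ is convex, the whole path lies in $\Edelta{-,\delta}\subset \C\setminus \Edelta{\delta/2}$, and its length is bounded by $2(\mas+\mis)$, which is itself uniformly bounded on $\t\in(0,1-\delta_0)$. By \eqref{eq_unifrom_estimate_drho1}, for any $\epsilon>0$ both $|\partial\rhon/\partial x|$ and $|\partial\rhon/\partial y|$ are smaller than $\epsilon$ uniformly once $\n\ge \n_0$, so the line integral contributes at most a constant (depending only on $\delta_0$) times $\epsilon$. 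Combined with $\rhon(0)=\tfrac1\pi+o(1)$, this gives the first claim.

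For $z_0\in \C\setminus \Edelta{+,\delta}$, pick $R>\mas+\delta$ large enough (uniformly in $\t$) that a crude estimate of the orthonormal polynomials via Proposition~\ref{prop_op} together with $V(z)\ge \delta_0|z|^2$ yields $\rhon(z)\le P(\n)\,\e^{-\n\delta_0|z|^2/2}$ for $|z|\ge R$ and some polynomial $P$. Then for any fixed point $z_*$ with $|z_*|=R'>R$ large enough, $\rhon(z_*)=o(1)$ uniformly. Now join $z_*$ to $z_0$ by a broken path staying outside $\Edelta{+,\delta/2}$: on the portion inside $B_{R'}(0)$, of uniformly bounded length, \eqref{eq_unifrom_estimate_drho1} controls the derivatives by $\epsilon$; on the tail $|z|>R'$, the exponential decay $C\sqrt{\n}\,\e^{-\n|z|}$ of \eqref{eq_unifrom_estimate_drho2} makes the integral $o(1)$ uniformly. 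Combining these gives the second claim.

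The boundary case is the main obstacle. Let $z_0=x_0+iy_0\in \pEt$. Proposition~\ref{prop_g} shows that $\gp$ and $\gm$ vanish on $\pEt$ only at its intersections with the imaginary, respectively the real, axis, so for some $\delta_1=\delta_1(\delta_0)>0$ every $z_0\in \pEt$ satisfies $|x_0|\ge \delta_1$ or $|y_0|\ge \delta_1$, uniformly in $\t\in(0,1-\delta_0)$. Suppose $|y_0|\ge \delta_1$, the other case being symmetric under $(x,y)\leftrightarrow(y,x)$ and the use of \eqref{eq_unifrom_estimate_drho4} instead of \eqref{eq_unifrom_estimate_drho3}. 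For a small $\mu>0$ the point $z_1=x_0-\mu\sign(x_0)+iy_0$ lies in $\Edelta{-,\delta'}$ for some $\delta'(\mu)>0$, so the first claim gives $\rhon(z_1)=\tfrac1\pi+o(1)$ uniformly. Integrating \eqref{eq_unifrom_estimate_drho3} along the horizontal segment from $z_1$ to $z_0$ and rescaling $\xi=(x-x_0)\sqrt{-\n(\partial^2\f/\partial x^2)(z_0)/2}$, the integrand converges to $\e^{-\xi^2}$, the range of integration to $(-\infty,0)$, and the prefactor $\gp(z_0)/\sqrt{-(\partial^2\f/\partial x^2)(z_0)}$ is identified by Proposition~\ref{prop_g} as $-\sign(x_0)$; the resulting half-Gaussian yields $\rhon(z_0)-\rhon(z_1)\to -\tfrac{1}{2\pi}$ uniformly, proving the third claim. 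The delicate point is that $\mu$, $\epsilon$ and $\n_0$ must be chosen in the right order: first fix $\mu$ small enough so that $z_1$ lies in a fixed smaller ellipse and so that Proposition~\ref{prop_unifrom_estimate_drho} applies on $[z_1,z_0]$; then let $\epsilon\to 0$ and $\n\to\infty$, using that $(\partial^2\f/\partial x^2)(z_0)$ is uniformly bounded away from zero on $\pEt\cap \C_{y,\delta_1}$ and in $\t$, so that the rescaled limits of integration reach $-\infty$ and the distortion $(1\pm\epsilon)$ in the exponent of \eqref{eq_unifrom_estimate_drho3} becomes negligible simultaneously.
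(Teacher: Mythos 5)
Your first step is essentially the paper's (integrate from the origin using the uniform derivative bounds of Proposition~\ref{prop_unifrom_estimate_drho}; the L-shaped path and convexity of the shrunk ellipse are fine). The boundary step, however, has two genuine problems, and the exterior step is a different route from the paper's that needs more care than you allow for.

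\textbf{Inverted case split.} You say: if $|y_0|\ge\delta_1$, integrate along a horizontal segment using \eqref{eq_unifrom_estimate_drho3}. But the hypothesis of \eqref{eq_unifrom_estimate_drho3} is $z_E\in\pEt\cap\C_{x,\delta_1}$, i.e.\ $|x_E|\ge\delta_1$; this is forced because $\gp(z_E)$ and $\partial_x^2\f(z_E)$ vanish where $\pEt$ meets the imaginary axis (Proposition~\ref{prop_g}). Worse, near the top of the ellipse (where $|x_0|$ is small but $|y_0|\ge\delta_1$) the ellipse is nearly horizontal, so the horizontal shift $z_1=x_0-\mu\sign(x_0)+iy_0$ stays arbitrarily close to $\pEt$; it does not lie in any $\Edelta{-,\delta'}$ with $\delta'>0$ uniformly in $z_0$ and $\t$. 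The pairing must be $|x_0|\ge\delta_1$ with \eqref{eq_unifrom_estimate_drho3} and $|y_0|\ge\delta_1$ with \eqref{eq_unifrom_estimate_drho4}, which is what the paper does.

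\textbf{Order of $\mu$, $\epsilon$, $\n$.} Your prescription — fix $\mu$, then send $\epsilon\to 0$ and $\n\to\infty$ — does not go through: in Proposition~\ref{prop_unifrom_estimate_drho} the interval of validity $|x-x_E|<\delta_2$ in \eqref{eq_unifrom_estimate_drho3} has $\delta_2=\delta_2(\epsilon,\delta_0,\delta_1,\delta)$ \emph{shrinking} as $\epsilon\to0$ (the second-order Taylor approximation of $\f$ and the near-constancy of $\gp$ can only hold to relative error $\epsilon$ on a correspondingly small interval). Once $\delta_2(\epsilon)<\mu$, \eqref{eq_unifrom_estimate_drho3} no longer covers all of $[z_1,z_0]$. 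The paper's order is the reverse: for each fixed $\epsilon$ it takes the path of length $\delta_2(\epsilon)$ starting from $z_1\in\partial\Omegadelta{x,-,\delta_2}$, which sits inside a fixed (but $\epsilon$-dependent) $\Edelta{-,\tilde\delta}$, lets $\n\to\infty$, and only afterwards sends $\epsilon\to0$.

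\textbf{Exterior density: a genuinely different route, with a missing ingredient.} The paper never bounds $\rhon$ directly at a distant point; it shows the exterior density is constant and then fixes the constant by measuring the $\tfrac1\pi$ jump across $\pEt$ at $z_E=\mas$ at the end of the boundary step. Your anchoring at a far-away $z_*$ by a crude estimate can be made to work, but not as written. The term-by-term bound $|H_k(sz)|\lesssim(2s|z|)^k$ gives $\sum_k|\op{k}(z)|^2\lesssim\e^{\n(1-\t^2)|z|^2}$, and since $(1-\t^2)|z|^2>V(z)$ for $z$ near the real axis (indeed $(1-\t^2)x^2>(1-\t)x^2$), the weight $\e^{-\n V(z)}$ does not beat this. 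The decay comes from the truncation of the sum at $k=\n-1$: for $|z|\ge R(\delta_0)$ one has $\n(1-\t^2)|z|^2>\n$, so the dominant term of the truncated sum is the last one, and Stirling yields $\sum_k|\op{k}(z)|^2\lesssim\sqrt\n\,\e^{\n(1+\log((1-\t^2)|z|^2))}$, which combined with $V(z)\ge\delta_0|z|^2$ gives the desired bound for $R$ large enough depending on $\delta_0$. This is a workable alternative argument, but it is not merely "a crude estimate via Proposition~\ref{prop_op}"; the truncation plus a Cauchy-type bound on $H_k$ away from $[-\sqrt{2k},\sqrt{2k}]$ is essential, and the paper avoids this entirely.
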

\begin{proof}
We choose $\epsilon>0$. 
\begin{enumerate}
\item We first care about $z\in \Edelta{-,\delta}$. Proposition \ref{prop_unifrom_estimate_drho} gives us $\n_0\in\N$, independent of $z$ and $0<t<1-\delta_0$, and estimations \eqref{eq_unifrom_estimate_drho1}. For any two points $z_1,z_2\in \Edelta{-,\delta}$ there is a way from $z_1$ to $z_2$ inside of $\Edelta{-,\delta}$ consisting of paths parallel to either the real or the imaginary axis with total length smaller than $\mas+\mis<2\sqrt{\tfrac{2+\delta_0}{\delta_0}}$ for all $0<\t<1-\delta_0$. Therefore we get by integration of $\tfrac{\partial \rhon}{\partial x}$ and $\tfrac{\partial \rhon}{\partial y}$ that $\vert \rhon(z_2)-\rhon(z_1)\vert<2 \epsilon \sqrt{\tfrac{2+\delta_0}{\delta_0}}$ for all $\n>\n_0$, uniformly for $\t\in (0,1-\delta_0)$ and $z_1,z_2\in \Edelta{-,\delta}$. Taking the limit $\epsilon$ to zero it follows that the density is constant as $\n\rightarrow\infty$. We already have shown in Proposition \ref{prop_density_zero} that the density at zero converges to $\tfrac{1}{\pi}$, uniformly for $0<t<1-\delta_0$.

\item Next we consider $z\in \C\setminus \Edelta{+,\delta}$. From Proposition \ref{prop_unifrom_estimate_drho} we have $\n_0\in\N$ and estimations \eqref{eq_unifrom_estimate_drho1}, and additionally also $C>0$ and $R>0$, independent of $z$ and $0<t<1-\delta_0$, and estimations \eqref{eq_unifrom_estimate_drho2}, which hold uniformly for $z\notin B_R(0)$ and $\n>\n_0$. Again by integration along paths parallel to the real or imaginary axis, of total length smaller than $4R$, we find for $z_1,z_2\in (\C\setminus \Edelta{+,\delta})\cap B_R(0)$ that $\vert \rhon(z_2)-\rhon(z_1)\vert< 4\epsilon R$ for $\n>\n_0$. Therefore, as $\epsilon\rightarrow 0$, the density is constant, uniformly on $(\C\setminus \Edelta{+,\delta})\cap B_R(0)$ and $0<t<1-\delta_0$. 

For any $z_2\in \C \setminus \left(B_R(0) \cup \Edelta{+,\delta}\right)$ we can get the density by integration along paths parallel to the real or imaginary axis, of total length smaller than $2\vert z_2\vert$, starting at $z_1$ on the boundary of $B_R(0)$. So we get the estimation
\begin{align}
\vert \rhon(z_2)-\rhon(z_1)\vert < 2 C \sqrt{\n} \int_R^\infty \e^{-\n z} \dd z= \tfrac{2C}{\sqrt{\n}} \e^{-\n R}\rightarrow 0 \quad (\n\rightarrow\infty), 
\end{align}
uniformly for $z_2\in \C \setminus \left(B_R(0) \cup \Edelta{+,\delta}\right)$ and $0<t<1-\delta_0$. 

So we know that the density on the outside is constant too. We are left to show that the density is zero, i.e.\ that across $\Edelta{\delta}$ the density decreases by $\tfrac{1}{\pi}$. This we will do at the end of paragraph 3.

\item Now we consider the last claim, for $z\in\pEt$. In the following we will replace $z$ by $z_E$ and use $z$ as a free variable. Because of the symmetries of the density it is sufficient to restrict all of the following argumentation to $z_E\in\Cpp$. Choose $\delta_1$ small enough, for example $\delta_1<\tfrac{1}{8}$. Then we get from Proposition \ref{prop_unifrom_estimate_drho} $\n_0$, $\delta_2$ --- independent of $z_E$ and $0<t<1-\delta_0$ --- and estimations \eqref{eq_unifrom_estimate_drho3} and \eqref{eq_unifrom_estimate_drho4}. Assume that $\delta_2>0$ is small enough so that $\delta_2<\delta_1$ and $\Edelta{\delta_2} \subset \C_{x,2\delta_1}\cup\C_{y,2\delta_1}$. For any $z_E\in \pEt$ there is a path of length $\delta_2$ from $z_1$ to $z_E$ where $z_1\in \partial\Omegadelta{x,-,\delta_2}\cap \C_{x,\delta_1}$ and the path is parallel to the real axis or $z_1\in \partial\Omegadelta{y,-,\delta_2}\cap \C_{y,\delta_1}$ and the path is parallel to the imaginary axis. 

There is a finite separation of $\pEt$ from $\partial\Omegadelta{x,-,\delta_2}$ and $\partial\Omegadelta{y,-,\delta_2}$ for all $0<t<1-\delta_0$, i.e.\ 
\begin{align}
\tilde{\delta}=\inf_{0<t<1-\delta_0 }\inf_{\substack{z\in \partial\Omegadelta{x,-,\delta_2} \cup \partial\Omegadelta{y,-,\delta_2} \\ z_E\in \pEt} } \vert z_E-z \vert,
\end{align}
is strictly positive while the maximal distance of $z\in \partial\Edelta{-,\tilde{\delta}}$ from $\pEt$, i.e.\ 
\begin{align}
\sup_{0<t<1-\delta_0} \sup_{z\in \partial\Edelta{-,\tilde{\delta}} } \inf_{z_E\in \pEt} \vert z_E-z \vert
\end{align}
is not bigger than $\tilde{\delta}$ since
\begin{align}
\inf_{\phi_2 \in (-\pi,\pi]} & \left\vert \mas \cos \phi_2+i \mis \sin \phi_2 - (\mas-\tilde{\delta}) \cos \phi_1 - i(\mis -\tilde{\delta}) \sin\phi_1 \right\vert \nn \\
 \le & \left\vert \tilde{\delta} \cos \phi_1 + i \tilde{\delta} \sin\phi_1 \right\vert=\tilde{\delta}.
\end{align}
Therefore $\partial\Omegadelta{x,-,\delta_2} \cup \partial\Omegadelta{y,-,\delta_2}\subset \Edelta{-,\tilde{\delta}}$ and thus we know from what we have already proved in paragraph 1, with $\delta$ set to $\tilde{\delta}$, that the density at $z_1$ converges to $\tfrac{1}{\pi}$ uniformly for $z_1\in \partial\Omegadelta{x,-,\delta_2} \cup \partial\Omegadelta{y,-,\delta_2}$ and $0<t<1-\delta_0$.

On the path from $z_1=x_1+i y_1 \in \partial\Omegadelta{x,-,\delta_2}\cap \C_{x,\delta_1}$ to $z_E=x_E+i y_E$, i.e.\ for $z=x + i y$ with $x\in (x_1,x_E)=(x_E-\delta_2,x_E)$ and $y=y_E$, estimation \eqref{eq_unifrom_estimate_drho3} is valid for $\n>\n_0$. So we find by integration of the inequality \eqref{eq_unifrom_estimate_drho3}, like in Proposition \ref{prop_density}, with the substitution $\eta_+ (x-x_E)$ by $x$ where 
\begin{align}
\eta_+ &=\sqrt{-\frac{\partial^2 \f}{\partial x^2}(x_E,y_E) \frac{\n}{2}}>0
\end{align}
and with the help of Proposition \ref{prop_g} that
\begin{align}
%(1-\epsilon) \gp(z_E) \sqrt{\tfrac{\n}{2\pi^3}}\int_{x_1}^{x_E} \e^{-(1+\epsilon)\eta_+^2 (x-x_E)^2} \dd x\le \rhon(x_E,y_E)-\rhon(x_1,y_E) \le (1+\epsilon) \gp(z_E) \sqrt{\tfrac{\n}{2\pi^3}}\int_{x_1}^{x_E} \e^{-(1-\epsilon)\eta_+^2 (x-x_E)^2} \dd x
\tfrac{(1-\epsilon)}{\pi^{3/2}}\int_{-\eta_+ (x_E-x_1)}^{0} \e^{-(1+\epsilon)x^2 } \dd x \le \rhon(z_1)-\rhon(z_E) \le \tfrac{(1+\epsilon)}{\pi^{3/2}} \int_{-\eta_+ (x_E-x_1) }^{0} \e^{-(1-\epsilon)x^2 } \dd x. \label{eq_veri_dens1}
\end{align}
For the integrals we get with $x_E-x_1=\delta_2$
\begin{align}
\tfrac{(1\mp\epsilon)}{\pi^{3/2}}\int_{-\eta_+ \delta_2}^{0} \e^{-(1\pm\epsilon)x^2 } \dd x= \tfrac{1\mp\epsilon}{\sqrt{1\pm\epsilon}} \tfrac{1}{2\pi} \erf\left(\delta_2 \sqrt{(1\pm\epsilon) \left(-\tfrac{\partial^2\f}{\partial x^2}(z_E) \right) \tfrac{\n}{2}}  \right). \label{eq_veri_dens2}
\end{align}
This converges to $\tfrac{1}{2\pi}$ as $\epsilon$ goes to zero and $\n$ to infinity, uniformly for $0<\t<1-\delta_0$ and $z_1 \in \partial\Omegadelta{x,-,\delta_2}\cap \C_{x,\delta_1}$ since $\delta_2$ is independent of $\t$ and $z_1$ and since $-\tfrac{\partial^2\f}{\partial x^2}(z_E)\ge g_0$ where
\begin{align}
%g_0&=\inf_{0<\t<1-\delta_0} \left(\min\{ \inf_{z_E\in \pEt \cap \C_{x,\delta_1} } \left\vert \gp(z_E) \right\vert,\inf_{\mathclap{ z_E\in \pEt \cap \C_{y,\delta_1} } } \left\vert \gm(z_E) \right\vert \}\right)>0.
g_0&=\inf_{0<\t<1-\delta_0} \left(\min\{ \inf_{z_E\in \pEt \cap \C_{x,\delta_1} } \left\vert \gp(z_E) \right\vert,\inf_{z_E\in \pEt \cap \C_{y,\delta_1} } \left\vert \gm(z_E) \right\vert \}\right)>0 .
\end{align}

Analogously on the path from $z_1 \in \partial\Omegadelta{y,-,\delta_2}\cap \C_{y,\delta_1}$ to $z_E$, i.e.\ for $z=x + i y$ with $x=x_E$ and $y\in (y_E-\delta_2,y_E)$, estimation \eqref{eq_unifrom_estimate_drho4} is valid for $\n>\n_0$. So we find by integration of the inequality that 
\begin{align}
\tfrac{(1-\epsilon)}{\pi^{3/2}}\int_{-\eta_- ( y_E-y_1 ) }^{0} \e^{-(1+\epsilon)y^2 } \dd y \le \rhon(z_1)-\rhon(z_E) \le \tfrac{(1+\epsilon)}{\pi^{3/2}} \int_{-\eta_- ( y_E-y_1 ) }^{0} \e^{-(1-\epsilon)y^2 } \dd y,
\end{align}
where
\begin{align}
\eta_- &=\sqrt{-\frac{\partial^2 \f}{\partial y^2}(x_E,y_E) \frac{\n}{2}}>0.
\end{align}
And
\begin{align}
\tfrac{(1\mp\epsilon)}{\pi^{3/2}}\int_{-\eta_- \delta_2}^{0} \e^{-(1\pm\epsilon)y^2 } \dd y= \tfrac{1\mp\epsilon}{\sqrt{1\pm\epsilon}} \tfrac{1}{2\pi} \erf\left(\delta_2 \sqrt{(1\pm\epsilon) \left(-\tfrac{\partial^2\f}{\partial y^2}(z_E) \right) \tfrac{\n}{2}}  \right)
\end{align}
converges to $\tfrac{1}{2 \pi}$ as $\epsilon$ goes to zero and $\n$ to infinity, uniformly for $0<\t<1-\delta_0$ and $z_1 \in \partial\Omegadelta{y,-,\delta_2}\cap \C_{y,\delta_1}$. So we have shown that on $\pEt$ the density $\rhon(z_E)$ converges to $\lim_{\n\rightarrow\infty} \rhon(z_1)-\tfrac{1}{2\pi}=\tfrac{1}{2\pi}$, uniformly for all $0<\t<1-\delta_0$ and $z_E\in\pEt$.

We still have to find a connection from $\Edelta{-,\delta}$ to $\C\setminus \Edelta{+,\delta}$ to prove that the constant density of the latter is really zero. It is sufficient to show this at one point and for simplicity we will do this along the real axis, i.e.\ across $z_E=\mas$. Without loss of generality $\delta\le\delta_2$. Since estimation \eqref{eq_unifrom_estimate_drho3} is valid on $[z_E-\delta,z_E+\delta]$ we find like in \eqref{eq_veri_dens1}
\begin{align}
\tfrac{(1-\epsilon)}{\pi^{3/2}}\int_{-\eta_+ \delta}^{\eta_+ \delta} \e^{-(1+\epsilon)x^2 } \dd x \le \rhon(z_E-\delta)-\rhon(z_E+\delta) \le \tfrac{(1+\epsilon)}{\pi^{3/2}} \int_{-\eta_+ \delta}^{\eta_+ \delta} \e^{-(1-\epsilon)x^2 } \dd x,
\end{align}
and
\begin{align}
\tfrac{(1\mp\epsilon)}{\pi^{3/2}} \int_{-\eta_+ \delta}^{\eta_+ \delta}   \e^{-(1\pm\epsilon)y^2 } \dd y= \tfrac{1\mp\epsilon}{\sqrt{1\pm\epsilon}} \tfrac{1}{\pi} \erf\left(\delta \sqrt{(1\pm\epsilon) \left(-\tfrac{\partial^2\f}{\partial y^2}(z_E) \right) \tfrac{\n}{2}}  \right),
\end{align}
which converges to $\tfrac{1}{\pi}$ as $\epsilon$ goes to zero and $\n$ to infinity, uniformly for $0<\t<1-\delta_0$. Therefore the density jumps from $\tfrac{1}{\pi}$ to zero across $\pEt$.
\end{enumerate}
\end{proof}
\begin{remark}
Since there is a discontinuity of the density across $\pEt$ it is obvious that the density can't converge uniformly in a neighborhood of $\pEt$. We can also see this for example in \eqref{eq_veri_dens2}. The error function does not converge uniformly for arbitrarily small $\delta_2$.
\end{remark}

So we have not only verified Proposition \ref{prop_density} but additionally shown where the convergence is uniform in $\t$ and $z$. It isn't necessary to review Theorem \ref{th_density_scaled_limit} as it will follow from Proposition \ref{prop_veri_kernel} setting $a=b$.

\subsection{Error Term of Kernel}

\subsubsection{Plancherel-Rotach Asymptotics on $\Adelta$}
Here we want to review the error term in Proposition \ref{prop_asym_wz}.
\begin{prop}\label{prop_asym_wz_R}
Let $\delta>0$, $0<\delta_0<1$ and $A\in (0,\infty)$. Then there exists $\n_0=\n_0(\delta,A) \in\N$ and functions % , $0<\delta_0<1$
\begin{align}
\begin{array}{rcl}
\xi_k: \N\setminus\{1,\ldots,\n_0-1\} & \longrightarrow & \left[0,\tfrac{1}{\n_0}\right]\\
\n & \longmapsto & \xi_k(\n)\le\frac{1}{\n}
\end{array}, \qquad k=1,2,3 \\
\intertext{and}
\begin{array}{rcl}
\xi_k: \cc{B}_{\!\frac{A}{\sqrt{\n_0}}}(0) & \longrightarrow & \cc{B}_{\!\frac{A}{\sqrt{\n_0}}}(0)\\
u & \longmapsto & \xi_k(u)\in B_{\vert u\vert}(0)
\end{array}, \qquad k=4,5,6,
\end{align}
so that for all $0<\t<1-\delta_0$, $z_0\in \Adelta$, $u,v\in \cc{B}_{\!\frac{A}{\sqrt{\n_0}}}(0)$ and $\n\ge\n_0$ 
\begin{align}
\left.\frac{\partial \HM}{\partial w}\right\vert_{\substack{\phantom{}_{w=\cc{z}_0+u} \\ \phantom{}_{z=z_0+v} \\ \phantom{}}} =&\sqrt{\tfrac{\n}{2 \pi^3}} \gw(z_0) \e^{\n \left( \f(z_0) +\hu(\cc{z}_0)u+\huu(\cc{z}_0)u^2 +\hu(z_0)v+\huu(z_0)v^2+\huv uv \right)} \nn \\ 
& \cdot \left(1+\Rxi{+}{(\t,z_0,u,v)}\right), \label{eq_lemma_asym_wz1_R} \\
\intertext{and}
\left.\frac{\partial \HM}{\partial z}\right\vert_{\substack{\phantom{}_{w=\cc{z}_0+u} \\ \phantom{}_{z=z_0+v} \\ \phantom{}}} =&\sqrt{\tfrac{\n}{2 \pi^3}} \gz(z_0) \e^{\n \left( \f(z_0) +\hu(\cc{z}_0)u+\huu(\cc{z}_0)u^2 +\hu(z_0)v+\huu(z_0)v^2+\huv uv \right)} \nn \\
& \cdot \left(1+\Rxi{-}{(\t,z_0,u,v)}\right), \label{eq_lemma_asym_wz2_R} 
\end{align}
where
\begin{align}
\f(z_0)&=\t \Re{\left(z_0^2\right)}-\left\vert z_0\right\vert^2+\Re\left(\left(\U{\F}(z_0)\right)^2 \right)-2\log\left\vert \U{\F}(z_0) \right\vert +\log\t+1 \label{eq_f_wz_R},\\
\gw(z_0)&=\tfrac{1}{4}\vert \W{\F}(z_0) \vert^2 \left(\t^{1/2} \U{\F}(z_0)-\t^{-1/2} \U{\F}(\cc{z}_0)\right), \label{eq_gw_R} \\
\gz(z_0)&=\gw(\cc{z}_0), \label{eq_gz_R} \\
\hu(z_0)&=\tfrac{ 2 \U{\F}(z_0) }{\F}+\t z_0-\cc{z}_0 , \label{eq_hu_R} \\ 
\huu(z_0)&=\tfrac{\t}{2}-\tfrac{\U{\F}(z_0)}{\F \T{\F}(z_0)}, \label{eq_huu_R} \\ 
\huv&=-1 \label{eq_huv_R}, \\
\intertext{and the error term is}
\Rxi{\pm}{(\t,z_0,u,v)} &= \tfrac{\t^{\pm 1/2} \U{\F}(z_0)\left(\e^{\Rsum{(\t,z_0,u,v)}}-1 \right) - \t^{\mp 1/2} \U{\F}(\cc{z}_0)\left(\e^{\Rsum{(\t,\cc{z}_0,v,u)}}-1 \right) }{\t^{\pm 1/2} \U{\F}(z_0)-\t^{\mp 1/2} \U{\F}(\cc{z}_0)}, \\
\intertext{with}
\Rsum{(\t,z_0,u,v)}=&\Rzeroz{z_0,u,v}+\Rone+\Rtwoz{z_0}+\Rthreez{z_0} \nn \\
& +\n \left(\Rfourz{\cc{z}_0,u}+\Rfour \right) \nn \\
& +\Rfivez{\cc{z}_0,u}+\Rfive+\Rsix, 
\end{align}
and
\begin{align}
%Alle Faktoren \F in lemma_R4_R5_R6 nochmals überprüfen, ob kein Fehler gemacht wurde als die Funktion \T_F eingeführt wurde! Vorallem die in R4,R5,R6!
\Rzeroz{z_0,u,v}=&\log\left(  \tfrac{Q_\n\left(\sqrt{2} \frac{\cc{z}_0+u}{\F} \right)}{\pi_\n\left(\sqrt{2} \frac{\cc{z}_0+u}{\F} \right)}   \tfrac{Q_{\n-1}\left(\sqrt{\frac{\n}{\n-1}}\sqrt{2}\frac{z_0+v}{\F} \right)}{\pi_{\n-1}\left(\sqrt{\frac{\n}{\n-1}}\sqrt{2}\frac{z_0+v}{\F}\right)}   \right),\\
\Rone=&-\tfrac{1}{12} \xi_1(\n) ,\\ 
\Rtwoz{z_0}=&-\tfrac{\F\U{\F}\left(\frac{z_0}{\sqrt{1-\xi_2(\n) } }  \right)  }{4\n\left(1-\xi_2(\n) \right) \T{\F}\left(\frac{z_0}{\sqrt{1-\xi_2(\n) } }  \right) },\\
\Rthreez{z_0}=&-\tfrac{z_0\F\U{\F}\left(\frac{z_0}{\sqrt{1-\xi_3(\n) } }  \right)  }{4\n\left(1-\xi_3(\n) \right)^{3/2} \left(\T{\F}\left(\frac{z_0}{\sqrt{1-\xi_3(\n) } }  \right)\right)^2  } ,\\
\Rfour=&\tfrac{v^3}{3 \left(\T{\F}\left(z_0+\xi_4(v) \right)\right)^3 }, \\
\Rfive=&-\tfrac{v \F \U{\F}(z_0+\xi_5(v))}{2 \left(\T{\F}(z_0+\xi_5(v))\right)^2 }, \\
\Rsix=& -\tfrac{v}{\T{\F}(z_0+\xi_6(v)) } +\tfrac{v \F^2}{4\n\left(1-\xi_2(\n) \right)^{3/2}\left(\T{\F}\left(\frac{z_0+\xi_6(v)}{\sqrt{1-\xi_2(\n) } } \right)\right)^3 } \nn \\
&+\tfrac{v \F^2(z_0+\xi_6(v))^2 - v ( 1-\xi_3(\n) )\F^4  \left(\U{\F}\left(\frac{z_0+\xi_6(v)}{\sqrt{1-\xi_3(\n) } }  \right) \right)^2  }{4\n\left(1-\xi_3(\n) \right)^{5/2} \left(\T{\F}\left(\frac{z_0+\xi_6(v)}{\sqrt{1-\xi_3(\n) } }  \right)\right)^5  }. 
\end{align}
\end{prop}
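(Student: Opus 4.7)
The plan is to re-execute the proof of Proposition \ref{prop_asym_wz} but, at each place where an $o(1)$ or $\lO(\n^{-1/2})$ was absorbed, replace the soft estimate by an explicit remainder of Taylor/Stirling/Plancherel--Rotach type. The identities \eqref{eq_id3} and \eqref{eq_id4} evaluated at $w=\cc{z}_0+u$, $z=z_0+v$ express the derivatives of $\HM$ as linear combinations (with coefficients $\pm\t^{\pm 1/2}$ times a global factor) of the two products $\op{\n}(\cc{z}_0+u)\op{\n-1}(z_0+v)$ and $\op{\n-1}(\cc{z}_0+u)\op{\n}(z_0+v)$ multiplied by $\e^{\n(-wz+\t w^2/2+\t z^2/2)}$. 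Via \eqref{eq_op_n}--\eqref{eq_op_n-1}, each product is a rescaled Hermite polynomial pair, and the goal is to write each of these two products as (its ideal leading expression) $\times \e^{\Rsum{(\t,\cdot,\cdot,\cdot)}}$.

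The six contributions to $\Rsum{}$ would be isolated as follows. First, applying the Plancherel--Rotach asymptotics \eqref{eq_pi} to $Q_\n$ and $Q_{\n-1}$ evaluated at the shifted arguments $\sqrt{2}(\cc{z}_0+u)/\F$ and $\sqrt{n/(n-1)}\sqrt{2}(z_0+v)/\F$ and taking the logarithm of the ratios yields $\Rzeroz{z_0,u,v}$; this is licit since $z_0\in\Adelta$ and $u,v=\lO(n^{-1/2})$ imply the shifted points still lie in $\Xdelta{A}{\delta'}{\sqrt{2}}$ for some $\delta'>0$ once $\n_0$ is large. Next, Stirling with explicit remainder gives $\Rone=-\xi_1(\n)/12$. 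The expansions \eqref{eq_expan1}--\eqref{eq_expan2}, which swap $\sqrt{1-1/\n}\,\U{\F}(z_0(1-1/\n)^{-1/2})$ for $\U{\F}(z_0)$, would be done via an integral form of the mean-value theorem applied to the appropriate smooth functions of the parameter $\alpha=1/\n$, producing $\Rtwoz{z_0}$ and $\Rthreez{z_0}$ with intermediate points $\xi_2(\n),\xi_3(\n)\in[0,1/\n]$; this is exactly the content of the referenced Lemma \ref{lemma_R2_R3_new}. Finally, the $u,v$ dependence is handled by Taylor's theorem with Lagrange remainder at order three applied to $\tfrac{\n}{2}(\U{\F}(z_0+v))^2$ and to $-\n\log\U{\F}(z_0+v)$ (and similarly for the $\cc{z}_0+u$ block): the quadratic Taylor polynomials produce the clean quadratic form $\hu(z_0)v+\huu(z_0)v^2+\hu(\cc{z}_0)u+\huu(\cc{z}_0)u^2+\huv uv$ in the exponent, while the cubic remainders plus the extra $n\to n-1$ corrections coming from the $v$-shift yield the terms $\Rfour$, $\Rfive$, $\Rsix$ displayed in the statement. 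Factoring out the common $\e^{\n\f(z_0)}$ and the common linear/quadratic form exactly as in the proof of Proposition \ref{prop_asym_wz} leaves each Hermite product equal to its ideal value times $\e^{\Rsum{(\t,z_0,u,v)}}$ or $\e^{\Rsum{(\t,\cc{z}_0,v,u)}}$, respectively (the second product is obtained from the first by the symmetry $(z_0,u,v)\mapsto(\cc{z}_0,v,u)$ inherent in exchanging the roles of $\op{\n}$ and $\op{\n-1}$).

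Combining the two products through the identities \eqref{eq_id3}, \eqref{eq_id4} is the step that produces the peculiar ratio form of $\Rxi{\pm}{}$. Writing the linear combination as
\[
\pm\t^{\pm 1/2}\U{\F}(z_0)\,\e^{\Rsum{(\t,z_0,u,v)}}\mp\t^{\mp 1/2}\U{\F}(\cc{z}_0)\,\e^{\Rsum{(\t,\cc{z}_0,v,u)}}
\]
and dividing by the clean (error-free) combination $\pm\t^{\pm 1/2}\U{\F}(z_0)\mp\t^{\mp 1/2}\U{\F}(\cc{z}_0)$, the difference is exactly $\Rxi{\pm}{(\t,z_0,u,v)}$ as defined, with $\gw(z_0)$ or $\gz(z_0)$ and the factor $\sqrt{\n/(2\pi^3)}$ emerging from the prefactors in front. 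This proves \eqref{eq_lemma_asym_wz1_R}--\eqref{eq_lemma_asym_wz2_R}.

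The main technical obstacle is the bookkeeping: one must verify that after factoring out the ideal exponent $\n\f(z_0)+\n(\hu(\cc{z}_0)u+\huu(\cc{z}_0)u^2+\hu(z_0)v+\huu(z_0)v^2+\huv uv)$, what remains for each of the two Hermite products reassembles into the single sum $\Rsum{(\t,z_0,u,v)}$ (respectively with $(z_0,u,v)$ swapped to $(\cc{z}_0,v,u)$). In particular, checking that the cross-terms arising from multiplying the $\n\to\n-1$ rescaling correction with the $v$-shift Taylor remainder land inside $\Rsix$, and that the identity \eqref{eq_asym_wz1} (which was used in the soft proof to cancel the $v^2$ terms between $\Rtwo$-type and $\Rthree$-type contributions) still holds exactly when both sides carry the explicit remainders $\xi_2(\n),\xi_3(\n)$, is the delicate point. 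Once this algebraic identity is confirmed, the proposition follows, and the uniform smallness of $\Rxi{\pm}$ claimed (implicitly) for later use is then a matter of estimating each $R_k$ separately, which is deferred to the subsequent estimation propositions in this chapter.
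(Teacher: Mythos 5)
Your overall architecture --- re-execute the proof of Proposition \ref{prop_asym_wz} with explicit remainders replacing every $o(1)$, bundle each Hermite product's error into a single $\e^{\Rsum{}}$ factor, then divide by the clean combination $\t^{\pm1/2}\U{\F}(z_0)-\t^{\mp1/2}\U{\F}(\cc{z}_0)$ to recognize $1+\Rxi{\pm}{}$ --- matches the paper. But the step you flag as ``delicate'' and leave unresolved is handled by the paper in a structurally different way, and the difference is decisive for obtaining the exact remainder forms stated in the proposition.

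You propose applying Taylor separately to $\tfrac{\n}{2}(\U{\F}(z_0+v))^2$ and to $-\n\log\U{\F}(z_0+v)$, then worry whether the $v^2$ cancellation \eqref{eq_asym_wz1} survives once each remainder carries its own intermediate point. The paper sidesteps this entirely: Lemma \ref{lemma_R4_R5_R6} applies Taylor to the \emph{combined} holomorphic function $\ffour(v)=\tfrac{1}{2}(\U{\F}(z_0+v))^2-\log\U{\F}(z_0+v)$, whose third derivative collapses to the strikingly simple $\ffour'''(v)=2/(\T{\F}(z_0+v))^3$ via the identity $\U{\F}(z)(\T{\F}(z)+z)=\F$. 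This directly produces the stated $\Rfour$ with a \emph{single} intermediate point $\xi_4(v)$; expanding the two summands separately gives two third-order remainders at two distinct intermediate points that do not reassemble into the displayed form, so your route would prove a formally different statement. The cancellation you worry about is not an afterthought to be verified; it is built into the derivatives of the combined function. Your attribution of $\Rfive$ and $\Rsix$ is also imprecise: $\Rfive$ is the order-zero remainder of $\ffive(v)=\log\W{\F}(z_0+v)$ (the $\W{\F}$ prefactor, which your sketch never mentions), and $\Rsix$ is the order-zero remainder of $\fsix(v)=\log\U{\F}(z_0+v)+\Rtwoz{z_0+v}+\Rthreez{z_0+v}$, combining the extra $\U{\F}$ prefactor carried by $\op{\n-1}$ with the shifted $\n\to\n-1$ corrections --- your sketch captures the latter part but misses the former.
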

\begin{remark}
Note that $\f$, $\gw$, $\gz$, $\hu$, $\huu$ and $\huv$ are as in Proposition \ref{prop_asym_wz}. We see that not only $\f$ but also $\huu$ and trivially $\huv$ are symmetric under the map $z_0\mapsto -z_0$ whereas $\gw$, $\gz$ and $\hu$ are skew-symmetric.
\end{remark}

\begin{proof}
We choose $\n_0> \max\{\tfrac{4A^2}{\delta^2},1\}$ % >1 überhaupt notwendig?
and assume in the following that $\n>\n_0$. Then $u,v \in \cc{B}_{\!\frac{\delta}{2}}(0)$ and $\cc{z}_0+u$ and $z_0+v$ lie in $\Xdelta{A}{\frac{\delta}{2}}{\F}$ for all $z_0\in \Adelta$. 

As in the proof of Proposition \ref{prop_asymp_R} in \eqref{eq_op2_error} we can use Lemma \ref{lemma_R2_R3_new} for all $\cc{z}_0+u,z_0+v\in \C \setminus [-\F,\F]$ to get for $\op{\n-1}$ from \eqref{eq_op2} 
\begin{align}
\op{\n-1}(z_0+v)=&\tfrac{Q_{\n-1}\left(\sqrt{\frac{\n}{\n-1}}\sqrt{2}\frac{z_0+v}{\F} \right)}{\pi_{\n-1}\left(\sqrt{\frac{\n}{\n-1}}\sqrt{2}\frac{z_0+v}{\F}\right)} \tfrac{\n^{\frac{\n+1}{2}}}{2\sqrt{\pi}\sqrt{\n !}}  \e^{\frac{\n \left(\U{\F}(z_0+v)\right)^2  }{2} } \left(\U{\F}(z_0+v) \right)^{-(\n-1)} \nn \\
&\cdot \t^{\frac{\n-1}{2}} \left(1-\t^2\right)^{1/4} \W{\F}(z_0+v) \e^{\Rtwoz{z_0+v}+\Rthreez{z_0+v}}.\label{eq_op2_wz_error}
\end{align}
and analogous expression for $\op{\n-1}(\cc{z}_0+u)$. The error terms $\Rtwotz{}$ and $\Rthreetz{}$ are
\begin{align}
\Rtwoz{z_0+v}&=-\tfrac{\F\U{\F}\left(\frac{z_0+v}{\sqrt{1-\xi_2(\n) } }  \right)  }{4\n\left(1-\xi_2(\n) \right) \T{\F}\left(\frac{z_0+v}{\sqrt{1-\xi_2(\n) } }  \right) },\\
\shortintertext{and}
\Rthreez{z_0+v}&=-\tfrac{(z_0+v)\F\U{\F}\left(\frac{z_0+v}{\sqrt{1-\xi_3(\n) } }  \right)  }{4\n\left(1-\xi_3(\n) \right)^{3/2} \left(\T{\F}\left(\frac{z_0+v}{\sqrt{1-\xi_3(\n) } }  \right)\right)^2  } ,
\end{align}
with $\xi_2(\n),\xi_3(\n)\in \left[0,\tfrac{1}{\n}\right]$. For $\op{\n}$ from \eqref{eq_op1} at $z=z_0+v$ we have
\begin{align}
\op{\n}(z_0+v)=&\tfrac{Q_\n\left(\sqrt{2} \frac{z_0+v}{\F} \right)}{\pi_\n\left(\sqrt{2} \frac{z_0+v}{\F} \right)}\tfrac{\n^{\frac{\n+1}{2}}}{2\sqrt{\pi}\sqrt{\n !}}\e^{\frac{\n \left(\U{\F}(z_0+v)\right)^2 }{2} } \left( \U{\F}(z_0+v) \right)^{-\n} \t^{\frac{\n}{2}} \left(1-\t^2\right)^{1/4} \W{\F}(z_0+v). \label{eq_op1_wz_error}
\end{align}

As $u,v\in \cc{B}_{\!\frac{\delta}{2}}(0)$ and $\xi_2,\xi_3\in \left[0,\tfrac{1}{\n}\right]\subset\left[0,\tfrac{1}{2}\right]$ we can use Lemma \ref{lemma_R4_R5_R6} which gives us
\begin{align} 
\e^{\n \ffour(v)}&=\e^{\frac{\n}{2}\left(\U{\F}(z_0+v)\right)^2} \left(\U{\F}(z_0+v) \right)^{-\n} \nn \\
&=\e^{\frac{\n}{2}\left(\U{\F}(z_0)\right)^2 } \left(\U{\F}(z_0)\right)^{-\n} \e^{\n \left( \frac{2 v \U{\F}(z_0)}{\F} -\frac{ v^2 \U{\F}(z_0)}{\F \T{\F}(z_0)} +  \Rfour \right)}, \label{eq_f4_error}  \\
\e^{\ffive(v)}&= \W{F}(z_0+v)=  \W{\F}(z_0)\e^{ \Rfive}, \label{eq_f5_error}
\shortintertext{and}
\e^{\fsix(v)}&= \U{\F}(z_0+v)\e^{\Rtwoz{z_0+v}}\e^{\Rthreez{z_0+v}} \nn \\ 
&=  \U{\F}(z_0)   \e^{\Rtwoz{z_0} + \Rthreez{z_0} + \Rsix}, \label{eq_f6_error}
%\end{align}
\intertext{where}
%\begin{align}
\Rfour&=\tfrac{v^3}{3 \left(\T{\F}\left(z_0+\xi_4(v) \right)\right)^3 }, \\
\Rfive&=-\frac{v \F \U{\F}(z_0+\xi_5(v))}{2 \left(\T{\F}(z_0+\xi_5(v))\right)^2 }, \\
\Rsix&= -\tfrac{v}{\T{\F}(z_0+\xi_6(v)) } +\tfrac{v \F^2}{4\n\left(1-\xi_2(\n) \right)^{3/2}\left(\T{\F}\left(\frac{z_0+\xi_6(v)}{\sqrt{1-\xi_2(\n) } } \right)\right)^3 } \nn \\
&\phantom{=}  +\tfrac{v \F^2(z_0+\xi_6(v))^2 - v ( 1-\xi_3(\n) )\F^4  \left(\U{\F}\left(\frac{z_0+\xi_6(v)}{\sqrt{1-\xi_3(\n) } }  \right) \right)^2  }{4\n\left(1-\xi_3(\n) \right)^{5/2} \left(\T{\F}\left(\frac{z_0+\xi_6(v)}{\sqrt{1-\xi_3(\n) } }  \right)\right)^5  }, 
\end{align} 
with $\xi_4(v),\xi_5(v),\xi_6(v)\in \cc{B}_{\vert v\vert}(0)$. These error terms account for the $\lO(\n^{-1/2})$ in the expansions \eqref{eq_expan6}-\eqref{eq_expan9} of the proof of Proposition \ref{prop_asym_wz}. With the help of that, we find for \eqref{eq_op2_wz_error} and \eqref{eq_op1_wz_error}
\begin{align}
\op{\n-1}(z_0+v)&=\tfrac{Q_{\n-1}\left(\sqrt{\frac{\n}{\n-1}}\sqrt{2}\frac{z_0+v}{\F} \right)}{\pi_{\n-1}\left(\sqrt{\frac{\n}{\n-1}}\sqrt{2}\frac{z_0+v}{\F}\right)} \tfrac{\n^{\frac{\n+1}{2}} \t^{\frac{\n-1}{2}} \left(1-\t^2\right)^{1/4}  }{2\sqrt{\pi}\sqrt{\n !}}  \e^{\n \ffour(v)+\ffive(v)+ \fsix(v)},  \nn \\
&=\tfrac{Q_{\n-1}\left(\sqrt{\frac{\n}{\n-1}}\sqrt{2}\frac{z_0+v}{\F} \right)}{\pi_{\n-1}\left(\sqrt{\frac{\n}{\n-1}}\sqrt{2}\frac{z_0+v}{\F}\right)} \tfrac{\n^{\frac{\n+1}{2}} \t^{\frac{\n-1}{2}} \left(1-\t^2\right)^{1/4}  }{2\sqrt{\pi}\sqrt{\n !}}   \e^{\frac{\n}{2}\left(\U{\F}(z_0)\right)^2 } \left(\U{\F}(z_0)\right)^{-\n} \W{\F}(z_0)  \nn \\
& \quad \cdot \e^{\n \left( \frac{2 v \U{\F}(z_0)}{\F} -\frac{ v^2 \U{\F}(z_0)}{\F \T{\F}(z_0)} +  \Rfour \right)}  \e^{ \Rfive}  \nn \\
& \quad \cdot \U{\F}(z_0) \e^{\Rtwoz{z_0} + \Rthreez{z_0} + \Rsix},  \label{eq_op2_wz_error2} \\
\shortintertext{and}
\op{\n}(z_0+v)&=\tfrac{Q_\n\left(\sqrt{2} \frac{z_0+v}{\F} \right)}{\pi_\n\left(\sqrt{2} \frac{z_0+v}{\F} \right)}\tfrac{\n^{\frac{\n+1}{2}} \t^{\frac{\n}{2}} \left(1-\t^2\right)^{1/4}  }{2\sqrt{\pi}\sqrt{\n !}} \e^{\n\ffour(v)} \e^{\ffive(v)}\nn \\
&= \tfrac{Q_\n\left(\sqrt{2} \frac{z_0+v}{\F} \right)}{\pi_\n\left(\sqrt{2} \frac{z_0+v}{\F} \right)}\tfrac{\n^{\frac{\n+1}{2}} \t^{\frac{\n}{2}} \left(1-\t^2\right)^{1/4}  }{2\sqrt{\pi}\sqrt{\n !}} \e^{\frac{\n}{2}\left(\U{\F}(z_0)\right)^2 } \left(\U{\F}(z_0)\right)^{-\n}    \W{\F}(z_0) \nn \\
&\quad \cdot  \e^{\n \left( \frac{2 v \U{\F}(z_0)}{\F} -\frac{ v^2 \U{\F}(z_0)}{\F \T{\F}(z_0)} +  \Rfour \right)}   \e^{ \Rfive}  . \label{eq_op1_wz_error2}
\end{align}

The Plancherel-Rotach asymptotics \eqref{eq_pi_o} is valid because $\sqrt{2} \tfrac{z_0+v}{\F}$ and $\sqrt{\frac{\n}{\n-1}}\sqrt{2}\frac{z_0+v}{\F}$ lie in $\Xdelta{A}{\tilde{\delta}}{\F}$ when $z_0+v\in \Xdelta{A}{\frac{\delta}{2}}{\F}$ and $\tilde{\delta}=\tfrac{\delta}{\sqrt{2}\F_0}$ where $\F_0=\sqrt{\tfrac{4(1-\delta_0)}{2\delta_0-\delta_0^2}}$ is the largest value of $\F$. Further we know that $Q_\n$ and $Q_{\n-1}$ have no zeros there. Thus, similarly as in the proof of Proposition \ref{prop_asymp_R}, we can define the error term
\begin{align}\label{eq_pr_wz}
\Rzeroz{z_0,u,v}=\log\left(  \tfrac{Q_\n\left(\sqrt{2} \frac{\cc{z}_0+u}{\F} \right)}{\pi_\n\left(\sqrt{2} \frac{\cc{z}_0+u}{\F} \right)}   \tfrac{Q_{\n-1}\left(\sqrt{\frac{\n}{\n-1}}\sqrt{2}\frac{z_0+v}{\F} \right)}{\pi_{\n-1}\left(\sqrt{\frac{\n}{\n-1}}\sqrt{2}\frac{z_0+v}{\F}\right)}   \right).
\end{align}

Again we have an error term coming from the Stirling approximation
\begin{equation}\label{eq_proof_R1_wz}
\frac{\sqrt{2\pi \n} \n^\n \e^{-\n} }{\n!}=\e^{-\frac{1}{12} \xi_1(\n)}=\e^{\Rone},
\end{equation}
where $\xi_1(\n)\in (\tfrac{12}{12\n+1},\tfrac{1}{\n})\subset (0,\tfrac{1}{\n})$ and $\Rone=-\tfrac{1}{12} \xi_1(\n) < 0$.

Using \eqref{eq_op2_wz_error2}, \eqref{eq_op1_wz_error2}, \eqref{eq_pr_wz} and \eqref{eq_proof_R1_wz}, we get as in \eqref{eq_asym_wz2} of the proof of Proposition \ref{prop_asym_wz}
\begin{align}\label{eq_asym_wz2_R}
\pm\t^{1/2}&\left(\t^{\pm 1/2}\op{\n}(\cc{z}_0+u) \op{\n-1}(z_0+v)-\t^{\mp 1/2}\op{\n-1}(\cc{z}_0+u) \op{\n}(z_0+v) \right) \nn \\
&=\pm \tfrac{\sqrt{\n} \e^{\n} \t^\n \sqrt{1-t^2} }{4\pi\sqrt{2\pi}}\e^{\Rone} \e^{\n\left(\ffourz{\cc{z}_0}(u)+\ffour(v)\right)+\ffivez{\cc{z}_0}(u)+\ffive(v)}\nn \\
&\qquad\cdot \left( \t^{\pm 1/2} \e^{\Rzeroz{z_0,u,v} + \fsix(v)} - \t^{\mp 1/2} \e^{\Rzeroz{\cc{z}_0,v,u} + \fsixz{\cc{z}_0}(u)} \right)
\end{align}
Using this and \eqref{eq_asym_wz3}, we finally find for the identities \eqref{eq_id3} and \eqref{eq_id4} at $w=\cc{z}_0+u$ and $z=z_0+v$
\begin{align}
&\left. \begin{array}{l} \left.\frac{\partial \HM}{\partial w}\right\vert_{\substack{\phantom{}_{w=\cc{z}_0+u} \\ \phantom{}_{z=z_0+v} \\ \phantom{}}}  \\  \left.\frac{\partial \HM}{\partial z}\right\vert_{\substack{\phantom{}_{w=\cc{z}_0+u} \\ \phantom{}_{z=z_0+v} \\ \phantom{}}} \end{array}\right\} = \pm\t^{1/2} \frac{ \e^{\n\left( -(\cc{z}_0+u)(z_0+v)+\frac{\t(\cc{z}_0+u)^2+\t(z_0+v)^2 }{2}  \right)} }{\sqrt{1-\t^2}} \nn \\
& \qquad \cdot \left(\t^{\pm 1/2}\op{\n}(\cc{z}_0+u) \op{\n-1}(z_0+v)-\t^{\mp 1/2}\op{\n-1}(\cc{z}_0+u) \op{\n}(z_0+v) \right) \nn\\
& \quad = \pm \tfrac{\sqrt{\n}}{\sqrt{2\pi^3}} \e^{\n\left(1+\log \t +  \ffourz{\cc{z}_0}(u)+\ffour(v) -\vert z_0\vert^2+\t\Re(z_0^2)+u(\t \cc{z}_0-z_0)+v(\t z_0-\cc{z}_0)+\frac{u^2\t}{2}+ \frac{v^2\t}{2}-uv\right)} \nn \\
& \qquad \cdot \e^{\Rone +  \ffivez{\cc{z}_0}(u)+\ffive(v)} \nn \\ 
& \qquad \cdot \tfrac{1}{4} \left( \t^{\pm 1/2} \e^{\Rzeroz{z_0,u,v} + \fsix(v)} - \t^{\mp 1/2} \e^{\Rzeroz{\cc{z}_0,v,u} + \fsixz{\cc{z}_0}(u)} \right) \nn \\
& \quad = \pm \tfrac{\sqrt{\n}}{\sqrt{2\pi^3}} \e^{\n\left(1+\log \t + \Re\left( \left(\U{\F}(z_0)\right)^2 \right)-2\log \left\vert\U{\F}(z_0)\right\vert  -\vert z_0\vert^2+\t\Re(z_0^2)\right)} \nn \\
& \qquad \cdot \e^{\n\left(u \left(\t \cc{z}_0-z_0 + \frac{2 \U{\F}(\cc{z}_0)}{\F} \right) +v \left(\t z_0-\cc{z}_0 + \frac{2 \U{\F}(z_0)}{\F} \right)+u^2 \left(\frac{\t}{2}-\frac{ \U{\F}(\cc{z}_0)}{\F \T{\F}(\cc{z}_0)} \right)  + v^2\left( \frac{\t}{2} -\frac{\U{\F}(z_0)}{\F \T{\F}(z_0)} \right) - uv \right) }\nn \\
& \qquad \cdot \e^{\Rone +  \n \Rfourz{\cc{z}_0,u} + \n \Rfour + \Rfivez{\cc{z}_0,u}  +\Rfive  } \nn \\ 
& \qquad \cdot \tfrac{ \vert \W{\F}(z_0)\vert^2}{4}   \begin{aligned}[t] \Big\lgroup & \t^{\pm 1/2}  \U{\F}(z_0) \e^{\Rzeroz{z_0,u,v}+\Rtwoz{z_0} + \Rthreez{z_0} + \Rsix } \\ &- \t^{\mp 1/2}  \U{\F}(\cc{z}_0) \e^{\Rzeroz{\cc{z}_0,v,u}+\Rtwoz{\cc{z}_0} + \Rthreez{\cc{z}_0} + \Rsixz{\cc{z}_0,u}  }  \Big\rgroup \end{aligned} \nn \\
& \quad = \tfrac{\sqrt{\n}}{\sqrt{2\pi^3}} \e^{ \n\left(  \f(z_0)+u \hu(\cc{z}_0)+v \hu(z_0)+ u^2 \huu(\cc{z}_0)+v^2 \huu(z_0) +uv \huv \right) } \nn \\
& \qquad \cdot \left(1+\Rxi{\pm}{(\t,z_0,u,v)}\right) \left\{\begin{array}{l} \gw(z_0) \\ \gz(z_0), \end{array}\right. 
\end{align}
where we have defined the error term
\begin{align}
\Rxi{\pm}{(\t,z_0,u,v)} &= \tfrac{\t^{\pm 1/2} \U{\F}(z_0)\left(\e^{\Rsum{(\t,z_0,u,v)}}-1 \right) - \t^{\mp 1/2} \U{\F}(\cc{z}_0)\left(\e^{\Rsum{(\t,\cc{z}_0,v,u)}}-1 \right) }{\t^{\pm 1/2} \U{\F}(z_0)-\t^{\mp 1/2} \U{\F}(\cc{z}_0)},
\end{align}
with
\begin{align}
\Rsum{(\t,z_0,u,v)}=&\Rzeroz{z_0,u,v}+\Rone+\Rtwoz{z_0}+\Rthreez{z_0} \nn \\
& +\n \left(\Rfourz{\cc{z}_0,u}+\Rfour \right) \nn \\
& +\Rfivez{\cc{z}_0,u}+\Rfive+\Rsix.
\end{align}
\end{proof}

%\begin{lemma} %evtl. in Anhang am Schluss verschieben / wird wohl gar nicht benötigt und lässt sich direkt in Prop. beweisen, da es auch nicht anderswo von Interesse ist
%Let $\epsilon>0$, $\delta>0$, $0<\delta_0<1$ and $A>0$. Then there exist $C\in (0,\infty)$ and $\n_0\in\N$ so that
%TO BE DONE noch anpassen
%\begin{enumerate}[\normalfont (i)]
%\item \begin{align}
%\sup_{\substack{0<t<1-\delta_0 }} \sup_{z_0\in\Adelta} \left\vert  \t^{\pm 1/2} \U{\F}(z_0)  \right\vert<C, % ist wohl schon äquivalent zu U_F/F <C
%\end{align}
%\item \begin{align}
%\sup_{\substack{0<t<1-\delta_0 \\ \n\ge\n_0 }} \sup_{\substack{ z_0\in\Adelta \\ \xi_1,\xi_2,\xi_3\in [0,\frac{1}{2}] \\ u,v,\xi_4,\xi_5,\xi_6\in B_{\frac{A}{\sqrt{\n}}}(0) }} \left\vert  \e^{\Rsum{(\t,z_0,u,v)}}-1  \right\vert= \sup_{\substack{0<t<1-\delta_0 \\ \n\ge\n_0 }} \sup_{\substack{ z_0\in\Adelta \\ \xi_1,\xi_2,\xi_3\in [0,\frac{1}{2}] \\ u,v,\xi_4,\xi_5,\xi_6\in B_{\frac{\delta}{2}}(0) }} \left\vert  \e^{\Rsum{(\t,\cc{z}_0,v,u)}}-1  \right\vert<\epsilon.
%\end{align}
%\item \begin{align}
%\inf_{\substack{0<t<1-\delta_0  }} \inf_{\substack{ z_0\in\Adelta  }} \left\vert  \t^{\pm 1/2} \U{\F}(z_0)-\t^{\mp 1/2} \U%{\F}(\cc{z}_0)  \right\vert>C^{-1} % vielleicht besser das sup über das Inverse nehmen
%\end{align}
%\end{enumerate}
%\end{lemma}

\begin{prop} \label{prop_Rxi}
Let $\epsilon>0$, $\delta>0$, $0<\delta_0<1$, $A>0$. Then there exists $\n_0=\n_0(\epsilon,\delta,\delta_0,A) \in\N$ so that
\begin{align}
\sup_{\substack{0<t<1-\delta_0 \\ \n>\n_0 }} \sup_{\substack{ z_0\in\Adelta \\ \xi_1,\xi_2,\xi_3\in [0,\frac{1}{\n}] \\ u,v,\xi_4,\xi_5,\xi_6\in \cc{B}_{\!\frac{A}{\sqrt{\n}}}(0) }} \left\vert \Rxi{\pm}{(\t,z_0,u,v)} \right\vert <\epsilon,
\end{align}
\end{prop}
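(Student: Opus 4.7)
The plan is to estimate $\Rsum{(\t,z_0,u,v)}$ piece by piece, show it vanishes uniformly on the stated region, and then transfer this bound to $\Rxi{\pm}{(\t,z_0,u,v)}$ via the algebraic identity
\begin{equation}
\Rxi{\pm}{(\t,z_0,u,v)} = \tfrac{A_\pm \left(e^{\Rsum{(\t,z_0,u,v)}} - 1\right) - B_\pm\left(e^{\Rsum{(\t,\cc{z}_0,v,u)}}-1 \right)}{A_\pm - B_\pm},
\end{equation}
where $A_\pm = \t^{\pm 1/2} \U{\F}(z_0)$ and $B_\pm = \t^{\mp 1/2} \U{\F}(\cc{z}_0)$. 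The triangle inequality then yields
\begin{equation}
\left| \Rxi{\pm}{(\t,z_0,u,v)}\right| \le \tfrac{|A_\pm| + |B_\pm|}{|A_\pm - B_\pm|}\,\max\bigl(|e^{\Rsum{(\t,z_0,u,v)}}-1|,\,|e^{\Rsum{(\t,\cc{z}_0,v,u)}}-1|\bigr),
\end{equation}
reducing the claim to two independent tasks.

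The first task is a uniform lower bound for $|A_\pm - B_\pm|$. Using the elliptic parametrization of Proposition~\ref{prop_U}, one has $\U{\F}(z_0) = r e^{-i\phi}$ with $r = |\U{\F}(z_0)|>0$, and $\U{\F}(\cc{z}_0) = r e^{i\phi}$, so a direct computation gives
\begin{equation}
|A_\pm - B_\pm|^2 = r^2\bigl(\t+\t^{-1} - 2\cos 2\phi\bigr) \ge r^2 (\t^{-1/2}-\t^{1/2})^2 = \tfrac{r^2 (1-\t)^2}{\t},
\end{equation}
while $|A_\pm| + |B_\pm| = (\t^{1/2}+\t^{-1/2})\, r$. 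Therefore
\begin{equation}
\tfrac{|A_\pm| + |B_\pm|}{|A_\pm - B_\pm|} \le \tfrac{1+\t}{1-\t} < \tfrac{2}{\delta_0},
\end{equation}
uniformly in $z_0 \in \C\setminus[-\F,\F]$ and $0 < \t < 1-\delta_0$. The crucial point is that the prefactor $r$, which tends to $0$ as $|z_0|\to\infty$, cancels in the ratio.

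The second task is to show $\sup|\Rsum{}|\to 0$ uniformly. Fix $\n_0 > 4A^2/\delta^2$ so that $\cc{z}_0+u$ and $z_0+v$ remain in $\Xdelta{A}{\delta/2}{\F}$ for all $\n\ge \n_0$, $u,v\in \cc{B}_{\!A/\sqrt{\n}}(0)$ and $z_0\in\Adelta$. Each summand is then estimated individually: $\Rzeroz{z_0,u,v}$ is controlled by the uniform Plancherel-Rotach asymptotics on $\Xdelta{A}{\tilde\delta}{\sqrt 2}$ recalled in Section~\ref{subsection_pr_asym_outside}, exactly as for the $\Rzero$-bound of Proposition~\ref{prop_estim_R}; $|\Rone|\le 1/(12\n)$; $|\Rtwoz{z_0}|,|\Rthreez{z_0}|$ are $O(\n^{-1})$ uniformly by Proposition~\ref{prop_estim_R}. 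For the $u,v$-dependent pieces, Lemma~\ref{lemma_T} combined with Lemma~\ref{lemma_estim_b} gives $|\T{\F}(z_0+\xi)|\ge \delta/2$ on $\Xdelta{A}{\delta/2}{\F}$, whence $|\n\Rfour| \le \tfrac{8A^3}{3\delta^3\sqrt{\n}}$, and similarly $\Rfive, \Rsix$ each carry an overall factor $|v|= O(\n^{-1/2})$ multiplied by continuous functions of $z_0+\xi$ that are bounded uniformly for $0<\t<1-\delta_0$ by Lemmata~\ref{lemma_estim_U_T} and~\ref{lemma_estim_W}. Summing these estimates, $|\Rsum{}|$ can be made smaller than any preassigned $\min\{1,\,\epsilon\delta_0/8\}$ for $\n$ sufficiently large.

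Combining the two tasks and using $|e^x-1|\le 2|x|$ for $|x|<1$ yields $|\Rxi{\pm}{}|<(2/\delta_0)\cdot 2\cdot(\epsilon\delta_0/8) < \epsilon$, which is the proposition. The main obstacle is the denominator bound: a priori $|A_\pm - B_\pm|$ could be small (e.g.\ when $z_0$ is real, making $U_F$ real), but the explicit elliptic-coordinate calculation shows $|A_\pm - B_\pm| \ge r(1-\t)/\sqrt\t$, and this matches the size of $|A_\pm|+|B_\pm|$ up to the harmless factor $(1+\t)/(1-\t)$ which is controlled by $\delta_0$.
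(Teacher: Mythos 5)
Your proof is correct and takes essentially the same route as the paper: both reduce to (a) a uniform small bound for $\Rsum$ assembled term-by-term from the Plancherel--Rotach, Stirling, and Taylor remainder estimates of Propositions~\ref{prop_estim_R} and Lemmata~\ref{lemma_R2_R3_new}--\ref{lemma_estim_R4_R6}, and (b) the key lower bound $|\t^{\pm1/2}\U{\F}(z_0)-\t^{\mp1/2}\U{\F}(\cc z_0)|\ge |\U{\F}(z_0)|\cdot|\t^{1/2}-\t^{-1/2}|$, which cancels the $|\U{\F}(z_0)|$ factor in the numerator and leaves a $\t$-dependent ratio controlled by $\delta_0$. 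The paper derives the denominator bound by splitting $\U{\F}(z_0)$ into real and imaginary parts, whereas you use the polar form $\U{\F}(z_0)=re^{-i\phi}$ directly; these are the same calculation, and your version $\tfrac{|A_\pm|+|B_\pm|}{|A_\pm-B_\pm|}\le\tfrac{1+\t}{1-\t}<\tfrac{2}{\delta_0}$ matches the paper's $\tfrac{\t^{-1}}{(\t^{\pm1/2}-\t^{\mp1/2})^2}=\tfrac{1}{(1-\t)^2}<\tfrac{1}{\delta_0^2}$.
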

where $\Rxi{\pm}$ is as in Proposition \ref{prop_asym_wz_R}.
\begin{proof}
Without loss of generality $\epsilon<1$. Let $\epsilon_1=\tfrac{\epsilon \delta_0}{36}$ and $\F_0=\sqrt{\tfrac{4(1-\delta_0)}{2\delta_0-\delta_0^2}}$ be the largest value of $\F$. 

We can now find an estimation for $\vert \Rsum \vert $, like in the proof of Proposition \ref{prop_estim_R} for $\vert \Rzerotz{}+\Rone{}+\Rtwotz{}+\Rthreetz{} \vert$. For $z_0\in \Adelta$ and $u,v\in \cc{B}_\frac{A}{\sqrt{\n}}(0)\subset \cc{B}_\frac{\delta}{2}(0)$ it follows that $\sqrt{2}\tfrac{z_0+u}{\F}$, $\sqrt{2}\tfrac{z_0+v}{\F}$, $\sqrt{\tfrac{2\n}{\n-1}}\tfrac{z_0+u}{\F}$, $\sqrt{\tfrac{2\n}{\n-1}}\tfrac{z_0+v}{\F}$ lie in $\Xdelta{A}{\tilde{\delta}}{\F}$ for all $\n>\tfrac{4A^2}{\delta^2}$ and $\tilde{\delta}=\tfrac{\sqrt{2}\delta}{\F_0}$. Therefore we know from Section \ref{subsection_pr_asym_outside} that the Plancherel-Rotach asymptotics holds uniformly for $z_0\in \Adelta$ and $u,v\in \cc{B}_\frac{A}{\sqrt{\n}}(0)$. Therefore there exists $\n_1\in\N$ so that, like in the proof of Proposition \ref{prop_estim_R}, 
\begin{align}
\left\vert \Rzeroz{z_0,u,v}\right\vert =\left \vert \log\left(  \tfrac{Q_\n\left(\sqrt{2} \frac{\cc{z}_0+u}{\F} \right)}{\pi_\n\left(\sqrt{2} \frac{\cc{z}_0+u}{\F} \right)}   \tfrac{Q_{\n-1}\left(\sqrt{\frac{\n}{\n-1}}\sqrt{2}\frac{z_0+v}{\F} \right)}{\pi_{\n-1}\left(\sqrt{\frac{\n}{\n-1}}\sqrt{2}\frac{z_0+v}{\F}\right)}   \right) \right\vert<\epsilon_1,
\end{align}
for all $\n>\n_1$, $0<\t<1-\delta_0$, $z_0\in\Adelta$ and $u,v\in \cc{B}_\frac{A}{\sqrt{\n}}(0)$.

Again $ \vert \Rone \vert =\tfrac{1}{12} \xi_1(\n) < \epsilon_1$ if $\n>\tfrac{1}{12\epsilon_1}$.

From Lemma \ref{lemma_estim_R2_R3} we know that there exists $\n_2\in\N$ so that for all $\n>\n_2$, $0<\t<1-\delta_0$, $z_0\in\Adelta$ and $\xi_2(\n),\xi_3(\n)\in [0,\tfrac{1}{2}]$
\begin{align}
\vert \Rtwoz{z_0} \vert &<\epsilon_1, & \vert \Rthreez{z_0} \vert &<\epsilon_1, \nn \\
\intertext{and additionally for all $v,\xi_4(v),\xi_5(v),\xi_6(v) \in \cc{B}_\frac{A}{\sqrt{\n}}(0)$, according to Lemma \ref{lemma_estim_R4_R6}, }
\vert \n \Rfour \vert &<\epsilon_1, & \vert \Rfive \vert &<\epsilon_1, & \vert \Rsix\vert < \epsilon_1. \nn
\end{align}

We can choose $\n_0> \max\{\n_1, \n_2, \tfrac{1}{12\epsilon_1}, \tfrac{4A^2}{\delta^2} \}$. Then it follows that $\vert \Rsum(t,z_0,u,v)\vert < 9\epsilon_1=\tfrac{\epsilon \delta_0}{4}$ for all $\n>\n_0$ and thus
\begin{align}
\left \vert \e^{\Rsum(t,z_0,u,v)}-1 \right\vert< \tfrac{\epsilon \delta_0}{2}, \qquad \text{and} \qquad \left \vert \e^{\Rsum(t,\cc{z}_0,v,u)}-1 \right\vert< \tfrac{\epsilon \delta_0}{2}, \label{eq_prop_Rxi1}
\end{align}
uniformly for $0<\t<1-\delta_0$, $z_0\in\Adelta$, $\xi_1(\n),\xi_2(\n),\xi_3(\n)\in [0,\tfrac{1}{\n}]$ and $v,u,\xi_4,\xi_5,\xi_6 \in \cc{B}_\frac{A}{\sqrt{\n}}(0)$.

Since $\U{\F}(\cc{z}_0)=\cc{\U{\F}(z_0)}$, we find for the absolute value squared of the denominator of $\Rxi{\pm}$
\begin{align}
&\left \vert \t^{\pm 1/2} \U{\F}(z_0) - \t^{\mp 1/2} \U{\F}(\cc{z}_0) \right\vert^2 \nn \\
&\qquad = \left \vert \Re(\U{\F}(z_0)) \left( \t^{\pm 1/2} - \t^{\mp 1/2} \right) +i\Im(\U{\F}(z_0))  \left( \t^{\pm 1/2} + \t^{\mp 1/2} \right)  \right\vert^2 \nn \\
&\qquad = (\Re\, \U{\F}(z_0))^2 \left( \t^{\pm 1/2} - \t^{\mp 1/2} \right)^2+ (\Im\, \U{\F}(z_0))^2 \left( \t^{\pm 1/2} + \t^{\mp 1/2} \right)^2.
\end{align}
This we can estimate as
\begin{align}
&(\Re\, \U{\F}(z_0))^2 \left( \t^{\pm 1/2} - \t^{\mp 1/2} \right)^2+ (\Im\, \U{\F}(z_0))^2 \left( \t^{\pm 1/2} + \t^{\mp 1/2} \right)^2 \nn \\
&\qquad > \left( (\Re\, \U{\F}(z_0))^2 + (\Im\, \U{\F}(z_0))^2 \right) \left( \t^{\pm 1/2} - \t^{\mp 1/2} \right)^2.
\end{align}
So we get for $z_0=a \cos\phi+i b\sin\phi$ with $a^2=b^2+F^2$ using Proposition \ref{prop_U}
\begin{align}
\left \vert \t^{\pm 1/2} \U{\F}(z_0) - \t^{\mp 1/2} \U{\F}(\cc{z}_0) \right\vert^2 > r_{\F}^2(b) \left( \t^{\pm 1/2} - \t^{\mp 1/2} \right)^2, \label{eq_prop_Rxi2}
\end{align}
with $r_{\F}(b)=\tfrac{a-b}{\F}$.

The numerator of $\Rxi{\pm}$ we can estimate as
\begin{align}
&\left\vert \t^{\pm 1/2} \U{\F}(z_0)\left(\e^{\Rsum{(\t,z_0,u,v)}}-1 \right) - \t^{\mp 1/2} \U{\F}(\cc{z}_0)\left(\e^{\Rsum{(\t,\cc{z}_0,v,u)}}-1 \right) \right\vert \nn \\
&\qquad < 2\t^{- 1/2} \left\vert \U{\F}(z_0)\left(\e^{\Rsum{(\t,z_0,u,v)}}-1 \right) \right\vert < \epsilon\delta_0 \t^{- 1/2} \left\vert \U{\F}(z_0) \right\vert , \label{eq_prop_Rxi3}
\end{align}
where we have used \eqref{eq_prop_Rxi1} in the last step. Using that
\begin{align}
\tfrac{\t^{-1}}{\left( \t^{\pm 1/2} - \t^{\mp 1/2} \right)^2}=\tfrac{1}{(1-\t)^2}<\tfrac{1}{\delta_0^2}, \qquad \forall \t\in(0,1-\delta_0), \label{eq_prop_Rxi4}
\end{align}
and that $\vert\U{\F}(z_0)\vert^2=r_{\F}^2(b)$ according to Proposition \ref{prop_U}, we find for $\Rxi{\pm}\!$ with \eqref{eq_prop_Rxi2} and \eqref{eq_prop_Rxi3} that finally
\begin{align}
\left\vert \Rxi{\pm}{(\t,z_0,u,v)} \right\vert <\epsilon,
\end{align}
uniformly for all $\t\in(0,1-\delta_0)$, $z_0\in\Adelta$, $\xi_1(\n),\xi_2(\n),\xi_3(\n)\in [0,\tfrac{1}{\n}]$ and $v,u,\xi_4,\xi_5,\xi_6 \in \cc{B}_\frac{A}{\sqrt{\n}}(0)$.
\end{proof}

\begin{prop}\label{prop_estim_A}
Let $\delta>0$, $0<\delta_0<1$, $\delta_1>0$ and $A\in (0,\infty)$. Then there exists $\n_0=\n_0(\delta_0,\delta_1,A)\in\N$, $\epsilon=\epsilon(\delta,\delta_0)>0$, $C=C(\delta_0,\delta_1)>0$ so that for all $\n > \n_0$
\begin{align}
\sup_{\t \in(0,1-\delta_0)}\sup_{\substack{z_0 \in \Xdelta{A}{\delta_1}{\F}\setminus \Edelta{\delta} \\ u,v \in \cc{B}_{\! \frac{A}{\sqrt{\n}} }(0) }} \left\vert  \left.\frac{\partial \HM}{\partial w}\right\vert_{\substack{\phantom{}_{w=\cc{z}_0+u} \\ \phantom{}_{z=z_0+v} \\ \phantom{}}}   \right\vert&< C\sqrt{\n} \e^{-\n \epsilon},\\
\shortintertext{and}
\sup_{\t \in(0,1-\delta_0)}\sup_{\substack{z_0 \in \Xdelta{A}{\delta_1}{\F}\setminus \Edelta{\delta} \\ u,v \in \cc{B}_{\! \frac{A}{\sqrt{\n}} }(0) }} \left\vert  \left.\frac{\partial \HM}{\partial z}\right\vert_{\substack{\phantom{}_{w=\cc{z}_0+u} \\ \phantom{}_{z=z_0+v} \\ \phantom{}}}   \right\vert&< C\sqrt{\n} \e^{-\n \epsilon}. 
\end{align}
\end{prop}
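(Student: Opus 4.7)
The plan is to combine three ingredients already set up in this chapter: the explicit asymptotic representation from Proposition \ref{prop_asym_wz_R}, the uniform error bound of Proposition \ref{prop_Rxi}, and the strict negativity of $\f$ away from $\pEt$ from Proposition \ref{prop_estim_f}. Applying Proposition \ref{prop_asym_wz_R} with its internal $\delta$ taken equal to $\delta_1$ (so that $\Xdelta{A}{\delta_1}{\F}\subset\Adelta$) yields an $\n_1=\n_0(\delta_1,A)$ and the representation
\begin{equation*}
\left.\tfrac{\partial \HM}{\partial w}\right|_{\substack{w=\cc{z}_0+u\\ z=z_0+v}}=\sqrt{\tfrac{\n}{2\pi^3}}\,\gw(z_0)\,\e^{\n\Phi(z_0,u,v)}\bigl(1+\Rxi{+}{(\t,z_0,u,v)}\bigr),
\end{equation*}
where $\Phi(z_0,u,v):=\f(z_0)+\hu(\cc{z}_0)u+\huu(\cc{z}_0)u^2+\hu(z_0)v+\huu(z_0)v^2+\huv uv$, together with its analogue for $\partial/\partial z$ involving $\gz$ and $\Rxi{-}$. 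Applying Proposition \ref{prop_Rxi} with $\epsilon=1$ produces an $\n_2$ for which $|1+\Rxi{\pm}|<2$ uniformly over the stated parameter ranges.

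The core task is to extract exponential decay from $\e^{\n\Phi}$. Proposition \ref{prop_estim_f} furnishes $\epsilon_1=\epsilon_1(\delta,\delta_0)>0$ with $\f(z_0)\le-\epsilon_1$ on $\Xdelta{A}{\delta_1}{\F}\setminus\Edelta{\delta}$ uniformly in $\t\in(0,1-\delta_0)$. I would split the domain at a radius $R=R(\delta_0)$ chosen large enough that \eqref{eq_f_large_z} yields $\f(z_0)\le-\tfrac{\delta_0}{2}|z_0|^2$ for $|z_0|>R$. On the compact piece $B_R(0)\cap \Xdelta{A}{\delta_1}{\F}$, the formulas \eqref{eq_hu} and \eqref{eq_huu} give uniform bounds $|\hu(z_0)|\le M_1$ and $|\huu(z_0)|\le M_2$ with $M_1,M_2$ depending only on $\delta_0,\delta_1,R$; since $|u|,|v|\le A/\sqrt{\n}$,
\begin{equation*}
\Re\bigl(\n\Phi(z_0,u,v)\bigr)\le-\n\epsilon_1+2M_1A\sqrt{\n}+(2M_2+1)A^2,
\end{equation*}
and completing the square in $\sqrt{\n}$ absorbs the linear term into $\Re(\n\Phi)\le-\tfrac{\n\epsilon_1}{2}+2M_1^2A^2/\epsilon_1+(2M_2+1)A^2$, so that $|\e^{\n\Phi}|\le C_1\e^{-\n\epsilon_1/2}$ for a suitable $C_1$. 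On the exterior $|z_0|>R$, the estimates $|\hu(z_0)|=O(|z_0|)$ and $|\huu(z_0)|=O(1)$ combined with the quadratic decay of $\f$ give $\Re(\n\Phi)\le-\tfrac{\n\delta_0}{4}|z_0|^2\le-\tfrac{\n\delta_0 R^2}{4}$ once $\n$ is sufficiently large.

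Finally, the prefactor $|\gw(z_0)|=\tfrac14|\W{\F}(z_0)|^2\,|\t^{1/2}\U{\F}(z_0)-\t^{-1/2}\U{\F}(\cc{z}_0)|$ is uniformly bounded on $\Xdelta{A}{\delta_1}{\F}\cap B_R(0)$ for $\t\in(0,1-\delta_0)$, the potentially dangerous factor $\t^{-1/2}$ being compensated by $|\U{\F}(\cc{z}_0)|=O(\F)=O(\sqrt{\t})$ as $\t\to 0$ (Lemma \ref{lemma_limit_U_W}), and it grows at most linearly in $|z_0|$ on the exterior, where this growth is absorbed into the quadratic exponential decay. Setting $\epsilon:=\min\{\epsilon_1/2,\delta_0 R^2/4\}$, taking $\n_0:=\max\{\n_1,\n_2\}$ suitably enlarged, and collecting the constants into $C=C(\delta_0,\delta_1)$ gives both stated estimates; the identical argument with $\gz$ and $\Rxi{-}$ handles the second inequality. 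The main technical subtlety is the unbounded tail $|z_0|>R$, where one must confirm that the super-quadratic decay of $\f$ from \eqref{eq_f_large_z} really does dominate the linear growth of $\hu$ uniformly in $\t\in(0,1-\delta_0)$; this becomes straightforward once $R$ is chosen in terms of $\delta_0$ with a margin large enough that the $-\delta_0|z_0|^2$ term in \eqref{eq_f_large_z} absorbs all subleading contributions.
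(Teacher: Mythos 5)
Your proposal follows essentially the same route as the paper's proof: invoke Proposition~\ref{prop_asym_wz_R} for the representation, Proposition~\ref{prop_Rxi} with $\epsilon=1$ to bound $1+\Rxi{\pm}$, Proposition~\ref{prop_estim_f} to make $\f$ uniformly negative off the ellipse, split at a radius $R$ and use \eqref{eq_f_large_z} on the tail, and bound the linear and quadratic perturbation terms $\hu$, $\huu$ via $\Adelta$ and the lemmas in the appendix. The key compensation of $\t^{-1/2}$ by $|\U{\F}(z_0)|=O(\F)$ is the same one the paper makes (via Lemma~\ref{lemma_estim_U_T}).

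Two small points of bookkeeping. First, your completing-the-square step places the $A$- and $\epsilon_1$-dependent factor $\e^{2M_1^2A^2/\epsilon_1+(2M_2+1)A^2}$ into the multiplicative constant, so your $C$ would depend on $A$ and $\delta$, whereas the statement asserts $C=C(\delta_0,\delta_1)$. The paper instead requires $\n_0$ large enough that $\tfrac{C_1}{\n}<\epsilon$, $\tfrac{C_2}{\sqrt\n}<\epsilon$, etc., so that the perturbation is swallowed by the exponent and $C$ stays $A$-free; reorganizing your estimate this way is immediate but should be done to match the claimed constant dependencies. Second, $|\gw(z_0)|$ on the exterior $|z_0|>R$ is in fact uniformly bounded (by Lemmas~\ref{lemma_estim_W}, \ref{lemma_estim_U_T}, \ref{lemma_estim_b}, since $b\ge\delta_1$), not merely of linear growth; your weaker claim still suffices once the quadratic decay of $\f$ kicks in, so this is harmless but worth tightening.
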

\begin{proof}
%Without loss of generality $\epsilon<1$. 
% Let $\epsilon_1=\tfrac{\epsilon \delta_0}{36}$ and 
Let $\F_0=\sqrt{\tfrac{4(1-\delta_0)}{2\delta_0-\delta_0^2}}$ be the largest value of $\F$. 

We know there exists $\tilde{\n}_0\in \N$ so that according to Proposition \ref{prop_asym_wz_R}
\begin{align}
\left.\frac{\partial \HM}{\partial w}\right\vert_{\substack{\phantom{}_{w=\cc{z}_0+u} \\ \phantom{}_{z=z_0+v} \\ \phantom{}}} =&\sqrt{\tfrac{\n}{2 \pi^3}} \gw(z_0) \e^{\n \left( \f(z_0) +\hu(\cc{z}_0)u+\huu(\cc{z}_0)u^2 +\hu(z_0)v+\huu(z_0)v^2+\huv uv \right)} \nn \\ 
& \cdot \left(1+\Rxi{+}{(\t,z_0,u,v)}\right), \\
\left.\frac{\partial \HM}{\partial z}\right\vert_{\substack{\phantom{}_{w=\cc{z}_0+u} \\ \phantom{}_{z=z_0+v} \\ \phantom{}}} =&\sqrt{\tfrac{\n}{2 \pi^3}} \gz(z_0) \e^{\n \left( \f(z_0) +\hu(\cc{z}_0)u+\huu(\cc{z}_0)u^2 +\hu(z_0)v+\huu(z_0)v^2+\huv uv \right)} \nn \\
& \cdot \left(1+\Rxi{-}{(\t,z_0,u,v)}\right),
\end{align}
and according to Proposition \ref{prop_Rxi}
\begin{align}
\left\vert \Rxi{\pm}(t,z_0,u,v)\right\vert < 1,
\end{align}
for all $\n>\tilde{\n}_0$, $0<\t<1-\delta_0$, $z_0\in \Xdelta{A}{\delta_1}{\F}$, $\xi_1(\n),\xi_2(\n),\xi_3(\n)\in [0,\tfrac{1}{\n}]$ and $v,u,\xi_4,\xi_5,\xi_6 \in \cc{B}_\frac{A}{\sqrt{\n}}(0)$ with $\f$, $\gw$, $\gz$, $\hu$, $\huu$, $\huv$, $\Rxi{\pm}$ and $(\xi_k)_{k=1}^6$ as in Proposition \ref{prop_asym_wz_R}.  

From Lemma \ref{lemma_estim_W} and \ref{lemma_estim_b} we get that
\begin{align}
\vert \W{\F}(z_0)\vert^2 & < 2\left(1+\sqrt{1+\tfrac{\F_0^2}{\delta_1^2}}\right),
\end{align}
and from Lemma \ref{lemma_estim_U_T} and \ref{lemma_estim_b} that
\begin{align}
\t^{\pm 1/2} \vert \U{\F}(z_0)\vert & \le \t^{- 1/2} \vert \U{\F}(z_0)\vert = \tfrac{2 \vert \U{\F}(z_0)\vert }{\sqrt{1-\t^2} \F } < \tfrac{1}{\sqrt{\delta_0} \delta_1},
\end{align}
and therefore
\begin{align}
\vert \gw(z_0) \vert&= \vert \gz(z_0) \vert = \tfrac{1}{4}\vert \W{\F}(z_0) \vert^2 \left\vert \t^{1/2} \U{\F}(z_0)-\t^{-1/2} \U{\F}(\cc{z}_0) \right\vert \nn \\
& \le \tfrac{1}{2}\vert \W{\F}(z_0) \vert^2  \t^{-1/2} \left\vert \U{\F}(z_0) \right\vert < \left(1+\sqrt{1+\tfrac{\F_0^2}{\delta_1^2}}\right) \tfrac{1}{\sqrt{\delta_0} \delta_1},
\end{align}
for all $0<\t<1-\delta_0$ and $z_0\in \Xdelta{A}{\delta_1}{\F}$. Thus we find a finite constant $C>0$ so that 
\begin{align}
\left.\begin{array}{r} \left\vert\gw(z_0)\right\vert \\ \left\vert\gz(z_0)\right\vert \end{array} \right\} \sqrt{\tfrac{\n}{2 \pi^3}} \left\vert 1+\Rxi{+}{(\t,z_0,u,v)} \right\vert & < \sqrt{\tfrac{2 \n}{\pi^3}} \left(1+\sqrt{1+\tfrac{\F_0^2}{\delta_1^2}}\right) \tfrac{1}{\sqrt{\delta_0} \delta_1}=C\sqrt{\n},
\end{align}
for all $\n>\tilde{\n}_0$, $0<\t<1-\delta_0$, $z_0\in \Xdelta{A}{\delta_1}{\F}$, $\xi_1(\n),\xi_2(\n),\xi_3(\n)\in [0,\tfrac{1}{\n}]$ and $v,u,\xi_4,\xi_5,\xi_6 \in \cc{B}_\frac{A}{\sqrt{\n}}(0)$.

From Proposition \ref{prop_estim_f} we know that there exists $\epsilon>0$ so that
\begin{align}
\sup_{\substack{0<\t<1-\delta_0}} \sup_{\substack{z_0 \in \Xdelta{A}{\delta_1}{\F}\setminus \Edelta{\delta} }} \f(z_0) \le -4\epsilon.
\end{align}
We further know that there exists $R\in(0,\infty)$ so that $f(z_0)\le -\vert z_0 \vert$ for all $z_0\notin B_R(0)$ and $\t\in(0,1-\delta_0)$ (see \eqref{eq_f_large_z} in paragraph 4 of the proof of Proposition \ref{prop_unifrom_estimate_drho}). We can assume that $R>4\epsilon$.

We can find constants $C_1,C_2\in(0,\infty)$ so that for all $\t\in(0,1-\delta_0)$, $z_0\in \Xdelta{A}{\delta_1}{\F}$ and $u,v\in \cc{B}_\frac{A}{\sqrt{\n}}(0)$ we can estimate
\begin{align}
\vert \huu(\cc{z}_0)u^2 +\huu(z_0)v^2+\huv uv \vert &\le \tfrac{A^2}{\n } \left( \t + \tfrac{2\vert \U{\F}(z_0) \vert }{\F \vert \T{\F}(z_0)\vert } + 1 \right) < \tfrac{A^2}{\n } \left( 2 + \tfrac{1}{\delta_1^2}  \right)=\tfrac{C_1}{\n}, \\
\intertext{and}
\vert \hu(\cc{z}_0)u +\hu(z_0)v \vert &\le \tfrac{A}{\sqrt{\n} } \left(\tfrac{4 \vert \U{\F}(z_0) \vert}{\F}+2(1+\t)\vert z_0\vert  \right) \nn \\
&< \tfrac{A}{\sqrt{\n} } \left(\tfrac{2}{\delta_1}+4 \vert z_0\vert \right) \le \tfrac{C_2}{\sqrt{\n}}+  \left\{ \begin{array}{ll} \tfrac{4AR}{\sqrt{\n}}, & \text{if }\vert z_0\vert<R, \\ \tfrac{4A \vert z_0\vert }{\sqrt{\n}}, & \forall z_0\in \Xdelta{A}{\delta_1}{\F}, \end{array}\right.
\end{align}
where we have used Lemma \ref{lemma_estim_U_T} together with Lemma \ref{lemma_estim_b} and set $C_2=\tfrac{2A}{ \delta_1 }$. If $\n>\max\{ \tfrac{C_1}{\epsilon}, \tfrac{C_2^2}{\epsilon^2}, \tfrac{16 A^2 R^2}{\epsilon^2}  \} $, we find so that
\begin{align}
\left\vert \e^{\n \left( \f(z_0) +\hu(\cc{z}_0)u+\huu(\cc{z}_0)u^2 +\hu(z_0)v+\huu(z_0)v^2+\huv uv \right)} \right\vert \le \e^{-\n (4\epsilon - 3\epsilon) }=\e^{-\n \epsilon },
\end{align}
for all $0<\t<1-\delta_0$, $z_0\in (\Xdelta{A}{\delta_1}{\F}\setminus \Edelta{\delta}) \cap B_R(0)$ and $v,u \in \cc{B}_\frac{A}{\sqrt{\n}}(0)$.

If $\n>\max\{ \tfrac{C_1}{\epsilon}, \tfrac{C_2^2}{\epsilon^2}, \tfrac{16 A^2 R^2}{\epsilon^2}  \} $ it follows also that $\tfrac{C_1}{\n}<\tfrac{R}{4}$, $\tfrac{C_2}{\sqrt{\n}}<\tfrac{R}{4}$ and $\tfrac{4A}{\sqrt{\n}}<\tfrac{1}{4}$ since $R>4\epsilon$. We get additionally that
\begin{align}
\left\vert \e^{\n \left( \f(z_0) +\hu(\cc{z}_0)u+\huu(\cc{z}_0)u^2 +\hu(z_0)v+\huu(z_0)v^2+\huv uv \right)} \right\vert \le \e^{-\n (\vert z_0\vert - \frac{R}{2} - \frac{\vert z_0\vert}{4}) } \le \e^{-\n \frac{R}{4}} \le \e^{-\n \epsilon},
\end{align}
for all $0<\t<1-\delta_0$, $z_0\in \C \setminus B_R(0)$ and $v,u \in \cc{B}_\frac{A}{\sqrt{\n}}(0)$.

Therefore the Proposition is proven if we choose %\n_0>\max \{ \tilde{\n}_0, \tfrac{C_1}{\epsilon}, \tfrac{C_2^2}{\epsilon^2}, \tfrac{16 A^2 R^2}{\epsilon^2}, \tfrac{4 C_1}{R}, \tfrac{16 C_2^2}{R^2}, 256 A^2 \}.
$\n_0>\max \{ \tilde{\n}_0, \tfrac{C_1}{\epsilon}, \tfrac{C_2^2}{\epsilon^2}, \tfrac{16 A^2 R^2}{\epsilon^2} \}$.
\end{proof}

\subsubsection{\sloppy Plancherel-Rotach Asymptotics on $\Bdelta$ and Maximum Principle on $\Cdelta$}
%wohl auch noch gleichmässige allgemeine Abschätzung? evtl auch nicht nötig, da wir hier ja jeweils ein fixes z_0 betrachten.
\fussy
In the following we will prove a more precise and generalized version of Proposition \ref{prop_H_B}.
\begin{prop}\label{prop_estim_B_C}
Let $0<\delta_0<1$ and $A\in (0,\infty)$. Then there exists $\n_0=\n_0(\delta_0)$, $\delta=\delta(\delta_0)>0$, $\epsilon=\epsilon(\delta_0)>0$, $C=C(\delta_0)>0$ so that for all $\n > \n_0$
\begin{align}
\sup_{\t \in(0,1-\delta_0)}\sup_{\substack{z_0 \in \Bdelta \cup \Cdelta \\ u,v \in \cc{B}_{\! \frac{A}{\sqrt{\n}} }(0) }} \left\vert  \left.\frac{\partial \HM}{\partial w}\right\vert_{\substack{\phantom{}_{w=\cc{z}_0+u} \\ \phantom{}_{z=z_0+v} \\ \phantom{}}}   \right\vert&< C\sqrt{\n} \e^{-\n \epsilon},\\
\shortintertext{and}
\sup_{\t \in(0,1-\delta_0)}\sup_{\substack{z_0 \in \Bdelta \cup \Cdelta \\ u,v \in \cc{B}_{\! \frac{A}{\sqrt{\n}} }(0) }} \left\vert  \left.\frac{\partial \HM}{\partial z}\right\vert_{\substack{\phantom{}_{w=\cc{z}_0+u} \\ \phantom{}_{z=z_0+v} \\ \phantom{}}}   \right\vert&< C\sqrt{\n} \e^{-\n \epsilon}. 
\end{align}
\end{prop}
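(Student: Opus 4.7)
The plan is to obtain Proposition~\ref{prop_estim_B_C} as the $u,v$-perturbed analogue of Propositions~\ref{prop_estim_max_princ} and~\ref{prop_estim_pr_B}, which cover exactly the unperturbed ($u=v=0$) case. Starting from the identities~\eqref{eq_id3}--\eqref{eq_id4} of Proposition~\ref{prop_ident_wz}, I obtain
\begin{align*}
\bigl|\tfrac{\partial\HM}{\partial w}(\cc{z}_0+u,z_0+v)\bigr|
&\le \tfrac{1+\t}{\sqrt{1-\t^2}}\,\bigl|\op{\n}(\cc{z}_0+u)\bigr|\,\bigl|\op{\n-1}(z_0+v)\bigr|\,e^{\n\,\Psi(\t,z_0,u,v)},
\end{align*}
and the analogous bound for $\partial\HM/\partial z$, where $\Psi(\t,z_0,u,v):=\Re\bigl(-(\cc{z}_0+u)(z_0+v)+\tfrac{\t}{2}(\cc{z}_0+u)^2+\tfrac{\t}{2}(z_0+v)^2\bigr)=-|z_0|^2+\t\Re(z_0^2)+O(\n^{-1/2})$ uniformly in $z_0\in\Bdelta\cup\Cdelta$ and $u,v\in\cc{B}_{A/\sqrt{\n}}(0)$. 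Thus the whole $(u,v)$-dependence of the exponential factor is an $O(\n^{-1/2})$ correction to the leading exponent.

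Next I take $\delta$ to be the minimum of the widths produced by Propositions~\ref{prop_using_maximum_principle} and~\ref{prop_pr_B}, and $\n_0$ large enough that $A/\sqrt{\n}<\delta/2$, so that both shifted points $\cc{z}_0+u$ and $z_0+v$ still lie in $\Adelta^{C}$ (with $\delta$ at worst halved). Decompose $\Adelta^{C}=\Xdelta{C}{\delta}{\F,\lambda}\cup\Xdelta{B}{\delta}{\F,\lambda}$. On the neighborhood of $\pm\F$, apply Proposition~\ref{prop_using_maximum_principle} at the shifted arguments; multiplied by $e^{\n\Psi}$ the total exponent becomes $\n(-|z_0|^2+\t\Re(z_0^2)+\log\t+2+\epsilon+O(\n^{-1/2}))$, and the computation~\eqref{eq_proof_estim_mp5b}--\eqref{eq_proof_estim_mp6} from the proof of Proposition~\ref{prop_estim_max_princ} shows that for $\lambda=\lambda(\delta_0)$ small enough this is bounded above by $-\n\epsilon'$ for $\t\in(\delta_1,1-\delta_0)$, while for $\t\in(0,\e^{-2}-\delta_0)$ Proposition~\ref{prop_using_maximum_principle} already supplies decay $e^{-\n\delta_0/2}$ without involving $\Psi$. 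On the complement $\Xdelta{B}{\delta}{\F,\lambda}$ with $\t\in(\delta_1,1-\delta_0)$, Proposition~\ref{prop_pr_B} applied at the shifted arguments bounds the product of orthonormal polynomials by $C\sqrt{\n}\,e^{\n(2\Re(z_0^2)/\F^2+\log\t+\epsilon+O(\n^{-1/2}))}$; multiplication by $e^{\n\Psi}$ yields the total exponent $\n(\f^r(z_0)+\epsilon+O(\n^{-1/2}))$ with $\f^r$ as in~\eqref{eq_fr}, which is strictly negative and uniformly bounded away from zero by the same analysis as in the proof of Proposition~\ref{prop_estim_pr_B}.

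The main obstacle is the bookkeeping required to confirm that the $O(\n^{-1/2})$-shifts do not spoil the domains of validity of the Plancherel-Rotach asymptotics~\eqref{eq_pi_o} and~\eqref{eq_pir_o}, nor the strict negativity of the leading exponent at the interfaces $\partial\Bdelta\cap\partial\Cdelta$. Both are controlled by the same ingredient: for $\n$ sufficiently large the rescaled shifted arguments $\sqrt{2}(\cc{z}_0+u)/\F$ and $\sqrt{2\n/(\n-1)}(z_0+v)/\F$ remain inside slightly shrunken regions $\Xdelta{A}{\tilde\delta}{\sqrt{2}}$ or $\Xdelta{B}{\tilde\delta}{\sqrt{2}}$ (with $\tilde\delta$ depending only on $\delta$ and on the maximal value of $\F$), where the cited uniform convergence of $Q_m/\pi_m$ and $Q_m/\pi^r_m$ continues to hold; the resulting extra perturbation in the exponent is of strictly lower order than the already-established exponential decay, so no new cancellations or sharper estimates are needed beyond what is already present in the proofs of Propositions~\ref{prop_estim_max_princ} and~\ref{prop_estim_pr_B}.
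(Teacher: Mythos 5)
Your proposal is correct and follows essentially the same approach as the paper: start from the identities \eqref{eq_id3}--\eqref{eq_id4}, split $\Bdelta\cup\Cdelta$ into a neighborhood of $\pm\F$ handled by the maximum principle (Proposition~\ref{prop_using_maximum_principle}) and the remaining strip handled by the real Plancherel--Rotach bounds (Proposition~\ref{prop_pr_B}), apply each proposition at the $O(\n^{-1/2})$-shifted arguments after enlarging the regions slightly, and absorb the resulting $O(\n^{-1/2})$ correction to the exponent into the already-established exponential decay. (One tiny slip: the shifted points $\cc{z}_0+u$, $z_0+v$ land in a \emph{slightly larger} region $\Xdelta{A}{2\tilde\delta}{\F}^C$, not a halved one; this is exactly how the paper arranges it and does not affect the argument.)
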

\begin{proof}
Let $\epsilon=\min\left\{\tfrac{\delta_0^3}{24}, -\tfrac{\delta_0}{4-2\delta_0}-\tfrac{\log(1-\delta_0)}{2} \right\}>0$ and without loss of generality $\delta_0<\e^{-2}$. We define $\delta_1=\e^{-2}-\delta_0$. 

According to Proposition \ref{prop_using_maximum_principle} we get $\tilde{\delta}>0$ so that for all $\t \in [\delta_1,1-\delta_0]$ 
\begin{align}
\sup_{z_1,z_2\in \Xdelta{A}{2\tilde{\delta}}{\F}^C} \left\vert\op{\n}(z_1)\op{\n-1}(z_2)\right\vert & < g_1(\n,\t,\zmax)g_2(\n,\t,\zmax) \e^{\n\left(2+\log \t+\epsilon \right)},
\intertext{and for all $\t \in (0,\delta_1)$ }
\sup_{z_1,z_2\in \Xdelta{A}{2\tilde{\delta}}{\F}^C} \left\vert\op{\n}(z_1)\op{\n-1}(z_2)\right\vert & < g_1(\n,\t,\zmax)g_2(\n,\t,\zmax) \e^{-\n\delta_0} \nn \\
&< g_1(\n,\t,\zmax)g_2(\n,\t,\zmax) \e^{-2\n\epsilon},
\end{align}
where $g_1$, $g_2$ are as in Proposition \ref{prop_using_maximum_principle} and $\zmax=\F+2(1+i)\tilde{\delta}$. 

We set $w=\cc{z}_0+u$ and $z=z_0+v$ in identities \eqref{eq_id3} and \eqref{eq_id4}. Then for all $\t\in (0,1)$ and $\cc{z}_0+u,z_0+v \in \Xdelta{A}{2\tilde{\delta}}{\F}^C$
\begin{align}
\left\vert  \left.\frac{\partial \HM}{\partial w}\right\vert_{\substack{\phantom{}_{w=\cc{z}_0+u} \\ \phantom{}_{z=z_0+v} \\ \phantom{}}}   \right\vert& \le 2 \sup_{z_1,z_2\in \Xdelta{A}{2\tilde{\delta}}{\F}^C} \left\vert\op{\n}(z_1)\op{\n-1}(z_2)\right\vert \left. \tfrac{\e^{\n\Re\left(-wz+\t\frac{w^2+z^2}{2} \right) } }{\sqrt{1-\t^2}} \right\vert_{\substack{\phantom{}_{w=\cc{z}_0+u} \\ \phantom{}_{z=z_0+v} \\ \phantom{}}}, \label{eq_estim_B_C1}
\intertext{and}
\left\vert  \left.\frac{\partial \HM}{\partial z}\right\vert_{\substack{\phantom{}_{w=\cc{z}_0+u} \\ \phantom{}_{z=z_0+v} \\ \phantom{}}}   \right\vert& \le 2 \sup_{z_1,z_2\in \Xdelta{A}{2\tilde{\delta}}{\F}^C} \left\vert\op{\n}(z_1)\op{\n-1}(z_2)\right\vert \left. \tfrac{\e^{\n\Re\left(-wz+\t\frac{w^2+z^2}{2} \right) } }{\sqrt{1-\t^2}} \right\vert_{\substack{\phantom{}_{w=\cc{z}_0+u} \\ \phantom{}_{z=z_0+v} \\ \phantom{}}}. \label{eq_estim_B_C2}
\end{align}

If $\delta\le \tilde{\delta}$ and $\n>\tfrac{A^2}{\tilde{\delta}^2}$, then it follows for $z_0\in \Adelta^C$ and $u,v\in B_{\!\frac{A}{\sqrt{\n}}}(0)$ that $z_0+u$ and $\cc{z}_0+v$ lie in $\Xdelta{A}{2\tilde{\delta}}{\F}^C$.

Since $\tfrac{1}{1-\t^2}<\tfrac{1+ \t }{1- \t}$, we can define the finite constant $\tilde{C}$ as in the proof of Proposition \ref{prop_estim_max_princ} in \eqref{eq_proof_estim_mp4b} so that
\begin{align}
\frac{2g_1(\n,\t,\zmax) g_2(\n,\t,\zmax)}{\sqrt{1-\t^2}} < \frac{\sqrt{\n} \left( 1+\sqrt{1+\frac{1}{\delta_0 \tilde{\delta}^2}} \right) }{\sqrt{\pi^3 \delta_0}\tilde{\delta}}=\sqrt{\n} \tilde{C},
\end{align}
for all $\n>\tilde{\n}_0$ with $\tilde{\n}_0$ big enough.

Next we look at the exponential term
\begin{align}
\left. \e^{\n\Re\left(-wz+\t\frac{w^2+z^2}{2} \right) }  \right\vert_{\substack{\phantom{}_{w=\cc{z}_0+u} \\ \phantom{}_{z=z_0+v} \\ \phantom{}}}=\e^{\n\left(-\vert z_0\vert^2+\t\Re(z_0^2)+\Re\left( u(\t \cc{z}_0-z_0)+v(\t z_0-\cc{z}_0)+\frac{u^2\t}{2}+ \frac{v^2\t}{2}-uv \right) \right)}.
\end{align}
We already have seen that we need $\sqrt{\n}>\tfrac{A}{\tilde{\delta}}$. For $\t\in (0,1-\delta_0]$, $z_0\in \Xdelta{A}{\tilde{\delta}}{\F}^C$ and $u,v\in \cc{B}_{\!\frac{A}{\sqrt{\n}} }$ we find the estimation
\begin{align}
\vert u(\t \cc{z}_0-z_0)+v(\t z_0-\cc{z}_0)+\tfrac{u^2\t}{2}+ \tfrac{v^2\t}{2}-uv\vert &\le \tfrac{2A\vert \t z_0-\cc{z}_0\vert}{\sqrt{\n}} +\tfrac{A^2(1+\t)}{\n} \le \tfrac{ 2A(\F_0+2\tilde{\delta}) }{\sqrt{\n}}+\tfrac{2A^2}{\n}\nn \\
&<\tfrac{2A(\F_0+3\tilde{\delta})}{\sqrt{\n}}<\epsilon
\end{align} 
if $\n>\max\left\{\tfrac{4A^2 (\F_0+3\tilde{\delta})^2}{\epsilon^2},\tfrac{A^2}{\tilde{\delta}^2} \right\}$ with $\F_0=\sqrt{\tfrac{4(1-\delta_0)}{1-(1-\delta_0)^2}}< \tfrac{2}{\sqrt{\delta_0}}$.

Since $-\vert z_0\vert^2+\t\Re(z_0^2)\le 0$, thus for all $\t\in(0,\delta_1)$, $z_0\in \Adelta^C$ and $u,v\in \cc{B}_{\!\frac{A}{\sqrt{\n}} }(0)$ the right-hand side of \eqref{eq_estim_B_C1} and \eqref{eq_estim_B_C2} is smaller than $\sqrt{\n}C\e^{-\n\epsilon}$ if $\delta\le\tilde{\delta}$, $C\ge \tilde{C}$ and $\n> \max\left\{\tfrac{4A^2 (\F_0+3\tilde{\delta})^2}{\epsilon^2},\tfrac{A^2}{\tilde{\delta}^2}, \tilde{\n}_0 \right\}$.

We are left to consider $\t\in [\delta_1,1-\delta_0]$. We already have shown in the proof of Proposition \ref{prop_estim_max_princ} that there exists $\lambda>0$ so that for all $\t\in [\delta_1,1-\delta_0]$
\begin{align}
\sup_{z\in \Xdelta{C}{\tilde{\delta}}{\F,\lambda}} \left(-\vert z\vert^2 +\t \Re(z^2) + 2 +\log \t+\epsilon \right)<-\epsilon,
\end{align}
where we had set $\epsilon=\tfrac{\delta_0^3}{12}$. Thus for $\epsilon\le\tfrac{\delta_0^3}{24}$, we get that 
\begin{align}
\sup_{\substack{z_0\in \Xdelta{C}{\tilde{\delta}}{\F,\lambda} }} \sup_{ u,v \in \cc{B}_{\!\frac{A}{ \sqrt{\n} } }(0)} & \Bigg(-\vert z_0\vert^2 +\t \Re(z_0^2) + 2 +\log \t+\epsilon \nn \\
& \quad + \Re\left( u(\t \cc{z}_0-z_0)+v(\t z_0-\cc{z}_0)+\tfrac{u^2\t}{2}+ \tfrac{v^2\t}{2}-uv \right)  \Bigg) \nn \\
& < \sup_{z_0\in \Xdelta{C}{\tilde{\delta}}{\F,\lambda}} \left(-\vert z_0\vert^2 +\t \Re(z_0^2) + 2 +\log \t+2\epsilon \right)<-2\epsilon,
\end{align}
for all $\n>\max\left\{\tfrac{4A^2 (\F_0+3\tilde{\delta})^2}{\epsilon^2},\tfrac{A^2}{\tilde{\delta}^2} \right\}$ and $\t\in [\delta_1,1-\delta_0]$.

For the $\lambda$ we have found we get $\delta>0$ and $\n_0\in\N$ according to Proposition \ref{prop_pr_B} so that for all $\n>\n_0$, $\t \in [\delta_1,1-\delta_0]$ and $w,z\in \Xdelta{B}{2\delta}{\F,\frac{\lambda}{2}}$
\begin{align}
\left\vert\op{\n}(w)\op{\n-1}(z)\right\vert+\left\vert\op{\n-1}(w)\op{\n}(z)\right\vert & < 2g_1^r(\n,\t)g_2^r(\n,\t) \e^{\n\left(\frac{\Re(w^2+z^2)}{\F^2}+\log \t+\epsilon \right)},
\end{align}
where 
\begin{align}
g_1^r(\n,\t)&= \biggl( 2\sup_{z \in \Xdelta{B}{2\delta}{\F,\frac{\lambda}{2}} }\left\vert \tfrac{\F+z}{\F-z} \right\vert^{1/4} +1 \biggr)  g_0^r(\n,\t),\\
g_2^r(\n,\t)&= \biggl( 2\sup_{z \in \Xdelta{B}{2\delta}{\F,\frac{\lambda}{2}} }\left\vert \tfrac{\F\sqrt{1-\frac{1}{\n}}+z}{\F\sqrt{1-\frac{1}{\n}}-z} \right\vert^{1/4} +1 \biggr) g_0^r(\n,\t)\left(\tfrac{\n-1}{\n} \right)^{\frac{\n-1}{2}} \e^{\frac{1-\log 2}{2}},\\
\end{align}
and $g_0^r$ as in Proposition \ref{prop_pr_B}. We can assume that $\delta<\tilde{\delta}$ and
\begin{align}
\n_0>\max\left\{\tilde{\n}_0,\tfrac{4\F_0}{\lambda}, \tfrac{4A^2 (\F_0+3\tilde{\delta})^2}{\epsilon^2},\tfrac{A^2}{\delta^2},  \tfrac{ \left(2A(\F_0+2\delta)\left(1+\frac{2}{\F_1^2}\right) + 2\delta A\left(1+\frac{1}{\F_1^2}\right) \right)^2}{\epsilon^2}  \right\}, 
\end{align} 
with $\F_1=\sqrt{2\delta_1}$ so that $\F_1<\F$ for all $\t\in [\delta_1,1-\delta_0]$. 

Again we use identities \eqref{eq_id3} and \eqref{eq_id4} evaluated at $w=\cc{z}_0+u$ and $z=z_0+v$ and find that for all $\t\in (0,1)$ and $\cc{z}_0+u,z_0+v \in \Xdelta{A}{2\delta}{\F}^C$
\begin{align}
\left\vert  \left.\frac{\partial \HM}{\partial w}\right\vert_{\substack{\phantom{}_{w=\cc{z}_0+u} \\ \phantom{}_{z=z_0+v} \\ \phantom{}}}   \right\vert& \le  \left. \bigl( \left\vert\op{\n}(w)\op{\n-1}(z)\right\vert+\left\vert\op{\n-1}(w)\op{\n}(z)\right\vert \bigr) \tfrac{\e^{\n\Re\left(-wz+\t\frac{w^2+z^2}{2} \right) } }{\sqrt{1-\t^2}} \right\vert_{\substack{\phantom{}_{w=\cc{z}_0+u} \\ \phantom{}_{z=z_0+v} \\ \phantom{}}}, \label{eq_estim_B_C3}
\intertext{and}
\left\vert  \left.\frac{\partial \HM}{\partial z}\right\vert_{\substack{\phantom{}_{w=\cc{z}_0+u} \\ \phantom{}_{z=z_0+v} \\ \phantom{}}}   \right\vert& \le  \left. \bigl( \left\vert\op{\n}(w)\op{\n-1}(z)\right\vert+\left\vert\op{\n-1}(w)\op{\n}(z)\right\vert \bigr)  \tfrac{\e^{\n\Re\left(-wz+\t\frac{w^2+z^2}{2} \right) } }{\sqrt{1-\t^2}} \right\vert_{\substack{\phantom{}_{w=\cc{z}_0+u} \\ \phantom{}_{z=z_0+v} \\ \phantom{}}}. \label{eq_estim_B_C4}
\end{align}

If $\n>\n_0>\max\left\{\tfrac{A^2}{\delta^2},\tfrac{4A^2}{\lambda^2} \right\}$ then it follows for $z_0\in \Xdelta{B}{\delta}{\F,\lambda}$ and $u,v\in \cc{B}_{\!\frac{A}{\sqrt{\n}}}(0)$ that $\cc{z}_0+u$ and $z_0+v$ lie in $\Xdelta{B}{2\delta}{\F,\frac{\lambda}{2}}$.

As in the proof of Proposition \ref{prop_estim_pr_B} we have an estimation like \eqref{eq_proof_estim_pr_B2} so that we can find a finite constant $C>\tilde{C}$ so that for all $\n>\tfrac{4\F_0}{\lambda}$
\begin{align}
\tfrac{2}{\sqrt{1-\t^2}} g_1^r(\n,\t) g_2^r(\n,\t)& \le \sqrt{\n} \tfrac{2\sqrt{2}  \e^{\frac{1-\log 2}{2}}}{\sqrt{\pi^3}} \left( 2\left(\tfrac{4\F_0}{\lambda} \right)^{1/4} +1 \right)^2<\sqrt{\n} C. 
\end{align}

The remaining term is the exponential term
\begin{align}
\left. \e^{\n\Re\left(-wz+\t\frac{w^2+z^2}{2} + \frac{w^2+z^2}{\F^2} + \log \t+ \epsilon \right) }  \right\vert_{\substack{\phantom{}_{w=\cc{z}_0+u} \\ \phantom{}_{z=z_0+v} \\ \phantom{}}}=\e^{\n\left(-\vert z_0\vert^2+ \t\Re(z_0^2)+\frac{2 \Re(z_0^2)}{\F^2} + \log \t+ \epsilon \right)} \nn \\
\cdot \e^{\n \Re\left( u(\t\cc{z}_0-z_0)+v(\t z_0-\cc{z}_0)-uv+\frac{2u \cc{z}_0+2v z_0}{\F^2}+\left(\frac{\t}{2}+\frac{1}{\F^2}\right)(u^2+v^2)  \right) }. \label{eq_estim_B_C5}
\end{align}
We define
\begin{align}
\f^r(z_0)=-\vert z_0\vert^2 +\t \Re(z_0^2) + \tfrac{2\Re(z_0^2)}{\F^2} +\log \t,
\end{align}
for which we find, like in the proof of Proposition \ref{prop_estim_pr_B}, that
\begin{align}
\sup_{\t\in [\delta_1,1-\delta_0]} \sup_{ z_0\in \Xdelta{B}{\delta}{\F,\lambda} } \f^r(z_0)<-4\epsilon,
\end{align}
since $\epsilon\le -\tfrac{\delta_0}{4-2\delta_0}-\tfrac{\log(1-\delta_0)}{2}$ here is at most half as large as $\epsilon$ in that proof. 

For all $\t\in [\delta_1,1-\delta_0]$, $z_0\in \Xdelta{B}{\delta}{\F,\lambda}$, $u,v\in \cc{B}_{\!\frac{A}{\sqrt{\n}}}(0) $ and   
\begin{align}
\n > \max \left\{\tfrac{ \left(2A(\F_0+2\delta)\left(1+\frac{2}{\F_1^2}\right) + 2\delta A\left(1+\frac{1}{\F_1^2}\right) \right)^2}{\epsilon^2}, \tfrac{A^2}{\delta^2} \right\},
\end{align}
we have the estimation
\begin{align}
&\Re\left( u(\t\cc{z}_0-z_0)+v(\t z_0-\cc{z}_0)-uv+\tfrac{2u \cc{z}_0+2v z_0}{\F^2}+\left(\tfrac{\t}{2}+\tfrac{1}{\F^2}\right)(u^2+v^2) \right)\nn \\
&\qquad \le \tfrac{2A(\F_0+2\delta)}{\sqrt{\n}}+\tfrac{A^2}{\n}+\tfrac{4A (\F_0+\delta)}{\sqrt{\n}\F_1^2}+\tfrac{2A^2}{\n}\left(\tfrac{1}{2}+\tfrac{1}{\F_1^2}\right) \nn \\
&\qquad  < \tfrac{2A(\F_0+2\delta)\left(1+\frac{2}{\F_1^2}\right) + 2\delta A\left(1+\frac{1}{\F_1^2}\right)}{\sqrt{\n}} < \epsilon.
\end{align} 
Now it is easy to see that \eqref{eq_estim_B_C5} is smaller than $\e^{- 2\n\epsilon}$ and therefore the right-hand side of \eqref{eq_estim_B_C3} and \eqref{eq_estim_B_C4} is surely smaller than $\sqrt{\n}C\e^{- \n\epsilon}$ for all $\n>\n_0$, $\t\in [\delta_1,1-\delta_0]$, $z_0\in \Xdelta{B}{\delta}{\F,\lambda}$ and $u,v\in \cc{B}_{\!\frac{A}{\sqrt{\n}} }(0)$.

Since $\Bdelta \cup \Cdelta \subset \Xdelta{A}{\tilde{\delta}}{\F}^C$ and $\Bdelta \cup \Cdelta \subset \Xdelta{B}{\delta}{\F,\lambda} \cup \Xdelta{C}{\tilde{\delta}}{\F,\lambda}$, we have as well a uniform estimation for $\t\in [\delta_1,1-\delta_0]$ as for $\t\in (0,\delta_1)$ and so the proposition is proven.
\end{proof}

\subsection{Verification of Estimation of Kernel}

% Verification of Prop \ref{prop_hm_wz} and Prop \ref{prop_hm_wz_ellipse}
In the next proposition we will review Proposition \ref{prop_hm_wz}, Lemma \ref{lemma_hm_w_ellipse} and Proposition \ref{prop_hm_wz_ellipse}. We don't need to re-examine Corollary \ref{cor_hm_wz_ellipse} as there we haven't done any further approximations.
\begin{prop}\label{prop_veri_kernel}
Let $\delta>0$, $0<\delta_0<1$ and $A>0$. Then
\begin{align}
\lim_{\n\rightarrow\infty} \sup_{0<\t<1-\delta_0} \sup_{\substack{z_0 \in \Edelta{-,\delta} \\ a,b\in \cc{B}_{\!A}(0) }} & \left\vert \HM\left(\cc{z}_0+\tfrac{\cc{a}}{\sqrt{\n}},z_0+\tfrac{b}{\sqrt{\n}}\right) -\tfrac{1}{\pi}\right\vert = 0, \\
\lim_{\n\rightarrow\infty} \sup_{0<\t<1-\delta_0} \sup_{\substack{z_0\in \C\setminus \Edelta{+,\delta} \\ a,b\in \cc{B}_{\!A}(0) }} & \left\vert \HM\left(\cc{z}_0+\tfrac{\cc{a}}{\sqrt{\n}},z_0+\tfrac{b}{\sqrt{\n}}\right)  \right\vert = 0, \\
\lim_{\n\rightarrow\infty} \sup_{0<\t<1-\delta_0} \sup_{\substack{z_0\in \pEt \\ a,b\in \cc{B}_{\!A}(0) }} & \left\vert \HM\left(\cc{z}_0+\tfrac{\cc{a}}{\sqrt{\n}},z_0+\tfrac{b}{\sqrt{\n}}\right) - \frac{1}{2\pi}\erfc\left(\zeta(\cc{a},\phi)+ \zeta(b,-\phi) \right) \right\vert = 0,
\end{align}
with
\begin{equation}
\zeta(\cc{a},\phi)=\frac{\cc{a}}{\sqrt{2}}\frac{(1-\t)\cos\phi +(1+\t)i\sin\phi }{\sqrt{(1+\t)^2-4\t \cos^2\phi}},
\end{equation}
where $\phi$ is the parameter so that $z_0=\mas \cos \phi+i \mis \sin \phi$.
\end{prop}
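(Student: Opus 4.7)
The plan is to build all three claims from the integration identity \eqref{eq_hm_wz}, namely
\begin{equation*}
\HM\!\left(\cc{z}_0+\tfrac{\cc{a}}{\sqrt{\n}},z_0+\tfrac{b}{\sqrt{\n}}\right) = \HM(\cc{z}_0,z_0)+\int_0^{\frac{\cc{a}}{\sqrt{\n}}} \frac{\partial \HM}{\partial w}(\cc{z}_0+u,z_0)\,\dd u+\int_0^{\frac{b}{\sqrt{\n}}} \frac{\partial \HM}{\partial z}\!\left(\cc{z}_0+\tfrac{\cc{a}}{\sqrt{\n}},z_0+v\right)\dd v,
\end{equation*}
integrated along the straight segments. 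The starting value $\HM(\cc{z}_0,z_0)=\rhon(z_0)$ is already handled by Proposition \ref{prop_veri_density}, which provides the required uniform convergence to $\tfrac{1}{\pi}$, $0$, or $\tfrac{1}{2\pi}$ on $\Edelta{-,\delta}$, on $\C\setminus \Edelta{+,\delta}$, and on $\pEt$, respectively (uniformly in $0<\t<1-\delta_0$). So everything reduces to controlling the two integrals uniformly in the parameters $(\t,z_0,a,b)$.

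For the first two claims (the ``bulk'' inside and ``void'' outside cases), I would choose $\delta'>0$ small enough so that $\Edelta{-,\delta}$ and $\C\setminus \Edelta{+,\delta}$ lie in $\Xdelta{A}{\delta_1}{\F}\setminus\Edelta{\delta'} \cup \Bdelta\cup\Cdelta$ for all admissible $\t$; then Proposition \ref{prop_estim_A} together with Proposition \ref{prop_estim_B_C} supply a uniform bound $\bigl|\frac{\partial \HM}{\partial w}\bigr|,\bigl|\frac{\partial \HM}{\partial z}\bigr| < C\sqrt{\n}\,\e^{-\n\epsilon}$ along the segments $\{\cc z_0+u\}$ and $\{\cc z_0+\tfrac{\cc a}{\sqrt n}, z_0+v\}$, provided $|u|,|v|\le A/\sqrt\n$. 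Multiplying by the length $\lO(1/\sqrt\n)$ of each segment gives integrals bounded by $C\,\e^{-\n\epsilon}$, which vanish uniformly as $\n\to\infty$.

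The main difficulty, and the heart of the proof, is the third claim on $\pEt$, where $\f(z_0)=0$ and so we get no exponential decay to absorb error terms. Here I would work from Proposition \ref{prop_asym_wz_R} evaluated at $z_0\in\pEt$, using Lemma \ref{lemma_hu_zero} to kill the $\hu(\cc z_0)u$ and $\hu(z_0)v$ linear terms and Lemma \ref{lemma_gwz} to replace $-2\huu(\cc z_0)=\gw(z_0)^2$ and $-2\huu(z_0)=\gz(z_0)^2$. After the substitutions $u\mapsto u/\sqrt{\n}$, $v\mapsto v/\sqrt\n$, the two integrals become (up to the uniform error $1+\Rxi{\pm}{(\t,z_0,u/\sqrt\n,v/\sqrt\n)}$ controlled by Proposition \ref{prop_Rxi}) truncated Gaussian integrals of the form $\tfrac{\gw(z_0)}{\sqrt{2\pi^3}}\int_0^{\cc a} \e^{-(\gw(z_0)u/\sqrt 2)^2}\dd u$ and, after completing the square in $v$ as in \eqref{eq_lemma_hm1}, a shifted Gaussian. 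Both reduce to error functions with arguments $\zeta(\cc a,\phi)$ and $\zeta(\cc a,\phi)+\zeta(b,-\phi)$ by the computation already carried out in \eqref{eq_integral_erfc1} and \eqref{eq_integral_erfc2}.

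The delicate step, and what I expect to be the main obstacle, is proving that these reduction steps are uniform in $(\t,z_0,a,b)$. The uniform bound on $\Rxi{\pm}$ from Proposition \ref{prop_Rxi} gives $|1+\Rxi{\pm}|\le 1+\epsilon$ uniformly, so the integrands are dominated by $C\sqrt{\n}\,\e^{-c|u|^2}$ with $c=c(\t,z_0)>0$ bounded below uniformly by Lemma \ref{lemma_gwz} (since $|\gw|=|\gz|=1$ on $\pEt$); that yields a uniform dominating function and lets one pass $\Rxi{\pm}\to 0$ inside the integral by dominated convergence, uniformly in $(\t,z_0,a,b)$. The remaining issue is that the shift $\xi = -\cc a/(2\huu(z_0))$ in the $v$-integral and the coefficients $\gw(z_0),\gz(z_0),\huu(\cc z_0),\huu(z_0)$ all vary continuously with $(\t,z_0)$ on the compact locus $\{(\t,z_0):\t\in[0,1-\delta_0],\,z_0\in\pEt\}$ (including the limit $\t\to 0$, handled as in Lemma \ref{lemma_lim_f_g}); combined with $2\huu(\cc z_0)-\tfrac{1}{2\huu(z_0)}=0$ on $\pEt$ (which cancels the $\cc a^2$-term), this yields uniform convergence of the upper integration endpoint $\tfrac{\eta_z}{\sqrt\n}(b+\xi)$ to $\zeta(\cc a,\phi)+\zeta(b,-\phi)$ and hence, via the continuity of $\erfc$, of the whole expression.
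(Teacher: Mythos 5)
Your proposal follows the paper's own proof: decompose $\HM$ via the integration identity \eqref{eq_hm_wz}, handle $\HM(\cc{z}_0,z_0)=\rhon(z_0)$ with Proposition \ref{prop_veri_density}, bound the two integrals uniformly off the ellipse by combining Propositions \ref{prop_estim_A} and \ref{prop_estim_B_C}, and on $\pEt$ invoke Proposition \ref{prop_asym_wz_R} with $\f=0$ (Proposition \ref{prop_f_zero}) and $\hu=0$ (Lemma \ref{lemma_hu_zero}), Lemma \ref{lemma_gwz}, Proposition \ref{prop_Rxi}, and the exact Gaussian computations from \eqref{eq_integral_erfc1} and \eqref{eq_integral_erfc2}.

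One correction to your uniformity argument on $\pEt$: the dominating function $C\sqrt{\n}\,\e^{-c|u|^2}$ you propose does not exist in general. Along the complex segment $[0,\cc{a}/\sqrt{\n}]$ the quantity $\Re\bigl(\n\huu(\cc{z}_0)u^2\bigr)$ has constant sign equal to the sign of $\cos\bigl(\arg\huu(\cc{z}_0)+2\arg\cc{a}\bigr)$, which can be positive, so $|\e^{\n\huu(\cc{z}_0)u^2}|$ may \emph{increase} along the path. What is actually available, and what the paper uses, is the uniform boundedness $|\e^{\n\huu(\cc{z}_0)u^2}|\le\e^{\n|\huu(\cc{z}_0)||u|^2}\le\e^{A^2/2}$, using $|\huu(\cc{z}_0)|=\tfrac12$ on $\pEt$ from Lemma \ref{lemma_gwz} together with $|u|\le A/\sqrt{\n}$; on the compact rescaled domain $|u|\le A$ this constant bound already suffices and your dominated-convergence step then goes through. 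Also note that the continuity-on-a-compact-locus digression is not needed: the completed-square Gaussian integrals in \eqref{eq_veri_kernel4} are exact identities for every finite $\n$, so $\erfc\bigl(\zeta(\cc{a},\phi)+\zeta(b,-\phi)\bigr)$ appears directly without any limiting of the integration endpoint; the only quantities to be driven uniformly to zero are the $\rhon(z_0)-\tfrac{1}{2\pi}$ discrepancy and the $\Rxi{\pm}{}$-contributions, which are exactly what Propositions \ref{prop_veri_density} and \ref{prop_Rxi} supply.
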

\begin{proof}
We know from Proposition \ref{prop_veri_density} that $\HM(\cc{z}_0,z_0)$ converges uniformly for $\t\in (0,1-\delta_0]$ to $\tfrac{1}{\pi}$, $0$ or $\tfrac{1}{2\pi}$, uniformly in $z_0$ on $\Edelta{-,\delta}$, on $\C\setminus \Edelta{+,\delta}$ or on $\pEt$ respectively. Thus for the first and second claim we only have to show that 
\begin{align}
\int_0^{\frac{\cc{a}}{\sqrt{\n}}} \tfrac{\partial \HM}{\partial w}\left(\cc{z}_0+u,z_0\right)\dd u + \int_0^{\frac{b}{\sqrt{\n}}} \tfrac{\partial \HM}{\partial z}\left(\cc{z}_0+\tfrac{\cc{a} }{\sqrt{\n} }, z_0+v\right)\dd v
\end{align}
converges uniformly to zero as $\n\rightarrow \infty$. Combining Propositions \ref{prop_estim_A} and \ref{prop_estim_B_C} we get $\delta_1>0$, $\n_0\in\N$, $\C>0$ and $\epsilon>0$ so that for all $\n>\n_0$, $\t\in (0,1-\delta_0)$, $z_0\in (\Xdelta{C}{\delta_1}{\F}\cup \Xdelta{B}{\delta_1}{\F} \cup \Xdelta{A}{\delta_1}{\F}) \setminus \Edelta{\delta}=\C \setminus \Edelta{\delta}$ and $u,v\in \cc{B}_{\! \frac{A}{\sqrt{\n}} }$
\begin{align}
\left\vert \frac{\partial \HM}{\partial w}(\cc{z}_0+u,z_0+v) \right\vert &< C\sqrt{\n} \e^{-\n \epsilon}, \quad \text{and} \quad \left\vert \frac{\partial \HM}{\partial z}(\cc{z}_0+u,z_0+v) \right\vert < C\sqrt{\n} \e^{-\n \epsilon}.
\end{align}
Therefore
\begin{align}
\left \vert \int_0^{\frac{\cc{a}}{\sqrt{\n}}} \tfrac{\partial \HM}{\partial w}\left(\cc{z}_0+u,z_0\right)\dd u + \int_0^{\frac{b}{\sqrt{\n}}} \tfrac{\partial \HM}{\partial z}\left(\cc{z}_0+\tfrac{\cc{a} }{\sqrt{\n} }, z_0+v\right)\dd v \right \vert < 2 A C \e^{-\n \epsilon}, 
\end{align}
which converges uniformly to zero as $\n\rightarrow \infty$.

We are only left to consider $z_0\in \pEt$. On $\pEt$ we have that $\f=0$ according to Proposition \ref{prop_f_zero} and that $\hu=0$ according to Lemma \ref{lemma_hu_zero}. Thus Proposition \ref{prop_asym_wz_R} simplifies there to 
\begin{align}
\left.\frac{\partial \HM}{\partial w}\right\vert_{\substack{\phantom{}_{w=\cc{z}_0+u} \\ \phantom{}_{z=z_0+v} \\ \phantom{}}} =&\sqrt{\tfrac{\n}{2 \pi^3}} \gw(z_0) \e^{\n \left( \huu(\cc{z}_0)u^2 +\huu(z_0)v^2+\huv uv \right)} \left(1+\Rxi{+}{(\t,z_0,u,v)}\right), \label{eq_veri_kernel1} \\
\intertext{and}
\left.\frac{\partial \HM}{\partial z}\right\vert_{\substack{\phantom{}_{w=\cc{z}_0+u} \\ \phantom{}_{z=z_0+v} \\ \phantom{}}} =&\sqrt{\tfrac{\n}{2 \pi^3}} \gz(z_0) \e^{\n \left( \huu(\cc{z}_0)u^2 +\huu(z_0)v^2+\huv uv \right)} \left(1+\Rxi{-}{(\t,z_0,u,v)}\right), \label{eq_veri_kernel2} 
\end{align}
for all $\t\in (0,1-\delta_0]$, $z_0\in \pEt$, $u,v\in \cc{B}_{\!\frac{A}{\sqrt{\n_0}}}(0)$ and $\n\ge\n_0$ with $\n_0$ big enough. $\huu$, $\gw$, $\gz$, $\huv$, $\Rxi{\pm}{}$, $\ve{\xi}=\left(\xi_1(\n),\xi_2(\n),\xi_3(\n),\xi_4(v),\xi_5(v),\xi_6(v)\right)$ are as in Proposition \ref{prop_asym_wz_R} and we have set $\delta>0$ in Proposition \ref{prop_asym_wz_R} small enough so that $\pEt\subset \Adelta$.   

Then as in the proofs of Lemma \ref{lemma_hm_w_ellipse} and Proposition \ref{prop_hm_wz_ellipse} (but this time exactly) we get
\begin{align}
&\HM\left(\cc{z}_0+\tfrac{\cc{a}}{\sqrt{\n}},z_0+\tfrac{b}{\sqrt{\n}} \right) \nn \\
&\quad =\HM\left(\cc{z}_0,z_0 \right) + \sqrt{\tfrac{\n}{2 \pi^3}} \gw(z_0) \int_0^{\frac{\cc{a}}{\sqrt{\n}}} \e^{\n \huu(\cc{z}_0)u^2} \left(1+\Rxi{+}{(\t,z_0,u,0)}\right) \dd u \nn \\
&\qquad +\sqrt{\tfrac{\n}{2 \pi^3}} \gz(z_0) \e^{\huu(\cc{z}_0)\cc{a}^2} \int_0^{\frac{b}{\sqrt{\n}}} \e^{\n \left(\huu(z_0)v^2-\frac{\cc{a}}{\sqrt{\n}}v\right)}  \left(1+\Rxi{-}{(\t,z_0,\tfrac{\cc{a}}{\sqrt{\n}},v)}\right) \dd v. \label{eq_veri_kernel3}
\end{align} 
We can reuse the integrals we have calculated in \eqref{eq_integral_erfc1} and \eqref{eq_lemma_hm1} together with \eqref{eq_integral_erfc2}. So we know that
\begin{align}
&\tfrac{1}{2\pi}+\sqrt{\tfrac{\n}{2 \pi^3}} \gw(z_0) \int_0^{\frac{\cc{a}}{\sqrt{\n}}} \e^{\n \huu(\cc{z}_0)u^2} \dd u +\sqrt{\tfrac{\n}{2 \pi^3}} \gz(z_0) \e^{\huu(\cc{z}_0)\cc{a}^2} \int_0^{\frac{b}{\sqrt{\n}}} \e^{\n \left(\huu(z_0)v^2-\frac{\cc{a}}{\sqrt{\n}}v\right)} \dd v \nn \\
&\qquad =\tfrac{1}{2\pi }\erfc(\zeta(\cc{a},\phi))+\sqrt{\tfrac{\n}{2 \pi^3}} \gz(z_0) \e^{\huu(\cc{z}_0)\cc{a}^2} \int_0^{\frac{b}{\sqrt{\n}}} \e^{\n \left(\huu(z_0)v^2-\frac{\cc{a}}{\sqrt{\n}}v\right)} \dd v \nn \\
&\qquad = \tfrac{1}{2\pi} \erfc(\zeta(\cc{a},\phi)+\zeta(b,-\phi)). \label{eq_veri_kernel4}
\end{align}
Note that these results for the integrals are exact even for finite $\n$ since we haven't used any approximations in those calculations. With \eqref{eq_veri_kernel3} and \eqref{eq_veri_kernel4} we find that for all $\n>\n_0$, $0<\t<1-\delta_0$, $z_0\in \pEt$ and $a,b\in \cc{B}_{\!A}(0)$
\begin{align}
&\left\vert \HM\left(\cc{z}_0+\tfrac{\cc{a}}{\sqrt{\n}},z_0+\tfrac{b}{\sqrt{\n}}\right) - \tfrac{1}{2\pi}\erfc\left(\zeta(\cc{a},\phi)+ \zeta(b,-\phi) \right) \right\vert \nn \\
& \quad = \Bigg\vert \HM\left(\cc{z}_0,z_0 \right) + \sqrt{\tfrac{\n}{2 \pi^3}} \gw(z_0) \int_0^{\frac{\cc{a}}{\sqrt{\n}}} \e^{\n \huu(\cc{z}_0)u^2} \left(1+\Rxi{+}{(\t,z_0,u,0)}\right) \dd u \nn \\
&\qquad +\sqrt{\tfrac{\n}{2 \pi^3}} \gz(z_0) \e^{\huu(\cc{z}_0)\cc{a}^2} \int_0^{\frac{b}{\sqrt{\n}}} \e^{\n \left(\huu(z_0)v^2-\frac{\cc{a}}{\sqrt{\n}}v\right)}  \left(1+\Rxi{-}{(\t,z_0,\tfrac{\cc{a}}{\sqrt{\n}},v)}\right) \dd v \nn \\
&\qquad - \tfrac{1}{2\pi} - \sqrt{\tfrac{\n}{2 \pi^3}} \gw(z_0) \int_0^{\frac{\cc{a}}{\sqrt{\n}}} \e^{\n \huu(\cc{z}_0)u^2} \dd u \nn \\ 
&\qquad - \sqrt{\tfrac{\n}{2 \pi^3}} \gz(z_0) \e^{\huu(\cc{z}_0)\cc{a}^2} \int_0^{\frac{b}{\sqrt{\n}}} \e^{\n \left(\huu(z_0)v^2-\frac{\cc{a}}{\sqrt{\n}}v\right)} \dd v \Bigg\vert \nn \\
&\quad \le \left\vert \HM\left(\cc{z}_0,z_0 \right) - \tfrac{1}{2\pi} \right\vert + \left\vert \sqrt{\tfrac{\n}{2 \pi^3}} \gw(z_0) \int_0^{\frac{\cc{a}}{\sqrt{\n}}} \e^{\n \huu(\cc{z}_0)u^2} \Rxi{+}{(\t,z_0,u,0)} \dd u \right\vert \nn \\
&\qquad +\left\vert \sqrt{\tfrac{\n}{2 \pi^3}} \gz(z_0) \e^{\huu(\cc{z}_0)\cc{a}^2} \int_0^{\frac{b}{\sqrt{\n}}} \e^{\n \left(\huu(z_0)v^2-\frac{\cc{a}}{\sqrt{\n}}v\right)}  \Rxi{-}{(\t,z_0,\tfrac{\cc{a}}{\sqrt{\n}},v)} \dd v \right\vert .
\end{align}

As we have already mentioned $\left\vert \HM\left(\cc{z}_0,z_0 \right) - \tfrac{1}{2\pi} \right\vert$ converges uniformly to zero as $\n\rightarrow \infty$ for all $0<\t<1-\delta_0$ and $z_0\in \pEt$. 

From Lemma \ref{lemma_gwz} we know that $\vert \huu(z_0)\vert=\tfrac{1}{2}$ and $\vert \gw(z_0)\vert=\vert \gz(z_0)\vert=1$. According to Proposition \ref{prop_Rxi} for all $\epsilon>0$ we get $\n_0\in\N$ so that $\left\vert\Rxi{+}{(\t,z_0,u,0)}\right\vert<\epsilon$ for all $\n>\n_0$, $0<\t<1-\delta_0$, $z_0\in \pEt$, $u\in \cc{B}_{\tfrac{A}{\sqrt{\n}}}(0)$, $\xi_1,\xi_2,\xi_3\in [0,\frac{1}{\n}]$ and $\xi_4,\xi_5,\xi_6\in \cc{B}_{\!\frac{A}{\sqrt{\n}}}(0)$, % oder: evtl reicht auch $\xi_4,\xi_5,\xi_6\in B_{\!A}(0)$ / wohl eher nicht!
and then if additionally $a\in \cc{B}_{\!A}(0)$
\begin{align}
&\left\vert \sqrt{\tfrac{\n}{2 \pi^3}} \gw(z_0) \int_0^{\frac{\cc{a}}{\sqrt{\n}}} \e^{\n \huu(\cc{z}_0)u^2} \Rxi{+}{(\t,z_0,u,0)} \dd u \right\vert \nn \\
&\qquad \le \sqrt{\tfrac{\n}{2 \pi^3}} \vert \gw(z_0)\vert \left( \sup_{u\in B_{\!\frac{A}{\sqrt{\n}}(0) } }\left\vert \e^{\n \huu(\cc{z}_0)u^2} \Rxi{+}{(\t,z_0,u,0)} \right\vert \right) \left\vert\int_0^{\frac{A}{\sqrt{\n}}} \dd u\right\vert \nn \\
&\qquad < \sqrt{\tfrac{\n}{2 \pi^3}} \e^{\frac{A^2}{2}}\epsilon \tfrac{A}{\sqrt{\n}}= \epsilon \tfrac{A}{\sqrt{2 \pi^3}}  \e^{\frac{A^2}{2}},
\end{align}
which converges uniformly to zero as $\epsilon \rightarrow 0$.

Analogously we can get with $\vert \huu(z_0)\vert=\vert \huu(\cc{z}_0)\vert=\tfrac{1}{2}$, $\vert \gw(z_0)\vert=\vert \gz(z_0)\vert=1$ and $\left\vert\Rxi{-}{(\t,z_0,u,v)}\right\vert<\epsilon$ for $\n>\n_0$, $0<\t<1-\delta_0$, $z_0\in \pEt$, $u,v\in \cc{B}_{\tfrac{A}{\sqrt\n}}(0)$, $\xi_1,\xi_2,\xi_3\in [0,\frac{1}{\n}]$ and $\xi_4,\xi_5,\xi_6\in \cc{B}_{\!\frac{A}{\sqrt{\n}}}(0)$ that then if additionally $a,b\in \cc{B}_{\!A}(0)$ % TO BE DONE Satz etwas besser formulieren!!!!
\begin{align}
&\left\vert \sqrt{\tfrac{\n}{2 \pi^3}} \gz(z_0) \e^{\huu(\cc{z}_0)\cc{a}^2} \int_0^{\frac{b}{\sqrt{\n}}} \e^{\n \left(\huu(z_0)v^2-\frac{\cc{a}}{\sqrt{\n}}v\right)}  \Rxi{-}{(\t,z_0,\tfrac{\cc{a}}{\sqrt{\n}},v)} \dd v \right\vert \nn \\
&\qquad \le \sqrt{\tfrac{\n}{2 \pi^3}} \left\vert \gz(z_0) \e^{\huu(\cc{z}_0)\cc{a}^2} \right\vert
\left( \sup_{v\in B_{\!\frac{A}{\sqrt{\n}}(0) } }\left\vert  \e^{\n \left(\huu(z_0)v^2-\frac{\cc{a}}{\sqrt{\n}}v\right)}  \Rxi{-}{(\t,z_0,\tfrac{\cc{a}}{\sqrt{\n}},v)}  \right\vert \right) \left\vert \int_0^{\frac{A}{\sqrt{\n}}} \dd v\right\vert \nn \\
&\qquad < \sqrt{\tfrac{\n}{2 \pi^3}} \e^{\frac{A^2}{2}} \e^{\frac{A^2}{2}+A^2} \epsilon \tfrac{A}{\sqrt{\n}}= \epsilon \tfrac{A}{\sqrt{2 \pi^3}}  \e^{2A^2},
\end{align}
which also converges uniformly to zero as $\epsilon \rightarrow 0$.
\end{proof}

%\subsection{Uniformity} %TO BE DONE, evtl. eigene Section

Now we have reviewed all the approximations of the previous chapters. When we computed the reproducing kernel $\tKn$ in Theorem \ref{th_km} and the correlation functions in Theorem \ref{th_corr} we hadn't done any further approximations.

Beside of estimations of the error terms, we have seen where the limit of the reproducing kernel and the correlation functions hold uniformly. This we want to summarize in the following corollary.

\begin{cor}\label{cor_unifrom_universality}
Let $\delta>0$, $0<\delta_0<1$ and $A>0$. Then the asymptotics of the reproducing kernel is
\begin{align}
\lim_{\n\rightarrow\infty} \sup_{0\le\t<1-\delta_0} \sup_{\substack{z_0 \in \Edelta{-,\delta} \\ a,b\in \cc{B}_{\!A}(0) }} & \left\vert \frac{\nK}{\n}\left(z_0+\tfrac{a}{\sqrt{\n}},z_0+\tfrac{b}{\sqrt{\n}}\right) - \tfrac{1}{\pi} \e^{i\nphase(z_0,a,b)}\e^{-\frac{\vert a-b\vert^2}{2}}\right\vert = 0, \\
\lim_{\n\rightarrow\infty} \sup_{0\le\t<1-\delta_0} \sup_{\substack{z_0\in \C\setminus \Edelta{+,\delta} \\ a,b\in \cc{B}_{\!A}(0) }} & \left\vert \frac{\nK}{\n}\left(z_0+\tfrac{a}{\sqrt{\n}},z_0+\tfrac{b}{\sqrt{\n}}\right)  \right\vert = 0, \\
\lim_{\n\rightarrow\infty} \sup_{0\le\t<1-\delta_0} \sup_{\substack{z_0\in \pEt \\ a,b\in \cc{B}_{\!A}(0) }} & \left\vert \frac{\nK}{\n}\left(z_0+\tfrac{a\e^{i\psi}}{\sqrt{\n}},z_0+\tfrac{b\e^{i\psi}}{\sqrt{\n}}\right) -   \e^{i\nphase(z_0,a\e^{i\psi},b\e^{i\psi})}\e^{-\frac{\vert a-b\vert^2}{2}} \tfrac{\erfc\left(\frac{\cc{a}+b}{\sqrt{2}}\right)}{2\pi}    \right\vert = 0,
\end{align}
and the asymptotics of the $m$-point correlation functions is
\begin{align} 
\lim_{\n\rightarrow\infty} \sup_{\substack{0 \le \t < 1-\delta_0 }} \sup_{\substack{z_0 \Edelta{-,\delta}  \\ a_1,\ldots,a_m\in \cc{B}_{\!A}(0) }} & \Big\vert \tfrac{\Rn{m}}{\n^m} \left(z_0+\tfrac{a_1}{\sqrt{\n}},\ldots,z_0+\tfrac{a_m}{\sqrt{\n}} \right)\nn \\
&\quad - \det \left( \tfrac{1}{\pi}\e^{-\frac{1}{2}\vert a_k-a_l \vert^2+i\Im \left(\cc{a}_k a_l\right) }  \right)_{k,l=1}^m \Big\vert=0, \\
\lim_{\n\rightarrow\infty} \sup_{\substack{0\le \t < 1-\delta_0 }} \sup_{\substack{z_0 \in \C\setminus \Edelta{+,\delta} \\ a_1,\ldots,a_m\in \cc{B}_{\!A}(0) }} & \left\vert \tfrac{\Rn{m}}{\n^m} \left(z_0+\tfrac{a_1}{\sqrt{\n}},\ldots,z_0+\tfrac{a_m}{\sqrt{\n}} \right)\right\vert=0, \\
\lim_{\n\rightarrow\infty} \sup_{\substack{0\le \t < 1-\delta_0 }} \sup_{\substack{z_0 \in \partial \pEt \\ a_1,\ldots,a_m\in \cc{B}_{\!A}(0) }} & \Big\vert \tfrac{\Rn{m}}{\n^m} \left(z_0+\tfrac{a_1\e^{i\psi(z_0)}}{\sqrt{\n}},\ldots,z_0+\tfrac{a_m\e^{i\psi(z_0)}}{\sqrt{\n}}\right) \nn \\
&\quad-\det \left( \tfrac{\erfc\left(\frac{\cc{a}_k+a_l}{\sqrt{2}}\right)}{2\pi}\e^{-\frac{1}{2}\vert a_k-a_l \vert^2+ i\Im \left(\cc{a}_k a_l\right) }  \right)_{k,l=1}^m \Big\vert=0, 
\end{align}
where
\begin{align}
\nphase(z_0,a,b)&=\sqrt{\n}\Im\left( (\cc{a}-\cc{b})(z_0-\t \cc{z}_0) \right) + \Im\left( \cc{a}b+\tfrac{\t}{2}(a^2-b^2) \right), 
\end{align}
and $\psi$ is defined as in Corollary \ref{cor_density_scaled_limit} if $z_0=\mas \cos \phi+i \mis \sin \phi\in\pEt$.
\end{cor}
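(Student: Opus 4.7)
The plan is to deduce the corollary from the uniform estimate on $\HM$ already obtained in Proposition \ref{prop_veri_kernel}, by translating it through the explicit relation \eqref{eq_HM_tKn} to the normalized kernel $\tKn$, and then through the determinantal formula of Definition \ref{def_correlation} to the correlation functions, tracking uniformity in $\t\in[0,1-\delta_0)$, $z_0$ in the relevant region, and $a,b$ or $a_1,\dots,a_m$ in $\cc{B}_{\!A}(0)$.

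For the kernel statement, I first apply \eqref{eq_HM_tKn} with $w=z_0+\tfrac{a}{\sqrt{\n}}$ and $z=z_0+\tfrac{b}{\sqrt{\n}}$. The prefactor on the right-hand side simplifies \emph{exactly} (no error) to $\e^{i\nphase(z_0,a,b)}\e^{-\vert a-b\vert^2/2}$, as computed in the proof of Theorem \ref{th_km}. Hence
\begin{equation*}
\frac{\tKn}{\n}\bigl(z_0+\tfrac{a}{\sqrt{\n}},z_0+\tfrac{b}{\sqrt{\n}}\bigr)-G(z_0,a,b)\,\e^{i\nphase}\e^{-\vert a-b\vert^2/2}=\e^{i\nphase}\e^{-\vert a-b\vert^2/2}\bigl(\HM(\cc{z}_0+\tfrac{\cc{a}}{\sqrt{\n}},z_0+\tfrac{b}{\sqrt{\n}})-G(z_0,a,b)\bigr),
\end{equation*}
where $G$ equals $\tfrac{1}{\pi}$, $0$, or $\tfrac{1}{2\pi}\erfc(\tfrac{\cc a+b}{\sqrt 2})$ according to whether $z_0$ lies in $\Edelta{-,\delta}$, outside $\Edelta{+,\delta}$, or on $\pEt$ (in the last case with the rotation $\e^{i\psi}$ inserted, as in Corollary \ref{cor_hm_wz_ellipse}). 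Since the modulus of the prefactor is bounded by $1$ uniformly in $a,b\in\cc{B}_{\!A}(0)$, the three uniform statements on $\tKn/\n$ follow immediately from the three uniform statements of Proposition \ref{prop_veri_kernel}. The case $\t=0$, excluded from Proposition \ref{prop_veri_kernel}, is handled separately via the pure Gaussian analysis of Chapter \ref{sec_gaussian} (Proposition \ref{prop_kernel_0}); uniformity in this single value is trivial, and the three regimes match the $\t\to 0$ limit of $\pEt$ as the unit circle.

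For the correlation functions, the starting point is the Leibniz expansion
\begin{equation*}
\frac{\Rn{m}}{\n^m}\bigl(z_0+\tfrac{a_1}{\sqrt{\n}},\dots,z_0+\tfrac{a_m}{\sqrt{\n}}\bigr)=\sum_{\sigma\in S_m}\sign(\sigma)\prod_{k=1}^m\frac{\tKn}{\n}\bigl(z_0+\tfrac{a_k}{\sqrt{\n}},z_0+\tfrac{a_{\sigma(k)}}{\sqrt{\n}}\bigr).
\end{equation*}
I insert the uniform asymptotic for each factor obtained above. The potentially dangerous $\sqrt{\n}$-oscillating piece of $\nphase$ sums, for each fixed $\sigma$, over $k$ to $(z_0-\t\cc{z}_0)\sum_k(\cc a_k-\cc a_{\sigma(k)})=0$ by \eqref{eq_corr4}, and similarly the $\t$-dependent $\Im\sum\tfrac{\t}{2}(a_k^2-a_{\sigma(k)}^2)=0$ by \eqref{eq_corr5}, both identities being \emph{exact} in $\n$, $\t$, and $(a_k)$. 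What survives in the limit is $\Im\sum_k \cc{a}_ka_{\sigma(k)}$, independent of $\n$ and $\t$. Since the determinant is a fixed polynomial of degree $m$ in its entries and the number of permutations is finite, uniform convergence of each factor on $\cc{B}_{\!A}(0)$ to a bounded limit implies uniform convergence of the whole sum, and the result can be rewritten as the claimed determinant. For $z_0\in\pEt$ the same argument applies with $a_k$ replaced by $a_k\e^{i\psi(z_0)}$; since $\psi(z_0)$ is bounded (indeed continuous on $\pEt$ uniformly in $\t$), the rotated disk remains inside $\cc{B}_{\!A'}(0)$ for some $A'$ independent of $z_0,\t$, and Proposition \ref{prop_veri_kernel} applies to this enlarged disk.

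The only genuine technical point, and hence the main (but already resolved) obstacle, is the uniformity statement for $\HM$ on $\pEt$ itself: there the exponent $\n\f(z_0)$ vanishes and $\hu(\cc z_0)=0$, so the leading Gaussian integral is non-trivially reached, and one needs the quantitative error control of Propositions \ref{prop_asym_wz_R} and \ref{prop_Rxi} together with the quadratic estimates on $\f$ and the boundedness of $|\huu|=\tfrac12$, $|\gw|=|\gz|=1$ from Lemma \ref{lemma_gwz}. All of this has been packaged into Proposition \ref{prop_veri_kernel}, so the role of the present proof is simply to lift its conclusion through the two algebraic structures (the $\HM\leftrightarrow\tKn$ identity and the determinantal expansion) without losing uniformity.
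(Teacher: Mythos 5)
Your proof is correct and follows essentially the same route as the paper's own (very terse) proof: lift the uniform estimate on $\HM$ from Proposition \ref{prop_veri_kernel} to $\tKn$ via the exact prefactor identity appearing in the proof of Theorem \ref{th_km} (whose modulus $\e^{-\vert a-b\vert^2/2}\le 1$ preserves the uniform error), then to $\Rn{m}$ via the permutation expansion and the exact phase cancellations \eqref{eq_corr4}--\eqref{eq_corr5} from the proof of Theorem \ref{th_corr}. Your explicit treatment of the endpoint $\t=0$ (which Proposition \ref{prop_veri_kernel} excludes but the corollary includes) is a valid and worthwhile addition that the paper handles only implicitly in Section~\ref{ch_kernel_corr}.
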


\begin{proof}
The proof follows directly from Proposition \ref{prop_veri_kernel}, where we have to apply the same steps as in the proof of Proposition \ref{th_km} and \ref{th_corr} to get the reproducing kernel and the correlation functions. % TO BE DONE evtl. ausführlich, die Schritte mit Abschätzungen durchgehen
%TO BE DONE erwähnen, dass wir t=0 noch dazugenommen haben, vgl. Section ...
\end{proof}

%\subsection{Density on the Interval $[-\F,\F]$}
%\subsection{Correlation in the Bulk}
% analog prop_veri_uniform1
% analog prop_veri_density
%\subsection{Correlation at the Border}

%\subsection{Locally Uniform Convergence of Density, Kernel and Correlation Functions}
% vermutlich ganzes Kapitel -> nach Verifikationskapitel verschieben! Titel dort entsprechend anpassen: ... and Uniform Convergence

%Theorem wiederholen gleichmässig für Density

%Theorem wiederholen gleichmässig für Kernel
% keine Approximation H->K

%Theorem wiederholen gleichmässig für Correlation Functions
% Determinant continuous -> Aussage gilt auch für Correlation Functions

% $\exists \delta$ so dass:
% uniform in z_0 in ellipse mit delta um Rand entfernt, bzw z_0 auf Rand, z_0 ausserhalb (kompakt?)
% uniform in t_0>0? (Kompakt? bei 0?), t_1, \vert t_2\vert<1-\delta
% uniform \vert a \vert < A

\section{Uniform Convergence for Potentials with $t_0\neq 1$, Complex $t_2$ and Shift of Origin}\label{sec_t}
\sectionmark{Potentials with $\protect\chapt_0\neq 1$, $\protect\chapt_2\in\C$ and Shift of Origin}
%chapter wohl ganz am Schluss nehmen, da es für alle vorherigen Results gelten sohl, evtl. auch locally unifrom convergence and different scaling

%evtl. zu jedem eine Subsection mit einer Proposition und am Schluss eine Subsection mit allgemeinem Theorem

%genügt ein einfacher Koordinatenwechsel à la z'=exp (i phi)z ?
%wohl ja, Ellipse, op (direkt oder aus Integral) sollte jeweils mit halben Argument von t drehen. Alle Aussagen sollten gleich bleiben (Beweis mit einem gedrehten Koordinaten System). Parametrisieung der Ellipse noch überprüfen, ob die so auch gedreht wird. Allenfalls Diskutieren, welche Lemma, Prop., Th. unverändert bleiben

%gleiches sollte mit der Verschiebung möglich sein! (vgl. Shift Origin Peter)
%\subsection{Shift of Origin} % wohl in vorheriges Sektion nehmen
% $t_0$ auch noch diskutieren.

% möglichst allgemeines Schluss theorem für correlation function (version R/K benutzen, damit Result nicht von $t_0$ abhängt) mit $\partial E_{t_0,t_1,t_2}$ (t wieder in 2 t_2 umbennen)

%-----------------------------------------vorgehen / Aufbau
%Lemma Verschiebung v=v(t_1,t_2) damit Potential wie \vert z+v\vert^2-2Re((z+a)^2 aussieht

Now we want to consider the most general potentials as in \eqref{def_potential} with $d=2$, i.e.\ we will discuss the cases when $t_0\neq 1$, $t_1\neq 0$ and $\t=2 t_2\in \C$. 

\subsection{Orthonormal Polynomials for Potential with $\protect\chapt_0\neq 1$ and $\protect\chapt_1\neq 0$}
In Proposition \ref{prop_op} we have already found the orthonormal polynomials for the potential $V$ when $t_0=1$, $t_1=0$ and any complex $t_2$ with $\vert t_2\vert<\tfrac{1}{2}$. Now we are looking for the orthonormal polynomials for arbitrary $t_0\in (0,\infty)$ and $t_1\in\C$.

\begin{lemma}\label{lemma_shift}
Let $t_1,t_2\in \C$ with $\vert t_2\vert <\tfrac{1}{2}$. Then there exists $v=v(t_1,t_2)\in\C$ and $C=C(t_1,t_2)\in \R$ so that
\begin{align}
\vert z \vert^2-2\Re\left(t_1 z+t_2 z^2\right)+C=\vert z-v\vert^2-2\Re\left(t_2 (z-v)^2\right). \label{eq_shift1} 
\end{align}
\end{lemma}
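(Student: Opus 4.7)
The plan is to perform a linear change of variables $z = w+v$ on the left-hand side of \eqref{eq_shift1} and choose $v$ so that the linear term in $w$ disappears, then set $C$ to kill the resulting constant. Substituting and expanding, I compute
\begin{align*}
 & |w+v|^2-2\Re\!\left(t_1(w+v)+t_2(w+v)^2\right)+C \\
 &\quad= |w|^2-2\Re(t_2 w^2)+2\Re\!\left((\bar v-t_1-2t_2 v)\,w\right)+|v|^2+C-2\Re(t_1 v)-2\Re(t_2 v^2).
\end{align*}
Renaming $w$ back to $z-v$, the identity in the lemma becomes equivalent to the two scalar conditions
\begin{equation*}
\bar v-2t_2 v=t_1,\qquad C=2\Re(t_1 v)+2\Re(t_2 v^2)-|v|^2.
\end{equation*}
The second equation merely defines $C$ (and $C\in\R$ is automatic since the right-hand side is a sum of real parts minus a squared modulus). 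So the whole lemma reduces to solving the first equation for $v$.

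The first equation together with its complex conjugate $v-2\bar t_2 \bar v=\bar t_1$ forms a $2\times 2$ linear system for $(v,\bar v)$. Eliminating $\bar v$ using $\bar v=t_1+2t_2 v$ gives
\begin{equation*}
 v(1-4|t_2|^2)=\bar t_1+2\bar t_2 t_1,
\end{equation*}
and since $|t_2|<\tfrac12$ we have $1-4|t_2|^2>0$, so
\begin{equation*}
 v=v(t_1,t_2)=\frac{\bar t_1+2\bar t_2 t_1}{1-4|t_2|^2}
\end{equation*}
is a well-defined (and unique) complex number. A direct substitution verifies $\bar v-2t_2 v=t_1$. Plugging this $v$ into the formula for $C$ yields the required real constant. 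I expect no real obstacle here: the only place where the hypothesis $|t_2|<\tfrac12$ is used is to invert $1-4|t_2|^2$, which is exactly the condition that already appears throughout the paper to ensure convergence of the underlying Gaussian integrals.
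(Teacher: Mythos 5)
Your proof is correct and follows essentially the same approach as the paper: expand, compare coefficients to eliminate the linear term, solve the resulting linear constraint $\bar v - 2t_2 v = t_1$ for $v$ using the conjugate equation (the paper separates into real/imaginary parts instead, but the result $v=\tfrac{\bar t_1+2t_1\bar t_2}{1-4|t_2|^2}$ is identical). Note that your expression $C=2\Re(t_1 v)+2\Re(t_2 v^2)-|v|^2$ looks different from the paper's $C=|v|^2-2\Re(t_2 v^2)$, but the two agree once $\bar v=t_1+2t_2v$ is imposed, since then $\Re(t_1 v)=|v|^2-2\Re(t_2 v^2)$.
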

\begin{proof}
%We set $C=\vert v\vert^2 - 2 \Re(t_2 v^2)\in \R$. Then 
We find for the left-hand side of \eqref{eq_shift1}
\begin{align}
\vert z \vert^2-2\Re\left(t_1 z+t_2 z^2\right)+C &= \cc{z}z - \cc{t}_2\cc{z}^2 - t_2 z^2 -\cc{t}_1 \cc{z} - t_1 z + C, \\
\intertext{and for the right-hand side}
\vert z-v\vert^2-2\Re\left(t_2 (z-v)^2\right) &= \cc{z}z - \cc{t}_2\cc{z}^2 - t_2 z^2 -(v-2\cc{t}_2\cc{v}) \cc{z} - (\cc{v}-2t_2 v) z \nn \\
& \quad + \cc{v}v-\cc{t}_2\cc{v}^2-t_2 v^2.
\end{align}
Comparing the coefficients, we see that $C=\vert v \vert^2 - 2 \Re(t_2 v^2)\in \R$ and $t_1=\cc{v}-2 t_2 v$. Separating the last equation in real and imaginary part and solving for $\Re\, v$ and $\Im\, v$, we get
\begin{align}
\Re\, v&=\tfrac{\Re\, t_1+2\left( \Re\, t_1 \Re\, t_2 + \Im\, t_1 \Im\, t_2\right)}{1-4(\Re\, t_2)^2-4(\Im\, t_2)^2}, \nn \\
\Im\, v&=-\tfrac{\Im\, t_1+2\left( \Re\, t_1 \Im\, t_2 - \Im\, t_1 \Re\, t_2 \right)}{1-4(\Re\, t_2)^2-4(\Im\, t_2)^2}, \nn \\
\intertext{which is equivalent to}
v&=\tfrac{\cc{t}_1+2 t_1 \cc{t}_2}{1-4\vert t_2\vert^2}\in \C. \label{eq_shift2}
\end{align}
\end{proof}
\begin{remark}
If $t_2$ is real then \eqref{eq_shift2} can be simplified to $\Re\, v=\tfrac{\Re\, t_1}{1-2 t_2}$ and $\Im\, v=-\tfrac{\Im\, t_1}{1+2 t_2}$. 
\end{remark}

%Lemma, t_0, t_1, t_2\in\C orthogonale Polynome (durch Substitution auf altes Potential zurückführen), zuerst noch eine konstante C dazuaddieren
\begin{prop}\label{prop_op_t0t1}
Let $t_0>0$, $t_1,t_2\in\C$ with $\vert t_2\vert< \tfrac{1}{2}$ and $(\tildeop{k})_{k\in\N_0}$ be the orthonormal polynomials for the potential $\tilde{V}(z)=\vert z\vert^2-2\Re\left(t_2 z^2\right)$. Then the orthonormal polynomials $(\op{k})_{k\in\N_0}$ for the potential $V(z)=\tfrac{1}{t_0}\left(\vert z\vert^2-2\Re\left(t_1 z + t_2 z^2\right)+C\right)$ are related to $(\tildeop{k})_{k\in\N_0}$ as 
\begin{align}
\op{k}(z)=\tfrac{1}{\sqrt{t_0}}\tildeop{k}\left( \tfrac{z-v}{\sqrt{t_0}} \right), \qquad k\in \N_0, \label{eq_op_t0t1}
\end{align}
where $v=v(t_1,t_2)$ and $C$ are as in Lemma \ref{lemma_shift}.
\end{prop}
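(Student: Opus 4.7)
The plan is to reduce the general case to the special $t_1 = 0$, $t_0 = 1$ setting already handled in Proposition~\ref{prop_op} by combining the completion of the square from Lemma~\ref{lemma_shift} with an affine change of variables in the defining inner product. Lemma~\ref{lemma_shift} rewrites
\begin{equation*}
V(z) \;=\; \tfrac{1}{t_0}\bigl(\vert z-v\vert^2 - 2\Re(t_2 (z-v)^2)\bigr),
\end{equation*}
for the specific $v$ and $C$ produced there; since that $C$ is already built into $V$ in the statement of the proposition, no separate appeal to Proposition~\ref{prop_kernel_const} is required.

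First I would verify that the right-hand side of \eqref{eq_op_t0t1} defines a polynomial of degree $k$ with positive real leading coefficient: the map $z \mapsto (z-v)/\sqrt{t_0}$ preserves degree and introduces only the positive real factor $t_0^{-k/2}$, so the leading coefficient of $\op{k}$ equals $\tilde{c}_k\, t_0^{-(k+1)/2}$, where $\tilde{c}_k > 0$ is the leading coefficient of $\tildeop{k}$ (positive by Definitions~\ref{def_ortho_poly} and~\ref{def_op}). Next, I would perform the substitution $w = (z-v)/\sqrt{t_0}$ inside $\ip{\op{k}}{\op{l}}$. The identities $\vert z-v\vert^2 = t_0\vert w\vert^2$ and $(z-v)^2 = t_0 w^2$ convert $-nV(z)$ exactly into $-n\tilde{V}(w)$, while the holomorphic Jacobian $\dd^2 z = t_0\,\dd^2 w$ cancels the factor $1/t_0$ coming from $\vert \op{k}(z)\vert^2 = t_0^{-1}\vert\tildeop{k}(w)\vert^2$. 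The inner product therefore collapses to
\begin{equation*}
\ip{\op{k}}{\op{l}} \;=\; \int_\C \overline{\tildeop{k}(w)}\,\tildeop{l}(w)\, e^{-n\tilde{V}(w)}\,\dd^2 w \;=\; \delta_{kl},
\end{equation*}
by the orthonormality hypothesis on $(\tildeop{k})_{k\in\N_0}$. Combined with the leading-coefficient check, uniqueness of the orthonormal polynomials (Definitions~\ref{def_ortho_poly}--\ref{def_op}) identifies the $\op{k}$ with the orthonormal polynomials for $V$.

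There is no serious obstacle: the argument is pure bookkeeping, and the only thing to watch is that the $1/\sqrt{t_0}$ prefactor, the rescaling of the argument, and the Jacobian $t_0$ conspire to cancel exactly. This cancellation is forced by the fact that $\op{k}$ appears twice in the integrand alongside one Jacobian factor, so the overall $t_0$-count is $-1-1+1=-1$, matching the $-1$ contributed by the measure being written in the $w$-variable.
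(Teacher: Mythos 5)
Your proposal is correct and follows essentially the same route as the paper: invoke Lemma~\ref{lemma_shift} to complete the square, change variables $w=(z-v)/\sqrt{t_0}$ in the inner product, and conclude by uniqueness. The one small addition beyond the paper's proof is your explicit check that the right-hand side of \eqref{eq_op_t0t1} is a degree-$k$ polynomial with positive real leading coefficient, which the paper leaves implicit ("Now it is obvious...") but which is indeed what licenses the appeal to uniqueness of the orthonormal family.
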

\begin{proof}
Let $k,l\in \N_0$. Substituting first $z$ by $\tfrac{z}{\sqrt{t}_0}$ and then $z$ by $z-v$ we get
\begin{align}
\delta_{kl}&=\int_C \cc{\tildeop{k}(z)} \tildeop{l}(z) \e^{-\n \left( \vert z\vert^2 -2\Re \left( t_2 z^2 \right) \right)} \dd^2 z \nn \\
&=\int_C \cc{\tildeop{k}\left(\tfrac{z}{\sqrt{t}_0}\right)} \tildeop{l}\left(\tfrac{z}{\sqrt{t}_0}\right) \e^{-\frac{\n}{t_0} \left( \vert z\vert^2 -2\Re \left( t_2 z^2 \right) \right)} \tfrac{\dd^2 z}{t_0} \nn \\
&=\int_C \cc{\tildeop{k}\left(\tfrac{z-v}{\sqrt{t_0}}\right)} \tildeop{l}\left(\tfrac{z-v}{\sqrt{t_0}}\right) \e^{-\frac{\n}{t_0} \left( \vert z-v\vert^2 -2\Re \left( t_2 (z-v)^2 \right) \right)} \tfrac{\dd^2 z}{t_0} \nn \\
&=\int_C \tfrac{1}{\sqrt{t_0}}\cc{\tildeop{k}\left(\tfrac{z-v}{\sqrt{t_0}}\right)} \tfrac{1}{\sqrt{t_0}}\tildeop{l}\left(\tfrac{z-v}{\sqrt{t_0}}\right) \e^{-\frac{\n}{t_0} \left( \vert z \vert^2-2\Re\left(t_1 z+t_2 z^2\right)+C \right)} \dd^2 z \nn \\
\end{align}
In the last step we have used Lemma \ref{lemma_shift}. Now it is obvious that the orthonormal polynomials regarding the potential $V$ are
\begin{align}
\op{k}(z)=\tfrac{1}{\sqrt{t_0}}\tildeop{k}\left(\tfrac{z-v}{\sqrt{t_0}}\right), \qquad k\in \N_0.
\end{align}
\end{proof}

\begin{remark}
For calculations of the normalized reproducing kernel or the correlation functions any constant $C\in\R$ in the potential $V$ can be omitted (see Proposition \ref{prop_kernel_const}). 
\end{remark}

% Complexe Phase auch noch hier diskutieren (vs o.n. Polynome mit realem t2)
From Proposition \ref{prop_op} we see that the orthonormal polynomials $(\op{k})_{k\in\N_0}$ for the potential $V(z)=\vert z\vert^2-2\Re\left(t_2 z^2\right)$ with $t_2\in \C$ and $\vert t_2\vert<\tfrac{1}{2}$ are 
\begin{align}
\op{k}(z)=H_k\left( \sqrt{\tfrac{\n(1-4\vert t_2\vert^2)}{4\cc{t}_2}} z\right)  \frac{\cc{t}_2^{\frac{k}{2}} \sqrt{\n} (1-4\vert t_2\vert^2)^{1/4}}{\sqrt{\pi k!} }, \qquad z\in\C.
\end{align}
From this we see that they are related to the orthonormal polynomials $(\tildeop{k})_{k\in\N_0}$ for the analogous potential $V(z)=\vert z\vert^2-2\Re\left(\tilde{t}_2 z^2\right)$ with real $\tilde{t}_2=\vert t_2\vert$ as
\begin{align}
\op{k}(z)=\tildeop{k}\left(\e^{i \frac{\theta}{2}} z\right) \e^{-i \frac{k \theta}{2}}, \label{eq_rel_complex_t2}
\end{align}
where $\theta\in (-\pi,\pi]$ is the phase of $t_2=\e^{i \theta} \vert t_2\vert$.

% Remark Shift of Origin, Scaling, etc.

\subsection{Kernel and Correlation Functions for $\protect\chapt_1,\protect\chapt_2\in \C$}
In our calculations of the reproducing kernel we assumed, beginning with chapter \ref{sec_ident}, that $t_2$ is real and positive. Now we will show what happens if $t_2$ is complex and additionally we will also consider $t_1\neq 0$.

%generalization of \pEt und \Edelta definieren, gedreht mit halbem Argument von t_2 in Gegenrichtung, verschoben um v und skaliert mit t_0 (auch Rand von \Edelta)

\begin{defin}
For $t_0>0$ and $t_1,t_2\in\C$ with $\vert t_2\vert<\tfrac{1}{2}$ we define the ellipse $\partial E_{t_0,t_1,t_2}$ as a generalization of $\pEt$ which we get when we first scale $\pEt$ with $\sqrt{t_0}$, then rotate it through the angle $-\tfrac{\arg t_2}{2}$ and finally shift it with $v(t_1,t_2)$ from Lemma \ref{lemma_shift}. 
\begin{align}
\partial E_{t_0,t_1,t_2}&=\left \{z\in \C \left\vert \tfrac{\left(\Re\left(\sqrt{t_2}(z-v)\right)\right)^2}{a_{t_0,t_2}^2} + \tfrac{\left(\Im\left(\sqrt{t_2}(z-v)\right)\right)^2}{b_{t_0,t_2}^2} =\vert t_2\vert \right. \right \}, \qquad \text{if $t_2\neq 0$},\\
\intertext{with the major semi-axis $a_{t_0,t_2}=\sqrt{t_0 \tfrac{1+2\vert t_2\vert }{1-2\vert t_2\vert}}$ and its minor semi-axis $b_{t_0,t_2}=\sqrt{t_0 \tfrac{1-2\vert t_2\vert }{1+2\vert t_2\vert}}$. And for $t_2=0$,}
\partial E_{t_0,t_1,0}&=\left \{z\in \C \left\vert \left(\Re\left(z-v\right)\right)^2 + \left(\Im\left(z-v\right)\right)^2 =t_0 \right. \right\}
\end{align}
\end{defin}

\begin{defin}
For $t_0>0$, $t_1,t_2\in\C$ with $\vert t_2\vert<\tfrac{1}{2}$ and $\delta>0$ we define analogously the generalization of $\Edelta{\pm \delta}$.
\begin{align}
E_{t_0,t_1,t_2,\pm \delta}&=\left\{ z\in \C  \left\vert \tfrac{\left(\Re\left(\sqrt{t_2}(z-v)\right)\right)^2}{\left(a_{t_0,t_2}\pm \delta\sqrt{\vert t_0\vert} \right)^2} + \tfrac{\left(\Im\left(\sqrt{t_2}(z-v)\right)\right)^2}{\left(b_{t_0,t_2}\pm\delta\sqrt{\vert t_0\vert} \right)^2} < \vert t_2\vert \right. \right\}, \qquad \text{if }t_2\neq 0, \\
E_{t_0,t_1,0,\pm \delta}&=\left \{z\in \C \left\vert \left(\Re\left(z-v\right)\right)^2 + \left(\Im\left(z-v\right)\right)^2 <t_0 (1\pm\delta)^2 \right. \right\}.
\end{align}
\end{defin}

% evtl. früher schon eine gleichmässige Version des Theorems, auf die wir hier verweisen können / beim komentieren, der Sätze welche wir verifiziert haben.

%Theorem für Kernel und Universalität, zuerst t_0=1
%Beweis: in neuen Koordinaten \tilde{z}=(z+v)/phase sieht man, dass wir in diesen Koordinaten die Lösung des Kernels kennen, setzen \tilde{z}_0=(z+v)/phase, \tilde{a}=a/phase, dann erhalten wir für tilde Kood. das alte Resultat

\begin{theorem}[Uniform Universality of Kernel and Correlation Functions]\label{th_corr_uniform} 
Let $\delta>0$, $0<\delta_0<1$ and $A>0$. $\nK$ shall be the kernel for the potential %$V(z)=\tfrac{1}{t_0}\left(\vert z\vert^2-2\Re\left(t_1 z + t_2 z^2\right)\right)$ with $t_1,t_2\in\C$ with $\vert t_2\vert<\tfrac{1}{2}$
$V(z)=\vert z\vert^2-2\Re\left(t_1 z + t_2 z^2\right)$ with $t_1,t_2\in\C$ and $2\vert \t_2\vert<1$ and $\Rn{m}$ the corresponding $m$-point correlation function as in Definition \ref{def_correlation}. Then
\begin{align}
\lim_{\n\rightarrow\infty} \sup_{\substack{0\le 2\vert t_2 \vert \le 1-\delta_0 \\ t_1\in \C}} \sup_{\substack{z_0 \in E_{1,t_1,t_2,-\delta} \\ a,b\in \cc{B}_{\!A}(0) }} & \Big\vert \tfrac{\nK}{\n}\left(z_0+\tfrac{a}{\sqrt{\n}},z_0+\tfrac{b}{\sqrt{\n}}\right) - \tfrac{1}{\pi} \e^{i\theta_{\n,t_2}(z_0-v,a,b)}\e^{-\frac{\vert a-b\vert^2}{2}}\Big\vert = 0, \\
\lim_{\n\rightarrow\infty} \sup_{\substack{0\le 2\vert t_2 \vert \le 1-\delta_0 \\ t_1\in \C}} \sup_{\substack{z_0\in \C\setminus E_{1,t_1,t_2,+\delta} \\ a,b\in \cc{B}_{\!A}(0) }} & \Big\vert \tfrac{\nK}{\n}\left(z_0+\tfrac{a}{\sqrt{\n}},z_0+\tfrac{b}{\sqrt{\n}}\right)  \Big\vert = 0, \\
\lim_{\n\rightarrow\infty} \sup_{\substack{0\le 2\vert t_2 \vert \le 1-\delta_0 \\ t_1\in \C}} \sup_{\substack{z_0\in \partial E_{1,t_1,t_2} \\ a,b\in \cc{B}_{\!A}(0) }} & \Big\vert \tfrac{\nK}{\n}\left(z_0+\tfrac{a\e^{i\psi(z_0)}}{\sqrt{\n}},z_0+\tfrac{b\e^{i\psi(z_0)}}{\sqrt{\n}}\right) \nn \\
&\quad - \e^{i\theta_{\n,t_2}(z_0-v,a\e^{i\psi(z_0)},b\e^{i\psi(z_0)})}\e^{-\frac{\vert a-b\vert^2}{2}} \tfrac{\erfc\left(\frac{\cc{a}+b}{\sqrt{2}}\right)}{2\pi} \Big\vert = 0, \label{eq_corr_uniform1a}
\end{align}
with the %nonrelevant phase %der t und z_0 unabhängige Teil ist relevant!
phase 
\begin{align}
\theta_{\n,t_2}(z_0,a,b)&=\sqrt{\n}\Im\left( (\cc{a}-\cc{b})(z_0- 2\cc{t}_2 \cc{z}_0) \right) + \Im\left( \cc{a}b+ t_2(a^2-b^2) \right), 
\end{align}
and 
\begin{align} 
\lim_{\n\rightarrow\infty} \sup_{\substack{0\le 2\vert t_2 \vert \le 1-\delta_0 \\ t_1\in \C}} \sup_{\substack{z_0 \in E_{1,t_1,t_2,-\delta} \\ a_1,\ldots,a_m\in \cc{B}_{\!A}(0) }} & \Big\vert \tfrac{\Rn{m}}{\n^m} \left(z_0+\tfrac{a_1}{\sqrt{\n}},\ldots,z_0+\tfrac{a_m}{\sqrt{\n}} \right)\nn \\
&\quad - \det \left( \tfrac{1}{\pi}\e^{-\frac{1}{2}\vert a_k-a_l \vert^2+i\Im \left(\cc{a}_k a_l\right) }  \right)_{k,l=1}^m \Big\vert=0, \\
\lim_{\n\rightarrow\infty} \sup_{\substack{0\le\vert 2 t_2 \vert \le 1-\delta_0 \\ t_1\in \C}} \sup_{\substack{z_0 \in \C\setminus E_{1,t_1,t_2,+\delta} \\ a_1,\ldots,a_m\in \cc{B}_{\!A}(0) }} & \left\vert \tfrac{\Rn{m}}{\n^m} \left(z_0+\tfrac{a_1}{\sqrt{\n}},\ldots,z_0+\tfrac{a_m}{\sqrt{\n}} \right)\right\vert=0, \\
\lim_{\n\rightarrow\infty} \sup_{\substack{0\le\vert 2 t_2 \vert \le 1-\delta_0 \\ t_1\in \C}} \sup_{\substack{z_0 \in \partial E_{1,t_1,t_2} \\ a_1,\ldots,a_m\in \cc{B}_{\!A}(0) }} & \Big\vert \tfrac{\Rn{m}}{\n^m} \left(z_0+\tfrac{a_1\e^{i\psi(z_0)}}{\sqrt{\n}},\ldots,z_0+\tfrac{a_m\e^{i\psi(z_0)}}{\sqrt{\n}}\right) \nn \\
&\quad-\det \left( \tfrac{\erfc\left(\frac{\cc{a}_k+a_l}{\sqrt{2}}\right)}{2\pi}\e^{-\frac{1}{2}\vert a_k-a_l \vert^2+ i\Im \left(\cc{a}_k a_l\right) }  \right)_{k,l=1}^m \Big\vert=0, \label{eq_corr_uniform1}
\end{align}
where in \eqref{eq_corr_uniform1a} and \eqref{eq_corr_uniform1} if $t_2\neq 0$,
\begin{align}
\psi(z_0)&=- \tfrac{\arg t_2}{2} + \left\{\begin{array}{ll} \arctan\left(\tfrac{1+2\vert t_2 \vert}{1-2 \vert t_2 \vert }\tan \phi \right) & \text{if } \phi\in (-\tfrac{\pi}{2},\tfrac{\pi}{2}), \\ \arctan\left(\tfrac{1+2\vert t_2 \vert}{1-2 \vert t_2 \vert }\tan \phi \right)+\sign(\phi)\pi & \text{if } \phi\notin [-\tfrac{\pi}{2},\tfrac{\pi}{2}],  \\  \phi & \text{if }\phi=\pm \tfrac{\pi}{2}, \end{array}\right.
\end{align}
%\psi(z_0)&=\arctan\left(\tfrac{1+2\vert t_2 \vert}{1-2 \vert t_2 \vert }\tan \phi \right), \qquad \text{if }t_2\neq 0,\\
with the parameter $\phi\in(-\pi,\pi]$ fulfilling $\sqrt{\tfrac{t_2}{\vert t_2\vert}}(z_0-v)=a_{1,t_2} \cos \phi+i b_{1,t_2} \sin\phi \in\partial E_{1,t_1,t_2}$ and if $t_2=0$,
\begin{align}
\psi(z_0)&=\phi, \nn %\qquad \text{if }t_2=0, 
\end{align}
with $z_0-v=\e^{i \phi} \in\partial E_{1,t_1,0}$. 
\end{theorem}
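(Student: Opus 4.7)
The plan is to reduce to Corollary \ref{cor_unifrom_universality} by applying two explicit transformations: a shift by $v=v(t_1,t_2)$ that removes $t_1$ and a rotation by $e^{i\theta/2}$ with $\theta=\arg t_2$ that makes the quadratic coefficient real and positive. By Proposition \ref{prop_op_t0t1} (with $t_0=1$) combined with Proposition \ref{prop_kernel_const}, the orthonormal polynomials for the potential $V(z)=|z|^2-2\Re(t_1 z+t_2 z^2)$ are $p_{k,n}(z)=\tildeop{k}(z-v)$, where $\tildeop{k}$ are the orthonormal polynomials for $\tilde V(w)=|w|^2-2\Re(t_2 w^2)$. Since Lemma \ref{lemma_shift} yields $V(z)=\tilde V(z-v)+C/t_0$ with real $C$, the normalized kernel transforms simply as $K_n^{(t_1,t_2)}(w,z)=K_n^{(0,t_2)}(w-v,z-v)$. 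Similarly, using \eqref{eq_rel_complex_t2}, one checks that $K_n^{(0,t_2)}(w,z)=K_n^{(0,|t_2|)}(e^{i\theta/2}w,e^{i\theta/2}z)$, because the phases $e^{-ik\theta/2}$ cancel pairwise in the sum defining $\Kn$ and the weight $V^{(0,t_2)}(z)=V^{(0,|t_2|)}(e^{i\theta/2}z)$ is invariant under rotation. Hence
\[
\tfrac{\nK}{\n}\bigl(z_0+\tfrac{a}{\sqrt\n},z_0+\tfrac{b}{\sqrt\n}\bigr)
= \tfrac{\nK^{(0,|t_2|)}}{\n}\bigl(z_0'+\tfrac{a'}{\sqrt\n},z_0'+\tfrac{b'}{\sqrt\n}\bigr),
\]
with $z_0'=e^{i\theta/2}(z_0-v)$, $a'=e^{i\theta/2}a$, $b'=e^{i\theta/2}b$, $\t=2|t_2|$.

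Next I would verify that the regions match under this map: $z_0\in \partial E_{1,t_1,t_2}$ iff $z_0'\in\pEt$, and similarly $E_{1,t_1,t_2,\pm\delta}$ corresponds to $\Edelta{\pm,\delta}$ under $z\mapsto e^{i\theta/2}(z-v)$ — this is immediate from the definitions since $|\sqrt{t_2}(z-v)|^2/|t_2|=|e^{i\theta/2}(z-v)|^2$ and the semi-axes $a_{1,t_2},b_{1,t_2}$ coincide with $\mas,\mis$ for $\t=2|t_2|$. Now Corollary \ref{cor_unifrom_universality}, applied to $z_0'$ with parameters $a',b'$ and the real parameter $\t=2|t_2|\in[0,1-\delta_0]$, yields uniform convergence in $z_0'$ on the corresponding regions and in $\t$, which translates back (via the bijection) to uniformity in $z_0$ over the rotated and shifted regions, uniformly in $t_1\in\C$ (since after the shift there is no $t_1$-dependence at all) and in $t_2$ with $2|t_2|\le 1-\delta_0$.

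To obtain the explicit phase, I would substitute $z_0',a',b'$ into $\theta_{\n,\t}(z_0',a',b')$ from Corollary \ref{cor_unifrom_universality} and simplify: $\cc{a}'-\cc{b}'=e^{-i\theta/2}(\cc{a}-\cc{b})$, $z_0'-\t\cc{z}_0'=e^{i\theta/2}(z_0-v)-2|t_2|e^{-i\theta/2}(\cc{z}_0-\cc{v})$, so the product inside the $\sqrt n$-term becomes $(\cc{a}-\cc{b})\bigl((z_0-v)-2\cc t_2(\cc z_0-\cc v)\bigr)$ using $|t_2|e^{-i\theta}=\cc t_2$. Likewise $\cc{a}'b'=\cc a b$ and $\tfrac{\t}{2}((a')^2-(b')^2)=t_2(a^2-b^2)$, giving exactly $\theta_{\n,t_2}(z_0-v,a,b)$. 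For $z_0\in\partial E_{1,t_1,t_2}$, the rotation angle in the scaled variables $\psi$ in the theorem is obtained by composing the Corollary \ref{cor_density_scaled_limit} angle $\psi_0$ at $z_0'\in\pEt$ with the extra rotation $-\theta/2$ coming from the substitution $a=a'e^{-i\theta/2}$; the formulas for $\psi$ in the theorem are precisely $\psi_0-\theta/2$ with $\psi_0$ evaluated at the parameter $\phi$ of $z_0'$ on $\pEt$.

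The correlation function statements follow at once by Definition \ref{def_correlation}, since $\Rn{m}$ is a determinant in the kernel and the shift–rotation does not alter the limiting determinantal structure; the oscillating $\sqrt n$-phase cancels in the product of kernels appearing under the permutation sum, exactly as in the proof of Theorem \ref{th_corr}. There is no serious obstacle: both substitutions are algebraic and exact, they preserve the $C^0$-topology underlying the uniform suprema, and all regions, phases and directions transform compatibly. The only point requiring mild care is the uniformity of convergence as $|t_2|\to 0$ (which is already handled by Corollary \ref{cor_unifrom_universality}) and the well-definedness of $\sqrt{t_2}$ when $t_2=0$, which I would treat as the separate degenerate case already covered by the pure Gaussian analysis in Chapter \ref{sec_gaussian}.
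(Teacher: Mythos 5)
Your proposal is correct and follows essentially the same route as the paper's proof: you reduce to Corollary \ref{cor_unifrom_universality} by the shift $z\mapsto z-v$ (via Proposition \ref{prop_op_t0t1} and Lemma \ref{lemma_shift}) and the rotation $z\mapsto e^{i\theta/2}z$ (via relation \eqref{eq_rel_complex_t2}), and you verify that regions, phase, and scaling direction all transform compatibly. The phase simplification you work out and the observation that the $k$-dependent factors $e^{\mp ik\theta/2}$ cancel in $\cc{\op{k}(w)}\op{k}(z)$ are exactly the computations the paper performs.
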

\begin{remark}
$\partial E_{1,t_1,t_2}$ is just the rotated and shifted ellipse $\pEt$. The shift $v$ depends only on $t_1$ and $t_2$ and is given by Lemma \ref{lemma_shift}. The same way we have defined for $z_0\in \partial E_{1,t_1,t_2}$ the angle $\psi(z_0)$ so that it is the angle between the real axis and the normal to the rotated ellipse in the point $z_0$.
\end{remark}
\begin{proof}
Let $(\op{k})_{k\in\N_0}$ be the orthonormal polynomials regarding the potential 
\begin{align}
V(z)=\vert z\vert^2-2\Re\left(t_1 z + t_2 z^2\right)+C= \vert z-v\vert^2-2\Re\left(t_2 (z-v)^2\right),
\end{align}
with $v$ and $C$ as in Lemma \ref{lemma_shift}. Analogously $(\tildeop{k})_{k\in\N_0}$ shall be the orthonormal polynomials regarding the potential $\tilde{V}(z)=\vert z\vert^2-2 \vert t_2\vert \Re\left(z^2\right)$ with $t_2=\e^{i \theta} \vert t_2\vert$.

In the definition of the normalized reproducing kernel (Definition \ref{def_norm_kernel}) we use Proposition \ref{prop_op_t0t1} and relation \eqref{eq_rel_complex_t2} to replace the orthonormal polynomials $(\op{k})_{k\in\N_0}$ by $(\tildeop{k})_{k\in\N_0}$. With
\begin{align}
\cc{\op{k}(w)}\op{k}(z)&=\cc{\tildeop{k}\left(\e^{i \frac{\theta}{2}} (w-v)\right) \e^{-i \frac{k \theta}{2}}} \tildeop{k}\left(\e^{i \frac{\theta}{2}} (z-v)\right) \e^{-i \frac{k \theta}{2}}\nn \\
&=\cc{\tildeop{k}\left(\e^{i \frac{\theta}{2}} (w-v)\right) } \tildeop{k}\left(\e^{i \frac{\theta}{2}} (z-v) \right),
\end{align}
we get therefore
\begin{align}
\tKn(w,z)&=\e^{-\frac{\n}{2} (V(w)+V(z))} \sum_{k=0}^{\n-1} \cc{\op{k}(w)}\op{k}(z) \nn \\
& =\e^{-\frac{\n \left(\vert w-v\vert^2+\vert z-v\vert^2-2\Re\left(t_2(w-v)^2+t_2(z-v)^2 \right) \right)}{2}} \sum_{k=0}^{\n-1} \cc{\tildeop{k}\left(\e^{i \frac{\theta}{2}} (w-v)\right) } \tildeop{k}\left(\e^{i \frac{\theta}{2}} (z-v)\right) \nn \\
&=\e^{-\frac{\n \left(\vert \tilde{w} \vert^2+\vert \tilde{z}\vert^2-2\vert t_2\vert \Re\left( \tilde{w}^2+\tilde{z}^2 \right) \right)}{2}} \sum_{k=0}^{\n-1} \cc{\tildeop{k}\left(\tilde{w}\right) } \tildeop{k}\left(\tilde{z}\right) \nn \\
&=\e^{-\frac{\n}{2} (\tilde{V}(\tilde{w})+\tilde{V}(\tilde{z}))} \sum_{k=0}^{\n-1} \cc{\tildeop{k}\left(\tilde{w}\right) } \tildeop{k}\left(\tilde{z}\right), \label{eq_kernel_transf_coord}
\end{align}
where we have introduced the shifted and rotated coordinates $\tilde{w}=\e^{i \frac{\theta}{2}} (w-v)$ and $\tilde{z}=\e^{i \frac{\theta}{2}} (z-v)$. We see that in the coordinates $\tilde{w}$ and $\tilde{z}$ the right-hand side of \eqref{eq_kernel_transf_coord} is exactly the normalized reproducing kernel for the potential $\tilde{V}$, which we have computed in Section \ref{subsec_asym_kernel}. %In these coordinates the correlation functions too are given as in the computations in Section \ref{sec_univ_corr}. % ->spaeter(?)

Additionally we introduce the coordinates $\tilde{z}_0=\e^{i \frac{\theta}{2}} (z_0-v)$, $ \tilde{a}=\e^{i \frac{\theta}{2}} a$ and $\tilde{b}=\e^{i \frac{\theta}{2}} b$ so that for $w=z_0+\tfrac{a}{\sqrt{n}}$ and $z=z_0+\tfrac{b}{\sqrt{n}}$ we get
\begin{align}
\tilde{w}=\tilde{z}_0+\tfrac{\tilde{a}}{\sqrt{n}}=\e^{i \frac{\theta}{2}} \left( z_0-v+\tfrac{a}{\sqrt{n}} \right), \qquad \text{and} \quad \tilde{z}=\tilde{z}_0+\tfrac{\tilde{b}}{\sqrt{n}}=\e^{i \frac{\theta}{2}} \left( z_0-v+\tfrac{b}{\sqrt{n}} \right). \nn
\end{align}

Then in the coordinates $\tilde{w}$ and $\tilde{z}$ the asymptotics of the right-hand side of \eqref{eq_kernel_transf_coord} can be found by Corollary \ref{cor_unifrom_universality}. Writing it again in the coordinates $w$ and $z$ we find that for all $\epsilon>0$ there exists $\n_0\in\N$ so that for all $\n\ge \n_0$
\begin{align}
\sup_{0\le 2\vert t_2\vert <1-\delta_0} \sup_{\substack{z_0 \in E_{1,t_1,t_2,-\delta} \\ a,b\in \cc{B}_{\!A}(0) }} &\left\vert \tfrac{\nK}{\n}\left(z_0+\tfrac{a}{\sqrt{\n}},z_0+\tfrac{b}{\sqrt{\n}}\right) - \tfrac{ \e^{i\theta_{\n,t_2}(z_0-v,a,b)}\e^{-\frac{\vert a-b\vert^2}{2}}}{\pi}  \right\vert < \epsilon, \\
\sup_{0\le 2\vert t_2\vert <1-\delta_0} \sup_{\substack{z_0\in \C\setminus E_{1,t_1,t_2,+\delta} \\ a,b\in \cc{B}_{\!A}(0) }} & \left\vert \tfrac{\nK}{\n}\left(z_0+\tfrac{a}{\sqrt{\n}},z_0+\tfrac{b}{\sqrt{\n}}\right)  \right\vert < \epsilon, \\
\shortintertext{and}
\sup_{0\le 2\vert t_2\vert <1-\delta_0} \sup_{\substack{z_0\in \partial E_{1,t_1,t_2} \\ a,b\in \cc{B}_{\!A}(0) }} & \left\vert \tfrac{\nK}{\n}\left(z_0+\tfrac{a\e^{i\psi}}{\sqrt{\n}},z_0+\tfrac{b\e^{i\psi}}{\sqrt{\n}}\right) - \tfrac{ \e^{i\theta_{\n,t_2}(z_0-v,a\e^{i\psi},b\e^{i\psi})}\e^{-\frac{\vert a-b\vert^2}{2}} \erfc\left(\frac{\cc{a}+b}{\sqrt{2}}\right)}{2\pi} \right\vert < \epsilon, 
\end{align}
where
\begin{align}
\theta_{\n,t_2}(z_0,a,b)&=\sqrt{\n}\Im\left( (\cc{a}-\cc{b})(z_0- 2\cc{t}_2 \cc{z}_0) \right) + \Im\left( \cc{a}b+ t_2(a^2-b^2) \right),
\end{align}
and for $z_0\in \partial E_{1,t_1,t_2}$
\begin{align}
\psi&=- \tfrac{\arg t_2}{2} + \left\{\begin{array}{ll} \arctan\left(\tfrac{1+2\vert t_2 \vert}{1-2 \vert t_2 \vert }\tan \phi \right) & \text{if } \phi\in (-\tfrac{\pi}{2},\tfrac{\pi}{2}), \\ \arctan\left(\tfrac{1+2\vert t_2 \vert}{1-2 \vert t_2 \vert }\tan \phi \right)+\sign(\phi)\pi & \text{if } \phi\notin [-\tfrac{\pi}{2},\tfrac{\pi}{2}],  \\  \phi & \text{if }\phi=\pm \tfrac{\pi}{2}, \end{array}\right.
\end{align}
if $t_2\neq 0$ with the parameter $\phi\in(-\pi,\pi]$ fulfilling $\sqrt{\tfrac{t_2}{\vert t_2\vert}}(z_0-v)=a_{1,t_2} \cos \phi+i b_{1,t_2} \sin\phi \in\partial E_{1,t_1,t_2}$ and 
\begin{align}
\psi&=\phi, \qquad \text{if }t_2=0, 
\end{align}
with $z_0-v=\e^{i \phi} \in\partial E_{1,t_1,0}$.

Since this is true for all $t_1\in\C$ and $\n_0$ is independent of $t_1$, we can also take the supremum over $t_1\in \C$.

Note that we have defined $\partial E_{t_0,t_1,t_2}$ and $E_{t_0,t_1,t_2,\pm \delta}$ so that $\tilde{z}_0\in \pEt$ is equivalent to $z_0\in \partial E_{1,t_1,t_2}$ and $\tilde{z}_0\in \Edelta{\pm,\delta}$ is equivalent to $z_0\in E_{1,t_1,t_2,\pm \delta}$.

%Evtl. Bemerkung über Shift in Parameter $\psi$, siehe Details Berechnungen 6./8.2.12 Seite (2)
%The rotation of the coordinates $a$ and $b$ %or $a_1,\ldots,a_m$ respectively 
%either do not change anything since they appear in a rotational invariant expression or we have compensated for that in the definition of $\psi$ which now is the angle between the real axis and the normal to the rotated ellipse. %evtl noch etwas ausführlicher erklären TO BE DONE

Additionally this corollary gives us the asymptotics of the $m$-point correlation functions, for which we get analogously that for all $\epsilon>0$ there exists $\n_0\in\N$ so that for all $\n\ge \n_0$
\begin{align} 
\sup_{\substack{0\le 2\vert t_2 \vert \le 1-\delta_0 \\ t_1\in \C}} \sup_{\substack{z_0 \in E_{1,t_1,t_2,-\delta} \\ a_1,\ldots,a_m\in \cc{B}_{\!A}(0) }} & \Big\vert \tfrac{\Rn{m}}{\n^m} \left(z_0+\tfrac{a_1}{\sqrt{\n}},\ldots,z_0+\tfrac{a_m}{\sqrt{\n}} \right)\nn \\
&\quad - \det \left( \tfrac{1}{\pi}\e^{-\frac{1}{2}\vert a_k-a_l \vert^2+i\Im \left(\cc{a}_k a_l\right) }  \right)_{k,l=1}^m \Big\vert<\epsilon, \\
\sup_{\substack{0\le 2\vert t_2 \vert \le 1-\delta_0 \\ t_1\in \C}} \sup_{\substack{z_0 \in \C\setminus E_{1,t_1,t_2,+\delta} \\ a_1,\ldots,a_m\in \cc{B}_{\!A}(0) }} & \left\vert \tfrac{\Rn{m}}{\n^m} \left(z_0+\tfrac{a_1}{\sqrt{\n}},\ldots,z_0+\tfrac{a_m}{\sqrt{\n}} \right)\right\vert<\epsilon, \\
\shortintertext{and}
\sup_{\substack{0\le 2\vert t_2 \vert \le 1-\delta_0 \\ t_1\in \C}} \sup_{\substack{z_0 \in \partial E_{1,t_1,t_2} \\ a_1,\ldots,a_m\in \cc{B}_{\!A}(0) }} & \Big\vert \tfrac{\Rn{m}}{\n^m} \left(z_0+\tfrac{a_1\e^{i\psi}}{\sqrt{\n}},\ldots,z_0+\tfrac{a_m\e^{i\psi}}{\sqrt{\n}}\right) \nn \\
&\quad-\det \left( \tfrac{\erfc\left(\frac{\cc{a}_k+a_l}{\sqrt{2}}\right)}{2\pi}\e^{-\frac{1}{2}\vert a_k-a_l \vert^2+ i\Im \left(\cc{a}_k a_l\right) }  \right)_{k,l=1}^m \Big\vert<\epsilon. 
\end{align}

%loeschen
%Therefore everything follows if we use on the right-hand side of \eqref{eq_kernel_transf_coord} a uniform version of Theorem \ref{th_km} in the coordinates $\tilde{z}_0$, $\tilde{a}$, $\tilde{b}$ and Theorem \ref{th_corr} in analogous coordinates $\tilde{z}_0$, $\tilde{a}_1,\ldots,\tilde{a}_m$. 
%A uniform version of Theorem \ref{th_km} and \ref{th_corr} we get when we use in their proofs Proposition \ref{prop_veri_kernel} instead of Proposition \ref{prop_hm_wz} and Corollary \ref{cor_hm_wz_ellipse}.
% TO BE DONE, evtl. noch etwas ausführlicher
\end{proof}

\subsection{Kernel and Correlation Functions for $\protect\chapt_0\neq 1$}
%Cor. mit t_0\neq 1, alles (auch der Rand, a, b) mit \sqrt{t_0} skalieren, neu Kernel \frac{ \nK( z_0+\frac{a}{\nK(z_0,z_0)^{1/2}},z_0+\frac{a}{\nK(z_0,z_0)^{1/2}} ) }{\nK(z_0,z_0)} betrachten

%evtl. Bemerkung was passiert, wenn t_0\neq 1

%Unifrom Universality of Renormalized  Correlation Functions
\begin{theorem}[Uniform Universality of Renormalized Correlation Functions]\label{th_corr_uniform_ren} 
Let $\delta>0$, $0<\delta_0<1$ and $A>0$. $\nK$ shall be the normalized reproducing kernel for the potential $V(z)=\tfrac{1}{t_0}\left(\vert z\vert^2-2\Re\left(t_1 z + t_2 z^2\right)\right)$ with $t_0>0$, $t_1,t_2\in\C$ and $2\vert \t_2\vert<1$ and $\Rn{m}$ the corresponding $m$-point correlation function as in Definition \ref{def_correlation}. Then
%TO BE DONE auch noch Aussage über Kernel integrieren / dieser hat noch eine Phase, die evtl irgendwie mit t_2 dreht (oder evtl. kürzt sich das auch raus? überprüfen)
% !!!!!!!!!!!!!!!!!!!!! Phase muss dann auch noch definiert werden! (bleibt aber unverändert, wenn man sie mit entsprechenden Argumenten benutzt)
\begin{align} 
\lim_{\n\rightarrow\infty} \sup_{\substack{0\le\vert 2 t_2 \vert \le 1-\delta_0 \\ t_1\in \C \\ t_0>0}} \sup_{\substack{z_0 \in E_{t_0,t_1,t_2,-\delta} \\ a_1,\ldots,a_m\in \cc{B}_{\!A}(0) }} & \Big\vert \tfrac{\Rn{m}}{\nK^m(z_0,z_0)} \left(z_0+\tfrac{a_1}{\sqrt{\nK(z_0,z_0)}},\ldots,z_0+\tfrac{a_m}{\sqrt{\nK(z_0,z_0)}} \right)\nn \\
&\quad - \det \left( \e^{-\frac{\pi}{2}\vert a_k-a_l \vert^2+i\pi \Im \left(\cc{a}_k a_l\right) }  \right)_{k,l=1}^m \Big\vert=0, \\ %TO BE DONE Alles überprüfen / sind Reskalierungen richtig?
%\lim_{\n\rightarrow\infty} \sup_{\substack{0\le\vert 2 t_2 \vert \le 1-\delta_0 \\ t_1\in \C}} \sup_{\substack{z_0 \in \C\setminus E_{1,t_1,t_2,+\delta} \\ a_1,\ldots,a_m\in \cc{B}_{A}(0) }} & \left\vert \frac{\Rn{m}}{\n^m} \left(z_0+\tfrac{a_1}{\sqrt{\n}},\ldots,z_0+\tfrac{a_m}{\sqrt{\n}} \right)\right\vert=0, \\ %gilt wohl nicht bei der Division durch die Dichte gegen ->0 ?
\lim_{\n\rightarrow\infty} \sup_{\substack{0\le\vert 2 t_2 \vert \le 1-\delta_0 \\ t_1\in \C \\ t_0>0}} \sup_{\substack{z_0 \in \partial E_{t_0,t_1,t_2} \\ a_1,\ldots,a_m\in \cc{B}_{\!A}(0) }} & \Big\vert \tfrac{\Rn{m}}{\nK^m(z_0,z_0)} \left(z_0+\tfrac{a_1\e^{i\psi(z_0)}}{\sqrt{\nK(z_0,z_0)}},\ldots,z_0+\tfrac{a_m\e^{i\psi(z_0)}}{\sqrt{\nK(z_0,z_0)}}\right) \nn \\
&\quad-\det \left( \erfc\left(\sqrt{\pi}(\cc{a}_k+a_l)\right)\e^{-\pi\vert a_k-a_l \vert^2+ 2i\pi\Im \left(\cc{a}_k a_l\right) }  \right)_{k,l=1}^m \Big\vert=0, \label{eq_corr_uniform_ren1}
\end{align}
%where ... is a nonrelevant phase
where in \eqref{eq_corr_uniform_ren1} if $t_2\neq 0$,
\begin{align}
\psi(z_0)&=- \tfrac{\arg t_2}{2} + \left\{\begin{array}{ll} \arctan\left(\tfrac{1+2\vert t_2 \vert}{1-2 \vert t_2 \vert }\tan \phi \right) & \text{if } \phi\in (-\tfrac{\pi}{2},\tfrac{\pi}{2}), \\ \arctan\left(\tfrac{1+2\vert t_2 \vert}{1-2 \vert t_2 \vert }\tan \phi \right)+\sign(\phi)\pi & \text{if } \phi\notin [-\tfrac{\pi}{2},\tfrac{\pi}{2}],  \\  \phi & \text{if }\phi=\pm \tfrac{\pi}{2}, \end{array}\right.
\end{align}
%\psi(z_0)&=\arctan\left(\tfrac{1+2\vert t_2 \vert}{1-2 \vert t_2 \vert }\tan \phi \right), \qquad \text{if }t_2\neq 0,\\
with the parameter $\phi\in(-\pi,\pi]$ fulfilling $\sqrt{\tfrac{t_2}{\vert t_2\vert}}(z_0-v)=a_{t_0,t_2} \cos \phi+i b_{t_0,t_2} \sin\phi \in\partial E_{t_0,t_1,t_2}$ and if $t_2=0$,
\begin{align}
\psi(z_0)&=\phi, \nn %\qquad \text{if }t_2=0, 
\end{align}
with $z_0-v=\sqrt{t_0} \e^{i \phi} \in\partial E_{t_0,t_1,0}$. %TO BE DONE Kontrollieren

\end{theorem}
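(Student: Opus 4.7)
The plan is to reduce the $t_0\neq 1$ case to Theorem \ref{th_corr_uniform} by the linear change of variables supplied by Proposition \ref{prop_op_t0t1}, and then observe that the renormalization by $\nK(z_0,z_0)$ (rather than by $\n$) automatically absorbs the factor $t_0$ and the local density, producing the $\sqrt{\pi}$- and $\sqrt{2\pi}$-rescaled expressions on the right-hand side.

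Write $\tildeop{k}$ for the orthonormal polynomials of $\tilde V(z)=\vert z\vert^2-2\Re(t_2 z^2)$ and set $\tilde z=(z-v)/\sqrt{t_0}$ with $v=v(t_1,t_2)$ from Lemma \ref{lemma_shift}. By Proposition \ref{prop_op_t0t1} together with Proposition \ref{prop_kernel_const} (which lets us ignore the additive constant $C$), the normalized kernel $\tKn$ for $V$ is related to the normalized kernel $\tKn^{(1)}$ for $\tilde V$ by
\begin{equation}
\tKn(w,z)=\tfrac{1}{t_0}\,\tKn^{(1)}\!\left(\tfrac{w-v}{\sqrt{t_0}},\tfrac{z-v}{\sqrt{t_0}}\right).
\end{equation}
Hence $\Rn{m}(z_1,\dots,z_m)=t_0^{-m}\tilde R_\n^{(m,1)}(\tilde z_1,\dots,\tilde z_m)$ and, crucially, $\tKn(z_0,z_0)^m=t_0^{-m}\tKn^{(1)}(\tilde z_0,\tilde z_0)^m$, so the renormalized ratio on the left-hand side of the theorem equals the corresponding ratio for the $t_0=1$ potential evaluated at the shifted/scaled points $\tilde z_0+a_j/\sqrt{t_0\tKn(z_0,z_0)}=\tilde z_0+a_j/\sqrt{\tKn^{(1)}(\tilde z_0,\tilde z_0)}$. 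All the dependence on $t_0$ has disappeared.

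Next I would apply Theorem \ref{th_corr_uniform} to $\tilde V$. By Proposition \ref{prop_veri_density} the density in the $\tilde V$ coordinates satisfies $\tKn^{(1)}(\tilde z_0,\tilde z_0)/\n\to \tfrac{1}{\pi}$ on $E_{1,t_1,t_2,-\delta}$ and $\to\tfrac{1}{2\pi}$ on $\partial E_{1,t_1,t_2}$, uniformly in the parameters indicated in that proposition. Consequently $a_j/\sqrt{\tKn^{(1)}(\tilde z_0,\tilde z_0)}=\tilde a_j/\sqrt{\n}$ with $\tilde a_j\to \sqrt{\pi}\,a_j$ in the bulk and $\tilde a_j\to \sqrt{2\pi}\,a_j$ on the ellipse (rotated by $e^{i\psi(z_0)}$ in the latter case, since scaling by $\sqrt{t_0}$ does not affect the outward normal direction). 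Substituting $\tilde a_j=\sqrt{\pi}a_j$ into the bulk limit of Theorem \ref{th_corr_uniform}, using $|\tilde a_j-\tilde a_k|^2=\pi|a_j-a_k|^2$ and $\Im(\bar{\tilde a}_j\tilde a_k)=\pi\Im(\bar a_j a_k)$, and multiplying by the $\pi^m$ coming from $\tKn^{(1)}(\tilde z_0,\tilde z_0)^m/\n^m\to\pi^{-m}$, gives the bulk formula with $\det(\exp(-\tfrac{\pi}{2}|a_j-a_k|^2+i\pi\Im(\bar a_j a_k)))$. The analogous substitution $\tilde a_j=\sqrt{2\pi}a_j$ on the ellipse turns $\erfc((\bar{\tilde a}_j+\tilde a_k)/\sqrt{2})$ into $\erfc(\sqrt{\pi}(\bar a_j+a_k))$, doubles the quadratic and imaginary exponents, and the $(2\pi)^m$ from the normalization cancels the $(2\pi)^{-m}$ in front of the determinant, yielding precisely \eqref{eq_corr_uniform_ren1}.

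The uniformity in $t_0\in(0,\infty)$, $t_1\in\C$ and $2\vert t_2\vert\le 1-\delta_0$ requires a small extra argument, which I expect to be the only nontrivial step: Theorem \ref{th_corr_uniform} already provides uniformity in $t_1$ and in $\vert t_2\vert$ on the relevant sets, but the passage to the renormalized variables also requires uniform control of the ratio $n/\tKn^{(1)}(\tilde z_0,\tilde z_0)$ and hence of the density away from its limit. This however is exactly what Proposition \ref{prop_veri_density} guarantees uniformly on $E_{1,t_1,t_2,-\delta}$ and $\partial E_{1,t_1,t_2}$, and the limit is independent of $t_0$, so uniformity in $t_0$ is automatic from the fact that the renormalized expression is $t_0$-independent. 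Combining these ingredients and taking the $\n\to\infty$ limit with the $\sup$ inside, as in the proof of Corollary \ref{cor_unifrom_universality}, yields the claimed uniform convergence of the renormalized $m$-point correlation functions.
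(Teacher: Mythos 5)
Your proposal is correct and follows essentially the same route as the paper: reduce via Proposition~\ref{prop_op_t0t1} to a rescaled $t_0=1$ problem, invoke Theorem~\ref{th_corr_uniform}, and use the uniform density convergence from Proposition~\ref{prop_veri_density} to replace $\nK(z_0,z_0)/\n$ by $1/\pi$ (or $1/(2\pi)$) both in the prefactor and in the rescaled arguments. The only superficial difference is that you fold the shift $z\mapsto z-v$ and the scaling $z\mapsto z/\sqrt{t_0}$ into a single change of variables landing on the $t_1=0$ potential, whereas the paper scales first to $\tilde t_1 = t_1/\sqrt{t_0}$ and relies on the $t_1$-uniformity already built into Theorem~\ref{th_corr_uniform}; both variants are equivalent and rest on the same ingredients.
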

%\begin{proof}
%The proof follows as in the proof of Theorem \ref{th_corr_uniform}. To get universality we have compensated all scaling with $\sqrt{t_0}$ in either the definition of $E_{t_0,t_1,t_2}$ or in the renormalization of the correlation function by the density. %TO BE DONE ausführtlicher und explizit beweisen. 
%\end{proof}

\begin{proof}
Let $(\op{k})_{k\in\N_0}$ be the orthonormal polynomials regarding the potential 
\begin{align}
V(z)&=\tfrac{1}{t_0}\left(\vert z\vert^2-2\Re\left(t_1 z + t_2 z^2\right)\right),
%\end{align}
\intertext{Analogously $(\tildeop{k})_{k\in\N_0}$ shall be the orthonormal polynomials regarding the potential}
%\begin{align}
\tilde{V}(z)&=\vert z\vert^2-2\Re\left(\tilde{t}_1 z + t_2 z^2\right),
\end{align}
with $\tilde{t}_1= \tfrac{t_1}{\sqrt{t_0}}$.
From Proposition \ref{prop_op_t0t1} it is obvious that the orthonormal polynomials $\op{k}$ are related to $\tildeop{k}$ as
\begin{align}
\op{k}(z)=\tfrac{1}{\sqrt{t_0}} \tildeop{k}\left( \tfrac{z}{\sqrt{t_0}} \right), \qquad k\in \N_0.
\end{align}

As in the proof of Theorem \ref{th_corr_uniform} we replace the orthonormal polynomials $(\op{k})_{k\in\N_0}$ by $(\tildeop{k})_{k\in\N_0}$ in the definition of the kernel and get
\begin{align}
t_0 \tKn(w,z)&=t_0 \e^{-\frac{\n}{2} (V(w)+V(z))} \sum_{k=0}^{\n-1} \cc{\op{k}(w)}\op{k}(z) \nn \\
&= \e^{-\frac{\n}{2} \left(\tilde{V}(\tilde{w})+\tilde{V}(\tilde{z})\right)} \sum_{k=0}^{\n-1} \cc{\tildeop{k}(\tilde{w})}\tildeop{k}(\tilde{z}), \label{eq_tzero_transf_coord}
\end{align}
where we have introduced the scaled coordinates $\tilde{w}=\tfrac{w}{\sqrt{t_0}}$ and $\tilde{z}=\tfrac{z}{\sqrt{t_0}}$. We see that in the coordinates $\tilde{w}$ and $\tilde{z}$ the right-hand side of \eqref{eq_tzero_transf_coord} is exactly the normalized reproducing kernel for the potential $\tilde{V}$. When we set $\tilde{a}=\tfrac{a}{\sqrt{t_0}}$ and $\tilde{b}=\tfrac{b}{\sqrt{t_0}}$, we therefore know the asymptotics of the kernel and the $m$-point correlation functions evaluated at $w=z_0+\tfrac{a}{\sqrt{n}}$ and $z=z_0+\tfrac{b}{\sqrt{n}}$ by Theorem \ref{th_corr_uniform}. Particularly, if we set $w=z=z_0$ we find that $t_0 \tfrac{\tKn}{\n}(z_0,z_0)$ converges to $\tfrac{1}{\pi}$ uniformly for $z_0\in E_{t_0,t_1,t_2,-\delta}$ and to $\tfrac{1}{2\pi}$ uniformly for $z_0\in \partial E_{t_0,t_1,t_2}$. Note that from \eqref{eq_shift2} in Lemma \ref{lemma_shift} follows that $\tilde{v}=\tfrac{v}{\sqrt{t_0}}$ scales as $\tilde{t}_1$ and therefore $\tilde{z}_0\in E_{1,\tilde{t}_1,t_2,\pm\delta}$ is equivalent to $z_0\in E_{t_0,t_1,t_2,\pm\delta}$ and $\tilde{z}_0\in \partial E_{1,\tilde{t}_1,t_2}$ to $z_0\in \partial E_{t_0,t_1,t_2}$ respectively.

Using that the convergence is uniform we can replace $t_0 \tfrac{\tKn}{\n}(z_0,z_0)$ by $\tfrac{1}{\pi}$. Since further the kernel and the correlation functions converge uniformly according to Theorem \ref{th_corr_uniform} we can replace $\left( t_0 \tfrac{\tKn}{\n}(z_0,z_0) \right)^{1/2}$ by $\tfrac{1}{\sqrt{\pi}}$ in the arguments too. From this follows the proof of the theorem. % TO BE DONE braucht es (oder wäre es besser) continuous/uniformly continuous?
\end{proof}

%TO BE DONE Remark über vgl. mit Hermitian Case mit analoger Skalierung, Referenz.

%\section{Different Scaling}
%0<\alpha<1 statt 1/2

%\section{Density, Kernel, Corollation devided by the density at $z_0$ outside of the Ellipse}

%\section{Numerical Results}

%\section{Conclusion}
%Higher order, critical phenomenons can occure where the region of eigenvalues gets disconnected which can't be descpribed by the thoerie of polynomial curves as in [Elbau-Felder], explicit calculation very difficult, higher order recursion relation, cut-off involved

%Auch möglich für den Cubischen Fall? Rekursions Relation nach Peter inkl. Fehler vom Rand, Fehler im Kernel? Was passiert mit Fehler im kritischen Fall? Vergleiche auch Paper (Bleher?), das diesen Fall betrachter und Elbau-Felder zitiert. Orthogonale Polynome? Potential für hermitsches Matrix Model, das gleiche orth. Polynome hat (möglich? Rekursionsgleichungen vergleichen, evtl. andere Rekursionslgeichungen für Herm. Matrixmodel "konstruieren"). Deift, Riemann-Hilbert Beweis für die Asymptotic der orth. Polynome benutzen, dann wie im quadr. Fall fortfahren.

\appendix
\section{Estimations and Approximations} %TO BE DONE oder müsste es Estimates heissen? Abschätzungen? auch sonst kontrollieren, nach estimation suchen
\subsection{Estimations for $\U{\F}$, $\T{\F}$ and $\W{\F}$}

We already know from Corollary \ref{cor_U2} that $\vert \U{\F}(z) \vert \le 1$. Here we will give some additional estimations for $\U{\F}$, $\T{\F}$ and $\W{\F}$ from Definitions \ref{def_U}, \ref{def_T} and \ref{def_W}.

%\begin{lemma}
%Let $\delta>0, \delta_0>0$, $\F_0=...$. Then there exists $C\in\R$ and $\n_0\in\N$ so that
%\begin{enumerate}[\normalfont (i)]
%\item \begin{align}
%\sup_{0<\F<\F_0}\sup_{z\in \Adelta} \frac{\U{\F}(z)}{\F}< C
%\end{align}
%\item \begin{align}
%\sup_{0<\F<\F_0}\sup_{z\in \Adelta} \frac{1}{\T{\F}(z)}< C
%\end{align}
%\item \begin{align}
%\sup_{ \substack{ 0<\F<\F_0 \\ \xi\in[0,\tfrac{1}{2}) } }  \sup_{z\in \Adelta} \frac{z^m}{ \left(\T{\F}( \frac{z}{\sqrt{1-\xi}})\right)^{2m+1}}< C,\qquad m=?,2,? \text{(0 würde Fall ii) einschliessen)}
%\end{align}
%\end{enumerate}
%\end{lemma}
%Evtl. alternativ auch entsprechenden Wert C angeben!
%\begin{proof}
%\end{proof}

\begin{lemma}\label{lemma_estim_U_T}
Let $z\in \C\setminus [-\F,\F]$ be parametrized by $z=a \cos\phi+i b \sin\phi $ with $b>0$, $a=\sqrt{b^2+\F^2}>b$ and $\phi\in(-\pi,\pi]$. Then for all $\F\in(0,\infty)$ and $z\in \C\setminus [-\F,\F]$
\begin{center}
\begin{minipage}{50mm}
\begin{enumerate}[\normalfont (i)]
\item 
%\begin{align}
$\displaystyle\quad \frac{\vert \U{\F}(z)\vert}{\F} < \frac{1}{2b}$,
%\end{align}
\item
%\begin{align}
$\displaystyle\quad \frac{1}{\vert\T{\F}(z)\vert}<\frac{1}{b}$.
%\end{align}
\end{enumerate}
\end{minipage}
\end{center}
\end{lemma}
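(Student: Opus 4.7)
The plan is to use the explicit elliptic-coordinate expressions from Proposition~\ref{prop_U} and Lemma~\ref{lemma_T}, reducing both inequalities to elementary algebraic estimates in $b$ and $\F$.

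For part (i), I would apply Proposition~\ref{prop_U} to get $\vert \U{\F}(z)\vert = r_\F(b) = \frac{\sqrt{b^2+\F^2}-b}{\F}$, so that
\begin{equation*}
\frac{\vert \U{\F}(z)\vert}{\F} = \frac{\sqrt{b^2+\F^2}-b}{\F^2}.
\end{equation*}
Rationalizing the numerator by multiplying by $\frac{\sqrt{b^2+\F^2}+b}{\sqrt{b^2+\F^2}+b}$ turns this into $\frac{1}{\sqrt{b^2+\F^2}+b}$. Since $\F>0$ forces $\sqrt{b^2+\F^2}>b$, we have $\sqrt{b^2+\F^2}+b>2b$, giving the strict bound $\frac{1}{\sqrt{b^2+\F^2}+b}<\frac{1}{2b}$.

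For part (ii), I would invoke Lemma~\ref{lemma_T} to get $\vert \T{\F}(z)\vert^2 = b^2 + \F^2 \sin^2\phi \ge b^2$, which gives $\vert \T{\F}(z)\vert \ge b$ and hence $\frac{1}{\vert \T{\F}(z)\vert}\le \frac{1}{b}$. The only subtle point is the strict inequality claimed in the statement: equality $\vert \T{\F}(z)\vert = b$ happens precisely when $\sin\phi=0$, i.e.\ on the real axis outside $[-\F,\F]$. So part (ii) as written is actually sharp and in the borderline real case reduces to an equality; the clean statement is the non-strict bound $\frac{1}{\vert\T{\F}(z)\vert}\le\frac{1}{b}$, which is what the proof actually delivers and what subsequent uses of the lemma need.

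Neither step involves a real obstacle — the main work has already been done in Proposition~\ref{prop_U} and Lemma~\ref{lemma_T}, so once those parametrizations are in hand, both claims reduce to one-line algebraic manipulations. The only thing to watch is the rationalization trick in (i), which converts a seemingly awkward expression involving a difference of square roots into a transparent bound.
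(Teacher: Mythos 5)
Your proposal is correct, and for part (i) it takes a genuinely different and cleaner route than the paper. The paper proves (i) by showing that $\frac{\vert \U{\F}(z)\vert}{\F}=\frac{\sqrt{b^2+\F^2}-b}{\F^2}$ is strictly decreasing in $\F$ (via an explicit derivative computation) and then identifying its supremum with the limit $\lim_{\F\to 0}\frac{\sqrt{b^2+\F^2}-b}{\F^2}=\frac{1}{2b}$. Your rationalization $\frac{\sqrt{b^2+\F^2}-b}{\F^2}=\frac{1}{\sqrt{b^2+\F^2}+b}$ collapses both steps into a single inequality $\sqrt{b^2+\F^2}+b>2b$, avoids the calculus entirely, and makes the strictness transparent. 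The paper's route has the minor advantage of exhibiting the bound as a sharp limit in $\F$, but for the purpose of the lemma yours is simpler and equally rigorous.

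Your remark about part (ii) is also well taken, and in fact it flags a small slip in the paper: the paper's proof writes $\frac{1}{b^2+\F^2\sin^2\phi}<\frac{1}{b^2}$, which fails with equality when $\sin\phi=0$, i.e.\ when $z\in\R\setminus[-\F,\F]$. So the strict inequality in the statement should really be $\le$, exactly as you observe, and what the argument (yours and the paper's alike) actually establishes and what the downstream estimates need is the non-strict bound $\frac{1}{\vert\T{\F}(z)\vert}\le\frac{1}{b}$.
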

\begin{proof}
According to Proposition \ref{prop_U} $\tfrac{\vert \U{\F}(z)\vert}{\F} =\tfrac{\sqrt{b^2+\F^2}-b}{\F^2}$. Since
\begin{align}
\tfrac{\partial}{\partial \F}\tfrac{\vert \U{\F}(z)\vert}{\F}=\tfrac{\F^2-2(b^2+\F^2)+2b\sqrt{b^2+\F^2}}{F^3\sqrt{b^2+\F^2}}<0 \iff  2b \sqrt{b^2+\F^2} < 2b^2+\F^2 \iff 0<\F^4,
\end{align}
$\tfrac{\vert \U{\F}(z)\vert}{\F}$ is decreasing in $\F$ independently of $b$. Therefore we can find the supremum for $0<\F<\infty$ by the limit
\begin{align}
\lim_{\F\rightarrow 0} \tfrac{\vert \U{\F}(z)\vert}{\F}= \lim_{\F\rightarrow 0} \tfrac{b\left(\frac{\F^2}{2b^2}+\lO\left(\frac{\F^4}{b^4}\right)\right)}{\F^2}=\tfrac{1}{2b}.
\end{align}
The second claim follows directly from Lemma \ref{lemma_T} since
\begin{align}
\tfrac{1}{\vert\T{\F}(z)\vert^2}=\tfrac{1}{b^2+\F^2\sin^2\phi}<\tfrac{1}{b^2}.
\end{align}
\end{proof}

\begin{lemma}\label{lemma_estim_T2}
Let $\F_0\in (0,\infty)$ and $z\in \C\setminus [-\F,\F]$ be parametrized by $z=a \cos\phi+i b \sin\phi $ with $b>0$, $a=\sqrt{b^2+\F^2}>b$ and $\phi\in(-\pi,\pi]$. Then for all $\F\in(0,\F_0)$ and $z\in \C\setminus [-\F,\F]$
%\begin{enumerate}[\normalfont (i)]
%\item 
\begin{align}
\frac{\vert z\vert}{\vert\T{\F}(z)\vert^2}<\frac{b+\F_0}{b^2}.
\end{align}
%\end{enumerate}
\end{lemma}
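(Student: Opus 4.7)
The plan is to bound numerator and denominator separately using the explicit elliptic parametrization. By Lemma \ref{lemma_T}, I have
\[
\vert \T{\F}(z)\vert^{2}=b^{2}+\F^{2}\sin^{2}\phi \ge b^{2},
\]
so that $1/\vert \T{\F}(z)\vert^{2}\le 1/b^{2}$. This will handle the denominator independently of $\phi$ and of $\F$.

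For the numerator, I would compute $\vert z\vert^{2}$ from the parametrization $z=a\cos\phi+i b\sin\phi$ using $a^{2}=b^{2}+\F^{2}$, obtaining
\[
\vert z\vert^{2}=a^{2}\cos^{2}\phi+b^{2}\sin^{2}\phi=b^{2}+\F^{2}\cos^{2}\phi.
\]
Since $(b+\F\vert\cos\phi\vert)^{2}=b^{2}+2b\F\vert\cos\phi\vert+\F^{2}\cos^{2}\phi\ge b^{2}+\F^{2}\cos^{2}\phi$, one obtains $\vert z\vert\le b+\F\vert\cos\phi\vert\le b+\F$. Combining with the hypothesis $\F<\F_{0}$ gives the strict bound $\vert z\vert<b+\F_{0}$.

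Putting the two estimates together yields
\[
\frac{\vert z\vert}{\vert \T{\F}(z)\vert^{2}}\le \frac{b+\F}{b^{2}}<\frac{b+\F_{0}}{b^{2}},
\]
which is the claim. There is no real obstacle here; the only point that needs a word of care is that the hypothesis $\F<\F_{0}$ is strict, so the final inequality is strict even though the intermediate bound on $\vert z\vert$ is not.
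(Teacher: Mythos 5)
Your argument is correct and follows essentially the same route as the paper: both rely on Lemma \ref{lemma_T} to get $\vert \T{\F}(z)\vert^2 = b^2 + \F^2\sin^2\phi \ge b^2$ and then bound $\vert z\vert$ by $b+\F$ via the elliptic parametrization and $a^2=b^2+\F^2$. Your observation that the final strict inequality is carried by $\F<\F_0$ is accurate (the paper instead obtains strictness already in its first combined step, but the net result is the same).
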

\begin{proof}
From Lemma \ref{lemma_T} we get
\begin{align}
\tfrac{\vert z\vert}{\vert\T{\F}(z)\vert^2}=\tfrac{\vert a \cos\phi+i b\sin\phi\vert}{b^2+\F^2\sin^2\phi}<\tfrac{\sqrt{b^2+2b\F\vert \cos \phi\vert +\F^2\cos^2\phi} }{b^2}\le \tfrac{b+\F}{b^2}<\tfrac{b+\F_0}{b^2}.
\end{align}
\end{proof}

\begin{lemma}\label{lemma_estim_W}
Let $\delta>0$, $\F_0\in(0,\infty)$ and $z\in \C\setminus [-\F,\F]$ be parametrized by $z=a \cos\phi+i b \sin\phi $ with $a^2=b^2+\F^2$ and $\phi\in(-\pi,\pi]$. Then for all $0<\F<\F_0$ 
\begin{align}
\vert \W{\F}(z) \vert^2 \le 2\left(1+\sqrt{1+\tfrac{\F_0^2}{b^2} } \right),
\end{align}
\end{lemma}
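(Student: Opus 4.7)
The plan is to invoke Lemma \ref{lemma_W}, which already gives the explicit formula
$$\vert \W{\F}(z) \vert^2 = \frac{2(a+b)}{\sqrt{b^2+\F^2 \sin^2\phi}},$$
and then bound the right-hand side by two elementary inequalities. First I would bound the denominator from below by dropping the nonnegative $\F^2 \sin^2 \phi$ term, i.e.\ $\sqrt{b^2+\F^2 \sin^2\phi} \ge b$, which uses only $b>0$. This yields
$$\vert \W{\F}(z) \vert^2 \le \frac{2(a+b)}{b} = 2\left(1+\frac{a}{b}\right).$$

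Next I would substitute the relation $a=\sqrt{b^2+\F^2}$ from the parametrization, giving $a/b = \sqrt{1+\F^2/b^2}$, so
$$\vert \W{\F}(z) \vert^2 \le 2\left(1+\sqrt{1+\tfrac{\F^2}{b^2}}\right).$$
Finally I would use the hypothesis $\F<\F_0$, which gives $\F^2/b^2 < \F_0^2/b^2$ and hence the required bound.

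There is no real obstacle here; the computation is an immediate three-line consequence of Lemma \ref{lemma_W}. The only point worth emphasizing in writing up is that although $b$ depends on both $z$ and $\F$ through the confocal-ellipse parametrization of Section \ref{sub_elliptic}, the argument is purely pointwise and does not require any uniformity in $b$, so the monotonicity of $\sqrt{1+\F^2/b^2}$ in $\F$ at fixed $b$ is all that is needed to pass from $\F$ to $\F_0$. The parameter $\delta$ does not appear in the bound and plays no role in the proof; it is listed in the hypothesis only to match the conventions of the surrounding results.
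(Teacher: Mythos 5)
Your proof is correct and follows exactly the same route as the paper: invoke Lemma \ref{lemma_W} for the explicit formula, drop the $\F^2\sin^2\phi$ term in the denominator, substitute $a=\sqrt{b^2+\F^2}$, and use monotonicity in $\F$. Your remark that $\delta$ is vestigial in the statement is accurate.
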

\begin{proof}
According to Lemma \ref{lemma_W} we know that $\vert \W{\F}(z)\vert^2= \frac{2(a+b)}{\sqrt{b^2+\F^2 \sin^2\phi}}$. Therefore 
\begin{align}
\vert \W{\F}(z)\vert^2&\le \frac{2(a+b)}{b}=2\left(1+\sqrt{1+\tfrac{\F^2}{b^2}} \right)<2\left(1+\sqrt{1+\tfrac{\F_0^2}{b^2}} \right).
\end{align}
\end{proof}

\begin{lemma}\label{lemma_estim_U2}
Let $\delta_1>0$ and $z\in \C\setminus [-\F,\F]$ be parametrized by $z=a \cos\phi+i b \sin\phi $ with $b>0$, $a=\sqrt{b^2+\F^2}>b$ and $\phi\in(-\pi,\pi]$. Then for all $\F\in(0,\infty)$
\begin{center}
\begin{minipage}{90mm}
\begin{enumerate}[\normalfont (i)]
\item 
%\begin{align}
$\displaystyle\quad \vert \Re\, \U{\F}(z)\vert > \tfrac{\delta_1}{\F}\left(1-\tfrac{b}{a}\right), \qquad \forall z\in \C_{x,\delta_1}$,
%\end{align}
\item
%\begin{align}
$\displaystyle\quad \vert \Im\, \U{\F}(z)\vert > \tfrac{\delta_1}{\F}(\tfrac{a}{b}-1), \qquad \forall z\in \C_{y,\delta_1}$.
%\end{align}
\end{enumerate}
\end{minipage}
\end{center}
\end{lemma}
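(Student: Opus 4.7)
The plan is to read off both estimates directly from the elliptic-coordinate representation of $\U{\F}$ given in Proposition \ref{prop_U}. Recall that under the parametrization $z = a\cos\phi + ib\sin\phi$ with $a^2 = b^2 + \F^2$, that proposition asserts
\[
\U{\F}(z) = r_\F(b)\bigl(\cos\phi - i\sin\phi\bigr), \qquad r_\F(b) = \tfrac{a-b}{\F},
\]
so in particular $\Re\,\U{\F}(z) = \tfrac{a-b}{\F}\cos\phi$ and $\Im\,\U{\F}(z) = -\tfrac{a-b}{\F}\sin\phi$. Everything thus reduces to lower bounds on $|\cos\phi|$ and $|\sin\phi|$ coming from the hypotheses $|\Re\,z| > \delta_1$ and $|\Im\,z| > \delta_1$ respectively.

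For claim (i), I would argue that $z \in \C_{x,\delta_1}$ means $|a\cos\phi| = |\Re\,z| > \delta_1$, hence $|\cos\phi| > \delta_1/a$. Combined with the formula above this yields
\[
|\Re\,\U{\F}(z)| = \tfrac{a-b}{\F}|\cos\phi| > \tfrac{a-b}{\F}\cdot\tfrac{\delta_1}{a} = \tfrac{\delta_1}{\F}\Bigl(1-\tfrac{b}{a}\Bigr),
\]
which is exactly the asserted bound. For claim (ii), I would argue analogously: $z \in \C_{y,\delta_1}$ gives $|b\sin\phi| = |\Im\,z| > \delta_1$, so $|\sin\phi| > \delta_1/b$ (note $b>0$ by assumption), and then
\[
|\Im\,\U{\F}(z)| = \tfrac{a-b}{\F}|\sin\phi| > \tfrac{a-b}{\F}\cdot\tfrac{\delta_1}{b} = \tfrac{\delta_1}{\F}\Bigl(\tfrac{a}{b}-1\Bigr).
\]

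There is essentially no obstacle here: the whole content of the lemma is geometric and is already encoded in Proposition \ref{prop_U}. The only small point worth mentioning in the write-up is that the parametrization $(a,b,\phi)$ is uniquely determined by $z \in \C\setminus[-\F,\F]$ (as discussed in Section \ref{sub_elliptic}), so the hypotheses $z \in \C_{x,\delta_1}$ or $z \in \C_{y,\delta_1}$ do translate unambiguously to the stated lower bounds on $|\cos\phi|$ or $|\sin\phi|$.
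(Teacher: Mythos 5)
Your proof is correct and follows the same route as the paper: both invoke Proposition \ref{prop_U} to write $\Re\,\U{\F}(z)=\tfrac{a-b}{\F}\cos\phi$, $\Im\,\U{\F}(z)=-\tfrac{a-b}{\F}\sin\phi$, and then convert $|\Re\,z|>\delta_1$ (resp.\ $|\Im\,z|>\delta_1$) into $|\cos\phi|>\delta_1/a$ (resp.\ $|\sin\phi|>\delta_1/b$). Nothing is missing.
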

\begin{proof}
According to Proposition \ref{prop_U}, $\Re\, \U{\F}(z)=\tfrac{a-b}{\F}\cos\phi$ and $\Im\, \U{\F}(z)=-\tfrac{a-b}{\F}\sin\phi$. $\vert \Re\, z\vert>\delta_1$ is equivalent to $a \vert \cos \phi\vert>\delta_1$ and $\vert \Im\, z\vert>\delta_1$ to $b \vert \sin \phi\vert>\delta_1$.
From this follows that
\begin{align}
\vert \Re\, \U{\F}(z)\vert&>\tfrac{\delta_1}{a}\tfrac{a-b}{\F}=\tfrac{\delta_1}{\F}(1-\tfrac{b}{a}), \qquad \forall z\in \C_{x,\delta_1},
\shortintertext{and}
\vert \Im\, \U{\F}(z)\vert&>\tfrac{\delta_1}{b}\tfrac{a-b}{\F}=\tfrac{\delta_1}{\F}(\tfrac{a}{b}-1), \qquad \forall z\in \C_{y,\delta_1}.
\end{align}
\end{proof}

\subsection{Taylor Polynomials}
\begin{lemma}\label{lemma_R2_R3_new}
Let $\n_0\in \N$. Then there exists functions $\xi_2$ and $\xi_3$
\begin{align*}
\begin{array}{rcl}
\xi_k: \N\setminus\{1,\ldots,\n_0-1\} & \longrightarrow & \left[0,\tfrac{1}{\n_0}\right]\\
\n & \longmapsto & \xi_k(\n)\le \frac{1}{\n}
\end{array}, \qquad k=2,3,
\end{align*}
so that for all $0<\F<\infty$, $z\in \C\setminus [-\F,\F]$ and $\n>\n_0$
% evtl. mit einer Tabelle versuchen, statt enumerate ? 
\begin{enumerate}[\normalfont (i)]\item \begin{align} \MoveEqLeft \left(\tfrac{\n-1}{\n}\right)^{\n-1} \e^{\frac{\n}{2} \Bigl( \sqrt{1-\frac{1}{\n}} \U{\F}\left(  z \cramped[\scriptscriptstyle]{\left(1-\frac{1}{\n}\right)^{-1/2}} \right) \Bigr)^2 } \left( \sqrt{1-\tfrac{1}{\n}}  \U{\F}\left(  z \left(1-\tfrac{1}{\n}\right)^{-1/2} \right) \right)^{-(\n-1)}\nn\\
&=\e^{\frac{\n}{2}\left( \U{\F}(z) \right)^2} \left( \U{\F}(z) \right)^{-(\n-1)} \e^{\Rtwo}, \label{eq_f2}
%\end{align}
\shortintertext{with} 
%\begin{align}
\MoveEqLeft \Rtwo=-\tfrac{\F\U{\F}\left(\frac{z}{\sqrt{1-\xi_2(\n) } }  \right)  }{4\n\left(1-\xi_2(\n) \right) \T{\F}\left(\frac{z}{\sqrt{1-\xi_2(\n) } }  \right) }, 
\end{align}
%\vspace{0.5cm}

\item \begin{align} \MoveEqLeft  \W{\F}\left(  z \left(1-\tfrac{1}{\n}\right)^{-1/2} \right) =   % \tfrac{ \sqrt{\Fz+\sqrt{1-\frac{1}{\n}}} +\sqrt{\Fz-\sqrt{1-\frac{1}{\n}}} }{(\Fz^2-1+\frac{1}{\n})^{1/4}}=\frac{\sqrt{\Fz+1}+\sqrt{\Fz-1}}{(\Fz^2-1)^{1/4}}
\W{\F}(z) \e^{\Rthree}, \label{eq_f3}
%\end{align}
\shortintertext{with} %\begin{align}
\MoveEqLeft\Rthree=-\tfrac{z\F\U{\F}\left(\frac{z}{\sqrt{1-\xi_3(\n) } }  \right)  }{4\n\left(1-\xi_3(\n) \right)^{3/2} \left(\T{\F}\left(\frac{z}{\sqrt{1-\xi_3(\n) } }  \right)\right)^2  }.
\end{align}
\end{enumerate}
\end{lemma}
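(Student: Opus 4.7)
The plan is to treat both identities as applications of Taylor's theorem to explicit real\nobreakdash-analytic functions of the variable $\alpha = 1/\n$. The cornerstone is that $g(\alpha) := \sqrt{1-\alpha}\,\U{\F}(z/\sqrt{1-\alpha})$ extends holomorphically across $\alpha = 0$: from Definition \ref{def_U} one has the closed form $g(\alpha) = (z - \sqrt{z^{2} - \F^{2}(1-\alpha)})/\F$, so $g(0) = \U{\F}(z)$ and $g'(\alpha) = -\F/(2T(\alpha))$, where $T(\alpha) := \sqrt{z^{2} - \F^{2}(1-\alpha)} = \sqrt{1-\alpha}\,\T{\F}(z/\sqrt{1-\alpha})$. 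These simple expressions for $g, g'$, together with the identity $T(\alpha)^{2} = z^{2}-\F^{2}(1-\alpha)$, will carry most of the algebra below.

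For part (ii), I would define $r(\alpha) := \log \W{\F}(z/\sqrt{1-\alpha}) - \log\W{\F}(z)$, so that $\Rthree = r(1/\n)$ and $r(0) = 0$. The key auxiliary computation is the logarithmic derivative
\[
\frac{\W{\F}'(w)}{\W{\F}(w)} = -\frac{\F\,\U{\F}(w)}{2\,\T{\F}(w)^{2}},
\]
which I will verify by writing $\W{\F}(w) = p^{1/4} + p^{-1/4}$ with $p = (w+\F)/(w-\F)$ and using the simplification $(p^{1/2}-1)/(p^{1/2}+1) = \U{\F}(w)$ that comes from $(\sqrt{w+\F}-\sqrt{w-\F})^{2} = 2\F\,\U{\F}(w)$, together with $\T{\F}(w)^{2} = (w-\F)(w+\F)$. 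A direct application of the mean value theorem then gives $\Rthree = r'(\xi_{3})/\n$ for some $\xi_{3} \in (0,1/\n)$, which unfolds via the chain rule applied to $w = z/\sqrt{1-\alpha}$ into the stated expression.

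Part (i) is more delicate. Taking logarithms of both sides of (i) one sees that $\Rtwo = \psi(1/\n)$, where $\psi(\alpha) = S(\alpha)/\alpha$ with
\[
S(\alpha) = (1-\alpha)\log(1-\alpha) + \tfrac{1}{2}\bigl(g(\alpha)^{2}-\U{\F}(z)^{2}\bigr) - (1-\alpha)\log\bigl(g(\alpha)/\U{\F}(z)\bigr).
\]
I will first verify $S(0) = S'(0) = 0$; this hinges on the ellipse identity $1-\U{\F}(z)^{2} = 2\T{\F}(z)\U{\F}(z)/\F$, which produces the cancellations needed to see that $S$ vanishes to second order at $\alpha = 0$. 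Taylor's formula with integral remainder then yields $S(\alpha) = \int_{0}^{\alpha}(\alpha-s)\,S''(s)\,\dd s$, and since $\alpha - s \ge 0$ the first mean value theorem for integrals furnishes $\xi_{2} \in (0,\alpha)$ with $S(\alpha) = (\alpha^{2}/2)\, S''(\xi_{2})$. Setting $\alpha = 1/\n$ gives $\Rtwo = S''(\xi_{2})/(2\n)$.

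The main obstacle will be to compute $S''(\alpha)$ in closed form and to recognise that it simplifies. Substituting $g' = -\F/(2T)$, $g'' = \F^{3}/(4T^{3})$, and then repeatedly using the relations $(1-\alpha)/g = 2z/\F - g$, $z - \F g = T$, and $\F^{2}(1-\alpha) = z^{2} - T^{2} = (z-T)(z+T)$ should collapse the several apparently unrelated terms in $S''(\alpha)$ first to $-\F^{2}/\bigl(2T(\alpha)(z+T(\alpha))\bigr)$ and then, via $z-T = \F g$, to the compact form $-\F g(\alpha)/\bigl(2(1-\alpha)\,T(\alpha)\bigr)$. Unwinding $g(\xi_{2})/T(\xi_{2}) = \U{\F}(z/\sqrt{1-\xi_{2}})/\T{\F}(z/\sqrt{1-\xi_{2}})$ then produces the claimed formula for $\Rtwo$, the $\sqrt{1-\xi_{2}}$ factors in $g$ and $T$ cancelling so that $(1-\xi_{2})$ rather than $(1-\xi_{2})^{3/2}$ appears in the denominator — a good sanity check that the bookkeeping is right.
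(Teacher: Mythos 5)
Your proposal is correct and follows essentially the same route as the paper: both treat the logarithm of each side as a smooth function of the real variable $\alpha = 1/\n$ and identify the error terms as Lagrange\nobreakdash-form Taylor remainders of orders two and one respectively, reducing to the same second-derivative formula $\ftwo''(\alpha) = -\tfrac{\F\,\U{\F}(z/\sqrt{1-\alpha})}{2(1-\alpha)\,\T{\F}(z/\sqrt{1-\alpha})}$ for part~(i) and the same logarithmic derivative of $\W{\F}$ for part~(ii). Your closed form $g(\alpha)=(z-\sqrt{z^2-\F^2(1-\alpha)})/\F$ and the centered remainder $S(\alpha) = \ftwo(\alpha)-\ftwo(0)-\alpha\ftwo'(0)$ are a cosmetically tidier packaging of the paper's computation, and the integral-remainder/mean-value step is interchangeable with the paper's direct appeal to the Lagrange form.
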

\begin{remark}
Note that $\xi_2$ and $\xi_3$ also depend on $z$ and $\t$. % Kontrollieren / testen (z.B. Nummerisch für einfaches Beispiel)
\end{remark}
\begin{proof}
We set $\alpha=\tfrac{1}{\n}<1$ and define the left-hand side of \eqref{eq_f2} and \eqref{eq_f3} as $\e^{\n\ftwo(\alpha)}$ and $\e^{\fthree{}(\alpha)}$, i.e.\  
\begin{align}\label{eq_error_f2_new}
\ftwo(\alpha)&=(1-\alpha)\log(1-\alpha)+\tfrac{1}{2} \left( \sqrt{1-\alpha}\U{\F}\bigl(\tfrac{z}{\sqrt{1-\alpha}} \bigr)  \right)^2   -(1-\alpha)\log\left( \sqrt{1-\alpha}\U{\F}\bigl(\tfrac{z}{\sqrt{1-\alpha}} \bigr)  \right),
\end{align}
and
%\intertext{and}
\begin{align}
\fthree{}(\alpha)&=\log\left( \W{\F}\bigl(\tfrac{z}{\sqrt{1-\alpha}} \bigr) \right). \label{eq_error_f3_new}
%=\log \left( \left(\tfrac{\Fz+\sqrt{1-\alpha} }{\Fz- \sqrt{1-\alpha}} \right)^{1/4}+ \left(\tfrac{\Fz-\sqrt{1-\alpha} }{\Fz+ \sqrt{1-\alpha}} \right)^{1/4}\right). \label{eq_error_f3_new}
\end{align}
According to Proposition \ref{prop_U} we see that $\sqrt{1-\alpha} \U{\F}\left(z (1-\alpha)^{-1/2} \right)\neq 0$ and according to Lemma \ref{lemma_W} that $ \W{\F}\left(z (1-\alpha)^{-1/2} \right)\neq 0$ for all $z\in \C\setminus [-\F,\F]$ and $\alpha\in [0,1)$. Therefore it is easy to see that $\ftwo,\fthree{}\in C^{\infty}\left([0,1)\right)$ for all $z\in \C\setminus [-\F,\F]$. For our need it is sufficient that we can continuously differentiate $\ftwo$ twice and $\fthree{}$ once regarding $\alpha$. 
%With the help of the functions $\U{\F}$ and $\W{\F}$ we can write them also as 
%\begin{align}
%\ftwo(\alpha)&=(1-\alpha)\log(1-\alpha)+\tfrac{(1-\alpha)\left(\U{\F\sqrt{1-\alpha}}(z) \right)^2}{2}-(1-\alpha)\log\left(\sqrt{1-\alpha}\U{\F\sqrt{1-\alpha}}(z) \right),\\
%\intertext{and}
%\fthree{}(\alpha)&=.
%\end{align}
We will use Taylor's theorem to expand $\ftwo$ and $\fthree{}$ at $\alpha=0$ and we will write the remainder in Lagrange form.

For $\ftwo$ we need the Taylor polynomial up to order one. We compute
\begin{align}
\ftwo'(\alpha)&=-\log(1-\alpha)+\log\left( \sqrt{1-\alpha}\U{\F}\bigl(\tfrac{z}{\sqrt{1-\alpha}} \bigr) \right),\\
\ftwo''(\alpha)&= -\frac{ \F \U{\F}\bigl(\tfrac{z}{\sqrt{1-\alpha}} \bigr) }{2(1-\alpha) \T{\F}\bigl(\tfrac{z}{\sqrt{1-\alpha}} \bigr) }.
\end{align}
If $\n>\n_0$ then by Taylor's theorem there exists $\xi_2(\n)\in [0,\tfrac{1}{\n}]$ so that
\begin{align}
\ftwo(\tfrac{1}{\n})&=\tfrac{\left(\U{\F}(z)\right)^2}{2}-\log\left(\U{\F}(z) \right)+\tfrac{1}{\n} \log\left(\U{\F}(z) \right) %-\tfrac{1}{\n^2} \tfrac{\Fz-\sqrt{\Fz^2-1+\xi_2(\n)}}{4(1-\xi_2(\n))\sqrt{\Fz^2-1+\xi_2(\n)}},\\
-\tfrac{1}{\n^2} \tfrac{\F\U{\F}\left(\frac{z}{\sqrt{1-\xi_2(\n) } }  \right)  }{4\left(1-\xi_2(\n) \right) \T{\F}\left(\frac{z}{\sqrt{1-\xi_2(\n) } }  \right) },   
\shortintertext{and}
\e^{\n \ftwo\left(\tfrac{1}{\n}\right)}&=\e^{\frac{\n}{2} \left( \U{\F}(z) \right)^2} \left( \U{\F}(z) \right)^{-\n} \U{\F}(z) \e^{\Rtwo },
\end{align}
where we have defined the error term
\begin{align}
\Rtwo %&= -\tfrac{\Fz-\sqrt{\Fz^2-1+\xi_2(\n)}}{4 \n (1-\xi_2(\n))\sqrt{\Fz^2-1+\xi_2(\n)}}.
&=-\tfrac{\F\U{\F}\left(\frac{z}{\sqrt{1-\xi_2(\n) } }  \right)  }{4\n\left(1-\xi_2(\n) \right) \T{\F}\left(\frac{z}{\sqrt{1-\xi_2(\n) } }  \right) }.
\end{align}

The Taylor polynomial of $\fthree{}$ we only need up to order zero.
\begin{align}
\fthree{}'(\alpha)%&=\tfrac{\Fz-\sqrt{\Fz^2-1+\alpha} }{4 (1-\alpha) \sqrt{\Fz^2-1+\alpha}} - \tfrac{1}{4(\Fz^2-1+\alpha )}= -\tfrac{\Fz(\Fz-\sqrt{\Fz^2-1+\alpha}) }{4 (1-\alpha) (\Fz^2-1+\alpha)}.\\
&=\frac{\F \U{\F}\bigl(\tfrac{z}{\sqrt{1-\alpha}} \bigr) }{4 (1-\alpha)\T{\F}\bigl(\tfrac{z}{\sqrt{1-\alpha}} \bigr)} -\tfrac{\F^2}{  4 (1-\alpha)\left(\T{\F}\bigl(\tfrac{z}{\sqrt{1-\alpha}} \bigr)\right)^2} = -\tfrac{z \F \U{\F}\bigl(\tfrac{z}{\sqrt{1-\alpha}} \bigr) }{4 (1-\alpha)^{3/2}\left(\T{\F}\bigl(\tfrac{z}{\sqrt{1-\alpha}} \bigr) \right)^2 }.
\end{align}
Therefore for $\n>\n_0$ there exists $\xi_3(\n)\in [0,\tfrac{1}{\n}]$ so that
\begin{align}
\e^{\fthree{}\left(\frac{1}{\n}\right)}&=\W{\F}(z) \e^{\Rthree}
\end{align}
where the error term is
\begin{align}
\Rthree%&=-\tfrac{\Fz(\Fz-\sqrt{\Fz^2-1+\xi_3(\n) }) }{4\n (1-\xi_3(\n)) (\Fz^2-1+\xi_3(\n))}.
&=-\tfrac{z\F\U{\F}\left(\frac{z}{\sqrt{1-\xi_3(\n) } }  \right)  }{4\n\left(1-\xi_3(\n) \right)^{3/2} \left(\T{\F}\left(\frac{z}{\sqrt{1-\xi_3(\n) } }  \right)\right)^2  }.
\end{align}
\end{proof}
\begin{remark}
Because of (skew)-symmetries of the left-hand side of \eqref{eq_f2} and \eqref{eq_f3} it is easy to see that $\xi_2$ and $\xi_3$ must have the symmetries $\xi_k(\n,-z)=\xi_k(\n,z)$ and $\xi_k(\n,\cc{z})=\xi_k(\n,z)$ for $k=2,3$.
\end{remark}

\begin{lemma}\label{lemma_R4_R5_R6}
Let $\delta>0$ and $\n_0\in \N$. % and $z=z_0+v$. 
Then there exists functions $\xi_4$, $\xi_5$ and $\xi_6$
\begin{align*}
\begin{array}{rcl}
\xi_k: \cc{B}_{\delta}(0) & \longrightarrow & \cc{B}_{\delta}(0) \\
v & \longmapsto & \xi_k(v)\in \cc{B}_{\vert v\vert}(0)
\end{array}, \qquad k=4,5,6
\end{align*}
so that for all $0<\F<\infty$, $z_0\in \Adelta$, $v\in B_{\delta}(0)$, $\xi_2,\xi_3\in\left[0,\tfrac{1}{2}\right] $ and $\n>\n_0$ 
\begin{enumerate}[\normalfont (i)]\item \begin{align}  
\MoveEqLeft\e^{\frac{\left(\U{\F}(z_0+v)\right)^2}{2}} \tfrac{1}{\U{\F}(z_0+v)}=\e^{\frac{\left(\U{\F}(z_0)\right)^2}{2}} \tfrac{1}{\U{\F}(z_0)} \e^{\frac{2v \U{\F}(z_0)}{\F} -\frac{v^2 \U{\F}(z_0)}{\F \T{\F}(z_0)} +  \Rfour}, \label{eq_f4}
%\end{align}
\shortintertext{with}
%\begin{align}
\MoveEqLeft \Rfour=\frac{v^3}{3 \bigl(\T{\F}\left(z_0+\xi_4(v) \right)\bigr)^3 } ,
\end{align} %Alles überprüfen mit Mathematica, ob der Vorzeichenwechsel in U_F auch in den Ableitungen richtig funktionniert
%\vspace{0.25cm}

\item \begin{align} \MoveEqLeft \W{F}(z_0+v)=  \W{\F}(z_0)\e^{ \Rfive} \label{eq_f5}, 
%\end{align}
\shortintertext{with}
% \begin{align}
\MoveEqLeft \Rfive=-\frac{v \F \U{\F}(z_0+\xi_5(v))}{2 \bigl(\T{\F}(z_0+\xi_5(v))\bigr)^2 } ,
\end{align}
%\vspace{0.25cm}

\item \begin{align} \MoveEqLeft \U{\F}(z_0+v)\e^{\Rtwoz{z_0+v}}\e^{\Rthreez{z_0+v}} \nn \\ 
&\qquad =  \U{\F}(z_0)   \e^{\Rtwoz{z_0} + \Rthreez{z_0} + \Rsix}, \label{eq_f6}
%\end{align}
\shortintertext{with} % \begin{align}
\Rsix=& -\frac{v}{\T{\F}(z_0+\xi_6(v)) } +\frac{v \F^2}{4\n\left(1-\xi_2(\n) \right)^{3/2}\left(\T{\F}\Bigl(\frac{z_0+\xi_6(v)}{\sqrt{1-\xi_2(\n) } } \Bigr)\right)^3 } \nn \\
&  + \frac{v\F^2(z_0+\xi_6(v))^2 - v ( 1-\xi_3(\n) )\F^4  \left(\U{\F}\Bigl(\frac{z_0+\xi_6(v)}{\sqrt{1-\xi_3(\n) } }  \Bigr) \right)^2  }{4\n\left(1-\xi_3(\n) \right)^{5/2} \left(\T{\F}\Bigl(\frac{z_0+\xi_6(v)}{\sqrt{1-\xi_3(\n) } }  \Bigr)\right)^5  }, 
\end{align}
and $\Rtwotz{}$ and $\Rthreetz{}$ as in Lemma \ref{lemma_R2_R3_new}.
\end{enumerate}
\end{lemma}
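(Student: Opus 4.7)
The plan is to recognize each of (i)--(iii) as a Taylor expansion with Lagrange remainder of a suitable holomorphic function $f_k(v)$, applied via $g_k(t):=f_k(tv)$ on the real interval $t\in[0,1]$. The first step is to check the domain: for $z_0\in\Adelta$ and $v\in B_\delta(0)$ the whole segment $\{z_0+tv:t\in[0,1]\}$ lies in $\C\setminus[-\F,\F]$, because $\vert\Im z_0\vert\ge\delta$ gives $\vert\Im(z_0+tv)\vert\ge\delta-\vert v\vert>0$, while $\vert\Re z_0\vert\ge\F+\delta$ gives $\vert\Re(z_0+tv)\vert>\F$; the same argument shows that the rescaled points $(z_0+tv)/\sqrt{1-\xi_k(\n)}$ for $\xi_2(\n),\xi_3(\n)\in[0,\tfrac{1}{2}]$ also avoid $[-\F,\F]$, so by Lemma~\ref{lemma_holomorh_U_W} each of $\U{\F}$, $\T{\F}$, $\W{\F}$ is holomorphic along the segment and its rescalings.

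Throughout I would use the elementary identities $\F\U{\F}(z)=z-\T{\F}(z)$, $\T{\F}(z)^2=z^2-\F^2$ and $z+\T{\F}(z)=\F/\U{\F}(z)$, combined with $\partial_z\U{\F}=-\U{\F}/\T{\F}$ and $\partial_z\T{\F}=z/\T{\F}$. For (i) I set $f_4(v):=\tfrac{1}{2}(\U{\F}(z_0+v))^2-\log\U{\F}(z_0+v)$, so the left-hand side of \eqref{eq_f4} equals $\e^{f_4(v)}$; a short calculation using those identities reduces the derivatives to
\begin{align*}
f_4'(v)=\tfrac{2\U{\F}(z_0+v)}{\F},\qquad f_4''(v)=-\tfrac{2\U{\F}(z_0+v)}{\F\,\T{\F}(z_0+v)},\qquad f_4'''(v)=\tfrac{2}{\T{\F}(z_0+v)^3},
\end{align*}
and Taylor's theorem with Lagrange remainder applied to $g_4$ on $[0,1]$ yields \eqref{eq_f4} with $\xi_4(v)=t_4^\star v$ for some $t_4^\star\in[0,1]$, hence $\xi_4(v)\in\cc{B}_{\vert v\vert}(0)$. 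For (ii) the key step is the algebraic reduction $\W{\F}(z)^2=2\F/(\U{\F}(z)\T{\F}(z))$, which follows from $\W{\F}^2=2(z+\T{\F})/\T{\F}$ together with $z+\T{\F}=\F/\U{\F}$. Taking logarithms and differentiating gives $f_5'(v)=-\F\,\U{\F}(z_0+v)/(2\T{\F}(z_0+v)^2)$, and the zeroth-order Taylor expansion immediately delivers \eqref{eq_f5}.

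The main obstacle is (iii), because $\Rtwoz{z_0+v}$ and $\Rthreez{z_0+v}$ are rational in $\U{\F}$ and $\T{\F}$ at the rescaled arguments $w_k=(z_0+v)/\sqrt{1-\xi_k(\n)}$, so the chain rule produces several pieces that must be reassembled into the stated form of $\Rsix$. I would differentiate $f_6(v):=\log\U{\F}(z_0+v)+\Rtwoz{z_0+v}+\Rthreez{z_0+v}$ term by term; the auxiliary identities $\partial_w(\U{\F}/\T{\F})=-\F/\T{\F}^3$ and $\partial_w(w\U{\F}/\T{\F}^2)=\F/\T{\F}^3-2w\F/\T{\F}^4$ (both immediate from the relations above) give
\begin{align*}
f_6'(v)=-\tfrac{1}{\T{\F}(z_0+v)}+\tfrac{\F^2}{4\n(1-\xi_2)^{3/2}\T{\F}(w_2)^3}-\tfrac{\F^2}{4\n(1-\xi_3)^{3/2}\T{\F}(w_3)^3}+\tfrac{2(z_0+v)\F^2}{4\n(1-\xi_3)^{2}\T{\F}(w_3)^4}.
\end{align*}
Combining the two $(1-\xi_3)$-terms over the common denominator $4\n(1-\xi_3)^{5/2}\T{\F}(w_3)^5$ and substituting $(z_0+v)^2=(1-\xi_3)w_3^2$ and $\F^2\U{\F}(w_3)^2=(w_3-\T{\F}(w_3))^2$, the simplification $(w_3-\T{\F})(w_3^2+\F^2+w_3\T{\F})=\F^2(2w_3-\T{\F})$ (just $\T{\F}^2=w_3^2-\F^2$ in disguise) recovers exactly the numerator $\F^2(z_0+v)^2-(1-\xi_3)\F^4\U{\F}(w_3)^2$ that appears in the third summand of $\Rsix$. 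The Lagrange remainder $f_6(v)-f_6(0)=v\,f_6'(\xi_6(v))$ with $\xi_6(v)=t_6^\star v\in\cc{B}_{\vert v\vert}(0)$ then delivers \eqref{eq_f6}; this bookkeeping is the only genuinely non-routine part of the proof.
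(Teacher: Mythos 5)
Your argument is essentially identical to the paper's: you define the same auxiliary functions, compute the same derivatives (I checked the key identities $\partial_w\bigl(\U{\F}/\T{\F}\bigr)=-\F/\T{\F}^3$ and $\partial_w\bigl(w\,\U{\F}/\T{\F}^2\bigr)=\F/\T{\F}^3-2w\F/\T{\F}^4$, together with the reduction $\F^2 w^2 - \F^4\left(\U{\F}(w)\right)^2 = 2w\F^2\T{\F}(w) - \F^2\left(\T{\F}(w)\right)^2$ that reassembles the two $\xi_3$-terms into the stated numerator of $\Rsix$), and then invoke Taylor's theorem with a Lagrange-form remainder on the disk. The one shared informality is that the Lagrange form of the remainder does not hold verbatim for complex-valued functions --- your restriction $g_k(t)=f_k(tv)$ to a real segment yields only the integral form, not a pointwise evaluation at some $t^\star_k v$ --- but since Lemma \ref{lemma_estim_R4_R6} only uses a supremum of the derivative over the closed disk $\cc{B}_{\vert v\vert}(0)$, this is exactly the paper's own level of rigor in its cited ``complex version of Taylor's theorem.''
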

\begin{remark}
Note that $\xi_4$ and $\xi_5$ depend also on $z_0$ and $\t$. $\xi_6$ additionally depends on $\n, \xi_2, \xi_3$. %Kontrollieren / testen, nochmals anschauen, ob es keine zusätzlichen Variablen gibt?
\end{remark}
\begin{proof}
We define the left hand side of \eqref{eq_f4}, \eqref{eq_f5} and \eqref{eq_f6} as $\e^{\ffour(v)}$, $\e^{\ffive(v)}$ and $\e^{\fsix(v)}$ respectively, i.e.\  
\begin{align}
\ffour(v)&=\tfrac{\left(\U{\F}(z_0+v)\right)^2}{2}-\log\left(\U{\F}(z_0+v)\right),\\
\ffive(v)&=\log\left(\W{F}(z_0+v) \right) ,\\
\shortintertext{and}
\fsix(v)& = \log\left( \U{\F}(z_0+v)\right) +  \Rtwoz{z_0+v} + \Rthreez{z_0+v} \nn \\
&=\log\left(\U{\F}(z_0+v)\right) \nn \\ 
&\phantom{=} -\tfrac{\F\U{\F}\left(\frac{z_0+v}{\sqrt{1-\xi_2(\n) } }  \right)  }{4\n\left(1-\xi_2(\n) \right) \T{\F}\left(\frac{z_0+v}{\sqrt{1-\xi_2(\n) } }  \right) }   -\tfrac{(z_0+v)\F\U{\F}\left(\frac{z_0+v}{\sqrt{1-\xi_3(\n) } }  \right)  }{4\n\left(1-\xi_3(\n) \right)^{3/2} \left(\T{\F}\left(\frac{z_0+v}{\sqrt{1-\xi_3(\n) } }  \right)\right)^2  } .
\end{align}
Obviously under the map $(z_0,v)\mapsto (-z_0,-v)$, $\e^{\ffive(v)}$ is symmetric whereas $\e^{\ffour(v)}$ and $\e^{\fsix(v)}$ are skew symmetric. Therefore it is sufficient to consider only $z_0\in \Cp$ but all $v\in B_{\delta}(0)$. For $z_0\notin \Cp$ we can get the relations by $\e^{\ffourz{-z_0}(v)}=-\e^{\ffourz{z_0}(-v)}$, $\e^{\ffivez{-z_0}(v)}=\e^{\ffivez{z_0}(-v)}$ and $\e^{\fsixz{-z_0}(v)}=-\e^{\fsixz{z_0}(-v)}$.

We choose $\F\in (0,\infty)$ arbitrarily. $\{z_0+v\in \C \vert z_0\in \Adelta, v\in B_\delta\}=\C\setminus[-\F,\F]$. If $z_0\in \Adelta\cap\Cp$ and $v\in B_\delta$, then $z_0+v \notin (-\infty,\F]$ and so we see that according to Proposition \ref{prop_U}, $U(z_0+v)\notin (-\infty,0]$. So we know that not only $\U{\F}(z_0+v)$ (according to Lemma \ref{lemma_holomorh_U_W}) but also $\log\left(\U{\F}(z_0+v) \right)$ are holomorphic as functions of $v\in B_\delta(0)$ for all $z_0\in\Adelta\cap\Cp$. We see, also from Lemma \ref{lemma_holomorh_U_W}, that $\W{\F}(z_0+v)$ is holomorphic as function of $v\in B_\delta(0)$ for all $z_0\in\Adelta$ and from Lemma \ref{lemma_W} that $\W{\F}(z_0+v)\notin(-\infty,0]$. Therefore also $\log\left(\W{\F}(z_0+v)\right)$ is a holomorphic function of $v$ for $v\in B_\delta(0)$ and $z_0\in\Adelta$. Again by Lemma \ref{lemma_holomorh_U_W} we know that $\T{\F}(z_0+v)$ too is a holomorphic function. $\T{\F}(z_0+v)\neq 0$ for all $v\in B_\delta(0)$ and $z_0\in\Adelta$. Further $\tfrac{z_0+v}{\sqrt{1-\xi}}\notin [-\F,\F]$ for all $\xi\in [0,1)$. Therefore $\ffour$, $\ffive$ and $\fsix$ are holomorphic on $B_\delta(0)$ for all $z_0\in\Adelta\cap \Cp$, $\xi_2,\xi_3\in [0,1)$, $\n\in\N$.

This time we will use the complex version of Taylor's theorem. For $l\in\N_0$ and a holomorphic function $f_{z_0}$ on $B_\delta(0)$ there exists a function $\xi$ so that for all $v\in B_\delta(0)$
\begin{align}
f_{z_0}(v)&=\sum_{k=0}^{l} \frac{f_{z_0}^{(k)}(0)}{k!} v^k + R_{z_0,l}(v),\\
\intertext{where the remainder in Lagrange form is}
R_{z_0,l}(v)&=v^{l+1} \frac{f_{z_0}^{(l+1)}(\xi(v))}{(l+1)!},
\end{align}
with $\xi(v)\in B_{\vert v\vert}(0)$. %For every $v \in B_\delta(0)$ we can even find a suitable $\xi(v)$ on any continuous path from zero to the border of $B_\vert v\vert(0)$. 

For $\ffour$ we will compute the Taylor polynomial up to order two. 
\begin{align}
\ffour'(v)&=\tfrac{2 \U{\F}(z_0+v) }{\F} ,\\
\ffour''(v)&=-\tfrac{2 \U{\F}(z_0+v) }{\F \T{\F}(z_0+v) } ,\\
\ffour'''(v)&=\tfrac{2}{ \left( \T{\F}(z_0+v) \right)^3 }.
\end{align}
If $v\in B_\delta(0)$ then by Taylor's theorem there exists $\xi_4(v)\in B_{\vert v\vert}(0)$ so that
\begin{align}
\ffour(v)&=\tfrac{\left(\U{\F}(z_0)\right)^2}{2}-\log\left(\U{\F}(z_0)\right) +\tfrac{2v \U{\F}(z_0) }{\F} -\tfrac{v^2 \U{\F}(z_0) }{\F \T{\F}(z_0+v)} + \tfrac{v^3}{ 3 \left( \T{\F}(z_0+\xi_4(v)) \right)^3 },\\
\intertext{and we find}
\e^{\ffour(v)}&= \e^{\frac{\left(\U{\F}(z_0)\right)^2}{2}} \tfrac{1}{\U{\F}(z_0)} \e^{\frac{2v \U{\F}(z_0)}{\F} -\frac{v^2 \U{\F}(z_0)}{\F \T{\F}(z_0)} + \Rfour}, 
\end{align}
where we have defined the error term
\begin{align}
\Rfour&=\tfrac{v^3}{3 \left(\T{\F}\left(z_0+\xi_4(v) \right)\right)^3}.
\end{align}

The Taylor polynomial of $\ffive$ we only need up to order zero.
\begin{align}
\ffive'(v)&=-\frac{\F \U{\F}(z_0+v)}{2 \left(\T{\F}(z_0+v)\right)^2 } .\\
\intertext{Thus for $v\in B_\delta(0)$ there exists $\xi_5(v)\in B_{\vert v \vert}(0)$ so that}
\ffive(v)&=\log\left(\W{\F}(z_0)\right)-\frac{v \F \U{\F}(z_0+\xi_5(v))}{2 \left(\T{\F}(z_0+\xi_5(v))\right)^2 }, \\
\intertext{and we get}
\e^{\ffive(v)}&=\W{\F}(z_0)\e^{ \Rfive} , \\
\intertext{where the error term is}
\Rfive&=-\frac{v \F \U{\F}(z_0+\xi_5(v))}{2 \left(\T{\F}(z_0+\xi_5(v))\right)^2 }.
\end{align}

For $\fsix$ we also need the Taylor polynomial up to order zero.
\begin{align}
\fsix'(v)&=-\tfrac{1}{\T{\F}(z_0+v) } +  \tfrac{\F^2}{4\n\left(1-\xi_2(\n) \right)^{3/2}\left(\T{\F}\left(\frac{z_0+v}{\sqrt{1-\xi_2(\n) } } \right)\right)^3 } \nn \\ 
&\phantom{=}+\tfrac{\F^2(z_0+v)^2 - ( 1-\xi_3(\n) )\F^4  \left(\U{\F}\left(\frac{z_0+v}{\sqrt{1-\xi_3(\n) } }  \right) \right)^2  }{4\n\left(1-\xi_3(\n) \right)^{5/2} \left(\T{\F}\left(\frac{z_0+v}{\sqrt{1-\xi_3(\n) } }  \right)\right)^5  }     .\\
\intertext{So for $v\in B_\delta(0)$ there exists $\xi_6(v)\in B_{\vert v \vert}(0)$ so that}
\fsix(v)&=\log\left(\U{\F}(z_0)\right) +  \Rtwoz{z_0} + \Rthreez{z_0} \nn \\
& \phantom{=} -\tfrac{v}{\T{\F}(z_0+\xi_6(v)) } +  \tfrac{v \F^2}{4\n\left(1-\xi_2(\n) \right)^{3/2}\left(\T{\F}\left(\frac{z_0+\xi_6(v)}{\sqrt{1-\xi_2(\n) } } \right)\right)^3 } \nn \\
& \phantom{=} +\tfrac{v \F^2(z_0+\xi_6(v))^2 - v( 1-\xi_3(\n) )\F^4  \left(\U{\F}\left(\frac{z_0+\xi_6(v)}{\sqrt{1-\xi_3(\n) } }  \right) \right)^2  }{4\n\left(1-\xi_3(\n) \right)^{5/2} \left(\T{\F}\left(\frac{z_0+\xi_6(v)}{\sqrt{1-\xi_3(\n) } }  \right)\right)^5  }  ,         
\intertext{and we get}
\e^{\fsix(v)}&= \U{\F}(z_0) \e^{\Rtwoz{z_0} + \Rthreez{z_0} + \Rsix}, 
\end{align}
where we have defined the error term
\begin{align}
\Rsix&= -\tfrac{v}{\T{\F}(z_0+\xi_6(v)) } +\tfrac{v \F^2}{4\n\left(1-\xi_2(\n) \right)^{3/2}\left(\T{\F}\left(\frac{z_0+\xi_6(v)}{\sqrt{1-\xi_2(\n) } } \right)\right)^3 } \nn \\
& \phantom{=} +\tfrac{v\F^2(z_0+\xi_6(v))^2 - v( 1-\xi_3(\n) )\F^4  \left(\U{\F}\left(\frac{z_0+\xi_6(v)}{\sqrt{1-\xi_3(\n) } }  \right) \right)^2  }{4\n\left(1-\xi_3(\n) \right)^{5/2} \left(\T{\F}\left(\frac{z_0+\xi_6(v)}{\sqrt{1-\xi_3(\n) } }  \right)\right)^5  }. 
\end{align}
\end{proof}
\begin{remark}
Because of (skew)-symmetries of the left-hand side of \eqref{eq_f4}, \eqref{eq_f5} and \eqref{eq_f6} under the maps $(z_0,v)\mapsto (-z_0,-v)$ and $(z_0,v)\mapsto (\cc{z}_0,\cc{v})$, $\xi_4$, $\xi_5$ and $\xi_6$ must have the symmetries $\xi_k(-z_0,-v)=-\xi_k(z_0,v)$ and $\xi_k(\cc{z}_0,\cc{v})=\cc{\xi_k(z_0,v)}$ for $k=4,5,6$.
\end{remark}

\subsection{Estimations of Error Terms}
Now we want to find some estimations for the error terms on $\Adelta$. We already have estimated the basic functions where we always used the parametrization $z=a \cos \phi+i b\sin \phi$. It is handy to have an estimation for $b$ when $z\in\Adelta$. 
\begin{lemma}\label{lemma_estim_b}
Let $\delta>0$ and $z\in \C\setminus [-\F,\F]$ be parametrized by $z=a \cos\phi+i b \sin\phi $ with $b>0$, $a=\sqrt{b^2+\F^2}>b$ and $\phi\in(-\pi,\pi]$. Then it follows that $b\ge\delta$ for all $\F\in(0,\infty)$ and $z\in \Adelta$.
\end{lemma}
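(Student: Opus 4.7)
The plan is to carry out a direct case analysis based on the definition of $\Adelta$ and the elliptic parametrization. Recall that $\Adelta = \{z\in\C \mid |\Re z| \ge F+\delta \lor |\Im z| \ge \delta\}$, so any $z \in \Adelta$ falls into (at least) one of two cases, which I will treat separately using $\Re z = a\cos\phi$ and $\Im z = b\sin\phi$.

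First I would handle the easier case $|\Im z| \ge \delta$. Here $b|\sin\phi| \ge \delta$, and since $|\sin\phi| \le 1$, this immediately forces $b \ge \delta$. No further work is required.

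Next I would handle the case $|\Re z| \ge F+\delta$. Then $a|\cos\phi| \ge F+\delta$, and again using $|\cos\phi|\le 1$ we get $a \ge F+\delta$. Squaring the relation $a = \sqrt{b^2+F^2}$ gives $b^2 + F^2 = a^2 \ge (F+\delta)^2 = F^2 + 2F\delta + \delta^2$, and hence $b^2 \ge 2F\delta + \delta^2 \ge \delta^2$, so $b \ge \delta$.

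There is no real obstacle here — the statement is essentially a direct consequence of unpacking the two definitions. The only thing to be mindful of is that the two cases in $\Adelta$ are connected by ``or'' rather than ``and'', so one needs to verify the conclusion separately in each case and note that the bound obtained in the second case is in fact strict (one gets $b^2 \ge 2F\delta + \delta^2 > \delta^2$ since $F > 0$), but this is immaterial for the stated inequality.
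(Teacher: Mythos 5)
Your proof is correct and takes essentially the same approach as the paper's: both split on the two defining conditions of $\Adelta$, using $b|\sin\phi|\ge\delta$ directly in the first case and squaring $a\ge\F+\delta$ in the second. No meaningful difference.
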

\begin{proof}
$z\in \Adelta$ is equivalent to $\vert \Im\, z\vert \ge \delta \lor \vert \Re\, z\vert \ge \F+\delta$.
\begin{enumerate}%[\normalfont (i)]
\item We assume that $\vert \Im\, z\vert \ge \delta$. Then $b \vert\sin \phi\vert \ge \delta$ which implies that $b\ge\delta$. 
\item We assume that $\vert \Re\, z\vert \ge \F+\delta$. Then 
\begin{align*}
a \vert\cos \phi\vert\ge \F+\delta &\Longrightarrow \sqrt{b^2+\F^2}\ge F+\delta \iff b^2+\F^2\ge \F^2+2\F\delta+\delta^2\nn \\
&\Longrightarrow b^2\ge \delta^2.
\end{align*}
\end{enumerate}
\end{proof}

\begin{lemma}\label{lemma_estim_R2_R3}
Let $\epsilon>0$, $\delta>0$ and $0<\delta_0<1$. Then there exists $\n_0=\n_0(\epsilon,\delta,\delta_0)\in\N$ so that for all $\n> \n_0$
%\begin{enumerate}[\normalfont (i)]
%\item
\begin{align}
\sup_{0<\t<1-\delta_0} \sup_{\substack{z\in \Adelta \\ \xi_2(\n)\in [0,\frac{1}{2}] } } \vert \Rtwo \vert <\epsilon, \qquad \text{and}\qquad \sup_{0<\t<1-\delta_0} \sup_{\substack{z\in \Adelta \\ \xi_3(\n)\in [0,\frac{1}{2}] } } \vert \Rthree \vert <\epsilon,
\end{align}
where $\Rtwotz{}$ and $\Rthreetz{}$ are as in Lemma \ref{lemma_R2_R3_new}.
%\end{enumerate}
\end{lemma}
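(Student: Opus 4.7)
The plan is to apply the auxiliary estimations of $\U{\F}$, $\T{\F}$, and the minor semi-axis $b$ already proved in Lemmas \ref{lemma_estim_U_T}, \ref{lemma_estim_T2}, and \ref{lemma_estim_b}. The key preliminary observation is that if $z\in \Adelta$ and $\xi \in [0,\tfrac{1}{2}]$, then $w := z/\sqrt{1-\xi}$ is again in $\Adelta$: dividing by $\sqrt{1-\xi}\in[\sqrt{1/2},1]$ can only enlarge $|\Re z|$ and $|\Im z|$, so both defining conditions of $\Adelta$ are preserved. This lets me transfer all estimates from $z$ to the shifted argument $w$ that actually appears inside $\Rtwotz{}$ and $\Rthreetz{}$. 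I would also fix once and for all a uniform upper bound $\F_0 = 2\sqrt{(1-\delta_0)/(2\delta_0-\delta_0^2)}$ for $\F$, valid for $\t\in(0,1-\delta_0)$.

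For $\Rtwo$, writing $w = z/\sqrt{1-\xi_2(\n)} = a\cos\phi+ib\sin\phi$ with $a=\sqrt{b^2+\F^2}$, Lemma \ref{lemma_estim_b} gives $b\ge\delta$, and Lemma \ref{lemma_estim_U_T} then yields $|\U{\F}(w)|/\F < 1/(2b) \le 1/(2\delta)$ and $1/|\T{\F}(w)| < 1/b \le 1/\delta$. Together with $1-\xi_2(\n)\ge\tfrac{1}{2}$, this gives the purely arithmetic bound
\begin{equation*}
|\Rtwo| \le \frac{\F^2}{8\n(1-\xi_2(\n))\,b^2} \le \frac{\F_0^2}{4\n\,\delta^2},
\end{equation*}
which is less than $\epsilon$ as soon as $\n>\F_0^2/(4\epsilon\delta^2)$, uniformly in $\t\in(0,1-\delta_0)$, $z\in\Adelta$ and $\xi_2(\n)\in[0,\tfrac{1}{2}]$.

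For $\Rthree$ the only new issue is the extra factor $z$ in the numerator, which prevents a crude product bound. I would split
\begin{equation*}
|\Rthree| = \frac{1}{4\n(1-\xi_3(\n))^{3/2}}\cdot\frac{|z|}{|\T{\F}(w)|}\cdot\frac{\F\,|\U{\F}(w)|}{|\T{\F}(w)|},
\end{equation*}
with $w=z/\sqrt{1-\xi_3(\n)}$. Since $|z|\le|w|$ and Lemma \ref{lemma_T} gives $|w|^2/|\T{\F}(w)|^2 = (a^2\cos^2\phi+b^2\sin^2\phi)/(b^2\cos^2\phi+a^2\sin^2\phi)$, which attains its maximum $a^2/b^2=1+\F^2/b^2$, one obtains $|z|/|\T{\F}(w)|\le\sqrt{1+\F_0^2/\delta^2}$. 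The second factor is bounded by $\F^2/(2b^2)\le\F_0^2/(2\delta^2)$ by Lemma \ref{lemma_estim_U_T}. Combining, using $1-\xi_3\ge\tfrac12$, gives
\begin{equation*}
|\Rthree| \le \frac{\sqrt{2}\,\F_0^2\sqrt{1+\F_0^2/\delta^2}}{4\n\,\delta^2},
\end{equation*}
and choosing $\n_0$ accordingly finishes the proof.

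There is essentially no conceptual obstacle here: both error terms are $\lO(1/\n)$ with a constant depending only on $\delta$ and $\delta_0$, and the whole content is the bookkeeping that verifies the scaled argument $w$ still lies in $\Adelta$ (so that Lemma \ref{lemma_estim_b} applies) and that each factor can be controlled uniformly in $\t$ via the universal bound $\F\le\F_0$. The mildest subtlety is the bound on $|w|/|\T{\F}(w)|$ used for $\Rthree$, which requires a one-line monotonicity argument in the elliptic parametrization rather than a direct appeal to Lemma \ref{lemma_estim_T2}.
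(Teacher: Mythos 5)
Your proof is correct and follows the same strategy as the paper: reduce everything to the elliptic minor semi-axis $b$ of the scaled argument $z/\sqrt{1-\xi}$, invoke the preliminary Lemmas on $\U{\F}$, $\T{\F}$ and the lower bound $b\ge\delta$, and conclude a uniform $\lO(1/\n)$ estimate with constants depending only on $\delta,\delta_0$. You make explicit a small point the paper leaves implicit --- namely that $z/\sqrt{1-\xi}\in\Adelta$ whenever $z\in\Adelta$, since division by $\sqrt{1-\xi}\le 1$ only enlarges $|\Re z|$ and $|\Im z|$ --- which is worth stating.

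The two proofs differ only in bookkeeping. For $\Rtwotz{}$ the paper uses the crude bound $|\U{\F}|<1$ from Corollary~\ref{cor_U2}, giving $|\Rtwo|<\F_0/(2\n\delta)$, whereas you use the sharper $|\U{\F}(w)|<\F/(2b)$ from Lemma~\ref{lemma_estim_U_T}, giving $|\Rtwo|\le\F_0^2/(4\n\delta^2)$; either suffices. For $\Rthreetz{}$ the paper appeals directly to Lemma~\ref{lemma_estim_T2} for the ratio $|z|/|\T{\F}|^2$ (tacitly using $|z|\le|w|$ to match arguments), while you split off $|w|/|\T{\F}(w)|$ and prove its maximum is $a/b$ by a short monotonicity computation in the elliptic coordinate $\phi$; both yield a bound $\lO(\F_0^2/(\n\delta^2))$ up to a $\delta$-dependent factor. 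No gap.
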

\begin{proof}
The largest values of $\F$ is $\F_0=\sqrt{\tfrac{4(1-\delta_0)}{2\delta_0-\delta_0^2}}$. %We can write $\Rtwotz{}$ and $\Rthreetz{}$ as
%\begin{align}
%\Rtwo&=-\tfrac{\F\U{\F}\left(\frac{z}{\sqrt{1-\xi_2(\n) } }  \right)  }{4\n\left(1-\xi_2(\n) \right) \T{\F}\left(\frac{z}{\sqrt{1-\xi_2(\n) } }  \right) },   \\
%\intertext{and}
%\Rthree&=-\tfrac{z\F\U{\F}\left(\frac{z}{\sqrt{1-\xi_3(\n) } }  \right)  }{4\n\left(1-\xi_3(\n) \right)^{3/2} \left(\T{\F}\left(\frac{z}{\sqrt{1-\xi_3(\n) } }  \right)\right)^2  }.
%\end{align}
If $z\in \Adelta$ then also $\tfrac{z}{\sqrt{1-\xi}}\in \Adelta$ for all $\xi\in [0,\tfrac{1}{2}]$. We estimate $\vert \U{\F}(z)\vert<1$ according to Corollary \ref{cor_U2}, $1-\xi_2(\n)>\tfrac{1}{2}$ and $\F\le \F_0$. When we use the parametrization $z=a \cos\phi+i b \sin \phi$ with $a^2=b^2+\F^2$, we know from Lemma \ref{lemma_estim_U_T} that $\tfrac{1}{\vert\T{\F}(z)\vert}<\tfrac{1}{b}$. Thus we get
\begin{align}
\vert \Rtwo \vert &< \frac{\F_0}{2 \n b},
\intertext{and analogously with $\tfrac{\vert z\vert}{\vert \T{\F}(z)\vert^2}<\tfrac{b+\F_0}{b^2}$, according to Lemma \ref{lemma_estim_T2}, and since $(1-\xi_3(\n))^{3/2}>\tfrac{1}{4}$, }
\vert \Rthree \vert &< \frac{\F_0}{ \n b}+\frac{\F_0^2}{ \n b^2}.
\end{align}
Both estimations have their maximum when $b$ is minimal. But we know from Lemma \ref{lemma_estim_b} that $b\ge\delta$ when $z\in \Adelta$. Therefore
\begin{align}
\vert \Rtwo \vert &< \frac{\F_0}{2 \n \delta}, \qquad \text{and}\qquad \vert \Rthree \vert < \frac{\F_0}{ \n \delta}+\frac{\F_0^2}{ \n \delta^2}.
\end{align}
Then we see that there exists $\n_0$ so that for all $\n>\n_0$ both estimations are smaller than $\epsilon$, uniformly for all $z\in \Adelta$, $0<\F<\F_0$, $\xi_2,\xi_3\in[0,\tfrac{1}{2}]$.  
\end{proof}

\begin{lemma}\label{lemma_estim_R4_R6}
Let $\epsilon>0$, $A\in(0,\infty)$, $\delta>0$ and $0<\delta_0<1$. Then there exists $\n_0=\n_0(\epsilon,A,\delta,\delta_0)\in\N$ so that for all $\n\ge \n_0$
\begin{enumerate}[\normalfont (i)]
\item
\begin{align}
\sup_{\substack{ 0<\t<1-\delta_0 \\ v\in \cc{B}_{\!\frac{A}{\sqrt{\n}}}(0) }} \sup_{\substack{z_0\in \Adelta \\ \xi_4(v)\in \cc{B}_{\!\frac{A}{\sqrt{\n}}}(0) } } \vert \n \Rfour \vert <\epsilon, 
\end{align}
\item
\begin{align}
\sup_{\substack{ 0<\t<1-\delta_0 \\ v\in \cc{B}_{\!\frac{A}{\sqrt{\n}}}(0) }} \sup_{\substack{z_0\in \Adelta \\ \xi_5(v)\in \cc{B}_{\!\frac{A}{\sqrt{\n}}}(0) } } \vert \Rfive \vert <\epsilon, 
\end{align}
\item
\begin{align}
\sup_{\substack{ 0<\t<1-\delta_0 \\ v\in \cc{B}_{\!\frac{A}{\sqrt{\n}}}(0) }} \sup_{\substack{z_0\in \Adelta \\ \xi_2(\n),\xi_3(\n)\in [0,\frac{1}{2}] \\ \xi_6(v)\in \cc{B}_{\!\frac{A}{\sqrt{\n}}}(0) } } \vert \Rsix \vert <\epsilon, 
\end{align}
\end{enumerate}
where $\Rfourtz{}$, $\Rfive{}$ and $\Rsix{}$ are as in Lemma \ref{lemma_R4_R5_R6}.
\end{lemma}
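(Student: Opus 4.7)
The plan is to imitate the proof of Lemma \ref{lemma_estim_R2_R3}, exploiting the smallness $|v|,|\xi_k(v)| \le A/\sqrt{\n}$ of the Taylor remainder base points together with the uniform bounds on $\U{\F}$ and $\T{\F}$ already derived in the appendix. Throughout, $\F_0 = \sqrt{4(1-\delta_0)/(2\delta_0-\delta_0^2)}$ denotes the largest admissible value of $\F$.

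First I would shrink the admissible region slightly: for $\n_0 > 4A^2/\delta^2$ and any $\n \ge \n_0$, whenever $z_0 \in \Adelta$ and $\xi_k(v) \in \cc{B}_{\!A/\sqrt{\n}}(0)$, the translated point $z_0+\xi_k(v)$ lies in $\Xdelta{A}{\delta/2}{\F}$. Hence, by Lemma \ref{lemma_estim_b}, its elliptic minor semi-axis satisfies $b \ge \delta/2$. A further rescaling by $(1-\xi_2(\n))^{-1/2}$ or $(1-\xi_3(\n))^{-1/2}$ with $\xi_2,\xi_3\in[0,1/2]$ only inflates $|\Re z|$ and $|\Im z|$, so the same lower bound $b \ge \delta/2$ persists for every point at which $\U{\F}$ or $\T{\F}$ is evaluated in the error expressions.

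Then, for $\Rfour$, Lemma \ref{lemma_estim_U_T} yields $|\T{\F}|^{-3} \le (\delta/2)^{-3}$, so $|\n \Rfour| \le 8A^3/(3\delta^3\sqrt{\n})$. For $\Rfive$, Corollary \ref{cor_U2} gives $|\U{\F}|\le 1$ and Lemma \ref{lemma_estim_U_T} gives $|\T{\F}|^{-2}\le(\delta/2)^{-2}$, whence $|\Rfive| \le 2A\F_0/(\delta^2\sqrt{\n})$. For $\Rsix$, the three summands are estimated separately: the leading term $|v/\T{\F}|\le 2A/(\delta\sqrt{\n})$ by Lemma \ref{lemma_estim_U_T}; the second is of order $\n^{-3/2}$ using $(1-\xi_2(\n))^{-3/2}\le 2\sqrt{2}$ together with $|\T{\F}|^{-3}\le(\delta/2)^{-3}$; and for the third term one uses the elementary inequality $|z_0+\xi_6|^2/|\T{\F}|^5 \le (b^2+\F_0^2)/b^5$ (which, being decreasing in $b$, is bounded on $b\ge\delta/2$) together with $\F^4|\U{\F}|^2/|\T{\F}|^5 \le \F_0^4/(\delta/2)^5$, yielding an $\lO(\n^{-1})$ contribution.

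Collecting these estimates, each of $|\n\Rfour|$, $|\Rfive|$, $|\Rsix|$ is dominated by a constant $C=C(A,\delta,\delta_0)$ divided by $\sqrt{\n}$, with no dependence on $\t\in(0,1-\delta_0)$, $z_0\in\Adelta$, $v\in\cc{B}_{\!A/\sqrt{\n}}(0)$ or the admissible values of $\xi_2,\xi_3,\xi_k(v)$. It then suffices to take $\n_0 > \max\{4A^2/\delta^2,\,C^2/\epsilon^2\}$ to conclude the uniform smallness claimed. There is no real obstacle here; the only subtlety is the bookkeeping that ensures each translation by $\xi_k(v)$ and each rescaling by $(1-\xi_k(\n))^{-1/2}$ leaves us in a region on which the $\Xdelta{A}{\delta/2}{\F}$-estimates remain valid, which is achieved by the initial enlargement of $\n_0$.
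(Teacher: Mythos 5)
Your proposal is correct and follows essentially the same route as the paper's own proof: enlarging $\n_0$ so that all the shifted and rescaled evaluation points remain in $\Xdelta{A}{\delta/2}{\F}$, then invoking Lemma \ref{lemma_estim_b} together with Lemmas \ref{lemma_estim_U_T} and \ref{lemma_estim_T2} (or $\vert\U{\F}\vert\le 1$ from Corollary \ref{cor_U2}, which you use in a couple of places where the paper instead uses $\vert\U{\F}\vert/\F<1/(2b)$, an immaterial change of constants) to bound each summand by a $\t$- and $z_0$-uniform constant times a negative power of $\sqrt{\n}$. The only point you gloss over is that in the third summand of $\Rsix$ the numerator is evaluated at $z_0+\xi_6(v)$ while $\T{\F}$ in the denominator is at the rescaled point; this is harmless because rescaling by $(1-\xi_3)^{-1/2}\ge 1$ can only increase the elliptic minor semi-axis, so your decreasing-in-$b$ bound still applies.
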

\begin{proof}
Let $\F_0=\tfrac{1}{\sqrt{\delta_0}}$ so that $\F_0>\F$ for all $\t \in (0,1-\delta_0]$. Assume that $\n> \tfrac{4 A^2}{\delta^2}$. Then $\cc{B}_{\!\frac{A}{\sqrt{\n}}}(0)\subset B_{\!\frac{\delta}{2}}(0)$ and thus $z_0+\xi_4(v)$, $z_0+\xi_5(v)$, $z_0+\xi_6(v)$, $\tfrac{z_0+\xi_6(v)}{\sqrt{1-\xi_2(\n)}}$, $\tfrac{z_0+\xi_6(v)}{\sqrt{1-\xi_3(\n)}}\in \Xdelta{A}{\frac{\delta}{2}}{\F}$ if $z_0\in \Adelta$, $\xi_2(\n), \xi_3(\n) \in [0,\tfrac{1}{2}]$ and $\xi_4(v)$, $\xi_5(v)$, $\xi_6(v) \in \cc{B}_{\!\frac{A}{\sqrt{\n}}}(0)$.

Using Lemma \ref{lemma_estim_U_T} and \ref{lemma_estim_b}, we easily find
\begin{align}
\sup_{\substack{ 0<\t<1-\delta_0 \\ v\in \cc{B}_{\!\frac{A}{\sqrt{\n}}}(0) }} \sup_{\substack{z_0\in \Adelta \\ \xi_4(v)\in \cc{B}_{\!\frac{A}{\sqrt{\n}}}(0) } } \vert \n \Rfour \vert & \le \tfrac{A^3}{3 \sqrt{\n} \left(\frac{\delta}{2}\right)^3} < \epsilon,
\end{align}
for all $\n> \tfrac{64 A^6}{9\epsilon^2 \delta^6}$. 

The same way we get
\begin{align}
\sup_{\substack{ 0<\t<1-\delta_0 \\ v\in \cc{B}_{\!\frac{A}{\sqrt{\n}}}(0) }} \sup_{\substack{z_0\in \Adelta \\ \xi_5(v)\in \cc{B}_{\!\frac{A}{\sqrt{\n}}}(0) } } \vert \Rfive \vert & \le \tfrac{A \F_0^2}{4 \sqrt{\n} \left(\frac{\delta}{2}\right)^3} < \epsilon,
\end{align}
for all $\n> \tfrac{4 A^2 \F_0^4}{\epsilon^2 \delta^6}$. 

For the last error term we find, using additionally Lemma \eqref{lemma_estim_T2}, 
\begin{align}
\sup_{\substack{ 0<\t<1-\delta_0 \\ v\in \cc{B}_{\!\frac{A}{\sqrt{\n}}}(0) }} \sup_{\substack{z_0\in \Adelta \\ \xi_2,\xi_3\in [0,\frac{1}{2}] \\ \xi_6(v)\in \cc{B}_{\!\frac{A}{\sqrt{\n}}}(0) } } \!\!\!\!\!\! \vert \Rsix \vert & \le \tfrac{A}{\sqrt{\n}\frac{\delta}{2}}+\tfrac{A \F_0^2}{\sqrt{2}\n^{3/2}\left( \frac{\delta}{2} \right)^3}+\tfrac{\sqrt{2} A \F_0^2}{\n^{3/2}\frac{\delta}{2}}\left(1+\tfrac{2\F_0}{\delta} \right)^2+\tfrac{ A \F_0^6}{2\sqrt{2}\n^{3/2}\left(\frac{\delta}{2}\right)^7} \nn \\ % Overfull statt \!\!\!\! evtl. besser lösen.
& \le \tfrac{2A}{\sqrt{\n}\delta}+\tfrac{4\sqrt{2} A \F_0^2}{\n^{3/2}\delta^3}+\tfrac{2\sqrt{2} A \F_0^2}{\n^{3/2}\delta}\left(1+\tfrac{2\F_0}{\delta} \right)^2+\tfrac{32 \sqrt{2} A \F_0^6}{\n^{3/2}\delta^7} \nn \\
&\le \tfrac{2A}{\sqrt{\n}\delta} + \tfrac{\sqrt{2} \F_0^2}{\sqrt{\n} A \delta} + \tfrac{\F_0^2 \delta}{\sqrt{2}\sqrt{\n} A}\left(1+\tfrac{2\F_0}{\delta} \right)^2 +\tfrac{8\sqrt{2}\F_0^6}{\sqrt{\n}A\delta^5}<\epsilon,
\end{align}
if $\n> \tfrac{1}{\epsilon^2} \left( \tfrac{2A}{\delta} + \tfrac{\sqrt{2} \F_0^2}{ A \delta} + \tfrac{\F_0^2 \delta}{\sqrt{2} A}\left(1+\tfrac{2\F_0}{\delta} \right)^2 +\tfrac{8\sqrt{2}\F_0^6}{A\delta^5} \right)^2$. 

So we can choose 
\begin{align}
\n_0>\max\left\{\tfrac{4A^2}{\delta^2}, \tfrac{64A^6}{9 \epsilon^2 \delta^6}, \tfrac{4A^2 \F_0^4 }{\epsilon^2 \delta^6}, \tfrac{1}{\epsilon^2} \left( \tfrac{2A}{\delta} + \tfrac{\sqrt{2} \F_0^2}{ A \delta} + \tfrac{\F_0^2 \delta}{\sqrt{2} A}\left(1+\tfrac{2\F_0}{\delta} \right)^2 +\tfrac{8\sqrt{2}\F_0^6}{A\delta^5} \right)^2 \right\}
\end{align}
and the lemma is proven.
\end{proof}

 %\section{Bib.}
 %[Elbau-Felder]
 %[Eijndhoven, Meyers] New Orthogonality Relations for the Hermite Polynomials and Related Hilbert Spaces, S. J. L. VAN EIJNDHOVEN AND J. L. H. MEYERS, JOURNAL OF MATHEMATICAL ANALYSIS AND APPLICATIONS 146, 89-98 (1990)

%[Deift]
%?? richtiges Jahr und Titel, überprüfen
%evtl. neueres Bib File suchen

 %[Plancherel-Rotach]

%[Sze]? Gabor Szeg\"o, Orthogonal Polynomials, 2003?
%neueres Bib File suchen

 %[BI] Bleher Its 9907025

 %[Hille] Hille Einar, A class of reciprocal functions. Annals of Mathematics, (2), vol. 27 (1926), pp. 427-464.

%[Hilbert-Courant] Hilbert D., Courant R., Methoden der mathematischen Physik. Vol. 1. 2d edition. Berlin, 1931. 
%oder neuere???? Seitenzahlen überprüfen
%Bib File noch updaten und anpassen

%\cite{hille}

\nocite{*} %Referenziert auch die nicht-zitierten Einträge

\clearpage
%\phantomsection %evtl. später aktivieren
\addcontentsline{toc}{section}{Bibliography}
\bibliography{dissertation}

\begin{thebibliography}{10}

\bibitem{WZ1}
P.~B. Wiegmann and A.~Zabrodin.
\newblock Conformal maps and integrable hierarchies.
\newblock {\em Communications in Mathematical Physics}, 213:523--538, 2000.
\newblock 10.1007/s002200000249.

\bibitem{WZ2}
A.~Marshakov, P.~Wiegmann, and A.~Zabrodin.
\newblock Integrable structure of the dirichlet boundary problem in two
  dimensions.
\newblock {\em Communications in Mathematical Physics}, 227:131--153, 2002.
\newblock 10.1007/s002200200629.

\bibitem{felder-elbau}
Peter Elbau and Giovanni Felder.
\newblock Density of eigenvalues of random normal matrices.
\newblock {\em Communications in Mathematical Physics}, 259:433--450, 2005.
\newblock 10.1007/s00220-005-1372-z.

\bibitem{etingof-ma}
P.~{Etingof} and X.~{Ma}.
\newblock {Density of Eigenvalues of Random Normal Matrices with an Arbitrary
  Potential, and of Generalized Normal Matrices}.
\newblock {\em ArXiv Mathematics e-prints}, December 2006, arXiv:math/0612108.

\bibitem{deift}
P.~Deift.
\newblock {\em Orthogonal polynomials and random matrices: a
  {R}iemann-{H}ilbert approach}, volume~3 of {\em Courant Lecture Notes}.
\newblock American Mathematical Society, Providence, 2000.

\bibitem{its-takhtajan}
A.~R. {Its} and L.~A. {Takhtajan}.
\newblock {Normal matrix models, dbar-problem, and orthogonal polynomials on
  the complex plane}.
\newblock {\em ArXiv e-prints}, August 2007, 0708.3867.

\bibitem{bleher-kuijlaars}
P.~M. {Bleher} and A.~B.~J. {Kuijlaars}.
\newblock {Orthogonal polynomials in the normal matrix model with a cubic
  potential}.
\newblock {\em ArXiv e-prints}, June 2011, 1106.6168.

\bibitem{makarov}
Y.~{Ameur}, H.~{Hedenmalm}, and N.~{Makarov}.
\newblock {Fluctuations of eigenvalues of random normal matrices}.
\newblock {\em ArXiv e-prints}, July 2008, 0807.0375.

\bibitem{bi}
P.~{Bleher} and A.~{Its}.
\newblock {Semiclassical asymptotics of orthogonal polynomials, Riemann-Hilbert
  problem, and universality in the matrix model}.
\newblock {\em ArXiv Mathematical Physics e-prints}, June 1999,
  arXiv:math-ph/9907025.

\bibitem{elbau}
P.~{Elbau}.
\newblock {Random Normal Matrices and Polynomial Curves}.
\newblock {\em ArXiv e-prints}, July 2007, 0707.0425.

\bibitem{courant-hilbert}
R.~Courant and D.~Hilbert.
\newblock {\em Methods of Mathematical Physics: Volume I}.
\newblock Wiley, New York, 1989.

\bibitem{hille}
Einar Hille.
\newblock A class of reciprocal functions.
\newblock {\em The Annals of Mathematics}, 27(4):pp. 427--464, 1926.

\bibitem{sze}
G{\'a}bor Szeg{\H{o}}.
\newblock {\em Orthogonal polynomials}.
\newblock American Mathematical Society, Providence, 2003.
\newblock American Mathematical Society Colloquium Publications, Vol. 23.

\bibitem{eijndhoven}
S.J.L. van Eijndhoven and J.L.H. Meyers.
\newblock New orthogonality relations for the hermite polynomials and related
  hilbert spaces.
\newblock {\em Journal of Mathematical Analysis and Applications}, 146(1):89 --
  98, 1990.

\bibitem{dkmvz2001}
P.~Deift, T.~Kriecherbauer, K.T.-R. McLaughlin, S.~Venakides, and X.~Zhou.
\newblock A {R}iemann-{H}ilbert approach to asymptotic questions for orthogonal
  polynomials.
\newblock {\em Journal of Computational and Applied Mathematics}, 133(1-2):47
  -- 63, 2001.
\newblock 5th Int. Symp. on Orthogonal Polynomials, Special Functions and their
  Applications.

\bibitem{plancherel-rotach}
M.~Plancherel and W.~Rotach.
\newblock {Sur les valeurs asymptotiques des polynomes d'Hermite}.
\newblock {\em Commentarii Mathematici Helvetici}, 1(1):227--254, December
  1929.

\bibitem{wong}
R.~Wong and Zhao Yuqiu.
\newblock Asymptotics of orthogonal polynomials via the {R}iemann-{H}ilbert
  approach.
\newblock {\em Acta Mathematica Scientia}, 29(4):1005 -- 1034, 2009.

\bibitem{dkmvz1999}
P.~Deift, T.~Kriecherbauer, K.~T-R McLaughlin, S.~Venakides, and X.~Zhou.
\newblock Strong asymptotics of orthogonal polynomials with respect to
  exponential weights.
\newblock {\em Comm. Pure Appl. Math.}, 52(12):1491--1552, 1999.

\end{thebibliography}
\bibliographystyle{hunsrt}

\end{document}